%

\documentclass[leqno,a4paper,12pt]{article}

\setlength{\topmargin}{24mm}
\addtolength{\topmargin}{-1in}
\setlength{\oddsidemargin}{31mm}
\addtolength{\oddsidemargin}{-1in}
\setlength{\evensidemargin}{31mm}
\addtolength{\evensidemargin}{-1in}
\setlength{\textwidth}{148mm}
\setlength{\textheight}{249mm}
\setlength{\headsep}{0cm}
\setlength{\headheight}{0cm}

\usepackage{latexsym}
\usepackage{amsmath}
\usepackage{amsthm}
\usepackage{amssymb}
\usepackage{enumerate}
\usepackage{multicol}
\usepackage{graphicx}


\usepackage{pict2e}

\allowdisplaybreaks[4]

\pagestyle{plain}

\theoremstyle{plain}
\newtheorem{theorem}{Theorem}[section]
\newtheorem{proposition}[theorem]{Proposition}
\newtheorem{lemma}[theorem]{Lemma}
\newtheorem{corollary}[theorem]{Corollary}

\theoremstyle{definition}

\newtheorem{example}[theorem]{Example}
\newtheorem{remark}[theorem]{Remark}
\numberwithin{equation}{section}

\def\ba{{\mathbf{a}}}
\def\sa{{\mathsf{a}}}
\def\sA{{\mathsf{A}}}

\def\bb{{\mathbf{b}}}

\def\cB{{\mathcal{B}}}

\def\C{{\mathbb{C}}}
\def\cC{{\mathcal{C}}}

\def\sc{{\mathsf{c}}}

\def\be{{\mathbf{e}}}

\def\sF{{\mathsf{F}}}

\def\cG{{\mathcal{G}}}

\def\cH{{\mathcal{H}}}
\def\sH{{\mathsf{H}}}


\def\bk{{\mathbf{k}}}

\def\cM{{\mathcal{M}}}

\def\N{{\mathbb{N}}}

\def\sn{{\mathsf{n}}}

\def\cO{{\mathcal{O}}}

\def\bp{{\mathbf{p}}}

\def\bq{{\mathbf{q}}}
\def\cQ{{\mathcal{Q}}}

\def\R{{\mathbb{R}}}
\def\cR{{\mathcal{R}}}

\def\sr{{\mathsf{r}}}

\def\S{{\mathbb{S}}}
\def\cS{{\mathcal{S}}}

\def\sS{{\mathsf{S}}}
\def\sfs{{\mathsf{s}}}

\def\T{{\mathbb{T}}}

\def\cU{{\mathcal{U}}}
\def\bu{{\mathbf{u}}}

\def\cV{{\mathcal{V}}}
\def\sV{{\mathsf{V}}}
\def\bv{{\mathbf{v}}}
\def\hbv{{\hat{\mathbf{v}}}}

\def\bW{{\mathbf{W}}}
\def\cW{{\mathcal{W}}}
\def\sW{{\mathsf{W}}}

\def\bx{{\mathbf{x}}}
\def\bX{{\mathbf{X}}}

\def\hbx{{\hat{\mathbf{x}}}}

\def\by{{\mathbf{y}}}
\def\bY{{\mathbf{Y}}}

\def\hby{{\hat{\mathbf{y}}}}

\def\Z{{\mathbb{Z}}}

\def\bZ{{\mathbf{Z}}}


\def\ophi{{\overline{\phi}}}

\def\opsi{{\overline{\psi}}}

\def\oeta{{\overline{\eta}}}

\def\oxi{{\overline{\xi}}}

\def\spin{{\{\uparrow,\downarrow\}}}
\def\ua{{\uparrow}}
\def\da{{\downarrow}}

\def\la{{\lambda}}
\def\bla{{\mbox{\boldmath$\lambda$}}}

\def\O{{\Omega}}
\def\o{{\omega}}

\def\eps{{\varepsilon}}

\def\g{{\gamma}}

\def\G{{\Gamma}}
\def\s{{\sigma}}

\def\hrho{\hat{\rho}}
\def\orho{\overline{\rho}}
\def\heta{\hat{\eta}}
\def\oeta{\overline{\eta}}
\def\D{{\Delta}}

\def\<{{\langle}}
\def\>{{\rangle}}

\def\Tr{\mathop{\mathrm{Tr}}\nolimits}

\def\Mat{\mathop{\mathrm{Mat}}}
\def\Map{\mathop{\mathrm{Map}}}

\def\sgn{\mathop{\mathrm{sgn}}\nolimits}

\def\Re{\mathop{\mathrm{Re}}}

\def\b0{{\mathbf{0}}}

\def\frah{{\left(\frac{1}{h}\right)}}

\def\0betah{{[0,\beta)_h}}

\begin{document}

\title{Superconducting phase in the BCS model with imaginary magnetic
field. II. \\
Multi-scale infrared analysis}

\author{Yohei Kashima
\medskip\\
Center for Mathematical Modeling and Data Science, \\
Osaka University, \\
Toyonaka, Osaka 560-8531, Japan}

\date{}

\maketitle

\begin{abstract} We analyze the reduced BCS model with an imaginary
 magnetic field in a large domain of the temperature and the imaginary
 magnetic field. The magnitude of the attractive reduced BCS interaction
 is fixed to be small but independent of the temperature and the
 imaginary magnetic field unless the temperature is high. 
We impose a series of conditions on the free dispersion
 relation. These conditions are typically satisfied by free electron 
models with degenerate Fermi surface. For example, our theory applies to the model with nearest-neighbor hopping on 3 or 4-dimensional (hyper-)cubic lattice
 having degenerate free Fermi surface or the model with nearest-neighbor
 hopping on the honeycomb lattice with zero chemical potential. We prove
 that a spontaneous $U(1)$-symmetry breaking (SSB) and an off-diagonal 
 long range order (ODLRO) occur in many areas of the parameter
 space. The SSB and the ODLRO are proved to occur in low temperatures
 arbitrarily close to zero in particular. However, it turns out that the
 SSB and the ODLRO are not present in the zero-temperature limit. The
 proof is based on Grassmann Gaussian integral formulations and a
 multi-scale infrared analysis of the formulations. We keep using
 notations and lemmas of our previous work [Y. Kashima,
 accepted for publication in
 J. Math. Sci. Univ. Tokyo, arXiv:1609.06121] implementing the double-scale integration
 scheme. 
So the multi-scale analysis this paper presents is a continuation of the
 previous work.
\footnote[0]{Present address: Division of Mathematical
Science, Graduate School of Engineering Science, Osaka University,
 Toyonaka, Osaka 560-8531, Japan. E-mail: kashima@sigmath.es.osaka-u.ac.jp\\
\textit{2010 Mathematics Subject Classification.} Primary 82D55, Secondary 81T28.\\
\textit{keywords and phrases.} The BCS model, spontaneous symmetry
 breaking, off-diagonal long range order, Grassmann integral formulation,
 multi-scale IR analysis}
\end{abstract}

\tableofcontents

\section{Introduction}

\subsection{Introduction}\label{subsec_introduction}

The Bardeen-Cooper-Schrieffer (BCS) theory of superconductivity
(\cite{BCS}) has been a paradigm of modern physics. 
The BCS model Hamiltonian of interacting electrons lies at the core of
the theory. A large amount of knowledge on how to analyze the BCS
model have been accumulated. 
A history of mathematical development around the BCS model is summarized
in e.g. \cite{BP}. 
However, it is still a fair remark 
that we have not yet achieved a consensus on the possibility of 
completely rigorous, explicit analysis of the full BCS
model. Here we mean a Fermionic Hamiltonian consisting of a quadratic
kinetic term and a quartic interacting term by the BCS model. It is
necessary to investigate in which parameter region the BCS model can be
rigorously analyzed in order to clarify and increase our understanding
of the model in its original definition as the Fermionic field operator.

To supplement overviews of the literature given in the
introduction of our previous work \cite{K_BCS}, here let us comment
on two of the most studied mathematical approaches to the theory of the
BCS model. Analysis of the BCS functional has been vigorously developed
by the authors of the review article \cite{HS} and their coauthors. The
BCS functional is derived from the Gibbs variational principle as a
functional of generalized one-body density matrices. Above all the
derivation is based on an assumption that to characterize equilibrium
states it suffices to minimize the pressure functional over a set of
quasi-free states. To my knowledge, the equivalence between a quasi-free
state minimizing the BCS functional and the Gibbs state of the BCS model 
has not been proved. This means that we cannot rigorously
relate the superconducting order in terms of the minimizer of the BCS
functional to that in the BCS model. At this point it is natural to
consider that the recent papers summarized in \cite{HS} feature a
well-recognized approach to the BCS theory, rather than analysis of the
BCS model Hamiltonian itself. As for the BCS model Hamiltonian, it is
known that its eigenstates can be constructed by using solutions to a
system of nonlinear equations called Richardson's equations (\cite{R},
\cite{RS}). Nowadays Richardson's method is formulated within the framework
of algebraic Bethe ansatz (see e.g. \cite{vDP}, \cite{Z}). Though there
are many applications of this approach, Richardson's equations in
principle need to be solved numerically. It seems that it has not been
applied to rigorously prove existence of superconducting order in the
form of finite-temperature correlation functions in the
BCS model. 

In our previous work \cite{K_BCS} we studied the reduced BCS model,
where the quartic interacting term is a product of the Cooper pair
operators, at positive temperature by extending the external magnetic
field to be purely imaginary. We reached the conclusion that under the
imaginary magnetic field the BCS model is mathematically analyzable at positive
temperatures and especially the superconducting phase characterized by
spontaneous $U(1)$-symmetry breaking (SSB) and off-diagonal long range 
order (ODLRO) can be proven. Let us remark that the BCS model with the
imaginary magnetic field is not Hermitian and thus it does not a priori
define the Gibbs state. At present it seems that this model is not
analyzable within the methods of \cite{HS}, \cite{BP1} based on the Gibbs
variational principle.
One serious constraint in the previous work 
  \cite{K_BCS} is that the possible magnitude of the reduced BCS
  interaction heavily depends on the imaginary magnetic field and the
  temperature. In our previous construction, the closer the imaginary
  magnetic field is to the critical values or the lower the temperature
  is, the smaller the magnitude of the interaction must be. We have
  already mentioned in the introduction of \cite{K_BCS} that the
  temperature-dependency of the allowed magnitude of the interaction
  should be improved by a multi-scale infrared integration. In line with
  this purpose, here we develop a theory where the magnitude of the
  interaction is allowed to be largely independent of the temperature
  and the imaginary magnetic field.

More precisely, in this paper we consider the reduced BCS model
interacting with the imaginary magnetic field at positive temperature
and prove the existence of SSB and ODLRO in the form of the
infinite-volume limit of the thermal expectations over the full
Fermionic Fock space under periodic boundary conditions. 
The magnitude of the attractive interaction must be small. However, the
imaginary magnetic field and the temperature can take almost every value
of a low temperature region of the parameter space without lowering the
magnitude of the interaction. In order to substantially enlarge the
possible parameter region, we need to impose restrictive assumptions on the
free dispersion relation. Here, unlike in our previous paper, we
construct the theory by assuming a series of conditions on the generalized
free dispersion relation. These conditions are typically satisfied by
a free dispersion relation with degenerate Fermi surface. Examples of the
free Hamiltonian covered by our theory are the free electron model of
nearest-neighbor hopping on 3 or 4-dimensional (hyper-)cubic lattice
with a critical chemical potential or the free electron model of nearest-neighbor
hopping on the honeycomb lattice with zero chemical potential. The free
Hamiltonians with non-degenerate Fermi surface treated in \cite{K_BCS}
do not belong to the model class of this paper. See Remark
\ref{rem_nondegenerate_surface} for a mathematical confirmation of this
fact. As a new observation, we
show that for a fixed small coupling constant and a non-zero imaginary
magnetic field the SSB and the ODLRO occur in arbitrarily low
temperatures. However, it turns out that the SSB and the ODLRO are not
present in the zero-temperature, infinite-volume limit of the thermal
expectations. Moreover, the zero-temperature limit of the free energy
density is proved to be equal to that of the free electron model, which
does not depend on either the coupling constant or the imaginary magnetic
field. In terms of the superconducting order, the zero-temperature
limits derived as a corollary of the main results at positive
temperature seem plain and negative. However, if we think of the fact
that the superconducting order exists in arbitrarily low temperatures,
the whole scenario of the phase transitions in this system is unusual and
counterintuitive. In Section \ref{sec_phase_transition} we study the
nature of the phase transitions by focusing on the free energy density
characterized in the main theorem and under a couple of reasonable
additional assumptions on the free dispersion relation we
prove that the phase transitions are of second order.

Though our free Hamiltonian is qualitatively different from that of the
previous work, the basis of our approach is same. We formulate the grand
canonical partition function into a time-continuum limit of
finite-dimensional Grassmann Gaussian integration and perform
mathematical analysis of the Grassmann integral formulation. Moreover
we apply the key proposition \cite[\mbox{Proposition 4.16}]{K_BCS}
concerning the uniform convergence of the Grassmann Gaussian integral
having the modified interacting term in its action in order to deduce
the convergence of the finite-volume thermal expectations to the
infinite-volume limits in the final stage of the paper. While the
previous analysis of the Grassmann Gaussian integral formulation was completed
only by the double-scale integration, here we implement a multi-scale
infrared integration with the aim of easing the temperature-dependency
of the possible magnitude of the interaction. As in \cite{K_BCS}, we
deal with the ultra-violet part with large Matsubara frequencies by
simply applying Pedra-Salmhofer's determinant bound
(\cite{PS}). Many general tools for the double-scale integration
developed in the previous paper are applicable to our multi-scale
integration. We need some more estimation tools to complete our
scheme. We prepare them in accordance with the previous
format of general lemmas. Therefore, from a technical view point of  
the constructive Fermionic field theory this work is seen as a 
continuation of the previous construction \cite{K_BCS}. 

We should explain exceptional subsets of the parameter space of the
temperature and the imaginary magnetic field where we are unable to
construct our theory. If the imaginary magnetic field divided by 2
belongs to the set of Matsubara frequencies, the free
covariance is not well-defined. This is because in this case the
denominator of the free covariance in momentum space can be zero. 
As the free covariance is
a central object in this approach, we have to exclude these points,
which only amount to a 1-dimensional submanifold of the 2-dimensional
parameter space. We claim the main results of this paper for the
temperature and the imaginary magnetic field belonging to the complement
of the union of these subsets.
Also, we have to assume a nontrivial dependency of the
possible magnitude of the coupling constant on the temperature and the
imaginary magnetic field if the temperature is high. This constraint
stems from a determinant bound of the full covariance and has no effect
if the temperature is low. See Remark \ref{rem_nontrivial_dependency}
for details of this constraint.

Taking the zero-temperature limit in interacting many-electron systems
is still a challenging problem of mathematical physics. In the preceding
examples of taking the zero-temperature limit in the systems with spatial
dimension larger than 1 (\cite{GM}, \cite{GJ}, \cite{K_RG},
\cite{K_zero}) 
not only the degeneracy of the Fermi surface but also
symmetries of the whole Hamiltonian are essential. In the infrared
analysis of the Grassmann Gaussian integral of the correction term of
the reduced BCS interaction, we have an advantage that quadratic
Grassmann polynomials are always bounded by the inverse volume factor,
which is incomparably smaller than any support size of infrared
cut-off. We do not need to use symmetries to keep track of the zero set
of the effective dispersion relation, the kernel of the quadratic
Grassmann output, during the iterative infrared integration process. We
only need a priori information of the infrared properties of the free
dispersion relation in order to ensure that Grassmann polynomials of degree $\ge 4$
remain bounded in the iterative scale-dependent norm estimations. For the
above reason the free Hamiltonian can be chosen much more
flexibly in this paper than in the preceding zero-temperature limit
constructions based on multi-scale infrared integrations. The relative
generality of the free Hamiltonian is one novelty of our low temperature analysis.

Here let us explain more about key ideas of our multi-scale analysis 
in order to help the readers proceed to the main technical sections 
and recognize technical novelties of this paper. Let us allow ourselves
to use formulas informally and simplified notations in the following 
for illustrative purposes.  As in \cite{K_BCS} we begin with the 
Grassmann Gaussian 
integral formulation which has the correction term in its
exponent. 
\begin{align}\label{eq_illustrative_formulation}
\int e^{V^0(\psi)}d\mu_{C}(\psi),
\end{align}
where the Grassmann polynomial $V^0(\psi)$ denotes the correction term
and $C$ denotes the full covariance. 
The full Grassmann integral formulation is officially presented in Lemma
\ref{lem_final_Grassmann_formulation}.
By using much simpler notations than those actually used in the main
body of this paper we can write the correction term $V^0(\psi)$ as
follows.
\begin{align*}
&V^0(\psi)=V^0_s(\psi)+V^0_v(\psi),\\
&V^0_s(\psi):=\frac{\gamma}{L}\sum_{x=0}^{L-1}\sum_{t=0}^{n-1}\opsi_{xt}\psi_{xt},\\
&V^0_v(\psi):=\frac{\gamma}{L}\sum_{x,y=0}^{L-1}\sum_{t,u=0}^{n-1}\left(\delta_{t,u}-\frac{1}{n}\right)\opsi_{xt}\psi_{xt}\opsi_{yu}\psi_{yu}.
\end{align*}
Here $\gamma$ is a real number and $L$, $n$ are positive integers. We
should think of $\gamma$, $\{0,1,\cdots,L-1\}$, $\{0,1,\cdots,n-1\}$ as
coupling constant, set of spatial lattice points, set of values of
discretized imaginary time variable, respectively. 
In the following we sketch the
analysis performed in Subsection
\ref{subsec_generalized_covariances}, Subsection
\ref{subsec_integration_without} and Subsection
\ref{subsec_final_integration}. A norm of $V_s^0(\psi)$ is
bounded by the magnitude of the coupling constant $|\gamma|$ and 
the inverse volume factor $L^{-1}$.
$$\|V_s^0\|\le |\gamma|L^{-1}.$$
The norm $\|\cdot\|$ of a Grassmann polynomial is defined by summing its
unique anti-symmetric kernel function over all but one variables. 
More explicitly, the above bound is derived as follows. 
Writing $\psi_{xt,1}$, $\psi_{xt,-1}$ in place of $\opsi_{xt}$,
$\psi_{xt}$, respectively, the unique anti-symmetric kernel function of
$V^0_s(\psi)$ is that
\begin{align*}
&((x,t,\xi),(y,u,\zeta))\mapsto 
\frac{\gamma}{2L}\delta_{x,y}\delta_{t,u}(\delta_{\xi,1}\delta_{\zeta,-1}
-\delta_{\xi,-1}\delta_{\zeta,1})\\
&:(\{0,\cdots,L-1\}\times\{0,\cdots,n-1\}\times\{1,-1\})^2\to\C,
\end{align*}
and thus
\begin{align*}
\|V^0_s\|&=\sup_{(x,t,\xi)\in
 \{0,\cdots,L-1\}\times\{0,\cdots,n-1\}\times\{1,-1\}}\sum_{(y,u,\zeta)\in
 \{0,\cdots,L-1\}\times\{0,\cdots,n-1\}\times\{1,-1\}}\\
&\quad\cdot \left|\frac{\gamma}{2L}\delta_{x,y}\delta_{t,u}(\delta_{\xi,1}\delta_{\zeta,-1}-\delta_{\xi,-1}\delta_{\zeta,1})\right|\\
&=\frac{1}{2}|\gamma|L^{-1}.
\end{align*}
Though its norm cannot be bounded by $L^{-1}$, the Grassmann
polynomial $V^0_v(\psi)$ has a particular vanishing property that 
\begin{align}\label{eq_illustrative_vanishing}
\int V_v^0(\psi)f(\psi)d\mu_{\hat{C}}(\psi)=0
\end{align}
for any Grassmann polynomial $f(\psi)$ and covariance
$\hat{C}:(\{0,\cdots,L-1\}\times\{0,\cdots,n-1\})^2\to\C$ satisfying
that 
\begin{align}
&\hat{C}(xt,yu)=\hat{C}(x0,y0),
\quad(\forall x,y\in\{0,\cdots,L-1\},\ u,t\in \{0,\cdots,n-1\}).\label{eq_formal_time_independence}
\end{align}
In fact the equality \eqref{eq_illustrative_vanishing} can be confirmed
as follows. 
\begin{align*}
\int V_v^0(\psi)f(\psi)d\mu_{\hat{C}}(\psi)=
\frac{\gamma}{L}\sum_{x,y=0}^{L-1}\sum_{t,u=0}^{n-1}\left(\delta_{t,u}-\frac{1}{n}\right)\int\opsi_{x0}\psi_{x0}\opsi_{y0}\psi_{y0}f(\psi)d\mu_{\hat{C}}(\psi)=0.
\end{align*}
By inserting cut-off
functions inside the integral over momentum we can write the full
covariance as a sum of partial covariances. $C=\sum_{l=0}^{l_{end}}C_l$, where $l_{end}\in
\Z_{\le 0}$ denotes the final scale of cut-off and $C_l$ is the
covariance containing the cut-off function of $l$-th scale. 
We remark that $l_{end}$ is independent of $L$ and proportional to $\log
\beta^{-1}$ with the inverse temperature $\beta$ if the temperature is
low, i.e. $\beta\ge 1$. 
The multi-scale integration iterates as follows.
\begin{align*}
\int e^{V^0(\psi)}d\mu_C(\psi)&= \int\int
 e^{V^0(\psi+\psi')}d\mu_{C_0}(\psi')d\mu_{\sum_{l=-1}^{l_{end}}C_l}(\psi)
=\int e^{V^{-1}(\psi)}d\mu_{\sum_{l=-1}^{l_{end}}C_l}(\psi)\\
&=\int e^{V^m(\psi)}d\mu_{\sum_{l=m}^{l_{end}}C_l}(\psi),
\end{align*}
where 
\begin{align*}
&V^m(\psi)=\log\left(\int
 e^{V^{m+1}(\psi+\psi')}d\mu_{C_{m+1}}(\psi')\right),\ (m=-1,-2,\cdots,l_{end}).\end{align*}
At each step of the integration we can decompose the Grassmann
polynomial $V^m(\psi)$ into 2 terms. $V^m(\psi)=V^m_s(\psi)+V^m_v(\psi)$,
where the norm of $V^m_s(\psi)$ is bounded by $L^{-1}$ and $V^m_v(\psi)$ satisfies the vanishing property
\eqref{eq_illustrative_vanishing}.  We can
manipulate the support of the cut-off functions and perform a gauge 
transform so that the final covariance $C_{l_{end}}$, which has the most
 intense infrared singularity, satisfies
 \eqref{eq_formal_time_independence}. Thus, by the
property \eqref{eq_illustrative_vanishing} we reach that 
\begin{align*}
\int e^{V^0(\psi)}d\mu_C(\psi)=\int
e^{V^{l_{end}}_s(\psi)}d\mu_{C_{l_{end}}}(\psi).
\end{align*}
The heavy contribution from $C_{l_{end}}$ 
can be effectively absorbed by the inverse volume factor $L^{-1}$ 
which bounds
the norm of $V^{l_{end}}_s(\psi)$. Also, the factor $L^{-1}$ can be
taken smaller than any power of the inverse temperature or $l_{end}$ and
thus any extra contribution from these parameters does not lower the possible
magnitude of the coupling constant.
This is where we take best advantage 
of the mean-field scaling property and the vanishing property \eqref{eq_illustrative_vanishing} 
that the initial correction term $V^0(\psi)$ has.
The integration with the covariances $C_l$ $(l=0,-1,\cdots,l_{end}+1)$
is performed in Subsection \ref{subsec_integration_without} and 
the integration with the final covariance $C_{l_{end}}$ is specifically 
performed in
Subsection \ref{subsec_final_integration}.
As the result the formulation \eqref{eq_illustrative_formulation} is proved to
be uniformly bounded with respect to the coupling
constant in a good neighborhood of the origin which is independent of
the temperature and the imaginary magnetic field. In fact this mechanism was already implemented at the 
level of double-scale integration in \cite{K_BCS}, which did not require 
mathematical induction with the discrete energy scale. In this paper we 
implement this idea based on the classification of Grassmann polynomials 
inductively with respect to the scale index of infrared cut-off as
described above. We also 
have to incorporate various scale-dependent bound properties into the 
classification of Grassmann polynomials. The mathematical justification 
of the whole inductive procedure is what this paper newly offers in
terms of technical aspects. 

Let us comment on key differences between this paper and \cite{M} one by 
one, as both concern analysis of Grassmann integral formulations of BCS 
type-models. The paper \cite{M} treats a quartic long range interaction 
which is derived from the reduced BCS interaction by inserting a Kac 
potential into the time integral. The essential goal of \cite{M} is to 
ensure the solvability of the BCS gap equation in parameter regions
where the correction part obtained after extracting the main reference 
model can be proved to vanish in the infinite-volume limit. The analysis 
of the correction part was done on the assumption that the parameter 
$\kappa$ determining the range of the inserted Kac potential is bounded 
from above by some negative power of the coupling constant and the
inverse temperature. This assumption does not affect the solvability 
of the BCS gap equation, since the gap equation is independent of the
parameter $\kappa$, and thus the goal was achieved. We should add that 
the solvability of the gap equation is also due to that the free Fermi
surface of the model in \cite{M} is non-degenerate. The 
assumption on $\kappa$ means that the modified BCS-type interaction 
depends on temperature and in particular it approaches to the doubly 
reduced BCS interaction which contains a double time integral, rather 
than to the original reduced BCS interaction in low temperatures. No 
multi-scale infrared integration was performed to improve the 
temperature-dependency of the interacting term. Conceptually this paper 
aims at completing the same story, though we have the reduced BCS 
interaction and the imaginary magnetic field from the beginning. We
prove the solvability of a gap equation together with the fact that 
the correction part becomes negligible in the infinite-volume limit. 
However, we prove the irrelevance of the correction part without
assuming that the interaction is temperature-dependent in low 
temperatures. In order to establish the temperature-independence of the 
interaction, we perform the multi-scale infrared integration which 
requires restrictive degeneracy of the free Fermi surface instead. 
Our gap equation explicitly depends on the imaginary magnetic field 
and thus admits a positive solution regardless of the degeneracy of 
the free Fermi surface. In summary, the properties of quartic
interaction, the degeneracy of free Fermi surface and the presence of
imaginary magnetic field are the key differences between \cite{M} and the
present paper. Among them, the temperature-dependency of 
interaction is considered as the main difference, since it largely
affects the design of constructive theory of interacting Fermions. 
  
If we face a question about whether SSB and ODLRO in the BCS model 
without imaginary magnetic field or in many-electron systems with 
realistic short range interaction can be proved by extending this 
paper's method, we realize that there are many essential problems 
to overcome. This paper's result implies that as long as the same 
free Hamiltonian is adopted, the BCS model without imaginary magnetic 
field can be analyzed down to zero temperature by keeping the magnitude
 of the coupling constant positive. However, we cannot prove that the allowed 
magnitude of the coupling constant is large enough to ensure the 
existence of a positive solution to the BCS gap equation and thus cannot 
prove SSB and ODLRO, either. See Remark \ref{rem_imaginary_zero} for 
a more detailed explanation of this issue. Because of the relatively 
simple form of the reduced BCS interaction, we can apply the 
Hubbard-Stratonovich transformation and reformulate the system into a 
hybrid of Grassmann Gaussian integral and Gaussian integral with a 
single classical field, where the quartic Grassmann field only appears
as a controllable correction term. It is well known that one can also 
apply the Hubbard-Stratonovich transformation to derive a classical
system with many degrees of freedom from the Grassmann integral 
formulation of a many-electron model with short range interaction. Since 
infinitely many classical fields come into play in the infinite-volume 
limit in the standard reformulation of a Hubbard-type short range 
interaction, it seems at present that its complete solution is beyond 
the reach of an immediate extension of this paper's methods. Let us 
remark that equivalence between the minimum configuration of an 
effective potential for many classical fields whose number can be
proportional to the number of finite spatial lattice points and that of an
approximate BCS-type potential, which is expressed as a truncated sum
over the Matsubara frequencies, for a single classical field was proved in
\cite{Leh}. However, such a partial equivalence has not led to complete 
characterization of the thermodynamic limit of the original
many-electron system with short range interaction, to the author's 
knowledge. For these reasons, possible new contributions of this paper 
may not be a construction of necessary steps toward complete solutions 
of the standard BCS model or realistic many-electron models with short range 
interaction, but should be a positive proposal for studying these models 
in a non-standard parameter region of complex plane by means of 
multi-scale analysis and a construction of its necessary tools. The 
proposal should make sense if a structurally rich phase transition can 
be proved as a result. 

The outline of this paper is as follows. In the rest of
this section we define the model Hamiltonian, state the main theorem concerning the superconducting phase at
positive temperature and its corollary about the zero-temperature
limit and present concrete examples of the model. 
In Section \ref{sec_phase_transition} we separately analyze the
free energy density obtained in the main theorem, draw a schematic phase
diagram on the plane of the inverse temperature and the imaginary
magnetic field and prove that the phase transitions are of second order. In
Section \ref{sec_formulation} we state the Grassmann Gaussian integral
formulations of the grand canonical partition function. In Section
\ref{sec_multiscale_integration} we perform the multi-scale infrared
integration by assuming scale-dependent bound properties of generalized
covariances. In Section \ref{sec_proof_theorem} first we confirm that
the actual covariance introduced as the free 2-point correlation function
can be decomposed into a family of scale-dependent covariances
satisfying the properties required in the general multi-scale
analysis of Section \ref{sec_multiscale_integration}. Then we prove the main theorem by applying the results of the
general multi-scale analysis and its corollary. In Appendix
\ref{app_infinite_volume_limit} we summarize basic lemmas which are used
to complete the proof of the main theorem in Section
\ref{sec_proof_theorem}. In addition, we present a supplementary list of
notations which are newly introduced in this paper or were introduced in
the previous paper \cite{K_BCS} with some different meaning. The list 
should be used together with that of
\cite{K_BCS}, since many notations used in this paper are intentionally same or
close to those in \cite{K_BCS}.

\subsection{Models and the main results}\label{subsec_model_theorem}

Let us start by defining our model Hamiltonian. Throughout the paper the spatial
dimension is represented by $d$. Let $\bv_1,\bv_2,\cdots, \bv_d$ be a
basis of $\R^d$. Let $\hbv_1,\hbv_2,\cdots,\hbv_d$ be vectors of $\R^d$
satisfying that $\<\bv_i,\hbv_j\>=\delta_{i,j}$
$(i,j\in\{1,2,\cdots,d\})$, where $\<\cdot,\cdot\>$ denotes the
canonical inner product of $\R^d$. With $L\in\N(=\{1,2,\cdots\})$ we
define the spatial lattice $\G$ and the momentum lattice $\G^*$ as
follows. 
\begin{align*}
&\G:=\left\{\sum_{j=1}^dm_j\bv_j\ \Big|\ m_j\in \{0,1,\cdots,L-1\}\
 (j=1,2,\cdots,d)\right\},\\
&\G^*:=\left\{\sum_{j=1}^d\hat{m}_j\hbv_j\ \Big|\ \hat{m}_j\in
 \left\{0,\frac{2\pi}{L},\frac{4\pi}{L},\cdots,2\pi-\frac{2\pi}{L}\right\}\ (j=1,2,\cdots,d)\right\}.
\end{align*}
In the infinite-volume limit the finite sets $\G$, $\G^*$ are replaced
by the infinite sets $\G_{\infty}$, $\G_{\infty}^*$ defined by
\begin{align*}
&\G_{\infty}:=\left\{\sum_{j=1}^dm_j\bv_j\ \Big|\ m_j\in \Z\
 (j=1,2,\cdots,d)\right\},\\
&\G_{\infty}^*:=\left\{\sum_{j=1}^d\hat{k}_j\hbv_j\ \Big|\ \hat{k}_j\in
 [0,2\pi]\ (j=1,2,\cdots,d)\right\}.
\end{align*}
We plan to construct our theory by assuming a series of conditions on
the free dispersion relation of the model Hamiltonian. We consider
multi-band Hamiltonians since they can have a variety of free dispersion
relations. The number of sites in the unit cell is denoted by $b(\in\N)$. Set
$\cB:=\{1,2,\cdots,b\}$. A crystalline lattice having $b$ sites per the
unit cell is modeled by $\cB\times \G$.
 We define our $b$-band Hamiltonian on the
Fermionic Fock space $F_f(L^2(\cB\times \G\times \spin))$. As in
\cite{K_BCS}, we focus on the reduced BCS interaction defined by 
\begin{align*}
\sV:=\frac{U}{L^d}\sum_{(\rho,\bx),(\eta,\by)\in\cB\times \G}\psi_{\rho\bx\ua}^*\psi_{\rho\bx\da}^*\psi_{\eta\by\da}\psi_{\eta\by\ua},
\end{align*}
where $U(\in\R_{<0})$ is the negative coupling constant. Let us define the map
$r_L:\G_{\infty}\to \G$ by 
\begin{align*}
r_L\left(\sum_{j=1}^dm_j\bv_j\right):=\sum_{j=1}^dm_j'\bv_j,
\end{align*}
where $m_j\in\Z$, $m_j'\in \{0,1,\cdots,L-1\}$, $m_j=m_j'$ (mod $L$)
$(\forall j\in \{1,2,\cdots,d\})$. Throughout the paper we assume
periodic boundary conditions so that for any $\bx\in \G_{\infty}$,
$\psi_{\rho\bx\s}^{(*)}$ is identified with
$\psi_{\rho r_L(\bx)\s}^{(*)}$. We define the free Hamiltonian by giving
a generalized hopping matrix. For $n\in \N$ let $\Mat(n,\C)$ denote the set
of all $n\times n$ complex matrices. For $A\in \Mat(n,\C)$ let 
\begin{align*}
\|A\|_{n\times n}:=\sup_{\bv\in \C^n\atop \text{with
}\|\bv\|_{\C^n}=1}\|A\bv\|_{\C^n},
\end{align*}
where $\|\cdot\|_{\C^n}$ is the norm of $\C^n$ induced by the canonical
Hermitian inner product. $\Mat(n,\C)$ is a Banach space with the norm
$\|\cdot\|_{n\times n}$. For any proposition $P$ let $1_P:=1$ if $P$ is
true, 0 otherwise. We assume that the matrix-valued function $E:\R^d\to
\Mat(b,\C)$ satisfies the following conditions. 
\begin{align}
&E\in C^{\infty}(\R^d,\Mat(b,\C)),\notag\\
&E(\bk)=E(\bk)^*,\ (\forall \bk\in \R^d),\label{eq_self_adjointness}\\
&E(\bk+2\pi \hbv_j)=E(\bk),\ (\forall \bk\in \R^d,\ j\in
 \{1,2,\cdots,d\}),\notag\\
&E(\bk)=\overline{E(-\bk)},\ (\forall
 \bk\in\R^d).\label{eq_spatial_reflection_symmetry}
\end{align}
Moreover, there exist a function $e:\R^d\to \R_{\ge 0}$ and the
constants $\sc\in\R_{\ge 1}$, $\sn_j\in\N$ $(j=1,2,\cdots,d)$, $\sa\in
\R_{>1}$ such that 
\begin{itemize}
\item 
\begin{align}
&e(\bk)\le \inf_{\bv\in \C^b\atop\text{with
 }\|\bv\|_{\C^b}=1}\|E(\bk)\bv\|_{\C^b}\le \sc e(\bk),\ (\forall \bk\in
 \R^d),\label{eq_dispersion_upper_lower_bound}
\end{align}
\item
\begin{align}
&\sup_{\bk\in\R^d}e(\bk)\le \sc,\label{eq_dispersion_upper_bound}
\end{align}
\item
\begin{align}
&e\in C(\R^d,\R),\quad e^2\in C^{\infty}(\R^d,\R),\notag\\
&e(\bk+2\pi \hbv_j)=e(\bk),\ (\forall \bk\in\R^d,\ j\in
 \{1,2,\cdots,d\}),\notag
\end{align}
\item
\begin{align}
&\left| \left(\frac{\partial}{\partial \hat{k}_j}\right)^ne\left(\sum_{i=1}^d\hat{k}_i\hbv_i\right)^2\right|
\le \sc \left(1_{n\le
 2\sn_j}e\left(\sum_{i=1}^d\hat{k}_i\hbv_i\right)^{2-\frac{n}{\sn_j}}+1_{2\sn_j<n}\right),\label{eq_dispersion_derivative}\\
&(\forall (\hat{k}_1,\hat{k}_2,\cdots,\hat{k}_d)\in \R^d,\ n\in \{0,1,\cdots,d+2\},\ j\in
 \{1,2,\cdots,d\}),\notag
\end{align}
\item
\begin{align}
&\left\|\left(\frac{\partial}{\partial
 \hat{k}_j}\right)^nE\left(\sum_{i=1}^d\hat{k}_i\hbv_i\right)\right\|_{b\times b}
\le \sc \left(1_{n\le
 \sn_j}e\left(\sum_{i=1}^d\hat{k}_i\hbv_i\right)^{1-\frac{n}{\sn_j}}+1_{\sn_j<n}\right),\label{eq_hopping_matrix_derivative}\\
&(\forall (\hat{k}_1,\hat{k}_2,\cdots,\hat{k}_d)\in \R^d,\ n\in \{1,2,\cdots,d+2\},\ j\in
 \{1,2,\cdots,d\}),\notag
\end{align}
\item
\begin{align}
&\int_{\G^*_{\infty}}d\bk 1_{e(\bk)\le R}\le \sc \min\{R^{\sa},1\},\
 (\forall R\in\R_{>0}),\label{eq_dispersion_measure}
\end{align}
\item
\begin{align}
&\int_{\G^*_{\infty}}d\bk \frac{1_{e(\bk)\le R}}{e(\bk)}\le \sc \min\{R^{\sa-1},1\},\
 (\forall R\in\R_{>0}),\label{eq_dispersion_measure_divided}
\end{align}
\item
\begin{align}
&\lim_{\eps \searrow 0}\int_{\G^*_{\infty}}d\bk\frac{1}{e(\bk)^2+\eps}=\infty
.\label{eq_dispersion_measure_divergence}
\end{align}
\end{itemize}
Furthermore, we assume the following condition.
\begin{align}
2\sa-1-\sum_{j=1}^d\frac{1}{\sn_j}>0.\label{eq_dispersion_power}
\end{align}
We define the free part of the Hamiltonian by
\begin{align*}
\sH_0:=\frac{1}{L^d}\sum_{(\rho,\bx),(\eta,\by)\in\cB\times
 \G}\sum_{\s\in\spin}\sum_{\bk\in\G^*}e^{i\<\bx-\by,\bk\>}E(\bk)(\rho,\eta)\psi_{\rho\bx\s}^*\psi_{\eta\by\s}.
\end{align*}
By the condition \eqref{eq_self_adjointness} $\sH_0$ is self-adjoint.
The Hamiltonian $\sH$ is defined by $\sH:=\sH_0+\sV$, which is a
self-adjoint operator on $F_f(L^{2}(\cB\times \G\times \spin))$. Because
of the form of the interaction and the generality of the hopping matrix,
we can consider that $\sH$ represents a class of the reduced BCS model. As in
\cite{K_BCS}, we analyze the system under the influence of imaginary
magnetic field. Let $\sS_z$ be the $z$-component of the spin operator,
which is defined by 
\begin{align*}
\sS_z:=\frac{1}{2}\sum_{(\rho,\bx)\in\cB\times
\G}(\psi_{\rho\bx\ua}^*\psi_{\rho\bx\ua}-
 \psi_{\rho\bx\da}^*\psi_{\rho\bx\da}).
\end{align*}
With the parameter $\theta(\in\R)$ we add the operator $i\theta \sS_z$, which
we formally consider as the interacting term with the imaginary magnetic
field, to the Hamiltonian $\sH$ and study the existence or non-existence
of SSB and ODLRO in the infinite-volume limit of the thermal
averages. To study SSB, we introduce the symmetry breaking external field
$\sF$ by
\begin{align*}
\sF:=\g\sum_{(\rho,\bx)\in\cB\times
 \G}(\psi_{\rho\bx\ua}^*\psi_{\rho\bx\da}^*+\psi_{\rho\bx\da}\psi_{\rho\bx\ua}),\
 (\g\in\R).
\end{align*}
Since the operator $\sH+i\theta \sS_z+\sF$ is not Hermitian, it is
nontrivial that the partition function and the thermal expectations of
our interest are real-valued. We should confirm these basic properties
at this stage. 

\begin{lemma}\label{lem_thermal_expectation_real}
For any $\hrho,\heta\in\cB$, $\hbx,\hby\in \G_{\infty}$, 
\begin{align*}
&\Tr e^{-\beta (\sH+i\theta \sS_z+\sF)},\ \Tr (e^{-\beta (\sH+i\theta
 \sS_z+\sF)}\psi_{\hrho \hbx \ua}^* \psi_{\hrho \hbx \da}^*),\
\Tr (e^{-\beta (\sH+i\theta
 \sS_z+\sF)}\psi_{\hrho \hbx \ua}^* \psi_{\hrho \hbx \da}^*\psi_{\heta
 \hby \da}\psi_{\heta \hby \ua})\\
&\in\R
\end{align*}
and 
\begin{align*}
\Tr (e^{-\beta (\sH+i\theta
 \sS_z+\sF)}\psi_{\hrho \hbx \ua}^* \psi_{\hrho \hbx \da}^*)
=\Tr (e^{-\beta (\sH+i\theta
 \sS_z+\sF)}\psi_{\hrho \hbx \da} \psi_{\hrho \hbx \ua}).
\end{align*}
\end{lemma}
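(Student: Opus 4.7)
The plan is to exhibit two symmetries of the operator $A:=\sH+i\theta\sS_z+\sF$, namely a unitary $U$ satisfying $UAU^{-1}=A^*$ and an anti-unitary $V$ satisfying $VAV^{-1}=A$. From these, the reality of the three traces and the claimed equality will follow by elementary trace manipulations.

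The unitary $U$ is constructed on $F_f(L^2(\cB\times\G\times\spin))$ as the canonical transformation that combines a global $U(1)$-gauge rotation by $\pi/2$ with a spin flip:
\begin{align*}
&U\psi_{\rho\bx\ua}U^{-1}=i\psi_{\rho\bx\da},\quad U\psi_{\rho\bx\da}U^{-1}=i\psi_{\rho\bx\ua},\\
&U\psi_{\rho\bx\ua}^*U^{-1}=-i\psi_{\rho\bx\da}^*,\quad U\psi_{\rho\bx\da}^*U^{-1}=-i\psi_{\rho\bx\ua}^*.
\end{align*}
These relations preserve the CAR, so $U$ is implementable as a unitary. Direct substitution (together with the spin sum in $\sH_0$) yields $U\sH_0U^{-1}=\sH_0$ and $U\sV U^{-1}=\sV$: in the latter the four $\pm i$ factors multiply to $1$ and the double transpositions $\psi_\ua^*\psi_\da^*\mapsto\psi_\da^*\psi_\ua^*$, $\psi_\da\psi_\ua\mapsto\psi_\ua\psi_\da$ contribute cancelling signs. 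The spin flip alone gives $U\sS_zU^{-1}=-\sS_z$, whereas in $\sF$ the $-1$ from anticommuting each flipped creation (resp.\ annihilation) pair exactly cancels the $(i)^2=-1$ from the gauge, hence $U\sF U^{-1}=\sF$. Consequently $UAU^{-1}=\sH-i\theta\sS_z+\sF=A^*$.

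Next let $T$ denote complex conjugation in the occupation number basis of $F_f(L^2(\cB\times\G\times\spin))$: it is anti-unitary and fixes every $\psi_{\rho\bx\s}^{(*)}$. Using the assumption $E(\bk)=\overline{E(-\bk)}$ from \eqref{eq_spatial_reflection_symmetry} together with the substitution $\bk\mapsto-\bk$ (which permutes $\G^*$) yields $T\sH_0T^{-1}=\sH_0$, and the real coefficients of $\sV,\sS_z,\sF$ make these three summands $T$-invariant as well, so $TAT^{-1}=A^*$. Therefore $V:=U^{-1}T$ is an anti-unitary satisfying $VAV^{-1}=A$ and $Ve^{-\beta A}V^{-1}=e^{-\beta A}$. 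For any anti-unitary $V$ one has $\Tr(VXV^{-1})=\overline{\Tr X}$, so whenever $VOV^{-1}=O$ we obtain
\begin{equation*}
\overline{\Tr(e^{-\beta A}O)}=\Tr(Ve^{-\beta A}OV^{-1})=\Tr(e^{-\beta A}O),
\end{equation*}
i.e.\ the trace is real. Verifying $VOV^{-1}=O$ for $O\in\{1,\ \psi_{\hrho\hbx\ua}^*\psi_{\hrho\hbx\da}^*,\ \psi_{\hrho\hbx\ua}^*\psi_{\hrho\hbx\da}^*\psi_{\heta\hby\da}\psi_{\heta\hby\ua}\}$ is a sign count of exactly the same type as for $\sF$: $T$ fixes each operator (integer coefficients), and for $U$ each pair of operators at a common site contributes $i^{\pm 2}=-1$ from the gauge which is cancelled by the transposition sign incurred under the spin flip.

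For the final identity I would write, using $\overline{\Tr X}=\Tr X^*$ together with $e^{-\beta A^*}=U e^{-\beta A}U^{-1}$ and cyclicity of the trace,
\begin{equation*}
\overline{\Tr(e^{-\beta A}\psi_{\hrho\hbx\ua}^*\psi_{\hrho\hbx\da}^*)}=\Tr(e^{-\beta A^*}\psi_{\hrho\hbx\da}\psi_{\hrho\hbx\ua})=\Tr(e^{-\beta A}U^{-1}\psi_{\hrho\hbx\da}\psi_{\hrho\hbx\ua}U),
\end{equation*}
and the same sign count as before gives $U^{-1}\psi_{\hrho\hbx\da}\psi_{\hrho\hbx\ua}U=\psi_{\hrho\hbx\da}\psi_{\hrho\hbx\ua}$. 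Since the left-hand side equals $\Tr(e^{-\beta A}\psi_{\hrho\hbx\ua}^*\psi_{\hrho\hbx\da}^*)$ by the reality already established, the claimed equality follows. The only non-routine step is the verification $UAU^{-1}=A^*$, which amounts to the careful sign accounting described above; everything after that is standard trace algebra.
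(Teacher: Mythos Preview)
Your proof is correct and uses essentially the same ingredients as the paper's: the unitary $U$ you construct is precisely the composition of the paper's two canonical transforms (spin flip followed by the gauge rotation $\psi\to -i\psi$, $\psi^*\to i\psi^*$), and your anti-unitary $T$ is exactly what underlies the paper's third equality in \eqref{eq_trace_conjugate}, where the property \eqref{eq_spatial_reflection_symmetry} is invoked. The only organizational difference is that you compose $U$ and $T$ into a single anti-unitary $V=U^{-1}T$ commuting with $A$ and then appeal once to $\Tr(VXV^{-1})=\overline{\Tr X}$, whereas the paper keeps the two symmetries separate, first recording how each flips the sign of $\theta$ and then combining the resulting identities; both routes yield the same conclusions with the same amount of sign bookkeeping.
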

\begin{proof}
Observe that 
\begin{align}
&\Tr e^{-\beta (\sH-i\theta \sS_z+\sF)}=\overline{\Tr e^{-\beta
 (\sH+i\theta \sS_z+\sF)}},\label{eq_trace_conjugate}\\
&\Tr (e^{-\beta (\sH-i\theta \sS_z+\sF)}\cO^*)=\overline{\Tr (e^{-\beta
 (\sH+i\theta \sS_z+\sF)}\cO)},\notag\\
&\Tr (e^{-\beta (\sH-i\theta \sS_z+\sF)}\cO)=\overline{\Tr (e^{-\beta
 (\sH+i\theta \sS_z+\sF)}\cO)},\notag\\
&(\forall \cO\in \{\psi_{\hrho\hbx\ua}^*\psi_{\hrho\hbx\da}^*,\
 \psi_{\hrho \hbx \da} \psi_{\hrho \hbx \ua},\ \psi_{\hrho \hbx \ua}^* \psi_{\hrho \hbx \da}^*\psi_{\heta
 \hby \da}\psi_{\heta \hby \ua}\}).\notag
\end{align}
To derive the third equality, one can use the property
 \eqref{eq_spatial_reflection_symmetry} and the periodicity of $E(\cdot)$. On the other hand, by using the
 transforms 
\begin{align*}
&(\psi_{\rho\bx \s},\psi_{\rho\bx \s}^*)\to (\psi_{\rho\bx
 -\s},\psi_{\rho\bx -\s}^*),\\
&(\psi_{\rho\bx \s},\psi_{\rho\bx \s}^*)\to (-i\psi_{\rho\bx
 \s},i\psi_{\rho\bx \s}^*),\quad ((\rho,\bx,\s)\in\cB\times \G\times \spin) 
\end{align*}
in this order we can show that
\begin{align}
&\Tr e^{-\beta (\sH+i\theta \sS_z+\sF)}=\Tr e^{-\beta (\sH-i\theta
 \sS_z+\sF)},\label{eq_trace_theta_reflection}\\
&\Tr (e^{-\beta (\sH+i\theta \sS_z+\sF)}\cO)=\Tr (e^{-\beta (\sH-i\theta
 \sS_z+\sF)}\cO),\notag\\
&(\forall \cO\in \{\psi_{\hrho\hbx\ua}^*\psi_{\hrho\hbx\da}^*,\
 \psi_{\hrho \hbx \da} \psi_{\hrho \hbx \ua},\ \psi_{\hrho \hbx \ua}^* \psi_{\hrho \hbx \da}^*\psi_{\heta
 \hby \da}\psi_{\heta \hby \ua}\}).\notag
\end{align}
The claims follow from \eqref{eq_trace_conjugate} and \eqref{eq_trace_theta_reflection}.
\end{proof}

To state the main theorem, let us fix some notational conventions, which will
be used throughout the paper. For $\bk\in \R^d$ let $e_j(\bk)$
$(j=1,2,\cdots,b')$ be the eigenvalues of $E(\bk)$ satisfying
$e_1(\bk)>e_2(\bk)>\cdots >e_{b'}(\bk)$. With the projection matrix
$P_j(\bk)$ corresponding to the eigenvalue $e_j(\bk)$
$(j=1,2,\cdots,b')$ the spectral decomposition of $E(\bk)$ is that 
\begin{align}
E(\bk)=\sum_{j=1}^{b'}e_j(\bk)P_j(\bk).\label{eq_spectral_decomposition}
\end{align}
For any function $f:\R\to \C$ we define $f(E(\bk))\in \Mat(b,\C)$ by
\begin{align*}
f(E(\bk)):=\sum_{j=1}^{b'}f(e_j(\bk))P_j(\bk).
\end{align*}
It is important in our applications that for $f\in C(\R,\C)$ the
function $\bk\mapsto \Tr f(E(\bk))$ $:\R^d\to \C$ is continuous. This is
essentially because the  roots of the characteristic polynomial of
$E(\bk)$ continuously depend on $\bk$. Rouch\'e's theorem ensures this fact.
 
The statements of our main theorem involve a solution to our gap equation.
Let us confirm the unique solvability
of our gap equation, which is written by the above convention. We admit
that for any $x\in\R$, $y\in\R_{>0}$, $x+\infty=\infty>x$, $y\times
\infty=\infty$, $\infty^{-1}=0$ and
\begin{align*}
\int_{\G_{\infty}^*}d\bk\Tr \left(\frac{\sinh(\beta
|E(\bk)|)}{(-1+\cosh(\beta E(\bk)))|E(\bk)|}\right)=\infty,
\end{align*}
which is consistent with the conditions
\eqref{eq_dispersion_upper_lower_bound},
\eqref{eq_dispersion_measure_divergence}. In fact \eqref{eq_dispersion_upper_lower_bound},
\eqref{eq_dispersion_measure_divergence} imply that
\begin{align*}
\lim_{\eps\searrow 0}\int_{\G_{\infty}^*}d\bk\Tr \left(\frac{\sinh(\beta
|E(\bk)|)}{(\eps -1+\cosh(\beta E(\bk)))|E(\bk)|}\right)=\infty.
\end{align*}
Set 
\begin{align*}
D_d:=|\det(\hbv_1,\hbv_2,\cdots,\hbv_d)|^{-1}(2\pi)^{-d}.
\end{align*}

\begin{lemma}\label{lem_gap_equation_solvability}
Let $U\in \R_{<0}$, $\beta \in \R_{>0}$, $\theta \in\R$. Then the following
 statements hold true. The equation 
\begin{align}
&-\frac{2}{|U|}+D_d\int_{\G_{\infty}^*}d\bk\Tr \left(\frac{\sinh(\beta
\sqrt{E(\bk)^2+\D^2})}{(\cos(\beta\theta/2)+\cosh(\beta
\sqrt{E(\bk)^2+\D^2}))\sqrt{E(\bk)^2+\D^2}}\right)=0\label{eq_gap_equation}
\end{align}
has a solution $\D$ in $[0,\infty)$ if and only if
\begin{align*}
&-\frac{2}{|U|}+D_d\int_{\G_{\infty}^*}d\bk\Tr \left(\frac{\sinh(\beta
|E(\bk)|)}{(\cos(\beta\theta/2)+\cosh(\beta
E(\bk)))|E(\bk)|}\right)\ge 0.
\end{align*}  
Moreover, if a solution exists, it is unique.
\end{lemma}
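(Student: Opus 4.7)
My plan is to define the function $F : [0,\infty) \to \R \cup \{+\infty\}$ as the left-hand side of \eqref{eq_gap_equation} and establish three properties: (i) $F$ is continuous, (ii) $F$ is strictly decreasing, and (iii) $\lim_{\Delta \to \infty} F(\Delta) = -2/|U| < 0$. Once these hold, the intermediate value theorem immediately gives both the existence-iff-$F(0) \geq 0$ statement and uniqueness. Using the spectral decomposition \eqref{eq_spectral_decomposition}, the trace in the integrand equals $\sum_{j=1}^{b'} h\!\left(\sqrt{e_j(\bk)^2 + \Delta^2}\right)$, where
$$h(t) := \frac{\sinh(\beta t)}{t(c + \cosh(\beta t))}, \qquad c := \cos(\beta\theta/2) \in [-1,1],$$
understood with its natural (possibly infinite) limit at $t=0^+$. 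Since $\Delta \mapsto \sqrt{e_j(\bk)^2 + \Delta^2}$ is strictly increasing on $[0,\infty)$ for every $\bk$ and $j$, strict monotonicity of $F$ reduces to strict monotonicity of $h$ on $(0,\infty)$.

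The core analytic step is to prove $h'(t) < 0$ for $t > 0$. Substituting $u = \beta t$ and $\phi(u) := \sinh u / [u(c + \cosh u)]$, a direct computation yields
$$u^2 (c + \cosh u)^2 \, \phi'(u) = c A(u) - B(u),$$
with $A(u) := u\cosh u - \sinh u$ and $B(u) := \sinh u \cosh u - u$. Both $A$ and $B$ vanish at $0$ and have derivatives $u \sinh u$ and $2\sinh^2 u$, so $A, B > 0$ on $(0,\infty)$. The key algebraic identity
$$B(u) - A(u) = (1 + \cosh u)(\sinh u - u) > 0 \quad (u > 0)$$
yields $A < B$, hence $cA \leq A < B$ when $c \in [0,1]$ and $cA \leq 0 < B$ when $c \in [-1,0)$. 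In all cases $c A - B < 0$, while the denominator is positive on $(0,\infty)$ since $c + \cosh u \geq -1 + \cosh u > 0$ for $u > 0$. Thus $\phi'(u) < 0$ uniformly in $c \in [-1,1]$.

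With monotonicity of $h$ secured, for $0 \leq \Delta_1 < \Delta_2$ one has $\sqrt{e_j(\bk)^2 + \Delta_1^2} < \sqrt{e_j(\bk)^2 + \Delta_2^2}$ at every $\bk, j$, so the integrand strictly decreases pointwise and $F(\Delta_1) > F(\Delta_2)$. Continuity of $F$ on $(0,\infty)$ is by dominated convergence, bounding the integrand uniformly near $\Delta_0 > 0$ by $b' h(\Delta_0) < \infty$. At $\Delta = 0^+$: when $c > -1$ apply dominated convergence with bound $b' \beta /(1+c)$; when $c = -1$ the integrand is nonnegative and increases as $\Delta \downarrow 0$, and the small-$t$ asymptotic $h(t) \sim 2/(\beta t^2)$ combined with the divergence hypothesis \eqref{eq_dispersion_measure_divergence} yields $F(0^+) = F(0) = +\infty$, consistent with the conventions stated before the lemma. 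Finally $\lim_{\Delta \to \infty} F(\Delta) = -2/|U|$ by dominated convergence, using $h(t) = O(1/t)$ as $t \to \infty$ and the finite measure of $\G_\infty^*$.

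The only real obstacle is the strict monotonicity $\phi' < 0$, which is driven by the clean factorization $B - A = (1 + \cosh u)(\sinh u - u)$; without it, the inequality $c A < B$ would require a subtler case analysis in $c$. All the remaining ingredients are routine applications of monotone and dominated convergence together with the intermediate value theorem.
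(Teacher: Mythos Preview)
Your proof is correct and follows the same approach as the paper's: reduce to strict monotonicity of $t\mapsto \sinh(\beta t)/[t(\cos(\beta\theta/2)+\cosh(\beta t))]$, continuity in $\Delta$, and the limit $-2/|U|$ at infinity, then apply the intermediate value theorem. The paper simply asserts the monotonicity (citing \cite[Lemma~4.19]{K_BCS} for hints), whereas you supply a self-contained argument via the factorization $B-A=(1+\cosh u)(\sinh u-u)$; one minor imprecision is that $\Tr$ of the spectral sum should carry the eigenspace multiplicities $\Tr P_j(\bk)$, but since these are positive this does not affect any of your inequalities.
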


\begin{proof}
Observe that the functions 
\begin{align*}
&x\mapsto \frac{\sinh x}{(\eps +\cosh x)x}:[0,\infty)\to \R,\ (\eps\in
 (-1,1]),\\
&x\mapsto \frac{\sinh x}{(-1 +\cosh x)x}:(0,\infty)\to \R
\end{align*}
are strictly monotone decreasing and converge to 0 as $x\to \infty$. See
 e.g. \cite[\mbox{Lemma 4.19}]{K_BCS} for hints of the proof. Thus the
 left-hand side of \eqref{eq_gap_equation} is strictly monotone
 decreasing with $\D$ as the map from $\R_{\ge 0}$ to $\R\cup
 \{\infty\}$ and converges to $-2/|U|$ as $\D\to \infty$. Moreover, it
 is continuous with $\D$ as a real-valued function in $\R_{\ge 0}$ if
 $\cos(\beta \theta/2 )\neq -1$, or in $\R_{>0}$ if
 $\cos(\beta\theta/2)=-1$. Furthermore, by
 \eqref{eq_dispersion_upper_lower_bound} and \eqref{eq_dispersion_measure_divergence}
\begin{align*}
\lim_{\D\searrow 0}\int_{\G_{\infty}^*}d\bk\Tr \left(\frac{\sinh(\beta
\sqrt{E(\bk)^2+\D^2})}{(-1+\cosh(\beta
\sqrt{E(\bk)^2+\D^2}))\sqrt{E(\bk)^2+\D^2}}\right)=\infty.
\end{align*}
By using these facts we can deduce the claim.
\end{proof}

For a function $f:\G_{\infty}\times \G_{\infty}\to \C$ and $a\in\C$ we
write $\lim_{\|\bx-\by\|_{\R^d}\to\infty}f(\bx,\by)=a$ if for any
$\eps\in\R_{>0}$ there exists $\delta \in \R_{>0}$ such that
$\|f(\bx,\by)-a\|_{\C}<\eps$ for any $\bx,\by\in\G_{\infty}$ satisfying
$\|\bx-\by\|_{\R^d}>\delta$. Here $\|\cdot\|_{\R^d}$ denotes the
Euclidean norm of $\R^d$. 

For a sequence $(s_n)_{n=n_0}^{\infty}$ and an element $s$ of a normed
space with the norm $|||\cdot|||$ we write $\lim_{n\to\infty,n\in\N}s_n=s$
if for any $\eps\in\R_{>0}$ there exists $m\in \N$ such that
$|||s_n-s|||<\eps$ for any $n\in \N$ satisfying $n\ge m$. The point of
this convention is that we write $\lim_{n\to\infty,n\in \N}s_n$ even if
$s_1,s_2,\cdots,s_{n_0-1}$ are undefined. We use this convention
especially when we consider the infinite-volume limit $L\to \infty$. 

Our main result is stated as follows.

\begin{theorem}\label{thm_main_theorem}
We let $\D(\in\R_{\ge 0})$ be the solution to \eqref{eq_gap_equation} if 
\begin{align*}
&-\frac{2}{|U|}+D_d\int_{\G_{\infty}^*}d\bk\Tr \left(\frac{\sinh(\beta
|E(\bk)|)}{(\cos(\beta\theta/2)+\cosh(\beta
E(\bk)))|E(\bk)|}\right)\ge 0.
\end{align*}
We let $\D:=0$ if   
\begin{align*}
&-\frac{2}{|U|}+D_d\int_{\G_{\infty}^*}d\bk\Tr \left(\frac{\sinh(\beta
|E(\bk)|)}{(\cos(\beta\theta/2)+\cosh(\beta
E(\bk)))|E(\bk)|}\right)< 0.
\end{align*}
Then there exists a positive constant $c_1$ depending only on
 $d,b,(\hbv_j)_{j=1}^d,\sa,(\sn_j)_{j=1}^d,\sc$ such that the following
 statements hold for any $\beta \in\R_{>0}$, $\theta\in \R$ satisfying
 $\beta\theta/2\notin \pi(2\Z+1)$ and 
\begin{align}
U\in \left(-c_1\left(1_{\beta\ge 1}+1_{\beta
 <1}\max\left\{\beta^2,\min_{m\in\Z}\left|\frac{\beta\theta}{2}-\pi(2m+1)\right|^2\right\}\right),0\right).\label{eq_coupling_constant_interval}
\end{align}
\begin{enumerate}[(i)]
\item\label{item_partition_positivity}
There exists $L_0\in \N$ such that 
\begin{align*}
\Tr e^{-\beta (\sH+i\theta \sS_z+\sF)}\in\R_{>0},\quad(\forall
 L\in\N\text{ with }L\ge L_0,\ \g\in[0,1]).
\end{align*}
\item\label{item_free_energy_density}
\begin{align}
&\lim_{L\to \infty\atop L\in \N}\left(-\frac{1}{\beta L^d}\log(\Tr
 e^{-\beta(\sH+i\theta \sS_z)})\right)\label{eq_free_energy_density}\\
&=\frac{\D^2}{|U|}-\frac{D_d}{\beta}\int_{\G_{\infty}^*}d\bk\Tr\log\Bigg(2\cos\left(\frac{\beta\theta}{2}\right)e^{-\beta
 E(\bk)}\notag\\
&\qquad\qquad\qquad\qquad\qquad\qquad +e^{\beta(\sqrt{E(\bk)^2+\D^2}-E(\bk))}
+e^{-\beta(\sqrt{E(\bk)^2+\D^2}+E(\bk))}\Bigg).\notag
\end{align}
\item\label{item_SSB}
\begin{align}
&\lim_{\g\searrow 0\atop \g\in (0,1]}\lim_{L\to \infty\atop L\in\N}\frac{\Tr(e^{-\beta
(\sH+i\theta \sS_z+\sF)}\psi_{\hrho\hbx\ua}^*\psi_{\hrho\hbx\da}^*)}{\Tr e^{-\beta (\sH+i\theta \sS_z+\sF)}}=\lim_{\g\searrow 0\atop \g\in (0,1]}\lim_{L\to \infty\atop L\in\N}\frac{\Tr(e^{-\beta
(\sH+i\theta \sS_z+\sF)}\psi_{\hrho\hbx\da}\psi_{\hrho\hbx\ua})}{\Tr
 e^{-\beta (\sH+i\theta \sS_z+\sF)}}
\label{eq_SSB}\\
&=-\frac{\D D_d}{2}\int_{\G_{\infty}^*}d\bk
\frac{\sinh(\beta
\sqrt{E(\bk)^2+\D^2})}{(\cos(\beta\theta/2)+\cosh(\beta
\sqrt{E(\bk)^2+\D^2}))\sqrt{E(\bk)^2+\D^2}}(\hrho,\hrho),\notag\\
&(\forall \hrho\in \cB,\ \hbx\in \G_{\infty}).\notag
\end{align}
\item\label{item_ODLRO}
If 
\begin{align*}
|U|\neq \left(
\frac{D_d}{2}\int_{\G_{\infty}^*}d\bk\Tr \left(\frac{\sinh(\beta
|E(\bk)|)}{(\cos(\beta\theta/2)+\cosh(\beta
E(\bk)))|E(\bk)|}\right)\right)^{-1},
\end{align*}
\begin{align}
&\lim_{\|\hbx-\hby\|_{\R^d}\to\infty}\lim_{L\to \infty\atop L\in\N}\frac{\Tr (e^{-\beta
(\sH+i\theta \sS_z)}\psi_{\hrho\hbx\ua}^*\psi_{\hrho\hbx\da}^*\psi_{\heta\hby\da}\psi_{\heta\hby\ua})}{\Tr
 e^{-\beta (\sH+i\theta \sS_z)}}\label{eq_ODLRO_explicit}\\
&=
\D^2\prod_{\rho\in \{\hrho,\heta\}}\left(
\frac{D_d}{2}\int_{\G_{\infty}^*}d\bk
\frac{\sinh(\beta
\sqrt{E(\bk)^2+\D^2})}{(\cos(\beta\theta/2)+\cosh(\beta
\sqrt{E(\bk)^2+\D^2}))\sqrt{E(\bk)^2+\D^2}}(\rho,\rho)\right),\notag\\
&(\forall\hrho,\heta \in\cB).\notag
\end{align}
If 
\begin{align*}
|U|= \left(
\frac{D_d}{2}\int_{\G_{\infty}^*}d\bk\Tr \left(\frac{\sinh(\beta
|E(\bk)|)}{(\cos(\beta\theta/2)+\cosh(\beta
E(\bk)))|E(\bk)|}\right)\right)^{-1},
\end{align*}
\begin{align*}
&\lim_{\|\hbx-\hby\|_{\R^d}\to\infty}\limsup_{L\to \infty\atop L\in\N}\left|\frac{\Tr (e^{-\beta
(\sH+i\theta \sS_z)}\psi_{\hrho\hbx\ua}^*\psi_{\hrho\hbx\da}^*\psi_{\heta\hby\da}\psi_{\heta\hby\ua})}{\Tr
 e^{-\beta (\sH+i\theta \sS_z)}}\right|=0,\quad(\forall\hrho,\heta
 \in\cB).
\end{align*}
\item\label{item_CPD}
\begin{align*}
&\lim_{L\to \infty\atop
 L\in\N}\frac{1}{L^{2d}}\sum_{(\hrho,\hbx),(\heta,\hby)\in\cB\times \G}\frac{\Tr (e^{-\beta
(\sH+i\theta \sS_z)}\psi_{\hrho\hbx\ua}^*\psi_{\hrho\hbx\da}^*\psi_{\heta\hby\da}\psi_{\heta\hby\ua})}{\Tr
 e^{-\beta (\sH+i\theta \sS_z)}}=\frac{\D^2}{U^2}.
\end{align*}
\item\label{item_gap_equation_solvable}
There exists $\delta\in \R_{>0}$ such that if
     $\min_{m\in\Z}|\beta\theta/2-\pi(2m+1)|<\delta$, then 
\begin{align*}
-\frac{2}{|U|}+D_d\int_{\G_{\infty}^*}d\bk\Tr \left(\frac{\sinh(\beta
|E(\bk)|)}{(\cos(\beta\theta/2)+\cosh(\beta
E(\bk)))|E(\bk)|}\right)>0
\end{align*}
and $\D>0$.
\end{enumerate}
\end{theorem}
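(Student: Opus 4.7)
The plan is to reduce all six conclusions to a saddle-point analysis of a one-parameter classical integral coupled to a controllable Grassmann integral, following the template from \cite{K_BCS} but with the multi-scale infrared machinery from Section \ref{sec_multiscale_integration} replacing the double-scale estimates. First I would apply a Hubbard-Stratonovich transformation to the reduced BCS interaction to rewrite $\Tr e^{-\beta(\sH+i\theta\sS_z+\sF)}$ (and the traces with operator insertions) as an integral over a single complex classical field $\lambda$ of a Grassmann Gaussian integral whose action is quadratic in $\psi$ with a $\lambda$-dependent gap plus a correction term of the form $e^{V^0(\psi)}$ considered in the paper's introduction (the sum $V^0_s+V^0_v$). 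The quadratic reference model is explicitly diagonalizable in momentum space, so its contribution to the log of the partition function is the trace-log integrand appearing in (\ref{eq_free_energy_density}), with $\Delta$ replaced by $|\lambda|$.

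Second, I would feed the correction term into the multi-scale integration of Section \ref{sec_multiscale_integration}. The outcome asserted in the introduction is that, after integrating out all infrared scales $C_0,C_{-1},\dots,C_{l_{end}}$ and exploiting both the mean-field bound $\|V^m_s\|\le C|\gamma|L^{-1}$ and the vanishing identity (\ref{eq_illustrative_vanishing}) for $V^m_v$ against the time-independent final covariance $C_{l_{end}}$, one obtains that $\log\!\int e^{V^0(\psi)}d\mu_C(\psi)=O(L^{-1})$ uniformly in $\beta,\theta,\gamma$ in the allowed ranges, and similarly the derivatives in $\lambda$ that produce the insertions $\psi^*\psi^*$, $\psi\psi$, $\psi^*\psi^*\psi\psi$ are controlled. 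This is the main obstacle: one must verify that the scale-dependent bound properties assumed in Section \ref{sec_multiscale_integration} are really satisfied by the true covariance under the conditions (\ref{eq_dispersion_upper_lower_bound})--(\ref{eq_dispersion_power}) on $E$, and that the inductive classification $V^m=V^m_s+V^m_v$ is preserved with the required scale-dependent norms, so that Proposition 4.16 of \cite{K_BCS} can then be invoked to pass to the infinite-volume limit. The restriction (\ref{eq_coupling_constant_interval}) on $U$ arises from requiring the determinant bound for the ultraviolet covariance together with the convergence of the multi-scale recursion.

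Third, once the correction term is known to be uniformly bounded and to vanish under the $L\to\infty$, $\gamma\to 0$ limits, items (i)--(v) follow from a routine saddle-point/Laplace analysis of the classical $\lambda$-integral. Positivity (\ref{item_partition_positivity}) comes from the fact that the leading Gaussian factor has positive real integrand and the correction is $1+o(1)$. The free energy (\ref{item_free_energy_density}) is the Legendre transform at the unique minimizer $|\lambda|=\Delta$ determined by differentiating the reference trace-log; that critical-point condition is precisely the gap equation (\ref{eq_gap_equation}), whose unique solvability is Lemma \ref{lem_gap_equation_solvability}. For (\ref{item_SSB}) and (\ref{item_ODLRO}), the insertions $\psi^*\psi^*$ and $\psi^*\psi^*\psi\psi$ produce the 1-point and truncated 2-point functions of $\lambda$ in the classical measure; the symmetry-breaking external field $\sF$ selects the real-$\lambda$ branch, while the cluster property for distinct points kills the connected part in the $\|\hbx-\hby\|\to\infty$ limit, leaving the factorized product in (\ref{eq_ODLRO_explicit}). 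The dichotomy at $|U|$ equal to the critical value corresponds to the vanishing of the Hessian at $\lambda=0$, in which case the Laplace expansion gives only a subleading contribution. The Cooper-pair density (\ref{item_CPD}) is just the $\lambda$-variance rescaled, giving $\Delta^2/U^2$ by the stationary condition.

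Finally, for (\ref{item_gap_equation_solvable}), I would analyze the left-hand side of the solvability inequality in Lemma \ref{lem_gap_equation_solvability} as $\beta\theta/2\to\pi(2m+1)$. As $\cos(\beta\theta/2)\to -1$, the integrand $\sinh(\beta|E(\bk)|)/((\cos(\beta\theta/2)+\cosh(\beta E(\bk)))|E(\bk)|)$ develops a non-integrable singularity at the zero set of $e$, and the conditions (\ref{eq_dispersion_upper_lower_bound}), (\ref{eq_dispersion_measure_divergence}) together with (\ref{eq_dispersion_measure_divided}) make the integral diverge to $+\infty$. Continuity of the left-hand side in $\theta$ then gives strict positivity on a neighborhood and hence $\Delta>0$, completing the proof.
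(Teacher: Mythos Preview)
Your plan is correct and matches the paper's proof: Hubbard--Stratonovich (Lemma~\ref{lem_final_Grassmann_formulation}), multi-scale control of the correction term (Section~\ref{sec_multiscale_integration} culminating in Proposition~\ref{prop_general_integration_application}, with the covariance bounds verified in Lemma~\ref{lem_covariance_construction}), and then Laplace-type lemmas (Appendix~\ref{app_infinite_volume_limit}) for the classical $\phi$-integral; item~(\ref{item_gap_equation_solvable}) is handled exactly as you say, immediately after the theorem statement. One small clarification: the fermionic insertions do not become moments of the classical field directly---Proposition~\ref{prop_general_integration_application}~(\ref{item_integral_formulation_difference}) shows the correction vanishes so that $\int e^{-V+W}A^j\,d\mu_{C(\phi)}/\int e^{-V+W}\,d\mu_{C(\phi)}$ reduces to the free value $\int A^j\,d\mu_{C(\phi)}$, which Lemma~\ref{lem_covariance_final_characterization} expresses explicitly in $|\phi|$, and only then does the saddle-point analysis localize at $|\phi|=\Delta$ (or $a(\gamma)$ when $\sF$ is present).
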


In the rest of the paper except Section \ref{sec_phase_transition} we
always assume that 
$$
\frac{\beta\theta}{2}\notin \pi(2\Z+1).
$$
This is because the free partition function can vanish if
$\beta\theta/2\in \pi(2\Z+1)$ and thus we are unable to define the free
covariance, which is indispensable for our construction. Only in Section
\ref{sec_phase_transition} we lift this condition.

For $(x,y,z)\in \R_{>0}\times \R\times \R$ with $xy/2\notin \pi (2\Z+1)$
we define the matrix-valued function $G_{x,y,z}:\R^d\to \Mat(b,\C)$ by 
\begin{align}
G_{x,y,z}(\bk):= 
\frac{\sinh(x
\sqrt{E(\bk)^2+z^2})}{(\cos(xy/2)+\cosh(x
\sqrt{E(\bk)^2+z^2}))\sqrt{E(\bk)^2+z^2}}.\label{eq_matrix_valued_notation}
\end{align}
This notation helps to shorten formulas in subsequent arguments. We
can prove the claim \eqref{item_gap_equation_solvable} here. There
uniquely exists $m_0\in\Z$ such that
$\beta\theta/(2\pi)\in[2m_0,2m_0+2)$ and
$\min_{m\in\Z}|\beta\theta/2-\pi(2m+1)|=|\beta\theta/2-\pi(2m_0+1)|$. By
\eqref{eq_dispersion_upper_lower_bound} and \eqref{eq_dispersion_upper_bound},
\begin{align*}
\int_{\G_{\infty}^*}d\bk \Tr G_{\beta,\theta,0}(\bk)&\ge \beta
 \int_{\G_{\infty}^*}d\bk \frac{1}{\cos(\beta \theta/2)+\cosh(\beta \sc
 e(\bk))}\\
&\ge \beta \int_{\G_{\infty}^*}d\bk \frac{1}{e^{\beta\sc^2}\beta^2\sc^2
 e(\bk)^2+1-\cos(|\beta\theta/2-\pi(2m_0+1)|)}\\
&\ge \frac{\beta}{\max\{1,e^{\beta\sc^2}\beta^2\sc^2\}}
\int_{\G_{\infty}^*}d\bk
 \frac{1}{e(\bk)^2+|\beta\theta/2-\pi(2m_0+1)|^2}.
\end{align*}
By \eqref{eq_dispersion_measure_divergence} there exists $\delta
\in\R_{>0}$ such that if $|\beta\theta/2-\pi(2m_0+1)|<\delta$, the
right-hand side of the above inequality is larger than
$2D_d^{-1}|U|^{-1}$. Then Lemma \ref{lem_gap_equation_solvability}
implies that $\D>0$.

\begin{remark}
The claim \eqref{item_partition_positivity} ensures the well-definedness
 of the free energy density and the thermal expectations for $L\in \N$
 with $L\ge L_0$. By following the above-mentioned convention we write
 $\lim_{L\to\infty,L\in\N}$ in Theorem \ref{thm_main_theorem}, though
 these objects are not defined for $L\in \{1,2,\cdots,L_0-1\}$.
\end{remark}

\begin{remark}
For any $\D\in\R_{\ge 0}$, $\hrho\in\cB$,
\begin{align*}
&\int_{\G_{\infty}^*}d\bk G_{\beta,\theta,\D}(\bk)(\hrho,\hrho)\\
&\ge D_d^{-1}\sinh\left(\beta\sqrt{\sup_{\bk\in\R^d}\|E(\bk)\|_{b\times
 b}^2+\D^2}\right)\\
&\quad\cdot \left(\cos(\beta\theta/2)+\cosh\left(\beta\sqrt{\sup_{\bk\in\R^d}\|E(\bk)\|_{b\times
 b}^2+\D^2}\right)\right)^{-1}\left(\sqrt{\sup_{\bk\in\R^d}\|E(\bk)\|_{b\times
 b}^2+\D^2}\right)^{-1}\\
&>0.
\end{align*}
From this estimate we can see that the theorem implies the occurrence of
 SSB and ODLRO in the case
\begin{align*}
-\frac{2}{|U|}+D_d \int_{\G_{\infty}^*}d\bk \Tr
 G_{\beta,\theta,0}(\bk)>0.
\end{align*}
\end{remark}

\begin{remark}
If $\theta\neq 0$, for any $\beta_0$, $\delta\in\R_{>0}$ there exists
 $\beta\in [\beta_0,\infty)$ such that
 $0<\min_{m\in\Z}|\beta\theta/2-\pi(2m+1)|<\delta$. Thus we can read from
 the claim \eqref{item_gap_equation_solvable} that if $\theta\neq 0$,
 the SSB and the ODLRO occur in arbitrarily low temperatures.
\end{remark}

\begin{remark}\label{rem_nontrivial_dependency}
The $\beta$-dependency in the case $\beta<1$ in
 \eqref{eq_coupling_constant_interval} stems from a determinant bound on
 the full covariance, which is essentially governed by the integral
\begin{align}
\frac{1}{\beta}\int_{\G_{\infty}^*}d\bk \frac{1}{e(\bk)+\min_{m\in
 \Z}|\theta/2-\pi(2m+1)/\beta|}
\label{eq_essential_nontrivial}
\end{align}
if $\beta <1$.
See the proof of Lemma \ref{lem_covariance_construction}
 \eqref{item_covariance_construction_necessary}. In fact a lower bound
 of the term 
\begin{align*}
\left(\frac{1}{\beta}\int_{\G_{\infty}^*}d\bk \frac{1}{e(\bk)+\min_{m\in
 \Z}|\theta/2-\pi(2m+1)/\beta|}\right)^{-2}
\end{align*}
leads to the $\beta$-dependency in
 \eqref{eq_coupling_constant_interval}. If $\theta =0$ (mod
 $4\pi/\beta$), the term \eqref{eq_essential_nontrivial} is bounded by a
 $\beta$-independent constant and the determinant bound on the full
 covariance becomes independent of $\beta$ as usual. Thus we can explain
 that the nontrivial $\beta$-dependency in
 \eqref{eq_coupling_constant_interval} is caused by the insertion of the
 imaginary magnetic field.
\end{remark}

\begin{remark}
For $\phi\in \C$ let us define the operator $\sF(\phi)$ by
\begin{align*}
\sF(\phi):=\sum_{(\rho,\bx)\in\cB\times\G}(\phi\psi_{\rho\bx\ua}^*\psi_{\rho\bx\da}^*
+
\overline{\phi}\psi_{\rho\bx\da}\psi_{\rho\bx\ua}).
\end{align*}
Take any $\hbx\in\G_{\infty}$, $\hrho\in\cB$, $\xi\in\R$. It follows
 from the claim \eqref{item_SSB} and the gauge transform $\psi_X^*\to
 e^{-i\frac{\xi}{2}}\psi_X^*$, $\psi_X \to  e^{i\frac{\xi}{2}}\psi_X$
 $(X\in \cB\times \G\times \spin)$ that 
\begin{align*}
&\lim_{\g\searrow 0\atop \g\in (0,1]}\lim_{L\to \infty\atop L\in\N}\frac{\Tr(e^{-\beta
(\sH+i\theta \sS_z+\sF(\g
 e^{i\xi}))}\psi_{\hrho\hbx\ua}^*\psi_{\hrho\hbx\da}^*)}{\Tr e^{-\beta
 (\sH+i\theta \sS_z+\sF(\g e^{i\xi}))}}
=-\frac{e^{-i\xi}\D D_d}{2}\int_{\G_{\infty}^*}d\bk
 G_{\beta,\theta,\D}(\bk)(\hrho,\hrho),\\
&\lim_{\g\searrow 0\atop \g\in (0,1]}\lim_{L\to \infty\atop L\in\N}\frac{\Tr(e^{-\beta
(\sH+i\theta \sS_z+\sF(\g
 e^{i\xi}))}\psi_{\hrho\hbx\da}\psi_{\hrho\hbx\ua})}{\Tr e^{-\beta
 (\sH+i\theta \sS_z+\sF(\g e^{i\xi}))}}
=-\frac{e^{i\xi}\D D_d}{2}\int_{\G_{\infty}^*}d\bk
 G_{\beta,\theta,\D}(\bk)(\hrho,\hrho).
\end{align*}
These convergent properties imply that the limit
\begin{align*}
&\lim_{\phi\to 0\atop \phi\in \C\backslash\{0\}}\lim_{L\to \infty\atop L\in\N}\frac{\Tr(e^{-\beta
(\sH+i\theta \sS_z+\sF(\phi))}\psi_{\hrho\hbx\ua}^*\psi_{\hrho\hbx\da}^*)}{\Tr e^{-\beta
 (\sH+i\theta \sS_z+\sF(\phi))}},\ \lim_{\phi\to 0\atop \phi\in \C\backslash\{0\}}\lim_{L\to \infty\atop L\in\N}\frac{\Tr(e^{-\beta
(\sH+i\theta \sS_z+\sF(\phi))}\psi_{\hrho\hbx\da}\psi_{\hrho\hbx\ua})}{\Tr e^{-\beta
 (\sH+i\theta \sS_z+\sF(\phi))}}
\end{align*}
do not exist when $\D>0$. However, 
\begin{align*}
&\lim_{\phi\to 0\atop \phi\in \C\backslash\{0\}}\left|\lim_{L\to \infty\atop L\in\N}\frac{\Tr(e^{-\beta
(\sH+i\theta \sS_z+\sF(\phi))}\psi_{\hrho\hbx\ua}^*\psi_{\hrho\hbx\da}^*)}{\Tr e^{-\beta
 (\sH+i\theta \sS_z+\sF(\phi))}}\right|\\
&=\lim_{\phi\to 0\atop \phi\in \C\backslash\{0\}}\left|\lim_{L\to \infty\atop L\in\N}\frac{\Tr(e^{-\beta
(\sH+i\theta \sS_z+\sF(\phi))}\psi_{\hrho\hbx\da}\psi_{\hrho\hbx\ua})}{\Tr e^{-\beta
 (\sH+i\theta \sS_z+\sF(\phi))}}\right|\\
&=\frac{\D D_d}{2}\int_{\G_{\infty}^*}d\bk
 G_{\beta,\theta,\D}(\bk)(\hrho,\hrho).
\end{align*}
\end{remark}

\begin{remark}
The claim \eqref{item_ODLRO} does not imply the convergence of 
\begin{align*} 
\lim_{L\to \infty\atop L\in\N}
\frac{\Tr (e^{-\beta
(\sH+i\theta \sS_z)}\psi_{\hrho\hbx\ua}^*\psi_{\hrho\hbx\da}^*\psi_{\heta\hby\da}\psi_{\heta\hby\ua})}{\Tr
 e^{-\beta (\sH+i\theta \sS_z)}}
\end{align*}
in the case 
\begin{align}
|U|=\left(\frac{D_d}{2}\int_{\G_{\infty}^*}d\bk\Tr
 G_{\beta,\theta,0}(\bk)\right)^{-1}.\label{eq_on_phase_boundary_short}
\end{align}
 In fact in this case
 we cannot prove the convergence of the finite-volume 4-point
 correlation function as $L\to \infty$. We can prove that the global
 maximum point of the function 
\begin{align*}
x\mapsto &-\frac{x^2}{|U|}+\frac{1}{\beta L^d}\sum_{\bk\in\G^*}\Tr
 \log\left(\cos\left(\frac{\beta
 \theta}{2}\right)+\cosh(\beta\sqrt{E(\bk)^2+x^2})\right)\\
&-\frac{1}{\beta L^d}\sum_{\bk\in\G^*}\Tr
 \log\left(\cos\left(\frac{\beta
 \theta}{2}\right)+\cosh(\beta E(\bk))\right):\R_{\ge 0}\to \R
\end{align*}
converges to 0 as $L\to \infty$. According to the proof of the theorem
 in Subsection \ref{subsec_proof_theorem} and Lemma
 \ref{lem_infinite_volume_correlation} in Appendix
 \ref{app_infinite_volume_limit}, we must have more detailed information
 about how the maximum point and derivatives of the function at the
 maximum point converge as $L\to \infty$ to complete the proof. We are
 unable to extract the necessary information from our assumptions on $E(\cdot)$.
 On the contrary, the theorem guarantees the convergence of the thermal
 expectations 
\begin{align*}
\frac{\Tr (e^{-\beta
(\sH+i\theta \sS_z+\sF)}\psi_{\hrho\hbx\ua}^*\psi_{\hrho\hbx\da}^*)}{\Tr
 e^{-\beta (\sH+i\theta \sS_z+\sF)}},\quad
 \frac{1}{L^{2d}}\sum_{(\hat{\rho},\hat{\bx}),(\hat{\eta},\hat{\by})\in
 \cB\times \G}\frac{\Tr (e^{-\beta
(\sH+i\theta \sS_z)}\psi_{\hrho\hbx\ua}^*\psi_{\hrho\hbx\da}^*\psi_{\heta\hby\da}\psi_{\heta\hby\ua})}{\Tr
 e^{-\beta (\sH+i\theta \sS_z)}}
\end{align*}
and the free energy density 
$$
-\frac{1}{\beta L^d}\log(\Tr e^{-\beta(\sH+i\theta \sS_z)})
$$
as $L\to \infty$ as long as $\beta(\in \R_{>0})$, $\theta(\in \R)$
 satisfies $\beta\theta/2\notin \pi (2\Z+1)$ and $U$ satisfies \eqref{eq_coupling_constant_interval}, whether \eqref{eq_on_phase_boundary_short} holds or not.
\end{remark}

\begin{remark}\label{rem_imaginary_zero}
As we can see from the claim
 \eqref{item_gap_equation_solvable}, the non-zero imaginary magnetic field
  is crucial to ensure the existence of a positive solution to
 the gap equation \eqref{eq_gap_equation} for any small coupling
 constant and accordingly the existence of SSB and ODLRO in this regime. 
One natural question we face is whether we can prove SSB and ODLRO when
 the imaginary magnetic field is switched off. To find an answer to this
 question, let us examine the solvability of the gap equation when
 $\theta=0$. By \eqref{eq_dispersion_upper_lower_bound},
 \eqref{eq_dispersion_upper_bound},
 \eqref{eq_dispersion_measure_divided}
\begin{align*}
\int_{\G_{\infty}^*}d\bk \Tr G_{\beta,0,0}(\bk)\le
 \int_{\G_{\infty}^*}d\bk\Tr\left(\frac{1}{|E(\bk)|}\right)
\le b\int_{\G_{\infty}^*}d\bk\frac{1}{e(\bk)}
\le b\sc\min\{\sc^{\sa-1},1\}.
\end{align*}
Thus, a necessary condition for existence of a positive solution $\D$ to
 \eqref{eq_gap_equation} with $\theta=0$ 
is that 
\begin{align}\label{eq_necessary_lower_bound_theta}
|U|>\frac{2}{D_db\sc \min\{\sc^{\sa-1},1\}}.
\end{align}
We can compute the
 thermal expectation values for some $U$ independent of $\beta$ and
 $\theta$ as described in 
\eqref{eq_coupling_constant_interval},
 which is an advantageous result of the multi-scale
 integration. However, our multi-scale analysis has no advantage to make
 the allowed magnitude of $U$ quantitatively explicit, as we need to
 go through a pile of calculations. Whether the
 necessary condition \eqref{eq_necessary_lower_bound_theta} holds in
 this regime is highly nontrivial and we cannot give an affirmative
 answer to the question at present.
\end{remark}

Since the upper bound on $|U|$ does not depend on $\beta$ if $\beta\ge
1$, we can consider the zero-temperature limit $\beta\to \infty$ of the
free energy density and the thermal expectations. It turns out that in
the weak coupling region where our construction is valid the
zero-temperature limit does not exhibit the characteristics of
superconductivity.

\begin{corollary}\label{cor_zero_temperature_limit}
There exists a positive constant $c_2$ depending only on
 $d,b,(\hbv_j)_{j=1}^d,\sa$, $(\sn_j)_{j=1}^d,\sc$ such that the following
 statements hold for any $U\in (-c_2,0)$, $\theta\in \R$.
\begin{enumerate}[(i)]
\item\label{item_gap_bound}
\begin{align*}
\D\le \frac{1}{\beta},\quad (\forall \beta\in\R_{\ge 1}\text{ with
 }\beta\theta/2\notin \pi (2\Z+1)).
\end{align*}
\item\label{item_ground_state_energy}
\begin{align*}
&\lim_{\beta\to \infty,\beta\in\R_{>0}\atop\text{with }\frac{\beta\theta}{2}\notin \pi (2\Z+1)}
\lim_{L\to \infty\atop L\in \N}\left(-\frac{1}{\beta L^d}\log(\Tr
 e^{-\beta(\sH+i\theta \sS_z)})\right)=
D_d\int_{\G_{\infty}^*}d\bk \Tr (E(\bk)-|E(\bk)|)\\
&=\lim_{\beta\to \infty\atop\beta\in\R_{>0}}
\lim_{L\to \infty\atop L\in \N}\left(-\frac{1}{\beta L^d}\log(\Tr
 e^{-\beta \sH_0})\right).
\end{align*}
\item\label{item_SSB_zero}
\begin{align*}
&\lim_{\g\searrow 0\atop \g\in (0,1]}
\lim_{\beta\to \infty,\beta\in\R_{>0}\atop\text{with }\frac{\beta\theta}{2}\notin \pi (2\Z+1)}
\lim_{L\to \infty\atop L\in\N}\frac{\Tr(e^{-\beta
(\sH+i\theta \sS_z+\sF)}\psi_{\hrho\hbx\ua}^*\psi_{\hrho\hbx\da}^*)}{\Tr
 e^{-\beta (\sH+i\theta \sS_z+\sF)}}\\
&=\lim_{\g\searrow 0\atop \g\in (0,1]}\lim_{\beta\to \infty,\beta\in\R_{>0}\atop\text{with }\frac{\beta\theta}{2}\notin \pi (2\Z+1)}\lim_{L\to \infty\atop L\in\N}\frac{\Tr(e^{-\beta
(\sH+i\theta \sS_z+\sF)}\psi_{\hrho\hbx\da}\psi_{\hrho\hbx\ua})}{\Tr
 e^{-\beta (\sH+i\theta \sS_z+\sF)}}\\
&=\lim_{\beta\to \infty,\beta\in\R_{>0}\atop\text{with
 }\frac{\beta\theta}{2}\notin \pi (2\Z+1)}\lim_{\g\searrow 0\atop \g\in
 (0,1]}\lim_{L\to \infty\atop L\in\N}\frac{\Tr(e^{-\beta
(\sH+i\theta \sS_z+\sF)}\psi_{\hrho\hbx\ua}^*\psi_{\hrho\hbx\da}^*)}{\Tr
 e^{-\beta (\sH+i\theta \sS_z+\sF)}}\\
&=\lim_{\beta\to \infty,\beta\in\R_{>0}\atop\text{with }\frac{\beta\theta}{2}\notin \pi (2\Z+1)}\lim_{\g\searrow 0\atop \g\in (0,1]}\lim_{L\to \infty\atop L\in\N}\frac{\Tr(e^{-\beta
(\sH+i\theta \sS_z+\sF)}\psi_{\hrho\hbx\da}\psi_{\hrho\hbx\ua})}{\Tr
 e^{-\beta (\sH+i\theta \sS_z+\sF)}}=0,\quad (\forall \hrho\in\cB,\
 \hbx\in \G_{\infty}).
\end{align*}
\item\label{item_ODLRO_zero}
\begin{align*}
&\lim_{\|\hbx-\hby\|_{\R^d}\to\infty}
 \limsup_{\beta\to \infty,\beta\in\R_{>0}\atop\text{with }\frac{\beta\theta}{2}\notin \pi (2\Z+1)}\limsup_{L\to \infty\atop L\in\N}\left|\frac{\Tr (e^{-\beta
(\sH+i\theta \sS_z)}\psi_{\hrho\hbx\ua}^*\psi_{\hrho\hbx\da}^*\psi_{\heta\hby\da}\psi_{\heta\hby\ua})}{\Tr
 e^{-\beta (\sH+i\theta \sS_z)}}\right|=0,\\
&(\forall\hrho,\heta
 \in\cB).
\end{align*}
\item\label{item_CPD_zero}
\begin{align*}
& \lim_{\beta\to \infty,\beta\in\R_{>0}\atop\text{with }\frac{\beta\theta}{2}\notin \pi (2\Z+1)}
\lim_{L\to \infty\atop L\in\N}
\frac{1}{L^{2d}}\sum_{(\hrho,\hbx),(\heta,\hby)\in\cB\times \G}
\frac{\Tr (e^{-\beta
(\sH+i\theta \sS_z)}\psi_{\hrho\hbx\ua}^*\psi_{\hrho\hbx\da}^*\psi_{\heta\hby\da}\psi_{\heta\hby\ua})}{\Tr
 e^{-\beta (\sH+i\theta \sS_z)}}=0.
\end{align*}
\end{enumerate}
\end{corollary}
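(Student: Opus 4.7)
The overall plan is to derive each part of the corollary from the corresponding statement of Theorem \ref{thm_main_theorem} combined with the key estimate $\D\le 1/\beta$ of part \eqref{item_gap_bound}, which drives every zero-temperature convergence. Part \eqref{item_CPD_zero} is then essentially immediate: Theorem \ref{thm_main_theorem}\eqref{item_CPD} gives the inner limit $\D^2/U^2$, and part \eqref{item_gap_bound} bounds this by $1/(\beta^2U^2)$, which vanishes as $\beta\to\infty$.

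I would prove \eqref{item_gap_bound} first. The left-hand side of \eqref{eq_gap_equation} is strictly monotone decreasing in $\D$ by Lemma \ref{lem_gap_equation_solvability}, so it suffices to bound $D_d\int_{\G_{\infty}^*}\Tr G_{\beta,\theta,1/\beta}(\bk)\,d\bk$ by a universal constant $K$ uniformly in $\beta\ge 1$ and admissible $\theta$; the claim then follows for $|U|\le 2/(D_dK)$, and \eqref{item_gap_bound} is settled by taking $c_2\le\min\{c_1,2/(D_dK)\}$. Using the identity $\cos y+\cosh x=2\cosh^2(x/2)-2\sin^2(y/2)$, a case split on whether $\cosh^2(\beta r/2)$ dominates $\sin^2(\beta\theta/4)$ yields a universal bound
$$
\frac{\sinh(\beta r)}{r(\cos(\beta\theta/2)+\cosh(\beta r))}\le\frac{C}{r}\qquad\text{for }\beta r\ge 1.
$$
Combined with $|e_j(\bk)|\ge e(\bk)$ from \eqref{eq_dispersion_upper_lower_bound} and the spectral decomposition \eqref{eq_spectral_decomposition}, this gives $\Tr G_{\beta,\theta,1/\beta}(\bk)\le Cb/e(\bk)$, whose integrability on $\G_{\infty}^*$ is supplied by \eqref{eq_dispersion_measure_divided}.

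For \eqref{item_ground_state_energy}, I start from \eqref{eq_free_energy_density}; the term $\D^2/|U|$ vanishes by \eqref{item_gap_bound}. Using \eqref{eq_spectral_decomposition} and the factorization
$$
2\cos(\beta\theta/2)e^{-\beta e_j}+e^{\beta(r_j-e_j)}+e^{-\beta(r_j+e_j)}=2e^{-\beta e_j}\bigl(\cos(\beta\theta/2)+\cosh(\beta r_j)\bigr),
$$
with $r_j:=\sqrt{e_j(\bk)^2+\D^2}$, together with the fact that $\{e(\bk)=0\}$ is $d\bk$-null by \eqref{eq_dispersion_measure}, the integrand $\beta^{-1}\Tr\log(\cdots)$ converges pointwise almost everywhere to $\sum_j d_j(|e_j|-e_j)=\Tr(|E|-E)$, since $r_j-|e_j|\le\D^2/(2|e_j|)\to 0$. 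A $\beta$-uniform bounded dominating function is produced from $r_j-e_j\in[0,2|e_j|]$ and $|e_j|\le\sc^2$ via \eqref{eq_dispersion_upper_lower_bound}, \eqref{eq_dispersion_upper_bound}, so dominated convergence gives the first equality. The second equality follows from the same method applied to $\Tr e^{-\beta\sH_0}=\prod_{\bk,j,\sigma}(1+e^{-\beta e_j(\bk)})$: pointwise, $-2\beta^{-1}\log(1+e^{-\beta e_j})\to 2\min(e_j,0)$, reproducing the same integrand.

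For \eqref{item_SSB_zero} and \eqref{item_ODLRO_zero} the analysis splits by limit order. In the orderings where $\g\searrow 0$ precedes $\beta\to\infty$ in \eqref{item_SSB_zero}, and when $|U|\ne(\frac{D_d}{2}\int\Tr G_{\beta,\theta,0}\,d\bk)^{-1}$ in \eqref{item_ODLRO_zero}, Theorem \ref{thm_main_theorem}\eqref{item_SSB},\eqref{item_ODLRO} supplies the explicit inner expressions $-\frac{\D D_d}{2}\int G_{\beta,\theta,\D}(\bk)(\hrho,\hrho)\,d\bk$ and $\D^2\prod_\rho(\cdots)$; using the gap-equation identity $\D D_d\int\Tr G_{\beta,\theta,\D}\,d\bk=2/|U|$ (valid for $\D>0$; the expressions trivially vanish otherwise) together with $G(\hrho,\hrho)\le\Tr G$, both moduli are bounded by $\D/|U|\le 1/(\beta|U|)$ and $\D^2/|U|^2\le 1/(\beta^2|U|^2)$ respectively, and so vanish. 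The main obstacle will be the reversed orderings in \eqref{item_SSB_zero} and the critical subcase of \eqref{item_ODLRO_zero}, where Theorem \ref{thm_main_theorem} does not furnish an explicit inner expression. My plan is to upgrade the $\g\searrow 0$ convergence of Theorem \ref{thm_main_theorem}\eqref{item_SSB} (and the $\|\hbx-\hby\|\to\infty$ convergence at critical coupling in \eqref{item_ODLRO}) to a uniform convergence on $\beta\in[1,\infty)$, exploiting both the $\beta$-uniformity of the multi-scale construction for $\beta\ge 1$ reflected in \eqref{eq_coupling_constant_interval} and monotonicity properties of appropriate $\g$-derivatives of $L^{-d}\log\Tr e^{-\beta(\sH+i\theta\sS_z+\sF)}$; the interchange of $\lim_\g$ and $\lim_\beta$, respectively of $\lim_{\|\hbx-\hby\|}$ and $\lim_\beta$, will then follow.
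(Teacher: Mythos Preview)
Your treatment of \eqref{item_gap_bound}, \eqref{item_CPD_zero}, and the ``easy'' orderings in \eqref{item_SSB_zero} and \eqref{item_ODLRO_zero} matches the paper's proof. For \eqref{item_ground_state_energy} your dominated-convergence argument is in the right spirit, though your stated dominating bound $r_j-e_j\in[0,2|e_j|]$ is false when $|e_j|$ is small (take $e_j=0$); the actual domination comes from the log term being bounded by a constant times $1+|\log e(\bk)|$, which is integrable by \eqref{eq_dispersion_measure_divided}. The paper avoids this subtlety by bounding the integral of $\beta^{-1}|\log(1+2\cos(\cdot)e^{-\beta\sqrt{\cdots}}+e^{-2\beta\sqrt{\cdots}})|$ directly by $c(1+\log\beta)/\beta$.

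The genuine gaps are in the ``reversed'' cases you flag as the main obstacle. For \eqref{item_SSB_zero} with $\beta\to\infty$ before $\g\searrow 0$, the statement of Theorem~\ref{thm_main_theorem} gives you nothing about $\lim_L$ at fixed $\g>0$; your plan to interchange limits via uniform convergence and monotonicity of $\g$-derivatives of the free energy is too vague to work over the non-compact range $\beta\in[1,\infty)$. The paper instead reaches back into the \emph{proof} of Theorem~\ref{thm_main_theorem} for the explicit finite-$\g$ formula \eqref{eq_2_point_correlation_with_field}, then shows by an implicit-function/sandwich argument on the perturbed gap equation \eqref{eq_perturbed_E_L_equation} that $a(\beta,\g)\to a_\infty(\g)$ as $\beta\to\infty$, with $a_\infty(\g)\to 0$ as $\g\searrow 0$.

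For \eqref{item_ODLRO_zero}, your case split on whether $|U|$ equals the critical value is the wrong frame: that condition depends on $\beta$, so as $\beta\to\infty$ you cannot stay in one branch, and in any event interchanging $\lim_{\|\hbx-\hby\|}$ with $\limsup_\beta$ is not what the corollary asks. The paper instead combines \eqref{eq_4_point_function_characterization} and \eqref{eq_4_point_function_estimate_boundary} into the single inequality \eqref{eq_basic_inequality_for_ODLRO_zero}, valid in both regimes, and then proves $\beta$-uniform spatial decay of $\lim_L C(\D)(\cdot\hbx 0,\cdot\hby 0)$ via the multi-scale covariance decomposition (Lemma~\ref{lem_covariance_construction}\eqref{item_covariance_construction_ODLRO} for the scales $l>N_\beta$, plus a separate bound on $\cC_{N_\beta}$ exploiting the gap equation). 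This covariance decay is the real work in \eqref{item_ODLRO_zero} and is absent from your proposal.
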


\begin{remark} 
Though it does not show the sign of superconductivity, it is
 interesting that the zero-temperature, infinite-volume limit of the
 free energy density claimed in \eqref{item_ground_state_energy} is
 independent of both the coupling constant and the imaginary magnetic field. 
\end{remark}

\begin{remark} 
Among the assumptions listed in the beginning, the smoothness of
 $E(\cdot)$, $e(\cdot)^2$ is assumed only for simplicity. In fact we
 only need to differentiate $E(\cdot)$, $e(\cdot)^2$ finite times
 depending only on the dimension $d$. Thus the smoothness condition can
 be relaxed to be continuous differentiability of certain degree.
\end{remark}

\begin{remark} See Remark \ref{rem_dispersion_power} for the specific
 reason why we need to assume \eqref{eq_dispersion_power}. Also, Remark
 \ref{rem_strictly_larger_one} explains how we use the condition
 $\sa>1$.
\end{remark}

\subsection{Examples}\label{subsec_examples}

In order to see the applicability of Theorem \ref{thm_main_theorem} and
Corollary \ref{cor_zero_temperature_limit}, we should examine which
model satisfies the required conditions. We let `$c$' denote a
generic positive constant independent of any parameter not only in this
section but in the rest of the paper. Also, $I_n$ denotes the $n\times
n$ unit matrix throughout the paper.

\begin{example}[Nearest-neighbor hopping on the 3 or 4-dimensional
 (hyper-)cubic lattice with a critical chemical
 potential]\label{ex_degenerate_surface}

Let $d=3$ or 4 and $\{\bv_j\}_{j=1}^d$, $\{\hbv_j\}_{j=1}^d$ be the
 canonical basis of $\R^d$. In this case $\G=\{0,1,\cdots,L-1\}^d$,
 $\G^*=\{0,\frac{2\pi}{L},\cdots,2\pi-\frac{2\pi}{L}\}^d$,
 $\G_{\infty}^*=[0,2\pi]^d$. Let $b=1$ and set for $\bk=(k_1,k_2,\cdots,k_d)\in\R^d$
$$
E(\bk):=(-1)^{hop}2\sum_{j=1}^d\cos k_j -2d
$$
 with $hop\in\{0,1\}$. In
 this case $\sH_0$ describes free electrons hopping to nearest-neighbor
 sites under the chemical potential $2d$. The role of the fixed
 parameter $hop$ is to implement the negative and positive hopping at
 the same time. The applicability of the previous framework to this
 model was briefly studied in \cite[\mbox{Remark 1.9}]{K_BCS}.
Define the function $e:\R^d\to \R$ by 
$$
e(\bk):=4\sum_{j=1}^d\sin^2\left(\frac{k_j}{2}+1_{hop=1}\frac{\pi}{2}\right).
$$
We can check that $e(\bk)=|E(\bk)|$ for any $\bk\in\R^d$. It is clear
 that \eqref{eq_dispersion_upper_lower_bound},
 \eqref{eq_dispersion_upper_bound} hold with some $\sc(\in \R_{\ge 1})$
 and $e(\cdot)$ satisfies the required regularity and the
 periodicity. Moreover, the conditions \eqref{eq_dispersion_derivative},
 \eqref{eq_hopping_matrix_derivative} hold with $\sn_j=2$
 $(j=1,2,\cdots,d)$ and the
 conditions \eqref{eq_dispersion_measure},
 \eqref{eq_dispersion_measure_divided} hold with $\sa=d/2$ and
 some $\sc(\in \R_{\ge 1})$. By considering that $d=3,4$ we can confirm that the
 conditions \eqref{eq_dispersion_measure_divergence},
 \eqref{eq_dispersion_power} hold as well.
\end{example}

\begin{example}[Nearest-neighbor hopping on the honeycomb
 lattice]\label{ex_honeycomb}
Many-Fermion systems on the honeycomb lattice with nearest neighbor
 hopping are well studied in a branch of mathematical physics based on
 Grassmann integral formulations. See e.g. \cite{GM}. Let us confirm that
 the free electron model on the honeycomb lattice with zero chemical
 potential can be dealt in this
 framework as the free Hamiltonian. Take the basis $\bv_1=(1,0)^T$,
 $\bv_2=(\frac{1}{2},\frac{\sqrt{3}}{2})^T$ of $\R^2$. Then, $\hbv_1$,
 $\hbv_2$ are uniquely determined as
 follows. $\hbv_1=(1,-\frac{1}{\sqrt{3}})^T$,
 $\hbv_2=(0,\frac{2}{\sqrt{3}})^T$.
The honeycomb lattice with a spatial cut-off is identified with the
 product set $\{1,2\}\times \G$. The hopping matrix is given with
 momentum variables by
$$
E(\bk):=\left(\begin{array}{cc}
0 & 1 + e^{-i\<\bv_1,\bk\>} + e^{-i\<\bv_2,\bk\>} \\
1 + e^{i\<\bv_1,\bk\>} + e^{i\<\bv_2,\bk\>} & 0 \end{array}\right),\quad
 \bk\in\R^2.
$$
See Figure \ref{fig_honeycomb} for a portion of the honeycomb lattice
 linked by the nearest-neighbor hopping.

\begin{figure}
\begin{center}
\begin{picture}(245,130)(0,0)

\put(60,60){\circle*{5}}
\put(120,60){\circle*{5}}
\put(90,77.321){\circle*{5}}
\put(90,111.962){\circle*{5}}
\put(30,77.321){\circle*{5}}
\put(60,25.359){\circle*{5}}

\put(60,60){\line(1.732,1){30}}
\put(90,77.321){\line(0,1){34.641}}
\put(60,60){\line(-1.732,1){30}}
\put(60,60){\line(0,-1){34.641}}
\put(90,77.321){\line(1.732,-1){30}}

\put(60,25.359){\line(1.732,-1){15}}
\put(60,25.359){\line(-1.732,-1){15}}
\put(30,77.321){\line(-1.732,-1){15}}
\put(30,77.321){\line(0,1){17.321}}
\put(120,60){\line(1.732,1){15}}
\put(120,60){\line(0,-1){17.321}}
\put(90,111.962){\line(1.732,1){15}}
\put(90,111.962){\line(-1.732,1){15}}

\put(62,56){\tiny$(1,\bx)$}
\put(122,56){\tiny$(1,\bx+\bv_1)$}
\put(92,79.321){\tiny$(2,\bx)$}
\put(92,107.962){\tiny$(1,\bx+\bv_2)$}
\put(32,79.321){\tiny$(2,\bx-\bv_1)$}
\put(62,27.359){\tiny$(2,\bx-\bv_2)$}

\put(180,60){\vector(1,0){60}}
\put(180,60){\vector(1,1.732){30}}

\put(241,59){\small$\bv_1$}
\put(211,112.962){\small$\bv_2$}

\end{picture}
 \caption{A portion of the honeycomb lattice linked by the
 nearest-neighbor hopping.}\label{fig_honeycomb}
\end{center}
\end{figure}
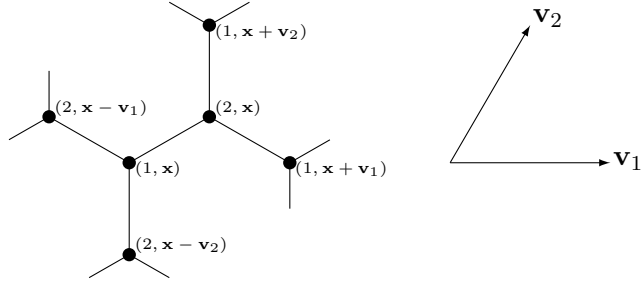

The eigenvalues of $E(\bk)$ are
 $(-1)^{\delta}|1+e^{i\<\bv_1,\bk\>}+e^{i\<\bv_2,\bk\>}|$, $(\delta
 \in\{0,1\})$. Let us set
 $e(\bk):=|1+e^{i\<\bv_1,\bk\>}+e^{i\<\bv_2,\bk\>}|$. 
The validity of the inequalities
 \eqref{eq_dispersion_upper_lower_bound},
 \eqref{eq_dispersion_upper_bound} and the regularity and the
 periodicity are clear. One can directly prove that
 \eqref{eq_dispersion_derivative}, \eqref{eq_hopping_matrix_derivative}
 hold with $\sn_1=\sn_2=1$. Observe that $e(\bk)=0$ and $\bk\in
 \G_{\infty}^*$ if and only if $\bk=\frac{2\pi}{3}\hbv_1+\frac{4\pi}{3}\hbv_2$ or
$\bk=\frac{4\pi}{3}\hbv_1+\frac{2\pi}{3}\hbv_2$. By making use of the expansions
\begin{align*}
&1+\cos x +\cos y\\
&=-\frac{\sqrt{3}}{2}\left(x-\frac{2}{3}\pi\right)+
 \frac{\sqrt{3}}{2}\left(y-\frac{4}{3}\pi\right)\\
&\quad+\sum_{n=2}^{\infty}\frac{1}{n!}\left(
\cos^{(n)}\left(\frac{2}{3}\pi\right)\left(x-\frac{2}{3}\pi\right)^n
+
\cos^{(n)}\left(\frac{4}{3}\pi\right)\left(y-\frac{4}{3}\pi\right)^n\right),\\ 
&\sin x +\sin y\\
& =-\frac{1}{2}\left(x-\frac{2}{3}\pi\right)-
 \frac{1}{2}\left(y-\frac{4}{3}\pi\right)\\
&\quad +\sum_{n=2}^{\infty}\frac{1}{n!}\left(
\sin^{(n)}\left(\frac{2}{3}\pi\right)\left(x-\frac{2}{3}\pi\right)^n
+
\sin^{(n)}\left(\frac{4}{3}\pi\right)\left(y-\frac{4}{3}\pi\right)^n\right),\quad
 (x,y\in \R),
\end{align*}
we can prove that there exist constants $\delta_1$, $\delta_2$,
 $\delta_3\in\R_{>0}$ such that for any $\hat{k}_1,\hat{k}_2\in [0,2\pi]$,
\begin{align}
&e(\hat{k}_1\hbv_1+\hat{k}_2\hbv_2)\label{eq_honeycomb_upper_property}\\
&\le \delta_1 \min\left\{
\left(\left(\hat{k}_1-\frac{2}{3}\pi\right)^2+ \left(\hat{k}_2-\frac{4}{3}\pi\right)^2
\right)^{1/2},
\left(\left(\hat{k}_1-\frac{4}{3}\pi\right)^2+ \left(\hat{k}_2-\frac{2}{3}\pi\right)^2
\right)^{1/2}\right\}\notag
\end{align}
and if $e(\hat{k}_1\hbv_1+\hat{k}_2\hbv_2)\le \delta_2$,
\begin{align}
&e(\hat{k}_1\hbv_1+\hat{k}_2\hbv_2)\label{eq_honeycomb_property}\\
&\ge \delta_3 \min\left\{
\left(\left(\hat{k}_1-\frac{2}{3}\pi\right)^2+ \left(\hat{k}_2-\frac{4}{3}\pi\right)^2
\right)^{1/2},
\left(\left(\hat{k}_1-\frac{4}{3}\pi\right)^2+ \left(\hat{k}_2-\frac{2}{3}\pi\right)^2
\right)^{1/2}\right\}.\notag
\end{align}
We can apply these properties to prove that the conditions
 \eqref{eq_dispersion_measure}, \eqref{eq_dispersion_measure_divided},
 \eqref{eq_dispersion_measure_divergence}, \eqref{eq_dispersion_power}
 hold with $\sa=2$, $d=2$, $\sn_1=\sn_2=1$.
\end{example}

\begin{example}[Hopping on the square littice with additional
 sites]\label{ex_6_band}
 To demonstrate the applicability of the multi-band formulation, let us
 consider a model on the square lattice with additional lattice
 points. The basis $\bv_1,\bv_2$ are equal to the canonical basis
 $\be_1,\be_2$ of $\R^2$. The lattice of our interest is identified with
 $\{1,2,3,4,5,6\}\times \G$. So we are going to construct a 6-band
 model. We define the hopping matrix with momentum variables by
\begin{align*}
&E(\bk):=\left(\begin{array}{cc}0 & E_0(\bk) \\ E_0(\bk)^* & 0
	       \end{array}\right),\quad 
E_0(\bk):=\left(\begin{array}{ccc} 2 & 1+e^{ik_2} & 1+e^{-ik_1} \\
                                   0 & 1+e^{ik_1} & 1+e^{-ik_2} \\
                                   1 & 0 & 1+e^{-ik_1} 
	       \end{array}\right),\quad\bk\in\R^2.
\end{align*}
A portion of the lattice $\{1,2,3,4,5,6\}\times \G$ linked by the
 hopping is pictured in Figure \ref{fig_6_band}.

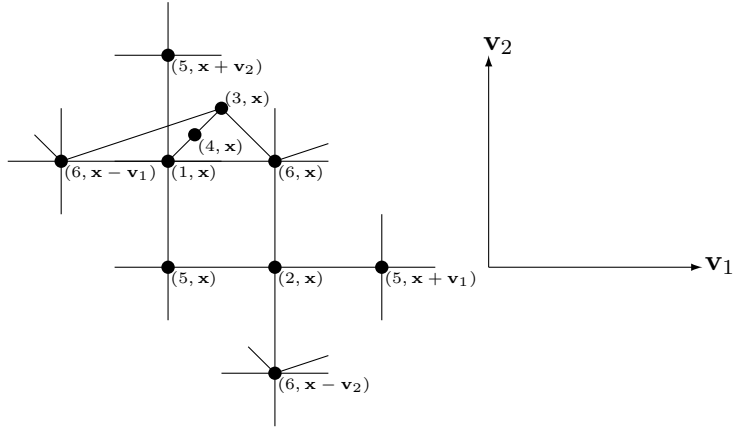
\begin{figure}
\begin{center}
\begin{picture}(270,160)(0,0)

\put(60,60){\circle*{5}}
\put(60,100){\circle*{5}}
\put(60,140){\circle*{5}}
\put(100,60){\circle*{5}}
\put(140,60){\circle*{5}}
\put(100,100){\circle*{5}}
\put(20,100){\circle*{5}}
\put(100,20){\circle*{5}}
\put(70,110){\circle*{5}}
\put(80,120){\circle*{5}}

\put(61,54){\tiny$(5,\bx)$}
\put(61,94){\tiny$(1,\bx)$}
\put(61,134){\tiny$(5,\bx+\bv_2)$}
\put(101,54){\tiny$(2,\bx)$}
\put(141,54){\tiny$(5,\bx+\bv_1)$}
\put(21,94){\tiny$(6,\bx-\bv_1)$}
\put(101,94){\tiny$(6,\bx)$}
\put(101,14){\tiny$(6,\bx-\bv_2)$}
\put(71,104){\tiny$(4,\bx)$}
\put(81,122){\tiny$(3,\bx)$}

\put(180,60){\vector(1,0){80}}
\put(180,60){\vector(0,1){80}}
\put(261,59){\small$\bv_1$}
\put(178,142){\small$\bv_2$}

\put(80,20){\line(1,0){40}}
\put(40,60){\line(1,0){120}}
\put(0,100){\line(1,0){120}}
\put(40,100){\line(1,0){40}}
\put(40,140){\line(1,0){40}}

\put(20,80){\line(0,1){40}}
\put(60,40){\line(0,1){120}}
\put(100,0){\line(0,1){120}}
\put(140,40){\line(0,1){40}}

\put(60,100){\line(1,1){20}}
\put(100,100){\line(-1,1){20}}
\put(20,100){\line(-1,1){10}}
\put(100,20){\line(-1,1){10}}

\put(20,100){\line(3,1){60}}
\put(100,100){\line(3,1){20}}
\put(100,20){\line(3,1){20}}

\end{picture}
 \caption{A portion of the lattice $\{1,2,3,4,5,6\}\times\G$ linked by the
 hopping.}\label{fig_6_band}
\end{center}
\end{figure}

To estimate the modulus of the eigenvalues of $E(\bk)$, it is efficient to estimate the
 eigenvalues of $E(\bk)^2$. Note that
\begin{align*}
&E(\bk)^2=\left(\begin{array}{cc} E_0(\bk)E_0(\bk)^* & 0 \\
                                0 & E_0(\bk)^*E_0(\bk) 
\end{array}\right),\\
&\det(xI_3-E_0(\bk)E_0(\bk)^*)=\det (xI_3- E_0(\bk)^*E_0(\bk))\\
&=x^3-(5+3|1+e^{ik_1}|^2+ 2|1+e^{ik_2}|^2)x^2\\
&\quad +\left(6\sum_{j=1}^2|1+e^{ik_j}|^2+2 |1+e^{ik_1}|^4 + |1+e^{ik_2}|^4
-|1+e^{ik_1}|^2|1+e^{ik_2}|^2\right)x\\
&\quad -\left(\sum_{j=1}^2|1+e^{ik_j}|^2\right)^2.
\end{align*}
Moreover, if $\sum_{j=1}^2|1+e^{ik_j}|^2\neq 0$, 
\begin{align*}
&\inf_{\bv\in \C^6\atop\text{with
 }\|\bv\|_{\C^6}=1}\|E(\bk)^2\bv\|_{\C^6}
\ge \frac{\det (E_0(\bk)E_0(\bk)^*)}{\sum_{l=1}^3\det
 (E_0(\bk)E_0(\bk)^*(i,j))_{1\le i,j\le 3\atop i,j\neq l}}\\
&=\frac{\left(\sum_{j=1}^2
 |1+e^{ik_j}|^2\right)^2}{6\sum_{j=1}^2|1+e^{ik_j}|^2+ 2|1+e^{ik_1}|^4
+ |1+e^{ik_2}|^4- |1+e^{ik_1}|^2|1+e^{ik_2}|^2}\\
&\ge \frac{\sum_{j=1}^2
 |1+e^{ik_j}|^2}{6+ 2\sum_{j=1}^2
 |1+e^{ik_j}|^2}\ge \frac{1}{22}\sum_{j=1}^2|1+e^{ik_j}|^2,\\
&\inf_{\bv\in \C^6\atop\text{with
 }\|\bv\|_{\C^6}=1}\|E(\bk)^2\bv\|_{\C^6}\le
\frac{\det (E_0(\bk)E_0(\bk)^*)}{\frac{1}{3}\sum_{l=1}^3\det
 (E_0(\bk)E_0(\bk)^*(i,j))_{1\le i,j\le 3\atop i,j\neq l}}\le\frac{1}{2}\sum_{j=1}^2
 |1+e^{ik_j}|^2.
\end{align*}
Therefore, if we define the function $e:\R^2\to \R$ by
$$
e(\bk):=\left(\frac{1}{22}\sum_{j=1}^2|1+e^{ik_j}|^2
\right)^{\frac{1}{2}},
$$
the condition \eqref{eq_dispersion_upper_lower_bound} holds with some
 positive constant $\sc$. It is apparent that $e(\cdot)$ satisfies
 \eqref{eq_dispersion_upper_bound} with some $\sc$ and the required regularity and
 periodicity. There is no difficulty to confirm that $e(\cdot)^2$,
 $E(\cdot)$ satisfy \eqref{eq_dispersion_derivative},
 \eqref{eq_hopping_matrix_derivative} with $\sn_1=\sn_2=1$. By using the
 inequalities 
\begin{align}
\frac{2}{\pi}\left(\sum_{j=1}^2|k_j-\pi|^2\right)^{\frac{1}{2}}\le 
\left(\sum_{j=1}^2|1+e^{i k_j}|^2\right)^{\frac{1}{2}}
\le \left(\sum_{j=1}^2|k_j-\pi|^2\right)^{\frac{1}{2}},\quad(\forall
 \bk\in\G^*_{\infty}),
\label{eq_6_band_useful_inequality}
\end{align}
we can check that \eqref{eq_dispersion_measure},
 \eqref{eq_dispersion_measure_divided},
 \eqref{eq_dispersion_measure_divergence}, 
\eqref{eq_dispersion_power} hold with $\sa=2$, $d=2$, $\sn_1=\sn_2=1$.
\end{example}

\begin{example}[3-dimensional model with nonuniform exponents]
\label{ex_different_exponent}
Let us give a 3-dimensional model where the exponents
 $\sn_1,\sn_2,\sn_3$ are not uniform. As in Example
 \ref{ex_degenerate_surface}, we let $\{\bv_j\}_{j=1}^3$,
 $\{\hbv_j\}_{j=1}^3$ be the canonical basis of $\R^3$. Set
\begin{align*}
E(\bk)=e(\bk):=\sum_{j=1}^2\cos k_j+2+(\cos k_3+1)^2.
\end{align*}
This is the dispersion relation of a one-band free electron model on the
 cubic lattice. The required regularity, periodicity,
 \eqref{eq_dispersion_upper_lower_bound} and \eqref{eq_dispersion_upper_bound} are
 clearly satisfied by $E(\cdot)$, $e(\cdot)$. By making use of the form
\begin{align}
e(\bk)=2\sum_{j=1}^2\sin^2\left(\frac{k_j-\pi}{2}\right)
+4\sin^4\left(\frac{k_3-\pi}{2}\right),
\label{eq_non_uniform_model_reform}
\end{align}
one can check that \eqref{eq_dispersion_derivative},
 \eqref{eq_hopping_matrix_derivative} hold with 
$\sn_1=\sn_2=2$, $\sn_3=4$. Moreover, for $R\in \R_{>0}$,
\begin{align*}
&\int_{\G_{\infty}^*}d\bk 1_{e(\bk)\le R}\le
 c\int_{[0,1]^3}d\bk1_{\sum_{j=1}^2k_j^2+k_3^4\le R}\le c
 \int_0^{R^{1/2}}dr r\int_0^1dk_3 1_{r^2+k_3^4\le R}\\
&\qquad\qquad\qquad\ \le c  \int_0^{R^{1/2}}dr r(R-r^2)^{\frac{1}{4}}\le c R^{\frac{5}{4}},\\
&\int_{\G_{\infty}^*}d\bk \frac{1_{e(\bk)\le R}}{e(\bk)}\le c\int_{[0,1]^3}d\bk
\frac{1_{\sum_{j=1}^2k_j^2+k_3^4\le R}}{\sum_{j=1}^2k_j^2+k_3^4}
\le c\int_0^{R^{1/4}}dk_3 \int_0^{\infty}dr \frac{r 1_{r^2+k_3^4\le
 R}}{r^2+k_3^4}\\
&\qquad\qquad\qquad\ \le c \int_0^{R^{1/4}}dk_3\log\left(\frac{R}{k_3^4}\right)\le
 cR^{\frac{1}{4}}.
\end{align*}
These calculations lead to the conclusion that
 \eqref{eq_dispersion_measure}, \eqref{eq_dispersion_measure_divided}
 hold with $\sa=\frac{5}{4}$. One can similarly confirm that
 \eqref{eq_dispersion_measure_divergence} holds. Since
$$
2\sa-1-\sum_{j=1}^3\frac{1}{\sn_j}=\frac{1}{4},
$$
the condition \eqref{eq_dispersion_power} holds as well.
\end{example}

\begin{example}[5-dimensional model whose Fermi surface does not
 degenerate into finite points]\label{ex_5_dimensional}
In the above examples the zero set of $e(\cdot)$ consists of finite
 points. Here let us give an example where the zero set of $e(\cdot)$
 does not degenerate into finite points. Let $d=5$ and let
 $\{\bv_j\}_{j=1}^5$, $\{\hbv_j\}_{j=1}^5$ be the canonical basis of $\R^5$. Define
 $E(\cdot):\R^5\to \R$ by
\begin{align*}
E(\bk):=(\cos k_1 + \cos k_2)^2+\sum_{j=3}^5\cos k_j+3
\end{align*}
and set $e(\bk):=E(\bk)$. It is possible to make an interpretation of
 this model in terms of hopping and chemical potential. We can see from
 the equality 
\begin{align}
e(\bk)=4\cos^2\left(\frac{k_1+k_2}{2}\right)\cos^2\left(\frac{k_1-k_2}{2}\right)+2\sum_{j=3}^5\sin^2\left(\frac{k_j-\pi}{2}\right)
\label{eq_factorized_dispersion}
\end{align}
that 
\begin{align*}
&\{\bk\in\G^*_{\infty}\ |\ e(\bk)=0\}\\
&=\left\{(k_1,k_2,\pi,\pi,\pi)\ \Big|\ \begin{array}{l} k_1,k_2\in
   [0,2\pi)\text{ satisfying
 }\\
 k_1+k_2\in \{\pi,3\pi\}\text{ or }k_1-k_2\in \{-\pi,\pi\}
\end{array}
\right\}.
\end{align*}
It is clear that $E(\cdot)$, $e(\cdot)$ satisfy
 \eqref{eq_dispersion_upper_lower_bound},
 \eqref{eq_dispersion_upper_bound} and the required regularity and
 periodicity. By using the equality
\begin{align*}
e(\bk)=(\cos k_1+\cos
 k_2)^2+2\sum_{j=3}^5\sin^2\left(\frac{k_j-\pi}{2}\right)
\end{align*}
we can check that \eqref{eq_dispersion_derivative},
 \eqref{eq_hopping_matrix_derivative} hold with $\sn_j=2$
 $(j=1,2,3,4,5)$. By using
 \eqref{eq_factorized_dispersion} and the inequality $\theta^2\ge
 \sin^2\theta \ge \frac{2^2}{\pi^2}\theta^2$ $(\theta\in
 [0,\frac{\pi}{2}])$ and changing the variables we have that for $R\in (0,1]$,
\begin{align*}
&\int_{\G_{\infty}^*}d\bk 1_{e(\bk)\le R}\le c \int_{[0,\frac{\pi}{2}]^5}d\bk
1_{\frac{2^6}{\pi^4}k_1^2k_2^2+\frac{2^3}{\pi^2}\sum_{j=3}^5k_j^2\le R}
\le c\int_{[0,1]^5}d\bk 1_{k_1^2k_2^2+\sum_{j=3}^5k_j^2\le R}\\
&\qquad\qquad\qquad\ \le c R^2 \int_{[0,R^{-1/4}]^2}dk_1dk_2\int_{[0,R^{-1/2}]^3}dk_3dk_4dk_5
1_{k_1^2k_2^2+\sum_{j=3}^5k_j^2\le 1}\\
&\qquad\qquad\qquad\ \le c R^2\int_0^1dr r^2  \int_{[0,R^{-1/4}]^2}dk_1dk_2
 1_{k_1k_2\le\sqrt{1-r^2}}\\
&\qquad\qquad\qquad\ \le c R^2 \int_0^1dr r^2  \left(\int_0^{R^{1/4}\sqrt{1-r^2}}dk_1
 R^{-\frac{1}{4}}+ \int_{R^{1/4}\sqrt{1-r^2}}^{R^{-1/4}}dk_1 \frac{\sqrt{1-r^2}}{k_1}\right)\\
&\qquad\qquad\qquad\ \le c R^2 \log(R^{-1}+1),\\
&\int_{\G_{\infty}^*}d\bk \frac{1_{e(\bk)\le R}}{e(\bk)}\le
 c\int_{[0,1]^5}d\bk \frac{1_{k_1^2k_2^2+\sum_{j=3}^5k_j^2\le
 R}}{k_1^2k_2^2+\sum_{j=3}^5k_j^2}\\
&\qquad\qquad\qquad\ \le c R \int_{[0,R^{-1/4}]^2}dk_1dk_2 \int_{[0,R^{-1/2}]^3}dk_3dk_4dk_5
\frac{1_{k_1^2k_2^2+\sum_{j=3}^5k_j^2\le
 1}}{k_1^2k_2^2+\sum_{j=3}^5k_j^2}\\
&\qquad\qquad\qquad\ \le c R \int_0^1dr \int_{[0,R^{-1/4}]^2}dk_1dk_21_{k_1k_2\le
 \sqrt{1-r^2}}\\
&\qquad\qquad\qquad\ \le c R\log(R^{-1}+1),\\
&\int_{\G_{\infty}^*}d\bk \frac{1}{e(\bk)}\le c\int_{[0,1]^5}d\bk
 \frac{1}{k_1^2k_2^2+\sum_{j=3}^5k_j^2}\le c.
\end{align*}
Since $\log(R^{-1}+1)\le c R^{-1/5}$ $(\forall R\in (0,1])$, 
the inequalities \eqref{eq_dispersion_measure},
 \eqref{eq_dispersion_measure_divided} hold with $\sa=\frac{9}{5}$. 
Let us check that the inequality
 \eqref{eq_dispersion_power} holds with
 $\sa=\frac{9}{5}$, $d=5$, $\sn_j=2$ ($j=1,2,3,4,5$). 
Moreover, 
\begin{align*}
\int_{\G_{\infty}^*}d\bk\frac{1}{e(\bk)^2+\eps}
&\ge c \int_{[0,1]^5}d\bk
 \left(\left(k_1^2k_2^2+\sum_{j=3}^5k_j^2\right)^2+\eps\right)^{-1}\\
&\ge
 c\int_0^1dr \int_{[0,1]^2}dk_1dk_2\frac{r^2}{(k_1^2k_2^2+r^2)^2+\eps}\\
&\ge c\int_0^1dr \int_{[0,r^{-1/2}]^2}dk_1dk_2
 \frac{r^3}{(k_1^2k_2^2+1)^2r^4+\eps}\to \infty,
\end{align*}
as $\eps\searrow 0$. Thus the condition
 \eqref{eq_dispersion_measure_divergence} holds as well. 
\end{example}

In summary, Theorem \ref{thm_main_theorem} and Corollary
\ref{cor_zero_temperature_limit} hold for the Hamiltonian
$\sH$ whose free part $\sH_0$ is defined with the hopping matrix $E(\cdot)$
given in Example
\ref{ex_degenerate_surface} -  Example \ref{ex_5_dimensional}. 

\begin{remark}\label{rem_nondegenerate_surface}
Let us see that the free dispersion relation of nearest-neighbor hopping
 electrons on the (hyper-)cubic lattice with non-degenerate Fermi
 surface does not satisfy the condition
 \eqref{eq_dispersion_measure_divided}, which will be essentially used
 to prove that $|U|$ can be taken independently of the temperature and
 the imaginary magnetic field in low temperature. Most of the necessary 
notations are defined in the same way as
 in Example \ref{ex_degenerate_surface}, apart from that now $d\in \N$
 and 
$$
E(\bk):=(-1)^{hop}2\sum_{j=1}^d\cos k_j -\mu
$$
with the chemical potential $\mu\in (-2d,2d)$. This free model was
 treated in our previous work \cite{K_BCS}. For any $R,\eps\in \R_{>0}$
 and a continuous function $e:\R^d\to \R_{\ge 0}$ satisfying
 \eqref{eq_dispersion_upper_lower_bound} we can derive by the coarea
 formula that 
\begin{align*}
\int_{\G_{\infty}^*}d\bk\frac{1_{e(\bk)\le R}}{e(\bk)+\eps}&\ge 
\int_{\G_{\infty}^*}d\bk\frac{1_{|E(\bk)|\le R}}{|E(\bk)|+\eps}\ge
 \frac{1}{2\sqrt{d}}\int_{\G_{\infty}^*}d\bk\frac{1_{|E(\bk)|\le
 R}|\nabla E(\bk)|}{|E(\bk)|+\eps}\\
&=\frac{1}{2\sqrt{d}}\int_{-R}^{R}d\eta \frac{\cH^{d-1}(\{\bk\in\G_{\infty}^*\ |\
 E(\bk)=\eta\})}{|\eta|+\eps},
\end{align*}
where $\cH^{d-1}$ is the $d-1$ dimensional Hausdorff measure. Set
 $K:=\frac{1}{2}(2d-|\mu|)$, $R':=\min\{K,R\}$. Then we can apply 
\cite[\mbox{Lemma 4.17}]{K_BCS} to derive that
\begin{align*}
\int_{\G_{\infty}^*}d\bk\frac{1_{e(\bk)\le R}}{e(\bk)+\eps}&\ge
 \frac{1}{2\sqrt{d}}\inf_{\eta\in [-K,K]}
\cH^{d-1}(\{\bk\in\G_{\infty}^*\ |\ E(\bk)=\eta\})
\int_{-R'}^{R'}dx
 \frac{1}{|x|+\eps}\\
&\ge
 \frac{1}{\sqrt{d}}\left(1_{d=1}+1_{d\ge
 2}\left(\frac{2d-|\mu|}{10(d-1)d}\right)^{d-1}\right)\log\left(\frac{R'+\eps}{\eps}\right).
\end{align*}
Since the right-hand side of the above inequality diverges to $\infty$
 as $\eps\searrow 0$, the condition
 \eqref{eq_dispersion_measure_divided} cannot be satisfied by this
 model.
\end{remark}

\section{Phase transitions}\label{sec_phase_transition}

In this section we analyze properties of the free energy density. We
focus on the right-hand side of \eqref{eq_free_energy_density} as a
function of $(\beta,\theta)\in \R_{>0}\times \R$ by fixing $U(\in
\R_{<0})$ with small magnitude. Mathematical arguments in this
section are essentially independent of the following sections, which aim
at proving Theorem \ref{thm_main_theorem} and Corollary
\ref{cor_zero_temperature_limit}. Our aim here is to describe the nature
of the phase transition happening in the system. 
The readers who want to prove Theorem \ref{thm_main_theorem} and
Corollary \ref{cor_zero_temperature_limit} first can skip to Section
\ref{sec_formulation} and come back to this section afterward. If we
think of the right-hand side of \eqref{eq_free_energy_density} alone, we
are free to substitute any large coupling constant and a hopping matrix
with different properties. However, we restrict our attention not
to deviate from the configuration where the derivation of
\eqref{eq_free_energy_density} is justified. We simply assume that $E:\R^d\to
\Mat(b,\C)$, $e:\R^d\to \R$ satisfy the same conditions as listed in
Subsection \ref{subsec_model_theorem} and $|U|$ is small as described subsequently.
We need to impose a couple more conditions on the function
$e(\cdot)$. Assume that there exist $\sr,\sfs\in\R_{>0}$, $\sc\in\R_{\ge 1}$
such that $0<\sr\le 1$, $1+2\sr\le \sfs$ and 
\begin{align}
&\int_{\G_{\infty}^*}d\bk\frac{1}{e(\bk)^2+A^2}\le \sc
 A^{-\sr},\label{eq_additional_upper_order}\\
&\int_{\G_{\infty}^*}d\bk\frac{1_{e(\bk)\le B}}{(e(\bk)^2+A^2)^2}\ge
 \sc^{-1}A^{-\sfs},\label{eq_additional_lower_order}\\
&(\forall A,B\in (0,1]\text{ with }0<A\le B\le 1).\notag
\end{align}
The conditions \eqref{eq_additional_upper_order},
\eqref{eq_additional_lower_order} are used only in this section and not
required to prove Theorem \ref{thm_main_theorem} and Corollary
\ref{cor_zero_temperature_limit}. Moreover, we assume that $U(\in
\R_{<0})$ satisfies 
\begin{align}
|U|< \frac{2}{bD_d \int_{\G_{\infty}^*}d\bk
 \frac{1}{e(\bk)}}.\label{eq_initial_smallness_coupling}
\end{align}
We will replace the upper bound on $|U|$ by a smaller constant in the
following. 

\begin{remark} In fact we do not use the conditions
 \eqref{eq_spatial_reflection_symmetry},
 \eqref{eq_dispersion_derivative}, \eqref{eq_hopping_matrix_derivative},
 \eqref{eq_dispersion_measure}, \eqref{eq_dispersion_power} in this
 section. Also, the regularity assumption of $E(\cdot)$, $e(\cdot)^2$
 can be relaxed.
\end{remark}

\subsection{Study of the models}\label{subsec_study_model}

In order to see that the conditions \eqref{eq_additional_upper_order},
\eqref{eq_additional_lower_order} are reasonable, let us check that 
the examples given in Subsection \ref{subsec_examples} satisfy these
additional conditions. 

For $x\in\R$ we let $\lfloor x\rfloor$ denote the largest integer which
does not exceed $x$. This notation will be used in the rest of the
paper. 

In the model given in Example \ref{ex_degenerate_surface} with $d=3$ the
conditions \eqref{eq_additional_upper_order},
\eqref{eq_additional_lower_order} hold with $\sr=\frac{1}{2}$,
$\sfs=\frac{5}{2}$ respectively. Since $1+2\sr\le \sfs$, the required
conditions are fulfilled in this case. In the case $d=4$ the condition
\eqref{eq_additional_lower_order} holds with $\sfs=2$. Note that 
\begin{align}
\int_{\G_{\infty}^*}d\bk\frac{1}{e(\bk)^2+A^2}\le c
\log(A^{-1}+1).\label{eq_additional_upper_log}
\end{align}
The condition \eqref{eq_additional_upper_order} holds with
e.g. $\sr=\frac{1}{2}$ and thus $1+2\sr\le \sfs$ in this case as well.  

By using \eqref{eq_honeycomb_upper_property} we can check that
$e(\cdot)$ introduced in Example \ref{ex_honeycomb} satisfies
\eqref{eq_additional_lower_order} with $\sfs=2$. It follows from
\eqref{eq_honeycomb_property} that the function $e(\cdot)$ satisfies
\eqref{eq_additional_upper_log} and thus
\eqref{eq_additional_upper_order} with $\sr=\frac{1}{2}$. Therefore, the
additional conditions are met in this example.

By using \eqref{eq_6_band_useful_inequality} we can confirm without
difficulty that $e(\cdot)$ introduced in Example \ref{ex_6_band}
satisfies \eqref{eq_additional_upper_order} with $\sr=\frac{1}{2}$ and
\eqref{eq_additional_lower_order} with $\sfs=2$ as well.

Let us study with the dispersion relation $e(\cdot)$ defined in Example
\ref{ex_different_exponent}. By using
\eqref{eq_non_uniform_model_reform} and changing variables
we have for $A,B\in (0,1]$ with $0<A\le B\le 1$ that 
\begin{align*}
\int_{\G_{\infty}^*}d\bk\frac{1}{e(\bk)^2+A^2}&\le c\int_{[0,1]^3}d\bk \left(
\left(\sum_{j=1}^2k_j^2+k_3^4\right)^2+A^2\right)^{-1}\\
&\le c
 A^{-\frac{3}{4}}\int_{\R^3}d\bk\left(
\left(\sum_{j=1}^2k_j^2+k_3^4\right)^2+1\right)^{-1}\le c A^{-\frac{3}{4}},\\
\int_{\G_{\infty}^*}d\bk\frac{1_{e(\bk)\le B}}{(e(\bk)^2+A^2)^2}&\ge
 c\int_{[0,1]^3}d\bk 1_{\sum_{j=1}^2k_j^2+k_3^4\le B}\left(
\left(\sum_{j=1}^2k_j^2+k_3^4\right)^2+A^2\right)^{-2}\\
&\ge c A^{-\frac{11}{4}}\int_{[0,1]^3}d\bk
1_{\sum_{j=1}^2k_j^2+k_3^4\le 1}
\left(
\left(\sum_{j=1}^2k_j^2+k_3^4\right)^2+1\right)^{-2}\\
&\ge  c A^{-\frac{11}{4}}.
\end{align*}
Thus the conditions \eqref{eq_additional_upper_order},
\eqref{eq_additional_lower_order} hold with $\sr=\frac{3}{4}$,
$\sfs=\frac{11}{4}$ respectively. Check that the condition $1+2\sr\le \sfs$
holds as well. 

Finally let us consider the function $e(\cdot)$ introduced in Example
\ref{ex_5_dimensional}. By \eqref{eq_factorized_dispersion} and change
of variables, 
\begin{align*}
&\int_{\G_{\infty}^*}d\bk\frac{1}{e(\bk)^2+A^2}\le c\int_{[0,1]^5}d\bk
\left(
\left(k_1^2k_2^2+\sum_{j=3}^5k_j^2\right)^2+A^2\right)^{-1}\\
&\le c \int_{[0,A^{-1/4}]^2}dk_1dk_2
 \int_{[0,A^{-1/2}]^3}dk_3dk_4dk_5\\
&\qquad \cdot \left(1_{k_1k_2\le 1}+\sum_{l=0}^{\lfloor \log (A^{-1/2})\rfloor}
 1_{e^l<k_1k_2\le e^{l+1}}\right) \left(
\left(k_1^2k_2^2+\sum_{j=3}^5k_j^2\right)^2+1\right)^{-1}\\
&\le c \int_{[0,A^{-1/4}]^2}dk_1dk_2 1_{k_1k_2\le 1} \int_{\R_{\ge
 0}^3}dk_3dk_4dk_5 \left(\left(\sum_{j=3}^5k_j^2\right)^2+1\right)^{-1}\\
&\quad+c\sum_{l=0}^{\lfloor \log (A^{-1/2})\rfloor}
\int_{[0,A^{-1/4}]^2}dk_1dk_2 1_{e^l<k_1k_2\le e^{l+1}}
\int_{\R_{\ge
 0}^3}dk_3dk_4dk_5 \left(e^{2l}+\sum_{j=3}^5k_j^2\right)^{-2}\\
&\le c \int_{[0,A^{-1/4}]^2}dk_1dk_2 1_{k_1k_2\le 1} 
+c\sum_{l=0}^{\lfloor \log (A^{-1/2})\rfloor} e^{-l}
\int_{[0,A^{-1/4}]^2}dk_1dk_2 1_{e^l<k_1k_2\le e^{l+1}}\\
&\le c\log(A^{-1}+1)
+c \sum_{l=0}^{\lfloor \log (A^{-1/2})\rfloor}
 e^{-l}\int_{e^lA^{1/4}}^{A^{-1/4}}dk_1\int_{e^lk_1^{-1}}^{e^{l+1}k_1^{-1}}dk_2\\
&\le c(\log(A^{-1}+1))^2,\\
&\int_{\G_{\infty}^*}d\bk\frac{1_{e(\bk)\le B}}{(e(\bk)^2+A^2)^2}\\
&\ge c
\int_{\G_{\infty}^*}d\bk
 1_{\frac{1}{4}(k_1+k_2-\pi)^2(k_1-k_2-\pi)^2+\frac{1}{2}\sum_{j=3}^5(k_j-\pi)^2\le B}\\
&\qquad\cdot \left(\left((k_1+k_2-\pi)^2(k_1-k_2-\pi)^2+\sum_{j=3}^5(k_j-\pi)^2\right)^2+A^2\right)^{-2}\\
&\ge  c
\int_{[0,\pi]^5}d\bk
 1_{(k_1+k_2)^2(k_1-k_2)^2+\sum_{j=3}^5k_j^2\le B}\\
&\qquad\cdot \left(\left((k_1+k_2)^2(k_1-k_2)^2+\sum_{j=3}^5k_j^2\right)^2+A^2\right)^{-2}\\
&\ge  c A^{-2}
\int_{[0,1]^5}d\bk
 1_{(k_1+k_2)^2(k_1-k_2)^2+\sum_{j=3}^5k_j^2\le 1}\\
&\qquad\cdot \left(\left((k_1+k_2)^2(k_1-k_2)^2+\sum_{j=3}^5k_j^2\right)^2+1\right)^{-2}\\
&\ge c A^{-2}.
\end{align*}
Thus the inequalities \eqref{eq_additional_upper_order},
\eqref{eq_additional_lower_order} hold with e.g. $\sr=\frac{1}{2}$, $\sfs=2$.

 We have seen that in each example of Subsection
 \ref{subsec_examples} the function $e(\cdot)$ satisfies the required conditions of this subsection.

\subsection{Phase boundaries}\label{subsec_phase_boundaries}

Let us define the map $g:\R_{>0}\times \R\times \R\to \R\cup\{\infty\}$
by 
\begin{align*}
&g(x,y,z)\\
&:=\left\{\begin{array}{l}\displaystyle
	   -\frac{2}{|U|}+D_d\int_{\G_{\infty}^*}d\bk \Tr \left(\frac{\sinh(x\sqrt{E(\bk)^2+z^2})}{(\cos(xy/2)+\cosh(x\sqrt{E(\bk)^2+z^2}))\sqrt{E(\bk)^2+z^2}}
\right)\\
\quad\text{ if }\frac{xy}{2}\notin \pi (2\Z+1)\text{ or }z\neq 0,\\
\infty\text{ if }\frac{xy}{2}\in \pi (2\Z+1)\text{ and }z=0.
\end{array}
\right.
\end{align*}
We can check that the function $(x,y,z)\mapsto g(x,y,z)$ is
$C^{\infty}$-class in the open set
\begin{align*}
\left\{(x,y,z)\in \R_{>0}\times \R\times \R\ \big|\ 
\frac{xy}{2}\notin \pi (2\Z+1)\text{ or }z\neq 0\right\}
\end{align*}
of $\R^3$.

For $(\beta,\theta)\in \R_{>0}\times \R$ let $\D(\beta,\theta)$ be such
that $\D(\beta,\theta)\ge 0$ and $g(\beta,\theta,\D(\beta,\theta))=0$ if
$g(\beta,\theta,0)\ge 0$, $\D(\beta,\theta)=0$ if $g(\beta,\theta,0)<
0$. This rule defines the function $\D:\R_{>0}\times \R\to \R_{\ge
0}$. The well-definedness of the function $\D(\cdot,\cdot)$ is
guaranteed by Lemma \ref{lem_gap_equation_solvability}.

The goal of this subsection is to characterize the set $
\{(\beta,\theta)\in\R_{>0}\times\R\ |\ \D(\beta,\theta)>0\}$. We will
see that this set consists of countable disjoint subsets. Let us call
the boundaries of the disjoint subsets phase boundaries. Our goal here
is equivalent to characterizing the phase boundaries. 

Define the subsets $O_{+}$, $O_-$ of $\R^2$ by
\begin{align*}
&O_+:=\{(x,y)\in\R_{>0}\times\R\ |\ g(x,y,0)>0\},\\
&O_-:=\{(x,y)\in\R_{>0}\times\R\ |\ g(x,y,0)<0\}.
\end{align*}
We can see that $O_+$, $O_-$ are open subsets of $\R^2$. 

\begin{lemma}\label{lem_regularity_Delta}
\begin{align*}
\D\in C(\R_{>0}\times \R),\quad \D|_{O_+\cup O_-}\in C^{\infty}(O_+\cup
 O_-).
\end{align*}
\end{lemma}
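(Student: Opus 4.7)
The plan is to handle the two parts of the statement separately, using the implicit function theorem for smoothness on $O_+ \cup O_-$ and a strict-monotonicity-plus-subsequence argument for continuity across the phase boundary.

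Smoothness on $O_-$ is immediate: on this open set $\Delta \equiv 0$. For $O_+$, I would apply the implicit function theorem to the equation $g(\beta,\theta,z)=0$ at a point $(\beta_0,\theta_0,\Delta(\beta_0,\theta_0))$. Since $(\beta_0,\theta_0)\in O_+$ forces $\Delta(\beta_0,\theta_0)>0$, the triple lies in the open set $\{(x,y,z) : xy/2\notin\pi(2\Z+1)\text{ or }z\neq 0\}$ on which $g$ is already recorded to be $C^\infty$. The nondegeneracy condition $\partial_z g\neq 0$ I would verify by differentiating under the integral: using $\partial_z\sqrt{E(\bk)^2+z^2}=z/\sqrt{E(\bk)^2+z^2}$ together with the strict monotonicity of $w\mapsto \sinh(\beta w)/((\cos(\beta\theta/2)+\cosh(\beta w))w)$ recorded in (and proved via \cite[Lemma 4.19]{K_BCS}) Lemma \ref{lem_gap_equation_solvability}, the integrand of $\partial_z g$ is pointwise strictly negative for $z>0$, so $\partial_z g(\beta_0,\theta_0,\Delta(\beta_0,\theta_0))<0$. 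The implicit function theorem then yields a local $C^\infty$ solution branch, which by uniqueness of the root coincides with $\Delta$.

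For continuity on all of $\R_{>0}\times\R$ it suffices to consider points $(\beta_0,\theta_0)$ on the closed set $B:=\{g(\beta,\theta,0)=0\}$, since continuity on the open set $O_+\cup O_-$ follows from the paragraph above (note $g(\beta_0,\theta_0,0)=0$ is automatically finite, so $\beta_0\theta_0/2\notin \pi(2\Z+1)$). At such a point $\Delta(\beta_0,\theta_0)=0$, and I need to show $\Delta(\beta_n,\theta_n)\to 0$ for any $(\beta_n,\theta_n)\to (\beta_0,\theta_0)$. When $(\beta_n,\theta_n)\in O_-\cup B$ this is trivial. When $(\beta_n,\theta_n)\in O_+$, suppose $\Delta_n:=\Delta(\beta_n,\theta_n)\not\to 0$ and extract a subsequence with $\Delta_n\to\Delta^*\in(0,\infty]$. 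If $\Delta^*<\infty$ then by continuity of $g$ we get $g(\beta_0,\theta_0,\Delta^*)=0$, contradicting strict decrease of $g(\beta_0,\theta_0,\cdot)$ from its value $0$ at $z=0$. If $\Delta^*=\infty$, a dominated-convergence argument based on the bound
\begin{equation*}
\left|\frac{\sinh(\beta w)}{(\cos(\beta\theta/2)+\cosh(\beta w))w}\right|\le \frac{C}{w},\qquad w=\sqrt{E(\bk)^2+z^2}\ge z,
\end{equation*}
(valid uniformly in $(\beta,\theta)$ ranging in a compact neighborhood of $(\beta_0,\theta_0)$ once $w$ is bounded below, together with \eqref{eq_dispersion_upper_bound}) shows $g(\beta,\theta,z)\to -2/|U|<0$ uniformly on that neighborhood as $z\to\infty$, contradicting $g(\beta_n,\theta_n,\Delta_n)=0$ for large $n$.

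The routine parts are the implicit function theorem and the strict monotonicity argument; the only mild subtlety I expect is the $\Delta^*=\infty$ exclusion, where one must pay attention to uniformity over $(\beta,\theta)$ in a neighborhood of $(\beta_0,\theta_0)$, in particular that the constant in the pointwise bound above can be chosen independently of $(\beta,\theta,\bk)$ on such a neighborhood. Beyond that, no new ingredients are needed: the argument is a standard application of the implicit function theorem combined with the one-variable monotonicity structure of the gap equation that has already been used to prove Lemma \ref{lem_gap_equation_solvability}.
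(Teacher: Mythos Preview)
Your argument is correct and follows essentially the same strategy as the paper: implicit function theorem on $O_+$ (using $\partial_z g<0$ from the strict monotonicity recorded in Lemma~\ref{lem_gap_equation_solvability}) and a contradiction argument for continuity on the boundary $\{g(\beta,\theta,0)=0\}$.

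The only difference worth noting is in the continuity step. You extract a subsequential limit $\Delta^*$ and then split into the cases $\Delta^*<\infty$ and $\Delta^*=\infty$, the latter requiring a uniform decay estimate. The paper bypasses this split entirely: assuming $\Delta(\beta_\delta,\theta_\delta)\ge\eps$ along some sequence $(\beta_\delta,\theta_\delta)\to(\beta_0,\theta_0)$, it uses the monotone decrease of $z\mapsto g(\beta_\delta,\theta_\delta,z)$ directly to write
\[
0=g(\beta_\delta,\theta_\delta,\Delta(\beta_\delta,\theta_\delta))\le g(\beta_\delta,\theta_\delta,\eps),
\]
and then passes to the limit at the \emph{fixed} level $z=\eps$, obtaining $0\le g(\beta_0,\theta_0,\eps)<g(\beta_0,\theta_0,0)=0$. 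This avoids any need to control where $\Delta(\beta_\delta,\theta_\delta)$ actually goes, and in particular the $\Delta^*=\infty$ case never arises. Your version is fine, but the paper's shortcut is worth knowing.
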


\begin{proof}
It is trivial that $\D|_{O_-}\in C^{\infty}(O_-)$. We have observed that
the functions 
\begin{align*}
x\mapsto \frac{\sinh x}{(\eps+\cosh x)x}:(0,\infty)\to \R,\quad (\eps\in
 [-1,1])
\end{align*}
are strictly monotone decreasing in the proof of Lemma
 \ref{lem_gap_equation_solvability}. It follows that 
\begin{align}
\frac{\partial g}{\partial z}(x,y,z)<0,\quad (\forall
 (x,y,z)\in\R_{>0}\times\R\times
 \R_{>0}).\label{eq_positivity_3rd_derivative}
\end{align}
Thus 
\begin{align*}
\frac{\partial g}{\partial z}(\beta,\theta,\D(\beta,\theta))<0,\quad (\forall
 (\beta,\theta)\in O_+).
\end{align*}
By the implicit function theorem we have that 
$\D|_{O_+}\in C^{\infty}(O_+)$, or $\D|_{O_+\cup
 O_-}\in C^{\infty}(O_+\cup O_-)$.

Let us prove that $\D\in C(\R_{>0}\times \R)$. It is sufficient to prove
 the continuity at each point belonging to $\R_{>0}\times \R\backslash O_+\cup O_-$. Let
 $(\beta_0,\theta_0)\in\R_{>0}\times \R \backslash O_+\cup O_-$. By
 definition $g(\beta_0,\theta_0,0)=0$, $\D(\beta_0,\theta_0)=0$ and
 $\beta_0\theta_0/2\notin \pi(2\Z+1)$. Suppose that there exists
 $\eps\in \R_{>0}$ such that for any $\delta\in \R_{>0}$ there exists
 $(\beta_{\delta},\theta_{\delta})\in\R_{>0}\times \R$ such that 
$\|(\beta_0,\theta_0)-(\beta_{\delta},\theta_{\delta})\|_{\R^2}<\delta$
 and $\D(\beta_{\delta},\theta_{\delta})\ge \eps$. Then,
\begin{align*}
0&=g(\beta_{\delta},\theta_{\delta},\D(\beta_{\delta},\theta_{\delta}))\le
 g(\beta_{\delta},\theta_{\delta},\eps)\le
 \sup_{(\beta,\theta)\in\R_{>0}\times \R\atop\text{with
 }\|(\beta_0,\theta_0)-(\beta,\theta)\|_{\R^2}<\delta}g(\beta,\theta,\eps).
\end{align*}
By sending $\delta$ to 0 we have that $0\le
 g(\beta_0,\theta_0,\eps)<g(\beta_0,\theta_0,0)=0$, which is a
 contradiction. Thus $\lim_{(\beta,\theta)\to
 (\beta_0,\theta_0)}\D(\beta,\theta)=0=\D(\beta_0,\theta_0)$, which
 implies that $\D\in C(\R_{>0}\times \R)$.
\end{proof}

The next lemma states the existence of critical values of the imaginary
magnetic field in $[0,4\pi/\beta]$. 

\begin{lemma}\label{lem_critical_theta}
For any $\beta\in \R_{>0}$ there uniquely exist $\theta_{c,1}\in
 (0,2\pi/\beta)$, $\theta_{c,2}\in (2\pi/\beta,4\pi/\beta)$ such that
\begin{align*}
&g(\beta,\theta_{c,1},0)=g(\beta,\theta_{c,2},0)=0,\\
&g(\beta,\theta,0)>0,\quad (\forall \theta\in
 (\theta_{c,1},\theta_{c,2})),\\
&g(\beta,\theta,0)<0,\quad (\forall \theta\in [0,\theta_{c,1})\cup
 (\theta_{c,2},4\pi/\beta]).
\end{align*}
\end{lemma}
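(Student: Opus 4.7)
The plan is to exploit that $g(\beta,\theta,0)$ depends on $\theta$ only through $\cos(\beta\theta/2)$, which is a strictly decreasing bijection $[0,2\pi/\beta]\to[-1,1]$ and a strictly increasing bijection $[2\pi/\beta,4\pi/\beta]\to[-1,1]$. So I would reduce the problem to studying the map
\[
c\mapsto -\frac{2}{|U|}+D_d\int_{\G_{\infty}^*}d\bk\,\Tr\!\left(\frac{\sinh(\beta|E(\bk)|)}{(c+\cosh(\beta E(\bk)))|E(\bk)|}\right),\qquad c\in(-1,1],
\]
prove it is continuous, strictly decreasing in $c$, strictly negative at $c=1$, and blows up to $+\infty$ as $c\searrow -1$. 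The existence and uniqueness of $\theta_{c,1},\theta_{c,2}$ with the asserted signs then follow by composing with $\cos(\beta\theta/2)$ on each monotonicity interval and invoking the intermediate value theorem.

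First, I would verify \emph{strict monotonicity}: differentiating under the integral (justified on any compact subset of $(-1,1]$ by the uniform bound $c+\cosh(\beta E(\bk))\ge c+1>0$ and the positivity of the integrand), I get
\[
\frac{\partial g}{\partial\theta}(\beta,\theta,0)=\frac{\beta D_d}{2}\sin(\beta\theta/2)\int_{\G_{\infty}^*}d\bk\,\Tr\!\left(\frac{\sinh(\beta|E(\bk)|)}{(\cos(\beta\theta/2)+\cosh(\beta E(\bk)))^2|E(\bk)|}\right),
\]
whose integral factor is strictly positive; hence $\partial_\theta g>0$ on $(0,2\pi/\beta)$ and $\partial_\theta g<0$ on $(2\pi/\beta,4\pi/\beta)$. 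Next I would estimate the \emph{boundary values}. At $\theta=0$ (equivalently $4\pi/\beta$), $\cos(\beta\theta/2)=1$ and $\sinh(x)/[(1+\cosh x)x]=\tanh(x/2)/x\le 1/x$; applying this spectrally, using $|e_j(\bk)|\ge e(\bk)$ from \eqref{eq_dispersion_upper_lower_bound}, and the smallness hypothesis \eqref{eq_initial_smallness_coupling}, I conclude
\[
g(\beta,0,0)\le -\frac{2}{|U|}+bD_d\int_{\G_{\infty}^*}\frac{d\bk}{e(\bk)}<0.
\]
For the divergence at $\theta=2\pi/\beta$, I would write $c=\cos(\beta\theta/2)=-1+\epsilon$ with $\epsilon\searrow 0$, bound the integrand from below on the region $\{e(\bk)\le\eta\}$ by $\sinh(\beta|E|)/[(\epsilon+\cosh(\beta E)-1)|E|]\ge c'(e(\bk)^2+\epsilon)^{-1}$ (using the Taylor expansion $\cosh(x)-1\le x^2$, and again \eqref{eq_dispersion_upper_lower_bound} to compare $\|E(\bk)\|$ with $e(\bk)$), and then pass to the limit via monotone convergence together with \eqref{eq_dispersion_measure_divergence}.

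The third step is a straightforward application of the intermediate value theorem to the continuous (in fact $C^\infty$) functions $\theta\mapsto g(\beta,\theta,0)$ on the two open intervals $(0,2\pi/\beta)$ and $(2\pi/\beta,4\pi/\beta)$, yielding the unique $\theta_{c,1}\in(0,2\pi/\beta)$ and $\theta_{c,2}\in(2\pi/\beta,4\pi/\beta)$ where the function changes sign; the sign assertions on $[0,\theta_{c,1})\cup(\theta_{c,2},4\pi/\beta]$ and $(\theta_{c,1},\theta_{c,2})$ are then immediate from strict monotonicity. The main obstacle is the blow-up at $\theta=2\pi/\beta$: the divergence is of exactly the form $\int d\bk/(e(\bk)^2+\epsilon)$ controlled by \eqref{eq_dispersion_measure_divergence}, and one must carefully translate the $\cosh$-singularity into the $e(\bk)^2$-singularity while handling the matrix character of $E(\bk)$ through its spectral decomposition. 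Every other step is routine.
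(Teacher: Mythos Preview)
Your proposal is correct and follows essentially the same approach as the paper's proof: monotonicity of $\theta\mapsto g(\beta,\theta,0)$ on each half-interval, negativity at $\theta=0,4\pi/\beta$ via the bound $\tanh(x/2)/x\le 1/x$ combined with \eqref{eq_initial_smallness_coupling}, divergence at $\theta=2\pi/\beta$ via \eqref{eq_dispersion_measure_divergence}, and the intermediate value theorem. The paper's proof is more terse (it simply asserts monotonicity ``from the definition'' and states the divergence as following from \eqref{eq_dispersion_measure_divergence}), whereas you spell out the derivative computation and the reduction of the $\cosh$-singularity to the $e(\bk)^2$-singularity; one minor caveat is that your Taylor bound $\cosh(x)-1\le x^2$ only holds on a bounded interval, so you should make explicit that on $\{e(\bk)\le\eta\}$ the eigenvalues of $\beta E(\bk)$ are bounded by $\beta\sc\eta$ via \eqref{eq_dispersion_upper_lower_bound}.
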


\begin{proof}
One can see from the definition that $\theta\mapsto g(\beta,\theta,0)$
 is strictly monotone increasing in $(0,2\pi/\beta)$, strictly monotone
 decreasing in $(2\pi/\beta,4\pi/\beta)$ and continuous in
 $(0,2\pi/\beta)\cup (2\pi/\beta,4\pi/\beta)$. By the assumption
 \eqref{eq_initial_smallness_coupling}, 
\begin{align*}
g(\beta,0,0)=g\left(\beta,\frac{4\pi}{\beta},0\right)\le
 -\frac{2}{|U|}+bD_d\int_{\G_{\infty}^*}d\bk \frac{1}{e(\bk)}<0.
\end{align*}
Also, by \eqref{eq_dispersion_measure_divergence} $\lim_{\theta\to
 2\pi/\beta}g(\beta,\theta,0)=\infty$. We can deduce the claim from
 these properties.
\end{proof}

By Lemma \ref{lem_critical_theta} we can define the functions
$\theta_{c,1}:\R_{>0}\to (0,2\pi/\beta)$, $\theta_{c,2}:\R_{>0}\to
(2\pi/\beta,4\pi/\beta)$. 

For any parameters $\alpha_1,\alpha_2,\cdots,\alpha_n$ we let
$c(\alpha_1,\alpha_2,\cdots,\alpha_n)$ denote a positive constant
depending only on $\alpha_1,\alpha_2,\cdots,\alpha_n$. This notational
rule will be used not only in the proof of the next lemma but throughout
the rest of the paper.

\begin{lemma}\label{lem_critical_theta_property}
There exist positive constants $c_3,c_4$ depending only on
 $b,D_d,\sc,\sr,\sfs$ such that the following statements hold for any
 $U\in (-c_3,0)$. 
\begin{enumerate}[(i)]
\item\label{item_simple_temperature_bound}
\begin{align*}
\left|\frac{\theta_{c,j}(\beta)}{2}-\frac{\pi}{\beta}\right|\le
 \frac{\pi}{2\beta},\quad(\forall \beta\in\R_{>0},\ j\in\{1,2\}).
\end{align*}
\item\label{item_global_temperature_bound}
\begin{align*}
\left|\frac{\theta_{c,j}(\beta)}{2}-\frac{\pi}{\beta}\right|\le
 c_4 \left(\frac{|U|}{\beta}\right)^{\frac{1}{\sr}},\quad(\forall \beta\in\R_{>0},\ j\in\{1,2\}).
\end{align*}
\item\label{item_high_temperature_bound}
\begin{align*}
\left|\frac{\theta_{c,j}(\beta)}{2}-\frac{\pi}{\beta}\right|\le
 c_4 \left(\frac{|U|}{\beta}\right)^{\frac{1}{2}},\quad(\forall \beta\in(0,1],\ j\in\{1,2\}).
\end{align*}
\item\label{item_critical_theta_regularity}
$\theta_{c,j}\in C^{\infty}(\R_{>0})$ and 
\begin{align*}
\frac{d\theta_{c,j}}{d\beta}(\beta)<0,\quad (\forall \beta\in \R_{>0},\
 j\in\{1,2\}).
\end{align*}
\end{enumerate}
\end{lemma}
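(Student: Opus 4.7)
The strict monotonicity of $\theta\mapsto g(\beta,\theta,0)$ on $(0,2\pi/\beta)$ and on $(2\pi/\beta,4\pi/\beta)$ already observed in the proof of Lemma~\ref{lem_critical_theta} reduces the claim to showing that $g(\beta,\pi/\beta,0)\le 0$ and $g(\beta,3\pi/\beta,0)\le 0$. At both values $\cos(\beta\theta/2)=0$, so the integrand in $g$ collapses to $\tanh(\beta|E(\bk)|)/|E(\bk)|$, and combining $\tanh\le 1$ with $\Tr(|E(\bk)|^{-1})\le b/e(\bk)$ (a direct consequence of \eqref{eq_dispersion_upper_lower_bound}) dominates both quantities by $-2/|U|+bD_d\int d\bk/e(\bk)$, which is negative by \eqref{eq_initial_smallness_coupling}.

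\textbf{Parts (ii) and (iii).} Set $\phi_j:=\theta_{c,j}(\beta)/2-\pi/\beta$, so that $\cos(\beta\theta_{c,j}/2)=-\cos(\beta\phi_j)$ and $\beta|\phi_j|\le\pi/2$ by (i). The gap equation $g(\beta,\theta_{c,j},0)=0$ then reads
\begin{align*}
\frac{2}{|U|}=D_d\int_{\G_{\infty}^*}d\bk\,\Tr\frac{\sinh(\beta|E(\bk)|)}{(\cosh(\beta|E(\bk)|)-\cos(\beta\phi_j))|E(\bk)|}.
\end{align*}
I split the $\bk$-integral according to whether $\beta|E(\bk)|\ge M$ or $<M$, for a fixed constant $M>\sc$. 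On the first region $\sinh/\cosh\le 1$ and $\cosh-\cos\ge\cosh/2$, so the integrand is dominated by $2b/e(\bk)$, whose integral is bounded by \eqref{eq_dispersion_measure_divided}. On the second region, Taylor expansion yields $\sinh(\beta|E|)\le c\beta|E|$ and, using $\beta|\phi_j|\le\pi/2$ together with $|E(\bk)|^2\ge e(\bk)^2$, $\cosh(\beta|E|)-\cos(\beta\phi_j)\ge c\beta^2(e(\bk)^2+\phi_j^2)$, so the integrand is bounded by $c\beta^{-1}(e(\bk)^2+\phi_j^2)^{-1}$; applying \eqref{eq_additional_upper_order} then bounds this contribution by $c\beta^{-1}|\phi_j|^{-\sr}$. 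Combining, $2/|U|\le C_1+C_2\beta^{-1}|\phi_j|^{-\sr}$, which rearranges to (ii) once $|U|$ is smaller than some $c_3$ depending only on $b,D_d,\sc,\sr$. For (iii) with $\beta\le 1$ the first region is empty (because $\beta|E|\le\sc<M$), and the trivial estimate $\int d\bk/(e(\bk)^2+\phi_j^2)\le|\G_{\infty}^*|\phi_j^{-2}$ replaces \eqref{eq_additional_upper_order}, yielding $\phi_j^2\le c|U|/\beta$.

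\textbf{Part (iv).} Differentiation under the integral gives
\begin{align*}
\partial_\theta g(\beta,\theta,0)=\frac{\beta D_d}{2}\int d\bk\,\Tr\frac{\sin(\beta\theta/2)\sinh(\beta|E|)}{(\cos(\beta\theta/2)+\cosh(\beta|E|))^2|E|}.
\end{align*}
At $\theta=\theta_{c,j}$ with $\eta_j:=\beta\theta_{c,j}/2-\pi$, part (i) puts $\eta_1\in(-\pi/2,0)$ and $\eta_2\in(0,\pi/2)$, so $\sin(\beta\theta_{c,j}/2)=-\sin\eta_j$ has strict sign (positive for $j=1$, negative for $j=2$); the implicit function theorem applied to $g(\beta,\theta,0)=0$ then yields $\theta_{c,j}\in C^{\infty}(\R_{>0})$. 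For the sign of $d\theta_{c,j}/d\beta$ I exploit the reflection symmetry $g(\beta,4\pi/\beta-\theta,0)=g(\beta,\theta,0)$, coming from $\cos(2\pi-x)=\cos x$, which forces $\theta_{c,2}(\beta)=4\pi/\beta-\theta_{c,1}(\beta)$ and hence $\theta_{c,1}'+\theta_{c,2}'=-4\pi/\beta^2$; it therefore suffices to prove $\theta_{c,1}'(\beta)\in(-4\pi/\beta^2,0)$. A direct computation using $\cosh^2-\sinh^2=1$ shows
\begin{align*}
\partial_\beta g(\beta,\theta_{c,1},0)=D_d\int d\bk\,\Tr\frac{1-\cos\eta_1\cdot\cosh(\beta|E|)}{(\cosh(\beta|E|)-\cos\eta_1)^2}+\frac{\theta_{c,1}}{\beta}\partial_\theta g(\beta,\theta_{c,1},0),
\end{align*}
so by the implicit function formula $d\theta_{c,1}/d\beta=-\theta_{c,1}/\beta-J/\partial_\theta g$, where $J$ denotes the displayed integral. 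The hard step is to establish the pair of bounds $-(\theta_{c,1}/\beta)\partial_\theta g<J<(4\pi/\beta^2-\theta_{c,1}/\beta)\partial_\theta g$: the indefinite sign of $J$ comes from the competition between small and large $|E(\bk)|$ contributions. I plan to handle this by the algebraic decomposition $1-\cos\eta_1\cdot C=\sin^2\eta_1-\cos\eta_1(C-\cos\eta_1)$ with $C=\cosh(\beta|E|)$, splitting $J$ into a manifestly nonnegative term $\sin^2\eta_1\cdot D_d\int d\bk\,\Tr[(C-\cos\eta_1)^{-2}]$ and a negative term $-\cos\eta_1\cdot D_d\int d\bk\,\Tr[(C-\cos\eta_1)^{-1}]$, and then matching both one-sided bounds using the same two-region splitting as in (ii) combined with \eqref{eq_additional_upper_order}, \eqref{eq_additional_lower_order} and the smallness of $|\eta_1|$ provided by (ii)--(iii) for $|U|$ sufficiently small.
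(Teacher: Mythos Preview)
Parts (i)--(iii) are correct and follow the paper's route; the paper splits at $e(\bk)\le\beta^{-1}$ rather than at $\beta|E(\bk)|\le M$, but the effect is identical, and both land on \eqref{eq_additional_upper_order} and \eqref{eq_dispersion_measure_divided}.

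Your reflection symmetry $\theta_{c,2}=4\pi/\beta-\theta_{c,1}$ in (iv) is correct and is not used in the paper; it genuinely halves the work. In fact the \emph{upper} bound $J<(\theta_{c,2}/\beta)\,\partial_\theta g$ becomes elementary through your decomposition, though you do not say so: since $J_-\le 0$ it suffices that $J_+<(\theta_{c,2}/\beta)\,\partial_\theta g$, and using $\sinh(\beta|E|)/|E|\ge\beta$ together with $\beta\theta_{c,2}/2>\pi$ gives $(\theta_{c,2}/\beta)\,\partial_\theta g\ge\pi|\sin\eta_1|\,D_d\int\Tr(C-\cos\eta_1)^{-2}=(\pi/|\sin\eta_1|)\,J_+>J_+$. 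The paper instead re-derives the full estimate for $\partial_\beta g(\beta,\theta_{c,2},0)<0$ from scratch.

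The genuine gap is the \emph{lower} bound $J>-(\theta_{c,1}/\beta)\,\partial_\theta g$, equivalently $\partial_\beta g(\beta,\theta_{c,1},0)>0$, which you only announce. After dropping $J_+\ge 0$ one must show $(\theta_{c,1}/\beta)\,\partial_\theta g>|J_-|$. The paper does this by bounding the left side from below via \eqref{eq_additional_lower_order} (restricted to $e(\bk)\le\beta^{-1}$ when $\beta\ge 1$) and $|J_-|$ from above via \eqref{eq_additional_upper_order}, obtaining a ratio of order $\beta^{-1}|\phi_1|^{-1+\sfs-\sr}$ for $\beta\ge 1$; showing this is large uses part (ii) \emph{and} the hypothesis $\sfs\ge 1+2\sr$, which your sketch never invokes. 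A separate treatment for $\beta<1$ (with a further split on whether $|\phi_1|\le 1$) is also needed, using (iii). Your plan (``two-region splitting combined with \eqref{eq_additional_upper_order}, \eqref{eq_additional_lower_order}'') is the right outline, but the role of $\sfs\ge 1+2\sr$ is the crux and must be made explicit for the argument to close.
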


\begin{proof}
\eqref{item_simple_temperature_bound}: Take $\beta\in \R_{>0}$,
 $j\in\{1,2\}$. Assume that
 $|\theta_{c,j}(\beta)/2-\pi/\beta|>\pi/(2\beta)$. Then,
\begin{align*}
0&=g(\beta,\theta_{c,j}(\beta),0)\le
 -\frac{2}{|U|}+D_d\int_{\G_{\infty}^*}d\bk\Tr \left(\frac{\tanh (\beta
 |E(\bk)|)}{|E(\bk)|}\right)\\
&\le -\frac{2}{|U|}+b D_d\int_{\G_{\infty}^*}d\bk\frac{1}{e(\bk)},
\end{align*}
which contradicts the condition
 \eqref{eq_initial_smallness_coupling}. Thus the claim holds true.

\eqref{item_global_temperature_bound}: By
  \eqref{eq_dispersion_measure_divided} and \eqref{eq_additional_upper_order}, 
\begin{align*}
0&=g(\beta,\theta_{c,j}(\beta),0)\le
 -\frac{2}{|U|}+bD_d\int_{\G_{\infty}^*}d\bk\frac{\sinh(\beta
 e(\bk))}{(\cos(\beta \theta_{c,j}(\beta)/2)+\cosh(\beta e(\bk)))e(\bk)}\\
&\le
 -\frac{2}{|U|}+c(b,D_d)\beta^{-1}\int_{\G_{\infty}^*}d\bk\frac{1_{e(\bk)\le \beta^{-1}}}{e(\bk)^2+|\theta_{c,j}(\beta)/2-\pi/\beta|^2} 
+c(b,D_d)\int_{\G_{\infty}^*}d\bk\frac{1}{e(\bk)}\\
&\le 
-\frac{2}{|U|} + c(b,D_d,\sc)\\
&\quad +c(b,D_d,\sc)\beta^{-1}\Bigg(1_{|\theta_{c,j}(\beta)/2-\pi/\beta|\le
 1}\left|\frac{\theta_{c,j}(\beta)}{2}-\frac{\pi}{\beta}\right|^{-\sr}
\\
&\qquad\qquad\qquad\qquad +1_{|\theta_{c,j}(\beta)/2-\pi/\beta|>
 1}\left|\frac{\theta_{c,j}(\beta)}{2}-\frac{\pi}{\beta}\right|^{-2}\Bigg)\\
&\le
 -\frac{2}{|U|}+c(b,D_d,\sc)\left(\beta^{-1}\left|\frac{\theta_{c,j}(\beta)}{2}-\frac{\pi}{\beta}\right|^{-\sr}+1\right).
\end{align*}
To derive the last inequality, we also used that $0<\sr \le 1$.
If $|U|\le c(b,D_d,\sc)^{-1}$,
\begin{align*}
0\le
 -\frac{1}{|U|}+c(b,D_d,\sc)\beta^{-1}\left|\frac{\theta_{c,j}(\beta)}{2}-\frac{\pi}{\beta}\right|^{-\sr}.
\end{align*}
This leads to the result.

\eqref{item_high_temperature_bound}:
Since $\beta\le 1$, 
\begin{align*}
0&=g(\beta,\theta_{c,j}(\beta),0)\le -\frac{2}{|U|}+c(b,D_d,\sc)\beta^{-1}
\int_{\G_{\infty}^*}d\bk\frac{1}{e(\bk)^2+|\theta_{c,j}(\beta)/2-\pi/\beta|^2} \\
&\le  -\frac{2}{|U|}+c(b,D_d,\sc)\beta^{-1}
 \left|\frac{\theta_{c,j}(\beta)}{2}-\frac{\pi}{\beta}\right|^{-2},
\end{align*}
which implies the result.

\eqref{item_critical_theta_regularity}:
For $x\in\R_{>0}$, $y\in (0,2\pi/x)\cup (2\pi/x,4\pi/x)$,
\begin{align}
\frac{\partial g}{\partial
 y}(x,y,0)=\frac{D_d}{2}x\sin\left(\frac{xy}{2}\right)\int_{\G_{\infty}^*}d\bk\Tr
 \left(\frac{\sinh(x|E(\bk)|)}{(\cos(xy/2)+\cosh(x E(\bk)))^2|E(\bk)|}
\right).\label{eq_gap_equation_2nd_derivative}
\end{align}
Thus
\begin{align}
&\frac{\partial g}{\partial y}(x,y,0)>0,\quad (\forall y\in
 (0,2\pi/x)),\label{eq_gap_equation_2nd_derivative_sign}\\
&\frac{\partial g}{\partial y}(x,y,0)<0,\quad (\forall y\in
 (2\pi/x,4\pi/x)).\notag
\end{align}
Therefore the implicit function theorem ensures that $\theta_{c,j}\in
 C^{\infty}(\R_{>0})$.

 Let us determine the sign of $\frac{\partial g}{\partial
 x}(\beta,\theta_{c,1}(\beta),0)$, $\frac{\partial g}{\partial
 x}(\beta,\theta_{c,2}(\beta),0)$. For $x\in \R_{>0}$, $y\in \R$ with
 $xy/2\notin \pi(2\Z+1)$,
\begin{align}
\frac{\partial g}{\partial
 x}(x,y,0)=&\frac{D_d}{2}\sin\left(\frac{xy}{2}\right)y
 \int_{\G_{\infty}^*}d\bk\Tr\left(\frac{\sinh(x|E(\bk)|)}{(\cos(xy/2)+\cosh(x E(\bk)))^2|E(\bk)|}\right)\label{eq_gap_equation_1st_derivative}\\
&+D_d \int_{\G_{\infty}^*}d\bk\Tr\left(\frac{\cosh(x
 E(\bk))(1+\cos(xy/2))}{(\cos(xy/2)+\cosh(x E(\bk)))^2}\right)\notag\\
&-
D_d \int_{\G_{\infty}^*}d\bk\Tr\left(\frac{\cosh(x
 E(\bk))-1}{(\cos(xy/2)+\cosh(x E(\bk)))^2}\right).\notag
\end{align}
By the result of \eqref{item_simple_temperature_bound},
\begin{align}
&\theta_{c,1}(\beta)\sin\left(\frac{\beta\theta_{c,1}(\beta)}{2}\right)\ge
 c\left|\frac{\theta_{c,1}(\beta)}{2}-\frac{\pi}{\beta}\right|,\label{eq_critical_theta_1_lower}\\
&\theta_{c,2}(\beta)\sin\left(\frac{\beta\theta_{c,2}(\beta)}{2}\right)\le
 -c\left|\frac{\theta_{c,2}(\beta)}{2}-\frac{\pi}{\beta}\right|.\label{eq_critical_theta_2_upper}
\end{align}

Let us consider the case that $\beta\ge 1$. By the claim
 \eqref{item_global_temperature_bound} and $0<\sr\le 1$, if $|U|\le c_4^{-\sr}$,
\begin{align}
\left|\frac{\theta_{c,j}(\beta)}{2}-\frac{\pi}{\beta}\right|\le\frac{1}{\beta}\le
 1,\quad(\forall j\in\{1,2\}).\label{eq_assumption_for_additional}
\end{align}
Then by \eqref{eq_additional_upper_order},
 \eqref{eq_additional_lower_order}, 
\eqref{eq_gap_equation_1st_derivative},
\eqref{eq_critical_theta_1_lower},
\eqref{eq_assumption_for_additional}
 and the claim \eqref{item_global_temperature_bound} again
\begin{align*}
&\frac{\partial g}{\partial x}(\beta,\theta_{c,1}(\beta),0)\\
&\ge c D_d
\left|\frac{\theta_{c,1}(\beta)}{2}-\frac{\pi}{\beta}\right|\beta 
\int_{\G_{\infty}^*}d\bk\Tr
 \left(\frac{1}{(\cos(\beta\theta_{c,1}(\beta)/2)+\cosh(\beta
 E(\bk)))^2}\right)\\
&\quad-D_d \int_{\G_{\infty}^*}d\bk\Tr
 \left(\frac{1}{\cos(\beta\theta_{c,1}(\beta)/2)+\cosh(\beta
 E(\bk))}\right)\\
&\ge c(D_d,\sc)
\left|\frac{\theta_{c,1}(\beta)}{2}-\frac{\pi}{\beta}\right|\beta^{-3} 
\int_{\G_{\infty}^*}d\bk1_{e(\bk)\le
 \beta^{-1}}\left(e(\bk)^2+\left|\frac{\theta_{c,1}(\beta)}{2}-\frac{\pi}{\beta}\right|^2\right)^{-2}\\
&\quad -c(b,D_d)\beta^{-2}\int_{\G_{\infty}^*}d\bk\left(e(\bk)^2+\left|\frac{\theta_{c,1}(\beta)}{2}-\frac{\pi}{\beta}\right|^2\right)^{-1}\\
&\ge
 c(D_d,\sc)\beta^{-3}\left|\frac{\theta_{c,1}(\beta)}{2}-\frac{\pi}{\beta}\right|^{1-\sfs}-c(b,D_d,\sc)\beta^{-2}\left|\frac{\theta_{c,1}(\beta)}{2}-\frac{\pi}{\beta}\right|^{-\sr}\\
&\ge
 c(D_d,\sc)\beta^{-3}\left|\frac{\theta_{c,1}(\beta)}{2}-\frac{\pi}{\beta}\right|^{1-\sfs}\left(1-c(b,D_d,\sc)\beta \left|\frac{\theta_{c,1}(\beta)}{2}-\frac{\pi}{\beta}\right|^{-1+\sfs-\sr}\right)\\
&\ge
 c(D_d,\sc)\beta^{-3}\left|\frac{\theta_{c,1}(\beta)}{2}-\frac{\pi}{\beta}\right|^{1-\sfs}\left(1-c(b,D_d,\sc,\sr,\sfs)\beta^{-\frac{\sfs-1-2\sr}{\sr}}|U|^{\frac{\sfs-1-\sr}{\sr}}\right)\\
&\ge
 c(D_d,\sc)\beta^{-3}\left|\frac{\theta_{c,1}(\beta)}{2}-\frac{\pi}{\beta}\right|^{1-\sfs}\left(1-c(b,D_d,\sc,\sr,\sfs)|U|^{\frac{\sfs-1-\sr}{\sr}}\right).
\end{align*}
In the last inequality we used the conditions $\sfs-1-2\sr\ge 0$, $\beta
 \ge 1$. Thus if $c(b,D_d,\sc,\sr,\sfs)|U|^{(\sfs-1-\sr)/\sr}<1$,
$$
\frac{\partial g}{\partial x}(\beta,\theta_{c,1}(\beta),0)>0.
$$
Similarly by using \eqref{eq_additional_upper_order},
 \eqref{eq_additional_lower_order}, \eqref{eq_critical_theta_2_upper},
 \eqref{eq_assumption_for_additional}, the claim
 \eqref{item_global_temperature_bound} and the conditions
 $\sfs-1-2\sr\ge 0$, $\beta \ge 1$ we can derive from
 \eqref{eq_gap_equation_1st_derivative} that 
\begin{align*}
&\frac{\partial g}{\partial x}(\beta,\theta_{c,2}(\beta),0)\\
&\le -c D_d
\left|\frac{\theta_{c,2}(\beta)}{2}-\frac{\pi}{\beta}\right|\beta 
\int_{\G_{\infty}^*}d\bk\Tr
 \left(\frac{1}{(\cos(\beta\theta_{c,2}(\beta)/2)+\cosh(\beta
 E(\bk)))^2}\right)\\
&\quad+D_d \int_{\G_{\infty}^*}d\bk\Tr
 \left(\frac{\cosh(\beta E(\bk))(1+\cos(\beta \theta_{c,2}(\beta)/2))}{(\cos(\beta\theta_{c,2}(\beta)/2)+\cosh(\beta
 E(\bk)))^2}\right)\\
&\le -c(D_d,\sc)
\left|\frac{\theta_{c,2}(\beta)}{2}-\frac{\pi}{\beta}\right|\beta^{-3} 
\int_{\G_{\infty}^*}d\bk1_{e(\bk)\le
 \beta^{-1}}\left(e(\bk)^2+\left|\frac{\theta_{c,2}(\beta)}{2}-\frac{\pi}{\beta}\right|^2\right)^{-2}\\
&\quad + D_d \int_{\G_{\infty}^*}d\bk1_{e(\bk)\le \beta^{-1}}\Tr
 \left(\frac{\cosh(\beta E(\bk))}{\cos(\beta\theta_{c,2}(\beta)/2)+\cosh(\beta
 E(\bk))}\right)\\
&\quad +c(D_d) \int_{\G_{\infty}^*}d\bk 1_{e(\bk)> \beta^{-1}} \Tr
 \left(\frac{1+\cos(\beta
 \theta_{c,2}(\beta)/2)}{\cos(\beta\theta_{c,2}(\beta)/2)+\cosh(\beta
 E(\bk))}\right)\\
&\le -c(D_d,\sc)
\left|\frac{\theta_{c,2}(\beta)}{2}-\frac{\pi}{\beta}\right|\beta^{-3} 
\int_{\G_{\infty}^*}d\bk1_{e(\bk)\le
 \beta^{-1}}\left(e(\bk)^2+\left|\frac{\theta_{c,2}(\beta)}{2}-\frac{\pi}{\beta}\right|^2\right)^{-2}\\
&\quad+ c(b,D_d)\beta^{-2}\int_{\G_{\infty}^*}d\bk\left(e(\bk)^2+\left|\frac{\theta_{c,2}(\beta)}{2}-\frac{\pi}{\beta}\right|^2\right)^{-1}\\
&\le -
 c(D_d,\sc)\beta^{-3}\left|\frac{\theta_{c,2}(\beta)}{2}-\frac{\pi}{\beta}\right|^{1-\sfs}\left(1-c(b,D_d,\sc,\sr,\sfs)|U|^{\frac{\sfs-1-\sr}{\sr}}\right).
\end{align*}
To derive the third inequality, we also used that 
\begin{align*}
x\mapsto \frac{\cosh x}{\cos(\beta \theta_{c,2}(\beta)/2)+\cosh
 x}:[0,\infty)\to \R
\end{align*}
is non-increasing.
Thus on the assumption $c(b,D_d,\sc,\sr,\sfs)|U|^{(\sfs-1-\sr)/\sr}<1$,
$$
\frac{\partial g}{\partial x}(\beta,\theta_{c,2}(\beta),0)<0.
$$

Next let us assume that $\beta <1$. By
  \eqref{eq_additional_upper_order}, \eqref{eq_additional_lower_order},
 \eqref{eq_gap_equation_1st_derivative} and \eqref{eq_critical_theta_1_lower}
\begin{align*}
&\frac{\partial g}{\partial x}(\beta,\theta_{c,1}(\beta),0)\\
&\ge c D_d
\left|\frac{\theta_{c,1}(\beta)}{2}-\frac{\pi}{\beta}\right|\beta 
\int_{\G_{\infty}^*}d\bk\Tr
 \left(\frac{1}{(\cos(\beta\theta_{c,1}(\beta)/2)+\cosh(\beta
 E(\bk)))^2}\right)\\
&\quad-D_d \int_{\G_{\infty}^*}d\bk\Tr
 \left(\frac{1}{\cos(\beta\theta_{c,1}(\beta)/2)+\cosh(\beta
 E(\bk))}\right)\\
&\ge c(D_d,\sc)
\left|\frac{\theta_{c,1}(\beta)}{2}-\frac{\pi}{\beta}\right|\beta^{-3} 
\int_{\G_{\infty}^*}d\bk\left(e(\bk)^2+\left|\frac{\theta_{c,1}(\beta)}{2}-\frac{\pi}{\beta}\right|^2\right)^{-2}\\
&\quad -c(b,D_d)\beta^{-2}\int_{\G_{\infty}^*}d\bk\left(e(\bk)^2+\left|\frac{\theta_{c,1}(\beta)}{2}-\frac{\pi}{\beta}\right|^2\right)^{-1}\\
&\ge 1_{|\theta_{c,1}(\beta)/2-\pi/\beta|\le 1}\left(
c(D_d,\sc)\beta^{-3}\left|\frac{\theta_{c,1}(\beta)}{2}-\frac{\pi}{\beta}\right|^{1-\sfs}-c(b,D_d,\sc)\beta^{-2}\left|\frac{\theta_{c,1}(\beta)}{2}-\frac{\pi}{\beta}\right|^{-\sr}\right)\\
&\quad + 1_{|\theta_{c,1}(\beta)/2-\pi/\beta|> 1}\left(
c(D_d,\sc)\beta^{-3}\left|\frac{\theta_{c,1}(\beta)}{2}-\frac{\pi}{\beta}\right|^{-3}-c(b,D_d)\beta^{-2}\left|\frac{\theta_{c,1}(\beta)}{2}-\frac{\pi}{\beta}\right|^{-2}\right)\\
&\ge 1_{|\theta_{c,1}(\beta)/2-\pi/\beta|\le 1}
c(D_d,\sc)\beta^{-3}\left|\frac{\theta_{c,1}(\beta)}{2}-\frac{\pi}{\beta}\right|^{1-\sfs}\left(1-
 c(b,D_d,\sc)\beta \left|\frac{\theta_{c,1}(\beta)}{2}-\frac{\pi}{\beta}\right|^{\sfs-1-\sr}\right)\\
&\quad + 1_{|\theta_{c,1}(\beta)/2-\pi/\beta|> 1}
c(D_d,\sc)\beta^{-3}\left|\frac{\theta_{c,1}(\beta)}{2}-\frac{\pi}{\beta}\right|^{-3}
\left(1-c(b,D_d,\sc)\beta
 \left|\frac{\theta_{c,1}(\beta)}{2}-\frac{\pi}{\beta}\right|\right).
\end{align*}
By the claim \eqref{item_global_temperature_bound} and the condition
 $1+2\sr\le \sfs$, 
\begin{align*}
& 1_{|\theta_{c,1}(\beta)/2-\pi/\beta|\le 1}
\left(1-
 c(b,D_d,\sc)\beta
 \left|\frac{\theta_{c,1}(\beta)}{2}-\frac{\pi}{\beta}\right|^{\sfs-1-\sr}\right)\\
&\ge 
 1_{|\theta_{c,1}(\beta)/2-\pi/\beta|\le 1}
\left(1-
 c(b,D_d,\sc)\beta
 \left|\frac{\theta_{c,1}(\beta)}{2}-\frac{\pi}{\beta}\right|^{\sr}\right)\\
&\ge 1_{|\theta_{c,1}(\beta)/2-\pi/\beta|\le 1}
\left(1-
 c(b,D_d,\sc,\sr,\sfs)|U|\right).
\end{align*}
Also, by the claim \eqref{item_high_temperature_bound} and the
 assumption $\beta<1$, 
\begin{align*}
& 1_{|\theta_{c,1}(\beta)/2-\pi/\beta|> 1}
\left(1-
 c(b,D_d,\sc)\beta
 \left|\frac{\theta_{c,1}(\beta)}{2}-\frac{\pi}{\beta}\right|\right)\\
&\ge  1_{|\theta_{c,1}(\beta)/2-\pi/\beta|> 1}
\left(1-
 c(b,D_d,\sc,\sr,\sfs)|U|^{\frac{1}{2}}\right).
\end{align*}
Therefore if $|U|<c(b,D_d,\sc,\sr,\sfs)$, 
$$
\frac{\partial g}{\partial x}(\beta,\theta_{c,1}(\beta),0)>0.
$$
Similarly we can derive from \eqref{eq_additional_upper_order},
 \eqref{eq_additional_lower_order},
 \eqref{eq_gap_equation_1st_derivative},
 \eqref{eq_critical_theta_2_upper}, the claims
 \eqref{item_global_temperature_bound}, \eqref{item_high_temperature_bound} and the conditions $1+2\sr\le
 \sfs$, $\beta<1$ that
\begin{align*}
&\frac{\partial g}{\partial x}(\beta,\theta_{c,2}(\beta),0)\\
&\le -c D_d
\left|\frac{\theta_{c,2}(\beta)}{2}-\frac{\pi}{\beta}\right|\beta 
\int_{\G_{\infty}^*}d\bk\Tr
 \left(\frac{1}{(\cos(\beta\theta_{c,2}(\beta)/2)+\cosh(\beta
 E(\bk)))^2}\right)\\
&\quad+D_d \int_{\G_{\infty}^*}d\bk\Tr
 \left(\frac{\cosh(\beta E(\bk))}{\cos(\beta\theta_{c,2}(\beta)/2)+\cosh(\beta
 E(\bk))}\right)\\
&\le -c(D_d,\sc)
 \left|\frac{\theta_{c,2}(\beta)}{2}-\frac{\pi}{\beta}\right|\beta^{-3}
\int_{\G_{\infty}^*}d\bk
 \left(e(\bk)^2+\left|\frac{\theta_{c,2}(\beta)}{2}-\frac{\pi}{\beta}\right|^2\right)^{-2}\\
&\quad + c(b,D_d,\sc) \beta^{-2} \int_{\G_{\infty}^*}d\bk
 \left(e(\bk)^2+\left|\frac{\theta_{c,2}(\beta)}{2}-\frac{\pi}{\beta}\right|^2\right)^{-1}\\
&\le -1_{|\theta_{c,2}(\beta)/2-\pi/\beta|\le 1}
c(D_d,\sc) \beta^{-3}
 \left|\frac{\theta_{c,2}(\beta)}{2}-\frac{\pi}{\beta}\right|^{1-\sfs}
\left(1-
 c(b,D_d,\sc)\beta
 \left|\frac{\theta_{c,2}(\beta)}{2}-\frac{\pi}{\beta}\right|^{\sfs-1-\sr}\right)\\
&\quad -1_{|\theta_{c,2}(\beta)/2-\pi/\beta|> 1}
c(D_d,\sc) \beta^{-3}
 \left|\frac{\theta_{c,2}(\beta)}{2}-\frac{\pi}{\beta}\right|^{-3}
\left(1-
 c(b,D_d,\sc)\beta
 \left|\frac{\theta_{c,2}(\beta)}{2}-\frac{\pi}{\beta}\right|\right)\\
&\le -1_{|\theta_{c,2}(\beta)/2-\pi/\beta|\le 1}
c(D_d,\sc) \beta^{-3}
 \left|\frac{\theta_{c,2}(\beta)}{2}-\frac{\pi}{\beta}\right|^{1-\sfs}
(1-c(b,D_d,\sc,\sr,\sfs)|U|)\\
&\quad -1_{|\theta_{c,2}(\beta)/2-\pi/\beta|> 1}c(D_d,\sc)\beta^{-3}
 \left|\frac{\theta_{c,2}(\beta)}{2}-\frac{\pi}{\beta}\right|^{-3}
(1-c(b,D_d,\sc,\sr,\sfs)|U|^{\frac{1}{2}})\\
&<0.
\end{align*}
In the last inequality we assumed that $|U|<c(b,D_d,\sc,\sr,\sfs)$.

Thus we have proved that
\begin{align}
\frac{\partial g}{\partial x}(\beta,\theta_{c,1}(\beta),0)>0,\quad 
\frac{\partial g}{\partial x}(\beta,\theta_{c,2}(\beta),0)<0,\quad
 (\forall \beta\in \R_{>0}).\label{eq_gap_equation_1st_derivative_sign}
\end{align}
 
Now by combining \eqref{eq_gap_equation_2nd_derivative_sign} with
 \eqref{eq_gap_equation_1st_derivative_sign} we conclude that there
 exists a positive constant $c(b,D_d,\sc, \sr,\sfs)$ such that if
 $|U|<c(b,D_d,\sc,\sr,\sfs)$,
\begin{align*}
\frac{d\theta_{c,j}}{d\beta}(\beta)=-\frac{\frac{\partial g}{\partial
 x}(\beta,\theta_{c,j}(\beta),0)}{\frac{\partial g}{\partial
 y}(\beta,\theta_{c,j}(\beta),0)}<0,\quad (\forall j\in\{1,2\},\
 \beta\in \R_{>0}).
\end{align*}
\end{proof}

Let us assume that $U\in (-c_3,0)$ with the constant $c_3$ appearing in
Lemma \ref{lem_critical_theta_property}. For $m\in \N\cup\{0\}$, $j\in
\{1,2\}$ we define the function $\theta_{c,j,m}:\R_{>0}\to \R_{>0}$
by 
$$
\theta_{c,j,m}(x):=\theta_{c,j}(x)+\frac{4\pi}{x}m.
$$
By Lemma
\ref{lem_critical_theta_property} \eqref{item_critical_theta_regularity}
the continuous function $\theta_{c,j,m}:\R_{>0}\to \R_{>0}$ is monotone decreasing
and thus injective. By the fact that
$0<\theta_{c,1}(\beta)<\frac{2\pi}{\beta}<\theta_{c,2}(\beta)<\frac{4\pi}{\beta}$
and Lemma \ref{lem_critical_theta_property}
\eqref{item_simple_temperature_bound},
\begin{align}
&\frac{\pi}{\beta} \le \theta_{c,1,0}(\beta),\label{eq_critical_theta_interval}
\\
&\frac{4\pi}{\beta}m<\theta_{c,1,m}(\beta)<\frac{2\pi}{\beta}+\frac{4\pi}{\beta}m<\theta_{c,2,m}(\beta)<
 \frac{4\pi}{\beta}+\frac{4\pi}{\beta}m,\quad (\forall \beta \in \R_{>0}).\notag
\end{align}
This implies that the function
$\theta_{c,j,m}$ is surjective and thus bijective. Let $\beta_{c,j,m}$ denote the inverse
function of $\theta_{c,j,m}$.

The phase boundaries are characterized in the next proposition.

\begin{proposition}\label{prop_phase_boundaries}
Let $c_3$ be the positive constant appearing in Lemma
 \ref{lem_critical_theta_property}. Assume that $U\in (-c_3,0)$.
Then the following statements hold.
\begin{enumerate}[(i)]
\item\label{item_critical_theta_interval}
For any $m\in \N\cup \{0\}$, $x\in \R_{>0}$,
\begin{align*}
&\theta_{c,1,m}(x)\in \left(\frac{4\pi m}{x},\frac{2\pi + 4\pi
 m}{x}\right),\quad \theta_{c,2,m}(x)\in \left(\frac{2\pi+4\pi
 m}{x},\frac{4\pi(m+1)}{x}\right).
\end{align*}
\item\label{item_critical_temperature_interval}
For any $m\in \N\cup\{0\}$, $y\in \R_{>0}$, 
\begin{align*}
&\beta_{c,1,m}(y)\in \left(\frac{4\pi m}{y},\frac{2\pi + 4\pi
 m}{y}\right),\quad \beta_{c,2,m}(y)\in \left(\frac{2\pi+4\pi
 m}{y},\frac{4\pi(m+1)}{y}\right).
\end{align*}
\item\label{item_phase_boundaries}
Let $(\beta,\theta)\in \R_{>0}\times \R$. The following statements are
     equivalent to each other.
\begin{enumerate}[(a)]
\item\label{item_item_positivity}
$$
\D(\beta,\theta)>0.
$$
\item\label{item_item_theta}
$$
|\theta|\in \bigcup_{m\in\N\cup\{0\}}\left(\theta_{c,1,m}(\beta),\theta_{c,2,m}(\beta)\right).
$$
\item\label{item_item_beta}
$\theta\neq 0$ and 
$$
\beta\in \bigcup_{m\in \N\cup\{0\}}\left(\beta_{c,1,m}(|\theta|), \beta_{c,2,m}(|\theta|)\right).
$$
\end{enumerate}
\end{enumerate}
\end{proposition}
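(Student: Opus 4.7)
The plan is to split the three parts along their natural difficulty, handling (i) and (ii) as quick bookkeeping and concentrating effort on (iii). For (i), I would simply unpack the definition $\theta_{c,j,m}(x)=\theta_{c,j}(x)+4\pi m/x$ and combine it with the ranges $\theta_{c,1}(x)\in(0,2\pi/x)$, $\theta_{c,2}(x)\in(2\pi/x,4\pi/x)$ supplied by Lemma \ref{lem_critical_theta}; nothing more is needed. For (ii), I would exploit the fact (already noted in the discussion just before the proposition) that $\theta_{c,j,m}:\R_{>0}\to\R_{>0}$ is a strictly decreasing continuous bijection, with monotonicity from Lemma \ref{lem_critical_theta_property}(iv) and surjectivity from the enclosing bounds \eqref{eq_critical_theta_interval}. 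Writing $\beta=\beta_{c,j,m}(y)$ so that $y=\theta_{c,j,m}(\beta)$ and feeding this into the bounds from (i), monotone inversion converts the bounds on $y$ into the asserted bounds on $\beta$.

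The substantive content lies in (iii). The crucial observation is that $y\mapsto g(x,y,0)$ depends on $y$ only through $\cos(xy/2)$, so it is simultaneously $4\pi/x$-periodic and even in $y$ (with the convention $g(x,y,0)=\infty$ at the points $xy/2\in\pi(2\Z+1)$, which respects both symmetries). Lemma \ref{lem_critical_theta} identifies the positive set of $g(\beta,\cdot,0)$ on the fundamental domain $[0,4\pi/\beta]$ as exactly the open interval $(\theta_{c,1}(\beta),\theta_{c,2}(\beta))$. Propagating by periodicity gives the positive set on $[0,\infty)$ as the disjoint union $\bigcup_{m\ge 0}(\theta_{c,1,m}(\beta),\theta_{c,2,m}(\beta))$, and evenness extends this to all of $\R$. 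To link this back to $\D$, I invoke Lemma \ref{lem_gap_equation_solvability} together with the strict decrease (in $\D$) of the left-hand side of the gap equation established in its proof: taken with the definition of $\D(\beta,\theta)$, this forces the equivalence $\D(\beta,\theta)>0 \iff g(\beta,\theta,0)>0$, which together with the previous characterization yields (a)$\Leftrightarrow$(b).

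For the final equivalence (b)$\Leftrightarrow$(c), I would apply monotone inversion once more. Since $\theta_{c,j,m}$ is a strictly decreasing bijection with inverse $\beta_{c,j,m}$, the conditions $\theta_{c,1,m}(\beta)<|\theta|$ and $|\theta|<\theta_{c,2,m}(\beta)$ are equivalent respectively to $\beta>\beta_{c,1,m}(|\theta|)$ and $\beta<\beta_{c,2,m}(|\theta|)$; taking the union over $m$ yields the claim, with the exclusion $\theta\neq 0$ in (c) enforced automatically by the strict positivity $\theta_{c,1,m}(\beta)>0$ in (b). I do not foresee a genuine obstacle in this proposition; the only point that requires care is the periodicity-plus-evenness observation, which is elementary but indispensable for reducing the problem on all of $\R$ to the fundamental-domain statement of Lemma \ref{lem_critical_theta}.
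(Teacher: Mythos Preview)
Your proposal is correct and follows essentially the same route as the paper: parts (i) and (ii) are read off from \eqref{eq_critical_theta_interval} and the definition of $\beta_{c,j,m}$, and for (iii) the paper likewise reduces to $g(\beta,\theta,0)>0$, uses the periodicity/evenness via $g(\beta,\theta,0)=g(\beta,\theta',0)$ with $|\theta|=\theta'+\tfrac{4\pi}{\beta}m'$, and invokes Lemma~\ref{lem_critical_theta} together with the monotone-inverse definition of $\beta_{c,j,m}$. Your write-up is slightly more explicit about the periodicity/evenness and the inversion step, but there is no substantive difference.
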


\begin{proof}
We have already seen the claim \eqref{item_critical_theta_interval}
 in \eqref{eq_critical_theta_interval}. The claim
 \eqref{item_critical_temperature_interval} follows from
the claim \eqref{item_critical_theta_interval} and 
the definition of
 $\beta_{c,j,m}(\cdot)$ $(j=1,2)$. Take any
 $(\beta,\theta)\in\R_{>0}\times \R$. There uniquely exist $m'\in
 \N\cup\{0\}$, $\theta'\in [0, 4\pi/\beta)$ such that
 $|\theta|=\theta'+\frac{4\pi}{\beta}m'$. Let us confirm the claim
 \eqref{item_phase_boundaries}. The statement (a) is equivalent to $g(\beta,\theta,0)>0$,
 which is equivalent to $\theta'\in (\theta_{c,1}(\beta),
 \theta_{c,2}(\beta))$ since $g(\beta,\theta,0)=g(\beta,\theta',0)$. The
 inclusion $\theta'\in (\theta_{c,1}(\beta),
 \theta_{c,2}(\beta))$ is equivalent to the statement
 (b). Thus the equivalence between (a) and (b) is proved.
The equivalence between (b) and (c)
 can be deduced from the definition of $\beta_{c,j,m}(\cdot)$ $(j=1,2)$.
\end{proof}

Based on Lemma \ref{lem_critical_theta_property}
\eqref{item_critical_theta_regularity} and Proposition
\ref{prop_phase_boundaries}, we can sketch the $\beta-|\theta|$ phase
diagram as in Figure \ref{fig_theta_beta}.

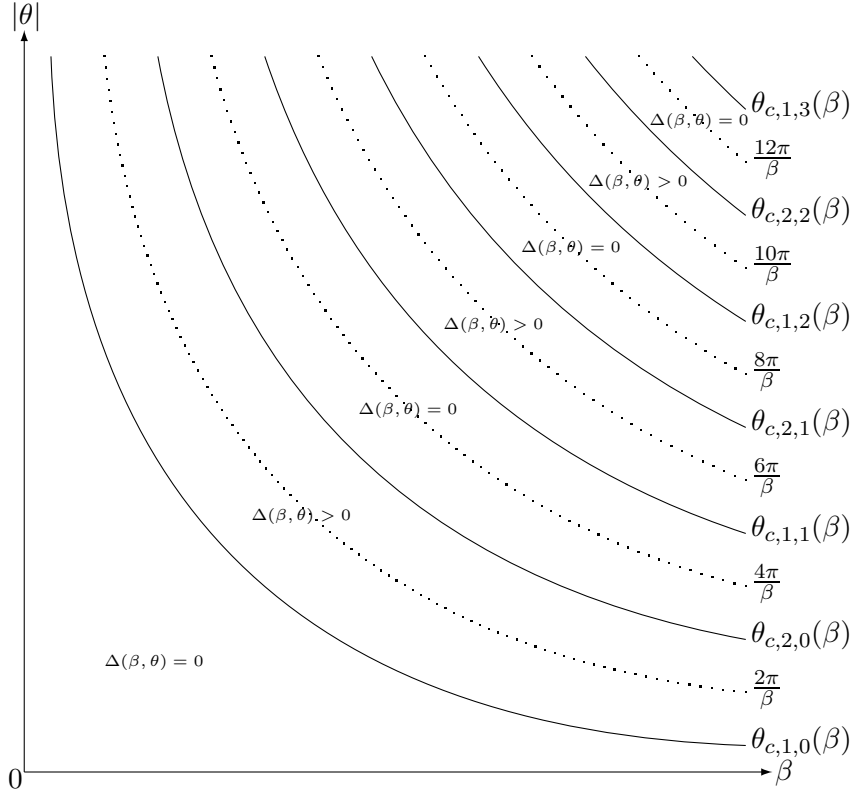
\begin{figure}
\begin{center}
\begin{picture}(300,300)(0,0)

\put(10,10){\vector(1,0){280}}
\put(10,10){\vector(0,1){280}}
\put(4,4){\small 0}
\put(291,7){\small$\beta$}
\put(5,292){\small$|\theta|$}

\put(282,19){\small$\theta_{c,1,0}(\beta)$}
\put(282,39){\small$\frac{2\pi}{\beta}$}
\put(282,59){\small$\theta_{c,2,0}(\beta)$}
\put(282,79){\small$\frac{4\pi}{\beta}$}
\put(282,99){\small$\theta_{c,1,1}(\beta)$}
\put(282,119){\small$\frac{6\pi}{\beta}$}
\put(282,139){\small$\theta_{c,2,1}(\beta)$}
\put(282,159){\small$\frac{8\pi}{\beta}$}
\put(282,179){\small$\theta_{c,1,2}(\beta)$}
\put(282,199){\small$\frac{10\pi}{\beta}$}
\put(282,219){\small$\theta_{c,2,2}(\beta)$}
\put(282,239){\small$\frac{12\pi}{\beta}$}
\put(282,259){\small$\theta_{c,1,3}(\beta)$}

\qbezier(20,280)(29.286,29.286)(280,20)
\qbezier[100](40,280)(64,64)(280,40)
\qbezier(60,280)(94.375,94.375)(280,60)
\qbezier[75](80,280)(121.176,121.176)(280,80)
\qbezier(100,280)(145,145)(280,100)
\qbezier[55](120,280)(166.316,166.316)(280,120)
\qbezier(140,280)(185.5,185.5)(280,140)
\qbezier[40](160,280)(202.857,202.857)(280,160)
\qbezier(180,280)(218.636,218.636)(280,180)
\qbezier[24](200,280)(233.043,233.043)(280,200)
\qbezier(220,280)(246.25,246.25)(280,220)
\qbezier[13](240,280)(258.4,258.4)(280,240)
\qbezier(260,280)(269.615,269.615)(280,260)

\put(40,50){\tiny$\D(\beta,\theta)=0$}

\put(95,105){\tiny$\D(\beta,\theta)>0$}

\put(135,145){\tiny$\D(\beta,\theta)=0$}

\put(167,177){\tiny$\D(\beta,\theta)>0$}

\put(196,206){\tiny$\D(\beta,\theta)=0$}

\put(221,231){\tiny$\D(\beta,\theta)>0$}

\put(244,254){\tiny$\D(\beta,\theta)=0$}

\end{picture}
 \caption{The schematic $\beta-|\theta|$ phase diagram.}\label{fig_theta_beta}
\end{center}
\end{figure}

We can understand from Proposition \ref{prop_phase_boundaries} that for
any fixed $\theta\in \R\backslash\{0\}$ the system repeatedly enters and
exits a superconducting phase where $\D(\beta,\theta)>0$ as $\beta$ varies from 0 to
$\infty$. It is notable that there are infinitely many critical temperatures.

\subsection{The second order phase
  transitions}\label{subsec_2nd_order_transition}

Using the function $\D:\R_{>0}\times\R\to\R_{\ge 0}$, we define the
function $F:\R_{>0}\times \R\to\R$ by
\begin{align*}
&F(x,y)\\
&:=\frac{\D(x,y)^2}{|U|}-\frac{D_d}{x}\int_{\G_{\infty}^*}d\bk \Tr
 \log\Bigg(2\cos\left(\frac{xy}{2}\right)e^{-xE(\bk)}+e^{x(\sqrt{E(\bk)^2+\D(x,y)^2}-E(\bk))}\\
&\qquad\qquad\qquad\qquad\qquad\qquad\qquad\quad+e^{-x(\sqrt{E(\bk)^2+\D(x,y)^2}+E(\bk))}\Bigg).\end{align*}
Equally, we can write as follows.
\begin{align*}
F(x,y)=&\frac{\D(x,y)^2}{|U|}-\frac{D_d}{x}\int_{\G_{\infty}^*}d\bk \Tr
 \log\left(\cos\left(\frac{xy}{2}\right)+\cosh(x\sqrt{E(\bk)^2+\D(x,y)^2})\right)\\
&-\frac{b\log 2}{x}+D_d\int_{\G_{\infty}^*}d\bk\Tr E(\bk).
\end{align*}
Since $\D(x,y)>0$ if $xy/2\in \pi(2\Z+1)$, $F$ is
well-defined. Recalling Theorem \ref{thm_main_theorem} \eqref{item_free_energy_density}, we see that
$F(\beta,\theta)$ is equal to the free energy density for
$(\beta,\theta,U)\in\R_{>0}\times\R\times\R_{<0}$ satisfying
$\beta\theta/2\notin \pi(2\Z+1)$ and \eqref{eq_coupling_constant_interval}.

We end this section by proving that the first order derivatives of $F$
are globally continuous and the second order derivatives of $F$ have jump
discontinuities across the phase boundaries. Since these properties hold
in the parameter region where $F$ is proved to be equal to the
free energy density by Theorem \ref{thm_main_theorem}, we can consider
that our many-electron system shows the second order phase transitions
driven by the temperature and the imaginary magnetic field. 

\begin{proposition}\label{prop_2nd_order_transition} 
Let $c_3$ be the positive constant appearing in Lemma
 \ref{lem_critical_theta_property} and $U\in (-c_3,0)$. Then the following
 statements hold true. 
\begin{enumerate}[(i)]
\item\label{item_phase_boundaries_confirmation}
\begin{align*}
&\R_{>0}\times \R\backslash O_+\cup O_-\\
&=\left\{(\beta,\delta \theta_{c,j,m}(\beta))\ |\ \beta \in\R_{>0},\
 j\in \{1,2\},\ m\in \N\cup\{0\},\ \delta\in \{1,-1\}\right\}\\
&=\left\{(\beta_{c,j,m}(\theta),\delta \theta)\ |\ \theta \in\R_{>0},\
 j\in \{1,2\},\ m\in \N\cup\{0\},\ \delta\in \{1,-1\}\right\}.
\end{align*}
\item\label{item_energy_regularity}
$$
F|_{O_+\cup O_-}\in C^{\infty}(O_+\cup O_-),\quad F\in C^1(\R_{>0}\times\R).
$$
\item\label{item_energy_beta_jump}
For any $\theta\in \R_{>0}$, $j\in\{1,2\}$, $m\in \N\cup \{0\}$,
     $\delta\in \{1,-1\}$, $\lim_{\beta\nearrow
     \beta_{c,j,m}(\theta)}\frac{\partial^2F}{\partial
     \beta^2}(\beta,\delta \theta)$, 
$\lim_{\beta\searrow
     \beta_{c,j,m}(\theta)}\frac{\partial^2F}{\partial
     \beta^2}(\beta,\delta \theta)$ converge and 
\begin{align*}
&\lim_{\beta\nearrow
     \beta_{c,1,m}(\theta)}\frac{\partial^2F}{\partial
     \beta^2}(\beta,\delta \theta)>\lim_{\beta\searrow
     \beta_{c,1,m}(\theta)}\frac{\partial^2F}{\partial
     \beta^2}(\beta,\delta \theta),\\
&\lim_{\beta\nearrow
     \beta_{c,2,m}(\theta)}\frac{\partial^2F}{\partial
     \beta^2}(\beta,\delta \theta)<\lim_{\beta\searrow
     \beta_{c,2,m}(\theta)}\frac{\partial^2F}{\partial
     \beta^2}(\beta,\delta \theta).
\end{align*}
\item\label{item_energy_theta_jump}
For any $\beta\in \R_{>0}$, $j\in\{1,2\}$, $m\in \N\cup \{0\}$,
     $\delta\in \{1,-1\}$, $\lim_{\theta\nearrow \delta
     \theta_{c,j,m}(\beta)}\frac{\partial^2 F}{\partial
     \theta^2}(\beta,\theta)$, $\lim_{\theta\searrow \delta
     \theta_{c,j,m}(\beta)}\frac{\partial^2 F}{\partial
     \theta^2}(\beta,\theta)$ converge and 
\begin{align*}
&\lim_{\theta\nearrow 
     \theta_{c,1,m}(\beta)}\frac{\partial^2 F}{\partial
     \theta^2}(\beta,\theta)> \lim_{\theta\searrow 
     \theta_{c,1,m}(\beta)}\frac{\partial^2 F}{\partial
     \theta^2}(\beta,\theta),\\
&\lim_{\theta\nearrow 
     \theta_{c,2,m}(\beta)}\frac{\partial^2 F}{\partial
     \theta^2}(\beta,\theta)< \lim_{\theta\searrow 
     \theta_{c,2,m}(\beta)}\frac{\partial^2 F}{\partial
     \theta^2}(\beta,\theta),\\
&\lim_{\theta\nearrow -\theta_{c,1,m}(\beta)}\frac{\partial^2 F}{\partial
     \theta^2}(\beta,\theta)< \lim_{\theta\searrow - \theta_{c,1,m}(\beta)}\frac{\partial^2 F}{\partial
     \theta^2}(\beta,\theta),\\
&\lim_{\theta\nearrow -\theta_{c,2,m}(\beta)}\frac{\partial^2 F}{\partial
     \theta^2}(\beta,\theta)> \lim_{\theta\searrow - \theta_{c,2,m}(\beta)}\frac{\partial^2 F}{\partial
     \theta^2}(\beta,\theta).
\end{align*}
\end{enumerate}
\end{proposition}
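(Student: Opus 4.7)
My plan is to introduce the auxiliary function
\begin{align*}
\tilde F(x,y,z):=\frac{z^2}{|U|}-\frac{D_d}{x}\int_{\G_{\infty}^*}d\bk\,\Tr\log\left(\cos\left(\frac{xy}{2}\right)+\cosh(x\sqrt{E(\bk)^2+z^2})\right)-\frac{b\log 2}{x}+D_d\int_{\G_{\infty}^*}d\bk\,\Tr E(\bk),
\end{align*}
so that $F(x,y)=\tilde F(x,y,\D(x,y))$. The key identity, obtained by direct differentiation, is
\begin{align*}
\frac{\partial\tilde F}{\partial z}(x,y,z)=-z\,g(x,y,z),
\end{align*}
which shows that $z=\D(x,y)$ is always a critical point of $z\mapsto\tilde F(x,y,z)$ --- either because $g(x,y,\D)=0$ (on $O_+$) or because $\D=0$ (on $O_-$). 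This envelope-theorem setup drives everything.

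\textbf{Items (i) and (ii).} For (i), the complement $\R_{>0}\times\R\backslash(O_+\cup O_-)$ equals $\{g(\cdot,\cdot,0)=0\}$; the singular lines $xy/2\in\pi(2\Z+1)$ lie in $O_+$ since $g=\infty>0$ there by definition. Because $g(\beta,\cdot,0)$ is even and $(4\pi/\beta)$-periodic, Lemma \ref{lem_critical_theta} locates the two zeros in one period and yields the characterization via $\theta_{c,j,m}$; the equivalent $\beta_{c,j,m}$ form is immediate from bijectivity. For (ii), the envelope identity gives
\begin{align*}
\frac{\partial F}{\partial x}=\frac{\partial\tilde F}{\partial x}\bigg|_{z=\D(x,y)},\quad\frac{\partial F}{\partial y}=\frac{\partial\tilde F}{\partial y}\bigg|_{z=\D(x,y)},
\end{align*}
which are continuous thanks to continuity of $\D$ (Lemma \ref{lem_regularity_Delta}) and smoothness of $\tilde F$ at all points where $\cos(xy/2)+\cosh(x\sqrt{E(\bk)^2+\D^2})>0$; this last condition holds throughout $\R_{>0}\times\R$ because on singular lines one has $\D>0$ by Theorem \ref{thm_main_theorem}\eqref{item_gap_equation_solvable}. $C^\infty$-regularity on $O_+\cup O_-$ follows from $\D\in C^\infty(O_+\cup O_-)$.

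\textbf{Items (iii) and (iv).} Differentiating once more: on $O_-$ where $\D\equiv 0$, one has $\partial^2 F/\partial y^2=\partial^2\tilde F/\partial y^2(x,y,0)$. On $O_+$, combining the envelope formula with the implicit-function-theorem identity $\partial\D/\partial y=-(\partial g/\partial y)/(\partial g/\partial z)|_{z=\D}$ and the fact that $\partial^2\tilde F/(\partial y\partial z)=-z\,\partial g/\partial y$, the extra chain-rule contribution is
\begin{align*}
\D\cdot\frac{(\partial g/\partial y)^2}{\partial g/\partial z}\bigg|_{z=\D}.
\end{align*}
Since $g$ is even in $z$, one can write $\partial g/\partial z(x,y,z)=z\,H(x,y,z)$ with $H$ smooth, so $\D/(\partial g/\partial z)|_{z=\D}=1/H(x,y,\D)$. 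As $(x,y)$ approaches a boundary point $(x_0,y_0)$, $\D\to 0$ and the ratio converges to $1/H(x_0,y_0,0)$, giving a limiting jump of $(\partial g/\partial y(x_0,y_0,0))^2/H(x_0,y_0,0)$. By \eqref{eq_gap_equation_2nd_derivative_sign} the numerator is strictly positive on each boundary curve $\theta=\pm\theta_{c,j,m}(\beta)$, all of which lie strictly between singular lines by Proposition \ref{prop_phase_boundaries}\eqref{item_critical_theta_interval}. The specific inequalities in (iv) are then read off by identifying which side of each curve lies in $O_+$ versus $O_-$, using strict monotonicity of $\theta_{c,j,m}$ from Lemma \ref{lem_critical_theta_property}\eqref{item_critical_theta_regularity} together with the symmetry $y\mapsto-y$. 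The treatment of $\partial^2 F/\partial\beta^2$ in (iii) is identical, with $\partial g/\partial y$ replaced by $\partial g/\partial x$ and \eqref{eq_gap_equation_2nd_derivative_sign} replaced by \eqref{eq_gap_equation_1st_derivative_sign}.

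\textbf{Main obstacle.} The crux is to show $H(x_0,y_0,0)<0$ at every boundary point --- otherwise the jump could vanish or flip sign. Writing $\phi(u):=\sinh(xu)/(u(\cos(xy/2)+\cosh(xu)))$ and Taylor-expanding $g$ in $z$, one obtains $H(x,y,0)=\partial^2 g/\partial z^2(x,y,0)=D_d\int_{\G_{\infty}^*}d\bk\,\Tr(\phi'(|E(\bk)|)/|E(\bk)|)$ in the spectral-calculus sense. The strict monotonicity of $\phi$ established in the proof of Lemma \ref{lem_gap_equation_solvability} gives $\phi'(u)<0$ for every $u>0$, and a Taylor expansion at $u=0$ shows that $\phi'(u)/u$ has the finite negative limit $x^3(\cos(xy/2)-2)/(3(\cos(xy/2)+1)^2)$ as $u\searrow 0$ whenever $\cos(xy/2)\neq-1$. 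This last condition is guaranteed on the boundary by Proposition \ref{prop_phase_boundaries}\eqref{item_critical_theta_interval}. Pointwise negativity of $\phi'/u$ together with integrability at the zero set of $e(\bk)$ then yields $H(x_0,y_0,0)<0$, closing the argument.
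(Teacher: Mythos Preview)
Your approach is correct and structurally identical to the paper's: the same auxiliary function (the paper calls it $\hat F$), the same envelope identity $\partial\tilde F/\partial z=-zg$, the same chain-rule computation yielding the extra term $\D\,(\partial g/\partial\bullet)^2/(\partial g/\partial z)|_{z=\D}$, and the same recognition that the key point is the finite negative limit of $z^{-1}\partial g/\partial z$ as $z\searrow 0$.

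One small gap: your final sentence for item (iii) replaces \eqref{eq_gap_equation_2nd_derivative_sign} by \eqref{eq_gap_equation_1st_derivative_sign}, but the latter is only established for $\theta_{c,j}(\beta)$, i.e.\ $m=0$. Unlike $\partial g/\partial y$, the quantity $\partial g/\partial x$ is \emph{not} $(4\pi/x)$-periodic in $y$, so the extension to $\theta_{c,j,m}$ for $m\ge 1$ needs an extra word. The paper handles this by observing from \eqref{eq_gap_equation_1st_derivative} that the last two integral terms depend only on $\cos(xy/2)$ (hence are invariant under $y\mapsto y+4\pi m/x$), while the first term is $\frac{D_d}{2}y\sin(xy/2)\cdot[\text{positive}]$; since $\sin(\beta\theta_{c,j,m}/2)=\sin(\beta\theta_{c,j}/2)$ and $\theta_{c,j,m}\ge\theta_{c,j}>0$, the first term only grows in magnitude and keeps its sign as $m$ increases, giving the monotone comparison
\[
\frac{\partial g}{\partial x}(\beta,\theta_{c,1,m}(\beta),0)\ge\frac{\partial g}{\partial x}(\beta,\theta_{c,1}(\beta),0)>0
\]
and the analogous reversed inequality for $j=2$.

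On the ``main obstacle'': your route via evenness of $g$ in $z$, Hadamard's lemma, and a direct Taylor expansion of $\phi'(u)/u$ at $u=0$ is a clean alternative to the paper's explicit inequality chain \eqref{eq_quasi_energy_denominator}, which bounds $z^{-1}\partial g/\partial z$ above by $-\tfrac{D_d}{3!}x^3\int\Tr(\cos(xy/2)+\cosh(x\sqrt{E^2+z^2}))^{-1}$ and reads off both continuity and strict negativity from that. Both arguments land in the same place.
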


\begin{proof}
\eqref{item_phase_boundaries_confirmation}: We can deduce the claim from
 Lemma \ref{lem_critical_theta}, the definitions of
 $\theta_{c,j,m}(\cdot)$, $\beta_{c,j,m}(\cdot)$ and the fact that 
$g(\beta,\theta,0)=g(\beta,|\theta|+\frac{4\pi}{\beta}n,0)$ $(\forall n\in\Z)$.

\eqref{item_energy_regularity}: Set 
\begin{align*}
D:=\left\{(x,y,z)\in\R_{>0}\times \R\times \R\ \big|\ \frac{xy}{2}\notin
 \pi(2\Z+1)\text{ or }z>0\right\},
\end{align*}
which is an open set of $\R^3$. We define the function
 $\hat{F}:D\to\R$ by
\begin{align*}
\hat{F}(x,y,z):=\frac{z^2}{|U|}-\frac{D_d}{x}\int_{\G_{\infty}^*}d\bk
 \Tr
 \log\left(\cos\left(\frac{xy}{2}\right)+\cosh(x\sqrt{E(\bk)^2+z^2})\right).
\end{align*}
We can see that 
\begin{align}
&\hat{F}\in C^{\infty}(D),\label{eq_quasi_energy_regularity}\\
&(\beta,\theta,\D(\beta,\theta))\in D,\label{eq_gap_function_graph}\\
&F(\beta,\theta)=\hat{F}(\beta,\theta,\D(\beta,\theta))-\frac{b\log
 2}{\beta}+D_d\int_{\G_{\infty}^*}d\bk \Tr E(\bk),\label{eq_energy_quasi_energy}\\
&(\forall
 (\beta,\theta)\in \R_{>0}\times \R).\notag
\end{align}
Combined with Lemma \ref{lem_regularity_Delta}, the functions
 $(\beta,\theta)\mapsto \hat{F}(\beta,\theta,\D(\beta,\theta))$,
 $(\beta,\theta)\mapsto F(\beta,\theta)$ are seen to be continuous in
 $\R_{>0}\times\R$ and $C^{\infty}$-class in $O_+\cup O_-$. 

Let us prove that $F\in C^1(\R_{>0}\times\R)$.
For $(\beta,\theta)\in O_+\cup O_-$
\begin{align*}
\frac{\partial \hat{F}}{\partial
 z}(\beta,\theta,\D(\beta,\theta))=-g(\beta,\theta, \D(\beta,\theta))\D(\beta,\theta)=0,
\end{align*}
and thus
\begin{align}
\frac{\partial}{\partial
 \beta}\hat{F}(\beta,\theta,\D(\beta,\theta))&=\frac{\partial
 \hat{F}}{\partial x}(\beta,\theta,\D(\beta,\theta))+
\frac{\partial
 \hat{F}}{\partial z}(\beta,\theta,\D(\beta,\theta))\frac{\partial
 \D}{\partial \beta}(\beta,\theta)\label{eq_quasi_energy_beta_1st}\\
&=\frac{\partial \hat{F}}{\partial
 x}(\beta,\theta,\D(\beta,\theta)),\notag\\
\frac{\partial}{\partial
 \theta}\hat{F}(\beta,\theta,\D(\beta,\theta))&=\frac{\partial
 \hat{F}}{\partial y}(\beta,\theta,\D(\beta,\theta))+
\frac{\partial
 \hat{F}}{\partial z}(\beta,\theta,\D(\beta,\theta))\frac{\partial
 \D}{\partial \theta}(\beta,\theta)\label{eq_quasi_energy_theta_2nd}\\
&=\frac{\partial \hat{F}}{\partial
 y}(\beta,\theta,\D(\beta,\theta)).\notag
\end{align}
Note that
\begin{align}
\D(\beta,\theta)=0,\quad (\forall (\beta,\theta)\in\R_{>0}\times
 \R\backslash O_+\cup O_-).\label{eq_gap_function_vanish_condition}
\end{align}
It follows from the global continuity of $\D(\cdot,\cdot)$,
 \eqref{eq_quasi_energy_regularity}, \eqref{eq_gap_function_graph},
 \eqref{eq_gap_function_vanish_condition} and the characterization of 
$\R_{>0}\times \R\backslash O_{+}\cup O_-$ given in
 \eqref{item_phase_boundaries_confirmation} that
\begin{align}
&\lim_{(\beta,\theta)\to (\beta',\theta')\atop (\beta,\theta)\in O_+\cup
 O_-}\frac{\partial^{m+n}\hat{F}}{\partial x^m\partial
 y^n}(\beta,\theta,\D(\beta,\theta))=\frac{\partial^{m+n}\hat{F}}{\partial
 x^m\partial
 y^n}(\beta',\theta',0),\label{eq_quasi_energy_derivative_convergence}\\
&(\forall (\beta',\theta')\in \R_{>0}\times \R\backslash O_+\cup O_-,\
 m,n\in\N\cup \{0\}).\notag
\end{align}
By \eqref{eq_quasi_energy_beta_1st}, \eqref{eq_quasi_energy_theta_2nd},
 \eqref{eq_quasi_energy_derivative_convergence} we can observe that
for any $(\beta',\theta')\in\R_{>0}\times \R\backslash  O_+\cup O_-$
\begin{align*}
&\lim_{(\beta,\theta)\to (\beta',\theta')\atop (\beta,\theta)\in O_+\cup
 O_-}\frac{\partial}{\partial\beta}\hat{F}(\beta,\theta,\D(\beta,\theta))
=\lim_{(\beta,\theta)\to (\beta',\theta')\atop (\beta,\theta)\in O_+\cup
 O_-}\frac{\partial\hat{F}}{\partial x}(\beta,\theta,\D(\beta,\theta))
=\frac{\partial\hat{F}}{\partial x}(\beta',\theta',0),\\
&\lim_{(\beta,\theta)\to (\beta',\theta')\atop (\beta,\theta)\in O_+\cup
 O_-}\frac{\partial}{\partial\theta}\hat{F}(\beta,\theta,\D(\beta,\theta))
=\lim_{(\beta,\theta)\to (\beta',\theta')\atop (\beta,\theta)\in O_+\cup
 O_-}\frac{\partial\hat{F}}{\partial y}(\beta,\theta,\D(\beta,\theta))
=\frac{\partial\hat{F}}{\partial y}(\beta',\theta',0),
\end{align*}
which together with the characterization of $\R_{>0}\times \R\backslash
 O_+\cup O_-$ given in \eqref{item_phase_boundaries_confirmation}
 implies that $(\beta,\theta)\mapsto
 \hat{F}(\beta,\theta,\D(\beta,\theta))$ is partially differentiable in
 $\R_{>0}\times\R$ and 
\begin{align*}
&\frac{\partial}{\partial\beta}\hat{F}(\beta,\theta,\D(\beta,\theta))=\frac{\partial
 \hat{F}}{\partial x}(\beta,\theta,\D(\beta,\theta)),\\
&\frac{\partial}{\partial\theta}\hat{F}(\beta,\theta,\D(\beta,\theta))=\frac{\partial
 \hat{F}}{\partial y}(\beta,\theta,\D(\beta,\theta)),\quad (\forall
 (\beta,\theta)\in \R_{>0}\times \R).
\end{align*}
Since $(\beta,\theta)\mapsto \frac{\partial
 \hat{F}}{\partial x}(\beta,\theta,\D(\beta,\theta))$, $(\beta,\theta)\mapsto \frac{\partial
 \hat{F}}{\partial y}(\beta,\theta,\D(\beta,\theta))$ are continuous in
 $\R_{>0}\times \R$, we can conclude that the function $(\beta,\theta)\mapsto
 \hat{F}(\beta,\theta,\D(\beta,\theta))$ belongs to
 $C^1(\R_{>0}\times\R)$ and so does the function $F$.

\eqref{item_energy_beta_jump}: By \eqref{eq_quasi_energy_beta_1st}, for
 $(\beta,\theta)\in O_+\cup O_-$
\begin{align}
\frac{\partial^2}{\partial
 \beta^2}\hat{F}(\beta,\theta,\D(\beta,\theta))&=\frac{\partial}{\partial
 \beta}\frac{\partial \hat{F}}{\partial x}(\beta,\theta,\D(\beta,\theta))
\label{eq_quasi_energy_beta_beta}\\
&=\frac{\partial^2 \hat{F}}{\partial x^2}(\beta,\theta,\D(\beta,\theta))
+\frac{\partial^2 \hat{F}}{\partial z\partial
 x}(\beta,\theta,\D(\beta,\theta))\frac{\partial \D}{\partial
 \beta}(\beta,\theta).\notag
\end{align}
In particular for $(\beta,\theta)\in O_-$
\begin{align}
\frac{\partial^2}{\partial
 \beta^2}\hat{F}(\beta,\theta,\D(\beta,\theta))=\frac{\partial^2
 \hat{F}}{\partial
 x^2}(\beta,\theta,\D(\beta,\theta)).\label{eq_quasi_energy_vanish_beta}
\end{align}
It follows from the claim \eqref{item_phase_boundaries_confirmation} and 
 \eqref{eq_quasi_energy_derivative_convergence} that
for any $\theta\in\R_{>0}$, $j\in \{1,2\}$, $m\in \N\cup \{0\}$, $\delta
 \in \{1,-1\}$
\begin{align}
\lim_{\beta\searrow
 \beta_{c,j,m}(\theta)}\frac{\partial^2\hat{F}}{\partial
 x^2}(\beta,\delta \theta,\D(\beta,\delta \theta))
&=\lim_{\beta\nearrow
 \beta_{c,j,m}(\theta)}\frac{\partial^2\hat{F}}{\partial
 x^2}(\beta,\delta \theta,\D(\beta,\delta \theta))\label{eq_quasi_energy_pre_convergence_beta}\\
&=\frac{\partial^2\hat{F}}{\partial
 x^2}(\beta_{c,j,m}(\theta),\delta \theta,0).\notag
\end{align}
For $(\beta,\theta)\in O_+$ one can derive that
\begin{align}
\frac{\partial^2\hat{F}}{\partial x\partial
 z}(\beta,\theta,\D(\beta,\theta))=-\D(\beta,\theta)\frac{\partial
 g}{\partial x}(\beta,\theta,\D(\beta,\theta)).
\label{eq_quasi_energy_1st_3rd}
\end{align}
Also by taking into account \eqref{eq_positivity_3rd_derivative},
\begin{align}
\frac{\partial \D}{\partial \beta}(\beta,\theta)=-\frac{\frac{\partial
 g}{\partial x}(\beta,\theta,\D(\beta,\theta))}{\frac{\partial
 g}{\partial
 z}(\beta,\theta,\D(\beta,\theta))}.\label{eq_gap_function_beta}
\end{align}
By combining \eqref{eq_quasi_energy_1st_3rd} with
 \eqref{eq_gap_function_beta} we obtain that
\begin{align}
\frac{\partial^2\hat{F}}{\partial x\partial
 z}(\beta,\theta,\D(\beta,\theta))\frac{\partial \D}{\partial
 \beta}(\beta,\theta)=\D(\beta,\theta)\frac{(\frac{\partial g}{\partial
 x}(\beta,\theta,\D(\beta,\theta)))^2}{\frac{\partial g}{\partial
 z}(\beta,\theta,\D(\beta,\theta))}.
\label{eq_quasi_energy_term_characterization_beta}
\end{align}
Observe that for any $(x,y,z)\in\R_{>0}\times \R\times
 \R_{>0}$
\begin{align}
&z^{-1}\frac{\partial g}{\partial z}(x,y,z)\label{eq_quasi_energy_denominator}\\
&=-D_dx \int_{\G_{\infty}^*}d\bk \Tr\Bigg(
\frac{1}{(\cos(xy/2)+\cosh(x\sqrt{E(\bk)^2+z^2}))^2(E(\bk)^2+z^2)}\notag\\
&\qquad\qquad\qquad\qquad\quad \cdot \Bigg(\sinh^2(x\sqrt{E(\bk)^2+z^2})\notag\\
&\qquad\qquad\qquad\qquad\qquad\quad
 -\left(\cosh(x\sqrt{E(\bk)^2+z^2})-\frac{\sinh(x\sqrt{E(\bk)^2+z^2})}{x\sqrt{E(\bk)^2+z^2}}\right)\notag\\
&\qquad\qquad\qquad\qquad\qquad\qquad\quad\cdot
\left(\cos\left(\frac{xy}{2}\right)+\cosh(x\sqrt{E(\bk)^2+z^2})\right)\Bigg)\Bigg)\notag\\
&\le -D_d x \int_{\G_{\infty}^*}d\bk \Tr\Bigg(
\frac{1}{(\cos(xy/2)+\cosh(x\sqrt{E(\bk)^2+z^2}))^2(E(\bk)^2+z^2)}\notag\\
&\qquad\qquad\qquad\qquad\quad \cdot (1+\cosh(x\sqrt{E(\bk)^2+z^2}))\left(
\frac{\sinh(x\sqrt{E(\bk)^2+z^2})}{x\sqrt{E(\bk)^2+z^2}}-1
\right)\Bigg)\notag\\
&\le -\frac{D_d}{3!}x^3 \int_{\G_{\infty}^*}d\bk\Tr 
\left(
\frac{1}{\cos(xy/2)+\cosh(x\sqrt{E(\bk)^2+z^2})}\right)
\notag\\
&<0.\notag
\end{align}
By Proposition \ref{prop_phase_boundaries}
 \eqref{item_critical_temperature_interval} for any $\theta\in \R_{>0}$,
 $m\in \N\cup\{0\}$, $\delta \in\{1,-1\}$, $j\in \{1,2\}$, $\delta
 \theta\beta_{c,j,m}(\theta)/2\notin \pi (2\Z+1)$. Thus we can see from
 Proposition \ref{prop_phase_boundaries} \eqref{item_phase_boundaries},
 the global continuity of $\D(\cdot,\cdot)$ and
 \eqref{eq_quasi_energy_denominator} that 
\begin{align*}
\lim_{\beta\searrow
 \beta_{c,1,m}(\theta)}\D(\beta,\delta\theta)^{-1}\frac{\partial
 g}{\partial z}(\beta,\delta \theta,\D(\beta,\delta \theta)),\quad 
\lim_{\beta\nearrow
 \beta_{c,2,m}(\theta)}\D(\beta,\delta\theta)^{-1}\frac{\partial
 g}{\partial z}(\beta,\delta \theta,\D(\beta,\delta \theta))
\end{align*}
converge to negative values. On the other hand, we can see from
 \eqref{eq_gap_equation_1st_derivative},
 \eqref{eq_gap_equation_1st_derivative_sign} and Proposition
 \ref{prop_phase_boundaries} \eqref{item_critical_theta_interval} 
that for any $\theta\in\R_{>0}$,
 $m\in \N\cup \{0\}$, $\delta \in \{1,-1\}$,
 \begin{align*}
&\lim_{\beta\searrow
 \beta_{c,1,m}(\theta)} \frac{\partial
 g}{\partial x}(\beta,\delta \theta,\D(\beta,\delta \theta))=\frac{\partial g}{\partial x}(\beta_{c,1,m}(\theta),\theta,0)\\
&=\frac{\partial g}{\partial
 x}(\beta_{c,1,m}(\theta),\theta_{c,1,m}(\beta_{c,1,m}(\theta)),0)
\ge \frac{\partial g}{\partial
 x}(\beta_{c,1,m}(\theta),\theta_{c,1}(\beta_{c,1,m}(\theta)),0)>0,\\
&\lim_{\beta\nearrow
 \beta_{c,2,m}(\theta)} \frac{\partial
 g}{\partial x}(\beta,\delta \theta,\D(\beta,\delta \theta))
=\frac{\partial g}{\partial
 x}(\beta_{c,2,m}(\theta),\theta,0)\\
&=\frac{\partial g}{\partial
 x}(\beta_{c,2,m}(\theta),\theta_{c,2,m}(\beta_{c,2,m}(\theta)),0)
\le \frac{\partial g}{\partial
 x}(\beta_{c,2,m}(\theta),\theta_{c,2}(\beta_{c,2,m}(\theta)),0)<0.
\end{align*}

It follows from Proposition \ref{prop_phase_boundaries}
 \eqref{item_phase_boundaries}, \eqref{eq_quasi_energy_beta_beta},
 \eqref{eq_quasi_energy_vanish_beta},
 \eqref{eq_quasi_energy_pre_convergence_beta}, 
\eqref{eq_quasi_energy_term_characterization_beta} and the above
 convergence results that for any $\theta\in \R_{>0}$, $m\in \N\cup
 \{0\}$, $\delta \in \{1,-1\}$
\begin{align*}
&\lim_{\beta\searrow
 \beta_{c,1,m}(\theta)}\frac{\partial^2}{\partial
 \beta^2}\hat{F}(\beta,\delta\theta,\D(\beta,\delta\theta))\\
&=\frac{\partial^2\hat{F}}{\partial
 x^2}(\beta_{c,1,m}(\theta),\delta\theta,0)+
\frac{(\frac{\partial g}{\partial x}(\beta_{c,1,m}(\theta),\theta,0))^2}{\lim_{\beta\searrow
 \beta_{c,1,m}(\theta)}\D(\beta,\delta\theta)^{-1}\frac{\partial
 g}{\partial z}(\beta,\delta\theta,\D(\beta,\delta\theta))}\\
&<\frac{\partial^2 \hat{F}}{\partial x^2}(\beta_{c,1,m}(\theta),\delta
 \theta,0)\\
&=\lim_{\beta\nearrow
 \beta_{c,1,m}(\theta)}\frac{\partial^2}{\partial
 \beta^2}\hat{F}(\beta,\delta\theta,\D(\beta,\delta\theta)),\\
&\lim_{\beta\nearrow
 \beta_{c,2,m}(\theta)}\frac{\partial^2}{\partial
 \beta^2}\hat{F}(\beta,\delta\theta,\D(\beta,\delta\theta))\\
&=\frac{\partial^2\hat{F}}{\partial
 x^2}(\beta_{c,2,m}(\theta),\delta\theta,0)+
\frac{(\frac{\partial g}{\partial x}(\beta_{c,2,m}(\theta),\theta,0))^2}{\lim_{\beta\nearrow
 \beta_{c,2,m}(\theta)}\D(\beta,\delta\theta)^{-1}\frac{\partial
 g}{\partial z}(\beta,\delta\theta,\D(\beta,\delta\theta))}\\
&<\frac{\partial^2 \hat{F}}{\partial x^2}(\beta_{c,2,m}(\theta),\delta
 \theta,0)\\
&=\lim_{\beta\searrow
 \beta_{c,2,m}(\theta)}\frac{\partial^2}{\partial
 \beta^2}\hat{F}(\beta,\delta\theta,\D(\beta,\delta\theta)),
\end{align*}
which together with the equality \eqref{eq_energy_quasi_energy} implies
 the claim. 

\eqref{item_energy_theta_jump}: By \eqref{eq_quasi_energy_theta_2nd}, for
 $(\beta,\theta)\in O_+\cup O_-$
\begin{align}
\frac{\partial^2}{\partial
 \theta^2}\hat{F}(\beta,\theta,\D(\beta,\theta))&=\frac{\partial}{\partial
 \theta}\frac{\partial \hat{F}}{\partial
 y}(\beta,\theta,\D(\beta,\theta))\label{eq_quasi_energy_theta_theta}\\
&= \frac{\partial^2\hat{F}}{\partial
 y^2}(\beta,\theta,\D(\beta,\theta))+\frac{\partial^2\hat{F}}{\partial
 z\partial y}(\beta,\theta,\D(\beta,\theta))\frac{\partial
 \D}{\partial\theta}(\beta,\theta).\notag
\end{align}
For $(\beta,\theta)\in O_-$
\begin{align}
\frac{\partial^2}{\partial\theta^2}\hat{F}(\beta,\theta,\D(\beta,\theta))=\frac{\partial^2\hat{F}}{\partial
 y^2}(\beta,\theta,\D(\beta,\theta)).\label{eq_quasi_energy_vanish_theta}
\end{align}
By the claim \eqref{item_phase_boundaries_confirmation} and 
 \eqref{eq_quasi_energy_derivative_convergence}, for any $\beta\in
 \R_{>0}$, $j\in \{1,2\}$, $m\in \N\cup\{0\}$
\begin{align}
\lim_{\theta\searrow \theta_{c,j,m}(\beta)}\frac{\partial^2
 \hat{F}}{\partial y^2}(\beta,\theta,\D(\beta,\theta))=
\lim_{\theta\nearrow \theta_{c,j,m}(\beta)}\frac{\partial^2
 \hat{F}}{\partial y^2}(\beta,\theta,\D(\beta,\theta))
=\frac{\partial^2
 \hat{F}}{\partial
 y^2}(\beta,\theta_{c,j,m}(\beta),0).\label{eq_quasi_energy_pre_convergence_theta}
\end{align}
For $(\beta,\theta)\in O_+$ we can derive in the same way as the
 derivation of \eqref{eq_quasi_energy_term_characterization_beta} that
\begin{align}
&\frac{\partial^2\hat{F}}{\partial z\partial
 y}(\beta,\theta,\D(\beta,\theta))\frac{\partial \D}{\partial
 \theta}(\beta,\theta)
=\D(\beta,\theta)\frac{(\frac{\partial g}{\partial
 y}(\beta,\theta,\D(\beta,\theta)))^2}{\frac{\partial g}{\partial
 z}(\beta,\theta,\D(\beta,\theta))}.\label{eq_quasi_energy_term_characterization_theta}
\end{align}
By Proposition \ref{prop_phase_boundaries}
 \eqref{item_critical_theta_interval} for any $\beta\in \R_{>0}$, $m\in
 \N\cup \{0\}$, $j\in \{1,2\}$, $\beta\theta_{c,j,m}(\beta)/2\notin \pi
 (2\Z+1)$. Thus we can deduce from Proposition
 \ref{prop_phase_boundaries} \eqref{item_phase_boundaries}, the global
 continuity of $\D(\cdot,\cdot)$ and \eqref{eq_quasi_energy_denominator}
 that
\begin{align*}
\lim_{\theta\searrow
 \theta_{c,1,m}(\beta)}\D(\beta,\theta)^{-1}\frac{\partial
 g}{\partial z}(\beta,\theta,\D(\beta,\theta)),\quad 
\lim_{\theta\nearrow
 \theta_{c,2,m}(\theta)}\D(\beta,\theta)^{-1}\frac{\partial
 g}{\partial z}(\beta,\theta,\D(\beta,\theta))
\end{align*}
converge to negative values. On the other hand, it follows from
 \eqref{eq_gap_equation_2nd_derivative},
 \eqref{eq_gap_equation_2nd_derivative_sign} that
\begin{align*}
&\lim_{\theta\searrow
 \theta_{c,1,m}(\beta)}\frac{\partial
 g}{\partial y}(\beta,\theta,\D(\beta,\theta))
=\frac{\partial g}{\partial y}(\beta,\theta_{c,1,m}(\beta),0)
=\frac{\partial g}{\partial y}(\beta,\theta_{c,1}(\beta),0)>0,\\
&\lim_{\theta\nearrow
 \theta_{c,2,m}(\beta)}\frac{\partial
 g}{\partial y}(\beta,\theta,\D(\beta,\theta))
=\frac{\partial g}{\partial y}(\beta,\theta_{c,2}(\beta),0)<0.
\end{align*}

By Proposition \ref{prop_phase_boundaries}
 \eqref{item_phase_boundaries}, \eqref{eq_quasi_energy_theta_theta},
 \eqref{eq_quasi_energy_vanish_theta},
 \eqref{eq_quasi_energy_pre_convergence_theta}, 
\eqref{eq_quasi_energy_term_characterization_theta} and the above
 convergent properties, for any $\beta\in \R_{>0}$, $m\in \N\cup \{0\}$
\begin{align*}
&\lim_{\theta\searrow
 \theta_{c,1,m}(\beta)}\frac{\partial^2}{\partial
 \theta^2}\hat{F}(\beta,\theta,\D(\beta,\theta))\\
&=\frac{\partial^2\hat{F}}{\partial
 y^2}(\beta, \theta_{c,1,m}(\beta),0)+
\frac{(\frac{\partial g}{\partial y}(\beta,\theta_{c,1}(\beta),0))^2}{\lim_{\theta\searrow
 \theta_{c,1,m}(\beta)}\D(\beta,\theta)^{-1}\frac{\partial
 g}{\partial z}(\beta,\theta,\D(\beta,\theta))}\\
&<\frac{\partial^2 \hat{F}}{\partial y^2}(\beta,\theta_{c,1,m}(\beta),0)\\
&=\lim_{\theta\nearrow
 \theta_{c,1,m}(\beta)}\frac{\partial^2}{\partial
 \theta^2}\hat{F}(\beta,\theta,\D(\beta,\theta)),\\
&\lim_{\theta\nearrow
 \theta_{c,2,m}(\beta)}\frac{\partial^2}{\partial
 \theta^2}\hat{F}(\beta,\theta,\D(\beta,\theta))\\
&=\frac{\partial^2\hat{F}}{\partial
 y^2}(\beta,\theta_{c,2,m}(\beta),0)+
\frac{(\frac{\partial g}{\partial y}(\beta,\theta_{c,2}(\beta),0))^2}{\lim_{\theta\nearrow
 \theta_{c,2,m}(\beta)}\D(\beta,\theta)^{-1}\frac{\partial
 g}{\partial z}(\beta,\theta,\D(\beta,\theta))}\\
&<\frac{\partial^2 \hat{F}}{\partial y^2}(\beta,\theta_{c,2,m}(\beta),0)\\
&=\lim_{\theta\searrow
 \theta_{c,2,m}(\beta)}\frac{\partial^2}{\partial
 \theta^2}\hat{F}(\beta,\theta,\D(\beta,\theta)).
\end{align*}
Now recalling \eqref{eq_energy_quasi_energy}, we reach the conclusion
 that 
\begin{align*}
&\lim_{\theta\searrow \theta_{c,1,m}(\beta)}\frac{\partial^2 F}{\partial
 \theta^2}(\beta,\theta) < \lim_{\theta\nearrow \theta_{c,1,m}(\beta)}\frac{\partial^2 F}{\partial
 \theta^2}(\beta,\theta),\\
&\lim_{\theta\nearrow \theta_{c,2,m}(\beta)}\frac{\partial^2 F}{\partial
 \theta^2}(\beta,\theta) < \lim_{\theta\searrow \theta_{c,2,m}(\beta)}\frac{\partial^2 F}{\partial
 \theta^2}(\beta,\theta),\quad (\forall \beta\in\R_{>0}).
\end{align*}
Note that for any $(\beta,\theta)\in\R_{>0}\times \R$,
 $\D(\beta,\theta)=\D(\beta,-\theta)$ and thus
\begin{align*}
&F(\beta,\theta)=F(\beta,-\theta),\\
&(\beta,\theta)\in O_+\cup O_-\text{ if and only if }(\beta,-\theta)\in
 O_+\cup O_-.
\end{align*}
Thus for $(\beta,\theta)\in O_+\cup O_-$, $\frac{\partial^2 F}{\partial
 \theta^2}(\beta,-\theta)=\frac{\partial^2 F}{\partial
 \theta^2}(\beta,\theta)$. Therefore
\begin{align*}
\lim_{\theta\nearrow -\theta_{c,1,m}(\beta)}\frac{\partial^2
 F}{\partial \theta^2}(\beta,\theta)&=  \lim_{\theta\searrow
 \theta_{c,1,m}(\beta)}\frac{\partial^2
 F}{\partial \theta^2}(\beta,\theta)<
\lim_{\theta\nearrow
 \theta_{c,1,m}(\beta)}\frac{\partial^2
 F}{\partial \theta^2}(\beta,\theta)\\
&=\lim_{\theta\searrow
 -\theta_{c,1,m}(\beta)}\frac{\partial^2
 F}{\partial \theta^2}(\beta,\theta),\\
\lim_{\theta\searrow -\theta_{c,2,m}(\beta)}\frac{\partial^2
 F}{\partial \theta^2}(\beta,\theta)&=  \lim_{\theta\nearrow
 \theta_{c,2,m}(\beta)}\frac{\partial^2
 F}{\partial \theta^2}(\beta,\theta)<
\lim_{\theta\searrow
 \theta_{c,2,m}(\beta)}\frac{\partial^2
 F}{\partial \theta^2}(\beta,\theta)\\
&=\lim_{\theta\nearrow
 -\theta_{c,2,m}(\beta)}\frac{\partial^2
 F}{\partial \theta^2}(\beta,\theta).
\end{align*}
The claims have been proved.
\end{proof}

\section{Formulation}\label{sec_formulation}

In this section we derive Grassmann integral formulations of the grand
canonical partition function of the model Hamiltonian. In essence the derivation
can be completed by following \cite[\mbox{Section 2}]{K_BCS}. In order to support the readers, we state several lemmas
leading to Lemma \ref{lem_final_Grassmann_formulation} step by step 
along the same lines as \cite[\mbox{Section 2}]{K_BCS}. One should be able
to prove Lemma \ref{lem_final_Grassmann_formulation} by following the
outline given in this section and the proofs presented in
\cite[\mbox{Section 2}]{K_BCS}. We intend to adopt the notations used to formulate the 1-band
problem in \cite[\mbox{Section 2}]{K_BCS} as much as possible so that
the formulation procedure can be seen parallel.

Thanks to the next lemma, we can restrict the value of $\theta$ to prove
the main results of this paper.

\begin{lemma}\label{lem_restriction_theta}
Assume that $\theta'\in (-2\pi/\beta,2\pi/\beta]$ and $\theta=\theta'$
 $(\text{mod }4\pi/\beta)$. Then
\begin{align*}
&\Tr e^{-\beta (\sH+i\theta \sS_z+\sF)}=\Tr e^{-\beta(\sH+i|\theta'|\sS_z+\sF)},\\
&\Tr (e^{-\beta (\sH+i\theta \sS_z+\sF)}\cO)=\Tr (e^{-\beta
 (\sH+i|\theta'| \sS_z+\sF)}\cO),\\
&(\forall\cO\in \{\psi_{\hrho\hbx\ua}^*\psi_{\hrho\hbx\da}^*,\ 
\psi_{\heta\hby\da}\psi_{\heta\hby\ua},\ 
\psi_{\hrho\hbx\ua}^*\psi_{\hrho\hbx\da}^*\psi_{\heta\hby\da}\psi_{\heta\hby\ua}\}).
\end{align*}
\end{lemma}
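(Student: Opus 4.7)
The plan is to combine two ingredients: (1) the periodicity $e^{-4\pi i k \sS_z}=I$ on the Fermionic Fock space and (2) the reflection symmetry $\theta \leftrightarrow -\theta$ already established in Lemma \ref{lem_thermal_expectation_real}.

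First, I would check that the operator $\sS_z$ commutes with each of the three pieces $\sH$, $i\theta\sS_z$, $\sF$ building up the exponent. Trivially $[\sS_z,\sS_z]=0$. Since $\sH_0$ is constructed from the charge-diagonal hoppings $\psi^*_{\rho\bx\sigma}\psi_{\eta\by\sigma}$ with a fixed spin $\sigma$, it preserves the difference $\hat n_{\ua}-\hat n_{\da}$ site-by-site, so $[\sS_z,\sH_0]=0$. For $\sV$ the monomial $\psi_{\rho\bx\uparrow}^*\psi_{\rho\bx\downarrow}^*\psi_{\eta\by\downarrow}\psi_{\eta\by\uparrow}$ creates and annihilates one $\uparrow$ and one $\downarrow$, hence commutes with $\sS_z$; the same cancellation handles the two monomials in $\sF$. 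Therefore the whole generator $\sH+i\theta\sS_z+\sF$ commutes with $\sS_z$, which allows us to factor
\[
e^{-\beta(\sH+i\theta\sS_z+\sF)}=e^{-\beta(\sH+i\theta'\sS_z+\sF)}\,e^{-4\pi i k\sS_z}
\]
whenever $\theta=\theta'+4\pi k/\beta$, $k\in\Z$.

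Second, I would observe that on the Fock space $F_f(L^2(\cB\times\G\times\spin))$ the operator $2\sS_z=\sum_{(\rho,\bx)}(\psi_{\rho\bx\ua}^*\psi_{\rho\bx\ua}-\psi_{\rho\bx\da}^*\psi_{\rho\bx\da})$ has integer eigenvalues, since each number operator has eigenvalues in $\{0,1\}$. Consequently $e^{-4\pi i k\sS_z}=e^{-2\pi i k\,(2\sS_z)}=I$, so the above identity gives
\[
e^{-\beta(\sH+i\theta\sS_z+\sF)}=e^{-\beta(\sH+i\theta'\sS_z+\sF)}.
\]
Taking the trace, and taking the trace against any of the listed operator insertions $\cO$, yields the claim with $\theta'$ in place of $\theta$.

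Third, I would dispose of the sign of $\theta'$. If $\theta'\ge 0$ we are done. If $\theta'<0$, then $|\theta'|=-\theta'$ and the desired identity is precisely the reflection symmetry \eqref{eq_trace_theta_reflection} (and its operator-insertion analogue, which is also included in the statement of Lemma \ref{lem_thermal_expectation_real}). Applying that lemma with $\theta'$ in place of $\theta$ finishes the argument.

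There is essentially no obstacle: step one is the commutation check, step two is the half-integer/integer spectrum observation for $2\sS_z$, and step three is an invocation of a previously proved identity. The only mild care required is to verify the commutation $[\sS_z,\sF]=0$ (since $\sF$ is the non-Hermitian $U(1)$-breaking term, this is the only place where the argument could conceivably fail), but the pair operators $\psi_{\rho\bx\ua}^*\psi_{\rho\bx\da}^*$ and $\psi_{\rho\bx\da}\psi_{\rho\bx\ua}$ are $\sS_z$-neutral, so commutation is immediate.
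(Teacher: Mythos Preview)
Your proof is correct and follows essentially the same approach as the paper: the paper also uses that $\sS_z$ commutes with $\sH$, $\sF$, $\cO$ together with the half-integer spectrum of $\sS_z$ (phrased as ``identifying the Fock space with the direct sum of the eigenspaces of $\sS_z$'') to reduce $\theta$ to $\theta'$, and then invokes \eqref{eq_trace_theta_reflection} to pass from $\theta'$ to $|\theta'|$. Your version simply makes the commutation checks and the integer-spectrum observation explicit.
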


\begin{proof} This is essentially same as \cite[\mbox{Lemma
 1.2}]{K_BCS}. By using the fact that $\sS_z$ commutes with $\sH$,
 $\sF$, $\cO$ and identifying the Fock space with the direct sum of
 the eigenspaces of $\sS_z$ we can replace $\theta$ by $\theta'$ inside the
 trace operations. Then by \eqref{eq_trace_theta_reflection} we can
 replace $\theta'$ by $|\theta'|$. 
\end{proof}

In the rest of the paper for $\beta\in\R_{>0}$, $\theta\in \R$ we let
$\theta(\beta)$ denote $|\theta'|$, where 
$\theta'\in (-2\pi/\beta,2\pi/\beta]$ and $\theta=\theta'$
$(\text{mod }4\pi/\beta)$. By the assumption $\beta\theta/2\notin
\pi(2\Z+1)$ we have that $\theta(\beta)\in [0,2\pi/\beta)$. 

We are going to formulate the normalized partition function
$$
\frac{\Tr e^{-\beta(\sH+i\theta(\beta)\sS_z+\sF+\sA)}}{\Tr e^{-\beta (\sH_0+i\theta(\beta)\sS_z)}}
$$
into a time-continuum limit of a finite-dimensional Grassmann Gaussian
integral, where we set 
\begin{align*}
&\sA:=\la_1\sA_1+\la_2\sA_2,\\
&\sA_1:=\psi_{\hrho\hbx\ua}^*\psi_{\hrho\hbx\da}^*,\\
&\sA_2:=\psi_{\hrho\hbx\ua}^*\psi_{\hrho\hbx\da}^*\psi_{\heta\hby\da}\psi_{\heta\hby\ua}
\end{align*}
with the artificial parameters $\la_1,\la_2\in\C$ and fixed sites
$(\hrho,\hbx),(\heta,\hby)\in \cB\times \G_{\infty}$. 
The reason why we insert the operator $\sA$ is that we can simply derive the
thermal expectations of our interest by differentiating the partition
function with the parameters $\la_1,\la_2$. We can compute the
denominator and check that it is non-zero because of the property $\theta(\beta)\in [0,2\pi/\beta)$.

\begin{lemma}\label{lem_free_partition_function}
\begin{align*}
\Tr e^{-\beta(\sH_0+i\theta(\beta)\sS_z)}&=\prod_{\bk\in \G^*}\det\left(
1+2\cos\left(\frac{\beta\theta(\beta)}{2}\right)e^{-\beta
 E(\bk)}+e^{-2\beta E(\bk)}\right)\\
&=e^{-\beta \sum_{\bk\in\G^*}\Tr E(\bk)}2^{bL^d}
\prod_{\bk\in \G^*}\det\left(
\cos\left(\frac{\beta\theta(\beta)}{2}\right)+\cosh(\beta
 E(\bk))\right).
\end{align*}
\end{lemma}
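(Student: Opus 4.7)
The plan is to diagonalize the one-body operator $\sH_0+i\theta(\beta)\sS_z$ and then reduce the trace to a product of four-state traces indexed by single-particle quantum numbers.

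First I would pass to momentum space. Introduce
\[
\tilde\psi_{\rho\bk\s}:=L^{-d/2}\sum_{\bx\in\G}e^{-i\<\bx,\bk\>}\psi_{\rho\bx\s},
\qquad (\bk\in\G^*),
\]
which is a unitary transformation on the one-particle space and so extends to a unitary on $F_f(L^2(\cB\times\G\times\spin))$. Substituting the inverse transform into the definition of $\sH_0$ collapses the $\bx,\by$-sums and produces
\[
\sH_0=\sum_{\bk\in\G^*}\sum_{\s\in\spin}\sum_{\rho,\eta\in\cB}E(\bk)(\rho,\eta)\,\tilde\psi_{\rho\bk\s}^*\tilde\psi_{\eta\bk\s},
\]
and similarly $\sS_z=\tfrac12\sum_{\bk,\rho}(\tilde\psi_{\rho\bk\ua}^*\tilde\psi_{\rho\bk\ua}-\tilde\psi_{\rho\bk\da}^*\tilde\psi_{\rho\bk\da})$. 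Next, I would use the spectral decomposition \eqref{eq_spectral_decomposition}, $E(\bk)=\sum_{j=1}^{b'}e_j(\bk)P_j(\bk)$, and the self-adjointness \eqref{eq_self_adjointness} to find a unitary $U(\bk)\in\Mat(b,\C)$ diagonalizing $E(\bk)$; setting $\hat\psi_{j\bk\s}:=\sum_\rho U(\bk)^*(j,\rho)\tilde\psi_{\rho\bk\s}$ gives mutually anticommuting creation/annihilation operators with
\[
\sH_0+i\theta(\beta)\sS_z
=\sum_{\bk\in\G^*}\sum_{j=1}^{b}\sum_{\s\in\spin}\Bigl(e_j(\bk)+i\s\frac{\theta(\beta)}{2}\Bigr)\hat\psi_{j\bk\s}^*\hat\psi_{j\bk\s},
\]
where I identify $\ua,\da$ with $+1,-1$ and repeat each $e_j(\bk)$ according to its multiplicity, so that $j$ runs over $\{1,\ldots,b\}$.

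Second, since the operator is now a sum of commuting number operators $\hat n_{j\bk\s}:=\hat\psi_{j\bk\s}^*\hat\psi_{j\bk\s}$, the Fock space factors as a tensor product over the modes $(j,\bk,\s)$ and the trace factorizes:
\[
\Tr e^{-\beta(\sH_0+i\theta(\beta)\sS_z)}
=\prod_{\bk\in\G^*}\prod_{j=1}^{b}\prod_{\s\in\spin}\bigl(1+e^{-\beta(e_j(\bk)+i\s\theta(\beta)/2)}\bigr).
\]
Grouping the two spin factors for each $(j,\bk)$ and expanding gives
\[
\prod_{\s}\bigl(1+e^{-\beta(e_j(\bk)+i\s\theta(\beta)/2)}\bigr)
=1+2\cos\!\Bigl(\tfrac{\beta\theta(\beta)}{2}\Bigr)e^{-\beta e_j(\bk)}+e^{-2\beta e_j(\bk)}.
\]
Taking the product over $j=1,\ldots,b$ is exactly the determinant of the matrix $1+2\cos(\beta\theta(\beta)/2)e^{-\beta E(\bk)}+e^{-2\beta E(\bk)}$ (all three summands are polynomials in the simultaneously diagonalized $E(\bk)$), which yields the first displayed equality.

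Finally, for the second equality I would factor $e^{-\beta E(\bk)}$ out of the bracket, using that $E(\bk)$ commutes with itself:
\[
1+2\cos\!\Bigl(\tfrac{\beta\theta(\beta)}{2}\Bigr)e^{-\beta E(\bk)}+e^{-2\beta E(\bk)}
=2\,e^{-\beta E(\bk)}\Bigl(\cos\!\Bigl(\tfrac{\beta\theta(\beta)}{2}\Bigr)+\cosh(\beta E(\bk))\Bigr),
\]
so that $\det(\cdots)=2^{b}e^{-\beta\Tr E(\bk)}\det(\cos(\beta\theta(\beta)/2)+\cosh(\beta E(\bk)))$, and taking the product over the $L^d$ momenta in $\G^*$ yields the stated form. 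Nothing in the argument is delicate; the only point requiring attention is that the permissibility of replacing $\theta$ by $\theta(\beta)$ is already granted by Lemma \ref{lem_restriction_theta}, and that $\theta(\beta)\in[0,2\pi/\beta)$ ensures $\cos(\beta\theta(\beta)/2)>-1$, so the resulting determinant is strictly positive and the partition function is nonzero.
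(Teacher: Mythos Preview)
Your proof is correct and follows exactly the approach the paper indicates: diagonalize $\sH_0+i\theta(\beta)\sS_z$ in momentum and band variables, factorize the Fock-space trace over one-particle modes, and combine the two spin factors. The paper's own proof is only a one-line reference to the $b=1$ case in \cite{K_BCS}, so you have simply written out in full what the paper leaves implicit; the closing remarks about Lemma~\ref{lem_restriction_theta} and positivity are not needed for the statement itself but do no harm.
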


\begin{proof} This is a $b$-band version of \cite[\mbox{Lemma
 2.1}]{K_BCS}. One can diagonalize $\sH_0+i\theta(\beta)\sS_z$ with respect to
 the band index and derive the result.
\end{proof}

To state the first Grassmann integral formulation, let us introduce the
Grassmann algebra and the covariance. With the parameter $h\in
\frac{2}{\beta}\N$, set $[0,\beta)_h:=\{0,1/h,2/h,\cdots,\beta-1/h\}$,
which is a discretization of $[0,\beta)$. Define the index sets $J_0$,
$J$ by $J_0:=\cB\times \G\times \spin \times [0,\beta)_h$, $J:=J_0\times
\{1,-1\}$.  
Let $\cW$ be the complex vector space spanned by the abstract basis
$\{\psi_X\}_{X\in J}$. We let $\bigwedge \cW$ denote the Grassmann
algebra generated by $\{\psi_X\}_{X\in J}$. For $X\in J_0$ we also use the
notation $\opsi_X$, $\psi_X$ in place of $\psi_{(X,1)}$, $\psi_{(X,-1)}$
respectively. We do not restate the definitions and basic properties of
finite-dimensional Grassmann algebra and Grassmann integrations in detail. The
readers should refer to \cite[\mbox{Subsection 2.1}]{K_BCS} for the
summary of them in line with our purposes or to \cite{FKT} for more
general statements. Let us introduce the Grassmann polynomials
$\sV(\psi)$, $\sF(\psi)$, $\sA^1(\psi)$, $\sA^2(\psi)$, $\sA(\psi)$ $(\in \bigwedge
\cW)$ formulating the operators $\sV$, $\sF$, $\sA^1$, $\sA^2$, $\sA$
respectively as follows. 
\begin{align*}
&\sV(\psi):=\frac{U}{L^dh}\sum_{(\rho,\bx),(\eta,\by)\in \cB\times
 \G}\sum_{s\in [0,\beta)_h}\opsi_{\rho\bx\ua s}\opsi_{\rho\bx\da
 s}\psi_{\eta\by\da s}\psi_{\eta\by\ua s},\\
&\sF(\psi):=\frac{\g}{h}\sum_{(\rho,\bx)\in \cB\times
 \G}\sum_{s\in [0,\beta)_h}(\opsi_{\rho\bx\ua s}\opsi_{\rho\bx\da
 s}+\psi_{\rho\bx\da s}\psi_{\rho\bx\ua s}),\\
&\sA^1(\psi):=\frac{1}{h}\sum_{s\in[0,\beta)_h}\opsi_{\hrho r_L(\hbx)\ua
 s}\opsi_{\hrho r_L(\hbx)\da s},\\
&\sA^2(\psi):=\frac{1}{h}\sum_{s\in[0,\beta)_h}\opsi_{\hrho r_L(\hbx)\ua
 s}\opsi_{\hrho r_L(\hbx)\da s}\psi_{\heta r_L(\hby)\da s}\psi_{\heta r_L(\hby)\ua
 s},\\
&\sA(\psi):=\sum_{j=1}^2\la_j\sA^j(\psi).
\end{align*}

The covariance $G$ for the Grassmann Gaussian integral is defined as the
free 2-point correlation function. For
$(\rho,\bx,\s,s),(\eta,\by,\tau,t)\in \cB\times \G_{\infty}\times
\spin\times [0,\beta)$
\begin{align*}
&G(\rho\bx\s s,\eta\by \tau t):=\frac{\Tr (e^{-\beta
 (\sH_0+i\theta(\beta)\sS_z)}(1_{s\ge
 t}\psi_{\rho\bx\s}^*(s)\psi_{\eta\by\tau}(t)
-1_{s< t}\psi_{\eta\by\tau}(t)\psi_{\rho\bx\s}^*(s)))}{\Tr e^{-\beta
 (\sH_0+i\theta(\beta)\sS_z)}},
\end{align*}
where
$\psi_{\rho\bx\s}^{(*)}(s):=e^{s(\sH_0+i\theta(\beta)\sS_z)}\psi_{\rho\bx\s}^{(*)}e^{-s(\sH_0+i\theta(\beta)\sS_z)}$.
According to the conventional definition, any covariance for Grassmann
Gaussian integral on $\bigwedge \cW$ is a map from $J_0^2$ to $\C$. If
we follow the convention, we should introduce our covariance as the
restriction $G|_{J_0^2}$. However, we call $G$ covariance and omit the
sign $|_{J_0^2}$ even when the argument is restricted to $J_0^2$ for
simplicity. 

For $r\in \R_{>0}$ let $D(r)$ denote the open disk $\{z\in \C\ |\
|z|<r\}$.

\begin{lemma}\label{lem_first_Grassmann_formulation}
For any $r\in \R_{>0}$
\begin{align*}
\lim_{h\to\infty\atop h\in
 \frac{2}{\beta}\N}\sup_{\bla\in\overline{D(r)}^2}
\left|\int e^{-\sV(\psi)-\sF(\psi)-\sA(\psi)}d\mu_{G}(\psi)-\frac{\Tr
 e^{-\beta(\sH+i\theta(\beta)\sS_z+\sF+\sA)}}{\Tr e^{-\beta
 (\sH_0+i\theta(\beta)\sS_z)}}\right|=0.
\end{align*}
Here $\bla$ denotes $(\la_1,\la_2)$.
\end{lemma}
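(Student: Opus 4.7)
The plan is to adapt the time-continuum Grassmann integral formulation established in \cite[\mbox{Section 2}]{K_BCS} from the single-band to the $b$-band setting and to include the artificial source operator $\sA$. The key ingredients are a Trotter product expansion of the non-Hermitian exponential and the standard correspondence between traces of Fermionic operator products and finite-dimensional Grassmann Gaussian integrals.

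First, for fixed $h\in\frac{2}{\beta}\N$ I would write
\begin{align*}
e^{-\beta(\sH+i\theta(\beta)\sS_z+\sF+\sA)}
= \lim_{h\to\infty}\bigl(e^{-\frac{1}{h}(\sH_0+i\theta(\beta)\sS_z)}e^{-\frac{1}{h}\sV}e^{-\frac{1}{h}\sF}e^{-\frac{1}{h}\sA}\bigr)^{\beta h}
\end{align*}
by Trotter's product formula, divide by $\Tr e^{-\beta(\sH_0+i\theta(\beta)\sS_z)}$, and insert $\beta h$ resolutions of identity in the Fock space that encode the imaginary-time slices labeled by $s\in[0,\beta)_h$. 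Because $\sV$, $\sF$, $\sA$ are sums of normal-ordered Fermionic monomials of even degree, each slice can be replaced by a Grassmann monomial via the coherent-state representation exactly as in the derivation of \cite[\mbox{Lemma 2.2}]{K_BCS}; the free quadratic part contributes the two-point function $G$ as covariance of the resulting Grassmann Gaussian integral. The multi-band index $\rho\in\cB$ only enlarges the index set $J_0$ and plays no role different from the spin variable, so the combinatorial identities giving the quadratic-action Grassmann integral go through unchanged. The source term $\sA$, being a sum of two normal-ordered monomials multiplied by $\la_j\in\overline{D(r)}$, is absorbed into the action $\sA(\psi)$ in the same manner as $\sV$ and $\sF$.

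Next I would establish the convergence as $h\to\infty$, uniformly in $\bla\in\overline{D(r)}^2$. The operator $\sH+i\theta(\beta)\sS_z+\sF+\sA$ is bounded on the finite-dimensional Fock space $F_f(L^2(\cB\times\G\times\spin))$, with operator norm bounded by a constant $K$ depending on $L$, $\beta$, $\theta(\beta)$, $U$, $\g$, $r$ but not on $\bla$ inside $\overline{D(r)}^2$. Applying the standard quantitative Trotter estimate (of the form $\|e^{-\beta A}-(e^{-\frac{1}{h}A_1}e^{-\frac{1}{h}A_2}\cdots)^{\beta h}\|\le c\, h^{-1}$ with constants controlled by $\|A\|$) gives a convergence rate which is uniform in $\bla\in\overline{D(r)}^2$. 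On the Grassmann side, the argument of \cite[\mbox{Lemma 2.2}]{K_BCS} shows that the finite Grassmann integral with discretization parameter $h$ equals the corresponding trace of the Trotter product, so both sides of the claimed inequality converge to the same continuum limit at the same rate.

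The main obstacle will be keeping track of the non-Hermitian factor $i\theta(\beta)\sS_z$: one cannot rely on spectral theorem estimates directly and instead must bound the Trotter remainder in operator norm using $\|i\theta(\beta)\sS_z\|\le \frac{1}{2}b L^d\,\theta(\beta)$, together with the boundedness of $\sH_0$, $\sV$, $\sF$, $\sA$. Once this norm bound is in place, the rest of the argument is a transcription of the procedure in \cite[\mbox{Section 2}]{K_BCS}, with the only new feature being the trivial bookkeeping over $\rho\in\cB$ and the compact-set uniformity over $\bla$, which follows because all $\bla$-dependence enters polynomially through $\sA(\psi)$ and $\sA$, so the Trotter constant can be taken independent of $\bla\in\overline{D(r)}^2$ at the expense of replacing $|\la_j|$ by $r$ in every estimate.
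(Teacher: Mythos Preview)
Your proposal is correct and follows the same approach as the paper, which simply states that the proof is parallel to that of \cite[\mbox{Lemma 2.2}]{K_BCS}; you have accurately sketched what that parallel argument entails, including the Trotter expansion, the coherent-state/Grassmann correspondence with covariance $G$, the harmless enlargement of the index set by $\rho\in\cB$, and the uniformity in $\bla$ via operator-norm bounds on the finite-dimensional Fock space.
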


\begin{proof}
The proof is parallel to that of \cite[\mbox{Lemma 2.2}]{K_BCS}.
\end{proof}

The next step is to reformulate the Grassmann Gaussian integral given in
Lemma \ref{lem_first_Grassmann_formulation} into a hybrid of a
Gaussian integral with Grassmann variables and a Gaussian integral
with real variables. Define $\sV_+(\psi)$, $\sV_-(\psi)$, $\sW(\psi)\in
\bigwedge \cW$ by
\begin{align*}
&\sV_+(\psi):=\frac{|U|^{\frac{1}{2}}}{\beta^{\frac{1}{2}}L^{\frac{d}{2}}h}\sum_{(\rho,\bx)\in\cB\times\G}\sum_{s\in
 [0,\beta)_h}\opsi_{\rho\bx\ua s} \opsi_{\rho\bx\da s},\\
&\sV_-(\psi):=\frac{|U|^{\frac{1}{2}}}{\beta^{\frac{1}{2}}L^{\frac{d}{2}}h}\sum_{(\rho,\bx)\in\cB\times\G}\sum_{s\in
 [0,\beta)_h}\psi_{\rho\bx\da s} \psi_{\rho\bx\ua s},\\
&\sW(\psi):=\frac{U}{\beta L^dh^2}\sum_{(\rho,\bx),(\eta,\by)\in\cB\times\G}\sum_{s,t\in
 [0,\beta)_h}\opsi_{\rho\bx\ua s} \opsi_{\rho\bx\da s}\psi_{\eta\by\da
 t} \psi_{\eta\by\ua t}.
\end{align*}

\begin{lemma}\label{lem_hybrid_Grassmann_formulation}
\begin{align*}
&\int e^{-\sV(\psi)-\sF(\psi)-\sA(\psi)}d\mu_G(\psi)\\
&=\frac{1}{\pi}\int_{\R^2}d\phi_1d\phi_2 e^{-|\phi|^2}\int
 e^{-\sV(\psi)+\sW(\psi)-\sF(\psi)-\sA(\psi)+\phi \sV_+(\psi)+\ophi \sV_-(\psi)}
d\mu_G(\psi),
\end{align*}
where $\phi=\phi_1+i\phi_2$, $|\phi|=\|\phi\|_{\C}$. 
\end{lemma}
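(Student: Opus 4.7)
The identity to prove is a Hubbard--Stratonovich transformation for the reduced BCS interaction. The plan is to use the decoupling identity
\begin{align*}
\frac{1}{\pi}\int_{\R^2}d\phi_1 d\phi_2\, e^{-|\phi|^2+\phi A+\ophi B}=e^{AB}
\end{align*}
valid whenever $A,B\in \bigwedge\cW$ are even Grassmann elements (so they commute with each other, with $\phi$, and with the Grassmann integration). Setting $A:=\sV_+(\psi)$, $B:=\sV_-(\psi)$ will produce the factor $e^{\sV_+(\psi)\sV_-(\psi)}$, and the proof reduces to matching the quartic Grassmann polynomial $\sV_+(\psi)\sV_-(\psi)$ against $\sV(\psi)$ and $\sW(\psi)$.

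The first step is the algebraic check of the key identity
\begin{align*}
\sV_+(\psi)\sV_-(\psi)=-\sW(\psi).
\end{align*}
Indeed, since $\opsi_{\rho\bx\ua s}\opsi_{\rho\bx\da s}$ is an even monomial it commutes past $\sV_-(\psi)$, and multiplying the sums directly gives
\begin{align*}
\sV_+(\psi)\sV_-(\psi)=\frac{|U|}{\beta L^d h^2}\sum_{(\rho,\bx),(\eta,\by)\in\cB\times\G}\sum_{s,t\in[0,\beta)_h}\opsi_{\rho\bx\ua s}\opsi_{\rho\bx\da s}\psi_{\eta\by\da t}\psi_{\eta\by\ua t},
\end{align*}
which equals $-\sW(\psi)$ because $U=-|U|$. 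In particular $\sV(\psi)$ does \emph{not} appear in this product: $\sV(\psi)$ is the diagonal-in-time part (only $s=t$), whereas $\sV_+\sV_-$ is the full double sum, which is why we needed to introduce $\sW$ as a compensating term in the exponent.

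The second step is the Gaussian integral identity. Parametrize $\phi=\phi_1+i\phi_2$ and complete the square
\begin{align*}
-|\phi|^2+\phi A+\ophi B=-\left(\phi_1-\tfrac{A+B}{2}\right)^2-\left(\phi_2-\tfrac{i(A-B)}{2}\right)^2+AB,
\end{align*}
where we used $AB=BA$ to collapse $(A+B)^2/4-(A-B)^2/4$ to $AB$. Because $A,B$ are fixed Grassmann polynomials and the exponential expands into a finite polynomial in $A,B$ multiplied by an ordinary Gaussian in $\phi$, the $\phi$-integral is absolutely convergent term by term, and the usual contour shifts in $\phi_1,\phi_2$ are justified by an $N!/N!$ argument on each monomial of the expansion. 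This yields the decoupling identity.

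The final step is to assemble the pieces. Expanding $e^{\phi\sV_+(\psi)+\ophi\sV_-(\psi)}$ into a finite polynomial in $\phi,\ophi$ (finiteness follows from nilpotency of Grassmann variables), Fubini for the product of the Grassmann Gaussian integration $d\mu_G$ and the ordinary $d\phi_1 d\phi_2$ is immediate since only finitely many monomials occur. Applying the Gaussian identity with $A=\sV_+(\psi)$, $B=\sV_-(\psi)$ inside the Grassmann integral gives
\begin{align*}
\frac{1}{\pi}\int_{\R^2}d\phi_1 d\phi_2\, e^{-|\phi|^2+\phi\sV_+(\psi)+\ophi\sV_-(\psi)}=e^{\sV_+(\psi)\sV_-(\psi)}=e^{-\sW(\psi)},
\end{align*}
and multiplying by $e^{-\sV(\psi)+\sW(\psi)-\sF(\psi)-\sA(\psi)}$ cancels the $\sW(\psi)$ in the exponent, reproducing the left-hand side. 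No serious obstacle arises: the only thing to watch is the commutativity needed for the completion of squares, which is guaranteed because $\sV_+(\psi),\sV_-(\psi)$ are even Grassmann polynomials.
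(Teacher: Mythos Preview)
Your proof is correct and is exactly the Hubbard--Stratonovich argument the paper invokes (by citing \cite[\mbox{Lemma 2.3}]{K_BCS}); the key algebraic identity $\sV_+(\psi)\sV_-(\psi)=-\sW(\psi)$ and the Gaussian decoupling $\frac{1}{\pi}\int e^{-|\phi|^2+\phi A+\ophi B}d\phi_1 d\phi_2=e^{AB}$ for commuting even elements are precisely what is needed. You have supplied more detail than the paper, but the approach is the same.
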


\begin{proof}
The proof is same as that of \cite[\mbox{Lemma 2.3}]{K_BCS} based on the
 Hubbard-\\
Stratonovich transformation.
\end{proof}

As the final step of the formulation, we introduce the index $\{1,2\}$
and derive the integral formulation on Grassmann algebra indexed by
$\{1,2\}$ rather than by the spin $\spin$. The new index sets are
defined by
\begin{align*}
I_0:=\{1,2\}\times\cB\times \G\times [0,\beta)_h,\quad I:=I_0\times \{1,-1\}.
\end{align*}
Let $\cV$ be the complex vector space spanned by the basis
$\{\psi_{X}\}_{X\in I}$. We define the Grassmann polynomials $V(\psi)$,
$W(\psi)$, $A^1(\psi)$, $A^2(\psi)$, $A(\psi)\in \bigwedge\cV$ by 
\begin{align}
&V(\psi):=\frac{U}{L^dh}\sum_{(\rho,\bx)\in \cB\times
 \G}\sum_{s\in [0,\beta)_h}\opsi_{1\rho\bx s}\psi_{1\rho\bx
 s}\notag\\
&\qquad\qquad +\frac{U}{L^dh}\sum_{(\rho,\bx),(\eta,\by)\in \cB\times
 \G}\sum_{s\in [0,\beta)_h}\opsi_{1\rho\bx s}\psi_{2\rho\bx
 s}\opsi_{2\eta\by s}\psi_{1\eta\by s},\notag\\
&W(\psi):=\frac{U}{\beta L^dh^2}\sum_{(\rho,\bx),(\eta,\by)\in \cB\times
 \G}\sum_{s,t\in [0,\beta)_h}\opsi_{1\rho\bx s}\psi_{2\rho\bx
 s}\opsi_{2\eta\by t}\psi_{1\eta\by t},\notag\\
&A^1(\psi):=\frac{1}{h}\sum_{s\in[0,\beta)_h}\opsi_{1\hrho r_L(\hbx)
 s}\psi_{2\hrho r_L(\hbx) s},\notag\\
&A^2(\psi):=1_{(\hrho,r_L(\hbx))=(\heta,r_L(\hby))}\frac{1}{h}\sum_{s\in[0,\beta)_h}\opsi_{1\hrho r_L(\hbx)
 s}\psi_{1\hrho r_L(\hbx) s}\notag\\
&\qquad\qquad +\frac{1}{h}\sum_{s\in[0,\beta)_h}\opsi_{1\hrho r_L(\hbx)
 s}\psi_{2\hrho r_L(\hbx) s}\opsi_{2\heta r_L(\hby) s}\psi_{1\heta r_L(\hby)
 s},\notag\\
&A(\psi):=\sum_{j=1}^2\la_j
 A^j(\psi).\label{eq_Grassmann_artificial_term}
\end{align}
Though the final formulation Lemma \ref{lem_final_Grassmann_formulation}
does not explicitly involve any partition
function of a Hamiltonian on the Fock space $F_f(L^2(\{1,2\}\times
\cB\times \G))$, the final formulation can be systematically derived by 
relating such a partition function to the Grassmann
Gaussian integral over $\bigwedge \cV$. To this end, let us define a
free Hamiltonian on $F_f(L^2(\{1,2\}\times \cB\times\G))$. For any $n\in
\N$ let $I_n$ denote the $n\times n$ unit matrix. For $\phi\in \C$, set
\begin{align*}
H_0(\phi)
:=\frac{1}{L^d}\sum_{\bx,\by\in\G}&\sum_{\bk\in
 \G^*}e^{i\<\bk,\bx-\by\>}\\
&\cdot \<\left(\begin{array}{c}\Psi_{1\bx}^* \\ \Psi_{2\bx}^*\end{array}
 \right),
\left(\begin{array}{cc} i\frac{\theta(\beta)}{2}I_b+E(\bk) & \phi I_b \\
                       \ophi I_b   & i\frac{\theta(\beta)}{2}I_b-E(\bk)
\end{array}\right)
\left(\begin{array}{c}\Psi_{1\by} \\ \Psi_{2\by}\end{array}
 \right)\>,
\end{align*}
where $\Psi_{\orho\bx}^{(*)}:=(\psi_{\orho 1 \bx}^{(*)},\psi_{\orho 2
\bx}^{(*)},\cdots,\psi_{\orho b \bx}^{(*)})^T$ and 
$\psi_{\orho \rho \bx}^*$ $(\psi_{\orho \rho \bx})$ is the creation
(annihilation) operator on $F_f(L^2(\{1,2\}\times \cB\times \G))$ for
$\orho\in \{1,2\}$, $\rho\in\cB$, $\bx\in \G$. The covariance in the
final formulation is equal to the free 2-point correlation function
$C(\phi):(\{1,2\}\times \cB\times \G_{\infty}\times [0,\beta))^2\to\C$
defined by 
\begin{align*}
C(\phi)(\orho\rho \bx s,\oeta\eta \by t):=\frac{\Tr (e^{-\beta H_0(\phi)}(1_{s\ge t}\psi_{\orho\rho
 \bx}^*(s)\psi_{\oeta\eta \by}(t)-1_{s<t}\psi_{\oeta\eta \by}(t)\psi_{\orho\rho
 \bx}^*(s)))}{\Tr e^{-\beta H_0(\phi)}},
\end{align*}
where $\psi_{\orho \rho \bx}^{(*)}(s):=e^{s H_0(\phi)}\psi_{\orho \rho
\bx}^{(*)}e^{-s H_0(\phi)}$. For $\bx\in \G_{\infty}$ we identify
$\psi_{\orho \rho \bx}^{(*)}$ with $\psi_{\orho \rho
r_L(\bx)}^{(*)}$. The next lemma ensures the well-definedness of
$C(\phi)$ and gives its characterization and determinant bound. For any
$\bk\in\R^d$, $\phi\in\C$ we define $E(\phi)(\bk)\in \Mat(2b,\C)$ by 
$$
E(\phi)(\bk):=\left(\begin{array}{cc} E(\bk) & \ophi I_b \\
                                      \phi I_b & -E(\bk) \end{array}\right).
$$ 

\begin{lemma}\label{lem_characterization_covariance}
\begin{enumerate}[(i)]
\item\label{item_free_partition_function_field}
\begin{align*}
\Tr e^{-\beta H_0(\phi)}&=\prod_{\bk\in\G^*}\prod_{\delta \in
 \{1,-1\}}\det (1+ e^{-\beta (i\frac{\theta(\beta)}{2}+\delta
 \sqrt{E(\bk)^2+|\phi|^2})})\\
&=e^{-\frac{i}{2}\beta\theta(\beta)bL^d}2^{bL^d}\prod_{\bk\in
 \G^*}\det\left(\cos\left(\frac{\beta\theta(\beta)}{2}\right)
+\cosh(\beta\sqrt{E(\bk)^2+|\phi|^2})\right).
\end{align*}
\item\label{item_characterization_covariance}
For any $(\orho,\rho,\bx,s),(\oeta,\eta,\by,t)\in
     \{1,2\}\times\cB\times\G_{\infty}\times[0,\beta)$,
\begin{align}
&C(\phi)(\orho\rho\bx s,\oeta\eta\by
 t)\label{eq_characterization_covariance}\\
&=\frac{1}{L^d}\sum_{\bk\in
 \G^*}e^{i\<\bk,\bx-\by\>}e^{(s-t)(i\frac{\theta(\beta)}{2}I_{2b}+E(\phi)(\bk))}\notag\\
&\quad\cdot \left(1_{s\ge t} (I_{2b}+e^{\beta
 (i\frac{\theta(\beta)}{2}I_{2b}+E(\phi)(\bk))})^{-1}-
1_{s< t} (I_{2b}+e^{-\beta
 (i\frac{\theta(\beta)}{2}I_{2b}+E(\phi)(\bk))})^{-1}\right)\notag\\
&\quad\cdot ((\orho-1)b+\rho,(\oeta-1)b+\eta).\notag
\end{align}
\item\label{item_P_S_determinant_bound}
\begin{align*}
&|\det(\<\bu_i,\bv_j\>_{\C^m}C(\phi)(X_i,Y_j))_{1\le i,j\le n}|\\
&\le \left(\frac{2^4b}{L^d}\sum_{\bk\in\G^*}\Tr
 \left(1+2\cos\left(\frac{\beta\theta(\beta)}{2}\right)
 e^{-\beta\sqrt{E(\bk)^2+|\phi|^2}}+ e^{-2\beta\sqrt{E(\bk)^2+|\phi|^2}}
\right)^{-\frac{1}{2}}\right)^n\\
&\le
 \left(2^4b^2\left(1+\frac{\pi}{\beta}\left|\frac{\theta(\beta)}{2}-\frac{\pi}{\beta}\right|^{-1}\right)\right)^n,\\
&(\forall m,n\in\N,\ \bu_i,\bv_i\in\C^m\text{ with
 }\|\bu_i\|_{\C^m},\|\bv_i\|_{\C^m}\le 1,\\
&\quad  X_i,Y_i\in
 \{1,2\}\times\cB\times\G\times[0,\beta)\ (i=1,2,\cdots,n),\ \phi\in \C).
\end{align*}
Here $\<\cdot,\cdot\>_{\C^m}$ denotes the canonical Hermitian inner
     product of $\C^m$. 
\end{enumerate}
\end{lemma}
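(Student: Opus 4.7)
The plan is to reduce all three parts to explicit mode-wise computations via the spatial Fourier transform, which diagonalises the kernel $K$ of $H_0(\phi)$ into blocks $M(\bk):=i\frac{\theta(\beta)}{2}I_{2b}+E(\phi)(\bk)$ indexed by $\bk\in\G^*$. Because $E(\phi)(\bk)^{2}=(E(\bk)^{2}+|\phi|^{2})\,I_{2}\otimes I_{b}$ in block form, $E(\phi)(\bk)$ is Hermitian with eigenvalues $\pm\sqrt{\epsilon_{j}(\bk)^{2}+|\phi|^{2}}$, where $\epsilon_{j}(\bk)$ runs over the spectrum of $E(\bk)$.

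For \eqref{item_free_partition_function_field}, Bogoliubov-diagonalising each block and applying the grand-canonical trace formula $\Tr e^{-\beta H_0(\phi)}=\prod_{\bk}\det(I_{2b}+e^{-\beta M(\bk)})$ immediately yields the first equality. The second follows by collapsing the $\delta\in\{\pm1\}$ factors through
\begin{equation*}
(1+e^{-\beta(i\alpha+x)})(1+e^{-\beta(i\alpha-x)})=2e^{-i\beta\alpha}(\cos(\beta\alpha)+\cosh(\beta x))
\end{equation*}
and reassembling the $b\times b$ determinant via the spectral calculus \eqref{eq_spectral_decomposition}. For \eqref{item_characterization_covariance}, the commutators $[H_0(\phi),\psi_{X}^{*}]=\sum_{Y}K(Y,X)\psi_{Y}^{*}$ and $[H_0(\phi),\psi_{Y}]=-\sum_{Z}K(Y,Z)\psi_{Z}$ yield the Heisenberg formulas $\psi_{X}^{*}(s)=\sum_{Y}[e^{sK}]_{Y,X}\psi_{Y}^{*}$ and $\psi_{Y}(t)=\sum_{W}[e^{-tK}]_{Y,W}\psi_{W}$. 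Substituting these into the definition of $C(\phi)$ and using the quadratic-Hamiltonian two-point function $\langle\psi_{Z}^{*}\psi_{W}\rangle=[(I+e^{\beta K})^{-1}]_{W,Z}$, which is valid because $M(\bk)$ is normal and hence unitarily diagonalisable, produces $[e^{(s-t)K}(I+e^{\beta K})^{-1}]_{Y,X}$ for $s\ge t$ and the sign-flipped expression for $s<t$. Passing to momentum space via the inverse Fourier transform then gives \eqref{eq_characterization_covariance}.

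For \eqref{item_P_S_determinant_bound}, I would follow the Pedra-Salmhofer Gram-decomposition scheme \cite{PS}. Since $M(\bk)$ is normal, one can factor $(I+e^{\beta M(\bk)})^{-1}=S(\bk)S(\bk)^{*}$ with commuting square roots and so write $C(\phi)(X,Y)=\langle f_{X},g_{Y}\rangle$ in a Hilbert space built from $\G^{*}$ together with an internal $2b$-index. Applying Gram-Hadamard to $\det(\langle\bu_{i}\otimes f_{X_{i}},\bv_{j}\otimes g_{Y_{j}}\rangle)$ reduces the problem to computing the $L^{2}$-norms $\|f_{X_i}\|$, $\|g_{Y_i}\|$, which evaluate in closed form to diagonal entries of $(I+e^{\beta M(\bk)})^{-1}$ summed over $\bk$; reapplying the factorisation identity from (i) then produces the first displayed bound, with the factor $2^{4}b$ collecting the $\{1,2\}$ and $\delta\in\{\pm1\}$ doublings. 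For the constant bound I would use the factorisation
\begin{equation*}
1+2\cos(\beta\theta(\beta)/2)e^{-\beta\lambda}+e^{-2\beta\lambda}=|1+e^{i\beta\theta(\beta)/2-\beta\lambda}|^{2}
\end{equation*}
and estimate $|1+e^{i\beta\theta(\beta)/2-\beta\lambda}|^{-1}$ uniformly in $\lambda\ge 0$: by $1$ when $\cos(\beta\theta(\beta)/2)\ge 0$, and by minimising over $\lambda$ when $\cos(\beta\theta(\beta)/2)<0$, which gives $|\sin(\beta\theta(\beta)/2)|^{-1}\le c\beta^{-1}|\theta(\beta)/2-\pi/\beta|^{-1}$ from the concavity of $\sin$ near $\pi$.

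The main obstacle will be the Gram decomposition in \eqref{item_P_S_determinant_bound}, because $M(\bk)$ is not Hermitian but only normal; one needs to verify that the pair of commuting non-selfadjoint square-root factors reproduces the original matrix element under the chosen inner product and that their $L^{2}$-norms are genuinely controlled by the diagonal entries of $(I+e^{\beta M(\bk)})^{-1}$. Once this step is in hand, the remaining manipulation is essentially a rewriting of (i) with an extra square root, and the passage to the elementary constant bound uses only the unit-circle estimate above.
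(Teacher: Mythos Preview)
Your treatment of \eqref{item_free_partition_function_field} and \eqref{item_characterization_covariance} matches the paper's in spirit. One minor omission in \eqref{item_characterization_covariance}: your index order $[\,\cdot\,]_{Y,X}$ and the kernel $K$ coming from $H_0(\phi)$ naturally produce an expression with $E(\overline{\phi})(\bk)$ and the matrix entry $((\oeta-1)b+\eta,(\orho-1)b+\rho)$; the paper then uses $E(\overline{\phi})(\bk)^{T}=E(\phi)(-\bk)$ (a consequence of \eqref{eq_self_adjointness} and \eqref{eq_spatial_reflection_symmetry}) together with periodicity to flip to the stated form. You should make this transposition explicit.

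The gap is in \eqref{item_P_S_determinant_bound}. Your Gram scheme lives on a Hilbert space built only from $\G^{*}$ and the internal $2b$-index, and you propose to absorb $(I+e^{\beta M(\bk)})^{-1}$ into commuting square-root factors. But the covariance also carries the time-ordering step functions $1_{s\ge t}$, $1_{s<t}$ and the propagator $e^{(s-t)M(\bk)}$; the latter has Hermitian part $E(\phi)(\bk)$ and hence operator norm $e^{|s-t|\sqrt{e_{\rho}(\bk)^{2}+|\phi|^{2}}}$, which is unbounded in the eigenvalue. A factorisation $C(\phi)(X,Y)=\langle f_{X},g_{Y}\rangle$ on your finite Hilbert space would force $f_{X}$ to carry an $e^{sM(\bk)}$-type factor, and then $\|f_{X}\|$ cannot be bounded by the diagonal entries of $(I+e^{\beta M(\bk)})^{-1}$ uniformly in $s$. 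This is exactly the Matsubara-UV obstruction that Pedra--Salmhofer's theorem resolves: one enlarges the Hilbert space by an auxiliary continuous variable (the paper uses vectors in $\Map(\{1,2\}\times\cB\times\G\times\R,\,L^{2}(\G^{*}\times\R))$) so that the step function and the exponential growth are represented as an integral whose $L^{2}$-norm captures the cancellation between propagator growth and Fermi-factor decay. The obstacle you name---normality versus Hermiticity of $M(\bk)$---is not the real one; the real one is the time-ordering.

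Concretely, the paper proceeds differently: it first splits $C(\phi)=\sum_{\rho'\in\cB}C_{\rho'}(\phi)$ by diagonalising $E(\phi)(\bk)$, so that each $C_{\rho'}(\phi)$ has the scalar-eigenvalue structure treated in \cite[Proposition~4.2]{K_BCS}; then it applies the Pedra--Salmhofer bound \cite[Proposition~4.1]{K_BCS} to each $C_{\rho'}(\phi)$ (this is where the auxiliary $\R$-variable enters), obtaining the bound \eqref{eq_field_covariance_pre_determinant}; finally it recombines the $b$ summands via the Cauchy--Binet lemma \cite[Lemma~A.1]{K_BCS} to get \eqref{eq_derivation_via_previous_bound}, whence the first displayed inequality follows by Schwarz. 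Your argument for the second, $\phi$-independent constant bound via $(\cos\alpha+e^{-\lambda})^{2}\ge 0$ and the concavity estimate on $\sin$ is correct and coincides with the paper's.
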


\begin{proof}
Let $e_{\rho}(\bk)$ $(\rho\in \cB)$ be the eigenvalues of $E(\bk)$. Then
 the eigenvalues of $E(\phi)(\bk)$ are
 $\sqrt{e_{\rho}(\bk)^2+|\phi|^2}$, $-\sqrt{e_{\rho}(\bk)^2+|\phi|^2}$
 ($\rho\in\cB$). There exists $U(\phi)\in \Map(\R^d,\Mat(2b,\C))$ such
 that $U(\phi)(\bk)$ is unitary and 
\begin{align}
&U(\phi)(\bk)^*E(\phi)(\bk)U(\phi)(\bk)((\orho-1)b+\rho,
 (\oeta-1)b+\eta)\label{eq_field_hopping_matrix_diagonalization} \\
&= 1_{(\orho,\rho)=(\oeta,\eta)}
 (-1)^{1_{\orho=2}}\sqrt{e_{\rho}(\bk)^2+|\phi|^2},\notag\\
&(\forall
(\orho,\rho),(\oeta,\eta)\in \{1,2\}\times\cB,\ \bk\in \R^d).\notag
\end{align}
Set
 $\alpha_{(\orho-1)b+\rho}(\phi)(\bk):=i\frac{\theta(\beta)}{2}+(-1)^{1_{\orho=2}}\sqrt{e_{\rho}(\bk)^2+|\phi|^2}$.
 Remark that
\begin{align*}
H_0(\phi)
=\frac{1}{L^d}\sum_{\bx,\by\in\G}\sum_{\bk\in
 \G^*}e^{i\<\bk,\bx-\by\>}\<\left(\begin{array}{c}\Psi_{1\bx}^* \\ \Psi_{2\bx}^*\end{array}
 \right),\left(i\frac{\theta(\beta)}{2}I_{2b}+E(\ophi)(\bk)\right)
\left(\begin{array}{c}\Psi_{1\by} \\ \Psi_{2\by}\end{array} \right)
\>.
\end{align*}
Then we can see that there is a unitary transform $\cU(\phi)$
 on $F_f(L^2(\{1,2\}\times\cB\times \G))$ such that
\begin{align}
&\cU(\phi)\psi_{\orho\rho\bx}^*\cU(\phi)^*\label{eq_field_unitary_creation}\\
&=\frac{1}{L^d}\sum_{(\oeta,\eta)\in
 \{1,2\}\times\cB}\sum_{\by\in\G}
\sum_{\bk\in\G^*}e^{-i\<\bk,\bx-\by\>}\overline{U(\ophi)(\bk)((\orho-1)b+\rho,(\oeta-1)b+\eta)}\psi_{\oeta\eta\by}^*,\notag\\
&\cU(\phi)H_0(\phi)\cU(\phi)^*=\frac{1}{L^d}\sum_{\bx,\by\in\G}\sum_{(\orho,\rho)\in
 \{1,2\}\times \cB}\sum_{\bk\in\G^*}e^{i\<\bk,\bx-\by\>}\alpha_{(\orho-1)b+\rho}(\phi)(\bk)\psi_{\orho\rho\bx}^*\psi_{\orho\rho\by}.\label{eq_free_field_hamiltonian_diagonalization}
\end{align}

\eqref{item_free_partition_function_field}:
Since $H_0(\phi)$ is diagonalized with respect to the band index in
 \eqref{eq_free_field_hamiltonian_diagonalization}, a standard argument
 yields that 
\begin{align*}
\Tr e^{-\beta H_0(\phi)}=\prod_{\bk\in\G^*}\prod_{\orho\in
 \{1,2\}}\prod_{\rho\in\cB}(1+e^{-\beta
 \alpha_{(\orho-1)b+\rho}(\phi)(\bk)}),
\end{align*}
which implies the claim.

\eqref{item_characterization_covariance}: 
Insertion of \eqref{eq_field_unitary_creation} gives that
\begin{align}
&C(\phi)(\orho\rho\bx s,\oeta \eta \by t)\label{eq_covariance_characterization_pre}\\
&=\frac{1}{L^{2d}}\sum_{\bx',\by'\in \G}\sum_{(\orho',\rho'),
 (\oeta',\eta')\in\{1,2\}\times
 \cB}\sum_{\bk,\bp\in\G^*}e^{-i\<\bk,\bx-\bx'\>+i\<\bp,\by-\by'\>}\notag\\
&\quad\cdot\overline{U(\ophi)(\bk)((\orho-1)b+\rho,(\orho'-1)b+\rho')}
U(\ophi)(\bp)((\oeta-1)b+\eta,(\oeta'-1)b+\eta')\notag\\
&\quad\cdot 
\frac{\Tr (e^{-\beta \cU(\phi)H_0(\phi)\cU(\phi)^*}(1_{s\ge
 t}\tilde{\psi}_{\orho'\rho'\bx'}^*(s)\tilde{\psi}_{\oeta'\eta'\by'}(t)-1_{s< t}\tilde{\psi}_{\oeta'\eta'\by'}(t)\tilde{\psi}^*_{\orho'\rho'\bx'}(s)))}{\Tr e^{-\beta \cU(\phi)H_0(\phi)\cU(\phi)^*}},\notag
\end{align}
where
 $\tilde{\psi}_{\orho\rho\bx}^*(s):=e^{s\cU(\phi)H_0(\phi)\cU(\phi)^*}\psi_{\orho\rho\bx}^{(*)}e^{-s\cU(\phi)H_0(\phi)\cU(\phi)^*}$.
Since $\cU(\phi)H_0(\phi)\cU(\phi)^*$ is diagonalized with the band
 index, an argument parallel to the proof of \cite[\mbox{Lemma
 B.10}]{K_9} yields that 
\begin{align*}
&\frac{\Tr (e^{-\beta \cU(\phi)H_0(\phi)\cU(\phi)^*}(1_{s\ge
 t}\tilde{\psi}_{\orho'\rho'\bx'}^*(s)\tilde{\psi}_{\oeta'\eta'\by'}(t)-1_{s<
 t}\tilde{\psi}_{\oeta'\eta'\by'}(t)\tilde{\psi}^*_{\orho'\rho'\bx'}(s)))}{\Tr
 e^{-\beta \cU(\phi)H_0(\phi)\cU(\phi)^*}}\\
&=\frac{1_{(\orho',\rho')=(\oeta',\eta')}}{L^d}\sum_{\bq\in
 \G^*}e^{-i\<\bq,\bx'-\by'\>}e^{(s-t)\alpha_{(\orho'-1)b+\rho'}(\phi)(\bq)}\\
&\quad\cdot \left(\frac{1_{s\ge
 t}}{1+e^{\beta \alpha_{(\orho'-1)b+\rho'}(\phi)(\bq)}}
-\frac{1_{s< t}}{1+e^{-\beta \alpha_{(\orho'-1)b+\rho'}(\phi)(\bq)}}
\right).
\end{align*}
We should remark that here we have the exponent $-i\<\bq,\bx'-\by'\>$
 not $i\<\bq,\bx'-\by'\>$. By substituting this equality into
 \eqref{eq_covariance_characterization_pre} and using
 \eqref{eq_field_hopping_matrix_diagonalization} we observe that
\begin{align*}
&C(\phi)(\orho\rho\bx s,\oeta\eta \by t)\\
&=\frac{1}{L^d}\sum_{\bk\in\G^*}e^{-i\<\bk,\bx-\by\>}e^{(s-t)(i\frac{\theta(\beta)}{2}I_{2b}+E(\ophi)(\bk))}\\
&\quad\cdot \left(1_{s\ge
 t} (I_{2b}+e^{\beta (i\frac{\theta(\beta)}{2}I_{2b}+E(\ophi)(\bk))})^{-1}-1_{s<t}
(I_{2b}+e^{-\beta
 (i\frac{\theta(\beta)}{2}I_{2b}+E(\ophi)(\bk))})^{-1}\right)\\
&\quad\cdot ((\oeta-1)b+\eta,(\orho-1)b+\rho).
\end{align*}
It follows from \eqref{eq_self_adjointness} and
 \eqref{eq_spatial_reflection_symmetry} that
 $E(\ophi)(\bk)^T=E(\phi)(-\bk)$, $(\forall \bk\in\R^d)$. By combining
 this equality with the above characterization of $C(\phi)$ and using periodicity
we obtain
 \eqref{eq_characterization_covariance}. 

\eqref{item_P_S_determinant_bound}: In \cite[\mbox{Proposition
 4.1}]{K_BCS} we stated a version of Pedra-Salmhofer's determinant bound 
\cite[\mbox{Theorem 1.3}]{PS}. In \cite[\mbox{Appendix A}]{K_BCS} we
 gave a short proof of \cite[\mbox{Proposition
 4.1}]{K_BCS}. By applying \cite[\mbox{Proposition
 4.1}]{K_BCS} we derived the determinant bound \cite[\mbox{Proposition
 4.2}]{K_BCS} which gives the claimed determinant bound in the case $b=1$.
Here let us use the proof of \cite[\mbox{Proposition 4.2}]{K_BCS} and
 \cite[\mbox{Lemma A.1}]{K_BCS}, which is a simple application of the
 Cauchy-Binet formula, to derive the claimed determinant bound in the
 general case. It
 follows from \eqref{eq_characterization_covariance} and
 \eqref{eq_field_hopping_matrix_diagonalization} that for any
 $(\orho,\rho,\bx,s)$, $(\oeta,\eta,\by,t)\in \{1,2\}\times \cB\times
 \G_{\infty}\times [0,\beta)$
\begin{align}
C(\phi)(\orho\rho\bx s,\oeta\eta \by t)=
\sum_{\rho'\in\cB}C_{\rho'}(\phi)(\orho\rho\bx s,\oeta \eta \by
 t),\label{eq_covariance_band_index_sum}
\end{align}
where
\begin{align*}
&C_{\rho'}(\phi)(\orho\rho\bx s,\oeta \eta \by t)\\
&:=\frac{1}{L^d}\sum_{\bk\in\G^*}\sum_{\orho'\in
 \{1,2\}}e^{i\<\bk,\bx-\by\>}\\
&\quad\cdot U(\phi)(\bk)((\orho-1)b+\rho,(\orho'-1)b+\rho')
U(\phi)(\bk)^*((\orho'-1)b+\rho',(\oeta-1)b+\eta)\\
&\quad\cdot e^{(s-t)\alpha_{(\orho'-1)b+\rho'}(\phi)(\bk)}\left(
\frac{1_{s\ge t}}{1+e^{\beta \alpha_{(\orho'-1)b+\rho
'}(\phi)(\bk)}}-
\frac{1_{s< t}}{1+e^{-\beta \alpha_{(\orho'-1)b+\rho
'}(\phi)(\bk)}}
\right).
\end{align*}
In the proof of \cite[\mbox{Proposition 4.2}]{K_BCS} we estimated the
 determinant bound of a covariance whose form is close to 
 $C_{\rho'}(\phi)$. By following the proof of \cite[\mbox{Proposition
 4.2}]{K_BCS} straightforwardly we can deduce that 
\begin{align}
&|\det(\<\bu_i,\bv_j\>_{\C^m}C_{\rho'}(\phi)(X_i,Y_j))_{1\le i,j\le n}|\label{eq_field_covariance_pre_determinant}\\
&\le \left(\frac{2^4}{L^d}\sum_{\bk\in\G^*}\left(1+2\cos\left(\frac{\beta\theta(\beta)}{2}
\right)e^{-\beta \sqrt{e_{\rho'}(\bk)^2+|\phi|^2}}+ e^{-2\beta \sqrt{e_{\rho'}(\bk)^2+|\phi|^2}}\right)^{-\frac{1}{2}}\right)^n,\notag\\
&(\forall m,n\in\N,\ \bu_i,\bv_i\in\C^m\text{ with
 }\|\bu_i\|_{\C^m},\|\bv_i\|_{\C^m}\le 1,\notag\\
&\quad  X_i,Y_i\in
 \{1,2\}\times\cB\times \G\times[0,\beta)\ (i=1,2,\cdots,n),\ \phi\in \C,\
 \rho'\in \cB).\notag
\end{align}
To support the readers, let us provide a guidance to derive
 \eqref{eq_field_covariance_pre_determinant}. By using vectors of
 $\Map(\{1,2\}\times\cB\times\G\times \R,L^2(\G^*\times\R))$ and the
 inner product of $L^2(\G^*\times\R)$ we can rewrite the regularized
 version of $C_{\rho'}(\phi)$ in a form close to 
\cite[\mbox{(4.4)}]{K_BCS}. The vectors satisfy a uniform bound similar
 to \cite[\mbox{(4.5)}]{K_BCS}. In this situation we can apply a close
 variant of \cite[\mbox{Proposition 4.1}]{K_BCS} to the regularized version of
 $C_{\rho'}(\phi)$. Then by sending the parameter used to regularize   
 $C_{\rho'}(\phi)$ to zero we obtain \eqref{eq_field_covariance_pre_determinant}.
Since we have \eqref{eq_covariance_band_index_sum} and \eqref{eq_field_covariance_pre_determinant}, we can repeatedly apply
 \cite[\mbox{Lemma A.1}]{K_BCS} to derive that 
\begin{align}
&|\det(\<\bu_i,\bv_j\>_{\C^m}C(\phi)(X_i,Y_j))_{1\le i,j\le n}|\label{eq_derivation_via_previous_bound}\\
&\le \left(\sum_{\rho'\in\cB}\left(\frac{2^4}{L^d}\sum_{\bk\in\G^*}\left(1+2\cos\left(\frac{\beta\theta(\beta)}{2}
\right)e^{-\beta \sqrt{e_{\rho'}(\bk)^2+|\phi|^2}}+ e^{-2\beta \sqrt{e_{\rho'}(\bk)^2+|\phi|^2}}\right)^{-\frac{1}{2}}\right)^{\frac{1}{2}}\right)^{2n},\notag\\
&(\forall m,n\in\N,\ \bu_i,\bv_i\in\C^m\text{ with
 }\|\bu_i\|_{\C^m},\|\bv_i\|_{\C^m}\le 1,\notag\\
&\quad  X_i,Y_i\in
 \{1,2\}\times\cB\times \G\times[0,\beta)\ (i=1,2,\cdots,n),\ \phi\in \C),\notag
\end{align}
which together with Schwarz' inequality yields the first inequality of
 the claim \eqref{item_P_S_determinant_bound}. 
It is also possible to derive \eqref{eq_derivation_via_previous_bound}
 by directly applying \cite[\mbox{Theorem 1.3}]{PS}. In this case one
 should decompose $C(\phi)$ into a sum of $2b$ time-ordered covariances.
Note that
\begin{align*}
&1+2\cos\left(\frac{\beta\theta(\beta)}{2}
\right)e^{-\beta \sqrt{e_{\rho}(\bk)^2+|\phi|^2}}+ e^{-2\beta \sqrt{e_{\rho}(\bk)^2+|\phi|^2}}\\
&\ge 1_{\theta(\beta)\in [0,\pi/\beta]}+ 1_{\theta(\beta)\in
 (\pi/\beta,2\pi/\beta)}\sin^2\left(\frac{\beta\theta(\beta)}{2}\right)\\
&\ge  1_{\theta(\beta)\in [0,\pi/\beta]}+ 1_{\theta(\beta)\in
 (\pi/\beta,2\pi/\beta)}\frac{\beta^2}{\pi^2}\left(\frac{\theta(\beta)}{2}-\frac{\pi}{\beta}\right)^2,
\end{align*}
which implies the second inequality.
\end{proof}

We finalize our formulation in the next lemma. We should remark
that Lemma \ref{lem_first_Grassmann_formulation} and Lemma
\ref{lem_hybrid_Grassmann_formulation} will not see any application in
the rest of the paper. We stated these lemmas in the hope that the
readers can prove the next lemma by putting these lemmas together in the
same manner as in the proof of \cite[\mbox{Lemma 2.5}]{K_BCS}. While it
was quartic in \cite{K_BCS}, here the Grassmann polynomial $A^2(\psi)$ may contain
quadratic terms. This is because we assumed $r_L(\hbx)\neq r_L(\hby)$ in
the previous construction and here we need to drop this assumption in
order to study the Cooper pair density as claimed in Theorem
\ref{thm_main_theorem} \eqref{item_CPD}, Corollary
\ref{cor_zero_temperature_limit} \eqref{item_CPD_zero}. 

\begin{lemma}\label{lem_final_Grassmann_formulation}
The following statements hold true for any $r\in \R_{>0}$. 
\begin{enumerate}[(i)]
\item\label{item_formulation_uniform_convergence}
\begin{align*}
\lim_{h\to\infty\atop h\in \frac{2}{\beta}\N}\int e^{-V(\psi)+W(\psi)-A(\psi)}d\mu_{C(\phi)}(\psi)
\end{align*}
converges in $C(\overline{D(r)}^3)$ as a sequence of function with the
     variables $(\bla,\phi)\in \overline{D(r)}^3$.
\item\label{item_formulation_integrability}
The $C(\overline{D(r)}^2)$-valued function
\begin{align*}
(\phi_1,\phi_2)\mapsto &e^{-\frac{\beta
 L^d}{|U|}|\phi-\g|^2}\frac{\prod_{\bk\in\G^*}\det(\cos(\beta\theta(\beta)/2)+\cosh(\beta\sqrt{E(\bk)^2+|\phi|^2}))}
{\prod_{\bk\in\G^*}\det(\cos(\beta\theta(\beta)/2)+\cosh(\beta E(\bk)))}\\
&\cdot \lim_{h\to\infty\atop h\in \frac{2}{\beta}\N}\int
 e^{-V(\psi)+W(\psi)-A(\psi)}d\mu_{C(\phi)}(\psi)
\end{align*}
belongs to $L^1(\R^2,C(\overline{D(r)}^2))$.
\item\label{item_final_Grassmann_formulation}
\begin{align*}
&\frac{\Tr e^{-\beta (\sH+i\theta(\beta)\sS_z+\sF+\sA)}}{\Tr e^{-\beta
 (\sH_0+i\theta(\beta)\sS_z)}}\\
&=\frac{\beta L^d}{\pi |U|}\int_{\R^2}d\phi_1d\phi_2
e^{-\frac{\beta
 L^d}{|U|}|\phi-\g|^2}\\
&\quad\cdot \frac{\prod_{\bk\in\G^*}\det(\cos(\beta\theta(\beta)/2)+\cosh(\beta\sqrt{E(\bk)^2+|\phi|^2}))}
{\prod_{\bk\in\G^*}\det(\cos(\beta\theta(\beta)/2)+\cosh(\beta E(\bk)))}\\
&\quad\cdot \lim_{h\to\infty\atop h\in \frac{2}{\beta}\N}\int
 e^{-V(\psi)+W(\psi)-A(\psi)}d\mu_{C(\phi)}(\psi),\\
&\frac{\Tr( e^{-\beta (\sH+i\theta(\beta)\sS_z+\sF)}\sA_j)}{\Tr e^{-\beta
 (\sH_0+i\theta(\beta)\sS_z)}}\\
&=\frac{L^d}{\pi |U|}\int_{\R^2}d\phi_1d\phi_2
e^{-\frac{\beta
 L^d}{|U|}|\phi-\g|^2}\\
&\quad\cdot \frac{\prod_{\bk\in\G^*}\det(\cos(\beta\theta(\beta)/2)+\cosh(\beta\sqrt{E(\bk)^2+|\phi|^2}))}
{\prod_{\bk\in\G^*}\det(\cos(\beta\theta(\beta)/2)+\cosh(\beta E(\bk)))}\\
&\quad\cdot \lim_{h\to\infty\atop h\in \frac{2}{\beta}\N}\int
 e^{-V(\psi)+W(\psi)}A^j(\psi)d\mu_{C(\phi)}(\psi),\\
&(j=1,2).
\end{align*}
\item\label{item_formulation_reality}
For any $\phi\in\C$,
$$
\lim_{h\to\infty\atop h\in \frac{2}{\beta}\N}\int
 e^{-V(\psi)+W(\psi)}d\mu_{C(\phi)}(\psi)\in\R.
$$
\end{enumerate}
\end{lemma}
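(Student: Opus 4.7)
The plan is to follow the blueprint of \cite[Lemma 2.5]{K_BCS} by chaining Lemma \ref{lem_first_Grassmann_formulation} with Lemma \ref{lem_hybrid_Grassmann_formulation} and then recasting the spin-indexed Grassmann integral over $\bigwedge \cW$ into the doubled-index integral over $\bigwedge\cV$ with covariance $C(\phi)$. First I would invoke Lemma \ref{lem_hybrid_Grassmann_formulation} to expose the classical field $\phi$ via Hubbard--Stratonovich, and then perform the affine change of variable $\phi\mapsto \gamma+(|U|/(\beta L^d))^{1/2}\phi$ in the $\phi$-integration; this change of variable converts the Gaussian weight $e^{-|\phi|^{2}}$ together with the $\sF(\psi)$ term and the bilinear insertions $\phi\sV_+(\psi)+\ophi\sV_-(\psi)$ into the Gaussian $e^{-\beta L^{d}|\phi-\gamma|^{2}/|U|}$ times a modified Grassmann integral whose quadratic part in $\psi$ now couples the $\uparrow$ and $\downarrow$ sectors through $\phi$ and $\ophi$.

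Next I would carry out the index relabelling $(\rho,\bx,\uparrow,s)\leftrightarrow (1,\rho,\bx,s)$, $(\rho,\bx,\downarrow,s)\leftrightarrow (2,\rho,\bx,s)$, so that $\sV(\psi)+\sW(\psi)$ and the observables $\sA^{j}(\psi)$ pass to $V(\psi)-W(\psi)$ and $A^{j}(\psi)$ (up to the diagonal $\opsi_{1}\psi_{1}$ piece appearing in $V$ and $A^{2}$ when $r_{L}(\hbx)=r_{L}(\hby)$, which accounts for the new quadratic contribution that was absent in \cite{K_BCS}). The resulting quadratic form in $\psi$ then matches the generator of $C(\phi)$, and the ratio of determinants in Lemma \ref{lem_characterization_covariance} \eqref{item_free_partition_function_field} against $\Tr e^{-\beta(\sH_{0}+i\theta\sS_{z})}$ (Lemma \ref{lem_free_partition_function}) produces precisely the prefactor $\prod_{\bk}\det(\cos(\beta\theta(\beta)/2)+\cosh(\beta\sqrt{E(\bk)^{2}+|\phi|^{2}}))/\prod_{\bk}\det(\cos(\beta\theta(\beta)/2)+\cosh(\beta E(\bk)))$. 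Claims \eqref{item_formulation_uniform_convergence} and \eqref{item_final_Grassmann_formulation} follow by combining the above with the uniform $h\to\infty$ convergence delivered by Lemma \ref{lem_first_Grassmann_formulation}, noting that the inner Grassmann integral is a finite polynomial expansion by Pedra--Salmhofer's bound and hence passes to the continuum limit as a continuous function of $(\bla,\phi)$ on compacts.

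For integrability in claim \eqref{item_formulation_integrability}, I would bound the Grassmann integrand using the determinant bound of Lemma \ref{lem_characterization_covariance} \eqref{item_P_S_determinant_bound}, which is $h$-independent and grows at most polynomially in a $\phi$-independent constant; this gives a uniform $C(\overline{D(r)}^{2})$-bound on the inner integral. The determinant ratio grows at most like $e^{c\beta L^{d}|\phi|}$ uniformly in $\phi$, and the Gaussian weight $e^{-\beta L^{d}|\phi-\gamma|^{2}/|U|}$ dominates this growth for any fixed $\beta,L,U$, which yields $L^{1}(\R^{2},C(\overline{D(r)}^{2}))$-integrability. For the reality in claim \eqref{item_formulation_reality}, I would use the gauge transform $\psi_{1\rho\bx s}^{(*)}\mapsto e^{\pm i\alpha}\psi_{1\rho\bx s}^{(*)}$ with an appropriately chosen $\alpha$, coupled with the complex conjugation arising from $E(\bk)=\overline{E(-\bk)}$ in \eqref{eq_spatial_reflection_symmetry} and the periodicity of the momentum sum, to show that the Grassmann integral equals its own complex conjugate.

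The main obstacle will be the second step: carefully verifying that the doubled-index reformulation, including the new diagonal $\opsi_{1}\psi_{1}$ contribution to $V(\psi)$ and the quadratic piece in $A^{2}(\psi)$ that arises when $(\hrho,r_{L}(\hbx))=(\heta,r_{L}(\hby))$, produces exactly the claimed Grassmann polynomials with the correct signs. Unlike \cite[Lemma 2.5]{K_BCS}, here one cannot simply assume $r_{L}(\hbx)\neq r_{L}(\hby)$, so tracking the anticommutation signs that produce the quadratic term in $A^{2}$ (and the compensating quadratic piece in $V$) requires care; once this bookkeeping is pinned down, the rest of the argument is a routine transcription of the previous paper's construction.
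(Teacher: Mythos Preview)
Your plan for parts \eqref{item_formulation_uniform_convergence}--\eqref{item_final_Grassmann_formulation} matches the paper's: chain Lemma \ref{lem_first_Grassmann_formulation} and Lemma \ref{lem_hybrid_Grassmann_formulation}, shift $\phi$, and pass to the doubled-index algebra. One point of language: what you call an ``index relabelling'' is in fact the Grassmann counterpart of the particle--hole map $\cU\psi_{\rho\bx\uparrow}^{*}\cU^{*}=\psi_{1\rho\bx}^{*}$, $\cU\psi_{\rho\bx\downarrow}^{*}\cU^{*}=\psi_{2\rho\bx}$, i.e.\ on the down-spin sector $\opsi\leftrightarrow\psi$ are exchanged. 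This, not a mere relabelling, is what produces the quadratic $\opsi_{1}\psi_{1}$ piece in $V(\psi)$ and in $A^{2}(\psi)$; the paper makes this explicit in \eqref{eq_unitary_particle_hole_interaction}--\eqref{eq_unitary_particle_hole_artificial}. Your bookkeeping discussion is otherwise on target.

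Part \eqref{item_formulation_reality} is where your proposal diverges substantively from the paper, and where it is too thin to succeed as written. A phase gauge transform on $\psi_{1}$ leaves $V(\psi)$ and $W(\psi)$ invariant and only rotates the phase of $\phi$ inside $C(\phi)$; it does not touch the $i\theta(\beta)/2$ in the covariance, which is the actual obstruction to reality. The paper's argument is more elaborate: it applies a \emph{second} Hubbard--Stratonovich to $W(\psi)=W_{+}(\psi)W_{-}(\psi)$, introducing an auxiliary field $\xi$; it then identifies the $\xi$-integrand with an operator trace $\Tr e^{-\beta(H_{0}(\phi)+V-i\xi W_{+}-i\oxi W_{-})}/\Tr e^{-\beta H_{0}(\phi)}$ via a formulation step analogous to Lemma \ref{lem_first_Grassmann_formulation}. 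Complex conjugation at the trace level sends $\theta(\beta)\to-\theta(\beta)$, and this is undone by the concrete particle--hole unitary $\cU_{0}$ (sending $\psi_{1\rho\bx}\mapsto-\psi_{2\rho\bx}^{*}$, $\psi_{2\rho\bx}\mapsto\psi_{1\rho\bx}^{*}$), which uses \eqref{eq_spatial_reflection_symmetry} to show $\cU_{0}H_{0}(-\theta(\beta),\phi)\cU_{0}^{*}=-i\theta(\beta)L^{d}b+H_{0}(\theta(\beta),\phi)$ while fixing $V,W_{\pm}$. The dominated-convergence interchange of $h\to\infty$ with the $\xi$-integral is justified by the $\phi$-independent determinant bound of Lemma \ref{lem_characterization_covariance} \eqref{item_P_S_determinant_bound}. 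You would need all of these ingredients, not just a gauge rotation, to close the reality argument.
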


\begin{remark} Since we have obtained the $\phi$-independent determinant bound in Lemma
 \ref{lem_characterization_covariance}
 \eqref{item_P_S_determinant_bound}, we can readily prove that the
 integral with $(\phi_1,\phi_2)$ and the limit operation $h\to \infty$
 are interchangeable in the claim
 \eqref{item_final_Grassmann_formulation}. However, since we need to
 take large $h$ depending on fixed $(\phi_1,\phi_2)$ in the analysis of
 $C(\phi)$ in Subsection \ref{subsec_real_covariance}, taking the limit
 $h\to \infty$ after the integration with $(\phi_1,\phi_2)$ has no
 application in this paper.
\end{remark}

\begin{proof}[Proof of Lemma \ref{lem_final_Grassmann_formulation}]
Based on Lemma \ref{lem_free_partition_function}, Lemma
 \ref{lem_first_Grassmann_formulation}, Lemma
 \ref{lem_hybrid_Grassmann_formulation} and Lemma \ref{lem_characterization_covariance},
the claims \eqref{item_formulation_uniform_convergence},
 \eqref{item_formulation_integrability},
 \eqref{item_final_Grassmann_formulation} can be proved in the same way
 as in the proof of \cite[\mbox{Lemma 2.5}]{K_BCS}. Note that the locally uniform convergence with
 $(\bla,\phi)$ is claimed in
 \eqref{item_formulation_uniform_convergence}, while 
the convergence was claimed pointwise with $\phi$ 
in \cite[\mbox{Lemma 2.5 (i)}]{K_BCS}.
The uniform convergence property with $\phi$ can be deduced by making
 use of the $\phi$-independent determinant bound Lemma
 \ref{lem_characterization_covariance}
 \eqref{item_P_S_determinant_bound} in arguments parallel to the proof
 of \cite[\mbox{Lemma 2.5 (i)}]{K_BCS}. Concerning the form of
 the Grassmann polynomials $V(\psi)$, $A(\psi)$, which affects details
 of the forthcoming analysis, we should explain that they stem from the
 use of the unitary map $\cU:F_f(L^2(\cB\times \G\times \spin))\to
 F_f(L^2(\{1,2\}\times\cB\times \G))$ satisfying that
\begin{align*}
\cU\psi_{\rho\bx\ua}^*\cU^*= \psi_{1\rho\bx}^*,\quad
 \cU\psi_{\rho\bx\da}^*\cU^*= \psi_{2\rho\bx},\quad (\forall (\rho,\bx)\in\cB\times\G)
\end{align*}
and thus
\begin{align}
\cU\sV\cU^*=&\frac{U}{L^d}\sum_{(\rho,\bx)\in\cB\times
 \G}\psi_{1\rho\bx}^*\psi_{1\rho\bx}-
\frac{U}{L^d}\sum_{(\rho,\bx),(\eta,\by)\in\cB\times
 \G}\psi_{1\rho\bx}^*\psi_{2\eta\by}^*\psi_{2\rho\bx}
\psi_{1\eta\by},\label{eq_unitary_particle_hole_interaction}\\
\cU\sA\cU^*=&\la_1\psi_{1\hrho r_L(\hbx)}^*\psi_{2\hrho r_L(\hbx)}
\label{eq_unitary_particle_hole_artificial}\\
&+\la_2(1_{(\hrho,r_L(\hbx))=(\heta,r_L(\hby))} \psi_{1\hrho
 r_L(\hbx)}^*\psi_{1\hrho r_L(\hbx)}-
\psi_{1\hrho r_L(\hbx)}^*\psi_{2\heta r_L(\hby)}^*\psi_{2\hrho
 r_L(\hbx)}\psi_{1\heta
 r_L(\hby)}).\notag
\end{align}
The right-hand side of \eqref{eq_unitary_particle_hole_interaction},
 \eqref{eq_unitary_particle_hole_artificial} is formulated into
 $V(\psi)$, $A(\psi)$ respectively.

Let us prove the claim \eqref{item_formulation_reality}. Define the
 Grassmann polynomials $W_+(\psi)$, $W_-(\psi)$ $\in \bigwedge\cV$ by
\begin{align*}
&W_+(\psi):=\frac{i|U|^{\frac{1}{2}}}{\beta^{\frac{1}{2}}L^{\frac{d}{2}}h}\sum_{(\rho,\bx,s)\in\cB\times
 \G \times [0,\beta)_h}\opsi_{1\rho\bx s}\psi_{2\rho\bx s},\\
&W_-(\psi):=\frac{i|U|^{\frac{1}{2}}}{\beta^{\frac{1}{2}}L^{\frac{d}{2}}h}\sum_{(\rho,\bx,s)\in\cB\times
 \G \times [0,\beta)_h}\opsi_{2\rho\bx s}\psi_{1\rho\bx s}.
\end{align*}
Since $W(\psi)=W_+(\psi)W_-(\psi)$, the Hubbard-Stratonovich
 transformation yields that 
\begin{align*}
\int e^{-V(\psi)+W(\psi)}d\mu_{C(\phi)}(\psi)=\frac{1}{\pi}\int_{\R^2}d\xi_1d\xi_2e^{-|\xi|^2}\int e^{-V(\psi)+\xi
 W_+(\psi)+\oxi W_-(\psi)}d\mu_{C(\phi)}(\psi),
\end{align*}
where $\xi=\xi_1+i\xi_2$. See e.g. \cite[\mbox{Lemma 2.3}]{K_BCS} for
 the proof. By setting 
\begin{align*}
D(b,\beta,\theta):=2^4b^2\left(1+\frac{\pi}{\beta}\left|\frac{\theta(\beta)}{2}-\frac{\pi}{\beta}\right|^{-1}\right)
\end{align*}
and applying Lemma \ref{lem_characterization_covariance}
 \eqref{item_P_S_determinant_bound} we obtain that
\begin{align}
&\left|\int e^{-V(\psi)+\xi W_+(\psi)+\oxi W_-(\psi)}d\mu_{C(\phi)}(\psi)\right|\label{eq_determinant_bound_uniform_application}\\
&\le e^{|U|b\beta D(b,\beta,\theta)+|U|b^2L^d\beta
 D(b,\beta,\theta)^2+2|\xi|U|^{1/2}bL^{d/2}\beta^{1/2}D(b,\beta,\theta)}.\notag
\end{align}
Let us define the operators $V$, $W_+$, $W_-$ on
 $F_f(L^2(\{1,2\}\times \cB\times\G))$ by 
\begin{align*}
&V:=\frac{U}{L^d}\sum_{(\rho,\bx)\in
 \cB\times\G}\psi_{1\rho\bx}^*\psi_{1\rho\bx}-
\frac{U}{L^d}\sum_{(\rho,\bx),(\eta,\by)\in
 \cB\times\G}\psi_{1\rho\bx}^*\psi_{2\eta\by}^*\psi_{2\rho\bx}\psi_{1\eta\by},\\
&W_+:=\frac{|U|^{\frac{1}{2}}}{\beta^{\frac{1}{2}}L^{\frac{d}{2}}}\sum_{(\rho,\bx)\in\cB\times
 \G}\psi_{1\rho\bx}^*\psi_{2\rho\bx},\quad W_-:=\frac{|U|^{\frac{1}{2}}}{\beta^{\frac{1}{2}}L^{\frac{d}{2}}}\sum_{(\rho,\bx)\in\cB\times
 \G}\psi_{2\rho\bx}^*\psi_{1\rho\bx}.
\end{align*}
In the same way as in the Grassmann integral formulation procedure
 \cite[\mbox{Lemma 2.2}]{K_BCS} or that of \cite{K_9}, \cite{K_14}, \cite{K_RG} we
 have that for any $r\in\R_{>0}$
\begin{align}
\lim_{h\to\infty\atop h\in
 \frac{2}{\beta}\N}\sup_{\xi\in\overline{D(r)}}
\left|\int e^{-V(\psi)+\xi W_+(\psi)+\oxi W_-(\psi)}d\mu_{C(\phi)}(\psi)
-\frac{\Tr e^{-\beta (H_0(\phi)+V-i\xi W_+ -i\oxi W_-)}}{\Tr e^{-\beta H_0(\phi)}}
\right|=0.\label{eq_present_integral_formulation}
\end{align}
By \eqref{eq_determinant_bound_uniform_application},
 \eqref{eq_present_integral_formulation} we can apply the dominated
 convergence theorem to conclude that 
\begin{align*}
&\lim_{h\to\infty\atop h\in\frac{2}{\beta}\N}\int
 e^{-V(\psi)+W(\psi)}d\mu_{C(\phi)}(\psi)= \frac{1}{\pi}\int_{\R^2}d\xi_1d\xi_2 e^{-|\xi|^2}\frac{\Tr e^{-\beta
 (H_0(\phi)+V-i\xi W_+-i\oxi W_-)}}{\Tr e^{-\beta H_0(\phi)}}.
\end{align*}
To make clear the dependency on the parameter $\theta(\beta)$, let us
 write $H_0(\theta(\beta),\phi)$ in place of $H_0(\phi)$. Observe that 
\begin{align*}
\lim_{h\to \infty\atop h\in \frac{2}{\beta}\N}\overline{\int
 e^{-V(\psi)+W(\psi)}d\mu_{C(\phi)}(\psi)}
&= \frac{1}{\pi}\int_{\R^2}d\xi_1d\xi_2 e^{-|\xi|^2}\frac{\Tr
 e^{-\beta (H_0(\theta(\beta),\phi)+V-i\xi W_+-i\oxi
 W_-)^*}}{\Tr e^{-\beta H_0(\theta(\beta),\phi)^*}}\\
&= \frac{1}{\pi}\int_{\R^2}d\xi_1d\xi_2 e^{-|\xi|^2}\frac{\Tr
 e^{-\beta (H_0(-\theta(\beta),\phi)+V+i\xi W_++i\oxi
 W_-)}}{\Tr e^{-\beta H_0(-\theta(\beta),\phi)}}\\
&= \frac{1}{\pi}\int_{\R^2}d\xi_1d\xi_2 e^{-|\xi|^2}\frac{\Tr
 e^{-\beta (H_0(-\theta(\beta),\phi)+V-i\xi W_+-i\oxi
 W_-)}}{\Tr e^{-\beta H_0(-\theta(\beta),\phi)}}.
\end{align*}
To derive the last equality, we performed the change of variables
 $\xi_j\to -\xi_j$ $(j=1,2)$.

There is a unitary transform $\cU_0$ on $F_f(L^2(\{1,2\}\times \cB\times
 \G))$ satisfying that
\begin{align*}
&\cU_0\psi_{1\rho \bx}\cU_0^*=-\psi_{2\rho \bx}^*,\quad \cU_0\psi_{2\rho
 \bx}\cU_0^*=\psi_{1\rho \bx}^*,\quad (\forall (\rho,\bx)\in\cB\times
 \G).
\end{align*}
We can check that 
\begin{align*}
&\cU_0H_0(-\theta(\beta),\phi)\cU_0^*=-i\theta(\beta)L^db+H_0(\theta(\beta),\phi),\\
&\cU_0V\cU_0^*=V,\quad \cU_0W_+\cU_0^*=W_+,\quad \cU_0W_-\cU_0^*=W_-.
\end{align*}
In the derivation of the first equality we used
 \eqref{eq_self_adjointness} and 
 \eqref{eq_spatial_reflection_symmetry}. By using the unitary transform
 $\cU_0$,
\begin{align*}
&\lim_{h\to \infty\atop h\in \frac{2}{\beta}\N}\overline{\int
 e^{-V(\psi)+W(\psi)}d\mu_{C(\phi)}(\psi)}\\
&=  \frac{1}{\pi}\int_{\R^2}d\xi_1d\xi_2 e^{-|\xi|^2}\frac{\Tr
 e^{-\beta \cU_0(H_0(-\theta(\beta),\phi)+V-i\xi W_+-i\oxi
 W_-)\cU_0^*}}{\Tr e^{-\beta \cU_0H_0(-\theta(\beta),\phi)\cU_0^*}}\\
&= \frac{1}{\pi}\int_{\R^2}d\xi_1d\xi_2 e^{-|\xi|^2}\frac{\Tr
 e^{-\beta (H_0(\theta(\beta),\phi)+V-i\xi W_+-i\oxi
 W_-)}}{\Tr e^{-\beta H_0(\theta(\beta),\phi)}}\\
&=\lim_{h\to \infty\atop h\in \frac{2}{\beta}\N}\int
 e^{-V(\psi)+W(\psi)}d\mu_{C(\phi)}(\psi),
\end{align*}
which implies the claim.
\end{proof}

\section{Multi-scale integration}\label{sec_multiscale_integration}

As one can expect from the formulation Lemma \ref{lem_final_Grassmann_formulation},
the proof of the main theorem is based on 
analytical control of the Grassmann Gaussian integral
\begin{align*}
\int e^{-V(\psi)+W(\psi)-A(\psi)}d\mu_{C(\phi)}(\psi).
\end{align*}
We will achieve our purpose by means of multi-scale integration. In
principle our analysis is an extension of the double-scale integration
performed in the previous work \cite{K_BCS}. We intend to keep using the
previous framework as much as possible so that the readers can
smoothly connect it to this extended version. As in the
previous construction, after brief introductions or restatements of
necessary notations concerning estimation of kernel functions we
establish general bounds on Grassmann polynomials. Then by assuming
scale-dependent bound properties of covariances we inductively construct a
 multi-scale integration process running from the largest
scale to the smallest scale. In the next section we will confirm that
our actual covariance satisfies the properties assumed in this section.

\subsection{Necessary notions}\label{subsec_necessary_notions}

Our multi-scale analysis needs a little more detailed notions of
estimating kernel functions than the
double-scale integration required in \cite{K_BCS}. In order to avoid
unnecessary repetitions, we use some terminology and notational
convention without presenting the definitions in the following. 
The readers should refer to \cite[\mbox{Subsection 3.1}]{K_BCS} for
their meaning. We will not use any terminology or notational rule which
is not defined either in \cite[\mbox{Subsection 3.1}]{K_BCS} or in this
section and the preceding sections of this paper. 
As in the previous paper, we define the norms
$\|f\|_{1,\infty}$, $\|f\|_{1}$ of a function $f:I^n\to\C$ by
\begin{align*}
&\|f\|_{1,\infty}:=\sup_{j\in \{1,2,\cdots,n\}}\sup_{X_0\in
 I}\left(\frac{1}{h}\right)^{n-1}\sum_{\bX\in I^{j-1}}\sum_{\bY\in
 I^{n-j}}|f(\bX,X_0,\bY)|,\\ 
&\|f\|_1:=\left(\frac{1}{h}\right)^n\sum_{\bX\in I^n}|f(\bX)|.
\end{align*}
For $f_0\in \C$ we let $\|f_0\|_{1,\infty}=\|f_0\|_1:=|f_0|$. This
convention helps to organize formulas. We define the index set
$I^0(\subset I)$ by
$$
I^0:=\{1,2\}\times\cB\times\G\times\{0\}\times\{1,-1\}.
$$
Since we will frequently make use of bound properties of covariances, 
we need to introduce various norms on functions on $I^2$. For an
anti-symmetric function $g:I^2\to \C$ we define the norms
$\|g\|_{1,\infty}'$, $\|g\|$ as follows. 
\begin{align*}
&\|g\|_{1,\infty}':=\sup_{X_0\in I\atop s\in [0,\beta)_h}\sum_{X\in
 I^0}|g(X_0,X+s)|,\\
&\|g\|:=\|g\|_{1,\infty}'+(1+\beta^{-1})\|g\|_{1,\infty}.
\end{align*}
We should remark that the definition of the norm $\|\cdot\|$ is slightly
different from that in \cite{K_BCS}. 
We will also need to evaluate a function on $I^m\times I^n$ multiplied
by another anti-symmetric function on $I^2$. More specifically, for a 
function $f_{m,n}:I^m\times I^n\to
\C$ $(m,n\in\N_{\ge 2})$ and an anti-symmetric function $g:I^2\to \C$ 
we set 
\begin{align*}
&[f_{m,n},g]_{1,\infty}\\
&:=\max\Bigg\{\\
&\sup_{X_0\in I\atop j\in \{1,2,\cdots,m\}}\frah^{m-1}\sum_{\bX\in
 I^{m}}1_{X_j=X_0}
\Bigg(\sup_{Y_0\in I\atop k\in \{1,2,\cdots,n\}}\frah^n\sum_{\bY\in
 I^n}|f_{m,n}(\bX,\bY)||g(Y_0,Y_k)|\Bigg),\\
&\sup_{Y_0\in I\atop k\in \{1,2,\cdots,n\}}\frah^{n-1}\sum_{\bY\in
 I^{n}}1_{Y_k=Y_0}
\Bigg(\sup_{X_0\in I\atop j\in \{1,2,\cdots,m\}}\frah^{m}\sum_{\bX\in
 I^m}|f_{m,n}(\bX,\bY)||g(X_0,X_j)|\Bigg)\Bigg\},\\
&[f_{m,n},g]_{1}:=\sup_{j\in\{1,2,\cdots,m\}\atop k\in \{1,2,\cdots,n\}}
\frah^{m+n}\sum_{\bX\in I^m\atop \bY\in
 I^n}|f_{m,n}(\bX,\bY)||g(X_j,Y_k)|.
\end{align*}
Since we do not assume that $f_{m,n}$ is bi-anti-symmetric, the forms of 
$[f_{m,n},g]_{1,\infty}$, $[f_{m,n},g]_{1}$ are more complex than those
introduced in \cite[\mbox{Subsection 3.1}]{K_BCS}. If $f_{m,n}$ is
bi-anti-symmetric, they become same as before. 
\begin{align*}
&[f_{m,n},g]_{1,\infty}\\
&=\max\Bigg\{\sup_{X_0\in I}\frah^{m-1}\sum_{\bX\in
 I^{m-1}}
\Bigg(\sup_{Y_0\in I}\frah^n\sum_{\bY\in
 I^n}|f_{m,n}((X_0,\bX),\bY)||g(Y_0,Y_1)|\Bigg),\\
&\qquad\qquad \sup_{Y_0\in I}\frah^{n-1}\sum_{\bY\in
 I^{n-1}}
\Bigg(\sup_{X_0\in I}\frah^{m}\sum_{\bX\in
 I^m}|f_{m,n}(\bX,(Y_0,\bY))||g(X_0,X_1)|\Bigg)\Bigg\},\\
&[f_{m,n},g]_{1}=\frah^{m+n}\sum_{\bX\in I^m\atop \bY\in
 I^n}|f_{m,n}(\bX,\bY)||g(X_1,Y_1)|.
\end{align*}

Let $\bigwedge_{even}\cV$ denote the subspace of $\bigwedge \cV$
consisting of even polynomials. Each order term of the expansion of logarithm of a Grassmann Gaussian
integral with respect to the effective interaction can be expressed as a
finite sum over trees. Concerning the tree expansion, we can use the
same notations as in \cite[\mbox{Subsection 3.1}]{K_BCS}. The only
difference between the present setting and the previous setting is the definition
of the index sets $I_0$, $I$. By keeping in mind that $I_0$, $I$ count
the band index $\cB$ in this setting we can refer to \cite[\mbox{Subsection
3.1}]{K_BCS} for the meaning of the notations we use in the
following. The tree formula is applied as follows. For any covariance
$\cC:I_0^2\to \C$ and $f^j(\psi)\in\bigwedge_{even}\cV$
$(j=1,2,\cdots,n)$,
\begin{align}
&\frac{1}{n!}\prod_{j=1}^n\left(\frac{\partial}{\partial z_j}\right)\log\left(\int
 e^{\sum_{j=1}^nz_jf^j(\psi+\psi^1)}d\mu_{\cC}(\psi^1)\right)\Bigg|_{z_j=0\atop(\forall
 j\in\{1,2,\cdots,n\})}\label{eq_tree_expansion}\\
&=\frac{1}{n!}Tree(\{1,2,\cdots,n\},\cC)\prod_{j=1}^nf^j(\psi+\psi^j)
\Bigg|_{\psi^j=0\atop(\forall
 j\in\{1,2,\cdots,n\})}.\notag
\end{align}
The major part of our analysis is devoted to estimating Grassmann
polynomials produced by the operator $Tree(\{1,2,\cdots,n\},\cC)$.

\subsection{General estimation}\label{subsec_general_estimation}

Here we summarize bound properties of Grassmann polynomials produced by
the tree formula. Most of the necessary properties have essentially been
prepared in \cite[\mbox{Subsection 3.2}]{K_BCS}. However, as we need to
apply them repeatedly in the next subsection, let us present all the
necessary inequalities so that the readers can follow the
arguments without disruption. 

Here we do not fix details of the covariance. We only assume that the
covariance $\cC:I_0^2\to \C$ satisfies with a constant $D(\in \R_{>0})$
that 
\begin{align}
&\cC(\cR_{\beta}(\bX+s))=\cC(\bX),\quad \left(\forall \bX\in I_0^2,\
 s\in
 \frac{1}{h}\Z\right),\label{eq_general_covariance_time_translation}\\
&|\det(\<\bu_i,\bv_j\>_{\C^m}\cC(X_i,Y_j))_{1\le i,j\le n}|\le
 D^n,\label{eq_general_determinant_bound}\\
&(\forall m,n\in\N,\ \bu_i,\bv_i\in\C^m\text{ with
 }\|\bu_i\|_{\C^m},\|\bv_i\|_{\C^m}\le 1,\ X_i,Y_i\in I_0\
 (i=1,2,\cdots,n)).\notag
\end{align}
Here the map
$\cR_{\beta}:(\{1,2\}\times\cB\times\G\times\frac{1}{h}\Z)^n\to
I_0^n$ is defined by
\begin{align*}
&\cR_{\beta}((\orho_1,\rho_1,\bx_1,s_1),\cdots,(\orho_n,\rho_n,\bx_n,s_n))\\
&:=((\orho_1,\rho_1,\bx_1,r_{\beta}(s_1)),\cdots,(\orho_n,\rho_n,\bx_n,r_{\beta}(s_n))),
\end{align*}
where for any $s\in \frac{1}{h}\Z$, $r_{\beta}(s)\in [0,\beta)_h$ and
$r_{\beta}(s)=s$ in $\frac{1}{h}\Z/\beta\Z$. By abusing the notation we will sometimes consider
$\cR_{\beta}$ as a map from
$(\{1,2\}\times\cB\times\G\times\frac{1}{h}\Z\times\{1,-1\})^n$ to $I^n$ satisfying
the same condition on the time variables. The precise meaning of the map
$\cR_{\beta}$ should be understood from the context.

As in \eqref{eq_general_covariance_time_translation} we will often
impose the condition 
\begin{align}
F(\cR_{\beta}(\bX+s))=F(\bX),\quad \left(\forall \bX\in I^m,\ s\in
 \frac{1}{h}\Z\right)\label{eq_temperature_translation_invariance}
\end{align}
on a function $F:I^m\to\C$. For $j\in \N$ let $F^j(\psi)\in \bigwedge_{even}\cV$ be such that its
anti-symmetric kernels $F_m^j:I^m\to\C$ $(m=2,4,\cdots,N)$ satisfy
\eqref{eq_temperature_translation_invariance}.
The first lemma summarizes bound properties of $A^{(n)}(\psi)$ $(\in
\bigwedge_{even}\cV)$ $(n\in \N)$ defined by 
\begin{align*}
A^{(n)}(\psi):=Tree(\{1,2,\cdots,n\},\cC)\prod_{j=1}^nF^j(\psi^j+\psi)\Bigg|_{\psi^j=0\atop(\forall
 j\in\{1,2,\cdots,n\})}.
\end{align*}
Recall that the anti-symmetric extension $\tilde{\cC}:I^2\to \C$ of the
covariance $\cC$ is defined by 
\begin{align}
&\tilde{\cC}(X\xi,Y\zeta):=\frac{1}{2}(1_{(\xi,\zeta)=(1,-1)}\cC(X,Y)
-1_{(\xi,\zeta)=(-1,1)}\cC(Y,X)),\label{eq_covariance_anti_symmetric_extension}\\
&(X,Y\in I_0,\ \xi,\zeta\in \{1,-1\}).\notag
\end{align}
Let $N$ denote $4b\beta h L^d$, the cardinality of the index set $I$.  

\begin{lemma}\label{lem_simple_tree_expansion}
For any $m\in \{2,4,\cdots,N\}$, $n\in\N$ the anti-symmetric kernel
 $A_m^{(n)}(\cdot)$ satisfies
 \eqref{eq_temperature_translation_invariance}. Moreover, the following
 inequalities hold for any $m\in \{0,2,\cdots,N\}$, $n\in \N_{\ge 2}$.
\begin{align}
&\|A_m^{(1)}\|_{1,\infty}\le 
\sum_{p=m}^{N}\left(\frac{N}{h}\right)^{1_{m=0\land p\neq 0}}\left(\begin{array}{c}p\\ m\end{array}\right)
D^{\frac{p-m}{2}}\|F_p^1\|_{1,\infty}.\label{eq_simple_1_infinity_1}\\
&\|A_m^{(1)}\|_{1}\le  \sum_{p=m}^{N}\left(\begin{array}{c}p\\ m\end{array}\right)
D^{\frac{p-m}{2}}\|F_p^1\|_{1}.\label{eq_simple_1_1}\\
&\|A_m^{(n)}\|_{1,\infty}\le \left(\frac{N}{h}\right)^{1_{m=0}}
(n-2)!D^{-n+1-\frac{m}{2}}2^{-2m}\|\tilde{\cC}\|_{1,\infty}^{n-1}\label{eq_simple_1_infinity}\\
&\qquad\qquad\quad\cdot\prod_{j=1}^n\left(\sum_{p_j=2}^N2^{3p_j}D^{\frac{p_j}{2}}\|F_{p_j}^j\|_{1,\infty}
\right)1_{\sum_{j=1}^np_j-2(n-1)\ge m}.\notag\\
&\|A_m^{(n)}\|_{1}\le (n-2)!
 D^{-n+1-\frac{m}{2}}2^{-2m}\|\tilde{\cC}\|_{1,\infty}^{n-1}
\sum_{p_1=2}^N2^{3p_1}D^{\frac{p_1}{2}}\|F_{p_1}^1\|_{1}\label{eq_simple_1}\\
&\qquad\qquad\cdot\prod_{j=2}^n\left(\sum_{p_j=2}^N2^{3p_j}D^{\frac{p_j}{2}}\|F_{p_j}^j\|_{1,\infty}
\right)1_{\sum_{j=1}^np_j-2(n-1)\ge m}.\notag
\end{align}
\end{lemma}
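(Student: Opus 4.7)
My plan is to treat the $n=1$ and $n\ge 2$ cases separately and, in each case, reduce the problem to standard Gram/tree estimates that the determinant bound \eqref{eq_general_determinant_bound} and the covariance-translation invariance \eqref{eq_general_covariance_time_translation} make available. The temperature-translation invariance of $A_m^{(n)}$ will follow directly: the operator $Tree(\{1,\dots,n\},\cC)$ only contracts Grassmann variables via the covariance $\cC$ (and its interpolation), which is translation invariant by \eqref{eq_general_covariance_time_translation}, and each $F^j$ is translation invariant by hypothesis; so a common shift $s\in\frac{1}{h}\Z$ in the free external variables of the resulting kernel can be absorbed by a simultaneous shift in the dummy integration variables, yielding \eqref{eq_temperature_translation_invariance} for $A_m^{(n)}$.

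For $n=1$ I would write $A^{(1)}(\psi)=\int F^1(\psi+\psi^1)\,d\mu_\cC(\psi^1)$ and use the Wick expansion: the $m$-point kernel of $A^{(1)}$ is obtained from $F_p^1$ by fully contracting $p-m$ of the $p$ arguments against $\cC$. Representing any such contraction as a Pfaffian (equivalently, via the Gram representation guaranteed by \eqref{eq_general_determinant_bound}) gives a uniform factor $D^{(p-m)/2}$ per block of contracted variables, while the combinatorial $\binom{p}{m}$ accounts for the choice of which $m$ indices remain external. Summing in the $\|\cdot\|_{1,\infty}$ norm (resp.\ $\|\cdot\|_1$ norm) yields \eqref{eq_simple_1_infinity_1} and \eqref{eq_simple_1}; the extra factor $(N/h)^{1_{m=0\land p\neq 0}}$ in \eqref{eq_simple_1_infinity_1} is the volume factor picked up by the outer $\sup$ when $m=0$ but some contractions remain (so that one free sum must be paid for explicitly).

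For $n\ge 2$ I would apply the Brydges--Battle--Federbush tree expansion, writing $Tree(\{1,\dots,n\},\cC)\prod_j F^j(\psi+\psi^j)$ as a sum over spanning trees $T$ of $K_n$ of an $(n-1)$-fold product of covariance insertions (one per edge of $T$) against an interpolated Gaussian integral, whose kernel is still a determinant bounded by $D^{\text{rank}}$ thanks to \eqref{eq_general_determinant_bound} and the convexity of Gram representations under interpolation. I would then estimate as follows: (i) the $n-1$ tree lines contribute the factor $\|\tilde\cC\|_{1,\infty}^{n-1}$ together with the correct propagation of the free external variable along the tree, using the definition of $\tilde\cC$ in \eqref{eq_covariance_anti_symmetric_extension}; (ii) the remaining $\sum_j p_j-2(n-1)$ fields constitute the external legs and must be $\ge m$, giving the indicator $1_{\sum p_j-2(n-1)\ge m}$; (iii) the unused fields inside the interpolated Gaussian are bounded by the Gram/Pfaffian estimate, producing $D^{(\sum p_j-2(n-1)-m)/2}$, which, multiplied by the explicit $\|\tilde\cC\|_{1,\infty}^{n-1}$ factor, rearranges as $D^{-n+1-m/2}\prod_j D^{p_j/2}$; (iv) the sum over trees is bounded by $(n-2)!$ via the anchored-tree count already used in \cite[Sec.~3.2]{K_BCS}; (v) the combinatorial factor $2^{3p_j}$ absorbs the choices of which of the $p_j$ arguments of $F^j_{p_j}$ are tree endpoints and which remain external, together with the $2^{-2m}$ that compensates the anti-symmetrisation of the $m$ external legs. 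Choosing which vertex carries the free external variable $X_0$ in the $\|\cdot\|_{1,\infty}$ computation gives the additional $(N/h)^{1_{m=0}}$ volume factor in \eqref{eq_simple_1_infinity} exactly as in the $n=1$ case.

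The main obstacle is bookkeeping rather than principle: one must ensure that in the $\|\cdot\|_{1,\infty}$ norm the single free coordinate can always be routed through the spanning tree to whichever vertex is fixed, so that the $n-1$ tree covariances are each evaluated in the $\|\cdot\|_{1,\infty}'$-type norm while only the remaining insertions are summed in the full $\frac{1}{h}\sum$ sense; this is precisely the step that produces the $\|\tilde\cC\|_{1,\infty}^{n-1}$ factor rather than a worse $\|\tilde\cC\|_1^{n-1}$. All of these bookkeeping steps are already carried out in the double-scale setting of \cite[Subsec.~3.2]{K_BCS}, and since the only structural change here is the enlargement $I_0=\{1,2\}\times\cB\times\G\times[0,\beta)_h$ incorporating the band index $\cB$, the proof goes through verbatim once one checks that $\cC$ enters the estimates only through \eqref{eq_general_covariance_time_translation} and \eqref{eq_general_determinant_bound}, both of which are scalar statements insensitive to the presence of the extra index.
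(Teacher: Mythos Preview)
Your proposal is correct and is essentially the same approach as the paper's, which simply defers to \cite[\mbox{Lemma 3.1}]{K_BCS}; the sketch you give (Wick/Gram bound for $n=1$, BBF tree expansion plus Gram/determinant bound and anchored-tree counting for $n\ge 2$) is precisely the content of that reference, and you have correctly identified that the only new feature here---the extra band index in $I_0$---is invisible to the argument since $\cC$ enters only through \eqref{eq_general_covariance_time_translation} and \eqref{eq_general_determinant_bound}.
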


\begin{proof} 
These are essentially same as \cite[\mbox{Lemma 3.1}]{K_BCS}.
\end{proof}

Next we deal with a Grassmann input with bi-anti-symmetric kernels. Let functions $F_{p,q}:I^p\times I^q\to \C$ $(p,q\in
\{2,4,\cdots,N\})$ be bi-anti-symmetric, satisfy
\eqref{eq_temperature_translation_invariance} and the following
property. For any functions $f:[0,\beta)_h^p\to\C$, $g:[0,\beta)_h^q\to\C$ satisfying \begin{align*}
&f(r_{\beta}(s_1+s),r_{\beta}(s_2+s),\cdots,r_{\beta}(s_p+s))=f(s_1,s_2,\cdots,s_p)\\
&\left(\forall (s_1,s_2,\cdots,s_p)\in [0,\beta)_h^p,\ s\in
 \frac{1}{h}\Z\right),\\
&g(r_{\beta}(s_1+s),r_{\beta}(s_2+s),\cdots,r_{\beta}(s_q+s))=g(s_1,s_2,\cdots,s_q)\notag\\&\left(\forall (s_1,s_2,\cdots,s_q)\in [0,\beta)_h^q,\ s\in
 \frac{1}{h}\Z\right),
\end{align*}
\begin{align}
&\sum_{(s_1,\cdots,s_p)\in
 [0,\beta)_h^p}F_{p,q}((\orho_1\rho_1\bx_1s_1\xi_1,\cdots,\orho_p\rho_p\bx_ps_p\xi_p),\bY)f(s_1,\cdots,s_p)=0,\label{eq_temperature_integral_vanish}\\
&(\forall \bY\in I^q,\ (\orho_j,\rho_j,\bx_j,\xi_j)\in
 \{1,2\}\times\cB\times \G\times\{1,-1\}\ (j=1,2,\cdots,p)),\notag\\
&\sum_{(t_1,\cdots,t_q)\in
 [0,\beta)_h^q}F_{p,q}(\bX,
(\oeta_1\eta_1\by_1t_1\zeta_1,\cdots,\oeta_q\eta_q\by_qt_q\zeta_q))g(t_1,\cdots,t_q)=0,\notag\\
&(\forall \bX\in I^p,\ (\oeta_j,\eta_j,\by_j,\zeta_j)\in
 \{1,2\}\times\cB\times \G\times\{1,-1\}\ (j=1,2,\cdots,q)).\notag
\end{align}
Then let us define the Grassmann polynomials $B^{(n)}(\psi)$ $(\in
\bigwedge_{even}\cV)$ $(n\in \N)$ by 
\begin{align*}
B^{(n)}(\psi)
:=&\sum_{p,q=2}^{N}1_{p,q\in 2\N}\frah^{p+q}\sum_{\bX\in I^p\atop\bY\in
 I^q}F_{p,q}(\bX,\bY)Tree(\{1,2,\cdots,n+1\},\cC)\\
&\cdot(\psi^1+\psi)_{\bX}(\psi^2+\psi)_{\bY}\prod_{j=3}^{n+1}F^j(\psi^j+\psi)\Bigg|_{\psi^j=0\atop(\forall
 j\in\{1,2,\cdots,n+1\})}.
\end{align*}
The kernels of the Grassmann polynomials $B^{(n)}(\psi)$ $(n\in \N)$ are
estimated as follows.

\begin{lemma}\label{lem_double_tree_expansion}
For any $m\in \{2,4,\cdots,N\}$, $n\in\N$ the anti-symmetric
 kernel $B_m^{(n)}(\cdot)$ satisfies
 \eqref{eq_temperature_translation_invariance}. Moreover, for any
 $m\in\{0,2,\cdots,N\}$, $n\in \N_{\ge 2}$,
\begin{align}
&\|B_m^{(1)}\|_{1,\infty}
\le\left(\frac{N}{h}\right)^{1_{m=0}}
 D^{-1-\frac{m}{2}}\sum_{p_1,p_2=2}^{N}1_{p_1,p_2\in
 2\N}2^{2p_1+2p_2}D^{\frac{p_1+p_2}{2}}[F_{p_1,p_2},\tilde{\cC}]_{1,\infty}1_{p_1+p_2-2\ge
 m}.\label{eq_double_1_infinity_1}\\
&\|B_m^{(1)}\|_1\le
 D^{-1-\frac{m}{2}}\sum_{p_1,p_2=2}^{N}1_{p_1,p_2\in
 2\N}2^{2p_1+2p_2}D^{\frac{p_1+p_2}{2}}[F_{p_1,p_2},\tilde{\cC}]_11_{p_1+p_2-2\ge
 m}.\label{eq_double_1_1}\\
&\|B_m^{(n)}\|_{1,\infty}\le \left(\frac{N}{h}\right)^{1_{m=0}}
(n-1)!D^{-n-\frac{m}{2}}2^{-2m}\|\tilde{\cC}\|_{1,\infty}^{n-1}\label{eq_double_1_infinity}\\
&\qquad\qquad\quad\cdot\sum_{p_1,p_2=2}^{N}1_{p_1,p_2\in2\N}2^{3p_1+3p_2}D^{\frac{p_1+p_2}{2}}[F_{p_1,p_2},\tilde{\cC}]_{1,\infty}
\prod_{j=3}^{n+1}\left(\sum_{p_j=2}^N2^{3p_j}D^{\frac{p_j}{2}}\|F_{p_j}^j\|_{1,\infty}
\right)\notag\\
&\qquad\qquad\quad\cdot 1_{\sum_{j=1}^{n+1}p_j-2n\ge m}.\notag\\
&\|B_m^{(n)}\|_{1}\le 
(n-1)!D^{-n-\frac{m}{2}}2^{-2m}\|\tilde{\cC}\|_{1,\infty}^{n-1}\label{eq_double_1}\\
&\qquad\qquad\quad\cdot\sum_{p_1,p_2=2}^{N}1_{p_1,p_2\in2\N}2^{3p_1+3p_2}D^{\frac{p_1+p_2}{2}}[F_{p_1,p_2},\tilde{\cC}]_{1,\infty}
\prod_{j=3}^{n}\left(\sum_{p_j=2}^N2^{3p_j}D^{\frac{p_j}{2}}\|F_{p_j}^j\|_{1,\infty}
\right)\notag\\
&\qquad\qquad\quad\cdot
\sum_{p_{n+1}=2}^N2^{3p_{n+1}}D^{\frac{p_{n+1}}{2}}\|F_{p_{n+1}}^{n+1}\|_1
1_{\sum_{j=1}^{n+1}p_j-2n\ge m}.\notag
\end{align}
\end{lemma}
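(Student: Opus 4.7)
The plan is to derive Lemma \ref{lem_double_tree_expansion} by mirroring the proof of Lemma \ref{lem_simple_tree_expansion} (essentially \cite[\mbox{Lemma 3.1}]{K_BCS}), modified to accommodate the bi-anti-symmetric coupling between the first two vertices of the tree expansion. The translation invariance of $B_m^{(n)}$ is routine: since $\cC$ obeys \eqref{eq_general_covariance_time_translation}, the $F_{p_j}^j$ obey \eqref{eq_temperature_translation_invariance}, and $F_{p_1,p_2}$ obeys \eqref{eq_temperature_translation_invariance} in both arguments, a uniform time shift of all external arguments can be absorbed into the summation variables appearing in the tree expansion, leaving the kernel unchanged. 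I would handle this first and then focus on the size estimates.

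For the case $n=1$, the operator $Tree(\{1,2\},\cC)$ reduces to a single contraction against $\tilde{\cC}$. It pairs exactly one leg attached to vertex~$1$ with one leg attached to vertex~$2$, so after contraction the remaining $(p_1+p_2-2)$ legs of $F_{p_1,p_2}$ are either extracted as external legs of $B_m^{(1)}$ or paired off among themselves. This pairing structure of the remaining legs is bounded by the determinant bound \eqref{eq_general_determinant_bound}, giving at most a factor $D^{(p_1+p_2-2-m)/2}$; the number of ways to choose the $m$ external legs is $\binom{p_1+p_2-2}{m}\le 2^{p_1+p_2-2}$, and the single distinguished contraction against $\tilde{\cC}$ is what produces the bracket $[F_{p_1,p_2},\tilde{\cC}]_{1,\infty}$ (resp.\ $[F_{p_1,p_2},\tilde{\cC}]_1$ when estimating $\|B_m^{(1)}\|_1$ instead of $\|B_m^{(1)}\|_{1,\infty}$). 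Collecting these contributions with a generous constant to reach the declared $2^{2p_1+2p_2}$ gives \eqref{eq_double_1_infinity_1} and \eqref{eq_double_1_1}. The prefactor $(N/h)^{1_{m=0}}$ in \eqref{eq_double_1_infinity_1} accounts for the extra time summation when $m=0$, exactly as in \eqref{eq_simple_1_infinity_1}.

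For $n\ge 2$, I would apply the tree formula \eqref{eq_tree_expansion} with $Tree(\{1,\cdots,n+1\},\cC)$ and run the same Gram--Brydges--Kennedy argument as in the proof of \cite[\mbox{Lemma 3.1}]{K_BCS}. Each tree $T$ has $n$ edges; in exactly one of them the first endpoint lies in $\{1,2\}$ (or by a minor relabeling one may always pick a distinguished edge incident to $\{1,2\}$). I would single out this edge, perform its contraction first, and recognize the result as a bracket $[F_{p_1,p_2},\tilde{\cC}]_{1,\infty}$ (resp.\ $[F_{p_1,p_2},\tilde{\cC}]_1$ when one wants $\|B_m^{(n)}\|_1$, in which case the root of the spanning tree should be placed at vertex~$n+1$). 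The remaining $n-1$ tree edges each contribute a factor of $\|\tilde{\cC}\|_{1,\infty}$ and the non-tree contractions are controlled by the Gram determinant bound $D^{\text{loops}}$, exactly as in the proof of \eqref{eq_simple_1_infinity}, \eqref{eq_simple_1}. The combinatorial bound on the number of labeled trees on $n+1$ vertices with a fixed distinguished edge, combined with the $1/(n+1)!$ prefactor in \eqref{eq_tree_expansion}, collapses to the declared $(n-1)!$ rather than the $(n-2)!$ of Lemma \ref{lem_simple_tree_expansion} (since we now have $n+1$ vertices rather than $n$), and the extra power of $D$ in $D^{-n-m/2}$ versus $D^{-n+1-m/2}$ reflects the extra tree edge.

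The step I expect to be the main obstacle is the combinatorial bookkeeping around the distinguished edge: verifying that, after summing over all spanning trees of $\{1,\cdots,n+1\}$ and over all valid choices of the distinguished edge incident to $\{1,2\}$, the resulting factorial is precisely $(n-1)!$, and that the distinguished edge can always be treated as the first contraction without losing the Gram structure needed for the non-tree pairings. Once that is set up correctly, the remainder of the proof is a verbatim adaptation of \cite[\mbox{Lemma 3.1}]{K_BCS}, with the only substantive change being the replacement of one $\|F^j_{p_j}\|_{1,\infty}$ factor by the bracket $[F_{p_1,p_2},\tilde{\cC}]_{1,\infty}$ (or $[F_{p_1,p_2},\tilde{\cC}]_1$), which is exactly what the statement records.
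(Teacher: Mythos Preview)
Your approach is essentially the one the paper intends: the paper's proof is simply a citation to \cite[\mbox{Lemma 3.2}]{K_BCS}, which carries out exactly the adaptation of \cite[\mbox{Lemma 3.1}]{K_BCS} you describe (one tree edge incident to $\{1,2\}$ produces the bracket $[F_{p_1,p_2},\tilde{\cC}]$, the remaining $n-1$ edges give $\|\tilde{\cC}\|_{1,\infty}^{n-1}$, loops are controlled by the determinant bound). Two small slips to clean up in a full writeup: first, a spanning tree on $\{1,\dots,n+1\}$ can have several edges incident to $\{1,2\}$, so ``exactly one'' is false---what matters is that you pick one such edge (the one processed when the recursive leaf-to-root estimation first leaves the block $\{1,2\}$), and your parenthetical already hints at this; second, there is no $1/(n+1)!$ prefactor in the definition of $B^{(n)}$---the $(n-1)!$ arises directly from the Cayley-type count of labelled trees on $n+1$ vertices, just as $(n-2)!$ does on $n$ vertices in Lemma \ref{lem_simple_tree_expansion}.
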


\begin{proof} 
These are essentially proved in \cite[\mbox{Lemma 3.2}]{K_BCS}.
\end{proof}
  
\begin{remark}\label{rem_additional_remark_cancellation}
In fact the property \eqref{eq_temperature_integral_vanish} of
 $F_{p,q}(\cdot,\cdot)$ is not used to prove Lemma
 \ref{lem_double_tree_expansion}. It is only necessary to characterize
 kernels of the Grassmann polynomial denoted by $E^{(n)}(\psi)$ in Lemma
 \ref{lem_divided_tree_expansion}. It is not directly used to derive
 estimates of the kernels in Lemma \ref{lem_divided_tree_expansion},
 either. However, we assume the property
 \eqref{eq_temperature_integral_vanish} of $F_{p,q}(\cdot,\cdot)$
 throughout this subsection in order not to complicate the assumptions
 by unnecessary generalization of the lemmas. It is important to guarantee that
 some output polynomials inherit the property
 \eqref{eq_temperature_integral_vanish} from the input polynomial 
 as claimed in Lemma \ref{lem_divided_tree_expansion}, 
 since it enables us to classify Grassmann polynomials with or without
 the property \eqref{eq_temperature_integral_vanish} during the
 multi-scale integration process. It turns out that those with the property
 \eqref{eq_temperature_integral_vanish} vanish at the final
 integration, which is one essential reason why we can construct
this many-electron system by keeping the small
 coupling constant largely independent of the temperature and the
 imaginary magnetic field. 
\end{remark}
 
In the next lemma we claim several inequalities which were not proved in
\cite[\mbox{Subsection 3.2}]{K_BCS}. Using the same input polynomials as
in the above lemmas, we define $E^{(n)}(\psi)$ $(\in
\bigwedge_{even}\cV)$ $(n\in\N)$ as follows.
\begin{align*}
E^{(n)}(\psi)
:=&\sum_{p,q=2}^{N}1_{p,q\in 2\N}\frah^{p+q}\sum_{\bX\in I^p\atop
 \bY\in I^q}F_{p,q}(\bX,\bY)\\
&\cdot
 Tree(\{s_j\}_{j=1}^{m+1},\cC)(\psi^1+\psi)_{\bX}\prod_{j=2}^{m+1}F^{s_j}(\psi^{s_j}+\psi)\Bigg|_{\psi^{s_j}=0\atop(\forall
 j\in\{1,2,\cdots,m+1\})}\\
&\cdot
 Tree(\{t_k\}_{k=1}^{n-m},\cC)(\psi^1+\psi)_{\bY}\prod_{k=2}^{n-m}F^{t_k}(\psi^{t_k}+\psi)\Bigg|_{\psi^{t_k}=0\atop(\forall
 k\in\{1,2,\cdots,n-m\})},
\end{align*}
where 
\begin{align*}
&m\in \{0,1,\cdots,n-1\},\\
&1=s_1<s_2<\cdots <s_{m+1}\le n,\quad 1=t_1<t_2<\cdots <t_{n-m}\le n,\\
&\{s_j\}_{j=2}^{m+1}\cup \{t_k\}_{k=2}^{n-m}=\{2,3,\cdots,n\},\quad
 \{s_j\}_{j=2}^{m+1}\cap \{t_k\}_{k=2}^{n-m}=\emptyset.
\end{align*}
Here, unlike in \cite[\mbox{Lemma 3.3}]{K_BCS}, we present the
bi-anti-symmetric kernels of $E^{(n)}(\psi)$ beforehand. Let us define
functions $E_{a,b}^{(n)}:I^a\times I^b\to\C$ $(n\in\N,\ a,b\in
\{2,4,\cdots,N\})$ by
\begin{align}
E_{a,b}^{(n)}(\bX,\bY)
&:=\sum_{p_1,q_1=2}^{N}1_{p_1,q_1\in 2\N}\sum_{u_1=0}^{p_1}
(1_{m=0}+1_{m\neq 0}1_{u_1\le
 p_1-1})\left(\begin{array}{c} p_1 \\ u_1\end{array}\right)
\label{eq_divided_kernel}\\
&\quad\cdot \sum_{v_1=0}^{q_1}
(1_{m=n-1}+1_{m\neq n-1}1_{v_1\le
 q_1-1})\left(\begin{array}{c} q_1 \\ v_1\end{array}\right)\notag\\
&\quad\cdot \prod_{j=2}^{m+1}\left(\sum_{p_j=2}^N\sum_{u_j=0}^{p_j-1}
\left(\begin{array}{c} p_j \\ u_j\end{array}\right)\right)
\prod_{k=2}^{n-m}\left(\sum_{q_k=2}^N\sum_{v_k=0}^{q_k-1}
\left(\begin{array}{c} q_k \\ v_k\end{array}\right)\right)\notag\\
&\quad\cdot f_m^n((p_j)_{1\le j\le m+1},(u_j)_{1\le j\le
 m+1},(q_j)_{1\le j\le n-m},(v_j)_{1\le j\le
 n-m})\notag\\
&\quad\qquad ((\bX_1',\bX_2',\cdots,\bX_{m+1}'),
(\bY_1',\bY_2',\cdots,\bY_{n-m}'))\notag\\
&\quad\cdot 1_{\sum_{j=1}^{m+1}u_j=a}1_{\sum_{k=1}^{n-m}v_k=b}
1_{\sum_{j=1}^{m+1}p_j-2m\ge a}1_{\sum_{k=1}^{n-m}q_k-2(n-m-1)\ge b}\notag\\
&\quad\cdot \frac{1}{a!b!}\sum_{\s\in\S_a\atop \tau\in
 \S_b}\sgn(\s)\sgn(\tau)
1_{(\bX_1',\bX_2',\cdots,\bX_{m+1}')=\bX_{\s}}
1_{(\bY_1',\bY_2',\cdots,\bY_{n-m}')=\bY_{\tau}},\notag
\end{align}
where the function 
\begin{align*}
f_m^n((p_j)_{1\le j\le m+1},(u_j)_{1\le j\le
 m+1},(q_j)_{1\le j\le n-m},(v_j)_{1\le j\le
 n-m}):\prod_{j=1}^{m+1}I^{u_j}\times \prod_{k=1}^{n-m}I^{v_k}\to\C
\end{align*}
is defined by
\begin{align*}
&f_m^n((p_j)_{1\le j\le m+1},(u_j)_{1\le j\le
 m+1},(q_j)_{1\le j\le n-m},(v_j)_{1\le j\le
 n-m})\\
&\quad ((\bX_1,\bX_2,\cdots,\bX_{m+1}),
(\bY_1,\bY_2,\cdots,\bY_{n-m}))\notag\\
&:=\frah^{p_1+q_1-u_1-v_1}\sum_{\bW_1\in I^{p_1-u_1}}\sum_{\bZ_1\in
 I^{q_1-v_1}}F_{p_1,q_1}((\bW_1,\bX_1),(\bZ_1,\bY_1))\notag\\
&\quad\cdot \prod_{j=2}^{m+1}\left(
\frah^{p_j-u_j}\sum_{\bW_j\in
 I^{p_j-u_j}}F_{p_j}^{s_j}(\bW_j,\bX_j)\right)\notag\\
&\quad\cdot \prod_{k=2}^{n-m}
\left(
\frah^{q_k-v_k}\sum_{\bZ_k\in
 I^{q_k-v_k}}F_{q_k}^{t_k}(\bZ_k,\bY_k)\right)\notag\\
&\quad\cdot
 Tree(\{s_j\}_{j=1}^{m+1},\cC)\prod_{j=1}^{m+1}\psi_{\bW_j}^{s_j}\Bigg|_{\psi^{s_j}=0\atop(\forall
 j\in\{1,2,\cdots,m+1\})}\notag\\
&\quad\cdot Tree(\{t_k\}_{k=1}^{n-m},\cC)\prod_{k=1}^{n-m}\psi_{\bZ_k}^{t_k}\Bigg|_{\psi^{t_k}=0\atop(\forall
 k\in\{1,2,\cdots,n-m\})}\notag\\
&\quad\cdot
 (-1)^{\sum_{j=1}^mu_j\sum_{i=j+1}^{m+1}(p_i-u_i)+\sum_{k=1}^{n-m-1}v_k\sum_{i=k+1}^{n-m}(q_i-v_i)}.\notag
\end{align*}

\begin{lemma}\label{lem_divided_tree_expansion}
For any $n\in \N$, $a,b\in \{2,4,\cdots,N\}$, $E_{a,b}^{(n)}$ is
 bi-anti-symmetric, satisfies
 \eqref{eq_temperature_translation_invariance},
 \eqref{eq_temperature_integral_vanish} and 
\begin{align*}
E^{(n)}(\psi)=\sum_{a,b=2}^{N}1_{a,b\in 2\N}\frah^{a+b}\sum_{\bX\in
 I^a\atop \bY\in I^b}E_{a,b}^{(n)}(\bX,\bY)\psi_{\bX}\psi_{\bY}.
\end{align*}
Moreover, the following inequalities hold for any $a,b\in
 \{2,4,\cdots,N\}$, $n\in\N_{\ge 2}$ and anti-symmetric function
 $g:I^2\to \C$.
\begin{align}
&\|E_{a,b}^{(1)}\|_{1,\infty}\le
 \sum_{p=a}^{N}\sum_{q=b}^{N}1_{p,q\in 2\N}
\left(\begin{array}{c} p \\ a \end{array}\right)
\left(\begin{array}{c} q \\ b \end{array}\right)
D^{\frac{1}{2}(p+q-a-b)}\|F_{p,q}\|_{1,\infty}.\label{eq_divided_1_infinity_1}\\&\|E_{a,b}^{(1)}\|_{1}\le
 \sum_{p=a}^{N}\sum_{q=b}^{N}1_{p,q\in 2\N}
\left(\begin{array}{c} p \\ a \end{array}\right)
\left(\begin{array}{c} q \\ b \end{array}\right)
D^{\frac{1}{2}(p+q-a-b)}\|F_{p,q}\|_{1}.\label{eq_divided_1_1}\\
&[E_{a,b}^{(1)},g]_{1,\infty}\le
 \sum_{p=a}^{N}\sum_{q=b}^{N}1_{p,q\in 2\N}
\left(\begin{array}{c} p \\ a \end{array}\right)
\left(\begin{array}{c} q \\ b \end{array}\right)
D^{\frac{1}{2}(p+q-a-b)}[F_{p,q},g]_{1,\infty}.\label{eq_divided_multiplied_1_infinity_1}\\
&[E_{a,b}^{(1)},g]_{1}\le
 \sum_{p=a}^{N}\sum_{q=b}^{N}1_{p,q\in 2\N}
\left(\begin{array}{c} p \\ a \end{array}\right)
\left(\begin{array}{c} q \\ b \end{array}\right)
D^{\frac{1}{2}(p+q-a-b)}[F_{p,q},g]_{1}.\label{eq_divided_multiplied_1_1}\\
&\|E_{a,b}^{(n)}\|_{1,\infty}\label{eq_divided_1_infinity}\\
&\le (1_{m\neq 0}(m-1)!+1_{m=0})(1_{m\neq
 n-1}(n-m-2)!+1_{m=n-1})\notag\\
&\quad\cdot 2^{-2a-2b}D^{-n+1-\frac{1}{2}(a+b)}\|\tilde{\cC}\|_{1,\infty}^{n-1}\sum_{p_1,q_1=2}^{N}1_{p_1,q_1\in
 2\N}2^{3p_1+3q_1}D^{\frac{p_1+q_1}{2}}\|F_{p_1,q_1}\|_{1,\infty}\notag\\
&\quad\cdot\prod_{j=2}^{m+1}\Bigg(
\sum_{p_j=2}^N2^{3p_j}D^{\frac{p_j}{2}}\|F_{p_j}^{s_j}\|_{1,\infty}\Bigg)
\prod_{k=2}^{n-m}\Bigg(
\sum_{q_k=2}^N2^{3q_k}D^{\frac{q_k}{2}}
\|F_{q_k}^{t_k}\|_{1,\infty}\Bigg)\notag\\
&\quad\cdot 1_{\sum_{j=1}^{m+1}p_j-2m\ge a}
1_{\sum_{k=1}^{n-m}q_k-2(n-m-1)\ge b}.\notag\\
&\|E_{a,b}^{(n)}\|_{1}\label{eq_divided_1}\\
&\le (1_{m\neq 0}(m-1)!+1_{m=0})(1_{m\neq
 n-1}(n-m-2)!+1_{m=n-1})\notag\\
&\quad\cdot2^{-2a-2b}D^{-n+1-\frac{1}{2}(a+b)}\|\tilde{\cC}\|_{1,\infty}^{n-1} \sum_{p_1,q_1=2}^{N}1_{p_1,q_1\in
 2\N}2^{3p_1+3q_1}D^{\frac{p_1+q_1}{2}}\|F_{p_1,q_1}\|_{1,\infty}\notag\\
&\quad\cdot\prod_{j=2}^{m+1}\Bigg(
\sum_{p_j=2}^N2^{3p_j}D^{\frac{p_j}{2}}
(1_{s_j\neq
 n}\|F_{p_j}^{s_j}\|_{1,\infty}+1_{s_j=n}\|F_{p_j}^{s_j}\|_1)\Bigg)\notag\\
&\quad\cdot\prod_{k=2}^{n-m}\Bigg(
\sum_{q_k=2}^N2^{3q_k}D^{\frac{q_k}{2}}
(1_{t_k\neq
 n}\|F_{q_k}^{t_k}\|_{1,\infty}+1_{t_k=n}\|F_{q_k}^{t_k}\|_1)\Bigg)\notag\\
&\quad\cdot 1_{\sum_{j=1}^{m+1}p_j-2m\ge a}
1_{\sum_{k=1}^{n-m}q_k-2(n-m-1)\ge b}.\notag\\
&[E_{a,b}^{(n)},g]_{1,\infty}\label{eq_divided_multiplied_1_infinity}\\
&\le (1_{m\neq 0}(m-1)!+1_{m=0})(1_{m\neq
 n-1}(n-m-2)!+1_{m=n-1})\notag\\
&\quad\cdot
 2^{-2a-2b}D^{-n+1-\frac{1}{2}(a+b)}\|\tilde{\cC}\|_{1,\infty}^{n-2}\notag\\
&\quad\cdot \sum_{p_1,q_1=2}^{N}1_{p_1,q_1\in
 2\N}2^{3p_1+3q_1}D^{\frac{p_1+q_1}{2}}([F_{p_1,q_1},g]_{1,\infty}\|\tilde{\cC}\|_{1,\infty}+[F_{p_1,q_1},\tilde{\cC}]_{1,\infty}\|g\|_{1,\infty})\notag\\
&\quad\cdot\prod_{j=2}^{m+1}\Bigg(
\sum_{p_j=2}^N2^{3p_j}D^{\frac{p_j}{2}}\|F_{p_j}^{s_j}\|_{1,\infty}\Bigg)
\prod_{k=2}^{n-m}\Bigg(
\sum_{q_k=2}^N2^{3q_k}D^{\frac{q_k}{2}}\|F_{q_k}^{t_k}\|_{1,\infty}\Bigg)\notag\\
&\quad\cdot 1_{\sum_{j=1}^{m+1}p_j-2m\ge a}
1_{\sum_{k=1}^{n-m}q_k-2(n-m-1)\ge b}.\notag\\
&[E_{a,b}^{(n)},g]_{1}\label{eq_divided_multiplied_1}\\
&\le (1_{m\neq 0}(m-1)!+1_{m=0})(1_{m\neq
 n-1}(n-m-2)!+1_{m=n-1})\notag\\
&\quad\cdot
 2^{-2a-2b}D^{-n+1-\frac{1}{2}(a+b)}\|\tilde{\cC}\|_{1,\infty}^{n-2}\notag\\
&\quad\cdot \sum_{p_1,q_1=2}^{N}1_{p_1,q_1\in
 2\N}2^{3p_1+3q_1}D^{\frac{p_1+q_1}{2}}([F_{p_1,q_1},g]_{1,\infty}\|\tilde{\cC}\|_{1,\infty}+[F_{p_1,q_1},\tilde{\cC}]_{1,\infty}\|g\|_{1,\infty})\notag\\
&\quad\cdot\prod_{j=2}^{m+1}\Bigg(
\sum_{p_j=2}^N2^{3p_j}D^{\frac{p_j}{2}}
(1_{s_j\neq
 n}\|F_{p_j}^{s_j}\|_{1,\infty}+1_{s_j=n}\|F_{p_j}^{s_j}\|_1)\Bigg)\notag\\
&\quad\cdot\prod_{k=2}^{n-m}\Bigg(
\sum_{q_k=2}^N2^{3q_k}D^{\frac{q_k}{2}}
(1_{t_k\neq
 n}\|F_{q_k}^{t_k}\|_{1,\infty}+1_{t_k=n}\|F_{q_k}^{t_k}\|_1)\Bigg)\notag\\
&\quad\cdot 1_{\sum_{j=1}^{m+1}p_j-2m\ge a}
1_{\sum_{k=1}^{n-m}q_k-2(n-m-1)\ge b}.\notag
\end{align}
\end{lemma}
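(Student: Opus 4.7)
\medskip

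\noindent\emph{Proof proposal.} My plan is to follow the same tree-formula machinery used for Lemmas \ref{lem_simple_tree_expansion} and \ref{lem_double_tree_expansion} (i.e.\ the $b$-band adaptation of \cite[Lemmas 3.1--3.3]{K_BCS}), treating the two separate trees $Tree(\{s_j\}_{j=1}^{m+1},\cC)$ and $Tree(\{t_k\}_{k=1}^{n-m},\cC)$ independently and keeping track of how the input $F_{p_1,q_1}$ sits at the common root vertex~$1$. First I would compute the kernels of $E^{(n)}(\psi)$: the tree operator, being a differential operator in the $\psi^{s_j}$ and $\psi^{t_k}$ variables, commutes with the substitutions $\psi^{s_j}\to 0$, $\psi^{t_k}\to 0$ after the differentiation, so the two trees act on disjoint factors and yield a contraction pattern in which some legs of $F_{p_1,q_1}$ and of each $F^{s_j}$, $F^{t_k}$ are joined by $\cC$-lines belonging to one of the two trees, while the remaining legs become external. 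The binomial coefficients $\binom{p_j}{u_j}$, $\binom{q_k}{v_k}$ count the choice of external legs at each vertex; the sign $(-1)^{\sum u_j\sum(p_i-u_i)+\cdots}$ records the reordering of Grassmann monomials; the restrictions $u_1\le p_1-1$ when $m\ne 0$ and $v_1\le q_1-1$ when $m\ne n-1$ reflect that at least one leg of $F_{p_1,q_1}$ on each side must be consumed by a tree line when that tree has more than one vertex. The final antisymmetrisation over $\S_a\times\S_b$ in \eqref{eq_divided_kernel} enforces bi-anti-symmetry by construction.

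The structural properties come almost for free. Time-translation invariance of $E_{a,b}^{(n)}$ follows from \eqref{eq_general_covariance_time_translation} together with \eqref{eq_temperature_translation_invariance} for each of $F_{p_1,q_1}$ and $F^{s_j}$, $F^{t_k}$, since every internal contraction preserves the joint time-translation symmetry. For the vanishing property \eqref{eq_temperature_integral_vanish} I would use the fact that the two trees only touch $F_{p_1,q_1}$ through its external legs: summing the $a$ $\bX$-side external times of $E_{a,b}^{(n)}$ against a time-translation-invariant $f$ can be reorganised, after a change of internal-time variables exploiting the joint translation invariance of the $\cC$-lines and the $F^{s_j}$-factors, into a sum over the $p_1$ $\bX$-side times of $F_{p_1,q_1}$ against a modified time-translation-invariant kernel, to which the original vanishing property of $F_{p_1,q_1}$ applies; the analogous argument handles the $\bY$-side.

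For the norm estimates \eqref{eq_divided_1_infinity_1}--\eqref{eq_divided_1} I would copy, almost word for word, the derivations in \cite[Lemma 3.3]{K_BCS}: the Gram/Pedra--Salmhofer bound \eqref{eq_general_determinant_bound} supplies the factor $D^{-n+1-(a+b)/2}$, the tree-line estimate gives $\|\tilde{\cC}\|_{1,\infty}^{n-1}$, and the factorial $(1_{m\ne 0}(m-1)!+1_{m=0})(1_{m\ne n-1}(n-m-2)!+1_{m=n-1})$ is produced by estimating each of the two trees via the standard Cayley-type counting and the fact that the root vertex~$1$ absorbs the coordination factor. The combinatorial weight $2^{3p_j}$, $2^{3q_k}$ absorbs the binomial counting of external legs.

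The main novelty, and the step I expect to be the real obstacle, is the pair of bounds \eqref{eq_divided_multiplied_1_infinity_1}, \eqref{eq_divided_multiplied_1_infinity}, \eqref{eq_divided_multiplied_1} together with \eqref{eq_divided_multiplied_1_1} on $[E_{a,b}^{(n)},g]$. Here $g$ pairs two of the external legs of $E_{a,b}^{(n)}$, one on each side, and these two legs may originate either directly from $F_{p_1,q_1}$ or from one of the $F^{s_j}$, $F^{t_k}$. My plan is to split into two cases and use a path-swap trick: if both legs tied by $g$ happen to sit on $F_{p_1,q_1}$, the factor $[F_{p_1,q_1},g]_{1,\infty}$ absorbs the pairing while the rest of the estimate is the standard tree bound with $\|\tilde{\cC}\|_{1,\infty}^{n-1}$, producing the first term; if at least one such leg belongs to some $F^{s_j}$ or $F^{t_k}$, I would re-route the $g$-line along the unique path through the tree back to the root, thereby exchanging $g$ with one of the $\cC$-tree-lines and producing $\|\tilde{\cC}\|_{1,\infty}^{n-2}\|g\|_{1,\infty}$, while the displaced $\cC$-edge now forms a pairing at $F_{p_1,q_1}$ giving the factor $[F_{p_1,q_1},\tilde{\cC}]_{1,\infty}$; this is the second term. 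Verifying that this path-swap is compatible with the Gram bound (so that $D^{-n+1-(a+b)/2}$ is preserved) and with the antisymmetrisation over $\S_a\times\S_b$ is the delicate point; the rest is bookkeeping parallel to the previous cases.
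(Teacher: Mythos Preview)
Your proposal is essentially correct and matches the paper's approach. The structural properties and the bounds \eqref{eq_divided_1_infinity_1}, \eqref{eq_divided_1_1}, \eqref{eq_divided_1_infinity}, \eqref{eq_divided_1} are indeed the $b$-band transcription of \cite[\mbox{Lemma 3.3}]{K_BCS}, and the paper says so explicitly; only the four $[E_{a,b}^{(n)},g]$ bounds are argued anew.

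For those new bounds the paper's mechanism is slightly more direct than your ``path-swap'': it first applies the determinant bound \eqref{eq_general_determinant_bound} \emph{pointwise} to $|f_m^n(\bp,\bu,\bq,\bv)|$, obtaining an inequality with explicit tree-line factors, and only then performs the recursive tree estimation. For $[f_m^n,g]_1$ it bounds the double $\bX,\bY$-sum by first freely summing over the side containing the vertex $n$ and then, on the other side, rooting the tree $T$ (respectively $S$) at the vertex $t_{k_0}$ (respectively $s_{j_0}$) carrying the $g$-weighted external leg. The recursion from branches to this root then automatically produces $[F_{p_1,q_1},g]_{1,\infty}$ when $k_0=1$ and $[F_{p_1,q_1},\tilde{\cC}]_{1,\infty}\|g\|_{1,\infty}$ when $k_0\neq 1$, because in the latter case vertex $1$ is a non-root vertex whose parent edge is a $\tilde{\cC}$-line while $g$ is absorbed at the root. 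No swap or re-routing is needed, and your worry about compatibility with the Gram bound dissolves: the Gram step is already finished before any tree combinatorics begin. The paper then overestimates the two cases by their sum, giving the bracketed factor $[F_{p_1,q_1},g]_{1,\infty}\|\tilde{\cC}\|_{1,\infty}+[F_{p_1,q_1},\tilde{\cC}]_{1,\infty}\|g\|_{1,\infty}$. The $[\cdot,g]_{1,\infty}$ case is handled the same way, except both $j_0$ and $k_0$ are fixed and one roots $T$ at $t_{k_0}$ and then $S$ at $s_{j_0}$. Your minor slip in describing $g$ as tying ``two legs, one on each side'' (that is what $[\cdot,g]_1$ does, but $[\cdot,g]_{1,\infty}$ fixes one leg and $g$-weights the other) is harmless once you adopt this rooting viewpoint.
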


\begin{proof} 
The kernels $E_{a,b}^{(n)}$ $(a,b\in\{2,4,\cdots,N\})$ were
 essentially given in \\
\cite[\mbox{(3.41),(3.39)}]{K_BCS}. The claimed
 properties of $E_{a,b}^{(n)}$ and the inequalities
 \eqref{eq_divided_1_infinity_1},
 \eqref{eq_divided_1_1},
 \eqref{eq_divided_1_infinity},
 \eqref{eq_divided_1}
were essentially proved in \cite[\mbox{Lemma 3.3}]{K_BCS}.
We need to show \eqref{eq_divided_multiplied_1_infinity_1},
 \eqref{eq_divided_multiplied_1_1},
 \eqref{eq_divided_multiplied_1_infinity} and
 \eqref{eq_divided_multiplied_1}.
In fact these inequalities can be proved in similar ways to the
 derivations of \eqref{eq_divided_1_infinity_1},
 \eqref{eq_divided_1_1},
 \eqref{eq_divided_1_infinity},
 \eqref{eq_divided_1}. However, we provide the major part of the proof
 for completeness. Let $\bp:=(p_j)_{1\le j\le m+1}$,  $\bu:=(u_j)_{1\le
 j\le m+1}$, $\bq:=(q_j)_{1\le j\le n-m}$, 
  $\bv:=(v_j)_{1\le j\le n-m}$ for simplicity in the following. Since
 the norm bounds on $E_{a,b}^{(n)}$ follow from norm bounds on the
 function $f_m^n(\bp,\bu,\bq,\bv)$, let us focus on estimating
 $f_m^n(\bp,\bu,\bq,\bv)$. 

First let us consider the case $n=1$. By using the determinant bound
 \eqref{eq_general_determinant_bound} we have for any $\bX_1\in
 I^{u_1}$, $\bY_1\in I^{v_1}$ that
\begin{align*}
&|f_m^n(\bp,\bu,\bq,\bv)(\bX_1,\bY_1)|\\
&\le 
\frah^{p_1+q_1-u_1-v_1}\sum_{\bW_1\in I^{p_1-u_1}\atop \bZ_1\in
 I^{q_1-v_1}}|F_{p_1,q_1}((\bW_1,\bX_1),(\bZ_1,\bY_1))|D^{\frac{1}{2}(p_1+q_1-u_1-v_1)},\end{align*}
which implies that 
\begin{align}
[f_m^n(\bp,\bu,\bq,\bv),g]_{index}\le
 D^{\frac{1}{2}(p_1+q_1-u_1-v_1)}[F_{p_1,q_1},g]_{index},\label{eq_divided_multiplied_1_1_pre}
\end{align}
for $index =`1,\infty'$ or $index=1$ and any anti-symmetric function
 $g:I^2\to \C$. 

Let us consider the case $n\ge 2$. Let us take an anti-symmetric
 function $g:I^2\to \C$ and estimate $[f_m^n(\bp,\bu,\bq,\bv),g]_1$.
 If $n\in \{s_j\}_{j=1}^{m+1}$, we estimate the right-hand side of the
 following inequality.
\begin{align}
&[f_m^n(\bp,\bu,\bq,\bv),g]_1\label{eq_tactic_estimation_left}\\
&\le\sup_{k_1\in \{1,2,\cdots,\sum_{k=1}^{n-m}v_k\}}
\Bigg(\frah^{\sum_{j=1}^{m+1}u_j+\sum_{k=1}^{n-m}v_k}\sum_{\bX\in
 \prod_{j=1}^{m+1}I^{u_j}}\notag\\
&\qquad\qquad\qquad\qquad\qquad\cdot \sup_{Y_0\in I}\sum_{\bY\in \prod_{k=1}^{n-m}I^{v_k}}|f_m^n(\bp,\bu,\bq,\bv)(\bX,\bY)||g(Y_0,Y_{k_1})|\Bigg).\notag
\end{align}
For $k_1\in \{1,2,\cdots,\sum_{k=1}^{n-m}v_k\}$ there uniquely exists
 $k_0\in \{1,2,\cdots,n-m\}$ such that $Y_{k_1}$ is a component of the
 variable of the function $F_{p_1,q_1}$ if $k_0=1$ or
 $F_{q_{k_0}}^{t_{k_0}}$ if $k_0\neq 1$. We consider the vertex
 $t_{k_0}$ as the root of the tree $T\in \T(\{t_k\}_{k=1}^{n-m})$ and
 recursively estimate from the younger branches to the root $t_{k_0}$
 along the lines of $T$. In this procedure we obtain especially 
\begin{align*}
 \sup_{Y_0\in
 I}\left(\frac{1}{h}\right)^{q_1}\sum_{\bY\in I^{q_1}}
|F_{p_1,q_1}(\bX,\bY)||g(Y_0,Y_1)|
\end{align*}
if $k_0=1$,
\begin{align*}
 \sup_{Y_0\in
 I}\left(\frac{1}{h}\right)^{q_1}\sum_{\bY\in I^{q_1}}|F_{p_1,q_1}(\bX,\bY)||\tilde{\cC}(Y_0,Y_1)|
\end{align*}
if $k_0\neq 1$. Then we consider the vertex $n$ as the root of $S\in
 \T(\{s_j\}_{j=1}^{m+1})$ and recursively estimate from the younger
 branches to the root $n$ along the lines of $S$. In the end we
 obtain especially $[F_{p_1,q_1},g]_{1,\infty}$ if $k_0=1$, 
$[F_{p_1,q_1},\tilde{\cC}]_{1,\infty}$ if $k_0\neq 1$. In the case $n\in
 \{t_k\}_{k=1}^{n-m}$ we follow the other way round. We estimate the
 right-hand side of the following inequality.
\begin{align}
&[f_m^n(\bp,\bu,\bq,\bv),g]_1\label{eq_tactic_estimation_right}\\
&\le\sup_{j_1\in \{1,2,\cdots,\sum_{j=1}^{m+1}u_j\}}
\Bigg(\frah^{\sum_{j=1}^{m+1}u_j+\sum_{k=1}^{n-m}v_k}\sum_{\bY\in
 \prod_{k=1}^{n-m}I^{v_k}}\notag\\
&\qquad\qquad\qquad\qquad\qquad\cdot \sup_{X_0\in I}\sum_{\bX\in \prod_{j=1}^{m+1}I^{u_j}}|f_m^n(\bp,\bu,\bq,\bv)(\bX,\bY)||g(X_0,X_{j_1})|\Bigg).\notag
\end{align}
For $j_1\in \{1,2,\cdots,\sum_{j=1}^{m+1}u_j\}$ there uniquely exists
 $j_0\in \{1,2,\cdots,m+1\}$ such that $X_{j_1}$ is a component of the
 variable of the function $F_{p_1,q_1}$ if $j_0=1$ or
 $F_{p_{j_0}}^{s_{j_0}}$ if $j_0\neq 1$. We consider the vertex
 $s_{j_0}$ as the root of the tree $S\in \T(\{s_j\}_{j=1}^{m+1})$ and
 recursively estimate along the lines of $S$. Then we consider the
 vertex $n$ as the root of the tree $T\in \T(\{t_k\}_{k=1}^{n-m})$ and
 recursively estimate along the lines of $T$.
By applying the determinant bound \eqref{eq_general_determinant_bound}
 we have that for any $\bX_j\in I^{u_j}$ $(j=1,2,\cdots,m+1)$, $\bY_k\in
 I^{v_k}$ $(k=1,2,\cdots,n-m)$, 
\begin{align}
&|f_m^n(\bp,\bu,\bq,\bv)((\bX_1,\bX_2,\cdots,\bX_{m+1}),(\bY_1,\bY_2,\cdots,\bY_{n-m}))|
\label{eq_divided_core_kernel_inequality}\\
&\le
 2^{n-1}\sum_{S\in\T(\{s_j\}_{j=1}^{m+1})}\sum_{T\in\T(\{t_k\}_{k=1}^{n-m})}\frah^{p_1+q_1-u_1-v_1}\sum_{\bW_1\in I^{p_1-u_1-d_1(S)}\atop \bW_1'\in I^{d_1(S)}}
\sum_{\bZ_1\in I^{q_1-v_1-d_1(T)}\atop \bZ_1'\in I^{d_1(T)}}\notag\\
&\quad\cdot
\left(\begin{array}{c} p_1-u_1 \\ d_1(S)\end{array}\right)
\left(\begin{array}{c} q_1-v_1 \\ d_1(T)\end{array}\right)
|F_{p_1,q_1}((\bW_1,\bW_1',\bX_1),(\bZ_1,\bZ_1',\bY_1))|\notag\\
&\quad\cdot 
\prod_{j=2}^{m+1}\Bigg(
\frah^{p_j-u_j}\sum_{\bW_j\in I^{p_j-u_j-d_{s_j}(S)}\atop \bW_j'\in
 I^{d_{s_j}(S)}}
\left(\begin{array}{c} p_j-u_j \\ d_{s_j}(S)\end{array}\right)
|F_{p_j}^{s_j}(\bW_j,\bW_j',\bX_j)|\Bigg)\notag\\
&\quad\cdot 
\prod_{k=2}^{n-m}\Bigg(
\frah^{q_k-v_k}\sum_{\bZ_k\in I^{q_k-v_k-d_{t_k}(T)}\atop \bZ_k'\in
 I^{d_{t_k}(T)}}
\left(\begin{array}{c} q_k-v_k \\ d_{t_k}(T)\end{array}\right)
|F_{q_k}^{t_k}(\bZ_k,\bZ_k',\bY_k)|\Bigg)\notag\\
&\quad\cdot D^{\frac{1}{2}(\sum_{j=1}^{m+1}p_j-2m-\sum_{j=1}^{m+1}u_j)
+\frac{1}{2}(\sum_{k=1}^{n-m}q_k-2(n-m-1)-\sum_{k=1}^{n-m}v_k)
}\notag\\
&\quad\cdot 
\left(1_{m=0}+1_{m\neq 0}\left|\prod_{\{p,q\}\in
 S}\D_{\{p,q\}}(\cC)\prod_{j=1}^{m+1}\psi_{\bW_j'}^{s_j}
\right|\right)\notag\\
&\quad\cdot\left(1_{m=n-1}+1_{m\neq n-1}\left|\prod_{\{p,q\}\in
 T}\D_{\{p,q\}}(\cC)\prod_{k=1}^{n-m}\psi_{\bZ_k'}^{t_k}
\right|\right)\notag\\
&=2^{n-1}
D^{-n+1-\frac{1}{2}(\sum_{j=1}^{m+1}u_j+\sum_{k=1}^{n-m}v_k)}
\sum_{S\in\T(\{s_j\}_{j=1}^{m+1})}\sum_{T\in\T(\{t_k\}_{k=1}^{n-m})}\notag\\
&\quad\cdot
\frah^{p_1+q_1-u_1-v_1}\sum_{\bW_1\in I^{p_1-u_1-d_1(S)}\atop \bW_1'\in I^{d_1(S)}}
\sum_{\bZ_1\in I^{q_1-v_1-d_1(T)}\atop \bZ_1'\in I^{d_1(T)}}\notag\\
&\quad\cdot
\left(\begin{array}{c} p_1-u_1 \\ d_1(S)\end{array}\right)
\left(\begin{array}{c} q_1-v_1 \\ d_1(T)\end{array}\right)D^{\frac{1}{2}(p_1+q_1)}
|F_{p_1,q_1}((\bW_1,\bW_1',\bX_1),(\bZ_1,\bZ_1',\bY_1))|\notag\\
&\quad\cdot 
\prod_{j=2}^{m+1}\Bigg(
\frah^{p_j-u_j}\sum_{\bW_j\in I^{p_j-u_j-d_{s_j}(S)}\atop \bW_j'\in
 I^{d_{s_j}(S)}}
\left(\begin{array}{c} p_j-u_j \\ d_{s_j}(S)\end{array}\right)D^{\frac{p_j}{2}}
|F_{p_j}^{s_j}(\bW_j,\bW_j',\bX_j)|\Bigg)\notag\\
&\quad\cdot 
\prod_{k=2}^{n-m}\Bigg(
\frah^{q_k-v_k}\sum_{\bZ_k\in I^{q_k-v_k-d_{t_k}(T)}\atop \bZ_k'\in
 I^{d_{t_k}(T)}}
\left(\begin{array}{c} q_k-v_k \\ d_{t_k}(T)\end{array}\right)D^{\frac{q_k}{2}}
|F_{q_k}^{t_k}(\bZ_k,\bZ_k',\bY_k)|\Bigg)\notag\\
&\quad\cdot 
\left(1_{m=0}+1_{m\neq 0}\left|\prod_{\{p,q\}\in
 S}\D_{\{p,q\}}(\cC)\prod_{j=1}^{m+1}\psi_{\bW_j'}^{s_j}
\right|\right)\notag\\
&\quad\cdot \left(1_{m=n-1}+1_{m\neq n-1}\left|\prod_{\{p,q\}\in
 T}\D_{\{p,q\}}(\cC)\prod_{k=1}^{n-m}\psi_{\bZ_k'}^{t_k}
\right|\right).\notag
\end{align}
Recall that for $S\in \T(\{s_j\}_{j=1}^{m+1})$, $d_{s_j}(S)$ denotes the
 degree of the vertex $s_j$ in $S$. See \cite[\mbox{Subsection 3.1}]{K_BCS}
for the definition of the operator $\D_{\{p,q\}}(\cC)$. 
By following the tactics of
 estimation explained in and after \eqref{eq_tactic_estimation_left},
 \eqref{eq_tactic_estimation_right} we can derive that
\begin{align}
&[f_m^n(\bp,\bu,\bq,\bv),g]_1\label{eq_divided_multiplied_1_pre_pre}\\
&\le
 2^{n-1}D^{-n+1-\frac{1}{2}(\sum_{j=1}^{m+1}u_j+\sum_{k=1}^{n-m}v_k)}\|\tilde{\cC}\|_{1,\infty}^{n-2}\sum_{S\in\T(\{s_j\}_{j=1}^{m+1})}\sum_{T\in\T(\{t_k\}_{k=1}^{n-m})}\notag\\
&\quad\cdot\left(\begin{array}{c} p_1-u_1\\ d_1(S)\end{array}
\right)d_1(S)!
\left(\begin{array}{c} q_1-v_1\\ d_1(T)\end{array}
\right)d_1(T)!
D^{\frac{p_1+q_1}{2}}\notag\\
&\quad\cdot ([F_{p_1,q_1},g]_{1,\infty}\|\tilde{\cC}\|_{1,\infty}+
[F_{p_1,q_1},\tilde{\cC}]_{1,\infty}\|g\|_{1,\infty})\notag\\
&\quad\cdot\prod_{j=2}^{m+1}\Bigg(
\left(\begin{array}{c} p_j-u_j\\ d_{s_j}(S)\end{array}
\right)d_{s_j}(S)!
D^{\frac{p_j}{2}}(
1_{s_j=n}\|F_{p_j}^{s_j}\|_1+1_{s_j\neq
 n}\|F_{p_j}^{s_j}\|_{1,\infty})\Bigg)\notag\\
&\quad\cdot\prod_{k=2}^{n-m}\Bigg(
\left(\begin{array}{c} q_k-v_k\\ d_{t_k}(T)\end{array}
\right)d_{t_k}(T)!
D^{\frac{q_k}{2}}(
1_{t_k=n}\|F_{q_k}^{t_k}\|_1+1_{t_k\neq n}\|F_{q_k}^{t_k}\|_{1,\infty})\Bigg).\notag
\end{align}
In fact the inequality with the term 
\begin{align*}
\sup\{[F_{p_1,q_1},g]_{1,\infty}\|\tilde{\cC}\|_{1,\infty},\ [F_{p_1,q_1},\tilde{\cC}]_{1,\infty}\|g\|_{1,\infty}\}
\end{align*}
in place of 
\begin{align*}
[F_{p_1,q_1},g]_{1,\infty}\|\tilde{\cC}\|_{1,\infty}+
[F_{p_1,q_1},\tilde{\cC}]_{1,\infty}\|g\|_{1,\infty}
\end{align*}
can hold. However, we choose to use the above inequality for
 simplicity. Also, we took into
 account the fact that 
\begin{align*}
\prod_{\{p,q\}\in
 S}\D_{\{p,q\}}(\cC)\prod_{j=1}^{m+1}\psi_{\bW_j'}^{s_j},\quad
\prod_{\{p,q\}\in
 T}\D_{\{p,q\}}(\cC)\prod_{k=1}^{n-m}\psi_{\bZ_k'}^{t_k}
\end{align*}
create at most $\prod_{j=1}^{m+1}d_{s_j}(S)!$,
 $\prod_{k=1}^{n-m}d_{t_k}(T)!$ terms respectively. 

In order to support the readers, let us present an intermediate step
 between \eqref{eq_divided_core_kernel_inequality} and
 \eqref{eq_divided_multiplied_1_pre_pre}. Assume that $n\in
 \{s_j\}_{j=1}^{m+1}$. Take any $k_1\in
 \{1,2,\cdots,\sum_{k=1}^{n-m}v_k\}$. In this case $m\neq 0$. By
 following the strategy explained after \eqref{eq_tactic_estimation_left}, 
\begin{align*}
&\left(\frac{1}{h}\right)^{\sum_{k=1}^{n-m}v_k}\sup_{Y_0\in I}\sum_{\bY\in
 \prod_{k=1}^{n-m}I^{v_k}}|f_m^n(\bp,\bu,\bq,\bv)((\bX_1,\bX_2,\cdots,\bX_{m+1}),\bY)||g(Y_0,Y_{k_1})|\\
&\le
 2^{n-1}D^{-n+1-\frac{1}{2}(\sum_{j=1}^{m+1}u_j+\sum_{k=1}^{n-m}v_k)}
\sum_{S\in\T(\{s_j\}_{j=1}^{m+1})}\sum_{T\in\T(\{t_k\}_{k=1}^{n-m})}\notag\\
&\quad\cdot
\frah^{p_1-u_1}\sum_{\bW_1\in I^{p_1-u_1-d_1(S)}\atop \bW_1'\in
 I^{d_1(S)}}
\left(\begin{array}{c} p_1-u_1 \\ d_1(S)\end{array}\right)
\left(\begin{array}{c} q_1-v_1 \\ d_1(T)\end{array}\right)
d_1(T)!D^{\frac{1}{2}(p_1+q_1)}\\
&\quad\cdot\Bigg(
\left(\frac{1}{h}\right)^{q_1}\sup_{Y_0\in I}\sum_{\bY\in
 I^{q_1}}|F_{p_1,q_1}((\bW_1,\bW_1',\bX_1),\bY)||g(Y_0,Y_1)|\|\tilde{\cC}\|_{1,\infty}^{n-m-1}\\
&\qquad + 1_{m\neq n-1}\left(\frac{1}{h}\right)^{q_1}\sup_{Y_0\in
 I}\sum_{\bY\in
 I^{q_1}}|F_{p_1,q_1}((\bW_1,\bW_1',\bX_1),\bY)||\tilde{\cC}(Y_0,Y_1)|
\|g\|_{1,\infty}
\|\tilde{\cC}\|_{1,\infty}^{n-m-2}\Bigg)\\
&\quad\cdot
 \prod_{j=2}^{m+1}\Bigg(\left(\frac{1}{h}\right)^{p_j-u_j}
\sum_{\bW_j\in I^{p_j-u_j-d_{s_j}(S)}\atop \bW_j'\in
 I^{d_{s_j}(S)}}\left(\begin{array}{c} p_j-u_j \\
		  d_{s_j}(S)\end{array}\right)D^{\frac{p_j}{2}}
|F_{p_j}^{s_j}(\bW_j,\bW_j',\bX_j)|\Bigg)\\
&\quad\cdot
 \prod_{k=2}^{n-m}\Bigg(\left(\begin{array}{c} q_k-v_k \\
		  d_{t_k}(T)\end{array}\right)d_{t_k}(T)!
D^{\frac{q_k}{2}}
\|F_{q_k}^{t_k}\|_{1,\infty}\Bigg)\left|\prod_{\{p,q\}\in S}\D_{\{p,q\}}(\cC)\prod_{j=1}^{m+1}\psi_{\bW_j'}^{s_j}
\right|.
\end{align*}
Then by integrating with the variable $(\bX_1,\bX_2,\cdots,\bX_{m+1})$
 we obtain the right-hand side of
 \eqref{eq_divided_multiplied_1_pre_pre}. 
By following the strategy explained after
 \eqref{eq_tactic_estimation_right} we can deal with the case $n\in
 \{t_k\}_{k=1}^{n-m}$ as well. 

Now to restart with \eqref{eq_divided_multiplied_1_pre_pre}, let us
 recall the following
 estimate based on the well-known theorem on the number of trees with fixed
 degrees. 
\begin{align*}
&\sum_{S\in\T(\{s_j\}_{j=1}^{m+1})}\prod_{j=1}^{m+1}\left(\left(\begin{array}{c}
							   p_j-u_j\\
                                                           d_{s_j}(S)\end{array}
							  \right) d_{s_j}(S)!
\right)\\
&\le (1_{m=0}+1_{m\neq
 0}(m-1)!2^{-m-1})2^{2\sum_{j=1}^{m+1}(p_j-u_j)},\\
&\sum_{T\in\T(\{t_k\}_{k=1}^{n-m})}\prod_{k=1}^{n-m}\left(\left(\begin{array}{c}
							   q_k-v_k\\
                                                           d_{t_k}(T)\end{array}
							  \right) d_{t_k}(T)!
\right)\\
&\le (1_{m=n-1}+1_{m\neq
 n-1}(n-m-2)!2^{-n+m})2^{2\sum_{k=1}^{n-m}(q_k-v_k)}.
 \end{align*}
See \cite[\mbox{(3.20),(3.21)}]{K_BCS}.
By substituting these inequalities and using the inequality 
\begin{align*}
&2^{n-1}(1_{m=0}+1_{m\neq 0}(m-1)!2^{-m-1})(1_{m=n-1}+1_{m\neq
 n-1}(n-m-2)!2^{-n+m})\\
&\le (1_{m=0}+1_{m\neq 0}(m-1)!)(1_{m=n-1}+1_{m\neq n-1}(n-m-2)!)
\end{align*}
we obtain that 
\begin{align}
&[f_m^n(\bp,\bu,\bq,\bv),g]_1\label{eq_divided_multiplied_1_pre}\\
&\le (1_{m\neq 0}(m-1)!+1_{m=0})(1_{m\neq n-1}(n-m-2)!+1_{m=n-1})\notag\\
&\quad\cdot 2^{-2\sum_{j=1}^{m+1}u_j-2\sum_{k=1}^{n-m}v_k}
D^{-n+1-\frac{1}{2}(\sum_{j=1}^{m+1}u_j+\sum_{k=1}^{n-m}v_k)}
\|\tilde{\cC}\|_{1,\infty}^{n-2}\notag\\
&\quad\cdot
2^{2p_1+2q_1}D^{\frac{p_1+q_1}{2}}([F_{p_1,q_1},g]_{1,\infty}\|\tilde{\cC}\|_{1,\infty}+
[F_{p_1,q_1},\tilde{\cC}]_{1,\infty}\|g\|_{1,\infty})\notag\\
&\quad\cdot\prod_{j=2}^{m+1}(
2^{2p_j}D^{\frac{p_j}{2}}(1_{s_j=n}\|F_{p_j}^{s_j}\|_1+1_{s_j\neq
 n}\|F_{p_j}^{s_j}\|_{1,\infty}))\notag\\
&\quad\cdot\prod_{k=2}^{n-m}(
2^{2q_k}D^{\frac{q_k}{2}}(
1_{t_k=n}\|F_{q_k}^{t_k}\|_1
+1_{t_k\neq n}\|F_{q_k}^{t_k}\|_{1,\infty})).\notag
\end{align}

Let us consider $[f_m^n(\bp,\bu,\bq,\bv),g]_{1,\infty}$.  To estimate 
\begin{align*}
&\sup_{X_0\in I\atop j_1\in \{1,2,\cdots,\sum_{j=1}^{m+1}u_j\}}
\Bigg(\frah^{\sum_{j=1}^{m+1}u_j+\sum_{k=1}^{n-m}v_k-1}\sum_{\bX\in
 \prod_{j=1}^{m+1}I^{u_j}}1_{X_{j_1}=X_0}\\
&\qquad\qquad\qquad\qquad\cdot \sup_{Y_0\in I\atop 
k_1\in \{1,2,\cdots,\sum_{k=1}^{n-m}v_k\}
}\sum_{\bY\in
 \prod_{k=1}^{n-m}I^{v_k}}|f_m^n(\bp,\bu,\bq,\bv)(\bX,\bY)||g(Y_0,Y_{k_1})|\Bigg),
\end{align*}
we fix $j_1\in \{1,2,\cdots,\sum_{j=1}^{m+1}u_j\}$, $k_1\in
 \{1,2,\cdots,\sum_{k=1}^{n-m}v_k\}$. Then there uniquely exist $j_0\in \{1,2,\cdots,m+1\}$, $k_0\in
 \{1,2,\cdots,n-m\}$ such that $X_{j_1}$ is a component of the variable
 of the function $F_{p_1,q_1}$ if $j_0=1$ or $F_{p_{j_0}}^{s_{j_0}}$ if
 $j_0\neq 1$ and $Y_{k_1}$ is a component of the variable of the
 function $F_{p_1,q_1}$ if $k_0=1$ or $F_{q_{k_0}}^{t_{k_0}}$ if
 $k_0\neq 1$. We recursively estimate along the lines of $T\in
 \T(\{t_k\}_{k=1}^{n-m})$ by considering the vertex $t_{k_0}$ as the
 root in the first place. Then we recursively estimate along the lines of $S\in
 \T(\{s_j\}_{j=1}^{m+1})$ by considering the vertex $s_{j_0}$ as the
 root.  On the other hand, we estimate 
\begin{align*}
&\sup_{Y_0\in I\atop k_1\in \{1,2,\cdots,\sum_{k=1}^{n-m}v_k\}}
\Bigg(\frah^{\sum_{j=1}^{m+1}u_j+\sum_{k=1}^{n-m}v_k-1}\sum_{\bY\in
 \prod_{k=1}^{n-m}I^{v_k}}1_{Y_{k_1}=Y_0}\\
&\qquad\qquad\qquad\qquad\cdot \sup_{X_0\in I\atop 
j_1\in \{1,2,\cdots,\sum_{j=1}^{m+1}u_j\}
}\sum_{\bX\in
 \prod_{j=1}^{m+1}I^{u_j}}|f_m^n(\bp,\bu,\bq,\bv)(\bX,\bY)||g(X_0,X_{j_1})|\Bigg)
\end{align*}
by performing the recursive estimation along the lines of $S\in
 \T(\{s_j\}_{j=1}^{m+1})$ first and the recursive estimation along the
 lines of $T\in
 \T(\{t_k\}_{k=1}^{n-m})$ afterwards. Since the procedure is parallel to
 the estimation of $[f_m^n(\bp,\bu,\bq,\bv),g]_{1}$, we only state the result.
\begin{align}
&[f_m^n(\bp,\bu,\bq,\bv),g]_{1,\infty}\label{eq_divided_multiplied_1_infinity_pre}\\
&\le (1_{m\neq 0}(m-1)!+1_{m=0})(1_{m\neq n-1}(n-m-2)!+1_{m=n-1})\notag\\
&\quad\cdot 2^{-2\sum_{j=1}^{m+1}u_j-2\sum_{k=1}^{n-m}v_k}
D^{-n+1-\frac{1}{2}(\sum_{j=1}^{m+1}u_j+\sum_{k=1}^{n-m}v_k)}
\|\tilde{\cC}\|_{1,\infty}^{n-2}\notag\\
&\quad\cdot
2^{2p_1+2q_1}D^{\frac{p_1+q_1}{2}}([F_{p_1,q_1},g]_{1,\infty}\|\tilde{\cC}\|_{1,\infty}
+[F_{p_1,q_1},\tilde{\cC}]_{1,\infty}\|g\|_{1,\infty})\notag\\
&\quad\cdot\prod_{j=2}^{m+1}(2^{2p_j}D^{\frac{p_j}{2}}\|F_{p_j}^{s_j}\|_{1,\infty})
\prod_{k=2}^{n-m}(2^{2q_k}D^{\frac{q_k}{2}}\|F_{q_k}^{t_k}\|_{1,\infty}).\notag
\end{align}
Again we overestimated by replacing 
\begin{align*}
\sup\{[F_{p_1,q_1},g]_{1,\infty}\|\tilde{\cC}\|_{1,\infty},\ [F_{p_1,q_1},\tilde{\cC}]_{1,\infty}\|g\|_{1,\infty}\}
\end{align*}
by their sum for simplicity.

It follows from \eqref{eq_divided_kernel} that 
\begin{align}
&[E_{a,b}^{(n)},g]_{index}\label{eq_divided_multiplied_outside}\\
&\le \sum_{p_1,q_1=2}^{N}1_{p_1,q_1\in 2\N}\sum_{u_1=0}^{p_1}
(1_{m=0}+1_{m\neq 0}1_{u_1\le
 p_1-1})\left(\begin{array}{c} p_1 \\ u_1\end{array}\right)\notag\\
&\quad\cdot \sum_{v_1=0}^{q_1}
(1_{m=n-1}+1_{m\neq n-1}1_{v_1\le
 q_1-1})\left(\begin{array}{c} q_1 \\ v_1\end{array}\right)\notag\\
&\quad\cdot \prod_{j=2}^{m+1}\left(\sum_{p_j=2}^N\sum_{u_j=0}^{p_j-1}
\left(\begin{array}{c} p_j \\ u_j\end{array}\right)\right)
\prod_{k=2}^{n-m}\left(\sum_{q_k=2}^N\sum_{v_k=0}^{q_k-1}
\left(\begin{array}{c} q_k \\ v_k\end{array}\right)\right)
 [f_m^n(\bp,\bu,\bq,\bv),g]_{index}\notag\\
&\quad\cdot 1_{\sum_{j=1}^{m+1}u_j=a}1_{\sum_{k=1}^{n-m}v_k=b}
1_{\sum_{j=1}^{m+1}p_j-2m\ge a}1_{\sum_{k=1}^{n-m}q_k-2(n-m-1)\ge b},\notag
\end{align}
where $index=`1,\infty'$ or $index=1$. By substituting
 \eqref{eq_divided_multiplied_1_1_pre} for $index=`1,\infty'$, $1$,
 \eqref{eq_divided_multiplied_1_infinity_pre},
 \eqref{eq_divided_multiplied_1_pre} into
 \eqref{eq_divided_multiplied_outside} we obtain
 \eqref{eq_divided_multiplied_1_infinity_1}, 
 \eqref{eq_divided_multiplied_1_1},
 \eqref{eq_divided_multiplied_1_infinity},
 \eqref{eq_divided_multiplied_1} respectively.
\end{proof}

\subsection{Generalized
  covariances}\label{subsec_generalized_covariances}

We construct the general multi-scale integration process by assuming
scale-dependent bound properties of covariances. Here let us list the
properties of the generalized covariances. Assume that
$N_{\beta}<\hat{N}_{\beta}$,
$N_{\beta},\hat{N}_{\beta}\in\Z$. These numbers represent the
integration scales. We should think that at the scale
$\hat{N}_{\beta}+1$ we perform a single-scale ultra-violet (UV)
integration and from $\hat{N}_{\beta}$ to $N_{\beta}$ we perform a
multi-scale infrared (IR) integration. Let $c_0,\ M\in \R_{\ge 1}$,
$c_{end}\in \R_{>0}$. We assume that covariances $\cC_l:I_0^2\to\C$
$(l=N_{\beta},N_{\beta}+1,\cdots,\hat{N}_{\beta})$ satisfy the
following properties. 
\begin{itemize}
\item $\cC_l$ $(l=N_{\beta}+1,N_{\beta}+2,\cdots,\hat{N}_{\beta})$
      satisfy \eqref{eq_general_covariance_time_translation}.
\item 
\begin{align}
&\cC_{N_{\beta}}(\orho\rho\bx s, \oeta\eta\by t)=
 \cC_{N_{\beta}}(\orho\rho\bx 0, \oeta\eta\by
 0),\label{eq_final_covariance_time_independence}\\
&(\forall (\orho,\rho,\bx,s),(\oeta,\eta,\by,t)\in I_0).\notag
\end{align}
\item 
\begin{align}
&|\det(\<\bu_i,\bv_j\>_{\C^m}\cC_l(X_i,Y_j))_{1\le i,j\le n}|\le
 (c_0M^{\sa
 (l-\hat{N}_{\beta})})^n,\label{eq_scale_covariance_determinant_bound}\\
&(\forall m,n\in \N,\ \bu_i,\bv_i\in\C^m\text{ with
 }\|\bu_i\|_{\C^m},\|\bv_i\|_{\C^m}\le 1,\notag\\
&\quad X_i,Y_i\in I_0\ (i=1,2,\cdots,n),\ l\in
 \{N_{\beta},N_{\beta}+1,\cdots,\hat{N}_{\beta}\}).\notag
\end{align}
\item 
\begin{align}
&\|\tilde{\cC_l}\|_{1,\infty}\le c_0\left(1_{N_{\beta}+1\le l\le
 \hat{N}_{\beta}}M^{(\sa-1-\sum_{j=1}^d\frac{1}{\sn_j})(l-\hat{N}_{\beta})}
+1_{l=N_{\beta}}c_{end}\right),\label{eq_scale_covariance_decay_bound}\\
&(\forall l\in \{N_{\beta},N_{\beta}+1,\cdots,\hat{N}_{\beta}\}).\notag
\end{align}
\item 
\begin{align}
&\|\tilde{\cC_l}\|\le c_0
 M^{(\sa-1-\sum_{j=1}^d\frac{1}{\sn_j})(l-\hat{N}_{\beta})},\label{eq_scale_covariance_coupled_decay_bound}\\
&(\forall l\in \{N_{\beta}+1,N_{\beta}+2,\cdots,\hat{N}_{\beta}\}).\notag
\end{align}
\end{itemize}
Here $\tilde{\cC}_l:I^2\to \C$ is the anti-symmetric extension of
$\cC_l$ defined as in \eqref{eq_covariance_anti_symmetric_extension}. 
In Subsection \ref{subsec_real_covariance} we will explicitly define these scale-dependent covariances by decomposing the
actual covariance appearing in the formulation Lemma \ref{lem_final_Grassmann_formulation}.

\subsection{Multi-scale integration without the artificial
  term}\label{subsec_integration_without}

In the rest of this section we always extend the coupling constant 
to be a complex parameter. To distinguish, let $u$ denote the extended
coupling constant and set 
\begin{align*}
&V(u)(\psi):=\frac{u}{L^dh}\sum_{(\rho,\bx)\in\cB\times
 \G}\sum_{s\in[0,\beta)_h}\opsi_{1\rho\bx s}\psi_{1\rho\bx
 s}\\
&\qquad\qquad\quad +\frac{u}{L^dh}\sum_{(\rho,\bx),(\eta,\by)\in\cB\times \G}\sum_{s\in [0,\beta)_h}\opsi_{1\rho\bx s}\psi_{2\rho\bx
 s}\opsi_{2\eta\by s}\psi_{1\eta\by
 s},\\
&W(u)(\psi):=\frac{u}{\beta L^dh^2}\sum_{(\rho,\bx),(\eta,\by)\in\cB\times
 \G}\sum_{s,t\in[0,\beta)_h}\opsi_{1\rho\bx s}\psi_{2\rho\bx
 s}\opsi_{2\eta\by t}\psi_{1\eta\by
 t}.
\end{align*}
In this subsection we construct a multi-scale integration for the
Grassmann polynomial
\begin{align*}
\log\left(
\int
 e^{-V(u)(\psi+\psi^1)+W(u)(\psi+\psi^1)}d\mu_{\sum_{l=N_{\beta}+1}^{\hat{N}_{\beta}}\cC_l}(\psi^1)\right).
\end{align*}
The well-definedness of this Grassmann polynomial is a priori guaranteed
only for small $u$. We are going to construct an analytic continuation
of this Grassmann polynomial. Uniform boundedness of the analytically
continued polynomial is important in controlling the integrand of the
Gaussian integrals in the final formulation Lemma
\ref{lem_final_Grassmann_formulation}
\eqref{item_final_Grassmann_formulation}. In the next subsection we will
perform a multi-scale integration by adding the artificial term
$A(\psi)$ to the input polynomial. We want to prove the analyticity of
Grassmann polynomials with the variable $u$ as a result of the
multi-scale integration in this subsection. For this purpose it is natural to consider
kernels of Grassmann polynomials as elements of the Banach space
$C(\overline{D(r)},\Map(I^m,\C))$ equipped with the norm
$\|\cdot\|_{1,\infty,r}$ defined by 
$$
\|f\|_{1,\infty,r}:=\sup_{u\in\overline{D(r)}}\|f(u)\|_{1,\infty}.
$$
We also let $\|\cdot\|_{1,\infty,r}$ denote the uniform norm of
$C(\overline{D(r)},\C)$ for notational consistency. Similarly for $f\in
C(\overline{D(r)},\Map(I^m,\C))$ and an anti-symmetric function
$g:I^2\to\C$ we set
$$
[f,g]_{1,\infty,r}:=\sup_{u\in \overline{D(r)}}[f(u),g]_{1,\infty}.
$$

More generally, for any domain $D$ of $\C^n$ and finite-dimensional complex Banach space
$B$ we let $C(\overline{D},B)$, $C^{\o}(D,B)$ denote the set of all
continuous maps from $\overline{D}$ to $B$, the set of all analytic maps
from $D$ to $B$ respectively. In practice we let $B$ be
$\bigwedge_{even}\cV$ or $\Map(I^m,\C)$, even though we do not always specify a
norm on these complex vector spaces. The finite-dimensionality implies that every
norm is equivalent to each other. Normally, we use
$\|\cdot\|_{1,\infty}$ or $\|\cdot\|_1$ as a norm of $\Map(I^m,\C)$ and
induce a norm of $\bigwedge_{even}\cV$ by measuring anti-symmetric
kernels of a Grassmann polynomial by $\|\cdot\|_{1,\infty}$ or
$\|\cdot\|_1$. 
The readers should
understand which norm is being considered from the context. Observe 
that once a norm is defined in $\bigwedge_{even}\cV$, $f\in
C(\overline{D},\bigwedge_{even}\cV)$ is equivalent to $f_0\in
C(\overline{D})$, $f(\cdot)_m\in C(\overline{D},\Map(I^m,\C))$
$(m=2,4,\cdots,N)$, which is equivalent to $f_0\in C(\overline{D})$,
$f(\cdot)_m(\bX)\in C(\overline{D})$
$(\bX\in I^m,\ m=2,4,\cdots,N)$,
where $f(u)_m$ $(m=2,4,\cdots,N)$ are anti-symmetric
kernels of $f(u)(\psi)$ for $u\in \overline{D}$. The parallel statements 
can be made for $C^{\o}(D,\bigwedge_{even}\cV)$. In order to systematically
describe properties of Grassmann data in the multi-scale integration, we
define several subsets of $C(\overline{D(r)},\bigwedge_{even}\cV)$. In
the following we let $l\in
\{N_{\beta}+1,N_{\beta}+2,\cdots,\hat{N}_{\beta}\}$, $r\in\R_{>0}$ and
$\alpha\in\R_{\ge 1}$. 

We define the set $\cQ(r,l)$ as follows. $f$ belongs to $\cQ(r,l)$ if
and only if 
\begin{itemize}
\item 
$$
f\in C\left(\overline{D(r)},\bigwedge_{even}\cV\right)\cap C^{\o}\left(D(r),\bigwedge_{even}\cV\right). 
$$
\item 
For any $u\in \overline{D(r)}$ the anti-symmetric kernels $f(u)_m:I^m\to
      \C$ $(m=2,4,\cdots,N)$ satisfy
      \eqref{eq_temperature_translation_invariance} and 
\begin{align}
&\frac{h}{N}\alpha^{2}M^{(\sum_{j=1}^d\frac{1}{\sn_j}+1)(l-\hat{N}_{\beta})}\|f_0\|_{1,\infty,r}\le
 L^{-d},\label{eq_0_1_bound}\\
&\sum_{m=2}^Nc_0^{\frac{m}{2}}\alpha^mM^{\frac{m}{2}\sa(l-\hat{N}_{\beta})}\|f_m\|_{1,\infty,r}\le
 L^{-d}.\notag
\end{align}
\end{itemize}
Simply speaking, we use the set $\cQ(r,l)$ to collect Grassmann data bounded by the
      inverse volume factor.

We define the set $\cR(r,l)$ as follows. 
$f$ belongs to $\cR(r,l)$ if and only if
\begin{itemize}
\item 
$$
f\in C\left(\overline{D(r)},\bigwedge_{even}\cV\right)\cap C^{\o}\left(D(r),\bigwedge_{even}\cV\right). 
$$
\item There exist $f_{p,q}\in C(\overline{D(r)},\Map(I^p\times I^q,\C))$
$(p,q\in \{2,4,\cdots,N\})$ such that for any $u\in\overline{D(r)}$,
$p,q\in \{2,4,\cdots,N\}$, $f_{p,q}(u):I^p\times I^q\to \C$ is
bi-anti-symmetric, satisfies
\eqref{eq_temperature_translation_invariance},
\eqref{eq_temperature_integral_vanish}, 
\begin{align*}
f(u)(\psi)=\sum^{N}_{p,q=2}1_{p,q\in 2\N}\frah^{p+q}\sum_{\bX\in
 I^p\atop \bY\in I^q}f_{p,q}(u)(\bX,\bY)\psi_{\bX}\psi_{\bY}
\end{align*}
and 
\begin{align}
&M^{-(\sum_{j=1}^d\frac{1}{\sn_j}+1)(l-\hat{N}_{\beta})}\sum_{p,q=2}^Nc_0^{\frac{p+q}{2}}\alpha^{p+q}M^{\frac{p+q}{2}\sa(l-\hat{N}_{\beta})}\|f_{p,q}\|_{1,\infty,r}\le
 1,\label{eq_0_2_bound}\\
&M^{-(\sum_{j=1}^d\frac{1}{\sn_j}+1)(l-\hat{N}_{\beta})}\sum_{p,q=2}^Nc_0^{\frac{p+q}{2}}\alpha^{p+q}M^{\frac{p+q}{2}\sa(l-\hat{N}_{\beta})}[f_{p,q},g]_{1,\infty,r}\le
 L^{-d}\|g\|,\label{eq_0_2_multiplied_bound}
\end{align}
for any anti-symmetric function $g:I^2\to \C$. 
\end{itemize}
In essence the set $\cR(r,l)$ will be
used to collect Grassmann data having the vanishing property \eqref{eq_temperature_integral_vanish}.

Let us start explaining the inductive multi-scale integration process by
explicitly defining the initial Grassmann data. Define
$V_2^{0-1,\hat{N}_{\beta}}:\overline{D(r)}\to \Map(I^2,\C)$ by 
\begin{align}
&V_2^{0-1,\hat{N}_{\beta}}(u)(\orho_1\rho_1\bx_1s_1\xi_1,\orho_2\rho_2\bx_2s_2\xi_2)\label{eq_kernel_0_1_initial}\\
&:=-\frac{1}{2}uL^{-d}h1_{(\rho_1,\bx_1,s_1)=(\rho_2,\bx_2,s_2)}1_{\orho_1=\orho_2=1}(1_{(\xi_1,\xi_2)=(1,-1)}-
 1_{(\xi_1,\xi_2)=(-1,1)}).\notag
\end{align}
Then we define $V^{0-1,\hat{N}_{\beta}}\in
C(\overline{D(r)},\bigwedge_{even}\cV)$ by
\begin{align*}
V^{0-1,\hat{N}_{\beta}}(u)(\psi):=\frah^2\sum_{\bX\in
 I^2}V_2^{0-1,\hat{N}_{\beta}}(u)(\bX)\psi_{\bX}.
\end{align*}
Let us define $V_{2,2}^{0-2,\hat{N}_{\beta}}:\overline{D(r)}\to
\Map(I^2\times I^2,\C)$ by
\begin{align}
&V_{2,2}^{0-2,\hat{N}_{\beta}}(u)(\orho_1\rho_1\bx_1s_1\xi_1,\orho_2\rho_2\bx_2s_2\xi_2,\oeta_1\eta_1\by_1t_1\zeta_1,\oeta_2\eta_2\by_2t_2\zeta_2)\label{eq_kernel_0_2_initial}\\
&:=-\frac{1}{4}uL^{-d}h^2 1_{(\rho_1,\bx_1,s_1,\eta_1,\by_1,t_1)
=(\rho_2,\bx_2,s_2,\eta_2,\by_2,t_2)}(h1_{s_1=t_1}-\beta^{-1})\notag\\
&\qquad\cdot \sum_{\s,\tau\in
 \S_2}\sgn(\s)\sgn(\tau)1_{(\orho_{\s(1)},\orho_{\s(2)},\oeta_{\tau(1)},\oeta_{\tau(2)})=(1,2,2,1)}
1_{(\xi_{\s(1)},\xi_{\s(2)},\zeta_{\tau(1)},\zeta_{\tau(2)})=(1,-1,1,-1)}.\notag\end{align}
Then we define $V^{0-2,\hat{N}_{\beta}}\in
C(\overline{D(r)},\bigwedge_{even}\cV)$ by 
\begin{align*}
V^{0-2,\hat{N}_{\beta}}(u)(\psi):=\frah^4\sum_{\bX,\bY\in
 I^2}V_{2,2}^{0-2,\hat{N}_{\beta}}(u)(\bX,\bY)\psi_{\bX}\psi_{\bY}.
\end{align*}
Observe that
\begin{align*}
V^{0-1,\hat{N}_{\beta}}(u)(\psi)+V^{0-2,\hat{N}_{\beta}}(u)(\psi)=-V(u)(\psi)+W(u)(\psi).
\end{align*}
We give $V^{0-1,\hat{N}_{\beta}}+V^{0-2,\hat{N}_{\beta}}$ as the initial
data to the multi-scale integration. Using the notations introduced
above, let us inductively define $V^{0-1,l}$, $V^{0-2,l}\in$\\ 
$C(\overline{D(r)},\bigwedge_{even}\cV)$
$(l=N_{\beta}+1,N_{\beta}+2,\cdots,\hat{N}_{\beta})$ as follows. 
Assume that we have $V^{0-1,l+1}\in\cQ(r,l+1)$,
$V^{0-2,l+1}\in\cR(r,l+1)$ and
\begin{align*}
V^{0-2,l+1}(u)(\psi) =\sum_{p,q=2}^{N}1_{p,q\in 2\N}\frah^{p+q}\sum_{\bX\in
 I^p\atop \bY\in
 I^q}V_{p,q}^{0-2,l+1}(u)(\bX,\bY)\psi_{\bX}\psi_{\bY},\ (\forall u\in \overline{D(r)})
\end{align*}
with $V_{p,q}^{0-2,l+1}\in C(\overline{D(r)},\Map(I^p\times I^q,\C))$
$(p,q\in \{2,4,\cdots,N\})$ satisfying the conditions required in
$\cR(r,l+1)$.
Then, let us set for any $n\in \N_{\ge 1}$, $u\in \overline{D(r)}$,
\begin{align*}
&V^{0-1-1,l,(n)}(u)(\psi)\\
&:=\prod_{j=1}^n\Bigg(\sum_{a_j\in\{1,2\}}\Bigg)1_{\exists
 j(a_j=1)}\frac{1}{n!}Tree(\{1,2,\cdots,n\},\cC_{l+1})\notag\\
&\quad\cdot\prod_{j=1}^nV^{0-a_j,l+1}(u)(\psi^j+\psi)\Bigg|_{\psi^{j}=0\atop(\forall
 j\in\{1,2,\cdots,n\})},\notag\\
&V^{0-1-2,l,(n)}(u)(\psi)\notag\\
&:=\sum_{p,q=2}^{N}1_{p,q\in 2\N}\frah^{p+q}\sum_{\bX\in
 I^p\atop \bY\in
 I^q}V_{p,q}^{0-2,l+1}(u)(\bX,\bY)
 \frac{1}{n!}Tree(\{1,2,\cdots,n+1\},\cC_{l+1})\notag\\
&\quad\cdot (\psi^1+\psi)_{\bX}
(\psi^2+\psi)_{\bY}
\prod_{j=3}^{n+1}V^{0-2,l+1}(u)(\psi^j+\psi)\Bigg|_{\psi^{j}=0\atop(\forall
 j\in\{1,2,\cdots,n+1\})},\notag\\
&V^{0-2,l,(n)}(u)(\psi)\notag\\
&:=\frac{1}{n!}\sum_{m=0}^{n-1}\sum_{(\{s_j\}_{j=1}^{m+1},
 \{t_k\}_{k=1}^{n-m})\in S(n,m)}
\sum_{p,q=2}^{N}1_{p,q\in 2\N}\frah^{p+q}\sum_{\bX\in
 I^p\atop \bY\in
 I^q}V_{p,q}^{0-2,l+1}(u)(\bX,\bY)\notag\\
&\quad\cdot Tree(\{s_j\}_{j=1}^{m+1},\cC_{l+1})(\psi^{s_1}+\psi)_{\bX}\prod_{j=2}^{m+1}V^{0-2,l+1}(u)(\psi^{s_j}+\psi)
\Bigg|_{\psi^{s_j}=0\atop(\forall
 j\in\{1,2,\cdots,m+1\})}\notag\\
&\quad\cdot Tree(\{t_k\}_{k=1}^{n-m},\cC_{l+1})(\psi^{t_1}+\psi)_{\bY}\prod_{k=2}^{n-m}V^{0-2,l+1}(u)(\psi^{t_k}+\psi)
\Bigg|_{\psi^{t_k}=0\atop(\forall
 k\in\{1,2,\cdots,n-m\})},\notag
\end{align*}
where 
\begin{align*}
S(n,m):=\left\{(\{s_j\}_{j=1}^{m+1},
 \{t_k\}_{k=1}^{n-m})\ \Bigg|\
 \begin{array}{l}1=s_1<s_2<\cdots<s_{m+1}\le n,\\
                 1=t_1<t_2<\cdots<t_{n-m}\le n,\\
                \{s_j\}_{j=2}^{m+1}\cup
		 \{t_k\}_{k=2}^{n-m}=\{2,3,\cdots,n\},\\  
               \{s_j\}_{j=2}^{m+1}\cap \{t_k\}_{k=2}^{n-m}=\emptyset.
\end{array}
\right\}.
\end{align*}
Set 
\begin{align*}
&V^{0-1-j,l}(u)(\psi):=\sum_{n=1}^{\infty}V^{0-1-j,l,(n)}(u)(\psi),\quad
 (j=1,2),\quad V^{0-1,l}:=\sum_{j=1}^2V^{0-1-j,l},\\
&V^{0-2,l}(u)(\psi):=\sum_{n=1}^{\infty}V^{0-2,l,(n)}(u)(\psi)
\end{align*}
on the assumption that these series converge in $\bigwedge_{even} \cV$. 

Let us see how these Grassmann data are derived during the process. We give 
$V^{0-1,l+1}+V^{0-2,l+1}$ to the single-scale integration with the
covariance $\cC_{l+1}$. By applying the formula
\eqref{eq_tree_expansion} we can derive that
\begin{align}
&\frac{1}{n!}\left(\frac{d}{dz}\right)^n\log\left(\int
 e^{z\sum_{j=1}^2V^{0-j,l+1}(u)(\psi^1+\psi)}d\mu_{\cC_{l+1}}(\psi^1)\right)\Bigg|_{z=0}\label{eq_data_recursive_relation_without}\\
&=V^{0-1-1,l,(n)}(u)(\psi)\notag\\
&\quad +\sum_{p,q=2}^{N}1_{p,q\in 2\N}\frah^{p+q}\sum_{\bX\in
 I^p\atop \bY\in
 I^q}V_{p,q}^{0-2,l+1}(u)(\bX,\bY)\notag\\
&\qquad\cdot 
\frac{1}{n!}\prod_{j=2}^n\left(\frac{\partial}{\partial z_j}\right)
\int (\psi^1+\psi)_{\bX}(\psi^1+\psi)_{\bY}e^{\sum_{j=2}^n
 z_jV^{0-2,l+1}(u)(\psi^1+\psi)}d\mu_{\cC_{l+1}}(\psi^1)\notag\\
&\qquad\cdot \left(\int e^{\sum_{j=2}^n
 z_jV^{0-2,l+1}(u)(\psi^1+\psi)}d\mu_{\cC_{l+1}}(\psi^1)\right)^{-1}\Bigg|_{z_j=0\atop(\forall
 j\in \{2,3,\cdots,n\})}\notag\\
&=V^{0-1-1,l,(n)}(u)(\psi)+ V^{0-1-2,l,(n)}(u)(\psi) +
 V^{0-2,l,(n)}(u)(\psi).\notag 
\end{align}
We should remark that the above
transformation is essentially same as \cite[\mbox{(3.56)}]{K_BCS} and is based on the ideas of the earlier papers
\cite[\mbox{(3.38)}]{M}, \cite[\mbox{(IV.15)}]{L}. Also, we should
remind us that the logarithm and the inverse of the even Grassmann
polynomials are analytic with $z$, $(z_j)_{j=2}^n$ in a neighborhood of the
origin and thus the above transformation is mathematically justified.  

Let us explain the rule of the superscripts put on these Grassmann data. We
use the label $0-1$, $0-2$ as the 1st superscript of Grassmann data
independent of the artificial parameters $\la_1$, $\la_2$. In the next
subsection we will use the label $1-j$, $2$ as the 1st superscript of
Grassmann data depending on $\la_1$, $\la_2$ linearly, at least
quadratically respectively. The 2nd superscript stands for the scale of
integration. The Grassmann data with the 2nd superscript $l$ is to be
integrated with the covariance $\cC_l$. For example, $V^{0-1,l}$ is
independent of $\la_1$, $\la_2$ and to be integrated with the covariance
$\cC_l$. $V^{1-1,l+1}$ is linearly dependent on $\la_1$, $\la_2$ and to
be integrated with $\cC_{l+1}$ and so on.
 
We can describe properties of these scale-dependent Grassmann data as
follows.

\begin{lemma}\label{lem_IR_integration_without}
There exists a positive constant $c_4$ independent of any parameter such
 that if 
\begin{align}
M^{\min\{1,2\sa-1-\sum_{j=1}^d\frac{1}{\sn_j}\}}\ge c_4,\quad \alpha\ge
 c_4 M^{\frac{\sa}{2}},\quad L^d\ge
 M^{(\sum_{j=1}^d\frac{1}{\sn_j}+1)(\hat{N}_{\beta}-N_{\beta})},\label{eq_assumptions_IR_integration_without}
\end{align}
then
\begin{align*}
&V^{0-1,l}\in\cQ(b^{-1}c_0^{-2}\alpha^{-4},l),\\
&V^{0-2,l}\in \cR(b^{-1}c_0^{-2}\alpha^{-4},l),\quad (\forall l\in
 \{N_{\beta},N_{\beta}+1,\cdots,\hat{N}_{\beta}\}).
\end{align*}
\end{lemma}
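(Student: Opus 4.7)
The plan is to proceed by downward induction on $l$, running from the UV scale $\hat{N}_\beta$ down to $N_\beta+1$ (the scale $N_\beta$ is handled separately in Subsection \ref{subsec_final_integration}). Throughout, set $r:=b^{-1}c_0^{-2}\alpha^{-4}$.

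For the base case $l=\hat N_\beta$, both $V^{0-1,\hat{N}_\beta}$ and $V^{0-2,\hat{N}_\beta}$ are given by the explicit kernels \eqref{eq_kernel_0_1_initial} and \eqref{eq_kernel_0_2_initial}. A direct calculation shows that $\|V_2^{0-1,\hat{N}_\beta}\|_{1,\infty,r}$ and $\|V_{2,2}^{0-2,\hat{N}_\beta}\|_{1,\infty,r}$ are each bounded by $\tfrac12 r L^{-d}$ times a combinatorial factor, from which the defining inequalities of $\cQ(r,\hat{N}_\beta)$ and $\cR(r,\hat{N}_\beta)$ follow using $r\le c_0^{-2}\alpha^{-4}$. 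The temperature translation invariance \eqref{eq_temperature_translation_invariance}, bi-anti-symmetry and the vanishing property \eqref{eq_temperature_integral_vanish} for $V_{2,2}^{0-2,\hat{N}_\beta}$ can be read off from the explicit form; in particular, the factor $h1_{s_1=t_1}-\beta^{-1}$ gives vanishing upon averaging against any time-translation invariant test function.

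For the inductive step, assume $V^{0-1,l+1}\in\cQ(r,l+1)$ and $V^{0-2,l+1}\in\cR(r,l+1)$. The identity \eqref{eq_data_recursive_relation_without}, which is a direct consequence of the tree formula \eqref{eq_tree_expansion} applied to the effective interaction at scale $l+1$, expresses each scale-$l$ polynomial as an $n$-sum of tree-expanded Grassmann polynomials. I would estimate these term by term, as follows. Apply Lemma \ref{lem_simple_tree_expansion} to bound $V^{0-1-1,l,(n)}$, using that at least one input factor is $V^{0-1,l+1}$, which carries the crucial $L^{-d}$ bound from its $\cQ$-membership; the other factors (whether $V^{0-1,l+1}$ or $V^{0-2,l+1}$) are absorbed using the $\cQ$/$\cR$ bounds at scale $l+1$. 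Apply Lemma \ref{lem_double_tree_expansion} with $F_{p,q}=V^{0-2,l+1}_{p,q}$ to bound $V^{0-1-2,l,(n)}$, using crucially the bracket bound \eqref{eq_0_2_multiplied_bound} with $g=\tilde{\cC}_{l+1}$, which yields an $L^{-d}\|\tilde{\cC}_{l+1}\|$ factor out of the vanishing kernel — this is the mechanism by which $V^{0-1-2,l}$ acquires the $L^{-d}$ decay required for $\cQ(r,l)$-membership. Apply Lemma \ref{lem_divided_tree_expansion} to bound the kernels $V^{0-2,l,(n)}_{a,b}$, using the corresponding bounds there; the bi-anti-symmetry, the translation invariance \eqref{eq_temperature_translation_invariance}, and crucially the vanishing property \eqref{eq_temperature_integral_vanish} of $V^{0-2,l,(n)}_{p,q}$ all transfer automatically from $V^{0-2,l+1}_{p,q}$ by Lemma \ref{lem_divided_tree_expansion}.

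The power counting then goes as follows. Substituting the determinant bound $D=c_0 M^{\sa(l+1-\hat N_\beta)}$ and the covariance norm $\|\tilde{\cC}_{l+1}\|_{1,\infty}\le c_0 M^{(\sa-1-\sum_j 1/\sn_j)(l+1-\hat N_\beta)}$, each factor $D^{p_j/2}\|F^j_{p_j}\|_{1,\infty,r}$ for a $V^{0-1,l+1}$-input becomes $\le L^{-d}\alpha^{-p_j}$, and for a $V^{0-2,l+1}$-input becomes $\le \alpha^{-(p_j+q_j)}M^{(\sum_j 1/\sn_j+1)(l+1-\hat N_\beta)}$. When one rescales to express the output bound at scale $l$ by multiplying through by the appropriate power of $M$ from $\cQ(r,l)$ or $\cR(r,l)$, the exponent carried by each $V^{0-2}$ input is compensated by the exponent $(\sa-1-\sum_j 1/\sn_j)$ carried by the adjacent tree covariance; the slack $2\sa-1-\sum_j 1/\sn_j>0$ from \eqref{eq_dispersion_power} together with the relation $M^{(\sum_j 1/\sn_j+1)(l-l-1)}$ between consecutive scales delivers a fixed geometric shrinking factor per covariance contraction. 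Summing over the integers $p_j,q_j$ and trees (including the $(n-2)!$ tree combinatorics and the standard compensating factors $2^{3p_j}$) the convergent series over $n\ge 1$ is controlled by taking $\alpha\ge c_4 M^{\sa/2}$ so that $\sum_p 2^{3p}\alpha^{-p}<\infty$ with small tail, and $M^{\min\{1,2\sa-1-\sum_j 1/\sn_j\}}\ge c_4$ to dominate combinatorial factors. The hypothesis $L^d\ge M^{(\sum_j 1/\sn_j+1)(\hat N_\beta-N_\beta)}$ ensures the $\cR$-bound remains consistent with the single $L^{-d}$ factor available across all scales down to $N_\beta$. This both yields absolute convergence in $\bigwedge_{even}\cV$ (hence well-definedness of $V^{0-1,l}$, $V^{0-2,l}$) and, by termwise analyticity in $u\in D(r)$ together with uniform convergence on $\overline{D(r)}$, gives the required $C^{\o}$ regularity.

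The step I expect to be the main obstacle is the careful bookkeeping of scale factors in mixed sums where some inputs come from $\cQ(r,l+1)$ and others from $\cR(r,l+1)$, specifically verifying that the one $L^{-d}$ available (either from a single $V^{0-1,l+1}$-factor in $V^{0-1-1,l,(n)}$, or from the bracket bound \eqref{eq_0_2_multiplied_bound} applied to a single $V^{0-2,l+1}$-factor in $V^{0-1-2,l,(n)}$) survives the tree contractions with the exact exponent demanded by $\cQ(r,l)$. The delicate check is that the growth factor $M^{(\sum_j 1/\sn_j+1)(l+1-\hat N_\beta)}$ attached to each $V^{0-2,l+1}$-input is exactly compensated by the covariance decay exponent coming from adjacent lines of the tree, leaving a positive residue of $M^{(2\sa-1-\sum_j 1/\sn_j)}$ per extra vertex; this is precisely what makes \eqref{eq_dispersion_power} the structural hypothesis driving the whole construction.
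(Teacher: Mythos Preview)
Your approach is essentially the paper's: base case from the explicit kernels, then induction using Lemma~\ref{lem_simple_tree_expansion} for $V^{0-1-1,l}$, Lemma~\ref{lem_double_tree_expansion} for $V^{0-1-2,l}$, and Lemma~\ref{lem_divided_tree_expansion} for $V^{0-2,l}$, with the power counting driven by \eqref{eq_dispersion_power}. Two corrections are needed, however.

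First, the induction must go all the way down to $l=N_\beta$, not stop at $N_\beta+1$. The data $V^{0-1,N_\beta}$, $V^{0-2,N_\beta}$ are produced in this lemma by integrating the scale-$(N_\beta+1)$ data against $\cC_{N_\beta+1}$, and the same inductive step applies verbatim since $\cC_{N_\beta+1}$ still enjoys the bounds \eqref{eq_scale_covariance_decay_bound}, \eqref{eq_scale_covariance_coupled_decay_bound}. Subsection~\ref{subsec_final_integration} does not construct these objects; it takes them as input and integrates against the special covariance $\cC_{N_\beta}$.

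Second, in the base case your claim that $\|V_{2,2}^{0-2,\hat N_\beta}\|_{1,\infty,r}$ carries a factor $L^{-d}$ is false: from \eqref{eq_kernel_0_2_initial} one only gets $\|V_{2,2}^{0-2,\hat N_\beta}(u)\|_{1,\infty}\le b|u|$, and indeed the $\cR$-bound \eqref{eq_0_2_bound} does not ask for $L^{-d}$. The volume factor enters only through the bracket bound $[V_{2,2}^{0-2,\hat N_\beta}(u),g]_{1,\infty}\le |u|L^{-d}\|g\|$, matching \eqref{eq_0_2_multiplied_bound}. This distinction is exactly why the set $\cR$ is defined with two separate conditions, and it is the bracket bound (not the norm bound) that feeds the $L^{-d}$ into $V^{0-1-2,l}$ via Lemma~\ref{lem_double_tree_expansion}, as you correctly identify later.
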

 
\begin{proof} During the proof we often replace a generic positive constant
 denoted by `$c$' by a larger generic constant still denoted by the same
 symbol without commenting on the replacement. It should be understood
 in the end 
 that these replacement do not violate the validity of
 the proof of the claims. 

We can see from \eqref{eq_kernel_0_1_initial},
 \eqref{eq_kernel_0_2_initial} that
 $V_2^{0-1,\hat{N}_{\beta}}(u)(\cdot)$ is anti-symmetric, satisfies
 \eqref{eq_temperature_translation_invariance}, 
 $V^{0-2,\hat{N}_{\beta}}_{2,2}(u)(\cdot,\cdot)$ is bi-anti-symmetric, satisfies
 \eqref{eq_temperature_translation_invariance}, 
 \eqref{eq_temperature_integral_vanish} and 
\begin{align}
&\|V_2^{0-1,\hat{N}_{\beta}}(u)\|_{1,\infty}\le
 |u|L^{-d},\label{eq_0_1_2_initial}\\
&\|V_{2,2}^{0-2,\hat{N}_{\beta}}(u)\|_{1,\infty}\le
 b|u|,\label{eq_0_2_2_2_initial}\\
&[V_{2,2}^{0-2,\hat{N}_{\beta}}(u),g]_{1,\infty}\le |u| L^{-d}\|g\|
\label{eq_0_2_2_2_multiplied_initial}
\end{align}
for any anti-symmetric function $g:I^2\to \C$. Thus it follows that 
$V^{0-1,\hat{N}_{\beta}}\in
 \cQ(b^{-1}c_0^{-2}\alpha^{-4},\hat{N}_{\beta})$, 
$V^{0-2,\hat{N}_{\beta}}\in
 \cR(b^{-1}c_0^{-2}\alpha^{-4},\hat{N}_{\beta})$.

Set $r:=b^{-1}c_0^{-2}\alpha^{-4}$. Assume that $l\in
 \{N_{\beta},N_{\beta}+1,\cdots,\hat{N}_{\beta}-1\}$ and we have 
$V^{0-1,j}\in \cQ(r,j)$, $V^{0-2,j}\in \cR(r,j)$
 $(j=l+1,l+2,\cdots,\hat{N}_{\beta})$. Let us show that $V^{0-1-1,l}$,
 $V^{0-1-2,l}$, $V^{0-2,l}$ are well-defined and 
$V^{0-1-1,l}+V^{0-1-2,l}\in\cQ(r,l)$, $V^{0-2,l}\in \cR(r,l)$. 
 The following inequalities can be
 derived from the definition of $\cQ(r,l+1)$, $\cR(r,l+1)$ and the
 assumptions $\alpha\ge 2^3$, $M^{\sa}\ge 2^4$. 
\begin{align}
&\sum_{m=2}^N2^{3m}(c_0M^{\sa(l+1-\hat{N}_{\beta})})^{\frac{m}{2}}\|V_m^{0-1,l+1}\|_{1,\infty,r}\le
 c
 \alpha^{-2}L^{-d},\label{eq_0_1_bound_assumption}\\
&\sum_{m=2}^{N}2^m\alpha^m(c_0
 M^{\sa(l-\hat{N}_{\beta})})^{\frac{m}{2}}\|V_m^{0-1,l+1}\|_{1,\infty,r}\le
 c M^{-\sa}L^{-d},\label{eq_0_1_bound_weight_assumption}\\
&\sum_{m=4}^N2^{3m}(c_0M^{\sa(l+1-\hat{N}_{\beta})})^{\frac{m}{2}}\|V_m^{0-2,l+1}\|_{1,\infty,r}\le
 c
 \alpha^{-4}M^{(\sum_{j=1}^d\frac{1}{\sn_j}+1)(l+1-\hat{N}_{\beta})},\label{eq_0_2_kernel_bound_assumption}\\
&\sum_{p,q=2}^N1_{p,q\in
 2\N}2^{p+q}\alpha^{p+q}(c_0M^{\sa(l-\hat{N}_{\beta})})^{\frac{p+q}{2}}\|V_{p,q}^{0-2,l+1}\|_{1,\infty,r}\le 
c M^{-2\sa
 +(\sum_{j=1}^d\frac{1}{\sn_j}+1)(l+1-\hat{N}_{\beta})},\label{eq_0_2_bound_weight_assumption}\\
&\sum_{m=4}^N2^m\alpha^m
 (c_0M^{\sa(l-\hat{N}_{\beta})})^{\frac{m}{2}}\|V_m^{0-2,l+1}\|_{1,\infty,r}\le
 c M^{-2\sa
 +(\sum_{j=1}^d\frac{1}{\sn_j}+1)(l+1-\hat{N}_{\beta})},\label{eq_0_2_kernel_bound_weight_assumption}\\
&\sum_{p,q=2}^N1_{p,q\in
 2\N}2^{3p+3q}(c_0M^{\sa(l+1-\hat{N}_{\beta})})^{\frac{p+q}{2}}[V_{p,q}^{0-2,l+1},g]_{1,\infty,r}\label{eq_0_2_multiplied_bound_assumption}\\
&\le 
c
 \alpha^{-4}M^{(\sum_{j=1}^d\frac{1}{\sn_j}+1)(l+1-\hat{N}_{\beta})}L^{-d}\|g\|,\notag\\
&\sum_{p,q=2}^N1_{p,q\in
 2\N}2^{2p+2q}\alpha^{p+q}(c_0M^{\sa(l-\hat{N}_{\beta})})^{\frac{p+q}{2}}[V_{p,q}^{0-2,l+1},g]_{1,\infty,r}\label{eq_0_2_multiplied_bound_weight_assumption}\\
&\le 
c M^{-2\sa+(\sum_{j=1}^d\frac{1}{\sn_j}+1)(l+1-\hat{N}_{\beta})}L^{-d}\|g\|,\notag
\end{align}
for any anti-symmetric function $g:I^2\to \C$. In the derivation of
 \eqref{eq_0_2_kernel_bound_assumption},
 \eqref{eq_0_2_kernel_bound_weight_assumption} we used the inequality
\begin{align}
\|V_m^{0-2,l+1}\|_{1,\infty,r}\le \sum_{p,q=2}^{N}1_{p,q\in
 2\N}1_{p+q=m}\|V_{p,q}^{0-2,l+1}\|_{1,\infty,r},\label{eq_kernel_divided_relation}
\end{align}
which is based on the uniqueness of anti-symmetric kernel. By using
 these inequalities we will prove the claimed bound properties of the
 Grassmann data at $l$-th scale in the following. 

First of all, let us consider $V^{0-1-1,l,(n)}$. By
 \eqref{eq_simple_1_infinity_1} and
 \eqref{eq_scale_covariance_determinant_bound}, for any $m\in
 \{0,2,\cdots,N\}$ 
\begin{align*}
\|V_m^{0-1-1,l,(1)}\|_{1,\infty,r}\le
 \sum_{p=m}^N\left(\frac{N}{h}\right)^{1_{m=0}\land p\neq 0}
 \left(\begin{array}{c}p \\ m \end{array}\right)
 (c_0M^{\sa(l+1-\hat{N}_{\beta})})^{\frac{p-m}{2}}\|V_p^{0-1,l+1}\|_{1,\infty,r}.
\end{align*}
Then by \eqref{eq_0_1_bound_assumption} and \eqref{eq_0_1_bound} for
 $l+1$
\begin{align}
\|V_0^{0-1-1,l,(1)}\|_{1,\infty,r}&\le
 \|V_0^{0-1,l+1}\|_{1,\infty,r}+\frac{N}{h}c
 \alpha^{-2}L^{-d}\label{eq_0_1_1_0_1}\\
&\le
 c\frac{N}{h}\alpha^{-2}M^{-(\sum_{j=1}^d\frac{1}{\sn_j}+1)(l+1-\hat{N}_{\beta})}L^{-d}.\notag
\end{align}
Also by \eqref{eq_0_1_bound} for $l+1$ and the assumptions that
 $\alpha\ge 2$, $M^{\sa}\ge 2^4$, 
\begin{align}
&\sum_{m=2}^N\alpha^m(c_0M^{\sa(l-\hat{N}_{\beta})})^{\frac{m}{2}}\|V_m^{0-1-1,l,(1)}\|_{1,\infty,r}\label{eq_0_1_1_1}\\
&\le \sum_{m=2}^N\sum_{p=m}^N\alpha^m2^p M^{-\frac{\sa m}{2}}(c_0
 M^{\sa(l+1-\hat{N}_{\beta})})^{\frac{p}{2}}\|V_p^{0-1,l+1}\|_{1,\infty,r}\notag\\
&\le \sum_{m=2}^N2^{m}M^{-\frac{\sa
 m}{2}}\sum_{p=m}^N\alpha^p(c_0M^{\sa(l+1-\hat{N}_{\beta})})^{\frac{p}{2}}\|V_p^{0-1,l+1}\|_{1,\infty,r}\notag\\
&\le c L^{-d}M^{-\sa}.\notag
\end{align}
Assume that $n\in \N_{\ge 2}$. Observe that
\begin{align*}
&V^{0-1-1,l,(n)}(u)(\psi)\\
&=\sum_{q=1}^n\left(\begin{array}{c} n \\ q
		    \end{array}\right)\frac{1}{n!}Tree(\{1,2,\cdots,n\},\cC_{l+1})\\
&\quad\cdot \prod_{j=1}^qV^{0-1,l+1}(u)(\psi^j+\psi)\prod_{k=q+1}^nV^{0-2,l+1}(u)(\psi^k+\psi)\Bigg|_{\psi^j=0\atop (\forall j\in \{1,2,\cdots,n\})}.
\end{align*}
It follows from
 \eqref{eq_simple_1_infinity},
 \eqref{eq_scale_covariance_determinant_bound},
 \eqref{eq_scale_covariance_decay_bound} that for any $m\in \{0,2,\cdots,N\}$
\begin{align}
&\|V_m^{0-1-1,l,(n)}\|_{1,\infty,r}\label{eq_0_1_1_expansion}\\
&\le
 \left(\frac{N}{h}\right)^{1_{m=0}}2^n
\frac{(n-2)!}{n!}(c_0M^{\sa(l+1-\hat{N}_{\beta})})^{-n+1-\frac{m}{2}}2^{-2m}
(c_0M^{(\sa-1-\sum_{j=1}^d\frac{1}{\sn_j})(l+1-\hat{N}_{\beta})})^{n-1}\notag\\
&\quad\cdot
 \sum_{p_1=2}^N2^{3p_1}(c_0M^{\sa(l+1-\hat{N}_{\beta})})^{\frac{p_1}{2}}\|V_{p_1}^{0-1,l+1}\|_{1,\infty,r}\notag\\
&\quad\cdot \prod_{j=2}^n\left(\sum_{p_j=2}^N2^{3p_j}(c_0M^{\sa(l+1-\hat{N}_{\beta})})^{\frac{p_j}{2}}\sum_{\delta\in\{1,2\}}
\|V_{p_j}^{0-\delta,l+1}\|_{1,\infty,r}\right)
1_{\sum_{j=1}^np_j-2(n-1)\ge m}\notag\\
&\le 
 \left(\frac{N}{h}\right)^{1_{m=0}}2^{-2m+n}c_0^{-\frac{m}{2}}
M^{-\frac{m}{2}\sa(l+1-\hat{N}_{\beta})-(\sum_{j=1}^d\frac{1}{\sn_j}+1)(l+1-\hat{N}_{\beta})(n-1)}\notag\\
&\quad\cdot 
\sum_{p_1=2}^N2^{3p_1}(c_0M^{\sa(l+1-\hat{N}_{\beta})})^{\frac{p_1}{2}}\|V_{p_1}^{0-1,l+1}\|_{1,\infty,r}\notag\\
&\quad\cdot \prod_{j=2}^n\left(\sum_{p_j=2}^N2^{3p_j}(c_0M^{\sa(l+1-\hat{N}_{\beta})})^{\frac{p_j}{2}}\sum_{\delta\in\{1,2\}}
\|V_{p_j}^{0-\delta,l+1}\|_{1,\infty,r}\right)
1_{\sum_{j=1}^np_j-2(n-1)\ge m}.\notag
\end{align}
By \eqref{eq_0_1_bound_assumption},
 \eqref{eq_0_2_kernel_bound_assumption} and the assumption 
 $L^{d}\ge
 M^{(\sum_{j=1}^d\frac{1}{\sn_j}+1)(\hat{N}_{\beta}-N_{\beta})}$,
\begin{align*}
&\|V_0^{0-1-1,l,(n)}\|_{1,\infty,r}\\
&\le
 \frac{N}{h}M^{-(\sum_{j=1}^d\frac{1}{\sn_j}+1)(l+1-\hat{N}_{\beta})(n-1)}
c\alpha^{-2}L^{-d}\left(
c\alpha^{-2}L^{-d}
+c \alpha^{-4}M^{(\sum_{j=1}^d\frac{1}{\sn_j}+1)(l+1-\hat{N}_{\beta})} 
\right)^{n-1}\\
&\le
 \frac{N}{h}L^{-d}(c\alpha^{-2})^n,\end{align*}
or by assuming that $\alpha\ge c$, 
\begin{align}
\sum_{n=2}^{\infty}\|V_0^{0-1-1,l,(n)}\|_{1,\infty,r}\le
 c\frac{N}{h}\alpha^{-4}L^{-d}.\label{eq_0_1_1_0_higher}
\end{align}
Also, by substituting \eqref{eq_0_1_bound_weight_assumption},
 \eqref{eq_0_2_kernel_bound_weight_assumption} and the inequality 
 $L^{d}\ge
 M^{(\sum_{j=1}^d\frac{1}{\sn_j}+1)(\hat{N}_{\beta}-N_{\beta})}$ and
 using the condition $\alpha M^{-\frac{\sa}{2}}\ge 2^3$ we can derive from
 \eqref{eq_0_1_1_expansion} that
\begin{align*}
&\sum_{m=2}^N\alpha^m(c_0M^{\sa
 (l-\hat{N}_{\beta})})^{\frac{m}{2}}\|V_m^{0-1-1,l,(n)}\|_{1,\infty,r}\\
&\le c^n
 \alpha^{-2(n-1)}M^{\sa(n-1)-(\sum_{j=1}^d\frac{1}{\sn_j}+1)(l+1-\hat{N}_{\beta})(n-1)}\\
&\quad\cdot \sum_{p_1=2}^{N}2^{p_1}\alpha^{p_1}(c_0M^{\sa
 (l-\hat{N}_{\beta})})^{\frac{p_1}{2}}\|V_{p_1}^{0-1,l+1}\|_{1,\infty,r}\\
&\quad\cdot \left(\sum_{p=2}^N2^p \alpha^p (c_0M^{\sa
 (l-\hat{N}_{\beta})})^{\frac{p}{2}}\sum_{\delta\in
 \{1,2\}}\|V_p^{0-\delta,l+1}\|_{1,\infty,r}\right)^{n-1}\\
&\le c^n
 \alpha^{-2(n-1)}M^{\sa(n-1)-(\sum_{j=1}^d\frac{1}{\sn_j}+1)(l+1-\hat{N}_{\beta})(n-1)}M^{-\sa}L^{-d}\\
&\quad\cdot 
\left(c
 M^{-\sa}L^{-d}
+c M^{-2\sa+(\sum_{j=1}^d\frac{1}{\sn_j}+1)(l+1-\hat{N}_{\beta})}
\right)^{n-1}\\
&\le M^{-\sa}L^{-d}
(c\alpha^{-2})^{n-1},
\end{align*}
or by assuming that $\alpha\ge c$,
\begin{align}
\sum_{m=2}^N\alpha^m(c_0M^{\sa(l-\hat{N}_{\beta})})^{\frac{m}{2}}\sum_{n=2}^{\infty} 
\|V_m^{0-1-1,l,(n)}\|_{1,\infty,r}\le
 c\alpha^{-2}M^{-\sa}L^{-d}.\label{eq_0_1_1_higher}
\end{align}

Next let us study $V^{0-1-2,l,(n)}$. By \eqref{eq_double_1_infinity_1}
 and \eqref{eq_scale_covariance_determinant_bound}
\begin{align*}
&\|V_m^{0-1-2,l,(1)}\|_{1,\infty,r}\\
&\le \left(\frac{N}{h}\right)^{1_{m=0}}(c_0M^{\sa
 (l+1-\hat{N}_{\beta})})^{-1-\frac{m}{2}}\\
&\quad\cdot \sum_{p_1,p_2=2}^{N}1_{p_1,p_2\in 2\N}2^{2p_1+2p_2}(c_0M^{\sa(l+1-\hat{N}_{\beta})})^{\frac{p_1+p_2}{2}}[V_{p_1,p_2}^{0-2,l+1},\tilde{\cC}_{l+1}]_{1,\infty,r}1_{p_1+p_2-2\ge m}.
\end{align*}
Then by \eqref{eq_scale_covariance_coupled_decay_bound} and 
\eqref{eq_0_2_multiplied_bound_assumption}
\begin{align}
&\|V_0^{0-1-2,l,(1)}\|_{1,\infty,r}\label{eq_0_1_2_0_1}\\
&\le  \frac{N}{h}c_0^{-1}M^{-\sa (l+1-\hat{N}_{\beta})}
c
 \alpha^{-4}M^{(\sum_{j=1}^d\frac{1}{\sn_j}+1)(l+1-\hat{N}_{\beta})}L^{-d}
c_0 M^{(\sa-1-\sum_{j=1}^d\frac{1}{\sn_j})(l+1-\hat{N}_{\beta})}\notag\\
&\le c\frac{N}{h}\alpha^{-4}L^{-d}.\notag
\end{align}
Also by \eqref{eq_scale_covariance_coupled_decay_bound},
 \eqref{eq_0_2_multiplied_bound_weight_assumption} and the condition
 $\alpha M^{-\frac{\sa}{2}}\ge 2$,
\begin{align}
&\sum_{m=2}^N\alpha^m (c_0M^{\sa
 (l-\hat{N}_{\beta})})^{\frac{m}{2}}\|V_m^{0-1-2,l,(1)}\|_{1,\infty,r}\label{eq_0_1_2_1}\\
&\le c \alpha^{-2}M^{\sa} (c_0M^{\sa(l+1-\hat{N}_{\beta})})^{-1}\notag\\
&\quad\cdot  \sum_{p_1,p_2=2}^{N}1_{p_1,p_2\in2\N}2^{2p_1+2p_2}\alpha^{p_1+p_2}(c_0M^{\sa
 (l-\hat{N}_{\beta})})^{\frac{p_1+p_2}{2}}[V_{p_1,p_2}^{0-2,l+1},\tilde{\cC}_{l+1}]_{1,\infty,r}\notag\\
&\le c \alpha^{-2}M^{\sa} (c_0M^{\sa
 (l+1-\hat{N}_{\beta})})^{-1}M^{-2\sa +
 (\sum_{j=1}^d\frac{1}{\sn_j}+1)(l+1-\hat{N}_{\beta})}L^{-d} c_0 M^{(\sa-1
 -\sum_{j=1}^d\frac{1}{\sn_j})(l+1-\hat{N}_{\beta})}\notag\\
&\le c\alpha^{-2} M^{-\sa}L^{-d}.\notag
\end{align}
Let $n\in \N_{\ge 2}$. By substituting
 \eqref{eq_scale_covariance_determinant_bound}, 
\eqref{eq_scale_covariance_decay_bound} into
 \eqref{eq_double_1_infinity} we have that
\begin{align}
&\|V_m^{0-1-2,l,(n)}\|_{1,\infty,r}\label{eq_0_1_2_expansion}\\
&\le
 \left(\frac{N}{h}\right)^{1_{m=0}}(c_0M^{\sa(l+1-\hat{N}_{\beta})})^{-n-\frac{m}{2}}2^{-2m}(c_0M^{(\sa-1-\sum_{j=1}^d\frac{1}{\sn_j})(l+1-\hat{N}_{\beta})})^{n-1}\notag\\
&\quad\cdot \sum_{p_1,p_2=2}^{N}1_{p_1,p_2\in 2\N} 2^{3p_1+3p_2}(c_0
 M^{\sa(l+1-\hat{N}_{\beta})})^{\frac{p_1+p_2}{2}}[V_{p_1,p_2}^{0-2,l+1},\tilde{\cC}_{l+1}]_{1,\infty,r}\notag\\
&\quad\cdot \prod_{j=3}^{n+1}\left(
\sum_{p_j=4}^N2^{3p_j}(c_0M^{\sa(l+1-\hat{N}_{\beta})})^{\frac{p_j}{2}}\|V_{p_j}^{0-2,l+1}\|_{1,\infty,r}\right)1_{\sum_{j=1}^{n+1}p_j-2n\ge
 m}.\notag
\end{align}
Then by \eqref{eq_0_2_kernel_bound_assumption},
 \eqref{eq_0_2_multiplied_bound_assumption} and
 \eqref{eq_scale_covariance_coupled_decay_bound},
\begin{align*}
&\|V_0^{0-1-2,l,(n)}\|_{1,\infty,r}\\
&\le \frac{N}{h}c_0^{-1}M^{-\sa (l+1-\hat{N}_{\beta})n+(\sa -1 -
 \sum_{j=1}^d\frac{1}{\sn_j})(l+1-\hat{N}_{\beta})(n-1)}c \alpha^{-4}
 M^{(\sum_{j=1}^d\frac{1}{\sn_j}+1)(l+1-\hat{N}_{\beta})}L^{-d}\\
&\quad\cdot c_0
 M^{(\sa-1-\sum_{j=1}^{d}\frac{1}{\sn_j})(l+1-\hat{N}_{\beta})}(c\alpha^{-4}  M^{(\sum_{j=1}^d\frac{1}{\sn_j}+1)(l+1-\hat{N}_{\beta})})^{n-1}\\
&\le \frac{N}{h} L^{-d}(c\alpha^{-4})^n,
\end{align*}
or by assuming that $\alpha\ge c$
\begin{align}
&\sum_{n=2}^{\infty}\|V_0^{0-1-2,l,(n)}\|_{1,\infty,r}\le
 c\frac{N}{h}\alpha^{-8}L^{-d}.\label{eq_0_1_2_0_higher}
\end{align}
On the other hand, by using
 \eqref{eq_0_2_kernel_bound_weight_assumption}, 
\eqref{eq_0_2_multiplied_bound_weight_assumption},
\eqref{eq_scale_covariance_coupled_decay_bound} and the condition
 $\alpha M^{-\frac{\sa}{2}}\ge 2^3$ we can deduce from
 \eqref{eq_0_1_2_expansion} that
\begin{align*}
&\sum_{m=2}^N\alpha^m(c_0M^{\sa
 (l-\hat{N}_{\beta})})^{\frac{m}{2}}\|V_m^{0-1-2,l,(n)}\|_{1,\infty,r}\\
&\le c^n \alpha^{-2n}c_0^{-1}M^{\sa n -\sa (l+1-\hat{N}_{\beta})n + (\sa
 - 1 -\sum_{j=1}^{d}\frac{1}{\sn_j})(l+1-\hat{N}_{\beta})(n-1)}\\
&\quad\cdot \sum_{p_1,p_2=2}^{N} 1_{p_1,p_2\in
 2\N}2^{p_1+p_2}\alpha^{p_1+p_2}(c_0M^{\sa
 (l-\hat{N}_{\beta})})^{\frac{p_1+p_2}{2}}[V_{p_1,p_2}^{0-2,l+1},\tilde{\cC}_{l+1}]_{1,\infty,r}\\
&\quad\cdot \left(\sum_{p=4}^N2^p \alpha^p
 (c_0M^{\sa(l-\hat{N}_{\beta})})^{\frac{p}{2}}\|V_p^{0-2,l+1}\|_{1,\infty,r}\right)^{n-1}\\
&\le c^n \alpha^{-2n}c_0^{-1}M^{\sa n -
 \sa(l+1-\hat{N}_{\beta})n+(\sa-1-\sum_{j=1}^d\frac{1}{\sn_j})(l+1-\hat{N}_{\beta})(n-1)}
M^{-2\sa +
 (\sum_{j=1}^d\frac{1}{\sn_j}+1)(l+1-\hat{N}_{\beta})}L^{-d}\\
&\quad\cdot c_0
 M^{(\sa-1-\sum^{d}_{j=1}\frac{1}{\sn_j})(l+1-\hat{N}_{\beta})}(c M^{-2\sa
 + (\sum_{j=1}^d\frac{1}{\sn_j}+1)(l+1-\hat{N}_{\beta})})^{n-1}\\
&\le L^{-d}(c\alpha^{-2}M^{-\sa})^n.
\end{align*}
Thus on the assumption $\alpha\ge c$,
\begin{align}
\sum_{m=2}^N\alpha^m(c_0M^{\sa(l-\hat{N}_{\beta})})^{\frac{m}{2}}\sum_{n=2}^{\infty}\|V_m^{0-1-2,l,(n)}\|_{1,\infty,r}
\le c \alpha^{-4}M^{-2\sa}L^{-d}.\label{eq_0_1_2_higher}
\end{align}

By combining \eqref{eq_0_1_1_0_1}, \eqref{eq_0_1_1_1},
 \eqref{eq_0_1_1_0_higher}, \eqref{eq_0_1_1_higher},
 \eqref{eq_0_1_2_0_1}, \eqref{eq_0_1_2_1}, \eqref{eq_0_1_2_0_higher},
 \eqref{eq_0_1_2_higher} we have that
\begin{align*}
&\frac{h}{N}\sum_{n=1}^{\infty}(\|V_0^{0-1-1,l,(n)}\|_{1,\infty,r}
+ \|V_0^{0-1-2,l,(n)}\|_{1,\infty,r})\\
&\le c (\alpha^{-2}
 M^{-(\sum_{j=1}^d\frac{1}{\sn_j}+1)(l+1-\hat{N}_{\beta})}
+ \alpha^{-4})L^{-d}\\
&\le c
 (M^{-(\sum_{j=1}^d\frac{1}{\sn_j}+1)}+\alpha^{-2})\alpha^{-2}M^{-(\sum_{j=1}^d\frac{1}{\sn_j}+1)(l-\hat{N}_{\beta})}L^{-d},\\
&\sum_{m=2}^N\alpha^m(c_0M^{\sa(l-\hat{N}_{\beta})})^{\frac{m}{2}}\sum_{n=1}^{\infty}(\|V_m^{0-1-1,l,(n)}\|_{1,\infty,r}
+\|V_m^{0-1-2,l,(n)}\|_{1,\infty,r})\\
&\le c M^{-\sa}L^{-d}.
\end{align*}
The above inequalities imply that if $M \ge c$ and $\alpha\ge c$, 
$$
V^{0-1-1,l}+V^{0-1-2,l}\in \cQ(r,l).
$$
In fact Lemma \ref{lem_simple_tree_expansion}, Lemma
 \ref{lem_double_tree_expansion} ensure that the anti-symmetric kernels
 of $V^{0-1-1,l,(n)}+V^{0-1-2,l,(n)}$ satisfy
 \eqref{eq_temperature_translation_invariance} and thus so do the
 anti-symmetric kernels of $V^{0-1-1,l}+V^{0-1-2,l}$. The above uniform
 convergent property implies the claimed regularity of
 $V^{0-1-1,l}+V^{0-1-2,l}$ with $u\in \overline{D(r)}$.
Recall that we have also assumed $L^d\ge
 M^{(\sum_{j=1}^d\frac{1}{\sn_j}+1)(\hat{N}_{\beta}-N_{\beta})}$,
 $\alpha M^{-\frac{\sa}{2}}\ge 2^3$ to reach this conclusion.

Next let us deal with $V^{0-2,l,(n)}$. The analysis is based on Lemma
 \ref{lem_divided_tree_expansion}. The lemma ensures the existence of
 bi-anti-symmetric kernels satisfying
 \eqref{eq_temperature_translation_invariance},
 \eqref{eq_temperature_integral_vanish}.
We can see from \eqref{eq_divided_kernel} and the induction hypothesis
 that $V_{a,b}^{0-2,l,(n)}\in C(\overline{D(r)},\Map(I^a\times
 I^b,\C))\cap C^{\o}(D(r),\Map(I^a\times I^b,\C))$ $(n\in\N,\ a,b\in
 \{2,4,\cdots,N\})$, which implies that 
$V^{0-2,l,(n)}\in C(\overline{D(r)},\bigwedge_{even}\cV)\cap
 C^{\o}(D(r),\bigwedge_{even}\cV)$. We need to establish bound
 properties of $V^{0-2,l,(n)}$. It follows from
 \eqref{eq_divided_1_infinity_1} and
 \eqref{eq_scale_covariance_determinant_bound} that for $a,b\in
 \{2,4,\cdots,N\}$ 
\begin{align*}
&\|V_{a,b}^{0-2,l,(1)}\|_{1,\infty,r}\le
 \sum_{p=a}^{N}\sum_{q=b}^{N}
\left(\begin{array}{c}p \\ a\end{array}\right)
\left(\begin{array}{c}q \\ b\end{array}\right)
 (c_0M^{\sa(l+1-\hat{N}_{\beta})})^{\frac{1}{2}(p+q-a-b)}\|V_{p,q}^{0-2,l+1}\|_{1,\infty,r}.\end{align*}
By \eqref{eq_0_2_bound} for $l+1$ and the assumption $\alpha\ge 2$,
 $M^{\sa}\ge 2^6$,
\begin{align}
&\|V_{2,2}^{0-2,l,(1)}\|_{1,\infty,r}\label{eq_0_2_2_2_1}\\
&\le
 \|V_{2,2}^{0-2,l+1}\|_{1,\infty,r}+\sum_{p=2}^{N}\sum_{q=2}^{N}1_{p+q\ge 6} 2^{p+q}(c_0M^{\sa(l+1-\hat{N}_{\beta})})^{\frac{1}{2}(p+q-4)}\|V_{p,q}^{0-2,l+1}\|_{1,\infty,r}\notag\\&\le
 M^{(\sum_{j=1}^d\frac{1}{\sn_j}+1)(l+1-\hat{N}_{\beta})}\alpha^{-4}(c_0M^{\sa (l+1-\hat{N}_{\beta})})^{-2}\notag\\
&\quad + c \alpha^{-6}
 M^{(\sum_{j=1}^d\frac{1}{\sn_j}+1)(l+1-\hat{N}_{\beta})}(c_0 M^{\sa
 (l+1-\hat{N}_{\beta})})^{-2}\notag\\
&\le c
 M^{\sum_{j=1}^d\frac{1}{\sn_j}+1-2\sa+(\sum_{j=1}^d\frac{1}{\sn_j}+1)(l-\hat{N}_{\beta})}\alpha^{-4}(c_0M^{\sa (l-\hat{N}_{\beta})})^{-2},\notag\\
&\sum_{a,b=2}^{N}1_{a,b\in 2\N}1_{a+b\ge 6}\alpha^{a+b}(c_0M^{\sa(l-\hat{N}_{\beta})})^{\frac{a+b}{2}}\|V_{a,b}^{0-2,l,(1)}\|_{1,\infty,r}\label{eq_0_2_1_higher}\\
&\le \sum_{a,b=2}^{N}1_{a,b\in 2\N}1_{a+b\ge 6}\alpha^{a+b}
 M^{-\frac{\sa}{2}(a+b)}\sum_{p=a}^{N}\sum_{q=b}^{N}2^{p+q}(c_0M^{\sa(l+1-\hat{N}_{\beta})})^{\frac{p+q}{2}}\|V_{p,q}^{0-2,l+1}\|_{1,\infty,r}\notag\\
&\le
 M^{(\sum_{j=1}^d\frac{1}{\sn_j}+1)(l+1-\hat{N}_{\beta})}\sum_{a,b=2}^{N}1_{a+b\ge 6} 2^{a+b} M^{-\frac{\sa}{2}(a+b)}\notag\\
&\le
 M^{(\sum_{j=1}^d\frac{1}{\sn_j}+1)(l+1-\hat{N}_{\beta})}\sum_{m=6}^{\infty}2^{2m}M^{-\frac{\sa m}{2}}\notag\\
&\le c
 M^{(\sum_{j=1}^d\frac{1}{\sn_j}+1)(l+1-\hat{N}_{\beta})-3\sa}.\notag
\end{align}
By applying \eqref{eq_divided_multiplied_1_infinity_1},
 \eqref{eq_0_2_multiplied_bound} in place of
 \eqref{eq_divided_1_infinity_1}, \eqref{eq_0_2_bound} respectively and
 repeating a parallel argument to the above argument we can derive on the
 assumption $\alpha\ge 2$, $M^{\sa}\ge 2^6$ that
\begin{align}
&[V_{2,2}^{0-2,l,(1)},g]_{1,\infty,r}\le c
 M^{\sum_{j=1}^d\frac{1}{\sn_j}+1-2\sa+(\sum_{j=1}^d\frac{1}{\sn_j}+1)(l-\hat{N}_{\beta})}
 \alpha^{-4}(c_0M^{\sa
 (l-\hat{N}_{\beta})})^{-2}L^{-d}\|g\|,\label{eq_0_2_2_2_1_multiplied}\\
&\sum_{a,b=2}^{N}1_{a,b\in2\N}1_{a+b\ge
 6}\alpha^{a+b}(c_0M^{\sa(l-\hat{N}_{\beta})})^{\frac{a+b}{2}}[V_{a,b}^{0-2,l,(1)},g]_{1,\infty,r}\label{eq_0_2_1_higher_multiplied}\\ 
&\le c M^{(\sum_{j=1}^d\frac{1}{\sn_j}+1)(l+1-\hat{N}_{\beta})-3\sa}L^{-d}\|g\|, \notag
\end{align}
for any anti-symmetric function $g:I^2\to \C$. 

Let us take $n\in \N_{\ge 2}$. Observe that for any $m\in
 \{0,1,\cdots,n-1\}$
\begin{align}
\sharp S(n,m)=\left(\begin{array}{c} n-1 \\ m\end{array}\right).\label{eq_divided_tree}
\end{align}
Combination of \eqref{eq_divided_1_infinity},
 \eqref{eq_scale_covariance_determinant_bound}, 
\eqref{eq_scale_covariance_decay_bound}, 
\eqref{eq_divided_tree} 
yields that for $a,b\in \{2,4,\cdots,N\}$
\begin{align*}
&\|V_{a,b}^{0-2,l,(n)}\|_{1,\infty,r}\\
&\le \frac{1}{n!}\sum_{m=0}^{n-1}\left(\begin{array}{c}n-1\\ m
				       \end{array}\right)
(1_{m\neq 0}(m-1)!+1_{m=0})(1_{m\neq
 n-1}(n-m-2)!+1_{m=n-1})\\
&\quad \cdot
 2^{-2a-2b}(c_0M^{\sa(l+1-\hat{N}_{\beta})})^{-n+1-\frac{1}{2}(a+b)}(c_0M^{(\sa-1-\sum_{j=1}^d\frac{1}{\sn_j})(l+1-\hat{N}_{\beta})})^{n-1}\\
 &\quad\cdot \sum_{p_1,q_1=2}^{N}1_{p_1,q_1\in
 2\N}2^{3p_1+3q_1}(c_0M^{\sa(l+1-\hat{N}_{\beta})})^{\frac{p_1+q_1}{2}}\|V_{p_1,q_1}^{0-2,l+1}\|_{1,\infty,r}\\
&\quad\cdot
 \prod_{j=2}^{m+1}\left(\sum_{p_j=4}^N2^{3p_j}(c_0M^{\sa(l+1-\hat{N}_{\beta})})^{\frac{p_j}{2}}\|V_{p_j}^{0-2,l+1}\|_{1,\infty,r}\right)\\
&\quad\cdot
 \prod_{k=2}^{n-m}\left(\sum_{q_k=4}^N2^{3q_k}(c_0M^{\sa(l+1-\hat{N}_{\beta})})^{\frac{q_k}{2}}\|V_{q_k}^{0-2,l+1}\|_{1,\infty,r}\right)\\
&\quad\cdot 1_{\sum_{j=1}^{m+1}p_j-2m\ge
 a}1_{\sum_{k=1}^{n-m}q_k-2(n-m-1)\ge b}.
\end{align*}
Then by using \eqref{eq_0_2_bound_weight_assumption},
 \eqref{eq_0_2_kernel_bound_weight_assumption} and the assumption
 $\alpha M^{-\frac{\sa}{2}}\ge 2^3$,
\begin{align}
&\sum_{a,b=2}^{N}\alpha^{a+b}(c_0M^{\sa
 (l-\hat{N}_{\beta})})^{\frac{a+b}{2}}\|V_{a,b}^{0-2,l,(n)}\|_{1,\infty,r}\label{eq_0_2_higher_pre}\\
&\le c^n \alpha^{-2(n-1)}M^{\sa(n-1)-\sa(l+1-\hat{N}_{\beta})(n-1)
+(\sa-1-\sum_{j=1}^d\frac{1}{\sn_j})(l+1-\hat{N}_{\beta})(n-1)}\notag\\
&\quad\cdot 
\sum_{p_1,q_1=2}^{N}1_{p_1,q_1\in
 2\N}2^{p_1+q_1}\alpha^{p_1+q_1}
(c_0M^{\sa(l-\hat{N}_{\beta})})^{\frac{p_1+q_1}{2}}\|V_{p_1,q_1}^{0-2,l+1}\|_{1,\infty,r}\notag\\
&\quad\cdot \left(\sum_{p=4}^N 2^p\alpha^p
 (c_0M^{\sa(l-\hat{N}_{\beta})})^{\frac{p}{2}}\|V_p^{0-2,l+1}\|_{1,\infty,r}\right)^{n-1}\notag\\
&\le c^n
 \alpha^{-2(n-1)}M^{-\sa(l-\hat{N}_{\beta})(n-1)+(\sa-1-\sum_{j=1}^d\frac{1}{\sn_j})(l+1-\hat{N}_{\beta})(n-1)}M^{-2\sa
 n+ (\sum_{j=1}^d\frac{1}{\sn_j}+1)(l+1-\hat{N}_{\beta})n}\notag\\
&\le  M^{-2\sa+(\sum_{j=1}^d\frac{1}{\sn_j}+1)(l+1-\hat{N}_{\beta})}(c\alpha^{-2}M^{-\sa})^{n-1}.\notag
\end{align}
Then on the assumption $\alpha\ge c$, 
\begin{align}
&\sum_{a,b=2}^{N}\alpha^{a+b}(c_0M^{\sa
 (l-\hat{N}_{\beta})})^{\frac{a+b}{2}}\sum_{n=2}^{\infty}\|V_{a,b}^{0-2,l,(n)}\|_{1,\infty,r}\le
 c
 \alpha^{-2}M^{-3\sa+(\sum_{j=1}^d\frac{1}{\sn_j}+1)(l+1-\hat{N}_{\beta})}.\label{eq_0_2_higher}
\end{align}
Take any anti-symmetric function $g:I^2\to \C$. We can derive from 
\eqref{eq_divided_multiplied_1_infinity}, 
\eqref{eq_scale_covariance_determinant_bound}, 
\eqref{eq_scale_covariance_decay_bound} that for $a,b\in \{2,4,\cdots,N\}$
\begin{align*}
&[V_{a,b}^{0-2,l,(n)},g]_{1,\infty,r}\\
&\le \frac{1}{n!}\sum_{m=0}^{n-1}\left(\begin{array}{c}n-1\\ m
				       \end{array}\right)
(1_{m\neq 0}(m-1)!+1_{m=0})(1_{m\neq
 n-1}(n-m-2)!+1_{m=n-1})\\
&\quad \cdot
 2^{-2a-2b}(c_0M^{\sa(l+1-\hat{N}_{\beta})})^{-n+1-\frac{1}{2}(a+b)}(c_0M^{(\sa-1-\sum_{j=1}^d\frac{1}{\sn_j})(l+1-\hat{N}_{\beta})})^{n-2}\\
 &\quad\cdot \sum_{p_1,q_1=2}^{N}1_{p_1,q_1\in
 2\N}2^{3p_1+3q_1}(c_0M^{\sa(l+1-\hat{N}_{\beta})})^{\frac{p_1+q_1}{2}}\\
&\quad\cdot \left([V_{p_1,q_1}^{0-2,l+1},g]_{1,\infty,r}c_0M^{(\sa-1-\sum_{j=1}^d\frac{1}{\sn_j})(l+1-\hat{N}_{\beta})}
+
[V_{p_1,q_1}^{0-2,l+1},\tilde{\cC}]_{1,\infty,r}\|g\|_{1,\infty}
\right)\\
&\quad\cdot
 \prod_{j=2}^{m+1}\left(\sum_{p_j=4}^N2^{3p_j}(c_0M^{\sa(l+1-\hat{N}_{\beta})})^{\frac{p_j}{2}}\|V_{p_j}^{0-2,l+1}\|_{1,\infty,r}\right)\\
&\quad\cdot
 \prod_{k=2}^{n-m}\left(\sum_{q_k=4}^N2^{3q_k}(c_0M^{\sa(l+1-\hat{N}_{\beta})})^{\frac{q_k}{2}}\|V_{q_k}^{0-2,l+1}\|_{1,\infty,r}\right)\\
&\quad\cdot 1_{\sum_{j=1}^{m+1}p_j-2m\ge
 a}1_{\sum_{k=1}^{n-m}q_k-2(n-m-1)\ge b}.
\end{align*}
Then by using \eqref{eq_scale_covariance_coupled_decay_bound}, 
\eqref{eq_0_2_kernel_bound_weight_assumption},
\eqref{eq_0_2_multiplied_bound_weight_assumption},
\eqref{eq_divided_tree}, the inequality $\|g\|_{1,\infty}\le \|g\|$ and the
 assumptions $\alpha M^{-\frac{\sa}{2}}\ge
 2^3$, $\alpha \ge c$ and repeating a parallel procedure to
 \eqref{eq_0_2_higher_pre} we obtain that  
\begin{align}
\sum_{a,b=2}^{N}\alpha^{a+b}(c_0M^{\sa
 (l-\hat{N}_{\beta})})^{\frac{a+b}{2}}\sum_{n=2}^{\infty}[V_{a,b}^{0-2,l,(n)},g]_{1,\infty,r}\le
 c\alpha^{-2}
 M^{-3\sa+(\sum_{j=1}^d\frac{1}{\sn_j}+1)(l+1-\hat{N}_{\beta})}L^{-d}\|g\|.\label{eq_0_2_multiplied_higher}
\end{align}

Here we can sum up \eqref{eq_0_2_2_2_1}, \eqref{eq_0_2_1_higher},
 \eqref{eq_0_2_higher} to deduce that
\begin{align*}
&M^{-(\sum_{j=1}^d\frac{1}{\sn_j}+1)(l-\hat{N}_{\beta})}\sum_{a,b=2}^{N}\alpha^{a+b}(c_0M^{\sa(l-\hat{N}_{\beta})})^{\frac{a+b}{2}}\sum_{n=1}^{\infty}\|V_{a,b}^{0-2,l,(n)}\|_{1,\infty,r}\\
&\le
 c\left(M^{\sum_{j=1}^d\frac{1}{\sn_j}+1-2\sa}+\alpha^{-2}M^{\sum_{j=1}^d\frac{1}{\sn_j}+1-3\sa}\right).
\end{align*}
On the assumption $M^{2\sa-1-\sum_{j=1}^d\frac{1}{\sn_j}}\ge c$ the
 right-hand side of the above inequality is less than 1.
Because of the assumption \eqref{eq_dispersion_power}, the condition
 $M^{2\sa-1-\sum_{j=1}^d\frac{1}{\sn_j}}\ge c$ can be realized by taking
 $M$ large. Similarly, it follows from \eqref{eq_0_2_2_2_1_multiplied},
 \eqref{eq_0_2_1_higher_multiplied}, \eqref{eq_0_2_multiplied_higher} 
and the condition
 $M^{2\sa-1-\sum_{j=1}^d\frac{1}{\sn_j}}\ge c$ that
\begin{align*}
&M^{-(\sum_{j=1}^d\frac{1}{\sn_j}+1)(l-\hat{N}_{\beta})}\sum_{a,b=2}^{N}\alpha^{a+b}(c_0M^{\sa(l-\hat{N}_{\beta})})^{\frac{a+b}{2}}\sum_{n=1}^{\infty}[V_{a,b}^{0-2,l,(n)},g]_{1,\infty,r}\\
&\le
 c\left(M^{\sum_{j=1}^d\frac{1}{\sn_j}+1-2\sa}+\alpha^{-2}M^{\sum_{j=1}^d\frac{1}{\sn_j}+1-3\sa}\right)L^{-d}\|g\|\\
&\le L^{-d}\|g\|,
\end{align*}
for any anti-symmetric function $g:I^2\to \C$. Thus we conclude that on
 the assumption $M^{2\sa-1-\sum_{j=1}^d\frac{1}{\sn_j}}\ge c$ that 
$$
V^{0-2,l}\in\cR(r,l).
$$
We needed to assume in total that 
\begin{align*}
M \ge c,\quad M^{2\sa -1 -\sum_{j=1}^d\frac{1}{\sn_j}}\ge c,\quad
 \alpha\ge c M^{\frac{\sa}{2}},\quad L^d\ge
 M^{(\sum_{j=1}^d\frac{1}{\sn_j}+1)(\hat{N}_{\beta}-N_{\beta})}
\end{align*}
for a positive constant $c$ independent of any parameter, 
in order to conclude the $l$-th step. The above assumptions can be
 summarized as in \eqref{eq_assumptions_IR_integration_without}.

The induction with $l$ proves that the claim holds true.
\end{proof}

\begin{remark}\label{rem_dispersion_power}
In the proof of the claim $V^{0-2,l}\in\cR(r,l)$ we crucially used the
 condition \eqref{eq_dispersion_power}.
\end{remark}

\subsection{Multi-scale integration with the artificial
  term}\label{subsec_integration_with}

In this subsection we construct a multi-scale integration for 
\begin{align*}
\log\left(
\int
 e^{-V(u)(\psi+\psi^1)+W(u)(\psi+\psi^1)-A(\psi+\psi^1)}d\mu_{\sum_{l=N_{\beta}+1}^{\hat{N}_{\beta}}\cC_l}(\psi^1)\right),
\end{align*}
where $A(\psi)$ is the Grassmann polynomial defined in
\eqref{eq_Grassmann_artificial_term}. Since the artificial term
$A(\psi)$ is parameterized by $\bla=(\la_1,\la_2)\in\C^2$, the Grassmann
data in this process is parameterized by $(u,\bla)$. We will classify
them in terms of the degree with $\bla$. It is structurally
natural to measure kernels of these Grassmann data by using a variant
of the norm $\|\cdot\|_1$ defined as follows. For $f\in
C(\overline{D(r)}\times \overline{D(r')}^2, \Map(I^m,\C))$ we set 
$$
\|f\|_{1,r,r'}:=\sup_{(u,\bla)\in \overline{D(r)}\times \overline{D(r')}^2}\|f(u,\bla)\|_1.
$$
Then $C(\overline{D(r)}\times \overline{D(r')}^2,\Map(I^m,\C))$ is a
Banach space with the norm $\|\cdot\|_{1,r,r'}$. Also, to shorten
subsequent formulas, we set 
$$
\|f_0\|_{1,r,r'}:=\sup_{(u,\bla)\in \overline{D(r)}\times \overline{D(r')}^2}|f_0(u,\bla)|
$$ 
for $f_0\in C(\overline{D(r)}\times \overline{D(r')}^2,\C)$.
Moreover, we introduce a variant of the measurement $[\cdot,\cdot]_1$ as
follows. For $f\in
C(\overline{D(r)}\times\overline{D(r')}^2,\Map(I^m\times I^n,\C))$ and
an anti-symmetric function $g:I^2\to \C$,
$$
[f,g]_{1,r,r'}:=\sup_{(u,\bla)\in\overline{D(r)}\times \overline{D(r')}^2}[f(u,\bla),g]_{1}.
$$
To describe scale-dependent properties of Grassmann data during the
multi-scale integration process, we introduce sets of
$\bigwedge\cV$-valued functions. Let $l\in
\{N_{\beta},N_{\beta}+1,\cdots,\hat{N}_{\beta}\}$ and $r,r'\in
\R_{>0}$. We define the subset $\cQ'(r,r',l)$ of
$\Map(\overline{D(r)}\times \C^2,\bigwedge_{even}\cV)$ as follows.
 $f$ belongs to $\cQ'(r,r',l)$ if and only if 
\begin{itemize}
\item 
\begin{align*}
f\in C\left(\overline{D(r)}\times\C^2,\bigwedge_{even}\cV\right)\cap
 C^{\o}\left(D(r)\times \C^2,\bigwedge_{even}\cV\right).
\end{align*}
\item For any $u\in \overline{D(r)}$, $\bla\mapsto
      f(u,\bla)(\psi):\C^2\to \bigwedge_{even}\cV$ is linear. 
\item For any $(u,\bla)\in \overline{D(r)}\times \C^2$ the
      anti-symmetric kernels $f(u,\bla)_m:I^m\to \C$ $(m=2,4,\cdots,N)$
      satisfy \eqref{eq_temperature_translation_invariance} and 
\begin{align}
&\alpha^2 \|f_0\|_{1,r,r'}\le L^{-d},\label{eq_1_1_bound}\\
&\sum_{m=2}^Nc_0^{\frac{m}{2}}\alpha^mM^{\frac{m}{2}\sa(l-\hat{N}_{\beta})}\|f_m\|_{1,r,r'}\le
 L^{-d}.\notag
\end{align}
\end{itemize}
We use the set $\cQ'(r,r',l)$ to collect Grassmann data linearly
      dependent on $\bla$ and bounded by $L^{-d}$.

The set $\cR'(r,r',l)$ is defined as follows. $f$ belongs to $\cR'(r,r
',l)$ if and only if 
\begin{itemize}
\item 
\begin{align*}
f\in C\left(\overline{D(r)}\times\C^2,\bigwedge_{even}\cV\right)\cap
 C^{\o}\left(D(r)\times \C^2,\bigwedge_{even}\cV\right).
\end{align*}
\item For any $u\in \overline{D(r)}$, $\bla\mapsto f(u,\bla)(\psi)$:
      $\C^2\to \bigwedge_{even}\cV$ is linear. 
\item There exist $f_{p,q}\in C(\overline{D(r)}\times
      \C^2,\Map(I^p\times I^q,\C))$ $(p,q\in \{2,4,\cdots,N\})$ such
      that for any $(u,\bla)\in \overline{D(r)}\times \C^2$, $p,q\in
      \{2,4,\cdots, N\}$, $f_{p,q}(u,\bla):I^p\times I^q\to\C$ is
      bi-anti-symmetric, satisfies
      \eqref{eq_temperature_translation_invariance},
      \eqref{eq_temperature_integral_vanish},
\begin{align*}
f(u,\bla)(\psi)=\sum_{p,q=2}^{N}1_{p,q\in 2\N}\frah^{p+q}\sum_{\bX\in
 I^p\atop \bY\in I^q}f_{p,q}(u,\bla)(\bX,\bY)\psi_{\bX}\psi_{\bY}
\end{align*}
and
\begin{align}
&\sum_{p,q=2}^{N}1_{p,q\in
 2\N}c_0^{\frac{p+q}{2}}\alpha^{p+q}M^{\frac{p+q}{2}\sa(l-\hat{N}_{\beta})}\|f_{p,q}\|_{1,r,r'}\le
 1,\label{eq_1_2_bound}\\
&\sum_{p,q=2}^{N}1_{p,q\in
 2\N}c_0^{\frac{p+q}{2}}\alpha^{p+q}M^{\frac{p+q}{2}\sa(l-\hat{N}_{\beta})}[f_{p,q},g]_{1,r,r'}\le
 L^{-d}\|g\|,\label{eq_1_2_multiplied_bound}
\end{align}
for any anti-symmetric function $g:I^2\to \C$. 
\end{itemize}
The role of the set $\cR'(r,r',l)$ is to collect Grassmann data linearly
depending on $\bla$, having bi-anti-symmetric kernels satisfying the property
\eqref{eq_temperature_integral_vanish}.

It is also necessary to define a set which can contain descendants of the
artificial term $-A(\psi)$. $f$ belongs to $\cS(r,r',l)$ if and only if 
\begin{itemize}
\item 
\begin{align*}
f\in C\left(\overline{D(r)}\times\C^2,\bigwedge_{even}\cV\right)\cap
 C^{\o}\left(D(r)\times \C^2,\bigwedge_{even}\cV\right).
\end{align*}
\item For any $u\in \overline{D(r)}$, $\bla\mapsto
      f(u,\bla)(\psi):\C^2\to \bigwedge_{even}\cV$ is linear. 
\item For any $(u,\bla)\in \overline{D(r)}\times\C^2$ the anti-symmetric kernels      $f(u,\bla)_m:I^m\to\C$ $(m=2,4,\cdots,N)$ satisfy
      \eqref{eq_temperature_translation_invariance} and
\begin{align}
&\alpha^2\|f_0\|_{1,r,r'}\le 1,\label{eq_1_3_bound}\\
&\sum_{m=2}^Nc_0^{\frac{m}{2}}\alpha^m\|f_m\|_{1,r,r'}\le 1.\notag
\end{align}
\end{itemize}
In fact the descendants of $-A(\psi)$ are independent of $u$. Thus the
condition concerning the variable $u$ assumed in $\cS(r,r',l)$ is not
necessary. However, by defining the set as above we can avoid
introducing another norm. 

Finally we define a set of Grassmann data depending on $\bla$ at least
quadratically. $f$ belongs to $\cW(r,r',l)$ if and only if 
\begin{itemize}
 \item 
\begin{align*}
f\in C\left(\overline{D(r)}\times\overline{D(r')}^2,\bigwedge_{even}\cV\right)\cap
 C^{\o}\left(D(r)\times {D(r')}^2,\bigwedge_{even}\cV\right).
\end{align*}
\item For any $u\in D(r)$, $j\in \{1,2\}$, 
$$
f(u,\b0)(\psi)=\frac{\partial}{\partial \la_j}f(u,\b0)(\psi)=0.
$$
\item For any $(u,\bla)\in \overline{D(r)}\times \overline{D(r
')}^2$ the anti-symmetric kernels $f(u,\bla)_m:I^m\to \C$
       $(m=2,4,\cdots,N)$ satisfy
       \eqref{eq_temperature_translation_invariance} and 
\begin{align}
&\alpha^2 \|f_0\|_{1,r,r'}\le 1,\label{eq_2_bound}\\
&\sum_{m=2}^Nc_0^{\frac{m}{2}}\alpha^mM^{\frac{m}{2}\sa(l-\hat{N}_{\beta})}\|f_m\|_{1,r,r'}\le
 1.\notag
\end{align}
\end{itemize}

In the following we inductively define a family of Grassmann
polynomials, which are the scale-dependent input and output of the
multi-scale integration process from $\hat{N}_{\beta}$ to
$N_{\beta}+1$. We admit the results of Lemma
\ref{lem_IR_integration_without} stating that 
$V^{0-1,l}\in\cQ(b^{-1}c_0^{-2}\alpha^{-4},l)$,
$V^{0-2,l}\in\cR(b^{-1}c_0^{-2}\alpha^{-4},l)$ $(\forall l\in
\{N_{\beta},N_{\beta}+1,\cdots,\hat{N}_{\beta}\})$ and define
\begin{align*}
V^{0,l}\in
C\left(\overline{D(b^{-1}c_0^{-2}\alpha^{-4})},\bigwedge_{even}\cV\right)\quad
(l=N_{\beta},N_{\beta}+1,\cdots,\hat{N}_{\beta})
\end{align*}
 by $V^{0,l}:=V^{0-1,l}+V^{0-2,l}$. Recalling the definition
 \eqref{eq_Grassmann_artificial_term}, we define $V^{1-3,\hat{N}_{\beta}}\in
 C(\C^2,\bigwedge_{even}\cV)$ by
 $$V^{1-3,\hat{N}_{\beta}}(\bla)(\psi):=-A(\psi).$$ 
Moreover, set 
\begin{align*}
V^{1-1,\hat{N}_{\beta}}=V^{1-2,\hat{N}_{\beta}}:=0,\quad
V^{1,\hat{N}_{\beta}}:=\sum_{j=1}^3V^{1-j,\hat{N}_{\beta}},\quad
 V^{2,\hat{N}_{\beta}}:=0.
\end{align*}
Let us assume that $l\in \{N_{\beta},N_{\beta}+1,\cdots,\hat{N}_{\beta}-1\}$ and we have
\begin{align*}
&V^{1-1,l+1}\in \cQ'(r,r',l+1),\quad
V^{1-2,l+1}\in \cR'(r,r',l+1),\\
&V^{1-3,l+1}\in \cS(r,r',l+1),\quad
V^{2,l+1}\in \cW(r,r',l+1).
\end{align*}
Set $V^{1,l+1}:=\sum_{j=1}^3V^{1-j,l+1}$.  
By recalling the formula \eqref{eq_tree_expansion} we can observe that
\begin{align}
&\frac{1}{n!}\left(\frac{d}{dz}\right)^n\log\left(\int
 e^{z\sum_{j=0}^2V^{j,l+1}(\psi^1+\psi)} d\mu_{\cC_{l+1}}(\psi^1)
\right)\Bigg|_{z=0}\label{eq_data_recursive_relation_with}\\
&=\frac{1}{n!}Tree(\{1,2,\cdots,n\},\cC_{l+1})\prod_{j=1}^nV^{0,l+1}(\psi^j+\psi)\Bigg|_{\psi^{j}=0\atop(\forall
 j\in\{1,2,\cdots,n\})}\notag\\
&\quad+1_{n=1}Tree(\{1\},\cC_{l+1})V^{1,l+1}(\psi^1+\psi)\Big|_{\psi^{1}=0}\notag\\
&\quad +1_{n\ge 2}
 \frac{1}{(n-1)!}Tree(\{1,2,\cdots,n\},\cC_{l+1})\notag\\
&\qquad\qquad\cdot V^{1,l+1}(\psi^1+\psi)
\prod_{j=2}^n\left(\sum_{a_j\in \{1,2\}}V^{0-a_j,l+1}(\psi^j+\psi)
\right)\Bigg|_{\psi^{j}=0\atop(\forall
 j\in\{1,2,\cdots,n\})}1_{\exists j(a_j=1)}\notag\\
&\quad +1_{n\ge 2}
 \frac{1}{(n-1)!}Tree(\{1,2,\cdots,n\},\cC_{l+1})\notag\\
&\qquad\qquad\qquad\cdot V^{1,l+1}(\psi^1+\psi)
\prod_{j=2}^nV^{0-2,l+1}(\psi^j+\psi)\Bigg|_{\psi^{j}=0\atop(\forall
 j\in\{1,2,\cdots,n\})}\notag\\
&\quad +
 \frac{1}{n!}Tree(\{1,2,\cdots,n\},\cC_{l+1})
\prod_{j=1}^n\left(\sum_{b_j=0}^2V^{b_j,l+1}(\psi^j+\psi)
\right)\Bigg|_{\psi^{j}=0\atop(\forall
 j\in\{1,2,\cdots,n\})}1_{\sum_{j=1}^nb_j\ge 2}.\notag
\end{align}
Let us decompose or rename each term of the right-hand side of
\eqref{eq_data_recursive_relation_with} from top to
bottom. In Subsection \ref{subsec_integration_without} we proved that if we set 
\begin{align*}
V^{0,l,(n)}(\psi):=\frac{1}{n!}Tree(\{1,2,\cdots,n\},\cC_{l+1})\prod_{j=1}^nV^{0,l+1}(\psi^j+\psi)\Bigg|_{\psi^{j}=0\atop(\forall
 j\in\{1,2,\cdots,n\})}
\end{align*}
for $n\in \N_{\ge 1}$, 
then $V^{0,l}(\psi)=\sum_{n=1}^{\infty}V^{0,l,(n)}(\psi)$. Let us set
\begin{align}
&V^{1-1-1,l}(\psi):=Tree(\{1\},\cC_{l+1})V^{1-1,l+1}(\psi^1+\psi)\Big|_{\psi^{1}=0},\notag\\
&V^{1-1-2,l}(\psi):=
\sum_{p,q=2}^{N}1_{p,q\in 2\N}\frah^{p+q}\sum_{\bX\in I^p\atop \bY\in
 I^q}V_{p,q}^{1-2,l+1}(\bX,\bY)\notag\\
&\qquad\qquad\qquad\quad\cdot Tree(\{1,2\},\cC_{l+1})(\psi^1+\psi)_{\bX}(\psi^2+\psi)_{\bY}\Bigg|_{\psi^{1}=\psi^2=0},\notag\\
&V^{1-2-1,l}(\psi):=\sum_{p,q=2}^{N}1_{p,q\in 2\N}\frah^{p+q}\sum_{\bX\in I^p\atop \bY\in
 I^q}V_{p,q}^{1-2,l+1}(\bX,\bY)\notag\\
&\qquad\qquad\qquad\quad\cdot Tree(\{1\},\cC_{l+1})(\psi^1+\psi)_{\bX}\Big|_{\psi^1=0}
Tree(\{1\},\cC_{l+1})(\psi^1+\psi)_{\bY}\Big|_{\psi^1=0},\notag\\
&V^{1-3,l}(\psi):=Tree(\{1\},\cC_{l+1})V^{1-3,l+1}(\psi^1+\psi)\Big|_{\psi^{1}=0}.\label{eq_linear_artificial_free_part}
\end{align}
Then, for the same reason that the transformation
\eqref{eq_data_recursive_relation_without} is valid, the following equality holds.
\begin{align*}
\frac{d}{dz}\log\left(\int
 e^{zV^{1-2,l+1}(\psi^1+\psi)} d\mu_{\cC_{l+1}}(\psi^1)
\right)\Bigg|_{z=0}=V^{1-1-2,l}(\psi)+V^{1-2-1,l}(\psi).
\end{align*}
Thus it follows that 
\begin{align*}
&Tree(\{1\},\cC_{l+1})V^{1,l+1}(\psi^1+\psi)_{\bX}\Big|_{\psi^1=0}\\
&=
V^{1-1-1,l}(\psi)+V^{1-1-2,l}(\psi)+V^{1-2-1,l}(\psi)+V^{1-3,l}(\psi).
\end{align*}
Moreover, set for $n\in \N_{\ge 2}$,
\begin{align}
&V^{1-1-3,l,(n)}(\psi)\label{eq_1_1_3_definition}\\
&:=
 \frac{1}{(n-1)!}Tree(\{1,2,\cdots,n\},\cC_{l+1})\notag\\
&\quad\cdot V^{1,l+1}(\psi^1+\psi)
\prod_{j=2}^n\left(\sum_{a_j\in \{1,2\}}V^{0-a_j,l+1}(\psi^j+\psi)
\right)\Bigg|_{\psi^{j}=0\atop(\forall
 j\in\{1,2,\cdots,n\})}1_{\exists j(a_j=1)},\notag\\
&V^{1-1-4,l,(n)}(\psi)\label{eq_1_1_4_definition}\\
&:=
 \sum_{p,q=2}^{N}1_{p,q\in 2\N}\frah^{p+q}\sum_{\bX\in I^p\atop \bY\in
 I^q} V_{p,q}^{0-2,l+1}(\bX,\bY)\frac{1}{(n-1)!}Tree(\{1,2,\cdots,n+1\},\cC_{l+1})\notag\\
&\quad\cdot (\psi^1+\psi)_{\bX}(\psi^2+\psi)_{\bY}
\prod_{j=3}^nV^{0-2,l+1}(\psi^j+\psi)\cdot V^{1,l+1}(\psi^{n+1}+\psi)\Bigg|_{\psi^{j}=0\atop(\forall
 j\in\{1,2,\cdots,n+1\})},\notag\\
&V^{1-2-2,l,(n)}(\psi)\label{eq_1_2_2_definition}\\
&:=\frac{1}{(n-1)!}\sum_{m=0}^{n-1}\sum_{(\{s_j\}_{j=1}^{m+1},\{t_k\}_{k=1}^{n-m})\in
 S(n,m)}\sum_{p,q=2}^{N}
1_{p,q\in 2\N}\frah^{p+q}\sum_{\bX\in I^p\atop \bY\in
 I^q} V_{p,q}^{0-2,l+1}(\bX,\bY)\notag\\
&\quad\cdot Tree(\{s_j\}_{j=1}^{m+1},\cC_{l+1})(\psi^{s_1}+\psi)_{\bX}\notag\\
&\quad\cdot\prod_{j=2}^{m+1}(1_{s_j\neq n}V^{0-2,l+1}(\psi^{s_j}+\psi)+
1_{s_j= n}V^{1,l+1}(\psi^{s_j}+\psi))\Bigg|_{\psi^{s_j}=0\atop(\forall
 j\in\{1,2,\cdots,m+1\})}\notag\\
&\quad\cdot Tree(\{t_k\}_{k=1}^{n-m},\cC_{l+1})(\psi^{t_1}+\psi)_{\bY}\notag\\
&\quad\cdot\prod_{k=2}^{n-m}(1_{t_k\neq n}V^{0-2,l+1}(\psi^{t_k}+\psi)+
1_{t_k= n}V^{1,l+1}(\psi^{t_k}+\psi))\Bigg|_{\psi^{t_k}=0\atop(\forall
 k\in\{1,2,\cdots,n-m\})}.\notag
\end{align}
It follows from the same argument as in
\eqref{eq_data_recursive_relation_without} that 
\begin{align*}
&\frac{1}{(n-1)!}Tree(\{1,2,\cdots,n\},\cC_{l+1}) V^{1,l+1}(\psi^1+\psi)
\prod_{j=2}^n V^{0-2,l+1}(\psi^j+\psi)\Bigg|_{\psi^{j}=0\atop(\forall
 j\in\{1,2,\cdots,n\})}\\
&=\frac{1}{(n-1)!}Tree(\{1,2,\cdots,n\},\cC_{l+1})\\
&\quad\cdot \prod_{j=1}^{n-1}
 V^{0-2,l+1}(\psi^j+\psi)\cdot 
 V^{1,l+1}(\psi^n+\psi)\Bigg|_{\psi^{j}=0\atop(\forall
 j\in\{1,2,\cdots,n\})}\\
&=V^{1-1-4,l,(n)}(\psi)+V^{1-2-2,l,(n)}(\psi).
\end{align*}
Finally we set for $n\in \N_{\ge 1}$,
\begin{align}
&V^{2,l,(n)}(\psi)\label{eq_2_definition}\\
&:=\frac{1}{n!}Tree(\{1,2,\cdots,n\},\cC_{l+1})
\prod_{j=1}^n\left(\sum_{b_j=0}^2V^{b_j,l+1}(\psi^j+\psi)
\right)\Bigg|_{\psi^{j}=0\atop(\forall
 j\in\{1,2,\cdots,n\})}1_{\sum_{j=1}^nb_j\ge 2}.\notag
\end{align}
By giving back these Grassmann polynomials to the expansion
\eqref{eq_data_recursive_relation_with} we see that the following
equality holds.
\begin{align*}
&\frac{1}{n!}\left(\frac{d}{dz}\right)^n\log\left(\int
 e^{z\sum_{j=0}^2V^{j,l+1}(\psi^1+\psi)} d\mu_{\cC_{l+1}}(\psi^1)
\right)\Bigg|_{z=0}\\
&=V^{0,l,(n)}(\psi)+1_{n=1}(V^{1-1-1,l}(\psi)+V^{1-1-2,l}(\psi)+V^{1-2-1,l}(\psi)+V^{1-3,l}(\psi))\\
&\quad + 1_{n\ge
 2}(V^{1-1-3,l,(n)}(\psi)+V^{1-1-4,l,(n)}(\psi)+V^{1-2-2,l,(n)}(\psi))+V^{2,l,(n)}(\psi).
\end{align*}
By assuming that these are convergent let us set
\begin{align}
&V^{1-1-3,l}(\psi):=\sum_{n=2}^{\infty}V^{1-1-3,l,(n)}(\psi),\quad
V^{1-1-4,l}(\psi):=\sum_{n=2}^{\infty}V^{1-1-4,l,(n)}(\psi),\label{eq_formal_sum_definition}\\
&V^{1-2-2,l}(\psi):=\sum_{n=2}^{\infty}V^{1-2-2,l,(n)}(\psi),\quad
V^{2,l}(\psi):=\sum_{n=1}^{\infty}V^{2,l,(n)}(\psi),\notag\\
&V^{1-1,l}(\psi):=\sum_{j=1}^4V^{1-1-j,l}(\psi),\quad 
V^{1-2,l}(\psi):=\sum_{j=1}^2V^{1-2-j,l}(\psi).\notag
\end{align}
We are going to prove the convergence, regularity and bound properties
of these Grassmann polynomials. Remind us that the data $V^{0,l}$ is
independent of the artificial parameter $\bla$, the data $V^{1-j,l}$
$(j=1,2,3)$ are linear with $\bla$ and $V^{2,l}$ depends on $\bla$ at
least quadratically. The 2nd superscript $l$ indicates that these are to
be integrated with the covariance $\cC_l$.

Set 
\begin{align}
\eps_{\beta}:=M^{(\sum_{j=1}^d\frac{1}{\sn_j}+1)(N_{\beta}-\hat{N}_{\beta})}.\label{eq_scaling_factor}
\end{align}

\begin{lemma}\label{lem_IR_integration_with}
Let $c_4$ be the constant appearing in Lemma
 \ref{lem_IR_integration_without}. Then there exists a constant $c_5\in
 [c_4,\infty)$ independent of any other parameters such that if 
\begin{align}
M^{\min\{1,2\sa-1-\sum_{j=1}^d\frac{1}{\sn_j}\}}\ge c_5,\quad \alpha\ge
 c_5 M^{\frac{\sa}{2}},\quad L^d\ge
 M^{(\sum_{j=1}^d\frac{1}{\sn_j}+1)(\hat{N}_{\beta}-N_{\beta})},\quad
 h\ge 2,\label{eq_assumptions_IR_integration_with}
\end{align}
then
\begin{align*}
&V^{1-1,l}\in\cQ'(b^{-1}c_0^{-2}\alpha^{-4},
 c_5^{-1}\eps_{\beta}^{\hat{N}_{\beta}-l}\beta^{-1}c_0^{-2}\alpha^{-4},l),\\
&V^{1-2,l}\in\cR'(b^{-1}c_0^{-2}\alpha^{-4},
 c_5^{-1}\eps_{\beta}^{\hat{N}_{\beta}-l}\beta^{-1}c_0^{-2}\alpha^{-4},l),\\
&V^{1-3,l}\in\cS(b^{-1}c_0^{-2}\alpha^{-4},
 c_5^{-1}\eps_{\beta}^{\hat{N}_{\beta}-l}\beta^{-1}c_0^{-2}\alpha^{-4},l),\\
&V^{2,l}\in\cW(b^{-1}c_0^{-2}\alpha^{-4},
 c_5^{-1}h^{l-\hat{N}_{\beta}}\eps_{\beta}^{\hat{N}_{\beta}-l}\beta^{-1}c_0^{-2}\alpha^{-4},l),\\
&(\forall l\in \{N_{\beta},N_{\beta}+1,\cdots,\hat{N}_{\beta}\}).
\end{align*}
\end{lemma}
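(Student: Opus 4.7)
The plan is to prove this lemma by downward induction on the scale index $l$, running from $l=\hat{N}_{\beta}$ down to $l=N_{\beta}$, in complete parallel to the structure of Lemma \ref{lem_IR_integration_without}. The base case $l=\hat{N}_{\beta}$ is checked directly: $V^{1-1,\hat{N}_{\beta}}=V^{1-2,\hat{N}_{\beta}}=V^{2,\hat{N}_{\beta}}=0$ trivially satisfy the defining inequalities of $\cQ'$, $\cR'$, $\cW$, while for $V^{1-3,\hat{N}_{\beta}}(\bla)(\psi)=-A(\psi)$ one computes $\|A^1\|_{1}, \|A^2\|_{1}\le c\beta$ and $\|A^2_2\|_{1}\lesssim \beta$ from the defining formula \eqref{eq_Grassmann_artificial_term}, so the condition of $\cS(r,r',\hat{N}_{\beta})$ reduces to $c_0^2\alpha^4|\bla|\beta\le 1$, which is ensured by the radius $c_5^{-1}\beta^{-1}c_0^{-2}\alpha^{-4}$ for $c_5$ sufficiently large.

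For the inductive step, assuming the claim at scale $l+1$, I would use the identity \eqref{eq_data_recursive_relation_with} to organize the $n$-th term of the tree expansion as the disjoint union of: the $\bla$-independent contribution $V^{0,l,(n)}$ (already controlled by Lemma \ref{lem_IR_integration_without}); the linear-in-$\bla$ contributions $V^{1-1-1,l},V^{1-1-2,l},V^{1-2-1,l},V^{1-3,l}$ (the $n=1$ pieces) and $V^{1-1-3,l,(n)},V^{1-1-4,l,(n)},V^{1-2-2,l,(n)}$ for $n\ge 2$; and the quadratic-and-higher-in-$\bla$ contribution $V^{2,l,(n)}$ (characterized by $\sum_{j=1}^n b_j\ge 2$). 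For each piece I would apply the appropriate general estimate from Subsection \ref{subsec_general_estimation}: Lemma \ref{lem_simple_tree_expansion} for the $V^{1-1-1,l}$ and $V^{1-1-3,l,(n)}$ type polynomials built from anti-symmetric kernels; Lemma \ref{lem_double_tree_expansion} for $V^{1-1-2,l}$ and $V^{1-1-4,l,(n)}$ which pair one bi-anti-symmetric $V^{0-2,l+1}$ kernel with linear factors; and Lemma \ref{lem_divided_tree_expansion} for $V^{1-2-1,l}$ and $V^{1-2-2,l,(n)}$, whose output kernels must inherit the bi-anti-symmetric vanishing property \eqref{eq_temperature_integral_vanish} in order to land in $\cR'$. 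Summing over $n$ the geometric series bounds analogous to \eqref{eq_0_1_1_expansion}, \eqref{eq_0_1_2_expansion}, \eqref{eq_0_2_higher_pre}, but weighted appropriately by the linear factor $|\bla|$ (which is controlled by the radius $r'$ in the induction hypothesis), establishes both convergence in $\bigwedge_{even}\cV$ and the desired membership of $V^{1-1,l}\in\cQ'$, $V^{1-2,l}\in\cR'$, $V^{1-3,l}\in\cS$. Analyticity in $(u,\bla)$ follows from uniform convergence of the partial sums on the relevant closed polydisc, together with term-by-term analyticity furnished by the induction hypothesis.

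The hard part will be the $\cW$-membership of $V^{2,l}$, which is the quadratic-in-$\bla$ piece. Here two distinct mechanisms contribute: either exactly one of the $b_j$ equals 2 (direct inheritance from $V^{2,l+1}$), or at least two of the $b_j$ equal 1, producing a product of linear-in-$\bla$ data and hence a truly quadratic or higher-order contribution. In the second case, combining two copies of $V^{1,l+1}$ via $Tree(\{1,\dots,n\},\cC_{l+1})$ generates a contribution whose $\|\cdot\|_{1,r,r'}$-norm scales roughly like $r'^2$ times the relevant tree-amplitude, so the defining inequalities \eqref{eq_2_bound} force the radius $r'$ to shrink as $l$ decreases. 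Tracking the accumulated factors, one sees that the shrinkage at each step is of the form $(\text{const})\cdot h^{-1}\eps_{\beta}$, giving exactly the radius $c_5^{-1}h^{l-\hat{N}_{\beta}}\eps_{\beta}^{\hat{N}_{\beta}-l}\beta^{-1}c_0^{-2}\alpha^{-4}$ appearing in the statement; the additional factor $h^{l-\hat{N}_{\beta}}$ relative to the linear case arises because the $\|\cdot\|_{1}$-norm of the quadratic parts of $V^{1-3,l+1}$ (coming from $A^1$ and the quadratic piece of $A^2$) is of order $\beta$ rather than $\beta L^{-d}$, so their pairwise products cannot be absorbed by a single inverse-volume factor and must instead be compensated by additional smallness of $r'$ at each scale.

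Once the $l$-th scale bounds are established, I would collect the constraints on $M$, $\alpha$, $L^d$ and $h$ generated along the way — using \eqref{eq_dispersion_power} to absorb polynomial factors of $M$, and choosing $c_5\ge c_4$ large enough to dominate the combinatorial constants from the tree sums and the explicit norms of $A^1,A^2$ in the base case — closing the induction under the single assumption \eqref{eq_assumptions_IR_integration_with}. The regularity, $\bla$-linearity (respectively, the vanishing of $f(u,\b0)$ and $\partial_{\la_j}f(u,\b0)$ for $\cW$-data), temperature-translation invariance \eqref{eq_temperature_translation_invariance}, and the integral-vanishing property \eqref{eq_temperature_integral_vanish} for the $\cR'$-component all propagate automatically from the induction hypothesis thanks to the structure of the tree expansion and the properties of $\cC_{l+1}$ stated in Subsection \ref{subsec_generalized_covariances}.
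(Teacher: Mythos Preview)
Your overall architecture matches the paper's: downward induction on $l$, organization by $\bla$-degree, and the same assignment of tree-expansion lemmas (Lemma~\ref{lem_simple_tree_expansion} for $V^{1-1-1},V^{1-1-3},V^{2}$; Lemma~\ref{lem_double_tree_expansion} for $V^{1-1-2},V^{1-1-4}$; Lemma~\ref{lem_divided_tree_expansion} for $V^{1-2-1},V^{1-2-2}$). One structural difference: the paper disposes of $V^{1-3,l}$ for \emph{all} scales first, non-inductively, by the explicit recursion \eqref{eq_1_3_2_expansion}, and only then runs the induction for $V^{1-1},V^{1-2},V^{2}$. Your plan to fold $V^{1-3,l}$ into the main induction would also work, but is slightly less clean.

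There is, however, a genuine gap in your treatment of $V^{2,l}$. The tree bound \eqref{eq_simple_1} requires $\|\cdot\|_{1,\infty}$-control on all factors but one, yet the induction hypothesis for $V^{1,l+1}$ and $V^{2,l+1}$ only supplies $\|\cdot\|_{1}$-control (this is built into the definitions of $\cQ',\cR',\cS,\cW$). The paper bridges this via Lemma~\ref{lem_artificial_scaling_properties}, whose second inequality reads $\|f(u,\eps\bla)\|_{1,\infty}\le h\eps\|f\|_{1,r,r'}$: shrinking the $\bla$-radius by $\eps=1/h$ (valid since $h\ge 2$ forces $\eps\le 1/2$) converts a $\|\cdot\|_{1}$-bound into a $\|\cdot\|_{1,\infty}$-bound with no loss. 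This is precisely why the $\cW$-radius acquires the extra $h^{-1}$ per scale and why the hypothesis $h\ge 2$ is needed. Your proposed explanation --- that the $h$-factor compensates for $\|V^{1-3,l+1}\|_1$ being of order $\beta$ rather than $\beta L^{-d}$ --- is not the operative mechanism; the missing $L^{-d}$ is in fact harmless here because the $\cW$-bounds \eqref{eq_2_bound} do not carry an $L^{-d}$ factor. Without Lemma~\ref{lem_artificial_scaling_properties} (or an equivalent device) you cannot feed the $\bla$-dependent factors into \eqref{eq_simple_1} at all, so this step needs to be made explicit.
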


\begin{remark}
The radius
 $c_5^{-1}\eps_{\beta}^{\hat{N}_{\beta}-l}\beta^{-1}c_0^{-2}\alpha^{-4}$
 of $\bla$ assumed on $V^{1-1,l}$, $V^{1-2,l}$, $V^{1-3,l}$ amounts to
 heavy $\beta$-dependent bounds on these Grassmann data. Also, the
 radius of analyticity of $V^{2,l}$ with $\bla$ depends not only on
 $\beta$ but on $h$ heavily. While we have to make best efforts to
 improve $\beta$-dependency of the possible magnitude of the variable
 $u$ as the main focus of this paper, the $\beta$-dependency of the
 magnitude of $\bla$ does not affect our main results. Therefore we
 choose to simplify the following inductive estimation procedure at the
 expense of $(\beta,h)$-dependency of the magnitude of $\bla$ rather
 than to optimize it with some complications. 
\end{remark}

In the proof of Lemma \ref{lem_IR_integration_with} we will use the
following lemma. 

\begin{lemma}\label{lem_artificial_scaling_properties}
Assume that $m,\ p,\ q\in \N_{\ge 2}$,  
\begin{align*}
&f_0^1\in
 C(\overline{D(r)}\times\C^2)\cap C^{\o}(D(r)\times \C^2),\\
&f_m^1\in C(\overline{D(r)}\times \C^2,\Map(I^m,\C))\cap
 C^{\o}(D(r)\times \C^2,\Map(I^m,\C)),\\ 
&f_{p,q}^1\in  C(\overline{D(r)}\times \C^2,\Map(I^p\times I^q,\C))\cap
 C^{\o}(D(r)\times \C^2,\Map(I^p\times I^q,\C)),\\
&f_0^2\in C(\overline{D(r)}\times\overline{D(r')}^2)\cap
 C^{\o}(D(r)\times D(r')^2),\\
&f_m^2\in C(\overline{D(r)}\times\overline{D(r')}^2,\Map(I^m,\C))\cap
C^{\o}(D(r)\times D(r')^2,\Map(I^m,\C)),\\ 
&\bla\mapsto f_0^1(u,\bla):\C^2\to \C,\quad\bla\mapsto
 f_m^1(u,\bla)(\bX):\C^2\to \C,\\
&\bla\mapsto f_{p,q}^1(u,\bla)(\bY,\bZ):\C^2\to\C \text{ are linear and }\\
&f_0^2(u,\b0)=\frac{\partial}{\partial \la_j}f_0^2(u,\b0)=0,\quad 
f_m^2(u,\b0)(\bX)=\frac{\partial}{\partial \la_j}f_m^2(u,\b0)(\bX)=0,\\
&(\forall u\in D(r),\ \bX\in I^m,\ (\bY,\bZ)\in I^p\times I^q,\ j\in \{1,2\}).
\end{align*}
Then, 
\begin{align}
&\|f_n^j(u,\eps\bla)\|_1\le \eps \|f_n^j\|_{1,r,r'},\quad
 \|f_n^j(u,\eps\bla)\|_{1,\infty}\le
 h\eps\|f_n^j\|_{1,r,r'},\label{eq_artificial_scaling_properties}\\
&[f_{p,q}^1(u,\eps \bla),g]_1\le \eps [f_{p,q}^1,g]_{1,r,r'},\notag\\
&(\forall u\in \overline{D(r)},\ \bla\in \overline{D(r')}^2,\ \eps\in
 [0,1/2],\ j\in \{1,2\},\ n\in \{0,m\}).\notag
\end{align}
Here $g:I^2\to \C$ is any anti-symmetric function.
\end{lemma}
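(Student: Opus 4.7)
The plan is to split the argument according to the two cases $j=1$ and $j=2$, which correspond respectively to functions that are exactly linear in $\bla$ and those that vanish to order at least $2$ at $\bla=\b0$; the third observation is that the $\|\cdot\|_{1,\infty}$ estimate reduces, up to a general size comparison, to the $\|\cdot\|_1$ estimate, so the real work is only in the $\|\cdot\|_1$ and $[\cdot,\cdot]_1$ bounds. The elementary comparison is
$$
\|F\|_{1,\infty}\;\le\;h\|F\|_1\qquad(\forall F:I^n\to\C),
$$
obtained by dominating the supremum over the distinguished index by the sum over it: $\sup_{X_0}(1/h)^{n-1}\sum_{\bX}|F|\le\sum_{X_0}(1/h)^{n-1}\sum_{\bX}|F|=h\|F\|_1$. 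Applying this to $F=f_n^j(u,\eps\bla)$ will convert the $\|\cdot\|_1$ bound into the stated $\|\cdot\|_{1,\infty}$ bound, with the factor $h$ accounted for.

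For $j=1$, the linearity of $\bla\mapsto f_n^1(u,\bla)(\bX)$ and $\bla\mapsto f_{p,q}^1(u,\bla)(\bY,\bZ)$ on $\C^2$ gives the exact identities $f_n^1(u,\eps\bla)=\eps\,f_n^1(u,\bla)$ and $f_{p,q}^1(u,\eps\bla)=\eps\,f_{p,q}^1(u,\bla)$, so that
$$
\|f_n^1(u,\eps\bla)\|_1=\eps\,\|f_n^1(u,\bla)\|_1\le\eps\,\|f_n^1\|_{1,r,r'},\quad
[f_{p,q}^1(u,\eps\bla),g]_1=\eps\,[f_{p,q}^1(u,\bla),g]_1\le\eps\,[f_{p,q}^1,g]_{1,r,r'},
$$
immediately upon taking the supremum over $(u,\bla)$ in the right-hand side. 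Combining the first inequality with $\|\cdot\|_{1,\infty}\le h\|\cdot\|_1$ yields $\|f_n^1(u,\eps\bla)\|_{1,\infty}\le h\eps\|f_n^1\|_{1,r,r'}$.

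For $j=2$, fix $u\in\overline{D(r)}$ and $\bla\in\overline{D(r')}^2$, and consider the map
$$
\varphi:\zeta\longmapsto f_n^2(u,\zeta\bla),
$$
regarded as taking values in the finite-dimensional Banach space $(\Map(I^n,\C),\|\cdot\|_1)$ (with the convention that $\Map(I^0,\C)=\C$). Because $|\zeta|\le 1$ forces $\zeta\bla\in\overline{D(r')}^2$, the map $\varphi$ is continuous on $\overline{D(1)}$ and analytic on $D(1)$; the vanishing hypotheses $f_n^2(u,\b0)=\partial_{\la_k}f_n^2(u,\b0)=0$ translate into $\varphi(0)=\varphi'(0)=0$, so $\zeta\mapsto\varphi(\zeta)/\zeta^2$ is analytic on $D(1)$ and continuous on $\overline{D(1)}$. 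The vector-valued maximum principle (applied by composing with bounded linear functionals and then taking the supremum over the unit ball of the dual) gives
$$
\|\varphi(\eps)\|_1\;\le\;\eps^2\sup_{|\zeta|=1}\|\varphi(\zeta)\|_1\;\le\;\eps^2\,\|f_n^2\|_{1,r,r'},
$$
and the assumption $\eps\in[0,1/2]$ (in fact $\eps\le 1$ suffices) converts $\eps^2$ into $\eps$, yielding the desired $\|f_n^2(u,\eps\bla)\|_1\le\eps\|f_n^2\|_{1,r,r'}$. The corresponding $\|\cdot\|_{1,\infty}$ bound then follows as in the previous paragraph by inserting the general inequality $\|\cdot\|_{1,\infty}\le h\|\cdot\|_1$.

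The only non-routine step is the max-principle argument for $j=2$; the remaining ingredients are linearity, the passage from $\eps^2$ to $\eps$, and the trivial size comparison between $\|\cdot\|_{1,\infty}$ and $\|\cdot\|_1$. No claim about $[f^2,g]_1$ is needed because the third inequality of the lemma is asserted only for $f_{p,q}^1$.
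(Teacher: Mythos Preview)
Your proof is correct and follows essentially the same strategy as the paper: the $j=1$ case is immediate from linearity, and for $j=2$ the paper uses the Cauchy integral representation
\[
f_m^2(u,\eps\bla)(\bX)=\frac{1}{2\pi i}\oint_{|z|=\delta}dz\, f_m^2(u,z\bla)(\bX)\frac{\eps^2}{z^2(z-\eps)},\qquad \delta\in(1/2,1),
\]
which yields the bound $\eps^2/(\delta(\delta-\eps))\cdot\|f_m^2\|_{1,r,r'}$, and then the restriction $\eps\le 1/2$ is used to reduce this to $\eps\|f_m^2\|_{1,r,r'}$. Your maximum-principle formulation on $\varphi(\zeta)/\zeta^2$ is the same idea in slightly cleaner packaging: it gives the factor $\eps^2$ directly, so the step from $\eps^2$ to $\eps$ really only needs $\eps\le 1$, as you note. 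Your explicit reduction of the $\|\cdot\|_{1,\infty}$ estimate to the $\|\cdot\|_1$ estimate via $\|F\|_{1,\infty}\le h\|F\|_1$ is exactly the intended mechanism and is left implicit in the paper. One small point you might make explicit: the analyticity of $\zeta\mapsto f_m^2(u,\zeta\bla)$ on $D(1)$ is only directly guaranteed for $u\in D(r)$, so the inequality for $u\in\partial D(r)$ should be obtained by continuity; the paper glosses over this as well.
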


\begin{proof}
These are essentially same as the inequalities
 \cite[\mbox{(3.91),(3.92)}]{K_BCS}. The inequalities
 \eqref{eq_artificial_scaling_properties} for $j=1$ are trivial because
 of the linearity. To
 prove the inequalities for $j=2$, one can use the following equality.
\begin{align*}
&f_m^2(u,\eps\bla)(\bX)=\frac{1}{2\pi i}\oint_{|z|=\delta}dz
 f_m^2(u,z\bla)(\bX)\frac{\eps^2}{z^2(z-\eps)},\\
&(\forall u\in \overline{D(r)},\ \bla\in \overline{D(r')}^2,\ \eps\in
 [0,1/2],\ \delta\in (1/2,1),\ \bX\in I^m).
\end{align*}
\end{proof}

\begin{proof}[Proof of Lemma \ref{lem_IR_integration_with}] 
During the proof we often omit the sign of dependency on the parameter
 $(u,\bla)$ to shorten formulas. Since $V^{1-3,l}(\psi)$
 $(l=N_{\beta},N_{\beta}+1,\cdots,\hat{N}_{\beta})$ are defined
 independently of other polynomials, we can readily summarize their
 properties. Since $V_4^{1-3,l}(\psi)=-\la_2 A_4^2(\psi)$ for any $l\in
 \{N_{\beta},N_{\beta}+1,\cdots,\hat{N}_{\beta}\}$, 
\begin{align*}
&V_4^{1-3,l}(\orho_1\rho_1\bx_1s_1\xi_1,\orho_2\rho_2\bx_2s_2\xi_2,\orho_3\rho_3\bx_3s_3\xi_3,\orho_4\rho_4\bx_4s_4\xi_4)\\
&=-\frac{\la_2h^3}{4!}1_{s_1=s_2=s_3=s_4}\sum_{\s\in\S_4}\sgn(\s)\\
&\qquad\cdot 
1_{((\orho_{\s(1)},\rho_{\s(1)},\bx_{\s(1)},\xi_{\s(1)}),
(\orho_{\s(2)},\rho_{\s(2)},\bx_{\s(2)},\xi_{\s(2)}),
(\orho_{\s(3)},\rho_{\s(3)},\bx_{\s(3)},\xi_{\s(3)}),
(\orho_{\s(4)},\rho_{\s(4)},\bx_{\s(4)},\xi_{\s(4)}))\atop =
((1,\hrho,r_L(\hbx),1),(2,\hrho,r_L(\hbx),-1),(2,\heta,r_L(\hby),1),
(1,\heta,r_L(\hby),-1))},\\
&(\forall (\orho_j,\rho_j,\bx_j,s_j,\xi_j)\in I\ (j=1,2,3,4)).
\end{align*}
Thus
\begin{align}
\|V_4^{1-3,l}\|_{1,r,r'}\le \beta r',\quad (\forall l\in
 \{N_{\beta},N_{\beta}+1,\cdots,\hat{N}_{\beta}\}).\label{eq_1_3_4}
\end{align}
Also, 
\begin{align*}
&V_2^{1-3,\hat{N}_{\beta}}(\orho_1\rho_1\bx_1s_1\xi_1,\orho_2\rho_2\bx_2s_2\xi_2)\\
&=-\frac{h}{2}1_{s_1=s_2}\sum_{\s\in\S_2}\sgn(\s)
\Big(1_{((\orho_{\s(1)},\rho_{\s(1)},\bx_{\s(1)},\xi_{\s(1)}),
(\orho_{\s(2)},\rho_{\s(2)},\bx_{\s(2)},\xi_{\s(2)}))\atop =
((1,\hrho,r_L(\hbx),1),(2,\hrho,r_L(\hbx),-1))}\la_1\\
&\qquad\qquad\qquad\qquad\qquad\quad
+1_{((\orho_{\s(1)},\rho_{\s(1)},\bx_{\s(1)},\xi_{\s(1)}),
(\orho_{\s(2)},\rho_{\s(2)},\bx_{\s(2)},\xi_{\s(2)}))\atop =
((1,\hrho,r_L(\hbx),1),(1,\hrho,r_L(\hbx),-1))}
1_{(\hrho,r_L(\hbx))=(\heta,r_L(\hby))}\la_2\Big),\\
&\quad(\forall (\orho_j,\rho_j,\bx_j,s_j,\xi_j)\in I\ (j=1,2)).
\end{align*}
Thus
\begin{align}
\|V_2^{1-3,\hat{N}_{\beta}}\|_{1,r,r'}\le 2\beta
 r'.\label{eq_1_3_2_first}
\end{align}
We can see from the definition that for $l\in
 \{N_{\beta},N_{\beta}+1,\cdots,\hat{N}_{\beta}-1\}$
\begin{align}
&V_2^{1-3,l}(\psi)\label{eq_1_3_2_expansion}\\
&=V_2^{1-3,l+1}(\psi)\notag\\
&\quad+\frah^2\sum_{\bX\in I^2}\left(\left(\begin{array}{c}4 \\ 2\end{array}\right)\frah^2\sum_{\bY\in
 I^2} V_4^{1-3,l+1}(\bX,\bY)Tree(\{1\},\cC_{l+1})\psi^1_{\bY}\Big|_{\psi^1=0}
\right)\psi_{\bX}\notag\\
&=V_2^{1-3,\hat{N}_{\beta}}(\psi)\notag\\
&\quad+\frah^2\sum_{\bX\in I^2}
\left(\left(\begin{array}{c}4 \\ 2\end{array}\right)\frah^2\sum_{\bY\in I^2}V_4^{1-3,\hat{N}_{\beta}}(\bX,\bY)\sum_{j=l+1}^{\hat{N}_{\beta}}Tree(\{1\},\cC_{j})\psi^1_{\bY}\Big|_{\psi^1=0}\right)\psi_{\bX}.\notag
\end{align}
By using \eqref{eq_scale_covariance_determinant_bound},
 \eqref{eq_1_3_4}, \eqref{eq_1_3_2_first} and the assumptions $c_0\ge
 1$, $M\ge 2$, 
\begin{align}
\|V_2^{1-3,l}\|_{1,r,r'}\le
 \|V_2^{1-3,\hat{N}_{\beta}}\|_{1,r,r'}+\left(\begin{array}{c}4\\ 2
					      \end{array}\right)
 \|V_4^{1-3,\hat{N}_{\beta}}\|_{1,r,r'}\sum_{j=l+1}^{\hat{N}_{\beta}}c_0M^{\sa(j-\hat{N}_{\beta})}\le
 c \beta c_0r'.\label{eq_1_3_2}
\end{align}
Moreover it follows from the definition and \eqref{eq_1_3_2_expansion} that
\begin{align*}
&V_0^{1-3,l}\\
&=V_0^{1-3,l+1}+Tree(\{1\},\cC_{l+1})V_2^{1-3,l+1}(\psi^1)\Big|_{\psi^1=0}+
Tree(\{1\},\cC_{l+1})V_4^{1-3,l+1}(\psi^1)\Big|_{\psi^1=0}\\
&=V_0^{1-3,l+1}+Tree(\{1\},\cC_{l+1})V_2^{1-3,\hat{N}_{\beta}}(\psi^1)\Big|_{\psi^1=0}\\
&\quad+ 1_{l\le \hat{N}_{\beta}-2}\frah^{2}\sum_{\bX\in I^2}\\
&\qquad\quad\cdot \left(
\left(\begin{array}{c}4 \\ 2 \end{array}\right)\frah^2\sum_{\bY\in I^2}V_4^{1-3,\hat{N}_{\beta}}(\bX,\bY)\sum_{j=l+2}^{\hat{N}_{\beta}}Tree(\{1\},\cC_j)\psi_{\bY}^1\Big|_{\psi^1=0}
\right)\\
&\qquad\quad\cdot Tree(\{1\},\cC_{l+1})\psi_{\bX}^1\Big|_{\psi^1=0}\\
&\quad +Tree(\{1\},\cC_{l+1})V_4^{1-3,\hat{N}_{\beta}}(\psi^1)\Big|_{\psi^1=0}\\
&=\sum_{k=l+1}^{\hat{N}_{\beta}}Tree(\{1\},\cC_k)V_2^{1-3,\hat{N}_{\beta}}(\psi^1)\Big|_{\psi^1=0}\\
&\quad 
+1_{l\le \hat{N}_{\beta}-2}\sum_{k=l+1}^{\hat{N}_{\beta}-1}\left(\frac{1}{h}\right)^2
\sum_{\bX\in I^2}\\
&\qquad\quad\cdot\left(
\left(\begin{array}{c}4 \\ 2 \end{array}\right)\frah^2\sum_{\bY\in I^2}V_4^{1-3,\hat{N}_{\beta}}(\bX,\bY)\sum_{j=k+1}^{\hat{N}_{\beta}}Tree(\{1\},\cC_j)\psi_{\bY}^1\Big|_{\psi^1=0}
\right)\\
&\qquad\quad\cdot Tree(\{1\},\cC_{k})\psi_{\bX}^1\Big|_{\psi^1=0}\\
&\quad +\sum_{k=l+1}^{\hat{N}_{\beta}}Tree(\{1\},\cC_k)V_4^{1-3,\hat{N}_{\beta}}(\psi^1)\Big|_{\psi^1=0}.
\end{align*}
Thus by \eqref{eq_scale_covariance_determinant_bound}, \eqref{eq_1_3_4},
 \eqref{eq_1_3_2_first} and $c_0\ge 1$, $M\ge 2$, 
\begin{align}
&\|V_0^{1-3,l}\|_{1,r,r'}\label{eq_1_3_0}\\
&\le 2\beta r'\sum_{k=l+1}^{\hat{N}_{\beta}}c_0M^{\sa(k-\hat{N}_{\beta})}
 + c\beta r' 1_{l\le
 \hat{N}_{\beta}-2}\sum_{k=l+1}^{\hat{N}_{\beta}-1}c_0M^{\sa
 (k-\hat{N}_{\beta})}\sum_{j=k+1}^{\hat{N}_{\beta}}c_0M^{\sa
 (j-\hat{N}_{\beta})}\notag\\
&\quad +\beta r' \sum_{k=l+1}^{\hat{N}_{\beta}}c_0^2M^{2\sa (k-\hat{N}_{\beta})}\notag\\
&\le c\beta c_0^2r'.\notag
\end{align}
The inequalities \eqref{eq_1_3_4}, \eqref{eq_1_3_2}, \eqref{eq_1_3_0}
 and $c_0\ge 1$, $\alpha\ge 1$
 result in 
\begin{align*}
&\alpha^2 \|V_0^{1-3,l}\|_{1,r,r'}\le c\beta c_0^2\alpha^2r',\quad \sum_{m=2}^Nc_0^{\frac{m}{2}}\alpha^m\|V_m^{1-3,l}\|_{1,r,r'}\le c\beta
 c_0^2\alpha^4 r'.
\end{align*}
By definition $\bla\mapsto V^{1-3,l}(\bla)(\psi)$ is linear for any
 $l\in\{N_{\beta},N_{\beta}+1,\cdots,\hat{N}_{\beta}\}$. The statement
 of Lemma \ref{lem_simple_tree_expansion} and the induction with $l$
 ensure that $V_m^{1-3,l}:I^m\to \C$ $(m=2,4,\
 l=N_{\beta},N_{\beta}+1,\cdots,\hat{N}_{\beta})$ satisfy
 \eqref{eq_temperature_translation_invariance}. Combined with these
 basic properties, the above inequalities conclude that there exists a
 generic positive constant $c'$ independent of any parameter such that 
\begin{align}
V^{1-3,l}\in
 \cS(r,{c'}^{-1}\eps_{\beta}^{\hat{N}_{\beta}-l}\beta^{-1}c_0^{-2}\alpha^{-4},l),\quad(\forall
 l\in \{N_{\beta},N_{\beta}+1,\cdots,\hat{N}_{\beta}\})\label{eq_1_3_inclusion}
\end{align}
for any $\alpha\in \R_{\ge 1}$, $r\in \R_{>0}$. Here we also used that
 $\eps_{\beta}\le 1$.

Let us set 
$r:=b^{-1}c_0^{-2}\alpha^{-4}$, $r':={c'}^{-1}\beta^{-1}c_0^{-2}\alpha^{-4}$
with the constant $c'$ appearing in \eqref{eq_1_3_inclusion}. Let $l\in
 \{N_{\beta},N_{\beta}+1,\cdots,\hat{N}_{\beta}-1\}$ and assume that 
\begin{align*}
&V^{1-1,l+1}\in \cQ'(r,\eps_{\beta}^{\hat{N}_{\beta}-l-1}r',l+1),\quad V^{1-2,l+1}\in\cR'(r,\eps_{\beta}^{\hat{N}_{\beta}-l-1}r',l+1),\\
&V^{2,l+1}\in
 \cW(r,h^{l+1-\hat{N}_{\beta}}\eps_{\beta}^{\hat{N}_{\beta}-l-1}r',l+1)
\end{align*}
as the induction hypothesis. Note that these inclusion trivially hold
 for $l=\hat{N}_{\beta}-1$. 
Check that if $M\ge 2$, $\eps_{\beta}\le M^{-1}\le\frac{1}{2}$. Thus we
 can apply the inequalities \eqref{eq_artificial_scaling_properties} with
 $\eps=\eps_{\beta}$. Let us list useful inequalities derived from the
 induction hypothesis,
 \eqref{eq_artificial_scaling_properties}, \eqref{eq_1_3_inclusion} and the conditions $M^{\sa}\ge 2^4$,
 $\alpha \ge 2^3$.
\begin{align}
&\sum_{m=2}^{N}2^{3m}(c_0M^{\sa
 (l+1-\hat{N}_{\beta})})^{\frac{m}{2}}\|V_m^{1-1,l+1}\|_{1,r,\eps_{\beta}^{\hat{N}_{\beta}-l}r'}\le
 c\eps_{\beta}\alpha^{-2}L^{-d},\label{eq_1_1_assumption}\\
&\sum_{m=2}^{N}2^{m}\alpha^m(c_0M^{\sa
 (l-\hat{N}_{\beta})})^{\frac{m}{2}}\|V_m^{1-1,l+1}\|_{1,r,\eps_{\beta}^{\hat{N}_{\beta}-l}r'}\le
 c\eps_{\beta}M^{-\sa}L^{-d},\label{eq_1_1_weight_assumption}\\
&\sum_{m=4}^{N}2^{3m}(c_0M^{\sa
 (l+1-\hat{N}_{\beta})})^{\frac{m}{2}}\|V_m^{1-2,l+1}\|_{1,r,\eps_{\beta}^{\hat{N}_{\beta}-l}r'}\le
 c\eps_{\beta}\alpha^{-4},\label{eq_1_2_kernel_assumption}\\
&\sum_{m=4}^{N}2^{m}\alpha^m(c_0M^{\sa
 (l-\hat{N}_{\beta})})^{\frac{m}{2}}\|V_m^{1-2,l+1}\|_{1,r,\eps_{\beta}^{\hat{N}_{\beta}-l}r'}\le
 c\eps_{\beta}M^{-2\sa},\label{eq_1_2_kernel_weight_assumption}\\
 &\sum_{p,q=2}^{N}1_{p,q\in
 2\N}2^{3p+3q}(c_0M^{\sa(l+1-\hat{N}_{\beta})})^{\frac{p+q}{2}}[V_{p,q}^{1-2,l+1},g]_{1,r,\eps_{\beta}^{\hat{N}_{\beta}-l}r'}\le c\eps_{\beta}\alpha^{-4}L^{-d}\|g\|,\label{eq_1_2_multiplied_assumption}\\
&\sum_{p,q=2}^{N}1_{p,q\in
 2\N}2^{2p+2q}\alpha^{p+q}(c_0M^{\sa(l-\hat{N}_{\beta})})^{\frac{p+q}{2}}[V_{p,q}^{1-2,l+1},g]_{1,r,\eps_{\beta}^{\hat{N}_{\beta}-l}r'}\le
 c\eps_{\beta}M^{-2\sa}L^{-d}\|g\|,\label{eq_1_2_multiplied_weight_assumption}
\end{align}
for any anti-symmetric function $g:I^2\to \C$. 
\begin{align}
&\sum_{m=2}^N2^{3m}(c_0M^{\sa(l+1-\hat{N}_{\beta})})^{\frac{m}{2}}\|V_m^{1-3,l+1}\|_{1,r,\eps_{\beta}^{\hat{N}_{\beta}-l}r'}\le
 c \eps_{\beta}\alpha^{-2},\label{eq_1_3_assumption}\\
&\sum_{m=2}^{N}2^{m}\alpha^m(c_0M^{\sa
 (l-\hat{N}_{\beta})})^{\frac{m}{2}}\|V_m^{1-3,l+1}\|_{1,r,\eps_{\beta}^{\hat{N}_{\beta}-l}r'}\le
 c\eps_{\beta}M^{-\sa},\label{eq_1_3_weight_assumption}\\
&\sum_{m=2}^N2^{3m}(c_0M^{\sa(l+1-\hat{N}_{\beta})})^{\frac{m}{2}}\|V_m^{2,l+1}\|_{1,r,\eps_{\beta}^{\hat{N}_{\beta}-l}r'}\le
 c \eps_{\beta}\alpha^{-2},\label{eq_2_assumption}\\
&\sum_{m=2}^N2^{m}\alpha^m(c_0M^{\sa(l-\hat{N}_{\beta})})^{\frac{m}{2}}\|V_m^{2,l+1}\|_{1,r,\eps_{\beta}^{\hat{N}_{\beta}-l}r'}\le
 c \eps_{\beta} M^{-\sa}.\label{eq_2_weight_assumption}
\end{align}
To derive \eqref{eq_1_2_kernel_assumption},
 \eqref{eq_1_2_kernel_weight_assumption}, we used a variant of the
 inequality \eqref{eq_kernel_divided_relation}. We can derive from
 \eqref{eq_1_1_assumption}, \eqref{eq_1_1_weight_assumption},
 \eqref{eq_1_2_kernel_assumption},
 \eqref{eq_1_2_kernel_weight_assumption}, \eqref{eq_1_3_assumption},
 \eqref{eq_1_3_weight_assumption} that
\begin{align}
&\sum_{m=2}^N2^{3m}(c_0M^{\sa(l+1-\hat{N}_{\beta})})^{\frac{m}{2}}\|V_m^{1,l+1}\|_{1,r,\eps_{\beta}^{\hat{N}_{\beta}-l}r'}\le
 c \eps_{\beta}\alpha^{-2},\label{eq_1_assumption}\\
&\sum_{m=2}^{N}2^{m}\alpha^m(c_0M^{\sa
 (l-\hat{N}_{\beta})})^{\frac{m}{2}}\|V_m^{1,l+1}\|_{1,r,\eps_{\beta}^{\hat{N}_{\beta}-l}r'}\le
 c\eps_{\beta}M^{-\sa}.\label{eq_1_weight_assumption}
\end{align}

Let us start the analysis of $l$-th step by studying $V^{1-1-1,l}$. By
 Lemma \ref{lem_simple_tree_expansion} its kernels satisfy
 \eqref{eq_temperature_translation_invariance}. By \eqref{eq_simple_1_1},
 \eqref{eq_scale_covariance_determinant_bound}, 
  \eqref{eq_1_1_bound} for
 $l+1$, \eqref{eq_artificial_scaling_properties} and the conditions $\alpha\ge 2$, $M^{\sa}\ge 2^4$,
\begin{align}
&\|V_0^{1-1-1,l}\|_{1,r,\eps_{\beta}^{\hat{N}_{\beta}-l}r'}\label{eq_1_1_1_0}\\
&\le
 \eps_{\beta}\|V_0^{1-1,l+1}\|_{1,r,\eps_{\beta}^{\hat{N}_{\beta}-l-1}r'}
+\eps_{\beta}\sum_{p=2}^N(c_0M^{\sa
 (l+1-\hat{N}_{\beta})})^{\frac{p}{2}}\|V_p^{1-1,l+1}\|_{1,r,\eps_{\beta}^{\hat{N}_{\beta}-l-1}r'}\notag\\
&\le 2\eps_{\beta}\alpha^{-2}L^{-d},\notag\\
&\sum_{m=2}^N\alpha^m(c_0M^{\sa(l-\hat{N}_{\beta})})^{\frac{m}{2}}\|V_m^{1-1-1,l}\|_{1,r,\eps_{\beta}^{\hat{N}_{\beta}-l}r'}\label{eq_1_1_1}\\
&\le
 \sum_{m=2}^N\alpha^m(c_0M^{\sa(l-\hat{N}_{\beta})})^{\frac{m}{2}}\sum_{p=m}^N2^p(c_0M^{\sa(l+1-\hat{N}_{\beta})})^{\frac{p-m}{2}}\|V_p^{1-1,l+1}\|_{1,r,\eps_{\beta}^{\hat{N}_{\beta}-l}r'}\notag\\
&\le \eps_{\beta}L^{-d}\sum_{m=2}^N2^mM^{-\frac{\sa}{2}m}\le
 c\eps_{\beta}M^{-\sa}L^{-d}.\notag
\end{align}

Next let us consider $V^{1-1-2,l}$. By Lemma
 \ref{lem_double_tree_expansion} the kernels of $V^{1-1-2,l}$ satisfy
 \eqref{eq_temperature_translation_invariance}. By \eqref{eq_double_1_1}
 and \eqref{eq_scale_covariance_determinant_bound}, 
\begin{align*}
&\|V^{1-1-2,l}_m\|_{1,r,\eps_{\beta}^{\hat{N}_{\beta}-l}r'}\\
&\le (c_0 M^{\sa
 (l+1-\hat{N}_{\beta})})^{-1-\frac{m}{2}}\\
&\quad\cdot \sum_{p_1,p_2=2}^{N}1_{p_1,p_2\in 2\N} 2^{2p_1+2p_2}(c_0 M^{\sa(l+1-\hat{N}_{\beta})})^{\frac{p_1+p_2}{2}}[V_{p_1,p_2}^{1-2,l+1},\tilde{\cC}_{l+1}]_{1,r,\eps_{\beta}^{\hat{N}_{\beta}-l}r'}1_{p_1+p_2-2\ge m}.
\end{align*}
Then by using  \eqref{eq_scale_covariance_coupled_decay_bound},
\eqref{eq_1_2_multiplied_assumption},
 \eqref{eq_1_2_multiplied_weight_assumption}
 and the condition
 $\alpha M^{-\frac{\sa}{2}}\ge 2$ we observe that
\begin{align}
&\|V_0^{1-1-2,l}\|_{1,r,\eps_{\beta}^{\hat{N}_{\beta}-l}r'}\label{eq_1_1_2_0}\\
&\le (c_0M^{\sa
 (l+1-\hat{N}_{\beta})})^{-1}c\eps_{\beta}\alpha^{-4}c_0
M^{(\sa-1-\sum^{d}_{j=1}\frac{1}{\sn_j})(l+1-\hat{N}_{\beta})} L^{-d}\notag\\
&\le c\eps_{\beta}\alpha^{-4} M^{-(\sum^d_{j=1}\frac{1}{\sn_j}+1)(l+1-\hat{N}_{\beta})}L^{-d},\notag\\
&\sum_{m=2}^N\alpha^m(c_0 M^{\sa
 (l-\hat{N}_{\beta})})^{\frac{m}{2}}\|V_m^{1-1-2,l}\|_{1,r,\eps_{\beta}^{\hat{N}_{\beta}-l}r'}\label{eq_1_1_2}\\
&\le c (c_0 M^{\sa
 (l+1-\hat{N}_{\beta})})^{-1}M^{\sa}\alpha^{-2}\notag\\
&\quad\cdot \sum_{p_1,p_2=2}^{N}1_{p_1,p_2\in 2\N} 2^{2p_1+2p_2}\alpha^{p_1+p_2}(c_0M^{\sa (l-\hat{N}_{\beta})})^{\frac{p_1+p_2}{2}}[V_{p_1,p_2}^{1-2,l+1},\tilde{\cC}_{l+1}]_{1,r,\eps_{\beta}^{\hat{N}_{\beta}-l}r'}\notag\\
&\le c
 \eps_{\beta}M^{-\sa-(\sum_{j=1}^d\frac{1}{\sn_j}+1)(l+1-\hat{N}_{\beta})}\alpha^{-2}L^{-d}.\notag
\end{align}

Next let us consider $V^{1-1-3,l,(n)}$ $(n\in \N_{\ge 2})$. By Lemma
 \ref{lem_simple_tree_expansion} the anti-symmetric kernels of
 $V^{1-1-3,l,(n)}$ satisfy
 \eqref{eq_temperature_translation_invariance}. Thus if
 $\sum_{n=2}^{\infty}V^{1-1-3,l,(n)}$ converges, those of $V^{1-1-3,l}$
 satisfy \eqref{eq_temperature_translation_invariance} too. Let us
 establish bound properties. Observe that
\begin{align*}
&V^{1-1-3,l,(n)}(\psi)\\
&=\sum_{q=1}^{n-1}\left(\begin{array}{c} n-1  \\ q \end{array}\right)
\frac{1}{(n-1)!}Tree(\{1,2,\cdots,n\},\cC_{l+1})V^{1,l+1}(\psi^1+\psi)\\
&\quad\cdot \prod_{j=2}^{q+1}V^{0-1,l+1}(\psi^j+\psi)\prod_{k=q+2}^nV^{0-2,l+1}(\psi^k+\psi)\Bigg|_{\psi^j=0\atop
 (\forall j\in \{1,2,\cdots,n\})}.
\end{align*}
By \eqref{eq_simple_1},
 \eqref{eq_scale_covariance_determinant_bound} and
 \eqref{eq_scale_covariance_decay_bound}, 
\begin{align}
&\|V_m^{1-1-3,l,(n)}\|_{1,r,\eps_{\beta}^{\hat{N}_{\beta}-l}r'}\label{eq_1_1_3_expansion}\\
&\le
 (c_0M^{\sa
 (l+1-\hat{N}_{\beta})})^{-n+1-\frac{m}{2}}2^{-2m+n-1}(c_0M^{(\sa-1-\sum_{j=1}^d\frac{1}{\sn_j})(l+1-\hat{N}_{\beta})})^{n-1}\notag\\
&\quad\cdot \sum_{p_1=2}^N2^{3p_1}(c_0M^{\sa
 (l+1-\hat{N}_{\beta})})^{\frac{p_1}{2}}\|V_{p_1}^{1,l+1}\|_{1,r,\eps_{\beta}^{\hat{N}_{\beta}-l}r'}\notag\\
&\quad\cdot\sum_{p_2=2}^N2^{3p_2}(c_0M^{\sa
 (l+1-\hat{N}_{\beta})})^{\frac{p_2}{2}}\|V_{p_2}^{0-1,l+1}\|_{0,\infty,r}\notag\\
&\quad\cdot \prod_{j=3}^n\left(\sum_{p_j=2}^N 2^{3p_j}(c_0M^{\sa
 (l+1-\hat{N}_{\beta})})^{\frac{p_j}{2}}(\|V_{p_j}^{0-1,l+1}\|_{0,\infty,r}+
\|V_{p_j}^{0-2,l+1}\|_{0,\infty,r})\right)\notag\\ 
&\quad\cdot 1_{\sum_{j=1}^{n}p_j-2(n-1)\ge m}.\notag
\end{align}
Substitution of 
 \eqref{eq_0_1_bound_assumption},
 \eqref{eq_0_2_kernel_bound_assumption},
\eqref{eq_1_assumption}
 and the condition $L^d\ge
 M^{(\sum_{j=1}^d\frac{1}{\sn_j}+1)(\hat{N}_{\beta}-N_{\beta})}$ yields
 that
\begin{align*}
&\|V_0^{1-1-3,l,(n)}\|_{1,r,\eps_{\beta}^{\hat{N}_{\beta}-l}r'}\\
&\le
 M^{-(\sum_{j=1}^d\frac{1}{\sn_j}+1)(l+1-\hat{N}_{\beta})(n-1)}c\eps_{\beta}\alpha^{-4}L^{-d}\left(c\alpha^{-2}L^{-d}
+c\alpha^{-4}M^{(\sum_{j=1}^d\frac{1}{\sn_j}+1)(l+1-\hat{N}_{\beta})}\right)^{n-2}\notag\\
&\le
 \eps_{\beta}M^{-(\sum_{j=1}^d\frac{1}{\sn_j}+1)(l+1-\hat{N}_{\beta})}L^{-d}(c\alpha^{-2})^n.
\end{align*}
Thus on the assumption $\alpha\ge c$,
\begin{align}
\sum_{n=2}^{\infty}\|V_0^{1-1-3,l,(n)}\|_{1,r,\eps_{\beta}^{\hat{N}_{\beta}-l}r'}\le
c
 \alpha^{-4}\eps_{\beta}M^{-(\sum_{j=1}^d\frac{1}{\sn_j}+1)(l+1-\hat{N}_{\beta})}L^{-d}.\label{eq_1_1_3_0}
\end{align}
Also by using  \eqref{eq_0_1_bound_weight_assumption},
 \eqref{eq_0_2_kernel_bound_weight_assumption},
\eqref{eq_1_weight_assumption}
and the assumptions $\alpha M^{-\frac{\sa}{2}}\ge 2^3$, $L^d\ge
 M^{(\sum_{j=1}^d\frac{1}{\sn_j}+1)(\hat{N}_{\beta}-N_{\beta})}$ we
 obtain from \eqref{eq_1_1_3_expansion} that 
\begin{align*}
&\sum_{m=2}^N\alpha^m(c_0 M^{\sa(l-\hat{N}_{\beta})})^{\frac{m}{2}}\|V_m^{1-1-3,l,(n)}\|_{1,r,\eps_{\beta}^{\hat{N}_{\beta}-l}r'}\\
&\le c^n \alpha^{-2(n-1)}M^{\sa
 (n-1)}M^{-(\sum_{j=1}^d\frac{1}{\sn_j}+1)(l+1-\hat{N}_{\beta})(n-1)}\eps_{\beta}M^{-2\sa}L^{-d}\\&\quad\cdot 
\left(c M^{-\sa}L^{-d}
+c
 M^{-2\sa+(\sum_{j=1}^d\frac{1}{\sn_j}+1)(l+1-\hat{N}_{\beta})}\right)^{n-2}\\
&\le
 M^{-\sa-(\sum_{j=1}^d\frac{1}{\sn_j}+1)(l+1-\hat{N}_{\beta})}\eps_{\beta}L^{-d}(c\alpha^{-2})^{n-1}.
\end{align*}
Therefore by assuming that $\alpha\ge c$,
\begin{align}
&\sum_{m=2}^N\alpha^m(c_0M^{\sa(l-\hat{N}_{\beta})})^{\frac{m}{2}}\sum_{n=2}^{\infty}\|V_m^{1-1-3,l,(n)}\|_{1,r,\eps_{\beta}^{\hat{N}_{\beta}-l}r'}\label{eq_1_1_3}\\
&\le
 c\alpha^{-2}M^{-\sa-(\sum_{j=1}^d\frac{1}{\sn_j}+1)(l+1-\hat{N}_{\beta})}\eps_{\beta}L^{-d}.\notag
\end{align}

Next let us deal with $V^{1-1-4,l,(n)}$ $(n\in \N_{\ge 2})$. By Lemma
 \ref{lem_double_tree_expansion} its anti-symmetric kernels satisfy
 \eqref{eq_temperature_translation_invariance}. Thus it suffices to
 prove the convergence of $\sum_{n=2}^{\infty}V^{1-1-4,l,(n)}$ in order
 to prove that the kernels of $V^{1-1-4,l}$ satisfy
 \eqref{eq_temperature_translation_invariance} as well. By \eqref{eq_double_1},
 \eqref{eq_scale_covariance_determinant_bound} and
 \eqref{eq_scale_covariance_decay_bound},
for $m\in \{0,2,\cdots,N\}$,
\begin{align}
&\|V_m^{1-1-4,l,(n)}\|_{1,r,\eps_{\beta}^{\hat{N}_{\beta}-l}r'}\label{eq_1_1_4_expansion}\\&\le
 (c_0 M^{\sa (l+1-\hat{N}_{\beta})})^{-n-\frac{m}{2}}
2^{-2m} (c_0
 M^{(\sa-1-\sum_{j=1}^d\frac{1}{\sn_j})(l+1-\hat{N}_{\beta})})^{n-1}\notag\\
&\quad\cdot \sum_{p_1,p_2=2}^{N}1_{p_1,p_2\in
 2\N}2^{3p_1+3p_2}(c_0M^{\sa
 (l+1-\hat{N}_{\beta})})^{\frac{p_1+p_2}{2}}[V_{p_1,p_2}^{0-2,l+1},\tilde{\cC}_{l+1}]_{1,\infty,r}\notag\\
&\quad\cdot \prod_{j=3}^n\left(\sum_{p_j=4}^N2^{3p_j}(c_0M^{\sa
 (l+1-\hat{N}_{\beta})})^{\frac{p_j}{2}}\|V_{p_j}^{0-2,l+1}\|_{1,\infty,r}
\right)\notag\\
&\quad\cdot
 \sum_{p_{n+1}=2}^N2^{3p_{n+1}}(c_0M^{\sa(l+1-\hat{N}_{\beta})})^{\frac{p_{n+1}}{2}}\|V_{p_{n+1}}^{1,l+1}\|_{1,r,\eps_{\beta}^{\hat{N}_{\beta}-l}r'}1_{\sum_{j=1}^{n+1}p_j-2n\ge
 m}.\notag
\end{align}
Then by substituting \eqref{eq_scale_covariance_coupled_decay_bound}, \eqref{eq_0_2_kernel_bound_assumption},
 \eqref{eq_0_2_multiplied_bound_assumption}, \eqref{eq_1_assumption} 
  we observe that
\begin{align*}
&\|V_0^{1-1-4,l,(n)}\|_{1,r,\eps_{\beta}^{\hat{N}_{\beta}-l}r'}\\
&\le M^{-\sa
 (l+1-\hat{N}_{\beta})-(\sum_{j=1}^d\frac{1}{\sn_j}+1)(l+1-\hat{N}_{\beta})(n-1)}
 c\alpha^{-4}M^{(\sum_{j=1}^d\frac{1}{\sn_j}+1)(l+1-\hat{N}_{\beta})}L^{-d}\\
&\quad\cdot 
 M^{(\sa-1-\sum_{j=1}^d\frac{1}{\sn_j})(l+1-\hat{N}_{\beta})}(c\alpha^{-4}M^{(\sum_{j=1}^d\frac{1}{\sn_j}+1)(l+1-\hat{N}_{\beta})})^{n-2}\eps_{\beta}\alpha^{-2}\\
&\le
 M^{-(\sum_{j=1}^d\frac{1}{\sn_j}+1)(l+1-\hat{N}_{\beta})}\eps_{\beta}L^{-d}\alpha^2(c\alpha^{-4})^n.
\end{align*}
Thus on the assumption $\alpha\ge c$,
\begin{align}
\sum_{n=2}^{\infty}\|V_0^{1-1-4,l,(n)}\|_{1,r,\eps_{\beta}^{\hat{N}_{\beta}-l}r'}\le
 c
 \alpha^{-6}M^{-(\sum_{j=1}^d\frac{1}{\sn_j}+1)(l+1-\hat{N}_{\beta})}\eps_{\beta}L^{-d}.\label{eq_1_1_4_0}
\end{align}
Also by using \eqref{eq_scale_covariance_coupled_decay_bound},
\eqref{eq_0_2_kernel_bound_weight_assumption},
 \eqref{eq_0_2_multiplied_bound_weight_assumption},
 \eqref{eq_1_weight_assumption} and the assumption $\alpha
 M^{-\frac{\sa}{2}}\ge 2^3$ we can derive from
 \eqref{eq_1_1_4_expansion} that
\begin{align*}
&\sum_{m=2}^{N}\alpha^m(
c_0M^{\sa(l-\hat{N}_{\beta})})^{\frac{m}{2}}\|V_m^{1-1-4,l,(n)}\|_{1,r,\eps_{\beta}^{\hat{N}_{\beta}-l}r'}\\
&\le c \alpha^{-2n}M^{\sa n- \sa
 (l+1-\hat{N}_{\beta})n+(\sa-1-\sum_{j=1}^d\frac{1}{\sn_j})(l+1-\hat{N}_{\beta})(n-1)}M^{-2\sa+(\sum_{j=1}^d\frac{1}{\sn_j}+1)(l+1-\hat{N}_{\beta})}L^{-d}\\
&\quad\cdot M^{(\sa-1-\sum_{j=1}^d\frac{1}{\sn_j})(l+1-\hat{N}_{\beta})}(cM^{-2\sa+(\sum_{j=1}^d\frac{1}{\sn_j}+1)(l+1-\hat{N}_{\beta})})^{n-2}\eps_{\beta}M^{-\sa}\\
&\le
 M^{\sa-(\sum_{j=1}^d\frac{1}{\sn_j}+1)(l+1-\hat{N}_{\beta})}\eps_{\beta}L^{-d}(c\alpha^{-2}M^{-\sa})^n,
\end{align*}
or on the assumption $\alpha\ge c$,
\begin{align}
&\sum_{m=2}^{N}\alpha^m(c_0M^{\sa(l-\hat{N}_{\beta})})^{\frac{m}{2}}
\sum_{n=2}^{\infty}\|V_m^{1-1-4,l,(n)}\|_{1,r,\eps_{\beta}^{\hat{N}_{\beta}-l}r'}\label{eq_1_1_4}\\
&\le c \alpha^{-4}M^{-\sa
 -(\sum_{j=1}^d\frac{1}{\sn_j}+1)(l+1-\hat{N}_{\beta})}\eps_{\beta}L^{-d}.\notag
\end{align}

Let us sum up \eqref{eq_1_1_1_0}, \eqref{eq_1_1_1}, \eqref{eq_1_1_2_0},
 \eqref{eq_1_1_2}, \eqref{eq_1_1_3_0}, \eqref{eq_1_1_3},
 \eqref{eq_1_1_4_0}, \eqref{eq_1_1_4}.
\begin{align}
&\|V_0^{1-1-1,l}\|_{1,r,\eps_{\beta}^{\hat{N}_{\beta}-l}r'}+
\|V_0^{1-1-2,l}\|_{1,r,\eps_{\beta}^{\hat{N}_{\beta}-l}r'}
+\sum_{n=2}^{\infty}\|V_0^{1-1-3,l,(n)}\|_{1,r,\eps_{\beta}^{\hat{N}_{\beta}-l}r'}\label{eq_1_1_0_sum}\\
&\quad+\sum_{n=2}^{\infty}\|V_0^{1-1-4,l,(n)}\|_{1,r,\eps_{\beta}^{\hat{N}_{\beta}-l}r'}\notag\\
&\le c \left(
1+\alpha^{-2}M^{-(\sum_{j=1}^d\frac{1}{\sn_j}+1)(l+1-\hat{N}_{\beta})} 
\right)\eps_{\beta}\alpha^{-2}L^{-d},\notag\\
&\sum_{m=2}^N \alpha^m(c_0 M^{\sa(l-\hat{N}_{\beta})})^{\frac{m}{2}}\Bigg(
\|V_m^{1-1-1,l}\|_{1,r,\eps_{\beta}^{\hat{N}_{\beta}-l}r'}+
\|V_m^{1-1-2,l}\|_{1,r,\eps_{\beta}^{\hat{N}_{\beta}-l}r'}\label{eq_1_1_sum}\\
&\quad+ \sum_{n=2}^{\infty}\|V_m^{1-1-3,l,(n)}\|_{1,r,\eps_{\beta}^{\hat{N}_{\beta}-l}r'}+
\sum_{n=2}^{\infty}\|V_m^{1-1-4,l,(n)}\|_{1,r,\eps_{\beta}^{\hat{N}_{\beta}-l}r'}
\Bigg)\notag\\
&\le
 c (M^{-\sa}+\alpha^{-2}M^{-\sa-(\sum_{j=1}^d\frac{1}{\sn_j}+1)(l+1-\hat{N}_{\beta})})\eps_{\beta}L^{-d}.\notag
\end{align}
Recalling \eqref{eq_scaling_factor}, one can see that under the
 assumptions $\alpha\ge c$, $M\ge c$ the right-hand side of
 \eqref{eq_1_1_0_sum}, \eqref{eq_1_1_sum} is less than
 $\alpha^{-2}L^{-d}$, $L^{-d}$ respectively. By setting
 $V^{1-1,l}:=\sum_{j=1}^4V^{1-1-j,l}$ we conclude that 
$$
V^{1-1,l}\in \cQ'(r,\eps_{\beta}^{\hat{N}_{\beta}-l}r',l).
$$
Recall that we have also assumed that $\alpha M^{-\frac{\sa}{2}}\ge
 2^3$, $L^{d}\ge
 M^{(\sum_{j=1}^d\frac{1}{\sn_j}+1)(\hat{N}_{\beta}-N_{\beta})}$ to
 reach this conclusion.

Let us study $V^{1-2-1,l}$. By Lemma \ref{lem_divided_tree_expansion}
 there exist $V_{a,b}^{1-2-1,l}\in
 \Map(\overline{D(r)}\times\C^2,\Map(I^a\times I^b,\C))$
 $(a,b=2,4,\cdots,N)$
 such that for any $(u,\bla)\in
 \overline{D(r)}\times \C^2$, $V_{a,b}^{1-2-1,l}(u,\bla):I^a\times
 I^b\to \C$ is bi-anti-symmetric and satisfies
 \eqref{eq_temperature_translation_invariance},
 \eqref{eq_temperature_integral_vanish} and 
\begin{align*}
V^{1-2-1,l}(u,\bla)(\psi)=\sum_{a,b=2}^{N}1_{a,b\in 2\N}\frah^{a+b}\sum_{\bX\in
 I^a\atop \bY\in I^b}V_{a,b}^{1-2-1,l}(u,\bla)(\bX,\bY)\psi_{\bX}\psi_{\bY}.
\end{align*}
Moreover it follows from the definition and the induction hypothesis
 that 
\begin{align*}
V^{1-2-1,l}\in C\left(\overline{D(r)}\times
 \C^2,\bigwedge_{even}\cV\right)\cap C^{\o}\left(D(r)\times
 \C^2,\bigwedge_{even}\cV\right)
\end{align*}
and $\bla\mapsto V^{1-2-1,l}(u,\bla)(\psi):\C^2\mapsto
 \bigwedge_{even}\cV$ is linear for any $u\in \overline{D(r)}$. Let us
 establish bound properties. By \eqref{eq_divided_1_1} and
 \eqref{eq_scale_covariance_determinant_bound}, for any $a,b\in
 \{2,4,\cdots,N\}$,
\begin{align*}
&\|V_{a,b}^{1-2-1,l}\|_{1,r,\eps_{\beta}^{\hat{N}_{\beta}-l}r'}\\
&\le \sum_{p=a}^{N}\sum_{q=b}^{N}1_{p,q\in
 2\N}\left(\begin{array}{c} p \\
	   a\end{array}\right)\left(\begin{array}{c} q \\
				    b\end{array}\right)(c_0M^{\sa(l+1-\hat{N}_{\beta})})^{\frac{1}{2}(p+q-a-b)}\|V_{p,q}^{1-2,l+1}\|_{1,r,\eps_{\beta}^{\hat{N}_{\beta}-l}r'}.
\end{align*}
Then by using \eqref{eq_1_2_bound} for $l+1$,
 \eqref{eq_artificial_scaling_properties} and the conditions $\alpha\ge
 2$, $M^{\sa}\ge 2^4$ we have that
\begin{align}
&\sum_{a,b=2}^{N}1_{a,b\in
 2\N}\alpha^{a+b}(c_0 M^{\sa(l-\hat{N}_{\beta})})^{\frac{a+b}{2}}\|V_{a,b}^{1-2-1,l}\|_{1,r,\eps_{\beta}^{\hat{N}_{\beta}-l}r'}\label{eq_1_2_1}\\
&\le \sum_{a,b=2}^{N}1_{a,b\in
 2\N}M^{-\frac{a+b}{2}\sa}\alpha^{a+b}\eps_{\beta}
\sum_{p=a}^{N}\sum_{q=b}^{N}
1_{p,q\in
 2\N}2^{p+q}(c_0M^{\sa(l+1-\hat{N}_{\beta})})^{\frac{p+q}{2}}\|V_{p,q}^{1-2,l+1}\|_{1,r,\eps_{\beta}^{\hat{N}_{\beta}-l-1}r'}\notag\\
&\le \sum_{a,b=2}^{N}1_{a,b\in
 2\N}2^{a+b}
M^{-\frac{a+b}{2}\sa}\eps_{\beta}
\le c M^{-2\sa}\eps_{\beta}.\notag
\end{align}
Based on \eqref{eq_divided_multiplied_1_1},
 \eqref{eq_scale_covariance_determinant_bound},
 \eqref{eq_1_2_multiplied_bound} for $l+1$,
 \eqref{eq_artificial_scaling_properties} and the conditions $\alpha\ge
 2$, $M^{\sa}\ge 2^4$, an argument parallel to the above shows that 
\begin{align}
&\sum_{a,b=2}^{N}1_{a,b\in
 2\N}\alpha^{a+b}(c_0M^{\sa(l-\hat{N}_{\beta})})^{\frac{a+b}{2}}[V_{a,b}^{1-2-1,l},g]_{1,r,\eps_{\beta}^{\hat{N}_{\beta}-l}r'}\le
 c M^{-2\sa}\eps_{\beta}L^{-d}\|g\|,
\label{eq_1_2_1_multiplied}
 \end{align}
for any anti-symmetric function $g:I^2\to \C$.

Next let us consider $V^{1-2-2,l,(n)}$ $(n\in \N_{\ge 2})$. Lemma
 \ref{lem_divided_tree_expansion} ensures that $V^{1-2-2,l,(n)}$ can be
 written with bi-anti-symmetric kernels $V_{a,b}^{1-2-2,l,(n)}\in
 \Map(\overline{D(r)}\times \C^2,$\\
 $\Map(I^a\times I^b,\C))$
 $(a,b=2,4,\cdots,N)$ and for any $(u,\bla)\in
 \overline{D(r)}\times \C^2$, $a,b\in \{2,4,\cdots,N\}$ the
 kernel $V_{a,b}^{1-2-2,l,(n)}(u,\bla)$ satisfies
 \eqref{eq_temperature_translation_invariance} and
 \eqref{eq_temperature_integral_vanish}. Moreover we can deduce from
 \eqref{eq_divided_kernel} and the induction hypothesis that
\begin{align*} 
V_{a,b}^{1-2-2,l,(n)}\in
 C(\overline{D(r)}\times\C^2,\Map(I^a\times I^b,\C))\cap 
C^{\o}(D(r)\times\C^2,\Map(I^a\times I^b,\C))
\end{align*}
and $\bla\mapsto V^{1-2-2,l,(n)}(u,\bla)(\psi):\C^2\to
 \bigwedge_{even}\cV$ is linear for any $u\in \overline{D(r)}$. These
 properties must hold true for $V^{1-2-2,l}$, once
 $\sum_{n=2}^{\infty}V^{1-2-2,l,(n)}$ is proved to be uniformly convergent.
By using \eqref{eq_divided_1},
 \eqref{eq_scale_covariance_determinant_bound},
 \eqref{eq_scale_covariance_decay_bound} we obtain that
\begin{align*}
&\|V_{a,b}^{1-2-2,l,(n)}\|_{1,r,\eps_{\beta}^{\hat{N}_{\beta}-l}r'}\\
&\le
 \frac{1}{(n-1)!}\sum_{m=0}^{n-1}\sum_{(\{s_j\}_{j=1}^{m+1},\{t_k\}_{k=1}^{n-m})\in S(n,m)}\\
&\quad\cdot (1_{m\neq 0}(m-1)!+1_{m=0})(1_{m\neq n-1}(n-m-2)!+1_{m=n-1})\\
&\quad\cdot 2^{-2a-2b}(c_0M^{\sa(l+1-\hat{N}_{\beta})})^{-n+1-\frac{1}{2}(a+b)}(c_0M^{(\sa-1-\sum_{j=1}^d\frac{1}{\sn_j})(l+1-\hat{N}_{\beta})})^{n-1}\\
&\quad\cdot \sum_{p_1,q_1=2}^{N}1_{p_1,q_1\in
 2\N}2^{3p_1+3q_1}
(c_0M^{\sa(l+1-\hat{N}_{\beta})})^{\frac{p_1+q_1}{2}}\|V_{p_1,q_1}^{0-2,l+1}\|_{1,\infty,r}\\
&\quad\cdot \prod_{j=2}^{m+1}\left(\sum_{p_j=2}^N2^{3p_j}(c_0M^{\sa(l+1-\hat{N}_{\beta})})^{\frac{p_j}{2}}
\left(1_{s_j\neq n}\|V_{p_j}^{0-2,l+1}\|_{1,\infty,r}+
1_{s_j=
 n}\|V_{p_j}^{1,l+1}\|_{1,r,\eps_{\beta}^{\hat{N}_{\beta}-l}r'}\right)\right)\\
&\quad \cdot 
\prod_{k=2}^{n-m}\left(\sum_{q_k=2}^N2^{3q_k}(c_0M^{\sa(l+1-\hat{N}_{\beta})})^{\frac{q_k}{2}}
\left(1_{t_k\neq n}\|V_{q_k}^{0-2,l+1}\|_{1,\infty,r}+
1_{t_k=
 n}\|V_{q_k}^{1,l+1}\|_{1,r,\eps_{\beta}^{\hat{N}_{\beta}-l}r'}\right)
\right)\\
&\quad\cdot 
1_{\sum_{j=1}^{m+1}p_j-2m\ge a}
1_{\sum_{k=1}^{n-m}q_k-2(n-m-1)\ge b}.
\end{align*}
Then by \eqref{eq_0_2_bound_weight_assumption},
 \eqref{eq_0_2_kernel_bound_weight_assumption}, \eqref{eq_divided_tree},
 \eqref{eq_1_weight_assumption} and the
 assumption $\alpha M^{-\frac{\sa}{2}}\ge 2^3$,
\begin{align}
&\sum_{a,b=2}^{N}1_{a,b\in
 2\N}\alpha^{a+b}(c_0 M^{\sa(l-\hat{N}_{\beta})})^{\frac{a+b}{2}}\|V_{a,b}^{1-2-2,l,(n)}\|_{1,r,\eps_{\beta}^{\hat{N}_{\beta}-l}r'}\label{eq_1_2_2_pre}\\
&\le c^n
 \alpha^{-2(n-1)}M^{\sa(n-1)}M^{-\sa(l+1-\hat{N}_{\beta})(n-1)+(\sa-1-\sum_{j=1}^d\frac{1}{\sn_j})(l+1-\hat{N}_{\beta})(n-1)}\notag\\
&\quad\cdot
 \frac{1}{(n-1)!}\sum_{m=0}^{n-1}\sum_{(\{s_j\}_{j=1}^m,\{t_k\}_{k=1}^{n-m})\in
 S(n,m)}\notag\\
&\quad\cdot (1_{m\neq 0}(m-1)!+1_{m=0})
(1_{m\neq n-1}(n-m-2)!+1_{m=n-1})\notag\\
&\quad\cdot \sum_{p_1,q_1=2}^{N}1_{p_1,q_1\in
 2\N}2^{p_1+q_1}\alpha^{p_1+q_1}
(c_0M^{\sa(l-\hat{N}_{\beta})})^{\frac{p_1+q_1}{2}}\|V_{p_1,q_1}^{0-2,l+1}\|_{1,\infty,r}\notag\\
&\quad\cdot \prod_{j=2}^{m+1}\left(\sum_{p_j=2}^N2^{p_j}\alpha^{p_j}(c_0M^{\sa(l-\hat{N}_{\beta})})^{\frac{p_j}{2}}
\left(1_{s_j\neq n}\|V_{p_j}^{0-2,l+1}\|_{1,\infty,r}+
1_{s_j=
 n}\|V_{p_j}^{1,l+1}\|_{1,r,\eps_{\beta}^{\hat{N}_{\beta}-l}r'}\right)\right)\notag\\
&\quad \cdot 
\prod_{k=2}^{n-m}\left(\sum_{q_k=2}^N2^{q_k}\alpha^{q_k}
(c_0M^{\sa(l-\hat{N}_{\beta})})^{\frac{q_k}{2}}
\left(1_{t_k\neq n}\|V_{q_k}^{0-2,l+1}\|_{1,\infty,r}+
1_{t_k=
 n}\|V_{q_k}^{1,l+1}\|_{1,r,\eps_{\beta}^{\hat{N}_{\beta}-l}r'}\right)
\right)\notag\\
&\le
 \alpha^{-2(n-1)}M^{\sa(n-1)-(\sum_{j=1}^d\frac{1}{\sn_j}+1)(l+1-\hat{N}_{\beta})(n-1)}(c
 M^{-2\sa+(\sum_{j=1}^d\frac{1}{\sn_j}+1)(l+1-\hat{N}_{\beta})})^{n-1}\eps_{\beta}M^{-\sa}\notag\\
&\le (c\alpha^{-2}M^{-\sa})^{n-1}\eps_{\beta}M^{-\sa}.\notag
\end{align}
Thus on the assumption $\alpha\ge c$,
\begin{align}
\sum_{a,b=2}^{N}1_{a,b\in
 2\N}\alpha^{a+b}(c_0M^{\sa(l-\hat{N}_{\beta})})^{\frac{a+b}{2}}\sum_{n=2}^{\infty}\|V_{a,b}^{1-2-2,l,(n)}\|_{1,r,\eps_{\beta}^{\hat{N}_{\beta}-l}r'}\le
 c\alpha^{-2}M^{-2\sa}\eps_{\beta}.\label{eq_1_2_2}
\end{align}
On the other hand, one can apply \eqref{eq_divided_multiplied_1},
 \eqref{eq_scale_covariance_determinant_bound},
 \eqref{eq_scale_covariance_decay_bound} to derive that for any
 anti-symmetric function $g:I^2\to \C$,
\begin{align*}
&[V_{a,b}^{1-2-2,l,(n)},g]_{1,r,\eps_{\beta}^{\hat{N}_{\beta}-l}r'}\\
&\le
 \frac{1}{(n-1)!}\sum_{m=0}^{n-1}\sum_{(\{s_j\}_{j=1}^{m+1},\{t_k\}_{k=1}^{n-m})\in S(n,m)}\\
&\quad\cdot (1_{m\neq 0}(m-1)!+1_{m=0})(1_{m\neq n-1}(n-m-2)!+1_{m=n-1})\\
&\quad\cdot 2^{-2a-2b}
(c_0M^{\sa(l+1-\hat{N}_{\beta})})^{-n+1-\frac{1}{2}(a+b)}(c_0M^{(\sa-1-\sum_{j=1}^d\frac{1}{\sn_j})(l+1-\hat{N}_{\beta})})^{n-2}\\
&\quad\cdot \sum_{p_1,q_1=2}^{N}1_{p_1,q_1\in
 2\N}2^{3p_1+3q_1}
(c_0M^{\sa(l+1-\hat{N}_{\beta})})^{\frac{p_1+q_1}{2}}\\
&\quad\cdot 
\left([V_{p_1,q_1}^{0-2,l+1},g]_{1,\infty,r}c_0
 M^{(\sa-1-\sum_{j=1}^d\frac{1}{\sn_j})(l+1-\hat{N}_{\beta})}+
[V_{p_1,q_1}^{0-2,l+1},\tilde{\cC}_{l+1}]_{1,\infty,r}\|g\|_{1,\infty}\right)\\
&\quad\cdot \prod_{j=2}^{m+1}\Bigg(\sum_{p_j=2}^N2^{3p_j}(c_0M^{\sa(l+1-\hat{N}_{\beta})})^{\frac{p_j}{2}}
\left(1_{s_j\neq  n}\|V_{p_j}^{0-2,l+1}\|_{1,\infty,r}+1_{s_j=
 n}\|V_{p_j}^{1,l+1}\|_{1,r,\eps_{\beta}^{\hat{N}_{\beta}-l}r'}
 \right)\Bigg)\\
&\quad \cdot 
\prod_{k=2}^{n-m}\Bigg(\sum_{q_k=2}^N2^{3q_k}(c_0M^{\sa(l+1-\hat{N}_{\beta})})^{\frac{q_k}{2}}
\left(1_{t_k\neq n}\|V_{q_k}^{0-2,l+1}\|_{1,\infty,r}+1_{t_k=
 n}\|V_{q_k}^{1,l+1}\|_{1,r,\eps_{\beta}^{\hat{N}_{\beta}-l}r'}\right)\Bigg)\\
&\quad\cdot 
1_{\sum_{j=1}^{m+1}p_j-2m\ge a}
1_{\sum_{k=1}^{n-m}q_k-2(n-m-1)\ge b}.
\end{align*}
Then by substituting 
\eqref{eq_scale_covariance_coupled_decay_bound},
\eqref{eq_0_2_kernel_bound_weight_assumption},
\eqref{eq_0_2_multiplied_bound_weight_assumption}, \eqref{eq_divided_tree},
 \eqref{eq_1_weight_assumption},  using the inequality
 $\|g\|_{1,\infty}\le \|g\|$ and 
the assumption $\alpha M^{-\frac{\sa}{2}}\ge 2^3$ and computing in a
 parallel way to \eqref{eq_1_2_2_pre} one reaches that
\begin{align*}
&\sum_{a,b=2}^{N}1_{a,b\in
 2\N}\alpha^{a+b}(c_0M^{\sa(l-\hat{N}_{\beta})})^{\frac{a+b}{2}}[V_{a,b}^{1-2-2,l,(n)},g]_{1,r,\eps_{\beta}^{\hat{N}_{\beta}-l}r'}\\
&\le
(c\alpha^{-2}M^{-\sa})^{n-1}\eps_{\beta}M^{-\sa}L^{-d}\|g\|, 
\end{align*}
or on the assumption $\alpha\ge c$,
\begin{align}
&\sum_{a,b=2}^{N}1_{a,b\in
 2\N}\alpha^{a+b}(c_0M^{\sa(l-\hat{N}_{\beta})})^{\frac{a+b}{2}}
\sum_{n=2}^{\infty}
[V_{a,b}^{1-2-2,l,(n)},g]_{1,r,\eps_{\beta}^{\hat{N}_{\beta}-l}r'}\label{eq_1_2_2_multiplied}\\
&\le c \alpha^{-2}M^{-2\sa} \eps_{\beta}L^{-d}\|g\|.\notag
\end{align}

Here let us combine \eqref{eq_1_2_1}, \eqref{eq_1_2_1_multiplied} with
 \eqref{eq_1_2_2}, \eqref{eq_1_2_2_multiplied} respectively to derive that 
\begin{align}
&\sum_{a,b=2}^{N}1_{a,b\in
 2\N}\alpha^{a+b}(c_0 M^{\sa(l-\hat{N}_{\beta})})^{\frac{a+b}{2}}\label{eq_1_2}\\&\quad\cdot\left(\|V_{a,b}^{1-2-1,l}\|_{1,r,\eps_{\beta}^{\hat{N}_{\beta}-l}r'}
+\sum_{n=2}^{\infty}\|V_{a,b}^{1-2-2,l,(n)}\|_{1,r,\eps_{\beta}^{\hat{N}_{\beta}-l}r'}
\right)\notag\\
&\le c(M^{-2\sa}+\alpha^{-2}M^{-2\sa})\eps_{\beta},\notag\\
&\sum_{a,b=2}^{N}1_{a,b\in
 2\N}\alpha^{a+b}(c_0M^{\sa(l-\hat{N}_{\beta})})^{\frac{a+b}{2}}\label{eq_1_2_multiplied}\\
&\quad \cdot\left([V_{a,b}^{1-2-1,l},g]_{1,r,\eps_{\beta}^{\hat{N}_{\beta}-l}r'}
+\sum_{n=2}^{\infty}[V_{a,b}^{1-2-2,l,(n)},g]_{1,r,\eps_{\beta}^{\hat{N}_{\beta}-l}r'}\right)\notag\\
&\le c(M^{-2\sa}+\alpha^{-2}M^{-2\sa})\eps_{\beta} L^{-d}\|g\|.\notag
\end{align}
If we assume that $M \ge c$, the right-hand side of \eqref{eq_1_2},
 \eqref{eq_1_2_multiplied} is less than 1, $L^{-d}\|g\|$
 respectively. By setting
 $V^{1-2,l}:=V^{1-2-1,l}+\sum_{n=2}^{\infty}V^{1-2-2,l,(n)}$ we conclude
 that 
$$V^{1-2,l}\in \cR'(r,\eps_{\beta}^{\hat{N}_{\beta}-l}r',l).$$
Remind us that we have also assumed $\alpha\ge c$,
$\alpha M^{-\frac{\sa}{2}}\ge 2^3$
 on the way to this result.

It remains to analyze $V^{2,l,(n)}$ $(n\in \N_{\ge 1})$. It can be seen
 from the definition, the induction hypothesis and the conditions $h\ge
 1$, $\eps_{\beta}\le 1$ that 
\begin{align}
&V^{2,l,(n)}\label{eq_temporal_quadratic_properties}\\
&\in C\left(\overline{D(r)}\times
 \overline{D(h^{l-\hat{N}_{\beta}}\eps_{\beta}^{\hat{N}_{\beta}-l}r')}^2,\bigwedge_{even}\cV\right)\cap 
C^{\o}\left(D(r)\times
 D(h^{l-\hat{N}_{\beta}}\eps_{\beta}^{\hat{N}_{\beta}-l}r')^2,\bigwedge_{even}\cV\right),\notag\\
&V^{2,l,(n)}(u,\b0)(\psi)=\frac{\partial }{\partial
 \la_j}V^{2,l,(n)}(u,\b0)(\psi)=0,\quad (\forall j\in \{1,2\},\ u\in
 D(r)).\notag
\end{align}
For $(u,\bla)\in \overline{D(r)}\times
 \overline{D(h^{l-\hat{N}_{\beta}}\eps_{\beta}^{\hat{N}_{\beta}-l}r')}^2$
 Lemma \ref{lem_simple_tree_expansion} implies that the anti-symmetric
 kernels of $V^{2,l,(n)}(u,\bla)(\psi)$ satisfy
 \eqref{eq_temperature_translation_invariance}. If
 $\sum_{n=1}^{\infty}V^{2,l,(n)}$ is uniformly convergent, then
 $V^{2,l}$ and its kernels automatically satisfy the properties
 \eqref{eq_temporal_quadratic_properties},
 \eqref{eq_temperature_translation_invariance}. By \eqref{eq_simple_1_1} and
 \eqref{eq_scale_covariance_determinant_bound}, for any $m\in
 \{0,2,\cdots,N\}$,
\begin{align*}
\|V_m^{2,l,(1)}\|_{1,r,h^{l-\hat{N}_{\beta}}\eps_{\beta}^{\hat{N}_{\beta}-l}r'}
\le \sum_{p=m}^{N}\left(\begin{array}{c}p\\ m\end{array}\right)
(c_0M^{\sa(l+1-\hat{N}_{\beta})})^{\frac{p-m}{2}}\|V_p^{2,l+1}\|_{1,r,h^{l-\hat{N}_{\beta}}\eps_{\beta}^{\hat{N}_{\beta}-l}r'}.
\end{align*}
By \eqref{eq_2_bound} for $l+1$,
 \eqref{eq_artificial_scaling_properties} and the conditions $h\ge 1$,
 $M^{\sa}\ge 2^4$, $\alpha\ge 2$,
\begin{align}
&\|V_0^{2,l,(1)}\|_{1,r,h^{l-\hat{N}_{\beta}}\eps_{\beta}^{\hat{N}_{\beta}-l}r'}\le \sum_{p=0}^{N}
(c_0M^{\sa(l+1-\hat{N}_{\beta})})^{\frac{p}{2}}\|V_p^{2,l+1}\|_{1,r,h^{l-\hat{N}_{\beta}}\eps_{\beta}^{\hat{N}_{\beta}-l}r'}\le
 c \eps_{\beta}\alpha^{-2},\label{eq_2_0_1}\\
&\sum_{m=2}^{N}\alpha^m(c_0M^{\sa
 (l-\hat{N}_{\beta})})^{\frac{m}{2}}\|V_m^{2,l,(1)}\|_{1,r,h^{l-\hat{N}_{\beta}}\eps_{\beta}^{\hat{N}_{\beta}-l}r'}\label{eq_2_1}\\
&\le \sum_{m=2}^N\alpha^mM^{-\frac{\sa}{2}m}
\sum_{p=m}^N2^p(c_0M^{\sa(l+1-\hat{N}_{\beta})})^{\frac{p}{2}}
\|V_p^{2,l+1}\|_{1,r,h^{l-\hat{N}_{\beta}}\eps_{\beta}^{\hat{N}_{\beta}-l}r'}\notag\\
&\le \eps_{\beta}\sum_{m=2}^N2^mM^{-\frac{\sa}{2}m}\le
 c\eps_{\beta}M^{-\sa}.\notag
\end{align}

Let $n\in \N_{\ge 2}$. Take $b_j\in \{0,1,2\}$ $(j=1,2,\cdots,n)$
 satisfying $\sum_{j=1}^nb_j\ge 2$. 
Set 
\begin{align*}
&V^{2,l,(n),(b_j)_{j=1}^n}(\psi):=\frac{1}{n!}Tree(\{1,2,\cdots,n\},\cC_{l+1})\prod_{j=1}^nV^{b_j,l+1}(\psi^j+\psi)\Bigg|_{\psi^j=0\atop
 (\forall j\in \{1,2,\cdots,n\})}.
\end{align*}
There exists $\s\in \S_n$ such that $b_{\s(1)}\neq 0$. It follows from 
the definition
 of the tree expansion \eqref{eq_tree_expansion} that  
\begin{align*}
&V^{2,l,(n),(b_j)_{j=1}^n}(\psi)=\frac{1}{n!}Tree(\{1,2,\cdots,n\},\cC_{l+1})\prod_{j=1}^nV^{b_{\s(j)},l+1}(\psi^j+\psi)\Bigg|_{\psi^j=0\atop
 (\forall j\in \{1,2,\cdots,n\})}.
\end{align*}
We can apply \eqref{eq_simple_1},
 \eqref{eq_scale_covariance_determinant_bound},
 \eqref{eq_scale_covariance_decay_bound} and
 \eqref{eq_artificial_scaling_properties} to derive that for any $m\in
 \{0,2,\cdots,N\}$
\begin{align}
&\|V_m^{2,l,(n),(b_j)_{j=1}^n}\|_{1,r,h^{l-\hat{N}_{\beta}}
\eps_{\beta}^{\hat{N}_{\beta}-l}r'}\label{eq_2_expansion_pre}\\
&\le 
 (c_0 M^{\sa (l+1-\hat{N}_{\beta})})^{-n+1-\frac{m}{2}}
2^{-2m} (c_0
 M^{(\sa-1-\sum_{j=1}^d\frac{1}{\sn_j})(l+1-\hat{N}_{\beta})})^{n-1}\notag\\
&\quad\cdot \sum_{p_1=2}^{N}2^{3p_1}(c_0M^{\sa
 (l+1-\hat{N}_{\beta})})^{\frac{p_1}{2}}\sum_{a_1=1}^2\|V_{p_1}^{a_1,l+1}\|_{1,r,h^{l-\hat{N}_{\beta}}
\eps_{\beta}^{\hat{N}_{\beta}-l}r'}\notag\\
&\quad\cdot \prod_{j=2}^n\left(\sum_{p_j=2}^N2^{3p_j}(c_0M^{\sa
 (l+1-\hat{N}_{\beta})})^{\frac{p_j}{2}}
\left(\|V_{p_j}^{0,l+1}\|_{1,\infty,r}+\sum_{a_j=1}^2
\|V_{p_j}^{a_j,l+1}\|_{1,r,h^{l+1-\hat{N}_{\beta}}
\eps_{\beta}^{\hat{N}_{\beta}-l}r'}
\right)\right)\notag\\
&\quad\cdot 1_{\sum_{j=1}^np_j-2(n-1)\ge m}.\notag
\end{align}
Here we used the condition $h\ge 2$ to apply
 \eqref{eq_artificial_scaling_properties} with $\eps=1/h$.
Since $V^{2,l,(n)}$ is the sum of $V^{2,l,(n),(b_j)_{j=1}^n}$ over
 possible $(b_j)_{j=1}^n$,
\begin{align}
\|V_m^{2,l,(n)}\|_{1,r,h^{l-\hat{N}_{\beta}}
\eps_{\beta}^{\hat{N}_{\beta}-l}r'}
\le 3^n\cdot(\text{R. H. S of }\eqref{eq_2_expansion_pre}).
\label{eq_2_expansion}
\end{align}
 We need to
 use the
 following inequalities which are derived from
 \eqref{eq_0_1_bound_assumption},
 \eqref{eq_0_1_bound_weight_assumption},
 \eqref{eq_0_2_kernel_bound_assumption},
 \eqref{eq_0_2_kernel_bound_weight_assumption} and the assumption 
$L^{d}\ge
 M^{(\sum_{j=1}^d\frac{1}{\sn_j}+1)(\hat{N}_{\beta}-N_{\beta})}$.
\begin{align}
&\sum_{m=2}^N2^{3m}(c_0M^{\sa(l+1-\hat{N}_{\beta})})^{\frac{m}{2}}\|V_m^{0,l+1}\|_{1,\infty,r}
\le c\alpha^{-2}M^{(\sum_{j=1}^d\frac{1}{\sn_j}+1)(l+1-\hat{N}_{\beta})},\label{eq_0_assumption}\\
&\sum_{m=2}^N2^m\alpha^m(c_0M^{\sa(l-\hat{N}_{\beta})})^{\frac{m}{2}}\|V_m^{0,l+1}\|_{1,\infty,r}
\le c M^{-\sa
 +(\sum_{j=1}^d\frac{1}{\sn_j}+1)(l+1-\hat{N}_{\beta})}.\label{eq_0_weight_assumption}\end{align}
By inserting \eqref{eq_2_assumption}, \eqref{eq_1_assumption}, \eqref{eq_0_assumption}
 into \eqref{eq_2_expansion} and recalling
 \eqref{eq_scaling_factor}, $h\ge 1$ and $l+1-\hat{N}_{\beta}\le 0$ we have that
\begin{align*}
&\|V_0^{2,l,(n)}\|_{1,r,h^{l-\hat{N}_{\beta}}
\eps_{\beta}^{\hat{N}_{\beta}-l}r'}\\
&\le c
 M^{-(\sum_{j=1}^d\frac{1}{\sn_j}+1)(l+1-\hat{N}_{\beta})(n-1)}\eps_{\beta}\alpha^{-2}\left(c\alpha^{-2}M^{(\sum_{j=1}^d\frac{1}{\sn_j}+1)(l+1-\hat{N}_{\beta})}+c \eps_{\beta}\alpha^{-2}\right)^{n-1}\\
&\le \eps_{\beta}(c\alpha^{-2})^n,
\end{align*}
or by assuming that $\alpha\ge c$,
\begin{align}
\sum_{n=2}^{\infty}\|V_0^{2,l,(n)}\|_{1,r,h^{l-\hat{N}_{\beta}}
\eps_{\beta}^{\hat{N}_{\beta}-l}r'}\le
 c\eps_{\beta}\alpha^{-4}.\label{eq_2_0_higher}
\end{align}
Also by combining 
 \eqref{eq_2_weight_assumption}, \eqref{eq_1_weight_assumption}, \eqref{eq_0_weight_assumption} with
 \eqref{eq_2_expansion}, using the condition $\alpha
 M^{-\frac{\sa}{2}}\ge 2^3$, $h\ge 1$, $l+1-\hat{N}_{\beta}\le 0$ and recalling
 \eqref{eq_scaling_factor} we see that  
\begin{align*}
&\sum_{m=2}^N\alpha^m(c_0M^{\sa(l-\hat{N}_{\beta})})^{\frac{m}{2}}
\|V_m^{2,l,(n)}\|_{1,r,h^{l-\hat{N}_{\beta}}
\eps_{\beta}^{\hat{N}_{\beta}-l}r'}\\
&\le c
 \alpha^{-2(n-1)}M^{\sa(n-1)-(\sum_{j=1}^d\frac{1}{\sn_j}+1)(l+1-\hat{N}_{\beta})(n-1)}\eps_{\beta}M^{-\sa}\\
&\quad\cdot \left(c M^{-\sa+(\sum_{j=1}^d\frac{1}{\sn_j}+1)(l+1-\hat{N}_{\beta})}+c\eps_{\beta}M^{-\sa}\right)^{n-1}\\
&\le \eps_{\beta}M^{-\sa}(c\alpha^{-2})^{n-1},
\end{align*}
or by the condition $\alpha\ge c$, 
\begin{align}
\sum_{m=2}^N\alpha^m(c_0M^{\sa(l-\hat{N}_{\beta})})^{\frac{m}{2}}
\sum_{n=2}^{\infty}\|V_m^{2,l,(n)}\|_{1,r,h^{l-\hat{N}_{\beta}}
\eps_{\beta}^{\hat{N}_{\beta}-l}r'}\le c \eps_{\beta}
 M^{-\sa}\alpha^{-2}.\label{eq_2_higher}
\end{align}
By summing up \eqref{eq_2_0_1}, \eqref{eq_2_1}, \eqref{eq_2_0_higher},
 \eqref{eq_2_higher} we obtain that 
\begin{align}
&\sum_{n=1}^{\infty}\|V_0^{2,l,(n)}\|_{1,r,h^{l-\hat{N}_{\beta}}
\eps_{\beta}^{\hat{N}_{\beta}-l}r'}\le
 c\eps_{\beta}\alpha^{-2},\label{eq_2_0}\\
&\sum_{m=2}^N\alpha^m(c_0M^{\sa(l-\hat{N}_{\beta})})^{\frac{m}{2}}
\sum_{n=1}^{\infty}\|V_m^{2,l,(n)}\|_{1,r,h^{l-\hat{N}_{\beta}}
\eps_{\beta}^{\hat{N}_{\beta}-l}r'}\le c \eps_{\beta}
 M^{-\sa}.\label{eq_2}
\end{align}
Under the assumption $M\ge c$, the right-hand side of \eqref{eq_2_0},
 \eqref{eq_2} becomes less than $\alpha^{-2}$, 1 respectively. Thus by
 setting $V^{2,l}:=\sum_{n=1}^{\infty}V^{2,l,(n)}$ we conclude that 
$$
V^{2,l}\in \cW(r,h^{l-\hat{N}_{\beta}}\eps_{\beta}^{\hat{N}_{\beta}-l}r',l).
$$

In the $l$-th step we needed the conditions 
\begin{align*}
M\ge c,\quad \alpha\ge cM^{\frac{\sa}{2}},\quad L^{d}\ge
 M^{(\sum_{j=1}^d\frac{1}{\sn_j}+1)(\hat{N}_{\beta}-N_{\beta})},\quad
h\ge 2
\end{align*}
for a positive constant $c$ independent of any parameter.  
Since we have admitted the results of Lemma \ref{lem_IR_integration_without}, we have to
 combine the above conditions with the conditions \eqref{eq_assumptions_IR_integration_without}. 
All the conditions are summarized as in \eqref{eq_assumptions_IR_integration_with}.
The induction with $l$ ensures that the claim holds true.
\end{proof}

\subsection{The final integration}\label{subsec_final_integration}

Here we study  properties of an analytic continuation of the function
\begin{align*}
(u,\bla)\mapsto \log\left(\int
 e^{-V(u)(\psi)+W(u)(\psi)-A(\psi)}
d\mu_{\sum_{l=N_{\beta}}^{\hat{N}_{\beta}}\cC_l}(\psi)
\right).
\end{align*}
Since we have constructed an analytic continuation of the Grassmann
polynomial
\begin{align*}
\log\left(\int
 e^{-V(u)(\psi+\psi^1)+W(u)(\psi+\psi^1)-A(\psi+\psi^1)}
d\mu_{\sum_{l=N_{\beta}+1}^{\hat{N}_{\beta}}\cC_l}(\psi^1)
\right),
\end{align*}
we can use it as the input to the single-scale integration with the
covariance $\cC_{N_{\beta}}$. Since the constant $c_5$ is not less than
$c_4$, we can deduce from Lemma \ref{lem_IR_integration_without}, Lemma
\ref{lem_IR_integration_with} that under the assumptions of Lemma
\ref{lem_IR_integration_with}, 
\begin{align}
&V^{0-1,N_{\beta}}\in
 \cQ(b^{-1}c_0^{-2}\alpha^{-4},N_{\beta}),\label{eq_list_IR_with_without}\\
&V^{0-2,N_{\beta}}\in \cR(b^{-1}c_0^{-2}\alpha^{-4},N_{\beta}),\notag\\
&V^{1-1,N_{\beta}}\in
 \cQ'(b^{-1}c_0^{-2}\alpha^{-4},c_5^{-1}\eps_{\beta}^{\hat{N}_{\beta}-N_{\beta}}\beta^{-1}c_0^{-2}\alpha^{-4},N_{\beta}),\notag\\
&V^{1-2,N_{\beta}}\in
 \cR'(b^{-1}c_0^{-2}\alpha^{-4},c_5^{-1}\eps_{\beta}^{\hat{N}_{\beta}-N_{\beta}}\beta^{-1}c_0^{-2}\alpha^{-4},N_{\beta}),\notag\\
&V^{1-3,N_{\beta}}\in
 \cS(b^{-1}c_0^{-2}\alpha^{-4},c_5^{-1}\eps_{\beta}^{\hat{N}_{\beta}-N_{\beta}}\beta^{-1}c_0^{-2}\alpha^{-4},N_{\beta}),\notag\\ 
&V^{2,N_{\beta}}\in \cW(b^{-1}c_0^{-2}\alpha^{-4},c_5^{-1}h^{N_{\beta}-\hat{N}_{\beta}}
\eps_{\beta}^{\hat{N}_{\beta}-N_{\beta}}\beta^{-1}c_0^{-2}\alpha^{-4},N_{\beta}).\notag
\end{align}
Set 
\begin{align*}
r:=b^{-1}c_0^{-2}\alpha^{-4},\quad
 \hat{r}:=c_5^{-1}h^{N_{\beta}-\hat{N}_{\beta}}\eps_{\beta}^{\hat{N}_{\beta}-N_{\beta}}\beta^{-1}c_0^{-2}\alpha^{-4}.
\end{align*}
Then let us define the functions $V^{end,(n)}:\overline{D(r)}\times
\overline{D(\hat{r})}^2\to \C$ $(n\in \N_{\ge 1})$,
$V^{1-3,end}:\overline{D(\hat{r})}^2\to\C$ by 
\begin{align*}
&V^{end,(n)}:=\frac{1}{n!}Tree(\{1,2,\cdots,n\},\cC_{N_{\beta}})\\
&\qquad\qquad\quad\cdot \prod_{j=1}^n\left(\sum_{p=1}^2V^{0-p,N_{\beta}}(\psi^j)+\sum_{q=1}^3V^{1-q,N_{\beta}}(\psi^j)+V^{2,N_{\beta}}(\psi^j)\right)\Bigg|_{\psi^j=0\atop
 (\forall j\in \{1,2,\cdots,n\})},\\
&V^{1-3,end}:=Tree(\{1\},\cC_{N_{\beta}})V^{1-3,N_{\beta}}(\psi^1)\Big|_{\psi^1=0}.
\end{align*}
We set $V^{end}:=\sum_{n=1}^{\infty}V^{end,(n)}$ if it converges. We
conclude this section by summarizing properties of $V^{end}$ in a
convenient way for applications in the next section.

\begin{lemma}\label{lem_final_integration}
Let $c_5$ be the constant appearing in Lemma
 \ref{lem_IR_integration_with}. Then there exists a constant $c_6\in
 [c_5,\infty)$ independent of any other parameters such that if 
\begin{align}
&M^{\min\{1,2\sa-1-\sum_{j=1}^d\frac{1}{\sn_j}\}}\ge c_6,\quad \alpha\ge
 c_6 M^{\frac{\sa}{2}},\quad L^d\ge (c_{end}+1)
 M^{(\sa+\sum_{j=1}^d\frac{1}{\sn_j}+1)(\hat{N}_{\beta}-N_{\beta})},\label{eq_final_integration_parameter_assumption}\\
&h\ge 2,\notag
\end{align}
the following statements hold true.
\begin{enumerate}[(i)]
\item\label{item_final_regularity_general}
\begin{align*}
V^{end}\in &C\left(
\overline{D(b^{-1}c_0^{-2}\alpha^{-4})}\times \overline{D(c_6^{-1}L^{-d}
h^{N_{\beta}-\hat{N}_{\beta}-1}
\eps_{\beta}^{\hat{N}_{\beta}-N_{\beta}}\beta^{-1}c_0^{-2}\alpha^{-4})}^2\right)\\
&\cap C^{\o}\left(
D(b^{-1}c_0^{-2}\alpha^{-4})\times D(c_6^{-1}L^{-d}
h^{N_{\beta}-\hat{N}_{\beta}-1}
\eps_{\beta}^{\hat{N}_{\beta}-N_{\beta}}\beta^{-1}c_0^{-2}\alpha^{-4})^2\right).\end{align*}
\item\label{item_final_bound_general}
\begin{align*}
\frac{h}{N}\sup_{u\in\overline{D(b^{-1}c_0^{-2}\alpha^{-4})}}|V^{end}(u,\b0)|
 \le c_6(1+N^{-1})\alpha^{-2}
 M^{(\sum_{j=1}^d\frac{1}{\sn_j}+1)(\hat{N}_{\beta}-N_{\beta})}L^{-d}.
\end{align*}
\item\label{item_final_difference_general} For $j\in \{1,2\}$
\begin{align*}
&\sup_{u\in D(b^{-1}c_0^{-2}\alpha^{-4})}\left|
\frac{\partial}{\partial \la_j}V^{end}(u,\b0)-
\frac{\partial}{\partial \la_j}V^{1-3,end}(u,\b0)
\right|\\
&\le c_6\eps_{\beta}^{N_{\beta}-\hat{N}_{\beta}}\beta c_0^{2}\left(\alpha^2+
c_{end}M^{\sa(\hat{N}_{\beta}-N_{\beta})}\right)L^{-d}.
\end{align*}
\end{enumerate}
\end{lemma}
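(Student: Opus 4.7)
The plan is to reduce $V^{end,(n)}$ to contributions involving only the four families $V^{0-1,N_\beta}$, $V^{1-1,N_\beta}$, $V^{1-3,N_\beta}$, $V^{2,N_\beta}$ by exploiting the time-independence \eqref{eq_final_covariance_time_independence} of $\cC_{N_\beta}$. Specifically, for every $n\ge 2$ and every tree $T\in\T(\{1,\dots,n\})$, any term in which some vertex $j$ carries $V^{0-2,N_\beta}(\psi^j)$ or $V^{1-2,N_\beta}(\psi^j)$ vanishes: evaluating at $\psi^j=0$ forces every Grassmann leg of that vertex to be contracted, and each contracting covariance factor (including every interpolated form $s\,\cC_{N_\beta}$ appearing inside $Tree(\{1,\dots,n\},\cC_{N_\beta})$) is, by \eqref{eq_final_covariance_time_independence}, independent of the time labels of that vertex's legs. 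Hence the kernel that multiplies $V^{0-2,N_\beta}_{p,q}$ (resp.\ $V^{1-2,N_\beta}_{p,q}$) in the time slots of its $p$-argument is constant, and \emph{a fortiori} time-translation invariant, so the sum over those times vanishes by the property \eqref{eq_temperature_integral_vanish} built into $\cR$ and $\cR'$. Since $V^{0-2,N_\beta}_0=V^{1-2,N_\beta}_0=0$ (both have minimum Grassmann degree $4$), the trivial $n=1$ case is unaffected.

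After this reduction, I feed the inclusions \eqref{eq_list_IR_with_without} into Lemma \ref{lem_simple_tree_expansion}, applied with $\cC=\cC_{N_\beta}$, determinant bound $D=c_0 M^{\sa(N_\beta-\hat{N}_\beta)}$, and covariance norm $\|\tilde{\cC}_{N_\beta}\|_{1,\infty}\le c_0 c_{end}$ from \eqref{eq_scale_covariance_decay_bound}. Before applying the lemma to terms containing $V^{2,N_\beta}$, I rescale its $\bla$-domain by $\eps=c_5 c_6^{-1}L^{-d}h^{-1}$ via Lemma \ref{lem_artificial_scaling_properties}; with $c_6$ large enough and $L,h\ge 2$ this ratio is at most $1/2$, and each $V^{2,N_\beta}$-vertex gains an extra multiplicative smallness $L^{-d}h^{-1}$ that, together with the $L^{-d}$ smallness inherent to $V^{0-1,N_\beta}$ and $V^{1-1,N_\beta}$, absorbs the large combination $(c_0 c_{end})^{n-1}D^{-n+1}=c_{end}^{n-1}M^{\sa(\hat{N}_\beta-N_\beta)(n-1)}$. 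The assumption $L^d\ge (c_{end}+1)M^{(\sa+\sum 1/\sn_j+1)(\hat{N}_\beta-N_\beta)}$ is exactly what makes each per-vertex factor smaller than $1/2$, so $\sum_{n\ge 2}V^{end,(n)}$ converges geometrically on the polydisk of the claimed radii, yielding claim (i).

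For claim (ii), setting $\bla=\mathbf{0}$ annihilates $V^{1-q,N_\beta}$ ($q=1,2,3$) and $V^{2,N_\beta}$, so $V^{end}(u,\mathbf{0})=V^{0-1,N_\beta}_0(u)+\sum_{n\ge 2}V^{end,(n)}(u,\mathbf{0})$ with the tail being a tree expansion involving $V^{0-1,N_\beta}$ alone. The single-vertex term is bounded directly from \eqref{eq_0_1_bound} at scale $N_\beta$ by $(N/h)\alpha^{-2}M^{(\sum 1/\sn_j+1)(\hat{N}_\beta-N_\beta)}L^{-d}$; the tail is bounded by Lemma \ref{lem_simple_tree_expansion} equation \eqref{eq_simple_1_infinity} with per-vertex factor $\alpha^{-2}L^{-d}$, and the $L^d$ lower bound in \eqref{eq_final_integration_parameter_assumption} makes the ratio of the tail to the single-vertex term at most $O(N^{-1})$, producing the announced $(1+N^{-1})$ factor. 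For claim (iii), the $\la_j$-linear part of $V^{end}$ at $\bla=\mathbf{0}$ consists of trees with exactly one $\bla$-linear vertex; after subtracting $\partial_{\la_j}V^{1-3,end}=\partial_{\la_j}V^{1-3,N_\beta}_0$, the $n=1$ residue is $\partial_{\la_j}V^{1-1,N_\beta}_0$, controlled by Cauchy's formula on a disk of radius $c_5^{-1}\eps_\beta^{\hat{N}_\beta-N_\beta}\beta^{-1}c_0^{-2}\alpha^{-4}$ combined with \eqref{eq_1_1_bound}, yielding the $\alpha^2$ summand. The $n\ge 2$ contributions carrying $V^{1-1,N_\beta}$ or $V^{1-3,N_\beta}$ at one vertex and $V^{0-1,N_\beta}$ at the others are bounded by \eqref{eq_simple_1_infinity}; the leading $n=2$ terms containing $V^{1-3,N_\beta}$ (whose $\cS$-norm lacks the $L^{-d}$ smallness) produce the complementary $c_{end}M^{\sa(\hat{N}_\beta-N_\beta)}$ summand through the single $\tilde{\cC}_{N_\beta}$-contraction.

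The main obstacle is verifying, with full care, that the vanishing argument genuinely commutes with the interpolation structure inside $Tree(\cdot,\cC_{N_\beta})$. Since every scalar multiple of a time-independent covariance is again time-independent, and since \eqref{eq_temperature_integral_vanish} is stable under multiplication by any function that is constant in the $s_j$-slots, the argument does survive; but the bookkeeping must be performed for each fixed tree $T\in\T(\{1,\dots,n\})$ and each fixed set of interpolation parameters before the sums over $T$ and the integrals over interpolation parameters are taken. Once this commutation is in place, the subsequent norm estimates are entirely parallel to those in Subsection \ref{subsec_integration_with}, with $c_6$ chosen large enough to absorb all scheme-independent geometric-series constants.
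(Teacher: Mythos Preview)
Your proposal is correct and follows essentially the same route as the paper: reduce $V^{end,(n)}$ by dropping the $V^{0-2,N_\beta}$ and $V^{1-2,N_\beta}$ contributions via the time-independence \eqref{eq_final_covariance_time_independence}, then feed the inclusions \eqref{eq_list_IR_with_without} into the tree bounds of Lemma \ref{lem_simple_tree_expansion} with the rescaling of Lemma \ref{lem_artificial_scaling_properties}. Two small points worth cleaning up: (a) the paper performs the vanishing at the level of the full integral (showing $\int e^{z(\cdots)}d\mu_{\cC_{N_\beta}}$ is unchanged upon dropping $V^{0-2},V^{1-2}$, then invoking \eqref{eq_tree_expansion}), which sidesteps the interpolation bookkeeping you flag as the main obstacle; your tree-by-tree argument also works, but note that the $n=1$ case needs the \emph{same} vanishing reasoning as $n\ge 2$ (all legs contracted against time-independent covariances), not the observation $V^{0-2,N_\beta}_0=0$; (b) the $(1+N^{-1})$ in claim (ii) does not arise as a tail-to-head ratio but from the paper's joint bound on $\|V^{end,(n)}\|_{1,r,h^{-1}L^{-d}\hat r}$ combining the $N/h$ factor (from $V^{0-1}$) with the $h^{-1}$ factor (from the rescaled $\bla$-parts) --- at $\bla=\mathbf 0$ your argument in fact gives a slightly stronger bound without this factor, which still implies (ii).
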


\begin{proof}
The claims can be proved in the same way as the proof of 
\cite[\mbox{Lemma 3.8}]{K_BCS}. However, we provide a sketch of the
 proof for completeness of the paper. 
Observe that for any $c_6\in [c_5,\infty)$ the conditions
 \eqref{eq_final_integration_parameter_assumption} imply
 \eqref{eq_assumptions_IR_integration_with}. It follows from the property
 \eqref{eq_final_covariance_time_independence} and the property
 \eqref{eq_temperature_integral_vanish} of the kernels of
 $V^{0-2,N_{\beta}}$, $V^{1-2,N_{\beta}}$ that for any
 $z\in\C$
\begin{align*}
&\int
 e^{z\sum_{j=1}^2V^{0-j,N_{\beta}}(\psi)+z\sum_{k=1}^3V^{1-k,N_{\beta}}(\psi)+zV^{2,N_{\beta}}(\psi)}d\mu_{\cC_{N_{\beta}}}(\psi)\\
&=\int
 e^{zV^{0-1,N_{\beta}}(\psi)+zV^{1-1,N_{\beta}}(\psi)+zV^{1-3,N_{\beta}}(\psi)+
zV^{2,N_{\beta}}(\psi)}d\mu_{\cC_{N_{\beta}}}(\psi).
\end{align*}
Thus by recalling the formula \eqref{eq_tree_expansion},
\begin{align}
&V^{end,(n)}\label{eq_final_term_drop}\\
&=\frac{1}{n!}Tree(\{1,2,\cdots,n\},\cC_{N_{\beta}})\notag\\
&\quad\cdot
 \prod_{j=1}^n\left(V^{0-1,N_{\beta}}(\psi^j)+V^{1-1,N_{\beta}}(\psi^j)+
V^{1-3,N_{\beta}}(\psi^j)
+V^{2,N_{\beta}}(\psi^j)\right)\Bigg|_{\psi^j=0\atop
 (\forall j\in \{1,2,\cdots,n\})}.\notag
\end{align}
We can use \eqref{eq_simple_1_1}, \eqref{eq_simple_1}, \eqref{eq_scale_covariance_determinant_bound},
 \eqref{eq_scale_covariance_decay_bound},  \eqref{eq_0_1_bound}, \eqref{eq_1_1_bound}, \eqref{eq_1_3_bound},
 \eqref{eq_2_bound}, 
 \eqref{eq_artificial_scaling_properties} with $\eps=h^{-1}L^{-d}(\le
 1/2)$ 
 and the condition $\alpha\ge 2^3$ to derive that
\begin{align*}
&\|V^{end,(1)}\|_{1,r,h^{-1}L^{-d}\hat{r}}\\
&\le \|V_0^{0-1,N_{\beta}}\|_{1,\infty,r}+h^{-1}L^{-d}\sum_{\delta\in
 \{1-1,1-3,2\}}\|V_0^{\delta,N_{\beta}}\|_{1,r,\hat{r}}\\
&\quad +\sum_{m=2}^{N}(c_0M^{\sa (N_{\beta}-\hat{N}_{\beta})})^{\frac{m}{2}}\left(
\frac{N}{h}\|V_m^{0-1,N_{\beta}}\|_{1,\infty,r}+h^{-1}L^{-d}
\sum_{\delta\in
 \{1-1,1-3,2\}}\|V_m^{\delta,N_{\beta}}\|_{1,r,\hat{r}}
\right)\\
&\le
 c\left(\frac{N}{h}\alpha^{-2}M^{-(\sum_{j=1}^d\frac{1}{\sn_j}+1)(N_{\beta}-\hat{N}_{\beta})}L^{-d}+h^{-1}L^{-d}\alpha^{-2}\right)\\
&\le c h^{-1}(N+1)\alpha^{-2}M^{-(\sum_{j=1}^d\frac{1}{\sn_j}+1)(N_{\beta}-\hat{N}_{\beta})}L^{-d},\\
&\|V^{end,(n)}\|_{1,r,h^{-1}L^{-d}\hat{r}}\\
&\le (c_0 M^{\sa(N_{\beta}-\hat{N}_{\beta})})^{-n+1}(c_0c_{end})^{n-1}\\
&\quad\cdot \sum_{p_1=2}^N2^{3p_1}(c_0M^{\sa (N_{\beta}-\hat{N}_{\beta})})^{\frac{p_1}{2}}
\left(
\frac{N}{h}\|V_{p_1}^{0-1,N_{\beta}}\|_{1,\infty,r}+h^{-1}L^{-d}
\sum_{\delta\in
 \{1-1,1-3,2\}}\|V_{p_1}^{\delta,N_{\beta}}\|_{1,r,\hat{r}}
\right)\\
&\quad\cdot\left(\sum_{p=2}^N2^{3p}(c_0M^{\sa(N_{\beta}-\hat{N}_{\beta})})^{\frac{p}{2}}
\left(\|V_{p}^{0-1,N_{\beta}}\|_{1,\infty,r}+L^{-d}
\sum_{\delta\in
 \{1-1,1-3,2\}}\|V_{p}^{\delta,N_{\beta}}\|_{1,r,\hat{r}}
\right)
\right)^{n-1}\\
&\le
 M^{-\sa(N_{\beta}-\hat{N}_{\beta})(n-1)}c_{end}^{n-1}(c\alpha^{-2})^n
\left(\frac{N}{h}L^{-d}+h^{-1}L^{-d}
\right)L^{-d(n-1)}\\
&\le c
 h^{-1}(N+1)\alpha^{-2}L^{-d}(c c_{end}
 \alpha^{-2}M^{-\sa(N_{\beta}-\hat{N}_{\beta})}L^{-d})^{n-1}
\end{align*}
for $n\in \N_{\ge 2}$, or under the assumptions of the lemma,
\begin{align*}
&\frac{h}{N}\sum_{n=1}^{\infty}\|V^{end,(n)}\|_{1,r,h^{-1}L^{-d}\hat{r}}
\le
 c(1+N^{-1})\alpha^{-2}M^{-(\sum_{j=1}^d\frac{1}{\sn_j}+1)(N_{\beta}-\hat{N}_{\beta})}L^{-d}.
\end{align*}
This implies the claims \eqref{item_final_regularity_general}, \eqref{item_final_bound_general}.

To prove the claim \eqref{item_final_difference_general}, let us set
 $$
\overline{r}:=c_5^{-1}\eps_{\beta}^{\hat{N}_{\beta}-N_{\beta}}\beta^{-1}c_0^{-2}\alpha^{-4}.
$$
Recalling \eqref{eq_list_IR_with_without}, we see that
\begin{align}
\frac{\partial}{\partial
 \la_j}V^{\delta,N_{\beta}}(u,\b0)=\frac{1}{\overline{r}}V^{\delta,N_{\beta}}(u,\overline{r}\be_j),\quad
 (\forall j\in\{1,2\},\ \delta\in \{1-1,1-3\},\ u\in
 D(r)),\label{eq_application_linearity}
\end{align}
where $\be_1$, $\be_2$ are the canonical basis of $\C^2$.
We can deduce from \eqref{eq_simple_1_1}, \eqref{eq_simple_1}, \eqref{eq_scale_covariance_determinant_bound},
 \eqref{eq_scale_covariance_decay_bound}, 
 \eqref{eq_0_1_bound},
 \eqref{eq_1_1_bound}, \eqref{eq_1_3_bound}, \eqref{eq_final_term_drop},
 \eqref{eq_application_linearity} and the assumptions of the
 lemma that for $j\in \{1,2\}$, $u\in D(r)$
\begin{align*}
&\left|
\frac{\partial}{\partial \la_j}V^{end}(u,\b0)-
\frac{\partial}{\partial \la_j}V^{1-3,end}(u,\b0)
\right|\\
&\le \frac{1}{\overline{r}}\left|Tree(\{1\},\cC_{N_{\beta}})V^{1-1,N_{\beta}}(u,\overline{r}\be_j)(\psi^1)\Big|_{\psi^1=0}
\right|\\
&\quad +  \frac{1}{\overline{r}}\sum_{n=2}^{\infty}\frac{1}{(n-1)!}
\Bigg|Tree(\{1,2,\cdots,n\},\cC_{N_{\beta}})\\
&\qquad\qquad\qquad\qquad\quad\cdot \sum_{\delta \in
 \{1-1,1-3\}}V^{\delta,N_{\beta}}(u,\overline{r}\be_j)(\psi^1)
\prod_{k=2}^nV^{0-1,N_{\beta}}(u)(\psi^k)\Bigg|_{\psi^k=0\atop (\forall
 k\in \{1,2,\cdots,n\})}\Bigg|\\
&\le
 \frac{1}{\overline{r}}\left(\|V_0^{1-1,N_{\beta}}\|_{1,r,\overline{r}}+
\sum_{m=2}^N(c_0M^{\sa(N_{\beta}-\hat{N}_{\beta})})^{\frac{m}{2}}\|V_m^{1-1,N_{\beta}}\|_{1,r,\overline{r}}\right)\\
&\quad
 +\frac{1}{\overline{r}}\sum_{n=2}^{\infty}M^{-\sa(N_{\beta}-\hat{N}_{\beta})(n-1)}c_{end}^{n-1}\\
&\qquad\quad\cdot \sum_{p_1=2}^N2^{3p_1}(c_0M^{\sa(N_{\beta}-\hat{N}_{\beta})})^{\frac{p_1}{2}}(\|V_{p_1}^{1-1,N_{\beta}}\|_{1,r,\overline{r}}+\|V_{p_1}^{1-3,N_{\beta}}\|_{1,r,\overline{r}})\\
&\qquad\quad \cdot \left(
\sum_{p=2}^N2^{3p}(c_0M^{\sa(N_{\beta}-\hat{N}_{\beta})})^{\frac{p}{2}}\|V_p^{0-1,N_{\beta}}\|_{0,\infty,r}
\right)^{n-1}\\
&\le
 \frac{c}{\overline{r}}\alpha^{-2}L^{-d}+\frac{c}{\overline{r}}\alpha^{-2}\sum_{n=2}^{\infty}(c c_{end}\alpha^{-2}M^{-\sa(N_{\beta}-\hat{N}_{\beta})}L^{-d})^{n-1}\\
&\le
 \frac{c}{\overline{r}}\left(\alpha^{-2}+\alpha^{-4}c_{end}M^{-\sa(N_{\beta}-\hat{N}_{\beta})}\right)L^{-d}.
\end{align*}
Thus the claim holds true.
\end{proof}

\section{Proof of the theorem}\label{sec_proof_theorem}

In this section we complete the proof of Theorem \ref{thm_main_theorem}
and Corollary \ref{cor_zero_temperature_limit}. Since we have developed
the multi-scale integration scheme in the previous section and we plan
to apply the convergence result \cite[\mbox{Proposition 4.16}]{K_BCS},
we have the main general tools at hand. We need to confirm that our
actual covariance appearing in the formulation Lemma
\ref{lem_final_Grassmann_formulation} can be decomposed into a family
of covariances which fit in our framework. The way to complete the proof
of Theorem \ref{thm_main_theorem} after the confirmation is essentially parallel to
the proof of \cite[\mbox{Theorem 1.3}]{K_BCS}. Proving Corollary
\ref{cor_zero_temperature_limit} requires some additional arguments
which we provide in the end of this section. 

From here we assume that 
\begin{align}
h\ge \max\{2,\sc\}.\label{eq_first_h_condition}
\end{align}
Since we send $h$ to infinity first, we can assume that $h$ is larger
than any other parameter. As we proceed, we will replace \eqref{eq_first_h_condition}
by stricter conditions.

\subsection{Decomposition of the
  covariance}\label{subsec_real_covariance}

Let us decompose the covariance characterized in
\eqref{eq_characterization_covariance} into a sum of scale-dependent
covariances. We begin with discretizing the time-variables. Let $\cM_h$
denote the set of Matsubara frequency  with cut-off
\begin{align*}
\left\{\o\in \frac{\pi}{\beta}(2\Z+1)\ \Big|\ |\o|<\pi h \right\}.
\end{align*}

\begin{lemma}\label{lem_time_discrete_covariance}
For any $(\orho,\rho,\bx,s)$, $(\oeta,\eta,\by,t)$ $\in
 \{1,2\}\times\cB\times\G\times[0,\beta)_h$, $\phi\in\C$,
\begin{align*}
&C(\phi)(\orho\rho\bx s,\oeta\eta\by t)\\
&=\frac{1}{\beta L^{d}}\sum_{\bk\in\G^*}\sum_{\o\in
 \cM_h}e^{i\<\bk,\bx-\by\>+i\o(s-t)}\\
&\qquad\qquad \cdot h^{-1}(I_{2b}-e^{-\frac{i}{h}(\o-\frac{\theta(\beta)}{2})I_{2b}+\frac{1}{h}E(\phi)(\bk)})^{-1}((\orho-1)b+\rho,(\oeta-1)b+\eta).
\end{align*}
\end{lemma}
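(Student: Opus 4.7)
The plan is to recognize the right-hand side as the inverse discrete Fourier transform (in Matsubara frequencies) of the matrix-valued function appearing in Lemma \ref{lem_characterization_covariance} \eqref{item_characterization_covariance}. Starting from that lemma, for fixed $\bk\in\G^*$ consider the matrix-valued function
\begin{equation*}
F(u):=e^{u\Lambda}\left(1_{u\ge 0}(I_{2b}+e^{\beta\Lambda})^{-1}-1_{u<0}(I_{2b}+e^{-\beta\Lambda})^{-1}\right),
\end{equation*}
where $\Lambda:=i\tfrac{\theta(\beta)}{2}I_{2b}+E(\phi)(\bk)$ and $u=s-t$. Using the identity $(I_{2b}+e^{-\beta\Lambda})^{-1}=e^{\beta\Lambda}(I_{2b}+e^{\beta\Lambda})^{-1}$ (valid since $e^{\beta\Lambda}$ commutes with $I_{2b}+e^{-\beta\Lambda}$), a direct computation shows $F$ extends anti-periodically to $u\in\frac{1}{h}\Z$, namely $F(u+\beta)=-F(u)$. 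Hence $F|_{[0,\beta)_h}$ is naturally expressed as a discrete Fourier series over the Matsubara set $\cM_h$.

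Next I would diagonalize $\Lambda$ via the unitary $U(\phi)(\bk)$ used in the proof of Lemma \ref{lem_characterization_covariance}, reducing everything to a scalar identity for each eigenvalue $\alpha\in\{i\tfrac{\theta(\beta)}{2}\pm\sqrt{e_\rho(\bk)^2+|\phi|^2}:\rho\in\cB\}$ of $\Lambda$. The scalar analogue of $F$ is $f_\alpha(u)=e^{u\alpha}/(1+e^{\beta\alpha})$ for $u\in[0,\beta)_h$, whose discrete Fourier coefficient is the finite geometric sum
\begin{equation*}
\hat{f}_\alpha(\o)=\frac{1}{h(1+e^{\beta\alpha})}\sum_{n=0}^{\beta h-1}e^{(\alpha-i\o)n/h}=\frac{1}{h(1+e^{\beta\alpha})}\cdot\frac{1-e^{\beta(\alpha-i\o)}}{1-e^{(\alpha-i\o)/h}}.
\end{equation*}
Crucially, for every $\o\in\cM_h$ one has $e^{-i\beta\o}=e^{-i\pi(2m+1)}=-1$, so the numerator simplifies to $1+e^{\beta\alpha}$ and the whole expression collapses to $1/(h(1-e^{(\alpha-i\o)/h}))$. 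Reassembling over the spectral decomposition of $\Lambda$ yields
\begin{equation*}
\hat{F}(\o)=h^{-1}\bigl(I_{2b}-e^{-\frac{i}{h}(\o-\theta(\beta)/2)I_{2b}+\frac{1}{h}E(\phi)(\bk)}\bigr)^{-1},
\end{equation*}
and the inverse discrete Fourier formula $F(u)=\beta^{-1}\sum_{\o\in\cM_h}e^{i\o u}\hat{F}(\o)$ combined with the $\bk$-sum from Lemma \ref{lem_characterization_covariance} \eqref{item_characterization_covariance} gives exactly the claimed identity.

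The only subtlety is well-definedness: the matrix $I_{2b}-e^{-\frac{i}{h}(\o-\theta(\beta)/2)I_{2b}+\frac{1}{h}E(\phi)(\bk)}$ must be invertible for each $\o\in\cM_h$. Its spectrum consists of the numbers $1-e^{(\alpha-i\o)/h}$, which vanish iff $(\alpha-i\o)/h\in 2\pi i\Z$. Since $\Re\alpha=\pm\sqrt{e_\rho(\bk)^2+|\phi|^2}$, this forces $e_\rho(\bk)=0=\phi$ and $(\theta(\beta)/2-\o)/h\in 2\pi\Z$; the bound $|\theta(\beta)/2-\o|<\pi h+2\pi/\beta<2\pi h$ (guaranteed by \eqref{eq_first_h_condition}) then reduces this to $\o=\theta(\beta)/2$, which is excluded by the standing assumption $\beta\theta(\beta)/2\notin\pi(2\Z+1)$. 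The main obstacle is simply the careful bookkeeping of the finite geometric sum together with the Matsubara identity $e^{-i\beta\o}=-1$; no further input is needed.
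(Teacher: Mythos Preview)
Your proposal is correct and follows essentially the same approach as the paper: both diagonalize $E(\phi)(\bk)$ and reduce to a scalar identity expressing $e^{s\alpha}/(1+e^{\beta\alpha})$ as a discrete Matsubara sum. The paper simply cites this scalar identity from \cite[\mbox{Lemma C.3}]{K_9}, whereas you rederive it via the finite geometric sum and the key fact $e^{-i\beta\o}=-1$; your additional well-definedness check is a welcome detail the paper leaves implicit.
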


\begin{proof}
According to \cite[\mbox{Lemma C.3}]{K_9},
\begin{align*}
&e^{sA}\left(\frac{1_{s\ge 0}}{1+e^{\beta A}}- \frac{1_{s< 0}}{1+e^{-\beta A}}
\right)
=\frac{1}{\beta}\sum_{\o\in \cM_h}\frac{e^{i\o
 s}}{h(1-e^{-i\frac{\o}{h}+\frac{A}{h}})},\\
&(\forall s\in \{-\beta,-\beta+1/h,\cdots,\beta-1/h\},\ A\in\C\backslash
 i(\pi/\beta)(2\Z+1)).
\end{align*}
By diagonalizing $E(\phi)(\bk)$ in
 \eqref{eq_characterization_covariance} by a unitary matrix and
 substituting this formula we can derive the claimed equality.
\end{proof}

Next let us introduce a cut-off function. Let $\chi$ be a real-valued
function on $\R$ satisfying the following properties. 
\begin{align*}
&\chi\in C^{\infty}(\R),\\
&\chi(x)=1,\quad (\forall x\in (-\infty,8/5]),\\
&\chi(x)\in (0,1),\quad (\forall x\in (8/5,2)),\\
&\chi(x)=0,\quad (\forall x\in [2,\infty)),\\
&\frac{d}{d x}\chi(x)\le 0,\quad (\forall x\in \R).
\end{align*}
Using the function $\chi$, we construct scale-dependent cut-off
functions. We use the parameter $M\in \R_{\ge 2}$ to control the support
size of the cut-off functions. Here we give explicit definitions of the
numbers $N_{\beta}$, $\hat{N}_{\beta}$ introduced in Subsection
\ref{subsec_generalized_covariances}. Let 
\begin{align*}
N_{\beta}:=\left\lfloor \frac{\log(1/\beta)}{\log M}\right\rfloor,\quad
 \hat{N}_{\beta}:=1_{\beta \le 1}(N_{\beta}+1).
\end{align*}
Moreover, set
\begin{align*}
N_h:=\left\lfloor \frac{\log (h)}{\log M}\right\rfloor+2.
\end{align*}
Then by \eqref{eq_first_h_condition} and the condition $h\ge 1/\beta$
implied by $h\in \frac{2}{\beta}\N$, 
\begin{align}
&N_{\beta}<\hat{N}_{\beta}<N_h,\notag\\
&M^{-1}\beta^{-1}\le M^{N_{\beta}}\le
 \beta^{-1}.\label{eq_beta_relation}
\end{align}
Set $A(\beta,M):=\beta^{-1}M^{-N_{\beta}}$. It follows that $1\le
A(\beta,M)\le M$. With the function $e(\cdot):\R^d\to \R$ we define the
functions $\chi_l:\R^{d+1}\to \R$
$(l=N_{\beta},N_{\beta}+1,\cdots,N_h)$ by 
\begin{align*}
&\chi_{N_{\beta}}(\o,\bk):=\chi\left(M^{-N_{\beta}}A(\beta,M)^{-1}\sqrt{h^2\sin^2\left(\frac{\o-\pi/\beta}{2h}\right)+e(\bk)^2}\right),\\
&\chi_{l}(\o,\bk):=\chi\left(M^{-l}A(\beta,M)^{-1}\sqrt{h^2\sin^2\left(\frac{\o-\pi/\beta}{2h}\right)+e(\bk)^2}\right)\\
&\qquad\qquad\quad-\chi\left(M^{-(l-1)}A(\beta,M)^{-1}\sqrt{h^2\sin^2\left(\frac{\o-\pi/\beta}{2h}\right)+e(\bk)^2}\right),\\
&((\o,\bk)\in\R\times\R^d,\ l\in
 \{N_{\beta}+1,N_{\beta}+2,\cdots,N_h\}).
\end{align*}
Keeping in mind that $2A(\beta,M)M^{l-1}<\frac{8}{5}A(\beta,M)M^l$, 
we observe that
\begin{align}
&\chi_{N_{\beta}}(\o,\bk)\label{eq_support_description}\\
&=\left\{
\begin{array}{ll} 1 & \text{if
 }\sqrt{h^2\sin^2\left(\frac{\o-\pi/\beta}{2h}\right)+e(\bk)^2}\le
  \frac{8}{5}A(\beta,M)M^{N_{\beta}},\\
\in (0,1) & \text{if }\frac{8}{5}A(\beta,M)M^{N_{\beta}}<
 \sqrt{h^2\sin^2\left(\frac{\o-\pi/\beta}{2h}\right)+e(\bk)^2} <2
 A(\beta,M) M^{N_{\beta}},\\
 0 & \text{if }\sqrt{h^2\sin^2\left(\frac{\o-\pi/\beta}{2h}\right)+e(\bk)^2}\ge
  2 A(\beta,M)M^{N_{\beta}},
\end{array}\right.\notag\\
&\chi_{l}(\o,\bk)\notag\\
&=\left\{
\begin{array}{ll} 0 & \text{if
 }\sqrt{h^2\sin^2\left(\frac{\o-\pi/\beta}{2h}\right)+e(\bk)^2}\le
  \frac{8}{5}A(\beta,M)M^{l-1},\\
\in (0,1] & \text{if }\frac{8}{5}A(\beta,M)M^{l-1}<
 \sqrt{h^2\sin^2\left(\frac{\o-\pi/\beta}{2h}\right)+e(\bk)^2} <2
 A(\beta,M) M^{l},\\
 0 & \text{if }\sqrt{h^2\sin^2\left(\frac{\o-\pi/\beta}{2h}\right)+e(\bk)^2}\ge
  2 A(\beta,M)M^{l},
\end{array}\right.\notag\\
&(\forall (\o,\bk)\in\R^{d+1},\quad l\in
 \{N_{\beta}+1,N_{\beta}+2,\cdots,N_h\}).\notag
\end{align}
Basic properties of $\chi_l$ are summarized as follows.

\begin{lemma}\label{lem_properties_cutoff}
Assume that 
\begin{align}
L\ge \frac{\max_{j\in
 \{1,2,\cdots,d\}}M^{-N_{\beta}/\sn_j}}{\min\{M^{\sa
 N_{\beta}},1\}}.\label{eq_large_L_condition}
\end{align}
Then there exists a positive constant $\hat{c}$ depending only on $d$,
 $M$, $\chi$, $\sc$, $\sa$, $(\hbv_j)_{j=1}^d$ such
 that the following statements hold.
\begin{enumerate}[(i)]
\item\label{item_regularity_cutoff}
\begin{align*}
\chi_l\in C^{\infty}(\R^{d+1}),\quad (\forall l\in
 \{N_{\beta},N_{\beta}+1,\cdots,N_h\}).
\end{align*}
\item\label{item_unity_cutoff}
\begin{align*}
\sum_{l=N_{\beta}}^{N_h}\chi_l(\o,\bk)=1,\quad (\forall
 (\o,\bk)\in\R^{d+1}).
\end{align*}
\item\label{item_derivative_cutoff}
\begin{align*}
&\left|\left(\frac{\partial}{\partial
 \o}\right)^n\chi_l\left(\o,\sum_{j=1}^d \hat{k}_j\hbv_j\right)\right|
\le \hat{c} M^{-nl},\\
&\left|\left(\frac{\partial}{\partial \hat{k}_i}\right)^n\chi_l\left(\o,\sum_{j=1}^d \hat{k}_j\hbv_j\right)\right|
\le \hat{c}(1_{l\ge
 0}M^{-\frac{l}{\sn_i}}+1_{l<0}M^{-\frac{n}{\sn_i}l}),\\
&(\forall n\in \{1,2,\cdots,d+2\},\ i\in \{1,2,\cdots,d\},\
 (\o,\hat{k}_1,\cdots,\hat{k}_d)\in\R^{d+1},\\
&\quad l\in \{N_{\beta},N_{\beta}+1,\cdots,N_h\}).
\end{align*}
\item\label{item_support_size_cutoff}
\begin{align*}
&\frac{1}{\beta}\sup_{\bk\in\R^d}\sum_{\o\in
 \cM_h}1_{\sum_{j=N_{\beta}}^l\chi_j(\o,\bk)\neq 0}\le \hat{c}M^l,\\
&\frac{1}{\beta L^d}\sup_{x\in\R\atop
 \bp\in\R^d}\sum_{\bk\in\G^*}\sum_{\o\in\cM_h}
1_{\chi_l(\o+x,\bk+\bp)\neq 0}\le \hat{c}M^l\min\{M^{\sa l},1\},\\
&(\forall l\in \{N_{\beta},N_{\beta}+1,\cdots,N_h\}).
\end{align*}
\item\label{item_final_cutoff}
\begin{align*}
\chi_{N_{\beta}}(\o,\bk)=0,\quad (\forall \o\in
 \cM_h\backslash\{\pi/\beta\},\ \bk\in\R^d).
\end{align*}
\item\label{item_support_shift}
Let $l\in \{N_{\beta}+1,N_{\beta}+2,\cdots,N_h\}$, $(\o,\bk)\in\R^{d+1}$. If 
\begin{align*}
\frac{8}{5}A(\beta,M)M^{l-1}\le
 \sqrt{h^2\sin^2\left(\frac{\o-\pi/\beta}{2h}\right)+e(\bk)^2},
\end{align*}
then
\begin{align}
\sqrt{\left(\o-\frac{\theta(\beta)}{2}\right)^2+e(\bk)^2}\ge
 \left(\frac{8}{5}-\frac{\pi}{2}\right)A(\beta,M)M^{l-1}>0.\label{eq_regularization_lower_bound}
\end{align}
Especially if $\chi_l(\o,\bk)\neq 0$,
     \eqref{eq_regularization_lower_bound} holds.
\end{enumerate}
\end{lemma}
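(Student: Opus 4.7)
My plan is to prove the six claims in order, exploiting the structure that $\chi_l$ is built from $\chi$ composed with $M^{-l}A(\beta,M)^{-1}f(\omega,\bk)$, where $f(\omega,\bk):=\sqrt{h^2\sin^2((\omega-\pi/\beta)/(2h))+e(\bk)^2}$, and $f^2$ is smooth even though $f$ need not be (because of zeros of $e$). The key trick for (i) is that since $\chi(x)=1$ in a neighborhood of $0$, one can define a smooth $\tilde{\chi}:[0,\infty)\to\R$ by $\tilde{\chi}(y):=\chi(\sqrt{y})$, so that $\chi_l(\omega,\bk)=\tilde{\chi}(M^{-2l}A^{-2}f^2)-\tilde{\chi}(M^{-2(l-1)}A^{-2}f^2)$ (with only the first term when $l=N_\beta$). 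Since $f^2$ is smooth under our regularity assumptions on $e$, claim (i) follows. For (ii), telescoping gives $\sum_{l=N_\beta}^{N_h}\chi_l=\chi(M^{-N_h}A^{-1}f)$, which equals $1$ once we verify $M^{-N_h}A^{-1}f\le 8/5$: use $f\le h+\sc\le 2h$ (from $h\ge\sc$ and $e\le\sc$), $A\ge 1$, and $M^{N_h}\ge Mh$ by definition of $N_h$.

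For (iii), I apply Faà di Bruno to differentiate $\tilde{\chi}(M^{-2l}A^{-2}f^2)$, noting that the argument of $\tilde{\chi}'$ stays bounded on $\operatorname{supp}\chi_l$. Time derivatives of $h^2\sin^2((\omega-\pi/\beta)/(2h))$ produce factors $h^{2-n}$ times $\sin$ or $\cos$; on the support we have $h|\sin((\omega-\pi/\beta)/(2h))|\le 2AM^l$, so $|\partial_\omega^n f^2|\le c\,AM^l\cdot h^{1-(n-1)}$ for $n\ge 1$, and after multiplication by $M^{-2l}A^{-2}$ this yields the $M^{-nl}$ bound. For spatial derivatives, \eqref{eq_dispersion_derivative} gives $|\partial_{\hat{k}_i}^n e(\bk)^2|\le \sc(1_{n\le 2\sn_i}e^{2-n/\sn_i}+1_{n>2\sn_i})$; inserting $e\le 2AM^l$ on support and multiplying by $M^{-2l}A^{-2}$ gives $M^{-l/\sn_i}$ when $l\ge 0$ (using once-differentiated $e^2$ combined with appropriate compositions from Faà di Bruno) and $M^{-nl/\sn_i}$ when $l<0$ (where $e$ is small enough that $e^{2-n/\sn_i}$ is the binding factor). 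For (iv), the first inequality follows by noting the cumulative support requires $h|\sin((\omega-\pi/\beta)/(2h))|<2AM^l$, which via $|\sin x|\ge(2/\pi)|x|$ on $[-\pi/2,\pi/2]$ forces $|\omega-\pi/\beta|<2\pi AM^l$; this interval contains at most $A\beta M^l+1$ Matsubara frequencies, and division by $\beta$ with $1/\beta\le M\cdot M^l$ (from $M^{N_\beta}\ge M^{-1}/\beta$) yields $\hat{c}M^l$. The second inequality combines this $\omega$-estimate with the spatial bound $(1/L^d)\sum_{\bk\in\G^*}1_{e(\bk+\bp)\le 2AM^l}\le \hat{c}\min\{M^{\sa l},1\}$, obtained from \eqref{eq_dispersion_measure} by approximating the Riemann sum by the integral (where hypothesis \eqref{eq_large_L_condition} ensures $L$ is fine enough for this approximation to be valid with a universal constant).

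For (v), any $\omega\in\cM_h\setminus\{\pi/\beta\}$ satisfies $\omega-\pi/\beta=2k\pi/\beta$ with nonzero integer $k$, and since $|k\pi/(\beta h)|\le \pi/2$ we get $h|\sin((\omega-\pi/\beta)/(2h))|\ge (2/\pi)h\cdot|k|\pi/(\beta h)=2|k|/\beta\ge 2/\beta=2AM^{N_\beta}$; hence $f\ge 2AM^{N_\beta}$, which places $(\omega,\bk)$ outside $\operatorname{supp}\chi_{N_\beta}$. For (vi), I use $|\sin x|\le |x|$ to get $h|\sin((\omega-\pi/\beta)/(2h))|\le|\omega-\pi/\beta|/2$, hence $\sqrt{h^2\sin^2+e^2}\le h|\sin|+e\le|\omega-\pi/\beta|/2+e\le|\omega-\theta(\beta)/2|/2+\pi/(2\beta)+e$, using $\theta(\beta)\in[0,2\pi/\beta)$ so that $|\theta(\beta)/2-\pi/\beta|\le\pi/\beta$. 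Combined with the hypothesis and the inequality $1/\beta\le AM^{N_\beta}\le AM^{l-1}$ for $l\ge N_\beta+1$, this yields $|\omega-\theta(\beta)/2|/2+e\ge(8/5-\pi/2)AM^{l-1}$, from which the claimed lower bound on $\sqrt{(\omega-\theta(\beta)/2)^2+e^2}$ follows (the positivity $8/5-\pi/2>0$ is the essential content).

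The main technical obstacle is claim (iii): Faà di Bruno produces many terms, and one must track carefully how the bounds on $e^{2-n/\sn_i}$ from \eqref{eq_dispersion_derivative} interact with the prefactor $M^{-2l}A^{-2}$ and the bound $e\le 2AM^l$ on support, separately for $l\ge 0$ and $l<0$. The support-size estimate in (iv), especially obtaining the $\min\{M^{\sa l},1\}$ factor from the integral bound \eqref{eq_dispersion_measure} in a way that survives discretization to the finite lattice $\G^*$, is the other nontrivial step and is precisely what the hypothesis \eqref{eq_large_L_condition} is tailored to enable.
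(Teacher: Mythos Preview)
Your approach is essentially the same as the paper's, with one small variation in (iii): you differentiate $\tilde\chi(M^{-2l}A^{-2}f^2)$ directly, whereas the paper first estimates derivatives of $\sqrt{g}$ on the support (where $g$ is bounded below) and then applies the chain rule to $\chi(M^{-l}A^{-1}\sqrt{g})$. Both routes work and produce the same bounds.

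There are, however, two points that need attention.

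\textbf{Part (iv), second inequality.} The Riemann-sum comparison cannot be applied directly to the indicator $1_{\chi_l(\omega+x,\bk+\bp)\neq 0}$, since it is not $C^1$ in $\bk$. The paper first bounds this indicator above by the smooth function $\chi\bigl(2^{-1}M^{-l}A(\beta,M)^{-1}\sqrt{g(\omega+x,\bk+\bp)}\bigr)$ (which equals $1$ whenever $\chi_l\neq 0$), and only then invokes the discrete--continuous comparison lemma together with the derivative bound from (iii). Without this smoothing step the error term in the Riemann approximation is not controlled, so your sketch has a small gap here.

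\textbf{Part (vi).} Your chain of inequalities gives
\[
\tfrac{1}{2}\bigl|\omega-\tfrac{\theta(\beta)}{2}\bigr|+e(\bk)\ \ge\ \Bigl(\tfrac{8}{5}-\tfrac{\pi}{2}\Bigr)A(\beta,M)M^{l-1},
\]
but this does \emph{not} imply $\sqrt{(\omega-\theta(\beta)/2)^2+e(\bk)^2}\ge(8/5-\pi/2)A(\beta,M)M^{l-1}$: in general $|a|/2+|b|\ge c$ only yields $\sqrt{a^2+b^2}\ge (2/\sqrt{5})\,c$ (take $|b|=3c/4$, $|a|=c/2$). You therefore obtain a strictly smaller positive constant, not the one stated in the lemma. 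The fix is to stay in $\ell^2$ throughout, as the paper does: bound $\sqrt{g}\le\sqrt{\tfrac14(\omega-\pi/\beta)^2+e^2}$, then apply the triangle inequality for the Euclidean norm in $\R^2$ to pass from $\pi/\beta$ to $\theta(\beta)/2$, and finally use $\sqrt{\tfrac14 a^2+b^2}\le\sqrt{a^2+b^2}$. This recovers the exact constant $(8/5-\pi/2)$.
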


Let us prepare a useful inequality beforehand. 

\begin{lemma}\label{lem_discrete_continuous_estimate}
Let $f\in C^{1}(\R^d,\C)$ satisfy 
\begin{align*}
\sup_{j\in\{1,2,\cdots,d\}}\sup_{\bk\in\R^d}\left|\frac{\partial}{\partial
 k_j}f(\bk)\right|<\infty.
\end{align*}
Then the following inequality holds.
\begin{align*}
&\left|
D_d\int_{\G_{\infty}^*}d\bp f(\bp)-\frac{1}{L^d}\sum_{\bk\in\G^*}f(\bk)\right|\\
&\le L^{-1}2\pi d^2\sup_{i\in\{1,2,\cdots,d\}}\|\hbv_i\|_{\R^d}\sup_{j\in
 \{1,2,\cdots,d\}}\sup_{\bk\in \R^d}\left|\frac{\partial}{\partial
 k_j}f(\bk)\right|.
\end{align*}
\end{lemma}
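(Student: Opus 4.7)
The plan is to reduce the estimate to a standard Riemann-sum comparison via a change of variables, and then use the fundamental theorem of calculus on each small cell of the partition. Concretely, with the substitution $\bp = \sum_{j=1}^d \hat{p}_j \hbv_j$ the Jacobian is $|\det(\hbv_1,\ldots,\hbv_d)|$, so
\[
D_d \int_{\G_{\infty}^*} d\bp\, f(\bp) = (2\pi)^{-d} \int_{[0,2\pi]^d} d\hat{p}_1 \cdots d\hat{p}_d\, f\Bigl(\sum_{j=1}^d \hat{p}_j \hbv_j\Bigr),
\]
and similarly $L^{-d} \sum_{\bk \in \G^*} f(\bk) = (2\pi)^{-d} (2\pi/L)^d \sum_{\hat{\bm}} f(\sum_j \hat{m}_j \hbv_j)$ where $\hat{\bm}$ ranges over $\{0, 2\pi/L, \ldots, 2\pi-2\pi/L\}^d$. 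Thus the left-hand side becomes $(2\pi)^{-d}$ times a sum over $L^d$ cells of $[0,2\pi/L]^d$ of integrals of the form $\int_{[0,2\pi/L]^d} [f(\sum_j (\hat{m}_j + t_j)\hbv_j) - f(\sum_j \hat{m}_j \hbv_j)]\, dt$.

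Next I would bound the pointwise difference on each cell by telescoping one coordinate at a time: writing $\bp - \bk_{\text{corner}} = \sum_i (\hat{p}_i - \hat{m}_i)\hbv_i$ and applying the fundamental theorem of calculus along each segment gives
\[
\Bigl|f\Bigl(\sum_j (\hat{m}_j + t_j)\hbv_j\Bigr) - f\Bigl(\sum_j \hat{m}_j \hbv_j\Bigr)\Bigr| \le \sum_{i=1}^d |t_i|\, \sup_{\bk \in \R^d} \Bigl|\sum_{j=1}^d (\hbv_i)_j \frac{\partial f}{\partial k_j}(\bk)\Bigr|.
\]
Since $|t_i| \le 2\pi/L$ and $|\sum_j (\hbv_i)_j \partial_{k_j} f| \le d \sup_i \|\hbv_i\|_{\R^d} \sup_j |\partial_{k_j} f(\bk)|$ (using $|(\hbv_i)_j| \le \|\hbv_i\|_{\R^d}$ componentwise and summing $d$ terms), the cell-wise difference is at most $d^2 \cdot (2\pi/L)\cdot \sup_i \|\hbv_i\|_{\R^d} \sup_j \sup_{\bk} |\partial_{k_j} f(\bk)|$.

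Finally I would integrate over each cell (volume $(2\pi/L)^d$), sum over all $L^d$ cells, and divide by $(2\pi)^d$; the volumes combine to $(2\pi)^d$ and the single factor $L^{-1}$ survives, yielding the stated bound. The argument is entirely routine — the only thing requiring care is tracking the constants through the telescoping step in order to obtain $d^2$ rather than, say, $d^{3/2}$ (which a Cauchy–Schwarz bound on $\sum_j |(\hbv_i)_j|$ would also give, but the cruder componentwise estimate is enough and matches the claimed constant). There is no serious obstacle; the lemma is a clean Riemann-sum comparison tailored to the oblique lattice structure of $\G^*$.
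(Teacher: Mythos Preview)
Your proof is correct and follows essentially the same approach as the paper: both make the change of variables $\bp=\sum_j \hat{p}_j\hbv_j$ to reduce to a Riemann-sum estimate on $[0,2\pi]^d$, then use the fundamental theorem of calculus together with the chain-rule bound $|\partial_{\hat{k}_i} f|\le d\sup_i\|\hbv_i\|_{\R^d}\sup_j\sup_{\bk}|\partial_{k_j}f|$. The only cosmetic difference is organizational --- the paper replaces one coordinate at a time in the integral-versus-sum comparison, whereas you decompose into all $L^d$ cells at once and telescope pointwise inside each cell --- but the constants and the ideas are identical.
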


\begin{proof}
Observe that 
\begin{align*}
&\Bigg| \int_0^{2\pi}dk_1\int_{[0,2\pi]^{d-1}}dk_2\cdots dk_d
 f\left(\sum_{j=1}^dk_j\hbv_j\right)\\
&\quad -\frac{2\pi}{L}\sum_{m_1=0}^{L-1}\int_{[0,2\pi]^{d-1}}dk_2\cdots
 dk_df\left(\frac{2\pi}{L}m_1\hbv_1+\sum_{j=2}^dk_j\hbv_j\right)\Bigg|\\
&\le
 \sum_{m_1=0}^{L-1}\int_{\frac{2\pi}{L}m_1}^{\frac{2\pi}{L}(m_1+1)}dk_1
\int_{[0,2\pi]^{d-1}}dk_2\cdots dk_d
 \int_{\frac{2\pi}{L}m_1}^{k_1}dq_1\left|
\frac{\partial}{\partial q_1}f\left(q_1\hbv_1+\sum_{j=2}^dk_j\hbv_j\right)
\right|\\
&\le L^{-1}(2\pi)^{d+1}d\sup_{i\in \{1,2,\cdots,d\}}\|\hbv_i\|_{\R^d}
\sup_{j\in
 \{1,2,\cdots,d\}}\sup_{\bk\in\R^d}\left|\frac{\partial}{\partial
 k_j}f(\bk)\right|.
\end{align*}
By repeating this type of estimate $d$ times we have that
\begin{align*}
&\left|\int_{[0,2\pi]^d}d\bk
 f\left(\sum_{j=1}^dk_j\hbv_j\right)-\left(\frac{2\pi}{L}\right)^d\sum_{\bk\in\G^*}f(\bk)\right|\\
&\le L^{-1}(2\pi)^{d+1}d^2\sup_{i\in \{1,2,\cdots,d\}}\|\hbv_i\|_{\R^d}
\sup_{j\in
 \{1,2,\cdots,d\}}\sup_{\bk\in\R^d}\left|\frac{\partial}{\partial
 k_j}f(\bk)\right|.
\end{align*}
By combining this inequality with the equality 
\begin{align*}
\frac{1}{(2\pi)^d}\int_{[0,2\pi]^d}d\bk
 f\left(\sum_{j=1}^dk_j\hbv_j\right)=D_d\int_{\G_{\infty}^*}d\bp f(\bp)
\end{align*}
we obtain the result.
\end{proof}

\begin{proof}[Proof of Lemma \ref{lem_properties_cutoff}]
\eqref{item_regularity_cutoff}: The claim follows from the assumptions
 $e^2\in C^{\infty}(\R^d)$, $\chi\in C^{\infty}(\R)$ and that 
$\chi(\cdot)$ is constant in a neighborhood of the origin.

\eqref{item_unity_cutoff}: By \eqref{eq_dispersion_upper_bound} and 
the assumption
 \eqref{eq_first_h_condition}, for any $(\o,\bk)\in\R^{d+1}$ 
\begin{align*}
\sqrt{h^2\sin^2\left(\frac{\o-\pi/\beta}{2h}\right)+e(\bk)^2}\le
 \sqrt{2}h\le 
 \frac{8}{5}A(\beta,M)M^{N_h}.
\end{align*}
Thus
\begin{align*}
\sum_{l=N_{\beta}}^{N_h}\chi_l(\o,\bk)
=\chi\left(M^{-N_h}A(\beta,M)^{-1}\sqrt{h^2\sin^2\left(\frac{\o-\pi/\beta}{2h}\right)+e(\bk)^2}\right)=1.
\end{align*}

\eqref{item_derivative_cutoff}: We use the following formula. See
 e.g. \cite[\mbox{Lemma C.1}]{K_RG} for the proof. Let $\O_1$, $\O_2$ be
 open sets of $\R$. Let $f_j\in C^{\infty}(\O_j,\R)$ $(j=1,2)$ and
 $f_1(\O_1)\subset \O_2$. Then for $x_0\in \O_1$, $n\in \N$,
\begin{align}
\left(\frac{d}{dx}\right)^nf_2(f_1(x))\Big|_{x=x_0}=
\sum_{m=1}^n\frac{n!}{m!}f_2^{(m)}(f_1(x_0))\prod_{j=1}^m\left(\sum_{l_j=1}^n\frac{1}{l_j!}f_1^{(l_j)}(x_0)\right)1_{\sum_{j=1}^ml_j=n}.\label{eq_composition_derivative}
\end{align}
Take $l\in \Z$ satisfying $l\le N_h$. Define the function
 $g:\R^{d+1}\to\R$ by 
\begin{align*}
g(\o,\bk):=h^2\sin^2\left(\frac{\o-\pi/\beta}{2h}\right)+e(\bk)^2.
\end{align*}
Since $g$ is continuous, $g^{-1}(((\frac{8}{5}A(\beta,M)M^l)^2,
(2A(\beta,M)M^l)^2))$ is an open set of $\R^{d+1}$. 
Assume that $(\o,\bk)\in g^{-1}(((\frac{8}{5}A(\beta,M)M^l)^2,
(2A(\beta,M)M^l)^2))$.
Take any $i\in \{1,2,\cdots,d\}$, $n\in
 \{1,2,\cdots,d+2\}$. Let us estimate $|(\frac{\partial}{\partial
 \hat{k}_i})^n\sqrt{g(\o,\bk)}|$, $|(\frac{\partial}{\partial
 \o})^n\sqrt{g(\o,\bk)}|$ where $\bk=\sum_{j=1}^d\hat{k}_j\hbv_j$. Since $h^{-1}\le M^{-N_h+2}\le M^{-l+2}$, 
\begin{align*}
\left|
\left(\frac{\partial}{\partial \o}\right)^ng(\o,\bk)\right| \le
 c(d,M)M^{l(2-n)}.
\end{align*}
Then by applying \eqref{eq_composition_derivative} we have that
\begin{align}
\left|\left(\frac{\partial}{\partial
 \o}\right)^n\sqrt{g(\o,\bk)}\right|
&\le
 c(d,M)\sum_{m=1}^nM^{2l(\frac{1}{2}-m)}\prod_{j=1}^m\left(\sum_{l_j=1}^n
 M^{l(2-l_j)}\right)1_{\sum_{j=1}^ml_j=n}\label{eq_time_derivative_inside}\\
&\le c(d,M) M^{l(1-n)}.\notag
\end{align}
Note that by \eqref{eq_dispersion_derivative}
\begin{align*}
\left|\left(\frac{\partial}{\partial \hat{k}_i}\right)^ng(\o,\bk)\right|
\le c(M,\sc)\left(1_{n\le 2\sn_i} M^{l(2-\frac{n}{\sn_i})}+1_{n>2\sn_i}\right).
\end{align*}
Thus we can apply \eqref{eq_composition_derivative} to derive that 
\begin{align}
&\left|\left(\frac{\partial}{\partial
 \hat{k}_i}\right)^n\sqrt{g(\o,\bk)}\right|\label{eq_space_derivative_inside}\\
&\le c(d,M,\sc) \sum_{m=1}^nM^{2l(\frac{1}{2}-m)}\prod_{j=1}^m\left(
\sum_{l_j=1}^n\left(1_{l_j\le 2\sn_i}M^{l(2-\frac{l_j}{\sn_i})}
+1_{l_j> 2\sn_i}\right)\right) 1_{\sum_{j=1}^ml_j=n}\notag\\
&\le c(d,M,\sc) \sum_{m=1}^nM^{2l(\frac{1}{2}-m)}\prod_{j=1}^m\left(
\sum_{l_j=1}^n\right) 1_{\sum_{j=1}^ml_j=n}\notag\\
&\quad\cdot \left(1_{l\ge
 0}M^{2ml-\frac{m}{\sn_i}l}+1_{l<0}M^{l(2m-\frac{n}{\sn_i})-\sum_{j=1}^m1_{l_j>2\sn_i}l(2-\frac{l_j}{\sn_i})}\right)\notag\\
&\le c(d,M,\sc) \sum_{m=1}^nM^{2l(\frac{1}{2}-m)}
 \left(1_{l\ge
 0}M^{2ml-\frac{m}{\sn_i}l}+1_{l<0}M^{l(2m-\frac{n}{\sn_i})}\right)\notag\\
&\le c(d,M,\sc)
\left(1_{l\ge
 0}M^{(1-\frac{1}{\sn_i})l}+1_{l<0}M^{(1-\frac{n}{\sn_i})l}\right).\notag
\end{align}
Moreover we can use \eqref{eq_composition_derivative},
 \eqref{eq_time_derivative_inside}, \eqref{eq_space_derivative_inside}
 to deduce that
\begin{align*}
&\left|\left(\frac{\partial}{\partial
 \o}\right)^n\chi(M^{-l}A(\beta,M)^{-1}\sqrt{g(\o,\bk)})\right|\\
&\le c(d,M,\chi) \sum_{m=1}^nM^{-ml}\prod_{j=1}^m\left(
\sum_{l_j=1}^nM^{l(1-l_j)}\right)1_{\sum_{j=1}^ml_j=n}\\
&\le c(d,M,\chi) M^{-nl},\\
&\left|\left(\frac{\partial}{\partial
 \hat{k}_i}\right)^n\chi(M^{-l}A(\beta,M)^{-1}\sqrt{g(\o,\bk)})\right|\\
&\le  c(d,M,\chi,\sc) 
\sum_{m=1}^nM^{-ml}\prod_{j=1}^m\left(\sum_{l_j=1}^n
 \left(1_{l\ge
 0}M^{(1-\frac{1}{\sn_i})l}+1_{l<0}M^{(1-\frac{l_j}{\sn_i})l}\right)\right)
1_{\sum_{j=1}^ml_j=n}\\
&\le  c(d,M,\chi,\sc) \left(1_{l\ge
 0}M^{-\frac{l}{\sn_i}}+1_{l<0}M^{-\frac{n}{\sn_i}l}\right).
\end{align*}
On the other hand, if $(\o,\bk)\in\R^{d+1}\backslash
 g^{-1}(((\frac{8}{5}A(\beta,M)M^l)^2,(2A(\beta,M)M^l)^2))$,
\begin{align*}
\left(\frac{\partial}{\partial
 \o}\right)^n\chi(M^{-l}A(\beta,M)^{-1}\sqrt{g(\o,\bk)})=
\left(\frac{\partial}{\partial
 \hat{k}_i}\right)^n\chi(M^{-l}A(\beta,M)^{-1}\sqrt{g(\o,\bk)})=0.
\end{align*}
Thus by summing up,
\begin{align}
&\left|\left(\frac{\partial}{\partial
 \o}\right)^n\chi\left(M^{-l}A(\beta,M)^{-1}\sqrt{g\left(\o,\sum_{j=1}^d\hat{k}_j\hbv_j\right)}\right)\right|\le
 c(d,M,\chi)M^{-nl},\label{eq_derivative_cutoff_essence}\\
&\left|\left(\frac{\partial}{\partial
 \hat{k}_i}\right)^n\chi\left(M^{-l}A(\beta,M)^{-1}\sqrt{g\left(\o,\sum_{j=1}^d\hat{k}_j\hbv_j\right)}\right)\right|\notag\\
&\le
 c(d,M,\chi,\sc)\left(1_{l\ge 0}M^{-\frac{l}{\sn_i}}+1_{l<0}M^{-\frac{n}{\sn_i}l}
\right),\notag\\
&(\forall (\o,\hat{k}_1,\cdots,\hat{k}_d)\in\R^{d+1},\ i\in \{1,2,\cdots,d\},\ n\in
 \{1,2,\cdots,d+2\}).\notag
\end{align}
This implies the claimed results. 

\eqref{item_support_size_cutoff}: By \eqref{eq_beta_relation},
 the inequality $A(\beta,M)\le M$ and the support property of $\chi(\cdot)$,
\begin{align*}
\frac{1}{\beta}\sup_{\bk\in \R^d}\sum_{\o\in
 \cM_h}1_{\sum_{j=N_{\beta}}^l\chi_j(\o,\bk)\neq 0}\le
 \frac{1}{\beta}\sum_{\o\in \cM_h}1_{|\o-\pi/\beta|\le 2\pi M^{l+1}}\le
 c(M)M^l,
\end{align*}
which is the first inequality.
Note that by \eqref{eq_support_description}, the support property of
 $\chi(\cdot)$, $A(\beta,M)\le M$
 and Lemma
 \ref{lem_discrete_continuous_estimate}
\begin{align*}
&\frac{1}{\beta
 L^d}\sum_{\bk\in\G^*}\sum_{\o\in\cM_h}1_{\chi_l(\o+x,\bk+\bp)\neq 0}\\
&\le \frac{1}{\beta
 L^d}\sum_{\bk\in\G^*}\sum_{\o\in\cM_h}\chi(2^{-1}M^{-l}A(\beta,M)^{-1}\sqrt{g(\o+x,\bk+\bp)})\\
&\le \frac{D_d}{\beta}\sum_{\o\in \cM_h}\int_{\G_{\infty}^*}d\bk 
\chi(2^{-1}M^{-l}A(\beta,M)^{-1}\sqrt{g(\o+x,\bk+\bp)})\\
&\quad +\frac{1}{\beta}\sum_{\o\in \cM_h}
\Bigg|D_d\int_{\G_{\infty}^*}d\bk\chi(2^{-1}M^{-l}A(\beta,M)^{-1}\sqrt{g(\o+x,\bk+\bp)})\\
&\qquad\qquad\qquad -\frac{1}{L^d}\sum_{\bk\in\G^*}\chi(2^{-1}M^{-l}A(\beta,M)^{-1}\sqrt{g(\o+x,\bk+\bp)})\Bigg|\\
&\le
 c(d,D_d,(\hbv_j)_{j=1}^d)\frac{1}{\beta}\sum_{\o\in
 \cM_h}1_{h|\sin(\frac{1}{2h}(\o-\frac{\pi}{\beta}+x))|\le 2^2M^{l+1}}\\
&\quad\cdot \Bigg(\int_{\G_{\infty}^*}d\bk 1_{e(\bk+\bp)\le
 2^2M^{l+1}}\\
&\qquad\quad + L^{-1}\sup_{j\in
 \{1,2,\cdots,d\}}\sup_{(\o,\bk)\in\R^{d+1}}\left|\frac{\partial}{\partial
 k_j}\chi(2^{-1}M^{-l}A(\beta,M)^{-1}\sqrt{g(\o,\bk)})\right|\Bigg).
\end{align*}
Moreover by using \eqref{eq_dispersion_measure},
 \eqref{eq_beta_relation}, 
 \eqref{eq_large_L_condition} and a simple variant of \eqref{eq_derivative_cutoff_essence}
 having $2^{-2}g(\cdot)$ in place of $g(\cdot)$
 we have that
\begin{align*}
&\frac{1}{\beta
 L^d}\sum_{\bk\in\G^*}\sum_{\o\in\cM_h}1_{\chi_l(\o+x,\bk+\bp)\neq 0}\\
&\le c(d,M,\chi,\sc,D_d,(\hbv_j)_{j=1}^d,\sa)M^l\left(
\min\{M^{\sa l},1\}+L^{-1}\max_{j\in \{1,2,\cdots,d\}}M^{-\frac{N_{\beta}}{\sn_j}}
\right)\\
&\le
 c(d,M,\chi,\sc,D_d,(\hbv_j)_{j=1}^d,\sa)M^l\min\{M^{\sa
 l},1\},
\end{align*}
which is the second inequality.

\eqref{item_final_cutoff}: Take any $\o\in
 \cM_h\backslash\{\pi/\beta\}$, $\bk\in\R^d$. It follows from the
 inequality $|\sin x|\ge \frac{2}{\pi}|x|$, $(x\in
 [-\frac{\pi}{2},\frac{\pi}{2}])$ and the definition of $A(\beta,M)$
 that
\begin{align*}
M^{-N_{\beta}}A(\beta,M)^{-1}\sqrt{g(\o,\bk)}\ge \beta
 h\left|\sin\left(\frac{\o-\pi/\beta}{2h}\right)\right|\ge 2.
\end{align*}
Then we can deduce the claim from \eqref{eq_support_description}.

\eqref{item_support_shift}: By the assumption, the definition of
 $A(\beta,M)$ and the triangle
 inequality of the norm $\|\cdot\|_{\R^2}$, 
\begin{align*}
&\frac{8}{5}A(\beta,M)M^{l-1}\le \sqrt{g(\o,\bk)}\\
&\le
 \left(\frac{1}{4}\left(\o-\frac{\pi}{\beta}\right)^2+e(\bk)^2\right)^{\frac{1}{2}}
\le
 \left(\frac{1}{4}\left(\o-\frac{\theta(\beta)}{2}\right)^2+e(\bk)^2\right)^{\frac{1}{2}}+\frac{1}{2}\left|\frac{\pi}{\beta}-\frac{\theta(\beta)}{2}\right|\\
&\le
 \left(\left(\o-\frac{\theta(\beta)}{2}\right)^2+e(\bk)^2\right)^{\frac{1}{2}}
+\frac{\pi}{2\beta}\\
&\le
 \left(\left(\o-\frac{\theta(\beta)}{2}\right)^2+e(\bk)^2\right)^{\frac{1}{2}}
+\frac{\pi}{2}A(\beta,M)M^{l-1},
\end{align*}
which implies the result.
\end{proof}

\begin{remark} In fact we set up the support of the function $\chi$ in order
 that we can explicitly prove Lemma \ref{lem_properties_cutoff}
 \eqref{item_final_cutoff},\eqref{item_support_shift}.
\end{remark}

We fix $\phi\in\C$ in the following unless otherwise stated. Using the
cut-off functions $\chi_l$ $(l=N_{\beta},N_{\beta}+1,\cdots,N_h)$, let us
define the covariances $C_l:I_0^2\to\C$
$(l=N_{\beta},N_{\beta}+1,\cdots,N_h)$ by 
\begin{align*}
C_l(\orho\rho\bx s,\oeta\eta\by t)
&:=\frac{1}{\beta L^d}\sum_{\bk\in\G^*}\sum_{\o\in
 \cM_h}e^{i\<\bk,\bx-\by\>+i(\o-\frac{\pi}{\beta})(s-t)}\chi_l(\o,\bk)\\
&\quad\cdot h^{-1}\left(I_{2b}-e^{-\frac{i}{h}(\o-\frac{\theta(\beta)}{2})I_{2b}+\frac{1}{h}E(\phi)(\bk)}\right)^{-1}((\orho-1)b+\rho,(\oeta-1)b+\eta).
\end{align*}
By Lemma \ref{lem_time_discrete_covariance} and Lemma \ref{lem_properties_cutoff} \eqref{item_unity_cutoff}, 
\begin{align}
&\sum_{l=N_{\beta}}^{N_h}C_l(\orho\rho\bx s, \oeta\eta\by
 t)=e^{-i\frac{\pi}{\beta}(s-t)}C(\phi)(\orho\rho\bx s, \oeta\eta\by
 t),\quad (\forall (\orho,\rho,\bx,s), (\oeta,\eta,\by,t)\in I_0).\label{eq_covariance_sum_unity}
\end{align}
We collect basic properties of $C_l$ in the next lemma.
During the proof and in subsequent arguments we will need to consider a
function on $(\{1,2\}\times\cB)^2$ as a $2b\times 2b$ matrix and measure
the function by using the norm $\|\cdot\|_{2b\times 2b}$. Let us set the
rule for this identification. For any $j\in \{1,2,\cdots,2b\}$ there
uniquely exists $(\orho,\rho)\in\{1,2\}\times\cB$ such that
$j=(\orho-1)b+\rho$. This defines the bijection
$\varphi:\{1,2,\cdots,2b\}\to \{1,2\}\times \cB$. 
We identify a function $f:(\{1,2\}\times \cB)^2\to\C$ with the $2b\times
2b$ matrix $(f(\varphi(i),\varphi(j)))_{1\le i,j\le 2b}$. We will apply this
rule to $C_l(\cdot \bx s,\cdot \by t):(\{1,2\}\times\cB)^2\to\C$ for
fixed $(\bx,s),(\by,t)\in\G\times [0,\beta)_h$ in particular.

\begin{lemma}\label{lem_real_covariances_properties}
Assume that 
\begin{align}
&h\ge \max\left\{2,\sc,\sup_{\bk\in\R^d}\|E(\phi)(\bk)\|_{2b\times
 2b}\right\},\label{eq_huge_h_condition}\\
&L\ge\label{eq_huge_L_condition}\\
&\max\left\{\frac{\max_{j\in\{1,2,\cdots,d\}}M^{-N_{\beta}/\sn_j}}{\min\{M^{\sa
 N_{\beta}},1\}},\frac{\max_{j\in\{1,2,\cdots,d\}}M^{-N_{\beta}/\sn_j}|\frac{\pi}{\beta}-\frac{\theta(\beta)}{2}|^{-1}+
|\frac{\pi}{\beta}-\frac{\theta(\beta)}{2}|^{-3}}{\min\{M^{(\sa-1)
 N_{\beta}},1,|\frac{\pi}{\beta}-\frac{\theta(\beta)}{2}|^{-1}\}}\right\}.\notag
\end{align}
Then there exists a positive constant $\hat{c}$ depending only on $d$,
 $b$, $(\hbv_j)_{j=1}^d$, $\sa$, $(\sn_j)_{j=1}^d$, $\sc$, $M$, $\chi$
 such that the following statements hold true.
\begin{enumerate}[(i)]
\item\label{item_preliminary_covariance_determinant_bound}
\begin{align}
&|\det(\<\bu_i,\bv_j\>_{\C^m}C_l(X_i,Y_j))_{1\le i,j\le n}|\label{eq_covariance_determinant_bound_pre}\\
&\le \left(\hat{c} \left(1_{l\neq N_{\beta}}\min\{M^{\sa l},1\}+
1_{l= N_{\beta}}\beta^{-1}\min\left\{M^{(\sa-1)N_{\beta}},1,
\left|\frac{\pi}{\beta}-\frac{\theta(\beta)}{2}\right|^{-1}\right\}\right)
\right)^n,\notag\\
&\left|\det\left(\<\bu_i,\bv_j\>_{\C^m}\sum_{p=N_{\beta}}^{\hat{N}_{\beta}-1}C_p(X_i,Y_j)\right)_{1\le
 i,j\le n}\right|
\le \left(\hat{c}M^{\hat{N}_{\beta}}\min\left\{1,
\left|\frac{\pi}{\beta}-\frac{\theta(\beta)}{2}\right|^{-1}\right\}\right)^n,
\label{eq_covariance_determinant_bound_lower_sum}\\
&(\forall m,n\in\N,\ \bu_i,\bv_i\in\C^m\text{ with
 }\|\bu_i\|_{\C^m},\|\bv_i\|_{\C^m}\le 1,\ X_i,Y_i\in I_0\
 (i=1,2,\cdots,n),\notag\\
&\quad l\in \{N_{\beta},N_{\beta}+1,\cdots,N_h\}).\notag
\end{align}
\item\label{item_preliminary_covariance_decay_bound}
\begin{align}
&\|\tilde{C}_l\|_{1,\infty}\le \hat{c} \left(1_{l\ge 0}M^{-l}+1_{l<
 0}M^{(\sa-1-\sum_{j=1}^d\frac{1}{\sn_j})l}\right),\label{eq_covariance_decay_bound_pre}\\
&(\forall l\in \{N_{\beta}+1,N_{\beta}+2,\cdots,N_h\}),\notag\\
&\|\tilde{C}_{N_{\beta}}\|_{1,\infty}\le
 \hat{c}\left|\frac{\pi}{\beta}-\frac{\theta(\beta)}{2}\right|^{-1}\prod_{j=1}^d\left(1+\left|\frac{\pi}{\beta}-\frac{\theta(\beta)}{2}\right|^{-\frac{1}{\sn_j}}
\right).\label{eq_covariance_decay_bound_end_pre}
\end{align}
\item\label{item_preliminary_covariance_coupled_decay_bound}
\begin{align}
&\|\tilde{C}_l\|\le \hat{c}\left(1_{l\ge 0}+1_{l<
 0}M^{(\sa-1-\sum^d_{j=1}\frac{1}{\sn_j})l}\right),\quad (\forall l\in \{N_{\beta}+1,N_{\beta}+2,\cdots,N_h\}).\label{eq_covariance_coupled_decay_bound_pre}\\
&\left\|\sum_{p=l'}^{N_h}\tilde{C}_p\right\|\le \hat{c}\left(
\sup_{(\orho,\rho,s,\xi),(\oeta,\eta,t,\zeta)\atop
\in \{1,2\}\times\cB\times [0,\beta)_h\times
 \{1,-1\}}\left|\sum_{p=l'}^{N_h}\tilde{C}_p(\orho\rho \b0 s \xi,\oeta\eta \b0 t \zeta)
\right|+1\right),\label{eq_covariance_coupled_decay_bound_sum_pre}\\
&(\forall l'\in \{N_{\beta}+1,N_{\beta}+2,\cdots,N_h\}\cap\Z_{\ge 0}).\notag
\end{align}
\end{enumerate}
\end{lemma}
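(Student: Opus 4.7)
The plan is to derive all three estimates by first reducing the $2b \times 2b$ matrix-valued covariance to a sum of scalar covariances via unitary diagonalization of $E(\phi)(\bk)$, and then controlling each scalar piece by combining the cutoff properties of Lemma \ref{lem_properties_cutoff} with the quantitative invertibility estimate \eqref{eq_regularization_lower_bound}. Write $E(\phi)(\bk) = U(\phi)(\bk)^* D(\phi)(\bk) U(\phi)(\bk)$ where the entries of $D(\phi)(\bk)$ are $\pm \sqrt{e_\rho(\bk)^2 + |\phi|^2}$; this decomposes $C_l = \sum_{\orho'\rho'} C_{l,\orho'\rho'}$ with each $C_{l,\orho'\rho'}$ a scalar covariance having kernel
\[
C_{l,\orho'\rho'}(X,Y)=\frac{1}{\beta L^d}\sum_{\bk,\o}e^{i\<\bk,\bx-\by\>+i(\o-\frac{\pi}{\beta})(s-t)}\chi_l(\o,\bk)\frac{U(\phi)(\bk)_{\cdot,j'}\overline{U(\phi)(\bk)}_{\cdot,j'}}{h(1-e^{-\frac{i}{h}(\o-\theta(\beta)/2)+\frac{1}{h}d_{j'}(\bk)})},
\]
where $j' = (\orho'-1)b + \rho'$.

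For part \eqref{item_preliminary_covariance_determinant_bound}, I would follow the proof scheme of \cite[Proposition 4.2]{K_BCS} recalled in the proof of Lemma \ref{lem_characterization_covariance} \eqref{item_P_S_determinant_bound}: express each $C_{l,\orho'\rho'}(X,Y)$ as an inner product in $L^2(\G^* \times \cM_h)$ of two unit-bounded vectors scaled by $\bigl(\hat c(\beta L^d h)^{-1} \sum_{\bk,\o}\bigr.$ $\bigl. 1_{\chi_l(\o,\bk)\ne 0}|h(1-e^{\cdots})|^{-1}\bigr)^{1/2}$, apply the Pedra-Salmhofer bound \cite[Proposition 4.1]{K_BCS}, and then combine band contributions via \cite[Lemma A.1]{K_BCS}. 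The prefactor is controlled by using $|h(1-e^{-\frac{i}{h}(\o-\theta/2)+\frac{1}{h}d_{j'}})|^{-1} \le c\bigl((\o-\theta(\beta)/2)^2 + d_{j'}(\bk)^2\bigr)^{-1/2}$, which by \eqref{eq_regularization_lower_bound} and \eqref{eq_dispersion_upper_lower_bound} is bounded by $c A(\beta,M)^{-1}M^{-(l-1)}$ on the support of $\chi_l$, and then Lemma \ref{lem_properties_cutoff} \eqref{item_support_size_cutoff} gives the support-size factor $M^l \min\{M^{\sa l},1\}$. For $l = N_\beta$ the single Matsubara frequency $\pi/\beta$ survives (Lemma \ref{lem_properties_cutoff} \eqref{item_final_cutoff}), so the factor $\beta^{-1}$ appears and the denominator gains the explicit bound $|\pi/\beta - \theta(\beta)/2|^{-1}$. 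For \eqref{eq_covariance_determinant_bound_lower_sum} one groups all scales $p \le \hat N_\beta - 1$ to recover the $M^{\hat N_\beta}$ support size together with the $|\pi/\beta - \theta(\beta)/2|^{-1}$ bound.

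For parts \eqref{item_preliminary_covariance_decay_bound} and \eqref{item_preliminary_covariance_coupled_decay_bound}, the plan is to gain decay in $\bx - \by$ and $s - t$ by repeated integration by parts in $\bk$ and summation by parts in $\o$, exchanging powers of $(x_j - y_j)$ and $(s - t)$ for derivatives of the integrand. Each derivative of $\chi_l$ is bounded via Lemma \ref{lem_properties_cutoff} \eqref{item_derivative_cutoff}, and derivatives of $h^{-1}(I - e^{-\frac{i}{h}(\o - \theta(\beta)/2) + \frac{1}{h}E(\phi)(\bk)})^{-1}$ are estimated using the composition formula \eqref{eq_composition_derivative}, the lower bound \eqref{eq_regularization_lower_bound}, and the derivative bounds \eqref{eq_dispersion_derivative}, \eqref{eq_hopping_matrix_derivative} for $E$ and $e^2$. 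Carrying out $\sn_j$ derivatives in each spatial coordinate yields the factor $M^{-l/\sn_j}$ (for $l \ge 0$), so after summing the bound over $\bx \in \G$ and $s \in [0,\beta)_h$ against the supremum over the remaining variable, the $\ell^1$ trace of the decay in each direction contributes a factor independent of $L$, producing the declared upper bound $M^{-l}$ for $l \ge 0$ and $M^{(\sa - 1 - \sum_j 1/\sn_j)l}$ for $l < 0$. Lemma \ref{lem_discrete_continuous_estimate} is invoked to replace the discrete $\bk$-sum by an integral up to an $L^{-1}$ error, which is negligible under \eqref{eq_huge_L_condition}. The final-scale bound \eqref{eq_covariance_decay_bound_end_pre} is handled similarly using that only $\o = \pi/\beta$ contributes and replacing $M^{-N_\beta}$ by $|\pi/\beta - \theta(\beta)/2|^{-1}$ in the regularization. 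For \eqref{eq_covariance_coupled_decay_bound_pre}, the second component $(1+\beta^{-1})\|\tilde C_l\|_{1,\infty}$ in $\|\tilde C_l\|$ combines with part \eqref{item_preliminary_covariance_decay_bound} (noting $\beta^{-1} \le M^{-N_\beta}\le M^{-l} \cdot M^{l-N_\beta}$ with $l \ge N_\beta + 1$), while the $\|\cdot\|'_{1,\infty}$ component is bounded by observing that, for fixed spatial displacement, $\sum_{X\in I^0}|\tilde C_l(X_0,X+s)|$ is controlled by the time-translation invariance and a spatial $\ell^1$ decay argument. The estimate \eqref{eq_covariance_coupled_decay_bound_sum_pre} follows the same scheme after a dyadic sum in the scale index.

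The main obstacle will be the bookkeeping in the integration-by-parts argument across the threshold $l = 0$: for $l < 0$ the effective dispersion scale is $A(\beta,M)M^l$ which is smaller than unity, so derivatives of $(I - e^{-\frac{i}{h}(\o-\theta/2)+\frac{1}{h}E(\phi)})^{-1}$ in $\hat k_j$ are genuinely suppressed by $M^{-l/\sn_j}$ only because of the structural bounds \eqref{eq_dispersion_derivative}, \eqref{eq_hopping_matrix_derivative}, while for $l \ge 0$ one must switch to the coarser bound in Lemma \ref{lem_properties_cutoff} \eqref{item_derivative_cutoff} that is uniform in $l$. Ensuring that the Faà di Bruno expansion of the composed map
\[
(\o,\bk)\mapsto \chi_l(\o,\bk)\cdot h^{-1}\bigl(I - e^{-\frac{i}{h}(\o-\theta(\beta)/2)I_{2b}+\frac{1}{h}E(\phi)(\bk)}\bigr)^{-1}
\]
yields the correct combined exponent $(\sa - 1 - \sum_{j=1}^d 1/\sn_j)l$ for $l < 0$ — and not merely $-\sum_j l/\sn_j$ — requires using an additional factor of $M^{-l\sa}$ coming from the inverse of the eigenvalue $\sqrt{e_\rho(\bk)^2 + |\phi|^2} \ge e(\bk) \ge c M^{l-1}$ plus the $d+2$ sharp derivatives in each spatial variable, which is precisely the content that \eqref{eq_hopping_matrix_derivative} provides uniformly in $\phi$. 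Once this sharp derivative estimate is in place, the remaining parts of the lemma are obtained by the counting of Lemma \ref{lem_properties_cutoff} \eqref{item_support_size_cutoff}.
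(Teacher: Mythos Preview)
Your proposal follows essentially the same scheme as the paper: a Gram/Pedra--Salmhofer argument for part \eqref{item_preliminary_covariance_determinant_bound} and a difference-quotient (integration-by-parts) argument in $(\o,\bk)$ for parts \eqref{item_preliminary_covariance_decay_bound}--\eqref{item_preliminary_covariance_coupled_decay_bound}, with Lemma \ref{lem_properties_cutoff} supplying the cutoff derivative and support-size bounds. Two points are worth flagging.

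First, for \eqref{item_preliminary_covariance_determinant_bound} the paper does \emph{not} diagonalize $E(\phi)(\bk)$; it applies Gram's inequality directly in $L^2(\{1,2\}\times\cB\times\G^*\times\cM_h)$ by writing $C_l(X,Y)=\langle f_X^l,g_Y^l\rangle$ with one leg carrying the full matrix column $(I_{2b}-e^{-\frac{i}{h}(\o-\theta(\beta)/2)I_{2b}+\frac{1}{h}E(\phi)(\bk)})^{-1}(\cdot,(\orho-1)b+\rho)$. Your diagonalization route (Pedra--Salmhofer on each band plus Cauchy--Binet) also works and mirrors the proof of Lemma \ref{lem_characterization_covariance}\eqref{item_P_S_determinant_bound}, but it is slightly heavier here and you must be careful not to carry the diagonalization into part \eqref{item_preliminary_covariance_decay_bound}, since the unitary $U(\phi)(\bk)$ need not be smooth in $\bk$ at eigenvalue crossings. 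The paper bounds $\bk$-derivatives of $B(\o,\bk)^{-1}:=h^{-1}(I_{2b}-e^{\cdots})^{-1}$ directly via the matrix-inverse derivative formula together with \eqref{eq_hopping_matrix_derivative}, never differentiating $U(\phi)(\bk)$.

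Second, your explanation of where the exponent $\sa$ enters is off. On $\supp\chi_l$ one has only the \emph{combined} lower bound \eqref{eq_regularization_lower_bound} for $\sqrt{(\o-\theta(\beta)/2)^2+e(\bk)^2}$, not $e(\bk)\ge cM^{l-1}$ separately; hence the propagator inverse contributes a single power $M^{-l}$, not $M^{-\sa l}$. The factor $\min\{M^{\sa l},1\}$ comes instead from the \emph{support-size} estimate of Lemma \ref{lem_properties_cutoff}\eqref{item_support_size_cutoff}, which in turn relies on the measure condition \eqref{eq_dispersion_measure}. In the decay bound one takes $d+2$ derivatives in each direction (not $\sn_j$ derivatives); the contribution $M^{-l/\sn_j}$ for $l\ge 0$ versus $M^{-(d+2)l/\sn_j}$ for $l<0$ is exactly the dichotomy in Lemma \ref{lem_properties_cutoff}\eqref{item_derivative_cutoff} and the derivative bound \eqref{eq_hopping_matrix_derivative}. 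With these corrections the bookkeeping goes through as in the paper.
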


\begin{remark} Here we need to assume that $h$ is large depending on
 $\phi$ as stated in \eqref{eq_huge_h_condition}. This is a notable
 difference from \cite[\mbox{Lemma 4.10}]{K_BCS} where we had no
 condition depending on $\phi$. We assume the $\phi$-dependent condition
 \eqref{eq_huge_h_condition} in order to simplify the proof of the
 lemma. We can see from Lemma \ref{lem_final_Grassmann_formulation}
 \eqref{item_final_Grassmann_formulation} that this condition does not
affect our goal since we take the limit $h\to \infty$ before the
 integration with $\phi$ in the final formulation.
\end{remark}

\begin{proof}[Proof of Lemma \ref{lem_real_covariances_properties}]
Let $x\in [-\pi h,\pi h]$, $\bk\in \R^d$, $\delta\in \{1,-1\}$ and
 $e_{\rho}(\bk)$ be an eigenvalue of $E(\bk)$. By the condition
 \eqref{eq_huge_h_condition}, $h\ge \sqrt{e_{\rho}(\bk)^2+|\phi|^2}$. Also,
$|\frac{1}{2h}(x-\frac{\theta(\beta)}{2})|\le
 \frac{\pi}{2}+\frac{\pi}{2h\beta}\le \frac{3\pi}{4}$. By using these
 inequalities and \eqref{eq_dispersion_upper_lower_bound},
\begin{align}
&\left|h\left(1-e^{-\frac{i}{h}(x-\frac{\theta(\beta)}{2})+\frac{\delta}{h}\sqrt{e_{\rho}(\bk)^2+|\phi|^2}}\right)\right|^2\label{eq_integrand_lower_bound}\\
&\ge
 h^2\left(1-e^{-\frac{1}{h}\sqrt{e_{\rho}(\bk)^2+|\phi|^2}}\right)^2+4h^2 
e^{-\frac{1}{h}\sqrt{e_{\rho}(\bk)^2+|\phi|^2}}
\sin^{2}\left(\frac{1}{2h}\left(x-\frac{\theta(\beta)}{2}\right)\right)\notag\\
&\ge 1_{\sqrt{e_{\rho}(\bk)^2+|\phi|^2}>h} h^2(1-e^{-1})^2\notag\\
&\quad+1_{\sqrt{e_{\rho}(\bk)^2+|\phi|^2}\le h} 
\left(e^{-2}(e_{\rho}(\bk)^2+|\phi|^2)+4h^2e^{-1}\sin^2\left(\frac{1}{2h}\left(x-\frac{\theta(\beta)}{2}\right)\right)\right)\notag\\
&\ge
 c\left(e(\bk)^2+|\phi|^2+\left(x-\frac{\theta(\beta)}{2}\right)^2\right).\notag
\end{align}
It follows from this inequality
 and \eqref{eq_hopping_matrix_derivative} that 
\begin{align}
&\left\|h^{-1}\left(I_{2b}-e^{-\frac{i}{h}(x-\frac{\theta(\beta)}{2})I_{2b}+\frac{1}{h}E(\phi)(\bk)}\right)^{-1}\right\|_{2b\times
 2b}\label{eq_integrand_bare_bound}\\
&\le
 c\left(\left(x-\frac{\theta(\beta)}{2}\right)^2+e(\bk)^2+|\phi|^2\right)^{-\frac{1}{2}},\notag\\
&(\forall (x,\bk)\in [-\pi h,\pi h]\times \R^d\text{ satisfying
 }(x-\theta(\beta)/2)^2+e(\bk)^2\neq 0),\notag\\
&\left\|\left(\frac{\partial}{\partial
 \hat{k}_j}\right)^nE(\phi)\left(\sum_{i=1}^d\hat{k}_i\hbv_i\right)\right\|_{2b\times
 2b}
\le \sc\left( 1_{n\le \sn_j}e\left(\sum_{i=1}^d\hat{k}_i\hbv_i\right)^{1-\frac{n}{\sn_j}}
+ 1_{\sn_j<n}\right),\label{eq_hopping_matrix_field_derivative}\\
&(\forall (\hat{k}_1,\cdots,\hat{k}_d)\in \R^d,\ n\in \{1,2,\cdots,d+2\},\ j\in
 \{1,2,\cdots,d\}).\notag
\end{align}
The following inequality will also be useful. 
\begin{align}
\left|\o-\frac{\theta(\beta)}{2}\right|\ge
 h\left|\sin\left(\frac{\o-\pi/\beta}{2h}\right)\right|,\quad (\forall
 \o\in \cM_h).\label{eq_matsubara_shift}
\end{align}

\eqref{item_preliminary_covariance_determinant_bound}: Set the Hilbert
 space $\cH$ by 
 $\cH:=L^2(\{1,2\}\times \cB\times \G^*\times \cM_h)$. The inner product
 $\<\cdot,\cdot\>_{\cH}$ of $\cH$ is defined by 
\begin{align*}
\<f,g\>_{\cH}:=\frac{1}{\beta L^d}\sum_{K\in
 \{1,2\}\times\cB\times\G^*\times\cM_h}\overline{f(K)}g(K),\quad (f,g\in
 \cH).
\end{align*}
We are going to apply Gram's inequality. 
Let us define $f_{X}^l$, $g_X^l$, $f_X^{>}$, $g_X^{>}\in \cH$ $(X\in
 I_0,\ l\in \{N_{\beta},N_{\beta}+1,\cdots,N_h\})$ by 
\begin{align*}
&f_{\orho\rho \bx s}^{\xi}(\oeta,\eta,\bk,\o)\\
&:=e^{-i\<\bk,\bx\>-is(\o-\frac{\pi}{\beta})}
\left(1_{\xi\in \{N_{\beta},N_{\beta}+1,\cdots,N_h\}}
 \chi_{\xi}(\o,\bk)^{\frac{1}{2}}
+1_{\xi=>}\left(\sum_{j=N_{\beta}}^{\hat{N}_{\beta}-1}\chi_j(\o,\bk)
\right)^{\frac{1}{2}}\right)\\
&\quad\cdot 1_{(\orho,\rho)=(\oeta,\eta)}
\left(
\left(\o-\frac{\theta(\beta)}{2}\right)^2
+e(\bk)^2+|\phi|^2\right)^{-\frac{1}{4}},\\
&g_{\orho\rho \bx s}^{\xi}(\oeta,\eta,\bk,\o)\\
&:=e^{-i\<\bk,\bx\>-is(\o-\frac{\pi}{\beta})}
\left(1_{\xi\in \{N_{\beta},N_{\beta}+1,\cdots,N_h\}}
 \chi_{\xi}(\o,\bk)^{\frac{1}{2}}
+1_{\xi=>}\left(\sum_{j=N_{\beta}}^{\hat{N}_{\beta}-1}\chi_j(\o,\bk)
\right)^{\frac{1}{2}}\right)\\
&\quad \cdot 
\left(
\left(\o-\frac{\theta(\beta)}{2}\right)^2
+e(\bk)^2+|\phi|^2\right)^{\frac{1}{4}}\\
&\quad\cdot 
h^{-1}\left(I_{2b}-e^{-\frac{i}{h}(\o-\frac{\theta(\beta)}{2})I_{2b}+\frac{1}{h}E(\phi)(\bk)}\right)^{-1}((\oeta-1)b+\eta,(\orho-1)b+\rho),\\
&(\xi\in \{N_{\beta},N_{\beta}+1,\cdots,N_h,>\}).
\end{align*}
Observe that
\begin{align*}
&\<f_X^l,g_Y^l\>_{\cH}=C_l(X,Y),\quad
 \<f_X^{>},g_Y^{>}\>_{\cH}=\sum_{p=N_{\beta}}^{\hat{N}_{\beta}-1}C_p(X,Y),\\
&(\forall X,Y\in I_0,\ l\in \{N_{\beta},N_{\beta}+1,\cdots,N_h\}).
\end{align*}
It follows from Lemma \ref{lem_properties_cutoff}
 \eqref{item_support_size_cutoff},\eqref{item_final_cutoff},  
\eqref{eq_support_description},
 \eqref{eq_integrand_bare_bound}, \eqref{eq_matsubara_shift}
 that for any $X\in I_0$, $l\in
 \{N_{\beta}+1,N_{\beta}+2,\cdots,N_h\}$, 
\begin{align}
&\|f_X^l\|_{\cH}^2,\ \|g_X^l\|_{\cH}^2\le
 c(d,M,\chi,\sc,\sa,(\hbv_j)_{j=1}^d)\min\{M^{\sa
 l},1\},\label{eq_gram_inequality_pre}\\
&\|f_X^{N_{\beta}}\|_{\cH}^2,\ \|g_X^{N_{\beta}}\|_{\cH}^2\notag\\
&\le
  \frac{c}{\beta
 L^d}\sum_{\bk\in\G^*}\chi(M^{-N_{\beta}}A(\beta,M)^{-1}e(\bk))
\left(\left(\frac{\pi}{\beta}-\frac{\theta(\beta)}{2}\right)^2+e(\bk)^2+|\phi|^2\right)^{-\frac{1}{2}},\notag\\
&\|f_X^{>}\|_{\cH}^2,\ \|g_X^{>}\|_{\cH}^2\notag\\
&\le
 c(d,M,\chi,\sc,\sa,(\hbv_j)_{j=1}^d) M^{\hat{N}_{\beta}}
\frac{1}{L^d}\sum_{\bk\in\G^*}
\left(\left(\frac{\pi}{\beta}-\frac{\theta(\beta)}{2}\right)^2+e(\bk)^2+|\phi|^2\right)^{-\frac{1}{2}}.\notag
\end{align}
By \eqref{eq_dispersion_upper_bound}, \eqref{eq_dispersion_derivative},
\eqref{eq_dispersion_measure_divided}, the support property of $\chi$,
 $A(\beta,M)\le M$,
 \eqref{eq_derivative_cutoff_essence}, the assumption
 \eqref{eq_huge_L_condition} and Lemma
 \ref{lem_discrete_continuous_estimate},
\begin{align}
&\frac{1}{
 L^d}\sum_{\bk\in\G^*}\chi(M^{-N_{\beta}}A(\beta,M)^{-1}e(\bk))
\left(\left(\frac{\pi}{\beta}-\frac{\theta(\beta)}{2}\right)^2+e(\bk)^2+|\phi|^2\right)^{-\frac{1}{2}}\label{eq_gram_inequality_inside}\\
&\le D_d\int_{\G_{\infty}^*}d\bk\chi(M^{-N_{\beta}}A(\beta,M)^{-1}e(\bk))
\left(\left(\frac{\pi}{\beta}-\frac{\theta(\beta)}{2}\right)^2+e(\bk)^2+|\phi|^2\right)^{-\frac{1}{2}}\notag\\
&\quad + c(d,(\hbv_j)_{j=1}^d)L^{-1}\sup_{j\in
 \{1,2,\cdots,d\}}\sup_{\bk\in \R^d}\notag\\
&\qquad\quad\cdot 
\left|\frac{\partial}{\partial k_j}\left(\chi(M^{-N_{\beta}}A(\beta,M)^{-1}e(\bk))
\left(\left(\frac{\pi}{\beta}-\frac{\theta(\beta)}{2}\right)^2+e(\bk)^2+|\phi|^2\right)^{-\frac{1}{2}}\right)\right|\notag\\
&\le c(D_d)\min\left\{\int_{\G_{\infty}^*}d\bk1_{e(\bk)\le
 2M^{N_{\beta}+1}}e(\bk)^{-1},\quad
 \left|\frac{\pi}{\beta}-\frac{\theta(\beta)}{2}\right|^{-1}\right\}\notag\\
&\quad +c(d,M,\chi,\sc,(\hbv_j)_{j=1}^d)
L^{-1}\left(\max_{j\in \{1,2,\cdots,d\}}M^{-\frac{N_{\beta}}{\sn_j}}
\left|\frac{\pi}{\beta}-\frac{\theta(\beta)}{2}\right|^{-1}+
\left|\frac{\pi}{\beta}-\frac{\theta(\beta)}{2}\right|^{-3}\right)\notag\\
&\le c(d,M,\chi,\sc,(\hbv_j)_{j=1}^d,D_d,\sa)\min\left\{
M^{(\sa-1)N_{\beta}},1,\left|\frac{\pi}{\beta}-\frac{\theta(\beta)}{2}\right|^{-1}
\right\},\notag\\
&\frac{1}{L^d}\sum_{\bk\in\G^*}
\left(\left(\frac{\pi}{\beta}-\frac{\theta(\beta)}{2}\right)^2+e(\bk)^2+|\phi|^2\right)^{-\frac{1}{2}}\notag\\
&\le D_d\int_{\G_{\infty}^*}d\bk\left(\left(\frac{\pi}{\beta}-\frac{\theta(\beta)}{2}\right)^2+e(\bk)^2+|\phi|^2\right)^{-\frac{1}{2}}\notag\\
&\quad + c(d,(\hbv_j)_{j=1}^d)L^{-1}\sup_{j\in
 \{1,2,\cdots,d\}}\sup_{\bk\in \R^d}
\left|\frac{\partial}{\partial k_j}
\left(\left(\frac{\pi}{\beta}-\frac{\theta(\beta)}{2}\right)^2+e(\bk)^2+|\phi|^2\right)^{-\frac{1}{2}}\right|\notag\\
&\le c(d,D_d,\sc,(\hbv_j)_{j=1}^d)\left(
\min\left\{1,\left|\frac{\pi}{\beta}-\frac{\theta(\beta)}{2}\right|^{-1}\right\}
+L^{-1}\left|\frac{\pi}{\beta}-\frac{\theta(\beta)}{2}\right|^{-3}\right)\notag\\
&\le
 c(d,D_d,\sc,(\hbv_j)_{j=1}^d)\min\left\{1,\left|\frac{\pi}{\beta}-\frac{\theta(\beta)}{2}\right|^{-1}\right\}.\notag
\end{align}
We can apply Gram's inequality in the Hilbert space $\C^m\otimes \cH$
 together with \eqref{eq_gram_inequality_pre},
 \eqref{eq_gram_inequality_inside} to derive the claimed bounds.

\eqref{item_preliminary_covariance_decay_bound}: By Lemma
 \ref{lem_properties_cutoff}
 \eqref{item_regularity_cutoff},\eqref{item_support_shift} and
 \eqref{eq_integrand_lower_bound} the matrix-valued functions 
 \begin{align*}
&(x,\bk)\mapsto
  \chi_l(x,\bk)\left(I_{2b}-e^{-\frac{i}{h}(x-\frac{\theta(\beta)}{2})I_{2b}+\frac{1}{h}E(\phi)(\bk)}\right)^{-1}:\R^{d+1}\to
  \Mat(2b,\C),\\
&(l=N_{\beta}+1,N_{\beta}+2,\cdots,N_h)
\end{align*}
are well-defined and $C^{\infty}$-class. Indeed, the matrix-valued
 function with $l$ is identically zero in the open set 
\begin{align}
\left\{(x,\bk)\in\R^{d+1}\ \Big|\
 \sqrt{h^2\sin^2\left(\frac{x-\pi/\beta}{2h}\right)+e(\bk)^2}<\frac{8}{5}A(\beta,M)M^{l-1}\right\}.
\label{eq_confirmation_matrix_open_set}
\end{align}
By Lemma \ref{lem_properties_cutoff} \eqref{item_support_shift},
 \eqref{eq_integrand_lower_bound} and the periodicity, for $(x,\bk)\in\R^{d+1}$ satisfying 
\begin{align*}
\sqrt{h^2\sin^2\left(\frac{x-\pi/\beta}{2h}\right)+e(\bk)^2}\ge
 \frac{8}{5}A(\beta,M)M^{l-1},
\end{align*}
\begin{align*}
\left|\det\left(I_{2b}-e^{-\frac{i}{h}(x-\frac{\theta(\beta)}{2})I_{2b}+\frac{1}{h}E(\phi)(\bk)}\right)\right|\ge
 \left(c\left(\frac{8}{5}-\frac{\pi}{2}\right)A(\beta,M)M^{l-1}\right)^{2b}>0.
\end{align*}
This implies that at any point belonging to the complement of the set
 \eqref{eq_confirmation_matrix_open_set} the matrix
 $(I_{2b}-e^{-\frac{i}{h}(x-\frac{\theta(\beta)}{2})I_{2b}+\frac{1}{h}E(\phi)(\bk)})^{-1}$ is well-defined and infinitely differentiable. Thus the claim follows.
For any $\o\in \cM_h$,
 $\o-\theta(\beta)/2\neq 0$ (mod $2\pi h$) and thus the matrix-valued
 function 
 \begin{align*}
\bk\mapsto
  \chi_{N_{\beta}}(\o,\bk)\left(I_{2b}-e^{-\frac{i}{h}(\o-\frac{\theta(\beta)}{2})I_{2b}+\frac{1}{h}E(\phi)(\bk)}\right)^{-1}:\R^{d}\to
  \Mat(2b,\C)
\end{align*}
is well-defined and $C^{\infty}$-class. By keeping these basic facts in
 mind and using the periodicity we can derive that
for $n\in \N$, $j\in \{1,2,\cdots,d\}$, 
\begin{align}
&\left(\frac{\beta}{2\pi}(e^{-i\frac{2\pi}{\beta}(s-t)}-1)\right)^nC_l(\cdot\bx
 s,\cdot \by t)\label{eq_time_periodicity_application}\\
&=\frac{1}{\beta
 L^d}\sum_{\bk\in\G^*}\sum_{\o\in\cM_h}e^{i\<\bk,\bx-\by\>+i(\o-\frac{\pi}{\beta})(s-t)} \prod_{m=1}^n\left(\frac{\beta}{2\pi}\int_0^{\frac{2\pi}{\beta}}dr_m
\right)\notag\\
&\quad\cdot \left(\frac{\partial}{\partial
 r}\right)^n\chi_l(r,\bk)h^{-1}\left(I_{2b}-e^{-\frac{i}{h}(r-\frac{\theta(\beta)}{2})I_{2b}+\frac{1}{h}E(\phi)(\bk)}\right)^{-1}\Bigg|_{r=\o+\sum_{m=1}^nr_m},\notag\\
&\quad (l\in \{N_{\beta}+1,N_{\beta}+2,\cdots,N_h\}),\notag\\
&\left(\frac{L}{2\pi}(e^{-i\frac{2\pi}{L}\<\bx-\by,\hbv_j\>}-1)\right)^nC_{l'}(\cdot\bx
 s,\cdot \by t)\label{eq_space_periodicity_application}\\
&=\frac{1}{\beta
 L^d}\sum_{\bk\in\G^*}\sum_{\o\in\cM_h}e^{i\<\bk,\bx-\by\>+i(\o-\frac{\pi}{\beta})(s-t)}\prod_{m=1}^n\left(\frac{L}{2\pi}\int_0^{\frac{2\pi}{L}}dp_m
\right)\notag\\
&\quad\cdot \left(\frac{\partial}{\partial
 \hat{k}_j}\right)^n\chi_{l'}(\o,\bk+\hat{k}_j\hbv_j)h^{-1}\left(I_{2b}-e^{-\frac{i}{h}(\o-\frac{\theta(\beta)}{2})I_{2b}+\frac{1}{h}E(\phi)(\bk+\hat{k}_j\hbv_j)}\right)^{-1}\Bigg|_{\hat{k}_j=\sum_{m=1}^np_m},\notag\\
&\quad (l'\in \{N_{\beta},N_{\beta}+1,\cdots,N_h\}).\notag
\end{align}

For $(\o,\bk)\in\R^{d+1}$, set 
\begin{align*}
B(\o,\bk):=h\left(I_{2b}-e^{-\frac{i}{h}(\o-\frac{\theta(\beta)}{2})I_{2b}+\frac{1}{h}E(\phi)(\bk)}\right).
\end{align*}
Observe that
\begin{align}
&\left\|\left(\frac{\partial}{\partial
 \hat{k}_j}\right)^nB(\o,\bk)^{-1}\right\|_{2b\times 2b}\label{eq_inverse_matrix_derivative_bound}\\ 
&\le
 c(d)\sum_{m=1}^n\prod_{u=1}^m\left(\sum_{l_u=1}^n\right)1_{\sum_{u=1}^ml_u=n}\notag\\
&\quad\cdot \prod_{p=1}^m\left\|
B(\o,\bk)^{-1}\left(\frac{\partial}{\partial \hat{k}_j}\right)^{l_p}B(\o,\bk)
\right\|_{2b\times 2b}\|B(\o,\bk)^{-1}\|_{2b\times 2b},\notag\\
&(\forall n\in \{1,2,\cdots,d+2\},\ \o\in [-\pi h,\pi h],\
 (\hat{k}_1,\cdots,\hat{k}_d)\in\R^d\text{ satisfying }\notag\\
&\quad (\o-\theta(\beta)/2)^2+e(\bk)^2\neq 0,\  j\in \{0,1,\cdots,d\}),\notag
\end{align}
where we set $\bk:=\sum_{j=1}^d\hat{k}_j\hbv_j$,
$\partial/\partial \hat{k}_0:=\partial /\partial \o$. This
 inequality follows from e.g. the formula \cite[\mbox{(C.1)}]{K_RG}. By
 using the assumption $h\ge \sup_{\bk\in\R^d}\|E(\phi)(\bk)\|_{2b\times
 2b}$ and \eqref{eq_integrand_bare_bound} we can derive from
 \eqref{eq_inverse_matrix_derivative_bound} that
\begin{align}
&\left\|\left(\frac{\partial}{\partial
 \o}\right)^nB(\o,\bk)^{-1}\right\|_{2b\times
 2b}
\le c(d)\sum_{m=1}^nh^{m-n}\|B(\o,\bk)^{-1}\|_{2b\times 2b}^{m+1}
\label{eq_integrand_matsubara_derivative}\\
&\le
 c(d)\sum_{m=1}^nh^{m-n}\left(\left(\o-\frac{\theta(\beta)}{2}\right)^2+e(\bk)^2+|\phi|^2\right)^{-\frac{m+1}{2}},\notag\\
&(\forall n\in \{1,2,\cdots,d+2\},\ \o\in [-\pi h,\pi h],\
 \bk\in\R^d\text{ satisfying }\notag\\
&\quad (\o-\theta(\beta)/2)^2+e(\bk)^2\neq
 0).\notag
\end{align}
On the other hand, it follows from the inequality 
\begin{align*}
\left(\frac{e(\bk)}{\sc}\right)^s\le
 \left(\frac{e(\bk)}{\sc}\right)^t,\quad (\forall \bk\in \R^d,\ s,t\in
 \R_{\ge 0}\text{ satisfying }t\le s)
\end{align*}
and \eqref{eq_hopping_matrix_field_derivative} that
\begin{align}
&\left\|\left(\frac{\partial}{\partial
 \hat{k}_j}\right)^mE(\phi)\left(\sum_{i=1}^d\hat{k}_i\hbv_i\right)\right\|_{2b\times 2b}\le c(\sc)\left(1_{n\le
 \sn_j}e\left(\sum_{i=1}^d\hat{k}_i\hbv_i\right)^{1-\frac{n}{\sn_j}}+1_{\sn_j<n}\right),\label{eq_hopping_matrix_field_derivative_application}\\
&(\forall (\hat{k}_1,\cdots,\hat{k}_d)\in\R^d,\ m,n\in \{1,2,\cdots,d+2\}\text{ satisfying }m\le
 n,\ j\in \{1,2,\cdots,d\}).\notag
\end{align}
Let us admit that $\bk=\sum_{j=1}^d\hat{k}_j\hbv_j$,
 $(\hat{k}_1,\cdots,\hat{k}_d)\in\R^d$ in the following arguments. 
By using \eqref{eq_hopping_matrix_field_derivative_application}, the
 assumption \eqref{eq_huge_h_condition} and the formula
\begin{align*}
\frac{\partial}{\partial
 \hat{k}_j}e^{\frac{1}{h}E(\phi)(\bk)}=\frac{1}{h}\int_0^1ds
 e^{\frac{s}{h}E(\phi)(\bk)}\frac{\partial}{\partial
 \hat{k}_j}E(\phi)(\bk)e^{\frac{1-s}{h}E(\phi)(\bk)} 
\end{align*}
repeatedly we obtain that 
\begin{align*}
&\left\|\left(\frac{\partial}{\partial
 \hat{k}_j}\right)^nB(\o,\bk)\right\|_{2b\times 2b}\\
&\le
 c(d,\sc)\sum_{m=1}^n\left\|\left(\frac{\partial}{\partial \hat{k}_j}\right)^mE(\phi)(\bk)
\right\|_{2b\times 2b}
\le c(d,\sc)\left(
1_{n\le \sn_j}e(\bk)^{1-\frac{n}{\sn_j}}+1_{\sn_j<n}\right)\\
&\le c(d,\sc)
 \left(\left(\o-\frac{\theta(\beta)}{2}\right)^2+e(\bk)^2+|\phi|^2\right)^{\frac{1}{2}(1-\frac{n}{\sn_j})-1_{\sn_j<n}\frac{1}{2}(1-\frac{n}{\sn_j})},\\
&(\forall j\in \{1,2,\cdots,d\},\ n\in \{1,2,\cdots,d+2\}).
\end{align*}
By substituting this inequality and \eqref{eq_integrand_bare_bound} into
 \eqref{eq_inverse_matrix_derivative_bound} we have that
\begin{align}
&\left\|\left(\frac{\partial}{\partial
 \hat{k}_j}\right)^nB(\o,\bk)^{-1}\right\|_{2b\times 2b}\label{eq_integrand_momentum_derivative}\\
&\le
 c(d,\sc)\sum_{m=1}^n\prod_{u=1}^m\left(\sum_{l_u=1}^n\right)1_{\sum_{u=1}^ml_u=n}\notag\\
&\quad\cdot 
\left(\left(\o-\frac{\theta(\beta)}{2}\right)^2+e(\bk)^2+|\phi|^2\right)^{-\frac{1}{2}(m+1)+\sum_{p=1}^m(\frac{1}{2}(1-\frac{l_p}{\sn_j})-1_{\sn_j<l_p}\frac{1}{2}(1-\frac{l_p}{\sn_j}))}\notag\\
&=c(d,\sc)\sum_{m=1}^n\prod_{u=1}^m
\left(\sum_{l_u=1}^n\right)1_{\sum_{u=1}^ml_u=n}\notag\\
&\quad\cdot 
\left(\left(\o-\frac{\theta(\beta)}{2}\right)^2+e(\bk)^2+|\phi|^2\right)^{-\frac{1}{2}
-\frac{n}{2\sn_j}+\sum_{p=1}^m1_{\sn_j<l_p}\frac{1}{2}(\frac{l_p}{\sn_j}-1)}\notag\\
&=c(d,\sc)\sum_{m=1}^n\prod_{u=1}^m\left(\sum_{l_u=1}^n\right)1_{\sum_{u=1}^ml_u=n}\notag\\
&\quad\cdot 
\left(\left(\o-\frac{\theta(\beta)}{2}\right)^2+e(\bk)^2+|\phi|^2\right)^{-\frac{1}{2}-\sum_{p=1}^m(1_{l_p\le\sn_j}\frac{l_p}{2\sn_j}+1_{\sn_j<l_p}\frac{1}{2})},\notag\\
&(\forall n\in \{1,2,\cdots,d+2\},\ \o\in [-\pi h,\pi h],\ \bk\in
 \R^d\text{ satisfying }\notag\\
&\quad (\o-\theta(\beta)/2)^2+e(\bk)^2\neq
 0,\ j\in \{1,2,\cdots,d\}).\notag
\end{align}

By using Lemma
 \ref{lem_properties_cutoff}
 \eqref{item_derivative_cutoff},\eqref{item_support_size_cutoff},\eqref{item_support_shift},
 \eqref{eq_integrand_bare_bound},
 \eqref{eq_integrand_matsubara_derivative} we can derive from
 \eqref{eq_time_periodicity_application} that for $l\in
 \{N_{\beta}+1,N_{\beta}+2,\cdots,N_h\}$ 
\begin{align}
&\left\|\left(\frac{\beta}{2\pi}(e^{-i\frac{2\pi}{\beta}(s-t)}-1)\right)^{d+2}C_l(\cdot\bx
 s,\cdot \by t)\right\|_{2b\times 2b}\label{eq_covariance_time_decay}\\
&\le \frac{1}{\beta
 L^d}\sup_{x\in\R}\sum_{\bk\in\G^*}\sum_{\o\in\cM_h}1_{\chi_l(\o+x,\bk)\neq
 0}\sup_{r\in [-\pi h,\pi h]\atop \bp\in \R^d}\left\|
\left(\frac{\partial}{\partial r}\right)^{d+2}\chi_l(r,\bp)B(r,\bp)^{-1}
\right\|_{2b\times 2b}\notag\\
&\le
 c(d,M,\chi,\sc,\sa,(\hbv_j)_{j=1}^d)M^{l}\min\{M^{\sa
 l},1\}\notag\\
&\quad\cdot \sum_{m=0}^{d+2}\sup_{r\in [-\pi h,\pi h]\atop \bp\in \R^d}\left|
\left(\frac{\partial}{\partial r}\right)^{m}\chi_l(r,\bp)\right|
\left\|\left(\frac{\partial}{\partial r}\right)^{d+2-m}
B(r,\bp)^{-1}\right\|_{2b\times 2b}\notag\\
&\le c(d,M,\chi,\sc,\sa,(\hbv_j)_{j=1}^d)M^{l}\min\{M^{\sa
 l},1\}\notag\\
&\quad \cdot \Bigg(\sum_{m=0}^{d+1}\sup_{r\in [-\pi h,\pi h]\atop
 \bp\in \R^d}\notag\\
&\qquad\cdot 
\left(\left|\left(\frac{\partial}{\partial r}\right)^{m}\chi_l(r,\bp)\right|
\sum_{u=1}^{d+2-m}h^{u-(d+2-m)}
\left(\left(r-\frac{\theta(\beta)}{2}\right)^2+e(\bp)^2+|\phi|^2\right)^{-\frac{u+1}{2}}\right)\notag\\
&\qquad+\sup_{r\in [-\pi h,\pi h]\atop \bp\in \R^d}
\left|\left(\frac{\partial}{\partial r}\right)^{d+2}\chi_l(r,\bp)\right|
\left(\left(r-\frac{\theta(\beta)}{2}\right)^2+e(\bp)^2+|\phi|^2\right)^{-\frac{1}{2}}
\Bigg)\notag\\
&\le  c(d,M,\chi,\sc,\sa,(\hbv_j)_{j=1}^d)M^{l}\min\{M^{\sa
 l},1\}\notag\\
&\quad\cdot \left(\sum_{m=0}^{d+1}M^{-ml}\sum_{u=1}^{d+2-m}h^{u-(d+2-m)}M^{-(u+1)l}+ M^{-(d+2)l-l}
\right)\notag\\
&\le  c(d,M,\chi,\sc,\sa,(\hbv_j)_{j=1}^d)M^{l}\min\{M^{\sa
 l},1\}\notag\\
&\quad\cdot \left(\sum_{m=0}^{d+1}M^{-ml}M^{-(d+3-m)l}+M^{-(d+3)l}\right)\notag\\
&\le c(d,M,\chi,\sc,\sa,(\hbv_j)_{j=1}^d) M^{-(d+2)l}\min\{M^{\sa
 l},1\},\notag
\end{align}
where we also used that $h\ge M^{N_h-2}$. By combining
 Lemma \ref{lem_properties_cutoff}
 \eqref{item_derivative_cutoff},\eqref{item_support_size_cutoff},\eqref{item_support_shift}, 
 \eqref{eq_integrand_bare_bound},
 \eqref{eq_integrand_momentum_derivative} with
 \eqref{eq_space_periodicity_application} we deduce that for $l\in
 \{N_{\beta}+1,N_{\beta}+2,\cdots,N_h\}$, $n\in \{1,2,\cdots,d+2\}$,
\begin{align}
&\left\|\left(\frac{L}{2\pi}(e^{-i\frac{2\pi}{L}\<\bx-\by,\hbv_j\>}-1)\right)^n
C_l(\cdot\bx s,\cdot \by t)\right\|_{2b\times
 2b}\label{eq_covariance_space_decay}\\
&\le \frac{1}{\beta L^d}\sup_{\bp\in
 \R^d}\sum_{\bk\in\G^*}\sum_{\o\in \cM_h}1_{\chi_l(\o,\bk+\bp)\neq
 0}\sup_{\o\in \cM_h\atop \bk\in\R^d}
\left\|\left(\frac{\partial}{\partial \hat{k}_j}\right)^n\chi_l(\o,\bk)B(\o,\bk)^{-1}
\right\|_{2b\times 2b}\notag\\
&\le c(d,M,\chi,\sc, \sa,
 (\hbv_j)_{j=1}^d)M^l\min\{M^{\sa l},l\}\notag\\
&\quad\cdot \Bigg(\sup_{\o\in\cM_h\atop
 \bk\in\R^d}\left|\left(\frac{\partial}{\partial \hat{k}_j}\right)^n\chi_l(w,\bk)\right|
\|B(\o,\bk)^{-1}\|_{2b\times 2b}\notag\\
&\qquad\quad+\sum_{m=0}^{n-1}\sup_{\o\in\cM_h\atop
 \bk\in\R^d}\left|\left(\frac{\partial}{\partial \hat{k}_j}\right)^m\chi_l(w,\bk)\right|
\left\|\left(\frac{\partial}{\partial
 \hat{k}_j}\right)^{n-m}B(\o,\bk)^{-1}\right\|_{2b\times 2b}\Bigg)\notag\\
&\le c(d,M,\chi,\sc,\sa,(\hbv_j)_{j=1}^d)M^l\min\{M^{\sa
 l},1\}\notag\\
&\quad\cdot\Bigg(1_{l\ge
 0}\Bigg(M^{-(\frac{1}{\sn_j}+1)l}+\sum_{m=0}^{n-1}\left(
1_{m=0}+1_{m\ge
 1}M^{-\frac{l}{\sn_j}}\right)\sum_{k=1}^{n-m}\prod_{u=1}^k\left(
\sum_{l_u=1}^{n-m}\right)
1_{\sum_{u=1}^kl_u=n-m}\notag\\
&\qquad\qquad\qquad\qquad\qquad\qquad \cdot M^{2l(-\frac{1}{2}-\sum_{p=1}^k(1_{l_p\le
 \sn_j}\frac{l_p}{2\sn_j}+1_{\sn_j<l_p}\frac{1}{2}))}\Bigg)\notag\\
&\qquad\quad + 1_{l< 0}\Bigg(M^{-(\frac{n}{\sn_j}+1)l}+\sum_{m=0}^{n-1}
M^{-\frac{m}{\sn_j}l}\sum_{k=1}^{n-m}\prod_{u=1}^k\left(
\sum_{l_u=1}^{n-m}\right)
1_{\sum_{u=1}^kl_u=n-m}\notag\\
&\qquad\qquad\qquad\qquad\qquad\qquad\quad \cdot M^{2l(-\frac{1}{2}-\frac{n-m}{2\sn_j}+
\sum_{p=1}^k
 1_{\sn_j<l_p}\frac{1}{2}(\frac{l_p}{\sn_j}-1))}\Bigg)\Bigg)\notag\\
&\le c(d,M,\chi,\sc,\sa,(\hbv_j)_{j=1}^d)M^l\min\{M^{\sa
 l},1\}\notag\\
&\quad\cdot\Bigg(1_{l\ge
 0}\left(M^{-(\frac{1}{\sn_j}+1)l}+\sum_{m=0}^{n-1}\left(
1_{m=0}+1_{m\ge
 1}M^{-\frac{l}{\sn_j}}\right)M^{2l(-\frac{1}{2}-\frac{1}{2\sn_j})}\right)\notag\\
&\qquad\quad+1_{l<
 0}\left(M^{-(\frac{n}{\sn_j}+1)l}+\sum_{m=0}^{n-1}M^{-\frac{m}{\sn_j}l}M^{2l(-\frac{1}{2}-\frac{n-m}{2\sn_j})}\right)\Bigg)\notag\\
&\le  c(d,M,\chi,\sc,\sa,(\hbv_j)_{j=1}^d)\min\{M^{\sa l},1\}
\left(1_{l\ge
 0}M^{-\frac{l}{\sn_j}}+1_{l<0}M^{-\frac{n}{\sn_j}l}\right).\notag
\end{align}
Here we estimated for all $n\in\{1,2,\cdots,d+2\}$ not only for $n=d+2$ so
 that we can use the result to prove Lemma
 \ref{lem_covariance_construction}
 \eqref{item_covariance_construction_ODLRO} later.
Moreover by combining Lemma \ref{lem_properties_cutoff}
 \eqref{item_derivative_cutoff},\eqref{item_final_cutoff},
 \eqref{eq_integrand_bare_bound},
 \eqref{eq_integrand_momentum_derivative} with
 \eqref{eq_space_periodicity_application} and using that
 $M^{-N_{\beta}}\le M\beta\le \pi M |\frac{\pi}{\beta}-\frac{\theta(\beta)}{2}|^{-1}$ we have that
\begin{align}
&\left\|\left(\frac{L}{2\pi}(e^{-i\frac{2\pi}{L}\<\bx-\by,\hbv_j\>}-1)\right)^{d+1}C_{N_{\beta}}(\cdot\bx s,\cdot \by t)\right\|_{2b\times
 2b}\label{eq_covariance_space_decay_end}\\
&\le \frac{1}{\beta}\sup_{\bk\in\R^d}\left\|\left(\frac{\partial}{\partial
 \hat{k}_j}\right)^{d+1}\chi_{N_{\beta}}\left(\frac{\pi}{\beta},\bk\right)
B\left(\frac{\pi}{\beta},\bk\right)^{-1}\right\|_{2b\times 2b}\notag\\
&\le  \frac{c(d)}{\beta}\Bigg(
\sup_{\bk\in \R^d}\left|
\left(\frac{\partial}{\partial
 \hat{k}_j}\right)^{d+1}\chi_{N_{\beta}}\left(\frac{\pi}{\beta},\bk\right)\right|\left|
\frac{\pi}{\beta}-\frac{\theta(\beta)}{2}\right|^{-1}\notag\\
&\qquad\qquad +\sum_{m=0}^d\sup_{\bk\in\R^d}\left|
\left(\frac{\partial}{\partial
 \hat{k}_j}\right)^{m}\chi_{N_{\beta}}\left(\frac{\pi}{\beta},\bk\right)
\right|\left\|\left(\frac{\partial}{\partial
 \hat{k}_j}\right)^{d+1-m}B\left(\frac{\pi}{\beta},\bk\right)^{-1}\right\|_{2b\times
 2b}\Bigg)\notag\\
&\le
 c(d,M,\chi,\sc,\sa,(\hbv_j)_{j=1}^d)\beta^{-1}\notag\\
&\quad\cdot \Bigg(
\left(1_{N_{\beta}\ge
 0}M^{-\frac{N_{\beta}}{\sn_j}}+1_{N_{\beta}<0}M^{-\frac{d+1}{\sn_j}N_{\beta}}\right)\left|\frac{\pi}{\beta}-\frac{\theta(\beta)}{2}\right|^{-1}\notag\\
&\qquad+ \sum_{m=0}^d \left(1_{m=0}+1_{m\ge 1} \left(1_{N_{\beta}\ge 0}M^{-\frac{N_{\beta}}{\sn_j}}+1_{N_{\beta}<0}M^{-\frac{m}{\sn_j}N_{\beta}}
\right)\right)\notag\\
&\qquad\quad\cdot \sum_{k=1}^{d+1-m}\prod_{u=1}^k\left(\sum_{l_u=1}^{d+1-m}\right)1_{\sum_{u=1}^kl_u=d+1-m}\notag\\
&\qquad\quad\cdot\Bigg(
1_{|\pi/\beta-\theta(\beta)/2|>1}\left|\frac{\pi}{\beta}-\frac{\theta(\beta)}{2}\right|^{-1-\sum_{p=1}^k(1_{l_p\le
 \sn_j}\frac{l_p}{\sn_j}+1_{\sn_j<l_p})}\notag\\
&\qquad\qquad\quad+1_{|\pi/\beta-\theta(\beta)/2|\le 1}
\left|\frac{\pi}{\beta}-\frac{\theta(\beta)}{2}\right|^{-1-\frac{d+1-m}{\sn_j}
+\sum_{p=1}^k1_{\sn_j<l_p}(\frac{l_p}{\sn_j}-1)}
\Bigg)\Bigg)\notag\\
&\le
 c(d,M,\chi,\sc,\sa,(\hbv_j)_{j=1}^d)\beta^{-1}\notag\\
&\quad\cdot
 \Bigg(\left(1+
\left|\frac{\pi}{\beta}-\frac{\theta(\beta)}{2}\right|^{-\frac{d+1}{\sn_j}}
\right)\left|\frac{\pi}{\beta}-\frac{\theta(\beta)}{2}\right|^{-1}\notag\\
&\qquad +\sum_{m=0}^d\left(1_{m=0}+1_{m\ge
 1}\left(1+\left|\frac{\pi}{\beta}-\frac{\theta(\beta)}{2}\right|^{-\frac{m}{\sn_j}}
\right)
\right)\notag\\
&\qquad \quad\cdot
 \left(1_{|\pi/\beta-\theta(\beta)/2|>1}\left|\frac{\pi}{\beta}-\frac{\theta(\beta)}{2}\right|^{-1}
+1_{|\pi/\beta-\theta(\beta)/2|\le
 1}\left|\frac{\pi}{\beta}-\frac{\theta(\beta)}{2}\right|^{-1-\frac{d+1-m}{\sn_j}}\right)\Bigg)\notag\\
&\le
 c(d,M,\chi,\sc,\sa,(\hbv_j)_{j=1}^d)\beta^{-1}\left|\frac{\pi}{\beta}-\frac{\theta(\beta)}{2}\right|^{-1}\left(
1+\left|\frac{\pi}{\beta}-\frac{\theta(\beta)}{2}\right|^{-\frac{1}{\sn_j}}
\right)^{d+1}.\notag
\end{align}

By summing up \eqref{eq_covariance_determinant_bound_pre} for $n=1$,
 \eqref{eq_covariance_time_decay}, \eqref{eq_covariance_space_decay},
 \eqref{eq_covariance_space_decay_end} we reach that for $l\in
 \{N_{\beta}+1,N_{\beta}+2,\cdots,N_h\}$, $\bx,\by\in\G$, $s,t\in
 [0,\beta)_h$,
\begin{align}
&\|C_l(\cdot\bx s,\cdot \by t)\|_{2b\times
 2b}\label{eq_covariance_full_decay}\\
&\le
 c(d,M,\chi,\sc,\sa,(\hbv_j)_{j=1}^d)(1_{l\ge 0}+1_{l< 0
 }M^{\sa
 l})\notag\\
&\quad\cdot 
\Bigg(1+M^{(d+2)l}\left|\frac{\beta}{2\pi}\left(e^{i\frac{2\pi}{\beta}(s-t)}-1\right)\right|^{d+2}\notag\\
&\qquad\quad+\sum_{j=1}^d\left(1_{l\ge 0}M^{\frac{l}{\sn_j}}+1_{l<
 0}M^{\frac{d+2}{\sn_j}l}\right)\left|\frac{L}{2\pi}(e^{i\frac{2\pi}{L}\<\bx-\by,\hbv_j\>}-1)\right|^{d+2}\Bigg)^{-1},\notag\\
&\|C_{N_{\beta}}(\cdot\bx s,\cdot \by t)\|_{2b\times
 2b}\notag\\
&\le
 c(d,M,\chi,\sc,\sa,(\hbv_j)_{j=1}^d)\beta^{-1}\left|\frac{\pi}{\beta}-\frac{\theta(\beta)}{2}\right|^{-1}\notag\\
&\quad\cdot \Bigg(1+\sum_{j=1}^d\left(1+\left|\frac{\pi}{\beta}-\frac{\theta(\beta)}{2}\right|^{-\frac{1}{\sn_j}}\right)^{-d-1}
\left|\frac{L}{2\pi}(e^{i\frac{2\pi}{L}\<\bx-\by,\hbv_j\>}-1)\right|^{d+1}\Bigg)^{-1}.\notag
\end{align}
These inequalities together with the fact $h\ge M^{N_h-2}$ imply the
 claimed bounds.

\eqref{item_preliminary_covariance_coupled_decay_bound}: 
By \eqref{eq_beta_relation} and \eqref{eq_covariance_full_decay}, for $l\in
 \{N_{\beta}+1,N_{\beta}+2,\cdots,N_h\}$, 
\begin{align*}
&\|\tilde{C}_l\|\\
&\le
 c(d,M,\chi,\sc,\sa,(\hbv_j)_{j=1}^d,b)\\
&\quad\cdot \left(1_{l\ge 0}+1_{l<
 0}M^{(\sa-\sum_{j=1}^d\frac{1}{\sn_j})l}+(1+M^{N_{\beta}})\left(1_{l\ge
 0}M^{-l}+1_{l<
 0}M^{(\sa-1-\sum_{j=1}^d\frac{1}{\sn_j})l}\right)\right)\\
&\le  c(d,M,\chi,\sc,\sa,(\hbv_j)_{j=1}^d,b)\left(1_{l\ge 0}+1_{l<
 0}M^{(\sa-1-\sum_{j=1}^d\frac{1}{\sn_j})l}\right),
\end{align*}
which is \eqref{eq_covariance_coupled_decay_bound_pre}.
Note
 that by \eqref{eq_beta_relation} and \eqref{eq_covariance_full_decay}, for any $l'\in
 \{N_{\beta}+1,N_{\beta}+2,\cdots,N_h\}\cap \Z_{\ge 0}$,
\begin{align*}
\left\|\sum_{p=l'}^{N_h}\tilde{C}_p\right\|
&\le 4b
 \sup_{(\orho,\rho,s,\xi),(\oeta,\eta,t,\zeta)\atop \in \{1,2\}\times
 \cB\times [0,\beta)_h\times
 \{1,-1\}}\left|\sum_{p=l'}^{N_h}\tilde{C}_p(\orho\rho \b0 s\xi,\oeta \eta
 \b0 t\zeta)\right|\\
&\quad + \sum_{p=l'}^{N_h}\sum_{\bx\in\G\atop \bx\neq
 \b0}\frac{c(d,M,\chi,\sc,\sa,(\hbv_j)_{j=1}^d,b)}{\sum_{j=1}^dM^{\frac{p}{\sn_j}}|\frac{L}{2\pi}(e^{i\frac{2\pi}{L}\<\bx,\hbv_j\>}-1)|^{d+2}}\\
&\quad +
 c(d,M,\chi,\sc,\sa,(\hbv_j)_{j=1}^d,b)(1+M^{N_{\beta}})\sum_{p=l'}^{N_h}M^{-p}\\
&\le 4b
 \sup_{(\orho,\rho,s,\xi),(\oeta,\eta,t,\zeta)\atop \in \{1,2\}\times
 \cB\times [0,\beta)_h\times
 \{1,-1\}}\left|\sum_{p=l'}^{N_h}\tilde{C}_p(\orho\rho \b0 s\xi,\oeta \eta
 \b0 t\zeta)\right|\\
&\quad +
 c(d,M,\chi,\sc,(\sn_j)_{j=1}^d,\sa,(\hbv_j)_{j=1}^d,b)(1+(1+M^{N_{\beta}})M^{-l'}),
\end{align*}
which gives \eqref{eq_covariance_coupled_decay_bound_sum_pre}.
\end{proof}

Using the covariances $C_l$ $(l=N_{\beta},N_{\beta}+1,\cdots,N_h)$, we
 define the covariances $\cC_l$
$(l=N_{\beta},N_{\beta}+1,\cdots,\hat{N}_{\beta})$ as follows. 
\begin{align*}
&\cC_{\hat{N}_{\beta}}:=\sum_{l=\hat{N}_{\beta}}^{N_h}C_l,\\
&\cC_l:=C_l,\quad (\forall l\in
 \{N_{\beta},N_{\beta}+1,\cdots,\hat{N}_{\beta}-1\}).
\end{align*}
The claim \eqref{item_covariance_construction_necessary} of the next
lemma states that the covariances $\cC_l$
$(l=N_{\beta},N_{\beta}+1,\cdots,\hat{N}_{\beta})$ satisfy the
conditions assumed in Subsection
\ref{subsec_generalized_covariances}. The claim
\eqref{item_covariance_construction_ODLRO} of the lemma will be used to
prove Corollary \ref{cor_zero_temperature_limit} \eqref{item_ODLRO_zero} in
the next subsection. 

\begin{lemma}\label{lem_covariance_construction}
Assume that \eqref{eq_huge_h_condition}, \eqref{eq_huge_L_condition}
 hold. Then there exists a constant $\hat{c}(\in\R_{\ge 1})$ depending only on
 $d$, $b$, $(\hbv_j)_{j=1}^d$, $\sa$, $(\sn_j)_{j=1}^d$, $\sc$, $M$,
 $\chi$ such that the following statements hold true. 
\begin{enumerate}[(i)]
\item\label{item_covariance_construction_necessary}
$\cC_l$ $(l=N_{\beta}+1,N_{\beta}+2,\cdots,\hat{N}_{\beta})$ satisfy
     \eqref{eq_general_covariance_time_translation} and
     $\cC_{N_{\beta}}$ satisfies
     \eqref{eq_final_covariance_time_independence}. Moreover
$\cC_l$ $(l=N_{\beta},N_{\beta}+1,\cdots,\hat{N}_{\beta})$ satisfy
\eqref{eq_scale_covariance_determinant_bound}, 
\eqref{eq_scale_covariance_decay_bound},
\eqref{eq_scale_covariance_coupled_decay_bound} with 
\begin{align*}
&c_0=\hat{c}\left(\beta^{-1}\min\left\{1,\left|\frac{\pi}{\beta}-\frac{\theta(\beta)}{2}\right|^{-1}\right\}+1\right),\\
&c_{end}=\left|\frac{\pi}{\beta}-\frac{\theta(\beta)}{2}\right|^{-1}\prod_{j=1}^d\left(1+\left|\frac{\pi}{\beta}-\frac{\theta(\beta)}{2}\right|^{-\frac{1}{\sn_j}}\right).
\end{align*}
\item\label{item_covariance_construction_ODLRO}
\begin{align*}
&\sum_{j=1}^d\frac{L}{2\pi}|e^{i\frac{2\pi}{L}\<\bx-\by,\hbv_j\>}-1|\left\|
\sum_{l=N_{\beta}+1}^{\hat{N}_{\beta}}\cC_l(\cdot\bx s,\cdot\by
 t)\right\|_{2b\times 2b}\le \hat{c},\\
&(\forall \bx,\by\in \G,\
 s,t\in [0,\beta)_h).
\end{align*}
\end{enumerate}
\end{lemma}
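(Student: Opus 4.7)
The plan is to verify each of the listed properties by assembling the single-scale estimates already proved in Lemma~\ref{lem_real_covariances_properties}, exploiting the decomposition $\cC_l = C_l$ for $l \in \{N_\beta,\ldots,\hat{N}_\beta-1\}$ and $\cC_{\hat{N}_\beta} = \sum_{p=\hat{N}_\beta}^{N_h} C_p$. First I would dispose of the translation-invariance statements. Property \eqref{eq_general_covariance_time_translation} for $l > N_\beta$ is immediate from the Fourier representation defining $C_l$, since the factor $e^{i(\omega-\pi/\beta)(s-t)}$ is periodic in $s-t$ modulo $\beta$ and a simultaneous shift of $s,t$ leaves the difference invariant. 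Property \eqref{eq_final_covariance_time_independence} follows from Lemma~\ref{lem_properties_cutoff}\eqref{item_final_cutoff}, which kills every Matsubara mode other than $\omega = \pi/\beta$ in $C_{N_\beta}$; the surviving exponential collapses to $1$.

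Next I would establish the bounds of part~\eqref{item_covariance_construction_necessary} by a case split on $l$. For scales $l \in \{N_\beta+1,\ldots,\hat{N}_\beta-1\}$ (nonempty only when $\hat{N}_\beta = 0$, i.e.\ $\beta > 1$, so all such $l$ are negative), \eqref{eq_scale_covariance_determinant_bound}, \eqref{eq_scale_covariance_decay_bound} and \eqref{eq_scale_covariance_coupled_decay_bound} follow directly from \eqref{eq_covariance_determinant_bound_pre}, \eqref{eq_covariance_decay_bound_pre} and \eqref{eq_covariance_coupled_decay_bound_pre}, since $M^{\sa l} = M^{\sa(l-\hat{N}_\beta)}$ in this case. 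For $l = N_\beta$, the determinant bound \eqref{eq_covariance_determinant_bound_pre} becomes $c_0 M^{\sa(N_\beta-\hat{N}_\beta)}$ after noting $\beta^{-1}M^{-N_\beta} = A(\beta,M) \le M$, so the $\beta^{-1}$ factor is absorbable into $c_0$ together with a geometric factor $M^\sa$; the decay bound \eqref{eq_covariance_decay_bound_end_pre} is exactly $c_0\,c_{end}$ after matching with the chosen $c_{end}$. For $l = \hat{N}_\beta$, the determinant bound for $\cC_{\hat{N}_\beta}$ is obtained by rerunning the Gram-vector argument of the proof of Lemma~\ref{lem_real_covariances_properties}\eqref{item_preliminary_covariance_determinant_bound} with the cutoff $\bigl(\sum_{p=\hat{N}_\beta}^{N_h}\chi_p\bigr)^{1/2}$ in place of $\chi_l^{1/2}$: its support lies where $\sqrt{h^2\sin^2((\omega-\pi/\beta)/2h)+e(\bk)^2}\ge (8/5)A(\beta,M)M^{\hat{N}_\beta-1}$, so Lemma~\ref{lem_properties_cutoff}\eqref{item_support_shift} provides a matching lower bound on $((\omega-\theta(\beta)/2)^2+e(\bk)^2)^{1/2}$ and the Matsubara/momentum counting of Lemma~\ref{lem_properties_cutoff}\eqref{item_support_size_cutoff} produces a Hilbert-norm controlled by $\hat{c}M^{\hat{N}_\beta}\min\{1,|\pi/\beta-\theta(\beta)/2|^{-1}\}$, which is absorbed into $c_0$. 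The decay bounds \eqref{eq_scale_covariance_decay_bound} and \eqref{eq_scale_covariance_coupled_decay_bound} at $l = \hat{N}_\beta$ then follow by summing \eqref{eq_covariance_decay_bound_pre} and \eqref{eq_covariance_coupled_decay_bound_pre} geometrically over $p \ge \hat{N}_\beta \ge 0$; for \eqref{eq_scale_covariance_coupled_decay_bound} I would apply \eqref{eq_covariance_coupled_decay_bound_sum_pre} with $l' = \hat{N}_\beta$ and bound the pointwise supremum of $\sum_{p=\hat{N}_\beta}^{N_h}\tilde C_p$ by combining the identity $\sum_{p=N_\beta}^{N_h} C_p = e^{-i\pi(s-t)/\beta} C(\phi)$ from \eqref{eq_covariance_sum_unity}, the pointwise bound on $C(\phi)$ from Lemma~\ref{lem_characterization_covariance}\eqref{item_P_S_determinant_bound} at $n=1$, and the determinant bound \eqref{eq_covariance_determinant_bound_lower_sum} at $n=1$ on the lower-scale partial sum.

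Part~\eqref{item_covariance_construction_ODLRO} is essentially a single-scale exercise. The bound \eqref{eq_covariance_space_decay} from the proof of Lemma~\ref{lem_real_covariances_properties}\eqref{item_preliminary_covariance_decay_bound} applied with $n=1$ yields
\[
\frac{L}{2\pi}\bigl|e^{i\frac{2\pi}{L}\<\bx-\by,\hbv_j\>}-1\bigr|\,\bigl\|C_l(\cdot\bx s,\cdot\by t)\bigr\|_{2b\times 2b}
\le \hat{c}\,\min\{M^{\sa l},1\}\,M^{-l/\sn_j}.
\]
Because $\sum_{l=N_\beta+1}^{\hat{N}_\beta}\cC_l = \sum_{l=N_\beta+1}^{N_h} C_l$, summing over $l$ and $j \in \{1,\ldots,d\}$ gives a geometric series whose $l \ge 0$ tail converges as $\sum_l M^{-l/\sn_j}$ and whose $l < 0$ head converges as $\sum_l M^{(\sa-1/\sn_j)l}$, which is summable thanks to $\sa > 1 \ge 1/\sn_j$ (Remark~\ref{rem_strictly_larger_one}).

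The main obstacle is the determinant and pointwise control of $\cC_{\hat{N}_\beta} = \sum_{p=\hat{N}_\beta}^{N_h} C_p$, which spans infinitely many ultraviolet scales as $h \to \infty$: both the Gram-vector computation and the supremum bound required for \eqref{eq_covariance_coupled_decay_bound_sum_pre} demand careful bookkeeping of the dependencies on $\beta$, $h$ and $|\pi/\beta-\theta(\beta)/2|$ to verify that all the accumulated constants are genuinely captured by the form of $c_0$ and $c_{end}$ stated in the lemma.
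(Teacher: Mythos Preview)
Your treatment of the time-translation properties, the intermediate scales $l\in\{N_\beta+1,\ldots,\hat N_\beta-1\}$, the scale $l=N_\beta$, and part~\eqref{item_covariance_construction_ODLRO} is essentially correct and matches the paper. The gap lies exactly where you flagged it, but your proposed fix does not work.

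Your Gram-vector argument for $\cC_{\hat N_\beta}$ fails because the cutoff $\sum_{p=\hat N_\beta}^{N_h}\chi_p$ is the \emph{ultraviolet} tail: it equals $1$ for all Matsubara frequencies $\omega$ with $|\omega-\pi/\beta|$ of order $\gtrsim M^{\hat N_\beta}$, so its support in $\omega$ is essentially all of $\cM_h$. Lemma~\ref{lem_properties_cutoff}\eqref{item_support_size_cutoff} bounds the support of the \emph{lower} partial sum $\sum_{p\le l}\chi_p$, not the upper one. Consequently the Gram norms
\[
\|f_X\|_\cH^2,\ \|g_X\|_\cH^2 \sim \frac{1}{\beta L^d}\sum_{\bk\in\G^*}\sum_{\omega\in\cM_h}\Bigl(\sum_{p=\hat N_\beta}^{N_h}\chi_p(\omega,\bk)\Bigr)\Bigl((\omega-\theta(\beta)/2)^2+e(\bk)^2+|\phi|^2\Bigr)^{-1/2}
\]
contain a Matsubara sum behaving like $\sum_\omega |\omega|^{-1}$, which diverges logarithmically as $h\to\infty$. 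The lower bound from Lemma~\ref{lem_properties_cutoff}\eqref{item_support_shift} does not remedy this: it only tells you the denominator is at least of order $M^{\hat N_\beta}$ on the support, which still leaves $\sim\beta h$ many frequencies each contributing $\gtrsim |\omega|^{-1}$.

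The paper circumvents this by writing $\cC_{\hat N_\beta}=\sum_{l=N_\beta}^{N_h}C_l-\sum_{l=N_\beta}^{\hat N_\beta-1}C_l$ and bounding each piece separately: the full sum is (up to a phase) $C(\phi)$ itself, whose determinant is controlled $h$-independently by the Pedra--Salmhofer bound of Lemma~\ref{lem_characterization_covariance}\eqref{item_P_S_determinant_bound} (sharpened via \eqref{eq_determinant_bound_improvement} to produce the form of $c_0$); the lower partial sum is handled by \eqref{eq_covariance_determinant_bound_lower_sum}. The two are then combined through the Cauchy--Binet-based \cite[\mbox{Lemma~A.1}]{K_BCS}. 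The essential point is that Pedra--Salmhofer exploits the time-ordered structure of the full covariance in continuous time and never sees the Matsubara sum; this is precisely what makes the UV determinant bound $h$-independent. Once this is in hand, your proposed argument for $\|\tilde\cC_{\hat N_\beta}\|$ via \eqref{eq_covariance_coupled_decay_bound_sum_pre} goes through (the paper uses the $n=1$ case of the just-established determinant bound on $\cC_{\hat N_\beta}$ for the pointwise supremum).
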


\begin{proof} 
\eqref{item_covariance_construction_necessary}: The claim concerning
 \eqref{eq_general_covariance_time_translation} is clear. Lemma
 \ref{lem_properties_cutoff} \eqref{item_final_cutoff} ensures that
$\cC_{N_{\beta}}$ satisfies
 \eqref{eq_final_covariance_time_independence}. 
Let us derive the
 determinant bound on $\cC_{\hat{N}_{\beta}}$. Let us improve the second
 inequality in Lemma \ref{lem_characterization_covariance}
 \eqref{item_P_S_determinant_bound}. 
For any $x\in \R_{\ge 0}$, 
\begin{align}
&\left(1+2\cos\left(\frac{\beta\theta(\beta)}{2}\right) e^{-\beta
 x}+e^{-2\beta
 x}\right)^{-\frac{1}{2}}\label{eq_determinant_bound_improvement}\\
&\le c \left(1_{\beta x>1}+1_{\beta x\le
 1}\beta^{-1}\left(\left(\frac{\pi}{\beta}-\frac{\theta(\beta)}{2}\right)^2+x^2\right)^{-\frac{1}{2}}\right)\notag\\
&\le c  \left(1 + \beta^{-1}\left(\left(\frac{\pi}{\beta}-\frac{\theta(\beta)}{2}\right)^2+x^2\right)^{-\frac{1}{2}}\right).\notag
\end{align}
Thus by \eqref{eq_dispersion_upper_lower_bound},
 \eqref{eq_dispersion_measure_divided}, \eqref{eq_huge_L_condition} and
 Lemma \ref{lem_discrete_continuous_estimate}, 
\begin{align*}
&\frac{1}{L^d}\sum_{\bk\in\G^*}\Tr \left(
1+2\cos\left(\frac{\beta\theta(\beta)}{2}\right)e^{-\beta\sqrt{E(\bk)^2+|\phi|^2}}+e^{-2\beta\sqrt{E(\bk)^2+|\phi|^2}}\right)^{-\frac{1}{2}}\\
&\le \frac{c(b) \beta^{-1}}{L^d}\sum_{\bk\in
 \G^*}\left(\left(\frac{\pi}{\beta}-\frac{\theta(\beta)}{2}\right)^2+e(\bk)^2\right)^{-\frac{1}{2}}+c(b)\notag\\
&\le
 c(D_d,b)\beta^{-1}\int_{\G_{\infty}^*}\left(\left(\frac{\pi}{\beta}-\frac{\theta(\beta)}{2}\right)^2+e(\bk)^2\right)^{-\frac{1}{2}}\notag\\
&\quad +c(d,(\hbv_j)_{j=1}^d,\sc,b)\left|
\frac{\pi}{\beta}-\frac{\theta(\beta)}{2}\right|^{-3}L^{-1}\beta^{-1}+c(b)\notag\\
&\le
 c(d,(\hbv_j)_{j=1}^d,\sc,b)\left(\beta^{-1}\min\left\{1,\left|\frac{\pi}{\beta}-\frac{\theta(\beta)}{2}\right|^{-1}\right\}+1\right).\notag
\end{align*}
Then by this inequality, Lemma \ref{lem_characterization_covariance}
 \eqref{item_P_S_determinant_bound} and \eqref{eq_covariance_sum_unity},
\begin{align}
&\left|\det\left(\<\bu_i,\bv_j\>_{\C^m}\sum_{l=N_{\beta}}^{N_h}C_l(X_i,Y_j)
\right)_{1\le i,j\le n}\right|\label{eq_P_S_bound_improvement}\\
&\le
 \left(c(d,(\hbv_j)_{j=1}^d,\sc,b)\left(\beta^{-1}\min\left\{1,\left|\frac{\pi}{\beta}-\frac{\theta(\beta)}{2}\right|^{-1}\right\}+1\right)\right)^n,\notag\\
&(\forall m,n\in \N,\ \bu_i,\bv_i\in\C^m\ (i=1,2,\cdots,n)\text{ with
 }\|\bu_i\|_{\C^m},\ \|\bv_i\|_{\C^m}\le 1,\notag\\
&\quad  X_i,Y_i\in I_0\
 (i=1,2,\cdots,n)).\notag
\end{align}
We can apply \cite[\mbox{Lemma A.1}]{K_BCS}, which is based on the
 Cauchy-Binet formula, together with
 \eqref{eq_covariance_determinant_bound_lower_sum},
 \eqref{eq_P_S_bound_improvement}, the equality
 $\cC_{\hat{N}_{\beta}}=\sum_{l=N_{\beta}}^{N_h}C_l-\sum_{l=N_{\beta}}^{\hat{N}_{\beta}-1}C_l$
 and the inequality $M^{\hat{N}_{\beta}}\le M\beta^{-1}+1$ to derive the
 claim determinant bound on $\cC_{\hat{N}_{\beta}}$. If 
$N_{\beta}+1\le \hat{N}_{\beta}-1$, then
 $\hat{N}_{\beta}=0$. Thus the claimed determinant bound on $\cC_l$
 $(l=N_{\beta}+1,N_{\beta}+2,\cdots,\hat{N}_{\beta}-1)$ follows from
 \eqref{eq_covariance_determinant_bound_pre}. If $\beta >1$,
 $N_{\beta}<0=\hat{N}_{\beta}$. Thus by \eqref{eq_beta_relation}
\begin{align*}
\beta^{-1}\min\left\{M^{(\sa-1)N_{\beta}},1,\left|\frac{\pi}{\beta}-\frac{\theta(\beta)}{2}\right|^{-1}\right\}\le
 \beta^{-1}M^{(\sa-1)N_{\beta}}\le M^{1+\sa(N_{\beta}-\hat{N}_{\beta})}.
\end{align*}
If $\beta\le 1$,
\begin{align*}
\beta^{-1}\min\left\{M^{(\sa-1)N_{\beta}},1,\left|\frac{\pi}{\beta}-\frac{\theta(\beta)}{2}\right|^{-1}\right\}\le
 \beta^{-1}\min\left\{1,\left|\frac{\pi}{\beta}-\frac{\theta(\beta)}{2}\right|^{-1}\right\}M^{\sa+\sa(N_{\beta}-\hat{N}_{\beta})}.
\end{align*}
Thus the claimed determinant bound on $\cC_{N_{\beta}}$ follows from
 \eqref{eq_covariance_determinant_bound_pre}.

Note that $\hat{N}_{\beta}\ge 0$ $(\forall \beta>0)$. Thus by
 \eqref{eq_covariance_decay_bound_pre},
\begin{align*}
&\|\tilde{\cC}_{\hat{N}_{\beta}}\|_{1,\infty}\le
 \sum_{l=\hat{N}_{\beta}}^{N_h}\|\tilde{C}_l\|_{1,\infty}\le
 \hat{c}\sum_{l=0}^{\infty}M^{-l}\le 2\hat{c}.
\end{align*}
The inclusion $l\in
 \{N_{\beta}+1,N_{\beta}+2,\cdots,\hat{N}_{\beta}-1\}$ implies that
 $\hat{N}_{\beta}=0$ and $l\le -1$. Thus by
 \eqref{eq_covariance_decay_bound_pre} the claimed bound on
 $\|\tilde{\cC}_l\|_{1,\infty}$ holds true for $l\in \{N_{\beta}+1,N_{\beta}+2,\cdots,\hat{N}_{\beta}\}$. The claimed bound on
 $\|\tilde{\cC}_{N_{\beta}}\|_{1,\infty}$ follows from
 \eqref{eq_covariance_decay_bound_end_pre}. For the same reason as above
 \eqref{eq_covariance_coupled_decay_bound_pre} gives the claimed bound
 on $\|\tilde{\cC}_l\|$ for $l\in
 \{N_{\beta}+1,N_{\beta}+2,\cdots,\hat{N}_{\beta}-1\}$. Since
 $\hat{N}_{\beta}\in \{N_{\beta}+1,\cdots,N_h\}\cap\Z_{\ge 0}$, we can
 derive the claimed bound on $\|\tilde{\cC}_{\hat{N}_{\beta}}\|$ by
 combining \eqref{eq_covariance_coupled_decay_bound_sum_pre} with the
 determinant bound on $\cC_{\hat{N}_{\beta}}$ for $n=1$.

\eqref{item_covariance_construction_ODLRO}: By \eqref{eq_covariance_space_decay} for $n=1$ and the inequality
 $\sa>1/\sn_j$ $(j=1,2,\cdots,d)$ implied by the original assumptions
 $\sa>1$, $\sn_j\ge 1$ $(j=1,2,\cdots,d)$,
\begin{align*}
&\sum_{j=1}^d\frac{L}{2\pi}|e^{i\frac{2\pi}{L}\<\bx-\by,\hbv_j\>}-1|\left\|
\sum_{l=N_{\beta}+1}^{\hat{N}_{\beta}}\cC_l(\cdot\bx s,\cdot \by
 t)\right\|_{2b\times 2b}\\
&\le
 c(d,M,\chi,\sc,\sa,(\hbv_j)_{j=1}^d)\sum_{j=1}^d\left(\sum_{l=0}^{\infty}M^{-\frac{l}{\sn_j}}+\sum_{l=-1}^{-\infty}M^{(\sa-\frac{1}{\sn_j})l}\right)\\
&\le  c(d,M,\chi,\sc,\sa,(\hbv_j)_{j=1}^d,(\sn_j)_{j=1}^d),
\end{align*}
which is the claimed bound.
\end{proof}

\begin{remark}\label{rem_strictly_larger_one}
 We use the condition $\sa>1$ to prove Lemma
 \ref{lem_covariance_construction}
 \eqref{item_covariance_construction_ODLRO}, which will be used only to prove
 Corollary \ref{cor_zero_temperature_limit} \eqref{item_ODLRO_zero}. 
\end{remark}

Since we have confirmed that the real covariance derived from the free
Hamiltonian can be decomposed into a family of covariances
satisfying the desired properties, we can apply the general result Lemma
\ref{lem_final_integration} to analyze the Grassmann Gaussian integral
appearing in the formulation Lemma \ref{lem_final_Grassmann_formulation}.

\begin{proposition}\label{prop_general_integration_application}
Let $c_6$ be the positive constant appearing in Lemma
 \ref{lem_final_integration}. Let $c_0$, $c_{end}$ be those set in Lemma
 \ref{lem_covariance_construction} \eqref{item_covariance_construction_necessary}.
Fix $M$, $\alpha\in\R_{\ge 1}$ satisfying 
\begin{align*}
M^{\min\{1,2\sa-1-\sum_{j=1}^d\frac{1}{\sn_j}\}}\ge c_6,\quad \alpha\ge
 c_6 M^{\frac{\sa}{2}}.
\end{align*}
Then the following statements hold for any $h\in
\frac{2}{\beta}\N$, $L\in \N$ satisfying \eqref{eq_huge_h_condition} and 
\begin{align}
&L\ge
 \max\Bigg\{(c_{end}+1)^{\frac{1}{d}}M^{\frac{1}{d}(\sa+\sum_{j=1}^d\frac{1}{\sn_j}+1)(\hat{N}_{\beta}-N_{\beta})},\frac{\max_{j\in\{1,2,\cdots,d\}}M^{-N_{\beta}/\sn_j}}{\min\{M^{\sa
 N_{\beta}},1\}},\label{eq_final_huge_L_condition}\\
&\qquad\qquad\quad 
\frac{\max_{j\in \{1,2,\cdots,d\}}
 M^{-N_{\beta}/\sn_j}|\frac{\pi}{\beta}-\frac{\theta(\beta)}{2}|^{-1}+
|\frac{\pi}{\beta}-\frac{\theta(\beta)}{2}|^{-3}}{\min\{M^{(\sa-1)N_{\beta}},1,|\frac{\pi}{\beta}-\frac{\theta(\beta)}{2}|^{-1}\}}\Bigg\}.\notag
\end{align}
\begin{enumerate}[(i)]
\item\label{item_integral_formulation_lower_upper_bound}
\begin{align*}
&e^{-8c_6b\beta
 \alpha^{-2}M^{(\sum_{j=1}^d\frac{1}{\sn_j}+1)(\hat{N}_{\beta}-N_{\beta})}}\le \left|\int e^{-V(u)(\psi)+W(u)(\psi)}d\mu_{C(\phi)}(\psi)\right|\\
&\qquad\qquad\qquad\qquad\qquad\quad\ \ \le e^{8c_6b\beta
 \alpha^{-2}M^{(\sum_{j=1}^d\frac{1}{\sn_j}+1)(\hat{N}_{\beta}-N_{\beta})}},\\
&(\forall u\in \overline{D(b^{-1}c_0^{-2}\alpha^{-4})}).
\end{align*}
\item\label{item_integral_formulation_difference}
\begin{align*}
&\sup_{u\in\overline{D(b^{-1}c_0^{-2}\alpha^{-4})}}\left|
\frac{\int e^{-V(u)(\psi)+W(u)(\psi)}A^j(\psi)d\mu_{C(\phi)}(\psi)}{\int
 e^{-V(u)(\psi)+W(u)(\psi)}d\mu_{C(\phi)}(\psi)}-\int
 A^j(\psi)d\mu_{C(\phi)}(\psi)\right|\\
&\le c_6 \eps_{\beta}^{N_{\beta}-\hat{N}_{\beta}}\beta c_0^2\left(
\alpha^2+c_{end}M^{\sa(\hat{N}_{\beta}-N_{\beta})}
\right)L^{-d},\quad (\forall j\in \{1,2\}).
\end{align*}
\end{enumerate}
\end{proposition}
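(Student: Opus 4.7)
The plan is to apply the abstract result Lemma \ref{lem_final_integration} to the concrete scale-dependent covariances $\cC_l$ constructed in Lemma \ref{lem_covariance_construction}. First, I verify that the assumptions \eqref{eq_final_integration_parameter_assumption} of Lemma \ref{lem_final_integration} hold with the specific $c_0$ and $c_{end}$ supplied by Lemma \ref{lem_covariance_construction} \eqref{item_covariance_construction_necessary}: the bounds $M^{\min\{1,2\sa-1-\sum 1/\sn_j\}}\ge c_6$, $\alpha\ge c_6 M^{\sa/2}$, and $h\ge 2$ are immediate from the hypotheses (the last via \eqref{eq_huge_h_condition}), while the condition $L^d\ge (c_{end}+1)M^{(\sa+\sum 1/\sn_j+1)(\hat{N}_\beta-N_\beta)}$ is exactly the first entry of \eqref{eq_final_huge_L_condition}; the remaining entries there are precisely the requirements \eqref{eq_huge_L_condition} needed to invoke Lemma \ref{lem_covariance_construction} in the first place.

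Next, I relate the Grassmann Gaussian integral with respect to $C(\phi)$ to one with respect to $\sum_{l=N_\beta}^{\hat{N}_\beta}\cC_l=\sum_{l=N_\beta}^{N_h}C_l$. By Lemma \ref{lem_time_discrete_covariance} together with Lemma \ref{lem_properties_cutoff} \eqref{item_unity_cutoff}, these two covariances differ by the phase factor $e^{-i\frac{\pi}{\beta}(s-t)}$ (this is \eqref{eq_covariance_sum_unity}), which is absorbed by the gauge transform $\psi_{X,\xi}\mapsto e^{i\frac{\pi}{\beta}s\xi}\psi_{X,\xi}$ (with $s$ the imaginary-time component of $X$). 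Every bilinear $\opsi\psi$ appearing in the kernels of $V(u)$, $W(u)$, and $A^j$ carries matching time variables on its two endpoints, so these Grassmann polynomials are invariant under the gauge, giving
\[
\int e^{-V(u)(\psi)+W(u)(\psi)-A(\bla)(\psi)}d\mu_{C(\phi)}(\psi)=\int e^{-V(u)(\psi)+W(u)(\psi)-A(\bla)(\psi)}d\mu_{\sum_l\cC_l}(\psi)=e^{V^{end}(u,\bla)},
\]
where the second equality is the output of the multi-scale construction in Subsection \ref{subsec_final_integration} with initial data $V^{0,\hat{N}_\beta}=-V(u)+W(u)$, $V^{1,\hat{N}_\beta}=-A(\bla)$, $V^{2,\hat{N}_\beta}=0$. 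The logarithm is single-valued because both sides are analytic in $(u,\bla)$ by Lemma \ref{lem_final_integration} \eqref{item_final_regularity_general} and coincide with $1$ at $u=0$, $\bla=\b0$.

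Given this identification, claim \eqref{item_integral_formulation_lower_upper_bound} follows by setting $\bla=\b0$ in Lemma \ref{lem_final_integration} \eqref{item_final_bound_general}: with $N=4b\beta h L^d$ one has $N/h=4b\beta L^d$ and $1+N^{-1}\le 2$, so $\sup_u|V^{end}(u,\b0)|\le 8c_6 b\beta\alpha^{-2}M^{(\sum 1/\sn_j+1)(\hat{N}_\beta-N_\beta)}$, and exponentiation yields both the upper and lower bound. For claim \eqref{item_integral_formulation_difference}, differentiating the exponential identification in $\la_j$ at $\bla=\b0$ gives
\[
\frac{\partial V^{end}}{\partial \la_j}(u,\b0)=-\frac{\int e^{-V(u)+W(u)}A^j(\psi)\,d\mu_{C(\phi)}(\psi)}{\int e^{-V(u)+W(u)}\,d\mu_{C(\phi)}(\psi)},
\]
while recursively unwinding the definition \eqref{eq_linear_artificial_free_part} starting from $V^{1-3,\hat{N}_\beta}=-A(\bla)$ and using the convolution identity for Grassmann Gaussian measures yields $V^{1-3,end}(u,\bla)=-\int A(\bla)(\psi)\,d\mu_{C(\phi)}(\psi)$, hence $\partial_{\la_j}V^{1-3,end}(u,\b0)=-\int A^j(\psi)\,d\mu_{C(\phi)}(\psi)$. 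Substituting these two identities into the estimate of Lemma \ref{lem_final_integration} \eqref{item_final_difference_general} delivers the claim.

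The most delicate step is the gauge-transform bookkeeping, where one must verify at the level of anti-symmetric kernels that the extension $\widetilde{\sum_l\cC_l}$ really equals the gauge-transformed $\widetilde{C(\phi)}$ and that every bilinear structure in $V(u)$, $W(u)$, $A^j$ has the matched-time property; this is, however, a routine computation once the covariance formula \eqref{eq_covariance_sum_unity} is in hand. After this step, the rest of the argument is bookkeeping with the outputs already produced in Section \ref{sec_multiscale_integration}.
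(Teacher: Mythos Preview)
Your approach is the same as the paper's, and the verification of the hypotheses of Lemma~\ref{lem_final_integration}, the gauge-transform reduction to the covariance $\sum_l\cC_l$, the extraction of the bound in \eqref{item_integral_formulation_lower_upper_bound} from Lemma~\ref{lem_final_integration}~\eqref{item_final_bound_general}, and the identification of $\partial_{\la_j}V^{1-3,end}$ with $-\int A^j\,d\mu_{C(\phi)}$ are all correct.

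There is, however, a real gap at the step you describe as ``the second equality is the output of the multi-scale construction.'' The construction in Subsections~\ref{subsec_integration_without}--\ref{subsec_final_integration} \emph{defines} $V^{end}$ as a convergent sum of tree expansions (the $V^{end,(n)}$), motivated by the formal Taylor expansion of $\log\int e^{z(\cdot)}d\mu$ at $z=0$; it does not by itself prove that this sum equals $\log\int e^{-V(u)+W(u)-A}d\mu_{\sum_l\cC_l}$ on the stated domain. Your justification---analyticity plus agreement at the single point $(u,\bla)=(0,\b0)$---is insufficient: two analytic functions agreeing at one point need not agree anywhere else. What is needed is agreement on an open set. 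The paper supplies this by temporarily enlarging $\alpha$ so that the norms $\|V^l_m\|$ become small enough that $\Re\int e^{zV^l(\psi+\psi')}d\mu_{\cC_l}(\psi')>0$ for $|z|\le 2$ at every scale; this makes each single-scale logarithm analytic in $|z|\le 2$, so its Taylor series at $z=0$ (which is precisely the tree expansion) converges to it at $z=1$. One then composes the scales inductively using $e^{\log f}=f$ and the division formula for Grassmann Gaussian integrals, obtaining the identity on the smaller disk determined by the enlarged $\alpha$. Since $V^{end}$ as a function of $(u,\bla)$ is independent of $\alpha$ (only the bounds depend on $\alpha$), and both sides are analytic on the larger disk determined by the original $\alpha$, the identity theorem extends the equality to the full domain $\overline{D(b^{-1}c_0^{-2}\alpha^{-4})}\times\overline{D(r'')}^2$. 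This is the bulk of the paper's proof of the proposition; you should either sketch this argument or cite \cite[Lemma~4.13]{K_BCS} or \cite[Proposition~6.4~(3)]{K_RG} where the mechanism is worked out.
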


\begin{proof}
Observe that under the condition of this proposition the claims of Lemma
 \ref{lem_final_integration} and Lemma \ref{lem_covariance_construction}
 hold true. By the definition and \eqref{eq_covariance_sum_unity}, 
\begin{align}
\sum_{l=N_{\beta}}^{\hat{N}_{\beta}}\cC_l(\orho\rho\bx s,\oeta\eta \by
 t)=e^{-i\frac{\pi}{\beta}(s-t)}C(\phi)(\orho\rho\bx s,\oeta\eta \by
 t),\quad (\forall (\orho,\rho,\bx, s),(\oeta,\eta, \by, t)\in
 I_0).\label{eq_covariance_full_summation}
\end{align}
Thus by the gauge transform $\psi_{\orho\rho\bx s \xi}\to e^{-\xi
 i\frac{\pi}{\beta}s}\psi_{\orho\rho\bx s \xi}$, 
\begin{align*}
\int e^{-V(u)(\psi)+W(u)(\psi)-A(\psi)}d\mu_{C(\phi)}(\psi)
=\int
 e^{-V(u)(\psi)+W(u)(\psi)-A(\psi)}d\mu_{\sum_{l=N_{\beta}}^{\hat{N}_{\beta}}\cC_l}(\psi).
\end{align*}
Thus the function $V^{end}$ studied in Lemma \ref{lem_final_integration}
 coincides with the function 
\begin{align*}
(u,\bla)\mapsto \log\left(\int
 e^{-V(u)(\psi)+W(u)(\psi)-A(\psi)}d\mu_{C(\phi)}(\psi)\right)
\end{align*}
if $|u|$, $\|\bla\|_{\C^2}$ are sufficiently small. 
This claim can be confirmed by an elementary argument close to a part
 of the proof of \cite[\mbox{Lemma 4.13}]{K_BCS} or the proof of
 \cite[\mbox{Proposition 6.4 (3)}]{K_RG}. However, we provide a sketch
 for the readers' convenience. With the constant $c_6$ appearing in
 Lemma \ref{lem_final_integration}, set $r:=b^{-1}c_0^{-2}\alpha^{-4}$,
 $r'':=c_6^{-1}L^{-d}h^{N_{\beta}-\hat{N}_{\beta}-1}\eps_{\beta}^{\hat{N}_{\beta}-N_{\beta}}\beta^{-1}c_0^{-2}\alpha^{-4}$. For $l\in \{N_{\beta},N_{\beta}+1,\cdots,\hat{N}_{\beta}\}$ we define $V^l\in C(\overline{D(r)}\times \overline{D(r'')}^2,\bigwedge_{even}\cV)\cap C^{\o}(D(r)\times D(r'')^2,\bigwedge_{even}\cV)$ by $V^l:=\sum_{j=1}^2V^{0-j,l}+\sum_{j=1}^3V^{1-j,l}+V^{2,l}$. By Lemma \ref{lem_IR_integration_without} and Lemma \ref{lem_IR_integration_with}, 
\begin{align*}
&\sum_{m=0}^N\|V_m^{l}\|_{1,r,r''}\le
 c(\beta,L^d,h,(\sn_j)_{j=1}^d,b,M,\sa)\alpha^{-2}, \quad(\forall l\in \{N_{\beta},N_{\beta}+1,\cdots,\hat{N}_{\beta}\}).
\end{align*}
This together with \eqref{eq_scale_covariance_determinant_bound} implies
 that there exists a positive constant\\
 $c'(\beta,L^d,h,(\sn_j)_{j=1}^d,b,M,\sa,c_0)$ such that if $\alpha
 \ge c'(\beta,L^d,h,(\sn_j)_{j=1}^d,b,M,\sa,c_0)$, 
\begin{align}
&\Re \int e^{z
 V^l(u,\bla)(\psi)}d\mu_{\cC_l}(\psi)>0,\label{eq_each_integral_crude_positivity}\\
&(\forall z\in \overline{D(2)},\ (u,\bla)\in \overline{D(r)}\times
 \overline{D(r'')}^2,\ l\in
 \{N_{\beta},N_{\beta}+1,\cdots,\hat{N}_{\beta}\}).\notag
\end{align}
Let us fix such a large $\alpha$. 
Then for any $(u,\bla)\in\overline{D(r)}\times\overline{D(r'')}^2$,
 $l\in \{N_{\beta},N_{\beta}+1,\cdots,\hat{N}_{\beta}\}$
\begin{align*}
&z\mapsto \log\left(\int e^{z
 V^{l}(u,\bla)(\psi+\psi^1)}d\mu_{\cC_l}(\psi^1)\right),\quad z\mapsto \log\left(\int e^{z V^{N_{\beta}}(u,\bla)(\psi)}d\mu_{\cC_{N_{\beta}}}(\psi)
\right)
\end{align*}
are analytic in $D(2)$. Then by definition, 
\begin{align}
&V^l(u,\bla)(\psi)=\sum_{n=1}^{\infty}\frac{1}{n!}\left(\frac{d}{dz}\right)^n\Big|_{z=0}\log\left(\int
 e^{z
 V^{l+1}(u,\bla)(\psi+\psi^1)}d\mu_{\cC_{l+1}}(\psi^1)\right)\label{eq_each_integral_connection}\\
&\qquad\qquad\quad =\log\left(\int
 e^{
 V^{l+1}(u,\bla)(\psi+\psi^1)}d\mu_{\cC_{l+1}}(\psi^1)\right),\notag\\
&V^{end}(u,\bla)=\sum_{n=1}^{\infty}\frac{1}{n!}\left(\frac{d}{dz}\right)^n\Big|_{z=0}\log\left(\int
 e^{z
 V^{N_{\beta}}(u,\bla)(\psi)}d\mu_{\cC_{N_{\beta}}}(\psi)\right)\label{eq_last_integral_connection}\\
&\qquad\qquad\ \  =\log\left(\int
 e^{
 V^{N_{\beta}}(u,\bla)(\psi)}d\mu_{\cC_{N_{\beta}}}(\psi)\right),\notag\\
&(\forall (u,\bla)\in\overline{D(r)}\times\overline{D(r'')}^2,\ l\in
 \{N_{\beta},N_{\beta}+1,\cdots,\hat{N}_{\beta}-1\}).\notag
\end{align}
By \eqref{eq_each_integral_crude_positivity} for $l=\hat{N}_{\beta}$, 
\begin{align*}
&\int
 e^{V^{\hat{N}_{\beta}-1}(u,\bla)(\psi+\psi^1)}d\mu_{\cC_{\hat{N}_{\beta}-1}}(\psi^1)
=\int \int
 e^{V^{\hat{N}_{\beta}}(u,\bla)(\psi+\psi^1+\psi^2)}d\mu_{\cC_{\hat{N}_{\beta}}}(\psi^2)d\mu_{\cC_{\hat{N}_{\beta}-1}}(\psi^1)\\
&\qquad\qquad\qquad\qquad\qquad\qquad\quad\ \ =\int
 e^{V^{\hat{N}_{\beta}}(u,\bla)(\psi+\psi^1)}d\mu_{\cC_{\hat{N}_{\beta}-1}+\cC_{\hat{N}_{\beta}}}(\psi^1),\\
& (\forall (u,\bla)\in \overline{D(r)}\times
 \overline{D(r'')}^2).
\end{align*}
Here we used the fact that $e^{\log f(\psi)}=f(\psi)$ $(\forall f\in\bigwedge\cV$ with $\Re
 f_0>0)$ (see e.g. \cite[\mbox{Lemma C.2}]{K_14}) and the division
 formula of Grassmann Gaussian integral (see
 e.g. \cite[\mbox{Proposition I.21}]{FKT}). Assume that $l\in
 \{N_{\beta}+2,N_{\beta}+3,\cdots,\hat{N}_{\beta}\}$ and 
\begin{align}
&\int
 e^{V^{l-1}(u,\bla)(\psi+\psi^1)}d\mu_{\cC_{l-1}}(\psi^1)
=\int
 e^{V^{\hat{N}_{\beta}}(u,\bla)(\psi+\psi^1)}d\mu_{\sum_{j=l-1}^{\hat{N}_{\beta}}\cC_{j}}(\psi^1),\label{eq_sum_integral_induction_hypothesis}\\
&(\forall (u,\bla)\in \overline{D(r)}\times
 \overline{D(r'')}^2).\notag
\end{align}
Then by using \eqref{eq_each_integral_connection},
 \eqref{eq_each_integral_crude_positivity},
 \eqref{eq_sum_integral_induction_hypothesis} in this order,
\begin{align*}
&\int
 e^{V^{l-2}(u,\bla)(\psi+\psi^2)}d\mu_{\cC_{l-2}}(\psi^2)=
\int\int
 e^{V^{l-1}(u,\bla)(\psi+\psi^1+\psi^2)}d\mu_{\cC_{l-1}}(\psi^1)d\mu_{\cC_{l-2}}(\psi^2)\\
&\qquad\qquad\qquad\qquad\qquad\qquad\ = \int
 e^{V^{\hat{N}_{\beta}}(u,\bla)(\psi+\psi^1)}d\mu_{\sum_{j=l-2}^{\hat{N}_{\beta}}\cC_{j}}(\psi^1),\\
&(\forall (u,\bla)\in \overline{D(r)}\times
 \overline{D(r'')}^2).
\end{align*}
Thus by induction with $l$ we have that 
\begin{align*}
&\int
 e^{V^{N_{\beta}}(u,\bla)(\psi+\psi^1)}d\mu_{\cC_{N_{\beta}}}(\psi^1)=
\int
 e^{V^{\hat{N}_{\beta}}(u,\bla)(\psi+\psi^1)}d\mu_{\sum_{j=N_{\beta}}^{\hat{N}_{\beta}}\cC_{j}}(\psi^1),\\
&(\forall (u,\bla)\in \overline{D(r)}\times
 \overline{D(r'')}^2).
\end{align*}
By combining this equality with \eqref{eq_last_integral_connection} we
 obtain that 
\begin{align*}
&V^{end}(u,\bla)=\log\left(\int
 e^{V^{\hat{N}_{\beta}}(u,\bla)(\psi^1)}d\mu_{\sum_{j=N_{\beta}}^{\hat{N}_{\beta}}\cC_j}(\psi^1)\right),\\
&(\forall (u,\bla)\in \overline{D(r)}\times
 \overline{D(r'')}^2),
\end{align*}
which implies the claim.

By the identity theorem and continuity,
\begin{align}
&\int
 e^{-V(u)(\psi)+W(u)(\psi)-A(\psi)}d\mu_{C(\phi)}(\psi)=e^{V^{end}(u,\bla)},\label{eq_analytic_continued_equality}\\
&\left(\forall (u,\bla)\in \overline{D(b^{-1}c_0^{-2}\alpha^{-4})}\times
 \overline{D(c_6^{-1}L^{-d}h^{N_{\beta}-\hat{N}_{\beta}-1}\eps_{\beta}^{\hat{N}_{\beta}-N_{\beta}}\beta^{-1}c_0^{-2}\alpha^{-4})}^2\right).\notag
\end{align}
By Lemma \ref{lem_final_integration}
 \eqref{item_final_bound_general},
\begin{align}
\sup_{u\in\overline{D(b^{-1}c_0^{-2}\alpha^{-4})}}|V^{end}(u,\b0)|\le
 8c_6 b\beta
 \alpha^{-2}M^{(\sum_{j=1}^d\frac{1}{\sn_j}+1)(\hat{N}_{\beta}-N_{\beta})}.\label{eq_final_bound_application}
\end{align}
Then the claim \eqref{item_integral_formulation_lower_upper_bound} follows
 from \eqref{eq_analytic_continued_equality} and
 \eqref{eq_final_bound_application}. By the same gauge transform as
 above and the division formula of Grassmann Gaussian integral,
\begin{align*}
-\int A(\psi)d\mu_{C(\phi)}(\psi)= -\int
 A(\psi)d\mu_{\sum_{l=N_{\beta}}^{\hat{N}_{\beta}}\cC_l}(\psi)=V^{1-3,end}.
\end{align*}
Then Lemma \ref{lem_final_integration} \eqref{item_final_difference_general} and
 \eqref{eq_analytic_continued_equality} ensure the claim \eqref{item_integral_formulation_difference}.
\end{proof}

\subsection{Proof of the main results}\label{subsec_proof_theorem}

Let us move toward the proofs of Theorem \ref{thm_main_theorem} and
Corollary \ref{cor_zero_temperature_limit}.
In order to prove the existence of the limit $L\to \infty$ claimed in
Theorem \ref{thm_main_theorem}, we use the following proposition.

\begin{proposition}\label{prop_infinite_volume_limit}
Assume that $M$, $\alpha$ satisfy the same conditions as in Proposition
 \ref{prop_general_integration_application} and $L$ $(\in \N)$ satisfies
 \eqref{eq_final_huge_L_condition}. Then for any non-empty compact set
 $Q$ of $\C$
\begin{align*}
\lim_{h\to \infty\atop h\in \frac{2}{\beta}\N}\int e^{-V(u)(\psi)+W(u)(\psi)}
d\mu_{C(\phi)}(\psi),\quad
\lim_{L\to \infty\atop L\in \N}
\lim_{h\to \infty\atop h\in \frac{2}{\beta}\N}\int e^{-V(u)(\psi)+W(u)(\psi)}
d\mu_{C(\phi)}(\psi)
\end{align*}
converge in $C(Q\times \overline{D(2^{-1}b^{-1}c_0^{-2}\alpha^{-4})})$
 as sequences of function of the variable $(\phi,u)$.
\end{proposition}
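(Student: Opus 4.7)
The strategy is to reduce the two convergence statements to convergence of the multi-scale effective action $V^{end}(u,\mathbf{0})$ constructed in Section \ref{sec_multiscale_integration}, and then to promote kernelwise (pointwise) convergence to locally uniform convergence in $(\phi,u)$ via Vitali's theorem, using the uniform bounds already established. By the identity \eqref{eq_analytic_continued_equality} proved in Proposition \ref{prop_general_integration_application},
\begin{equation*}
\int e^{-V(u)(\psi)+W(u)(\psi)}\,d\mu_{C(\phi)}(\psi)=e^{V^{end}(u,\mathbf{0})}
\end{equation*}
on $\overline{D(b^{-1}c_0^{-2}\alpha^{-4})}\times \C$, where $V^{end}$ additionally depends on $h$, $L$, $\phi$. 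It therefore suffices to prove that $V^{end}(u,\mathbf{0})$ converges in $C(Q\times \overline{D(2^{-1}b^{-1}c_0^{-2}\alpha^{-4})})$ as $h\to\infty$, and that the resulting limit converges again as $L\to\infty$.

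For the step $h\to\infty$ at fixed $L$, I would first observe that the cutoff functions $\chi_l$ and the integer cutoffs $N_\beta,\hat N_\beta$ are $h$-independent once $h$ exceeds a $(\beta,M)$-dependent threshold, so the only $h$-dependence in $\cC_l$ enters through the Matsubara sum over $\cM_h$ and through the matrix factor $h^{-1}(I_{2b}-e^{-\frac{i}{h}(\omega-\theta(\beta)/2)I_{2b}+\frac{1}{h}E(\phi)(\bk)})^{-1}$. By Lemma \ref{lem_properties_cutoff} \eqref{item_support_shift} this factor is smooth on the support of $\chi_l$ and, as $h\to\infty$, converges to $(i(\omega-\theta(\beta)/2)I_{2b}-E(\phi)(\bk))^{-1}$ uniformly on compacts in $(\omega,\bk,\phi)$; the Matsubara sums are Riemann sums that approach the corresponding $h$-continuum sums. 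Thus each value $\cC_l(X,Y)$ and each of the norms $\|\tilde{\cC}_l\|_{1,\infty}$, $\|\tilde{\cC}_l\|$ converges as $h\to\infty$, uniformly on compact sets of $\phi$. Since the kernels of $V^{0-j,l},V^{1-j,l},V^{2,l}$ are produced by the tree formula \eqref{eq_tree_expansion} as finite iterated sums of products involving these covariances, each of these kernels converges kernelwise, uniformly on compacts in $(u,\phi)$, and so does the final kernel of $V^{end}(\cdot,\mathbf{0})$.

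The passage from this kernelwise (hence pointwise) convergence to uniform convergence on $Q\times \overline{D(2^{-1}b^{-1}c_0^{-2}\alpha^{-4})}$ is then handled by combining three facts: the uniform, $h$-independent upper bound
\begin{equation*}
|V^{end}(u,\mathbf{0})|\le 8c_6b\beta\alpha^{-2}M^{(\sum_{j=1}^d\frac{1}{\sn_j}+1)(\hat N_\beta-N_\beta)}
\end{equation*}
valid once \eqref{eq_huge_h_condition} holds (implicit in the proof of Proposition \ref{prop_general_integration_application} \eqref{item_integral_formulation_lower_upper_bound} via Lemma \ref{lem_final_integration} \eqref{item_final_bound_general}); the analyticity of $V^{end}(\cdot,\mathbf{0})$ in $u\in D(b^{-1}c_0^{-2}\alpha^{-4})$ supplied by Lemma \ref{lem_final_integration} \eqref{item_final_regularity_general}, together with analytic dependence in $\phi$ inherited from the $\phi$-analyticity of the resolvents defining $\cC_l$; and Vitali's convergence theorem, which upgrades locally uniformly bounded pointwise convergence of analytic functions to uniform convergence on compact subsets. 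The $L\to\infty$ step, carried out after the $h$-limit, is parallel: now the $L$-dependence is concentrated in the momentum Riemann sum $L^{-d}\sum_{\bk\in\G^*}$, and Lemma \ref{lem_discrete_continuous_estimate} applied to the smooth integrands yields convergence of the kernels to their $\G_\infty^*$-integral analogues; the $L$-independent bound of Proposition \ref{prop_general_integration_application} \eqref{item_integral_formulation_lower_upper_bound} and Vitali's theorem again conclude uniform convergence. This is in the spirit of the general convergence proposition \cite[\mbox{Proposition 4.16}]{K_BCS} that the authors explicitly mention using.

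The main obstacle I expect is keeping the control uniform in $\phi$ over the compact set $Q$. Several estimates in Lemma \ref{lem_real_covariances_properties} and Lemma \ref{lem_covariance_construction} rely on $h\ge \sup_{\bk}\|E(\phi)(\bk)\|_{2b\times 2b}$, and the determinant bound involves $\sqrt{E(\bk)^2+|\phi|^2}$. Showing that all constants $\hat c,c_0,c_{end}$ may be chosen uniformly for $\phi\in Q$ requires a careful bookkeeping pass through the proofs, but since $E(\cdot)$ is uniformly bounded by \eqref{eq_dispersion_upper_bound} and $Q$ is compact, the required threshold for $h$ is finite, which is compatible with the order of limits (first $h\to\infty$, then $L\to\infty$) in the statement.
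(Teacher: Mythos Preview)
Your reduction to the convergence of $V^{end}(u,\mathbf{0})$ via \eqref{eq_analytic_continued_equality} is correct and matches the paper. However, the step where you invoke Vitali's theorem jointly in $(u,\phi)$ has a genuine gap: the covariance $C(\phi)$ is \emph{not} holomorphic in $\phi$. The matrix $E(\phi)(\bk)$ carries $\overline{\phi}$ in its upper-right block and $\phi$ in its lower-left block, so the resolvents $(i(\omega-\theta(\beta)/2)I_{2b}-E(\phi)(\bk))^{-1}$, and hence all the $\cC_l$, the tree outputs, and $V^{end}$, depend on both $\phi$ and $\overline{\phi}$. Vitali's theorem for holomorphic functions therefore cannot be applied in the $\phi$ variable, and applying it in $u$ alone, for each fixed $\phi$, does not yield convergence that is uniform on $Q$.

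The paper avoids this by Taylor-expanding $V^{end}(\phi)(u,\mathbf{0})$ in $u$ about $0$, writing the $n$-th coefficient $\alpha_{n,L,h}(\phi)$ as an explicit connected Grassmann expectation via \eqref{eq_tree_expansion}, and proving \emph{directly} that each $\alpha_{n,L,h}$ converges in $C(Q)$ as $h\to\infty$ and then as $L\to\infty$. That direct argument (imported from \cite[\mbox{Proposition 4.16}]{K_BCS}) relies on a $\phi$-uniform spatial decay bound on $C(\phi)$ of the type \eqref{eq_full_covariance_total_decay}, not on any analyticity in $\phi$. The uniform estimate \eqref{eq_final_bound_application} combined with Cauchy's integral formula then gives the summable bound \eqref{eq_taylor_coefficient_uniform_bound} on $\sup_{\phi\in Q}|\alpha_{n,L,h}(\phi)|$, and dominated convergence in $l^1(\N,C(Q\times\overline{D(r/2)}))$ assembles the joint uniform convergence. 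Your kernelwise convergence observations and uniform bounds would feed naturally into this scheme, but the Vitali shortcut does not survive the absence of $\phi$-holomorphy.
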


\begin{proof}
In essence the claim can be proved in the same way as the proof of
 \cite[\mbox{Proposition 4.16}]{K_BCS}. Our model depends on the band
 index $\rho\in \cB$, while the model in \cite{K_BCS} does not. However,
 the band index makes no essential difference to prove the claim. We
 have to comment on one notable difference from the situation of
 \cite[\mbox{Proposition 4.16}]{K_BCS}. Here we do not know whether the
 function 
\begin{align*}
u\mapsto \log\left(\int e^{-V(u)(\psi)+W(u)(\psi)}
d\mu_{C(\phi)}(\psi)\right)
\end{align*}
is analytic in $D(b^{-1}c_0^{-2}\alpha^{-4})$, while we knew the
 analyticity of the function in the domain of interest
in \cite[\mbox{Proposition 4.16}]{K_BCS}. 
In the following we outline the proof of the proposition without using
 the above-mentioned analyticity.

For $\phi\in\C$ we define the function $\cG(\phi):(\{1,2\}\times
 \cB\times \G_{\infty}\times [0,\beta))^2\to\C$ by 
\begin{align*}
\cG(\phi)(\orho\rho\bx s,\oeta \eta \by
 t):=e^{-i\frac{\pi}{\beta}(s-t)}C(\phi)(\orho\rho\bx s,\oeta \eta \by
 t).
\end{align*}
It follows from the gauge transform that
\begin{align*}
\int e^{-V(u)(\psi)+W(u)(\psi)}
d\mu_{\cG(\phi)}(\psi)=\int e^{-V(u)(\psi)+W(u)(\psi)}
d\mu_{C(\phi)}(\psi),\quad (\forall u,\phi\in \C).
\end{align*}
For $n\in \N$, $\phi\in Q$, set 
\begin{align*}
\alpha_{n,L,h}(\phi):=\frac{1}{n!}\left(\frac{d}{du}\right)^n\log\left(
\int e^{-V(u)(\psi)+W(u)(\psi)}
d\mu_{\cG(\phi)}(\psi)\right)\Bigg|_{u=0}.
\end{align*}
We consider $\alpha_{n,L,h}$ as a function of $\phi$ on $Q$.
The major part of the proof of \cite[\mbox{Proposition 4.16}]{K_BCS} was
 devoted to proving the uniform convergence of the function denoted by
 the same symbol `$\alpha_{n,L,h}$'. Despite the presence of the band
 index, the essentially same argument as the corresponding part of the
 proof of \cite[\mbox{Proposition 4.16}]{K_BCS} shows that for any $n\in
 \N$
\begin{align*}
\lim_{h\to \infty\atop h\in\frac{2}{\beta}\N}\alpha_{n,L,h},\quad 
 \lim_{L\to \infty\atop L\in \N}\lim_{h\to \infty\atop
 h\in\frac{2}{\beta}\N}\alpha_{n,L,h}
\end{align*}
converge in $C(Q)$. 
In place of \cite[\mbox{(4.50)}]{K_BCS}, here we need to use a spatial decay property of
 $\cG(\phi)$ such as
\begin{align}
&\|\cG(\phi)(\cdot\bx s, \cdot \by t)\|_{2b\times 2b}\le
 \frac{c(d,\beta,\theta,b,E,Q)}{1+\sum_{j=1}^d|\frac{L}{2\pi}(e^{i\frac{2\pi}{L}\<\bx-\by,\hbv_j\>}-1)|^{d+1}},\label{eq_full_covariance_total_decay}\\
&(\forall \bx,\by\in\G_{\infty},\ s,t\in [0,\beta),\ \phi\in Q)\notag
\end{align}
with a positive constant $c(d,\beta,\theta,b,E,Q)$
 depending only on $d$, $\beta$, $\theta$, $b$, $E$,
  $Q$. The
 above inequality can be directly derived from
 \eqref{eq_characterization_covariance}. 
Let $V^{end}(\phi)$ be the function studied in Lemma
 \ref{lem_final_integration}. Here we indicate that it depends on the
 variable $\phi$. Set $r:=b^{-1}c_0^{-2}\alpha^{-4}$. 
Take any $h\in \frac{2}{\beta}\N$ satisfying 
\begin{align}
h\ge \max\left\{2,\sc,\sup_{\phi\in
 Q}\sup_{\bk\in\R^d}\|E(\phi)(\bk)\|_{2b\times 2b}\right\}.\label{eq_final_huge_h_condition}
\end{align}
Since 
\begin{align*}
V^{end}(\phi)(u,\b0)=\log\left(\int e^{-V(u)(\psi)+W(u)(\psi)}
d\mu_{\cG(\phi)}(\psi)\right)
\end{align*}
for small $u$ (see the proof of Proposition
 \ref{prop_general_integration_application} for this claim)
and $u\mapsto V^{end}(\phi)(u,\b0)$ is analytic in $D(r)$, 
\begin{align}
&V^{end}(\phi)(u,\b0)=\sum_{n=1}^{\infty}\alpha_{n,L,h}(\phi)u^n,\quad
 (\forall u\in
 \overline{D(r/2)}),\label{eq_final_term_taylor_expansion}\\
&\alpha_{n,L,h}(\phi)=\frac{1}{2\pi i}\oint_{|z|=(1+\eps)2^{-1}r}dz
 \frac{V^{end}(\phi)(z,\b0)}{z^{n+1}},\quad (\forall\eps \in
 (0,1)).\notag
\end{align}
Here we should remark that the  condition \eqref{eq_huge_h_condition}
 was replaced by \eqref{eq_final_huge_h_condition}. By
 \eqref{eq_final_bound_application},
\begin{align}
\sup_{\phi\in Q}|\alpha_{n,L,h}(\phi)|\le \frac{8 c_6 b \beta
 \alpha^{-2}
 M^{(\sum_{j=1}^d\frac{1}{\sn_j}+1)(\hat{N}_{\beta}-N_{\beta})}}{(1+\eps)^n2^{-n}r^n}\label{eq_taylor_coefficient_uniform_bound}
\end{align}
for $h\in \frac{2}{\beta}\N$ satisfying
 \eqref{eq_final_huge_h_condition} and $L\in \N$ satisfying
 \eqref{eq_final_huge_L_condition}. By the convergent properties of
 $\alpha_{n,L,h}$, \eqref{eq_final_term_taylor_expansion},
 \eqref{eq_taylor_coefficient_uniform_bound} and the dominated
 convergence theorem in $l^1(\N,C(Q\times \overline{D(r/2)}))$
we conclude that
\begin{align*}
\lim_{h\to \infty\atop h\in
 \frac{2}{\beta}\N}V^{end}(\cdot)(\cdot,\b0),\quad 
\lim_{L\to \infty\atop L\in \N}
\lim_{h\to \infty\atop h\in
 \frac{2}{\beta}\N}V^{end}(\cdot)(\cdot,\b0)
\end{align*}
converge in $C(Q\times \overline{D(\frac{r}{2})})$.
Then the equality \eqref{eq_analytic_continued_equality} implies the
 claim. 
\end{proof}

Here let us characterize the covariance with zero time-variable so that
it can be used to compute the thermal expectations of our interest.

\begin{lemma}\label{lem_covariance_final_characterization}
For any $\bx,\by\in\G$, $\rho\in\cB$, $\orho,\oeta\in \{1,2\}$ with
 $\orho\neq \oeta$, 
\begin{align}
&C(\phi)(\orho\cdot\bx 0,\orho\cdot \by
 0)\label{eq_covariance_final_characterization}\\
&=\frac{1}{2L^d}\sum_{\bk\in \G^*}e^{i\<\bx-\by,\bk\>}
\Bigg(\frac{e^{-i\frac{\beta\theta(\beta)}{2}}+\cosh(\beta\sqrt{E(\bk)^2+|\phi|^2})}{\cos(\beta\theta(\beta)/2)+\cosh(\beta\sqrt{E(\bk)^2+|\phi|^2})}\notag\\
&\qquad\qquad\qquad\qquad\quad  +
\frac{(-1)^{\orho}\sinh(\beta\sqrt{E(\bk)^2+|\phi|^2})E(\bk)}{(\cos(\beta\theta(\beta)/2)+\cosh(\beta\sqrt{E(\bk)^2+|\phi|^2}))\sqrt{E(\bk)^2+|\phi|^2}}\Bigg),\notag\\
&C(\phi)(\orho\cdot\bx 0,\oeta\cdot \by
 0)\notag\\
&=\frac{-1}{2L^d}(1_{(\orho,\oeta)=(1,2)}\overline{\phi}+1_{(\orho,\oeta)=(2,1)}\phi)\notag\\
&\quad\cdot \sum_{\bk\in \G^*}e^{i\<\bx-\by,\bk\>}\frac{\sinh(\beta\sqrt{E(\bk)^2+|\phi|^2})}{(\cos(\beta\theta(\beta)/2)+\cosh(\beta\sqrt{E(\bk)^2+|\phi|^2}))\sqrt{E(\bk)^2+|\phi|^2}},\notag\\
&|C(\phi)(\orho\rho \b0 0, \oeta \rho \b0
 0)|\label{eq_covariance_final_lower_bound}\\
&\ge |\phi|\sinh\left(\beta\sqrt{\sup_{\bk\in\R^d}\|E(\bk)\|_{b\times
 b}^2+|\phi|^2}\right)\notag\\
&\quad\cdot\left(2\left(\cos\left(\frac{\beta\theta(\beta)}{2}\right)+\cosh\left(\beta\sqrt{\sup_{\bk\in\R^d}\|E(\bk)\|_{b\times
 b}^2+|\phi|^2}\right)\right)\right)^{-1}\notag\\
&\quad\cdot \left(\sqrt{\sup_{\bk\in\R^d}\|E(\bk)\|_{b\times
 b}^2+|\phi|^2}\right)^{-1}.\notag
\end{align}
\end{lemma}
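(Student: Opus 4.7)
The plan is to start from the momentum-space representation in Lemma \ref{lem_characterization_covariance} \eqref{item_characterization_covariance} specialized to $s=t=0$, so that the only nontrivial object to analyze is the matrix
\begin{align*}
(I_{2b} + e^{\beta(i\frac{\theta(\beta)}{2}I_{2b} + E(\phi)(\bk))})^{-1}
= (I_{2b} + e^{i\frac{\beta\theta(\beta)}{2}} e^{\beta E(\phi)(\bk)})^{-1},
\end{align*}
which I want to write in closed form using the block structure of $E(\phi)(\bk)$. A direct calculation shows
\begin{align*}
E(\phi)(\bk)^2 = \begin{pmatrix} E(\bk)^2 + |\phi|^2 I_b & 0 \\ 0 & E(\bk)^2 + |\phi|^2 I_b \end{pmatrix},
\end{align*}
since the off-diagonal blocks become $E(\bk)\bar\phi - \bar\phi E(\bk) = 0$ and $\phi E(\bk) - E(\bk)\phi = 0$. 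Setting $H(\bk):=\sqrt{E(\bk)^2+|\phi|^2}$ (a $b\times b$ Hermitian matrix commuting with $E(\bk)$), this says $E(\phi)(\bk)^2 = H(\bk)^2\otimes I_2$ in the $2\times 2$ block sense.

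The key step is then to observe that for any analytic $f$, writing $f = f_{\mathrm{even}} + f_{\mathrm{odd}}$, the identity $E(\phi)(\bk)^{2k} = H(\bk)^{2k}\otimes I_2$ gives the functional-calculus formula
\begin{align*}
f(E(\phi)(\bk)) = f_{\mathrm{even}}(H(\bk))\otimes I_2 + \frac{f_{\mathrm{odd}}(H(\bk))}{H(\bk)} \cdot E(\phi)(\bk).
\end{align*}
I will apply this to $g(x) := (1 + e^{i\beta\theta(\beta)/2} e^{\beta x})^{-1}$ after computing the elementary identities
\begin{align*}
g(x)+g(-x) = \frac{e^{-i\beta\theta(\beta)/2}+\cosh(\beta x)}{\cos(\beta\theta(\beta)/2)+\cosh(\beta x)},\quad
g(x)-g(-x) = \frac{-\sinh(\beta x)}{\cos(\beta\theta(\beta)/2)+\cosh(\beta x)},
\end{align*}
both of which are checked by multiplying numerator and denominator by $e^{-i\beta\theta(\beta)/2}$. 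Substituting the block form of $E(\phi)(\bk)$ into the functional-calculus identity and reading off the $((\orho-1)b+\rho,(\oeta-1)b+\eta)$ matrix entry then yields both claimed formulas directly: the diagonal blocks give $\frac12 g_{\mathrm{even}}(H) \pm \frac12 \frac{g_{\mathrm{odd}}(H)}{H} E(\bk)$ with the sign $(-1)^{\orho+1}$, producing the first identity, and the off-diagonal blocks give $\bar\phi\, g_{\mathrm{odd}}(H)/H$ or $\phi\, g_{\mathrm{odd}}(H)/H$, producing the second.

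For the lower bound, I take $\bx=\by=\b0$ in the off-diagonal formula, so only $|\phi|$ and the spectral sum remain. Writing the spectral decomposition $E(\bk)=\sum_{j} e_j(\bk) P_j(\bk)$ as in \eqref{eq_spectral_decomposition}, the summand is $\sum_j f(\sqrt{e_j(\bk)^2+|\phi|^2}) P_j(\bk)(\rho,\rho)$ with $f(x) := \sinh(\beta x)/((\cos(\beta\theta(\beta)/2)+\cosh(\beta x))x)$. Since $\beta\theta(\beta)/2\notin\pi(2\Z+1)$ gives $\cos(\beta\theta(\beta)/2)>-1$, the function $f$ is strictly positive and, by the monotonicity stated in the proof of Lemma \ref{lem_gap_equation_solvability} (citing \cite[Lemma 4.19]{K_BCS}), strictly decreasing on $(0,\infty)$. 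Combining this with the projection identities $P_j(\bk)(\rho,\rho)\ge 0$ and $\sum_j P_j(\bk)(\rho,\rho)=1$ gives the termwise lower bound $f(\sqrt{\sup_{\bk'}\|E(\bk')\|_{b\times b}^2+|\phi|^2})$, independent of $\bk$, and averaging over $\bk\in\G^*$ preserves this bound. The main obstacle, which is only mild, is keeping the block-matrix functional calculus unambiguous when $H(\bk)$ has zero eigenvalues: in that case I interpret $\sinh(\beta H)/H$ and $g_{\mathrm{odd}}(H)/H$ via the entire functions $x\mapsto\sinh(\beta x)/x$ and $x\mapsto g_{\mathrm{odd}}(x)/x$, which removes the apparent singularity at $x=0$ and makes both the explicit formulas and the lower bound valid for all $\phi\in\C$.
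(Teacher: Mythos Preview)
Your proof is correct. The overall strategy matches the paper's---both exploit that $E(\phi)(\bk)^2$ is block-diagonal with blocks $E(\bk)^2+|\phi|^2 I_b$---but the organization differs in a way worth noting. The paper first applies the spectral decomposition $E(\bk)=\sum_j e_j(\bk)P_j(\bk)$ to write $E(\phi)(\bk)=\sum_j E_j(\phi)(\bk)\otimes P_j(\bk)$ with $E_j(\phi)(\bk)$ a \emph{scalar} $2\times 2$ matrix, and then cites \cite[Lemma~4.20]{K_BCS} for the explicit form of $(I_2+e^{\beta(i\theta(\beta)/2\,I_2+E_j(\phi)(\bk))})^{-1}$; the identities \eqref{eq_covariance_final_characterization} then follow by summing over $j$. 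Your route instead stays at the $2b\times 2b$ level and uses the even/odd functional-calculus identity $f(E(\phi)(\bk))=f_{\mathrm{even}}(H(\bk))\oplus f_{\mathrm{even}}(H(\bk))+(f_{\mathrm{odd}}(H(\bk))/H(\bk))\,E(\phi)(\bk)$, computing $g(x)\pm g(-x)$ by hand. This is more self-contained (no appeal to the companion paper) and treats all eigenvalues simultaneously rather than one at a time; the price is that you must justify the functional calculus for the non-entire $g$, which you do correctly by noting it is analytic on a real neighborhood of the (real) spectrum and by interpreting $g_{\mathrm{odd}}(x)/x$ as an entire function at $x=0$. For the lower bound \eqref{eq_covariance_final_lower_bound} your argument coincides with the paper's: both use $P_j(\bk)(\rho,\rho)\ge 0$, $\sum_j P_j(\bk)(\rho,\rho)=1$, and the strict monotonicity of $x\mapsto \sinh(x)/((\cos(\beta\theta(\beta)/2)+\cosh x)x)$. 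One small notational slip: where you write ``$\tfrac12 g_{\mathrm{even}}(H)\pm\tfrac12\frac{g_{\mathrm{odd}}(H)}{H}E(\bk)$'' the extra factors of $\tfrac12$ are redundant if $g_{\mathrm{even}},g_{\mathrm{odd}}$ already denote the even/odd parts; the substance is unaffected.
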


\begin{proof}
For any $A\in \Mat(2,\C)$, $B\in \Mat(b,\C)$ let us define $A\otimes
 B\in \Mat(2b,\C)$ by 
\begin{align*}
A\otimes B:=\left(\begin{array}{cc}A(1,1)B & A(1,2)B \\
                                   A(2,1)B & A(2,2)B \end{array}\right).
\end{align*}
Fix $\bk\in \R^d$, $\phi\in \C$. Using the notations used in the
 spectral decomposition \eqref{eq_spectral_decomposition}, let us set 
\begin{align*}
E_j(\phi)(\bk):=\left(\begin{array}{cc} e_j(\bk) & \ophi \\
                                   \phi & -e_j(\bk)
		      \end{array}\right),\quad (j=1,2,\cdots,b')
\end{align*}
and observe that
\begin{align*}
E(\phi)(\bk)=\sum_{j=1}^{b'}E_j(\phi)(\bk)\otimes P_j(\bk).
\end{align*}
Then it follows from \eqref{eq_characterization_covariance} that
\begin{align}
&C(\phi)(\orho\rho\bx 0, \oeta\eta \by
 0)\label{eq_spectral_decomposition_tensor}\\
&=\frac{1}{L^d}\sum_{\bk\in\G^*}e^{i\<\bk,\bx-\by\>}\sum_{j=1}^{b'}\notag\\
&\quad \cdot \left(I_2+e^{\beta(i\frac{\theta(\beta)}{2}I_2+E_j(\phi)(\bk))}\right)^{-1}\otimes
 P_j(\bk)((\orho-1)b+\rho,(\oeta-1)b+\eta).\notag
\end{align}
The following equalities are essentially same as what we computed in\\ 
\cite[\mbox{Lemma 4.20}]{K_BCS}. For $\orho,\oeta\in \{1,2\}$ with
 $\orho\neq \oeta$,
\begin{align}
&\left(I_2+e^{\beta(i\frac{\theta(\beta)}{2}I_2+E_j(\phi)(\bk))}\right)^{-1}(\orho,\orho)\label{eq_covariance_final_characterization_core}\\
&=\frac{e^{-i\frac{\beta\theta(\beta)}{2}}+\cosh(\beta\sqrt{e_j(\bk)^2+|\phi|^2})}{2(\cos(\beta\theta(\beta)/2)+\cosh(\beta\sqrt{e_j(\bk)^2+|\phi|^2}))}\notag\\
&\quad +
\frac{(-1)^{\orho}\sinh(\beta\sqrt{e_j(\bk)^2+|\phi|^2})e_j(\bk)}{2(\cos(\beta\theta(\beta)/2)+\cosh(\beta\sqrt{e_j(\bk)^2+|\phi|^2}))\sqrt{e_j(\bk)^2+|\phi|^2}},\notag\\
&\left(I_2+e^{\beta(i\frac{\theta(\beta)}{2}I_2+E_j(\phi)(\bk))}\right)^{-1}(\orho,\oeta)\notag\\
&=\frac{-(1_{(\orho,\oeta)=(1,2)}\overline{\phi}+1_{(\orho,\oeta)=(2,1)}\phi)\sinh(\beta\sqrt{e_j(\bk)^2+|\phi|^2})}{2(\cos(\beta\theta(\beta)/2)+\cosh(\beta\sqrt{e_j(\bk)^2+|\phi|^2}))\sqrt{e_j(\bk)^2+|\phi|^2}}.\notag
\end{align}
By combining \eqref{eq_spectral_decomposition_tensor} with
 \eqref{eq_covariance_final_characterization_core} we can derive \eqref{eq_covariance_final_characterization}.

Note that
\begin{align*}
&|C(\phi)(\orho\rho\b0 0,\oeta\rho \b0 0)|\notag\\
&=\frac{|\phi|}{2L^d}\sum_{\bk\in
 \G^*}\sum_{j=1}^{b'}\frac{\sinh(\beta\sqrt{e_j(\bk)^2+|\phi|^2})}{(\cos(\beta\theta(\beta)/2)+\cosh(\beta\sqrt{e_j(\bk)^2+|\phi|^2}))\sqrt{e_j(\bk)^2+|\phi|^2}}P_j(\bk)(\rho,\rho).
\end{align*}
Then the inequality \eqref{eq_covariance_final_lower_bound} follows from
 that $P_j(\bk)(\rho,\rho)\ge 0$, $\sum_{j=1}^{b'}P_j(\bk)(\rho,\rho)=1$ and the fact that 
\begin{align*}
x\mapsto
 \frac{\sinh x}{(\cos(\beta\theta(\beta)/2)+\cosh x)x}:[0,\infty)\to \R
\end{align*}
is strictly monotone decreasing. See e.g. \cite[\mbox{Lemma
 4.19}]{K_BCS} for the proof of this fact.
\end{proof}

By admitting general lemmas proved in Appendix
\ref{app_infinite_volume_limit} we can prove Theorem
\ref{thm_main_theorem} here. 

\begin{proof}[Proof of Theorem \ref{thm_main_theorem}]
On the whole, the structure of the proof is parallel to the proof of
 \cite[\mbox{Theorem 1.3}]{K_BCS}. However, we should keep in mind that
 in the present case the parameter $h$ must be taken large depending on
 $\phi$ and for this reason we cannot change the order of the limit
 operation $h\to \infty$ and the integral with the variable $\phi$,
 while they were interchangeable in the proof of \cite[\mbox{Theorem
 1.3}]{K_BCS}.

Note that
\begin{align*}
\left|\frac{\pi}{\beta}-\frac{\theta(\beta)}{2}\right|=\min_{m\in
 \Z}\left|\frac{\theta}{2}-\frac{\pi (2m+1)}{\beta}\right|.
\end{align*}
If $\beta <1$, 
\begin{align*}
\beta^{-1}\min\left\{1,\left|\frac{\pi}{\beta}-\frac{\theta(\beta)}{2}\right|^{-1}\right\}\ge
 \pi^{-1}.
\end{align*}
By using these properties,
\begin{align*}
&\left(\beta^{-1}\min\left\{1,\left|\frac{\pi}{\beta}-\frac{\theta(\beta)}{2}\right|^{-1}\right\}+1\right)^{-2}\\
&\ge 1_{\beta\ge 1}
 2^{-2}+1_{\beta<1}(1+\pi)^{-2}\beta^2\max\left\{1,\left|\frac{\pi}{\beta}-\frac{\theta(\beta)}{2}\right|^2\right\}\\
&=1_{\beta\ge 1}2^{-2}+1_{\beta <1}(1+\pi)^{-2}\max\left\{\beta^2,
\min_{m\in\Z}\left|\frac{\beta\theta}{2}-\pi(2m+1)\right|^2\right\}.
\end{align*}
Thus by recalling the definition of $c_0$ stated in Lemma
 \ref{lem_covariance_construction} we see that
there exists $c'\in (0,1]$ depending only on $d$, $b$,
 $(\hbv_j)_{j=1}^d$, $\sa$, $(\sn_j)_{j=1}^d$, $\sc$, $M$, $\chi$,
 $\alpha$ such that 
\begin{align*}
c'\left(1_{\beta\ge 1}+1_{\beta<1}\max\left\{
\beta^2,\min_{m\in \Z}\left|\frac{\beta\theta}{2}-\pi(2m+1)\right|^2\right\}\right)
\le 2^{-1}b^{-1}c_0^{-2}\alpha^{-4}.
\end{align*}
In the following we always assume that $U\in \R_{<0}$ and 
\begin{align*}
|U|<c'\left(1_{\beta \ge 1} +
 1_{\beta<1}\max\left\{\beta^2,\min_{m\in\Z}\left|
\frac{\beta\theta}{2}-\pi(2m+1)\right|^2\right\}
\right)
\end{align*}
so that the claims of Proposition
 \ref{prop_general_integration_application}, Proposition
 \ref{prop_infinite_volume_limit} hold with this $U$ and $h\in
 \frac{2}{\beta}\N$, $L\in \N$ satisfying \eqref{eq_huge_h_condition},
 \eqref{eq_final_huge_L_condition}. By Lemma
 \ref{lem_final_Grassmann_formulation}
 \eqref{item_formulation_uniform_convergence} and 
Proposition
 \ref{prop_general_integration_application}
 \eqref{item_integral_formulation_lower_upper_bound}, for any
 $\phi\in\C$, $L\in \N$ satisfying \eqref{eq_final_huge_L_condition}
 and $u\in [-2^{-1}b^{-1}c_0^{-2}\alpha^{-4},2^{-1}b^{-1}c_0^{-2}\alpha^{-4}]$
\begin{align*}
\left|\lim_{h\to\infty\atop h\in\frac{2}{\beta}\N}\int
 e^{-V(u)(\psi)+W(u)(\psi)}d\mu_{C(\phi)}(\psi)\right|
\ge e^{-8c_6b\beta
 \alpha^{-2}M^{(\sum_{j=1}^d\frac{1}{\sn_j}+1)(\hat{N}_{\beta}-N_{\beta})}}.
\end{align*}
By Lemma \ref{lem_final_Grassmann_formulation}
 \eqref{item_formulation_reality} and Proposition
 \ref{prop_infinite_volume_limit}, the real-valued function 
\begin{align*}
u\mapsto \lim_{h\to\infty\atop h\in\frac{2}{\beta}\N}\int
 e^{-V(u)(\psi)+W(u)(\psi)}d\mu_{C(\phi)}(\psi):[-2^{-1}b^{-1}c_0^{-2}\alpha^{-4},2^{-1}b^{-1}c_0^{-2}\alpha^{-4}]\to\R
\end{align*}
is continuous. Since this function takes 1 at $u=0$, we conclude that
\begin{align}
\lim_{h\to\infty\atop h\in\frac{2}{\beta}\N}\int
 e^{-V(\psi)+W(\psi)}d\mu_{C(\phi)}(\psi)
\ge e^{-8c_6b\beta
 \alpha^{-2}M^{(\sum_{j=1}^d\frac{1}{\sn_j}+1)(\hat{N}_{\beta}-N_{\beta})}}
\label{eq_uniform_real_positivity}
\end{align}
for any $\phi\in\C$ and $L\in \N$ satisfying
 \eqref{eq_final_huge_L_condition}. Therefore we see from Lemma
 \ref{lem_restriction_theta}, Lemma \ref{lem_free_partition_function}
 and Lemma \ref{lem_final_Grassmann_formulation}
 \eqref{item_final_Grassmann_formulation} that the claim
 \eqref{item_partition_positivity} holds. 

Let us prove the claim \eqref{item_SSB}. Assume that $\g\in (0,1]$.
Let us define the functions $F:\R^2\to \R$, $F_L:\R^2\to\R$ by 
\begin{align*}
&F(\bx):=-\frac{1}{|U|}((x_1-\g)^2+x_2^2)\\
&\qquad\qquad +\frac{D_d}{\beta}\int_{\G_{\infty}^*}d\bk
 \Tr
 \log\left(\cos\left(\frac{\beta\theta(\beta)}{2}\right)+\cosh\left(\beta\sqrt{E(\bk)^2+\|\bx\|_{\R^2}^2}\right)\right)\\
&\qquad\qquad  -
\frac{D_d}{\beta}\int_{\G_{\infty}^*}d\bk
 \Tr
 \log\left(\cos\left(\frac{\beta\theta(\beta)}{2}\right)+\cosh\left(\beta
 E(\bk)\right)\right),\\
&F_L(\bx):=-\frac{1}{|U|}((x_1-\g)^2+x_2^2)\\
&\qquad\qquad+\frac{1}{\beta
 L^d}\sum_{\bk\in \G^*}
 \Tr
 \log\left(\cos\left(\frac{\beta\theta(\beta)}{2}\right)+\cosh\left(\beta\sqrt{E(\bk)^2+\|\bx\|_{\R^2}^2}\right)\right)\\
&\qquad\qquad  -\frac{1}{\beta
 L^d}\sum_{\bk\in \G^*}
 \Tr
 \log\left(\cos\left(\frac{\beta\theta(\beta)}{2}\right)+\cosh\left(\beta
 E(\bk)\right)\right).
\end{align*}
Let us recall the definition \eqref{eq_matrix_valued_notation} of the
 matrix-valued function $G_{x,y,z}(\cdot)$. By making use of the monotone
 decreasing property of the function 
\begin{align*}
x\mapsto \frac{\sinh x}{(\cos(\beta\theta(\beta)/2)+\cosh x)x}:[0,\infty)\to\R
\end{align*}
we can prove that there uniquely exist $a(\g)\in (\D,\infty)$,
 $a_L(\g)\in (0,\infty)$ such that
\begin{align}
&a(\g)\left(-\frac{2}{|U|}+D_d\int_{\G_{\infty}^*}d\bk\Tr
 G_{\beta,\theta(\beta),a(\g)}(\bk)\right)=-\frac{2\g}{|U|},\label{eq_perturbed_E_L_equation}\\
&a_L(\g)\left(-\frac{2}{|U|}+\frac{1}{L^d}\sum_{\bk\in \G^*}\Tr
 G_{\beta,\theta(\beta),a_L(\g)}(\bk)\right)=-\frac{2\g}{|U|},\notag\\
&\lim_{\g\searrow 0\atop \g\in
 (0,1]}a(\g)=\D.\label{eq_perturbed_order_parameter}
\end{align}
Set $\ba_L:=(a_L(\g),0)$, $\ba:=(a(\g),0)$. By computing Hessians one
 can check that $\ba_L$, $\ba$ are the unique global maximum point of
 $F_L$, $F$ respectively.

Let us define the functions $g_L$, $u_{1,L}:\R^2\to \C$ by
\begin{align*}
&g_L(\bx):=\lim_{h\to\infty\atop h\in \frac{2}{\beta}\N}\int
 e^{-V(\psi)+W(\psi)}d\mu_{C(x_1+ix_2)}(\psi),\\
&u_{1,L}(\bx):=\lim_{h\to\infty\atop h\in \frac{2}{\beta}\N}\int
 e^{-V(\psi)+W(\psi)}A^1(\psi)d\mu_{C(x_1+ix_2)}(\psi).
\end{align*}
It follows from Lemma \ref{lem_final_Grassmann_formulation}
 \eqref{item_formulation_uniform_convergence} that $g_L$, $u_{1,L}\in
 C(\R^2)$. Moreover by Proposition
 \ref{prop_general_integration_application}
 \eqref{item_integral_formulation_lower_upper_bound},\eqref{item_integral_formulation_difference} and the determinant bound
 Lemma \ref{lem_characterization_covariance}
 \eqref{item_P_S_determinant_bound},
\begin{align*}
&\sup_{L\in \N\atop\text{satisfying
 }\eqref{eq_final_huge_L_condition}}\sup_{\bx\in
 \R^2}|g_L(\bx)|<\infty,\quad
\sup_{L\in \N\atop\text{satisfying
 }\eqref{eq_final_huge_L_condition}}\sup_{\bx\in
 \R^2}|u_{1,L}(\bx)|<\infty.
\end{align*}
Furthermore by Proposition \ref{prop_infinite_volume_limit} and
 Proposition \ref{prop_general_integration_application}
 \eqref{item_integral_formulation_difference} there exists $g\in
 C(\R^2)$ such that $g_L$ converges to $g$ locally uniformly as $L\to \infty$ $(L\in
 \N)$ and if we
 set 
\begin{align*}
u_1(x_1,x_2):=\beta g(x_1,x_2)\lim_{L\to \infty\atop L\in
 \N}C(x_1+ix_2)(1\hat{\rho}\b0 0, 2\hat{\rho}\b0 0),\quad
 (x_1,x_2\in\R),
\end{align*}
$u_{1,L}$ converges to $u_1$ locally uniformly as $L\to \infty$ $(L\in
 \N)$. Also, let us remark that by Proposition
 \ref{prop_general_integration_application}
 \eqref{item_integral_formulation_lower_upper_bound}, $g(\ba)\neq
 0$. We can check that the assumptions of Lemma
 \ref{lem_infinite_volume_expectation} are satisfied. We can apply
 the lemma to ensure that
\begin{align}
\lim_{L\to \infty\atop L\in\N}\frac{\int_{\R^2}d\bx e^{\beta
 L^dF_L(\bx)} u_{1,L}(\bx)}{\int_{\R^2}d\bx e^{\beta
 L^dF_L(\bx)} g_{L}(\bx)}&=\frac{\beta g(\ba)\lim_{L\to \infty\atop
 L\in\N}C(a(\g))(1\hat{\rho}\b0 0,2\hat{\rho}\b0 0)}{g(\ba)}\label{eq_application_general_lemma_2_point}\\
&=\beta\lim_{L\to \infty\atop L\in \N}C(a(\g))(1\hat{\rho}\b0 0,
 2\hat{\rho}\b0 0).\notag
\end{align}
By Lemma \ref{lem_covariance_final_characterization}, 
\begin{align}
\lim_{L\to \infty\atop L\in \N}C(a(\g))(1\hat{\rho}\b0 0,2\hat{\rho}\b0
 0)=-\frac{a(\g)D_d}{2}\int_{\G_{\infty}^*}d\bk
 G_{\beta,\theta(\beta),a(\g)}(\bk)(\hat{\rho},\hat{\rho}).\label{eq_application_covariance_characterization_2_point}
\end{align}
By combining \eqref{eq_perturbed_order_parameter},
 \eqref{eq_application_general_lemma_2_point},
 \eqref{eq_application_covariance_characterization_2_point}, Lemma
 \ref{lem_restriction_theta} with Lemma
 \ref{lem_final_Grassmann_formulation}
 \eqref{item_final_Grassmann_formulation} we have that
\begin{align}
&\lim_{L\to \infty\atop L\in \N}\frac{\Tr (e^{-\beta(\sH+i\theta
 \sS_z+\sF)}\sA_1)}{\Tr e^{-\beta(\sH+i\theta
 \sS_z+\sF)}}=-\frac{a(\g)D_d}{2}\int_{\G_{\infty}^*}d\bk
 G_{\beta,\theta(\beta),a(\g)}(\bk)(\hat{\rho},\hat{\rho}),\label{eq_2_point_correlation_with_field}\\
&\lim_{\g\searrow 0\atop \g\in (0,1]}\lim_{L\to \infty\atop L\in \N}\frac{\Tr (e^{-\beta(\sH+i\theta
 \sS_z+\sF)}\sA_1)}{\Tr e^{-\beta(\sH+i\theta
 \sS_z+\sF)}}=-\frac{\D D_d}{2}\int_{\G_{\infty}^*}d\bk
 G_{\beta,\theta,\D}(\bk)(\hat{\rho},\hat{\rho}).\notag
\end{align}
This concludes the proof of the claim \eqref{item_SSB}.

Let us show the claim \eqref{item_ODLRO}. Define the
 functions $f:\R\to \R$, $f_L:\R\to\R$ by
\begin{align*}
&f(x):=-\frac{x^2}{|U|} +\frac{D_d}{\beta}\int_{\G_{\infty}^*}d\bk
 \Tr
 \log\left(\cos\left(\frac{\beta\theta(\beta)}{2}\right)+\cosh\left(\beta\sqrt{E(\bk)^2+x^2}\right)\right)\\
&\qquad\qquad  -
\frac{D_d}{\beta}\int_{\G_{\infty}^*}d\bk
 \Tr
 \log\left(\cos\left(\frac{\beta\theta(\beta)}{2}\right)+\cosh\left(\beta
 E(\bk)\right)\right),\\
&f_L(x):=-\frac{x^2}{|U|}+\frac{1}{\beta
 L^d}\sum_{\bk\in \G^*}
 \Tr
 \log\left(\cos\left(\frac{\beta\theta(\beta)}{2}\right)+\cosh\left(\beta\sqrt{E(\bk)^2+x^2}\right)\right)\\
&\qquad\qquad  -\frac{1}{\beta
 L^d}\sum_{\bk\in \G^*}
 \Tr
 \log\left(\cos\left(\frac{\beta\theta(\beta)}{2}\right)+\cosh\left(\beta
 E(\bk)\right)\right).
\end{align*}
We let $\D_L$ $(\in[0,\infty))$ be the solution to 
\begin{align}
-\frac{2}{|U|}+\frac{1}{L^d}\sum_{\bk\in\G^*}\Tr
 G_{\beta,\theta(\beta),\D_L}(\bk)=0,\label{eq_discrete_gap_equation}
\end{align}
if 
\begin{align*}
-\frac{2}{|U|}+\frac{1}{L^d}\sum_{\bk\in\G^*}\Tr
 G_{\beta,\theta(\beta),0}(\bk)\ge 0.
\end{align*}
We let $\D_L:=0$ if 
\begin{align*}
-\frac{2}{|U|}+\frac{1}{L^d}\sum_{\bk\in\G^*}\Tr
 G_{\beta,\theta(\beta),0}(\bk)< 0.
\end{align*}
The well-definedness of $\D_L$ is guaranteed by the parallel
 consideration to Lemma \ref{lem_gap_equation_solvability}. Note that
$\D$, $\D_L$ are the unique maximum point of $f|_{\R_{\ge 0}}$,
 $f_L|_{\R_{\ge 0}}$ respectively. 

First let us consider the case that
\begin{align}
|U|\neq \left(\frac{D_d}{2}\int_{\G_{\infty}^*}d\bk \Tr
 G_{\beta,\theta(\beta),0}(\bk)\right)^{-1}.\label{eq_not_phase_boundary}
\end{align}
It follows that 
\begin{align*}
-\frac{2}{|U|}+\frac{1}{L^d}\sum_{\bk\in\G^*}\Tr
 G_{\beta,\theta(\beta),0}(\bk)\neq 0 
\end{align*}
for sufficiently large $L\in\N$. Moreover, $\frac{d^2f}{dx^2}(\D)<0$. If
 $\D=0$, $\D_L=0$ for sufficiently large $L\in\N$. Let us define the
 functions $v_L:\R\to\C$, $u_{2,L}:\R\to\C$ by 
\begin{align*}
&v_L(x):=\int_0^{2\pi}d\xi \lim_{h\to \infty\atop
 h\in\frac{2}{\beta}\N}\int
 e^{-V(\psi)+W(\psi)}d\mu_{C(xe^{i\xi})}(\psi)=\int_0^{2\pi}d\xi
 g_L(x\cos\xi,x\sin\xi),\\
&u_{2,L}(x):=\int_0^{2\pi}d\xi \lim_{h\to \infty\atop
 h\in\frac{2}{\beta}\N}\int
 e^{-V(\psi)+W(\psi)}A^2(\psi)d\mu_{C(xe^{i\xi})}(\psi).
\end{align*}
By Lemma \ref{lem_final_Grassmann_formulation}
 \eqref{item_formulation_uniform_convergence}, $v_L$, $u_{2,L}\in
 C(\R)$. By Lemma \ref{lem_characterization_covariance}
 \eqref{item_P_S_determinant_bound}, Proposition
 \ref{prop_general_integration_application}
 \eqref{item_integral_formulation_lower_upper_bound},\eqref{item_integral_formulation_difference}
 and Proposition \ref{prop_infinite_volume_limit}, for any $r\in\R_{>0}$
\begin{align*}
&\sup_{L\in \N\atop \text{satisfying
 }\eqref{eq_final_huge_L_condition}}\sup_{x\in\R}|v_L(x)|<\infty,\quad
\sup_{L\in \N\atop \text{satisfying
 }\eqref{eq_final_huge_L_condition}}\sup_{x\in\R}|u_{2,L}(x)|<\infty,\\
&\lim_{L\to\infty\atop L\in \N}\sup_{x\in[-r,r]}\left|
v_L(x)-\int_0^{2\pi}d\xi g(x\cos\xi,x\sin\xi)\right|=0,\\
&\lim_{L\to\infty\atop L\in \N}\sup_{x\in[-r,r]}\Bigg|
u_{2,L}(x)\\
&\quad -\beta \int_0^{2\pi}d\xi g(x\cos \xi, x \sin \xi)
 \Bigg(1_{(\hat{\rho},\hbx)=(\hat{\eta},\hby)}\lim_{L\to\infty\atop
 L\in\N}C(x e^{i\xi})(1\hrho\b0 0,1\hrho\b0 0)\\
&\qquad\qquad\qquad  -\lim_{L\to \infty\atop
 L\in \N}\det\left(\begin{array}{cc} C(xe^{i\xi})(1\hrho\hbx 0,1\heta
	      \hby 0) & C(xe^{i\xi})(1\hrho\b0 0,2\hrho\b0 0) \\
               C(xe^{i\xi})(2\heta\b0 0,1\heta \b0 0) & C(xe^{i\xi})(2\heta\hby 0,2\hrho
	      \hbx 0)\end{array}\right)\Bigg)\Bigg|=0.
\end{align*}
Also, note that by \eqref{eq_uniform_real_positivity}
$$
\int_0^{2\pi}d\xi g(\D \cos\xi,\D\sin \xi)>0.
$$
Moreover by changing variables we can see that 
\begin{align*}
&\int_{\R^2}d\phi_1d\phi_2 e^{\beta L^d
 f_L(|\phi|)}g_L(\phi_1,\phi_2)=\int_0^{\infty}dx x e^{\beta L^d
 f_L(x)}v_L(x),\\
&\int_{\R^2}d\phi_1d\phi_2 e^{\beta L^d
 f_L(|\phi|)}\lim_{h\to\infty\atop h\in \frac{2}{\beta}\N}\int
 e^{-V(\psi)+W(\psi)}A^2(\psi)d\mu_{C(\phi)}(\psi)=\int_0^{\infty}dx x e^{\beta L^d
 f_L(x)}u_{2,L}(x).
\end{align*}
In this situation we can apply Lemma
 \ref{lem_infinite_volume_correlation}. As the result,
\begin{align*}
&\lim_{L\to \infty\atop L\in\N}\frac{\int_{0}^{\infty}dx x e^{\beta
 L^df_L(x)} u_{2,L}(x)}{\int_{0}^{\infty}dx x e^{\beta
 L^df_L(x)} v_{L}(x)}=\frac{\lim_{L\to\infty\atop L\in
 \N}u_{2,L}(\D)}{\lim_{L\to\infty\atop L\in \N}v_{L}(\D)}\\
&=\beta \int_0^{2\pi}d\xi g(\D\cos \xi, \D \sin \xi)
 \Bigg(1_{(\hat{\rho},\hbx)=(\hat{\eta},\hby)}\lim_{L\to\infty\atop
 L\in\N}C(\D e^{i\xi})(1\hrho\b0 0,1\hrho\b0 0)\\
&\qquad\qquad\qquad  -\lim_{L\to \infty\atop
 L\in \N}\det\left(\begin{array}{cc} C(\D e^{i\xi})(1\hrho\hbx 0,1\heta
	      \hby 0) & C(\D e^{i\xi})(1\hrho\b0 0,2\hrho\b0 0) \\
               C(\D e^{i\xi})(2\heta\b0 0,1\heta \b0 0) & C(\D e^{i\xi})(2\heta\hby 0,2\hrho
	      \hbx 0)\end{array}\right)\Bigg)\\
&\quad\cdot \left(
\int_0^{2\pi}d\xi g(\D\cos\xi,\D\sin\xi)\right)^{-1}\\
&=\beta\Bigg(1_{(\hat{\rho},\hbx)=(\hat{\eta},\hby)}\lim_{L\to\infty\atop
 L\in\N}C(\D)(1\hrho\b0 0,1\hrho\b0 0)\\
&\qquad\quad -\lim_{L\to \infty\atop
 L\in \N}\det\left(\begin{array}{cc} C(\D)(1\hrho\hbx 0,1\heta
	      \hby 0) & C(\D)(1\hrho\b0 0,2\hrho\b0 0) \\
               C(\D)(2\heta\b0 0,1\heta \b0 0) & C(\D)(2\heta\hby 0,2\hrho
	      \hbx 0)\end{array}\right)\Bigg).
\end{align*}
We derived the last equality by recalling Lemma
 \ref{lem_covariance_final_characterization}. Therefore by Lemma
 \ref{lem_restriction_theta} and Lemma
 \ref{lem_final_Grassmann_formulation}
 \eqref{item_final_Grassmann_formulation},
\begin{align}
&\lim_{L\to \infty\atop L\in \N}\frac{\Tr (e^{-\beta(\sH+i\theta
 \sS_z)}\sA_2)}{\Tr e^{-\beta(\sH+i\theta
 \sS_z)}}\label{eq_4_point_function_characterization}\\
&=1_{(\hat{\rho},\hbx)=(\hat{\eta},\hby)}\lim_{L\to\infty\atop
 L\in\N}C(\D)(1\hrho\b0 0,1\hrho\b0 0)\notag\\
&\quad -\lim_{L\to \infty\atop
 L\in \N}\det\left(\begin{array}{cc} C(\D )(1\hrho\hbx 0,1\heta
	      \hby 0) & C(\D)(1\hrho\b0 0,2\hrho\b0 0) \\
               C(\D)(2\heta\b0 0,1\heta \b0 0) & C(\D)(2\heta\hby 0,2\hrho
	      \hbx 0)\end{array}\right).\notag
\end{align}
By using the fact that for any compact set $K$ of $\C$ and
 $(\orho,\rho)$, $(\oeta,\eta)\in\{1,2\}\times\cB$
\begin{align}
\lim_{\|\bx-\by\|_{\R^d}\to \infty}\sup_{\phi\in K}\left|\lim_{L\to
 \infty\atop L\in\N}C(\phi)(\orho\rho\bx 0, \oeta \eta \by
 0)\right|=0\label{eq_total_covariance_decay}
\end{align}
and recalling Lemma \ref{lem_covariance_final_characterization} we
 observe that 
\begin{align*}
\lim_{\|\hbx-\hby\|_{\R^d}\to \infty}\lim_{L\to \infty\atop L\in \N}\frac{\Tr (e^{-\beta(\sH+i\theta
 \sS_z)}\sA_2)}{\Tr e^{-\beta(\sH+i\theta
 \sS_z)}}&=\lim_{L\to \infty\atop L\in \N} C(\D)(1\hrho\b0 0,2\hrho\b0 0)
 C(\D)(2\heta\b0 0,1\heta \b0 0)\\
&=(\text{R. H. S of
 }\eqref{eq_ODLRO_explicit}).
\end{align*}
We can show the property \eqref{eq_total_covariance_decay} by
 establishing a decay bound such as
 \eqref{eq_full_covariance_total_decay}.

Let us assume that
\begin{align}
|U|= \left(\frac{D_d}{2}\int_{\G_{\infty}^*}d\bk \Tr
 G_{\beta,\theta(\beta),0}(\bk)\right)^{-1}.\label{eq_on_phase_boundary}
\end{align}
In this case we apply Lemma
 \ref{lem_infinite_volume_correlation_estimate} to prove the claim. By
 \eqref{eq_uniform_real_positivity} $v_L\in C(\R,\R)$ and 
\begin{align}
\inf_{L\in \N\atop\text{satisfying
 }\eqref{eq_final_huge_L_condition}}\inf_{x\in\R}v_L(x)>0.\label{eq_continuum_global_lower_bound}
\end{align}
Let us define the function $u_{3,L}:\R\to \C$ by
\begin{align*}
u_{3,L}(x):=&\beta 1_{(\hrho,r_L(\hbx))=(\heta,r_L(\hby))}C(x)(1\hrho \b0 0, 1\hrho
 \b0 0)\\
&-\beta \det\left(\begin{array}{cc} C(x )(1\hrho\hbx 0,1\heta
	      \hby 0) & C(x)(1\hrho\b0 0,2\hrho\b0 0) \\
               C(x)(2\heta\b0 0,1\heta \b0 0) & C(x)(2\heta\hby 0,2\hrho
	      \hbx 0)\end{array}\right).\notag
\end{align*}
We can see from \eqref{eq_covariance_final_characterization} that
\begin{align*}
\int A^2(\psi)d\mu_{C(x e^{i\xi})}(\psi)=u_{3,L}(x),\quad (\forall
 x,\xi\in\R).
\end{align*}
We have to prove that there exists $n_0\in 2\N$ such that
\begin{align}
\frac{d^n f}{d x^n}(0)=0,\quad (\forall n\in
 \{1,2,\cdots,n_0-1\}),\quad
\frac{d^{n_0} f}{d
 x^{n_0}}(0)<0.\label{eq_effective_potential_at_origin}
\end{align}
We define the function $q$ in a neighborhood of the origin by
\begin{align*}
q(z):=&-\frac{z^2}{|U|}+\frac{D_d}{\beta}\int_{\G_{\infty}^*}d\bk \Tr
 \log\left(\cos\left(\frac{\beta
 \theta(\beta)}{2}\right)+\sum_{n=0}^{\infty}\frac{\beta^{2n}}{(2n)!}(E(\bk)^2+z^2)^n\right)\\
&-\frac{D_d}{\beta}\int_{\G_{\infty}^*}d\bk \Tr
 \log\left(\cos\left(\frac{\beta
 \theta(\beta)}{2}\right)+\cosh(\beta E(\bk))\right).
\end{align*}
Since $\cos(\beta\theta(\beta)/2)>-1$, $q$ is analytic in a neighborhood
 of the origin. Moreover $q(x)=f(x)$ if $x\in \R$.
Since $0=f(0)>f(x)$ $(\forall x\in \R\backslash\{0\})$, $q$ is not
 identically 0. Thus there exists $n_0\in\N$ such that $q^{(n_0)}(0)\neq
 0$ and $q(z)=\sum_{n=n_0}^{\infty}\frac{1}{n!}q^{(n)}(0)z^n$ in a
 neighborhood of the origin. Thus $f^{(n_0)}(0)\neq 0$ and
 $f(x)=\sum_{n=n_0}^{\infty}\frac{1}{n!}f^{(n)}(0)x^n$ for any $x\in \R$
 close to 0. Since $f$ takes the maximum value 0 at $x=0$, $n_0$ must be
 even and $f^{(n_0)}(0)<0$. Therefore the claim
 \eqref{eq_effective_potential_at_origin} holds true. We can  check that
 all the other conditions required in Lemma
 \ref{lem_infinite_volume_correlation_estimate} are satisfied by the
 functions $f_L$, $f$, $v_L$, $u_{3,L}$. Thus the lemma ensures that 
\begin{align*}
\limsup_{L\to\infty\atop L\in\N}\left|\frac{\int_0^{\infty}dx x e^{\beta
 L^d f_L(x)}v_L(x)u_{3,L}(x)}{\int_0^{\infty}dx x e^{\beta
 L^d f_L(x)}v_L(x)}\right|\le \left|
\lim_{L\to \infty\atop L\in \N}u_{3,L}(0)\right|.
\end{align*}
Moreover by Lemma \ref{lem_restriction_theta}, Lemma
 \ref{lem_final_Grassmann_formulation}
 \eqref{item_final_Grassmann_formulation}, Proposition
 \ref{prop_general_integration_application}
\eqref{item_integral_formulation_lower_upper_bound},\eqref{item_integral_formulation_difference}
 and \eqref{eq_continuum_global_lower_bound},
\begin{align}
&\limsup_{L\to \infty\atop L\in\N}\left|
\frac{\Tr(e^{-\beta (\sH+i\theta\sS_z)}\sA_2)}{\Tr e^{-\beta (\sH+i\theta
 \sS_z)}}\right|\label{eq_4_point_function_estimate_boundary}\\
&\le\frac{1}{\beta}\limsup_{L\to \infty\atop L\in\N}\left|\frac{\int_0^{\infty}dx x e^{\beta
 L^d f_L(x)}u_{2,L}(x)}{\int_0^{\infty}dx x e^{\beta
 L^d f_L(x)}v_L(x)}\right|
\le \frac{1}{\beta}\limsup_{L\to \infty\atop L\in\N}\left|\frac{\int_0^{\infty}dx x e^{\beta
 L^d f_L(x)}v_L(x) u_{3,L}(x)}{\int_0^{\infty}dx x e^{\beta
 L^d f_L(x)}v_L(x)}\right|\notag\\
&\le \left|1_{(\hrho,\hbx)=(\heta,\hby)}\lim_{L\to \infty\atop L\in
 \N}C(0)(1\hrho\b0 0,1\hrho\b0 0)-\lim_{L\to \infty\atop L\in
 \N}C(0)(1\hrho\hbx 0,1\heta\hby 0)C(0)(2\heta\hby 0,2\hrho\hbx
 0)\right|.\notag
\end{align}
Then we can apply \eqref{eq_total_covariance_decay} to conclude the
 claimed convergent property. 

Let us prove the claim \eqref{item_CPD}. First let us consider the case
 that \eqref{eq_not_phase_boundary} holds. Define the function
 $u_{4,L}:\R\to\C$ by 
\begin{align*}
u_{4,L}(x):=\frac{1}{L^{2d}}\sum_{(\hrho,\hbx),(\heta,\hby)\in\cB\times
 \G}\int_0^{2\pi}d\xi \lim_{h\to\infty\atop h\in \frac{2}{\beta}\N}\int
 e^{-V(\psi)+W(\psi)}A^{2}(\psi)d\mu_{C(x e^{i\xi})}(\psi).
\end{align*}
By Lemma \ref{lem_final_Grassmann_formulation}
 \eqref{item_formulation_uniform_convergence}, $u_{4,L}\in
 C(\R)$. Moreover by Lemma \ref{lem_characterization_covariance}
 \eqref{item_P_S_determinant_bound}, Proposition
 \ref{prop_general_integration_application}
 \eqref{item_integral_formulation_lower_upper_bound},\eqref{item_integral_formulation_difference},
 Proposition \ref{prop_infinite_volume_limit} and Lemma
 \ref{lem_covariance_final_characterization}, for any $r\in \R_{>0}$ 
\begin{align*}
&\sup_{L\in\N\atop \text{satisfying
 }\eqref{eq_final_huge_L_condition}}\sup_{x\in\R}|u_{4,L}(x)|<\infty,\\
&\lim_{L\to \infty\atop L\in \N}\sup_{x\in [-r,r]}\Bigg|
u_{4,L}(x)\\
&\qquad -\frac{1}{L^{2d}}\sum_{(\hrho,\hbx),(\heta,\hby)\in \cB\times
 \G}\int_0^{2\pi}d\xi \lim_{h\to\infty\atop h\in \frac{2}{\beta}\N}\int
 A^2(\psi)d\mu_{C(xe^{i\xi})}(\psi)g_L(x\cos\xi, x\sin \xi)\Bigg|=0,\\
&\lim_{L\to \infty\atop L\in \N}\sup_{x\in[-r,r]}
\Bigg|\frac{1}{L^{2d}}\sum_{(\hrho,\hbx),(\heta,\hby)\in \cB\times
 \G}\int_0^{2\pi}d\xi \lim_{h\to\infty\atop h\in \frac{2}{\beta}\N}\int
 A^2(\psi)d\mu_{C(xe^{i\xi})}(\psi)g_L(x\cos\xi, x\sin \xi)\\
&\qquad\qquad\qquad -\beta x^2 \left(\frac{D_d}{2}\int_{\G_{\infty}^*}d\bk \Tr
 G_{\beta,\theta(\beta),x}(\bk)\right)^2\int_0^{2\pi}d\xi
 g(x\cos\xi,x\sin \xi)\Bigg|=0.
\end{align*}
Here we can apply Lemma \ref{lem_infinite_volume_correlation} to derive
 that
\begin{align*}
&\lim_{L\to \infty\atop L\in \N}\frac{\int_0^{\infty}dx x e^{\beta
 L^d f_L(x)}u_{4,L}(x)}{\int_0^{\infty}dx x e^{\beta
 L^df_L(x)} v_L(x)}=\frac{\lim_{L\to\infty\atop L\in
 \N}u_{4,L}(\D)}{\lim_{L\to\infty\atop
 L\in\N}v_L(\D)}\\
&=\beta\D^2\left(\frac{D_d}{2}\int_{\G_{\infty}^*}d\bk
 \Tr G_{\beta,\theta(\beta),\D}(\bk)\right)^2=\frac{\beta \D^2}{U^2},
\end{align*}
which combined with Lemma \ref{lem_restriction_theta}, Lemma
 \ref{lem_final_Grassmann_formulation}
 \eqref{item_final_Grassmann_formulation} ensures the claimed result in
 this case.

Next let us assume that \eqref{eq_on_phase_boundary} holds. Define the
 function $u_{5,L}:\R\to\C$ by 
\begin{align*}
u_{5,L}(x):=\frac{1}{L^{2d}}\sum_{(\hrho,\hbx),(\heta,\hby)\in\cB\times
 \G}\lim_{h\to \infty\atop h\in \frac{2}{\beta}\N}\int
 A^2(\psi)d\mu_{C(x)}(\psi).
\end{align*}
Then by Lemma \ref{lem_characterization_covariance}
 \eqref{item_P_S_determinant_bound} and Lemma
 \ref{lem_covariance_final_characterization}, for any $r\in \R_{>0}$
\begin{align*}
&\sup_{L\in\N}\sup_{x\in\R}|u_{5,L}(x)|<\infty,\\
&\lim_{L\to\infty\atop L\in \N}\sup_{x\in [-r,r]}\left|
u_{5,L}(x)-\beta x^2\left(\frac{D_d}{2}\int_{\G_{\infty}^*}d\bk \Tr
 G_{\beta,\theta(\beta),x}(\bk)\right)^2\right|=0.
\end{align*}
Thus by Lemma \ref{lem_infinite_volume_correlation_estimate}
\begin{align*}
&\limsup_{L\to\infty\atop L\in \N}\left|
\frac{\int_0^{\infty}dx x e^{\beta L^d
 f_L(x)}v_L(x)u_{5,L}(x)}{\int_0^{\infty}dx x e^{\beta L^d f_L(x)}v_L(x)}
\right|\le \left|\lim_{L\to \infty\atop L\in\N}u_{5,L}(0)\right|=0,
\end{align*}
which together with Lemma \ref{lem_restriction_theta}, Lemma
 \ref{lem_final_Grassmann_formulation}
 \eqref{item_final_Grassmann_formulation}, Proposition
 \ref{prop_general_integration_application} \eqref{item_integral_formulation_lower_upper_bound},\eqref{item_integral_formulation_difference} and \eqref{eq_continuum_global_lower_bound}
gives that
\begin{align*}
\limsup_{L\to\infty\atop L\in \N}\left|
\frac{1}{L^{2d}}\sum_{(\hat{\rho},\hbx),(\hat{\eta},\hby)\in\cB\times
 \G}\frac{\Tr(e^{-\beta (\sH+i\theta \sS_z)}\sA_2)}{\Tr e^{-\beta (\sH+i\theta \sS_z)}}
\right|=0.
\end{align*}
This implies the claim in this case as well.

Finally let us prove the claim \eqref{item_free_energy_density}. Whether
 \eqref{eq_not_phase_boundary} holds or not, we can readily apply Lemma
 \ref{lem_infinite_volume_logarithm} to derive that
\begin{align} 
\lim_{L\to \infty\atop L\in \N}\frac{1}{L^d}\log\left(\int_0^{\infty}dx
 x e^{\beta L^d f_L(x)}v_L(x)\right) =\beta
 f(\D).\label{eq_free_energy_one}
\end{align}
On the other hand, by Lemma \ref{lem_restriction_theta}, Lemma
 \ref{lem_final_Grassmann_formulation}
 \eqref{item_final_Grassmann_formulation} and Lemma
 \ref{lem_free_partition_function}
\begin{align}
&\lim_{L\to\infty\atop L\in \N}\frac{1}{L^d}\log\left(\int_0^{\infty}dx x
 e^{\beta L^d f_L(x)}v_L(x)\right)=\lim_{L\to\infty\atop L\in \N}\frac{1}{L^d}\log\left(\frac{\pi
 |U|}{\beta L^d}\frac{\Tr e^{-\beta (\sH+i\theta \sS_z)}}{\Tr e^{-\beta
 (\sH_0+i\theta \sS_z)}}\right)\label{eq_free_energy_two}\\
&=\lim_{L\to\infty\atop L\in \N}\frac{1}{L^d}\log(\Tr e^{-\beta
 (\sH+i\theta \sS_z)})\notag\\
&\quad -D_d\int_{\G_{\infty}^*}d\bk \Tr
 \log\left(1+2\cos\left(\frac{\beta\theta}{2}\right)e^{-\beta
 E(\bk)}+e^{-2\beta E(\bk)}\right).\notag
\end{align}
By coupling \eqref{eq_free_energy_one} with \eqref{eq_free_energy_two}
 we obtain the claimed equality.

Since the claim \eqref{item_gap_equation_solvable} has been proved right
 after the statement of Theorem \ref{thm_main_theorem}, the proof of the
 theorem is complete.
\end{proof}

In the rest of this section we prove Corollary \ref{cor_zero_temperature_limit}.

\begin{proof}[Proof of Corollary \ref{cor_zero_temperature_limit}]
Let $c_1$ be the constant introduced in Theorem
 \ref{thm_main_theorem}. Let us assume that 
\begin{align*}
U\in \left(-\min\left\{c_1,\frac{2(\cosh(1)-1)}{\cosh(1)D_d
 b\sc}\right\},0\right)
\end{align*}
in the following.

\eqref{item_gap_bound}: Assume that there exists $\beta\in\R_{\ge 1}$
 with $\beta\theta/2\notin \pi (2\Z+1)$ such that $\D>1/\beta$. Then by
 \eqref{eq_dispersion_measure_divided}, 
\begin{align*}
D_d\int_{\G_{\infty}^*}d\bk \Tr G_{\beta,\theta,\D}(\bk)&\le
 \left(1-\frac{1}{\cosh(1)}\right)^{-1}D_d b\int_{\G_{\infty}^*}d\bk
 \frac{1}{e(\bk)}\\
&\le
 \left(1-\frac{1}{\cosh(1)}\right)^{-1}D_d b\sc.
\end{align*}
Thus
$$
|U|\ge \frac{2(\cosh(1)-1)}{\cosh(1) D_d b\sc},
$$
which contradicts the assumption.
Thus the claim \eqref{item_gap_bound} holds with 
$$
c_2=\min\left\{c_1,\frac{2(\cosh(1)-1)}{\cosh(1)D_db\sc}\right\}.
$$

\eqref{item_CPD_zero}: The claim follows from Theorem
 \ref{thm_main_theorem} \eqref{item_CPD} and the claim
 \eqref{item_gap_bound} of this corollary. 

\eqref{item_ground_state_energy}: Let $\beta\ge 1$. Observe that
\begin{align*}
&(\text{R. H. S of }\eqref{eq_free_energy_density})\\
&=\frac{\D^2}{|U|}-D_d\int_{\G_{\infty}^*}d\bk
 \Tr(\sqrt{E(\bk)^2+\D^2}-E(\bk))\\
&\quad -\frac{D_d}{\beta}\int_{\G_{\infty}^*}d\bk \Tr \log\left(
1+2\cos\left(\frac{\beta\theta(\beta)}{2}\right)e^{-\beta\sqrt{E(\bk)^2+\D^2}}+e^{-2\beta\sqrt{E(\bk)^2+\D^2}}\right).
\end{align*}
By a calculation similar to \eqref{eq_determinant_bound_improvement}
\begin{align*}
b\log 4 &\ge \Tr \log\left(1 +
 2\cos\left(\frac{\beta\theta(\beta)}{2}\right)
e^{-\beta\sqrt{E(\bk)^2+\D^2}}+ e^{-2\beta\sqrt{E(\bk)^2+\D^2}}
\right)\\
&\ge
 b\log\left(c\min\left\{1,\beta^2\left(e(\bk)^2+\left(\frac{\theta(\beta)}{2}-\frac{\pi}{\beta}\right)^2\right)\right\}\right),\quad(\forall
 \bk\in \R^d).
\end{align*}
Thus
\begin{align*}
&\left|\frac{D_d}{\beta}\int_{\G_{\infty}^*}d\bk \Tr \log\left( 1+
 2\cos\left(\frac{\beta\theta(\beta)}{2}\right)e^{-\beta\sqrt{E(\bk)^2+\D^2}}+
e^{-2\beta\sqrt{E(\bk)^2+\D^2}}\right)\right|\\
&\le \frac{c(D_d,b)}{\beta}\left(1+\int_{\G_{\infty}^*}d\bk
 \left|\log\left(\beta^2
 \left(e(\bk)^2+\left(\frac{\theta(\beta)}{2}-\frac{\pi}{\beta}\right)^2\right)\right)\right|\right)\\
&\le \frac{c(D_d,b)}{\beta}\Bigg(1+\log\beta +\int_{\G_{\infty}^*}d\bk
 \left(e(\bk)^2+\left(\frac{\theta(\beta)}{2}-\frac{\pi}{\beta}\right)^2
\right)^{\frac{1}{2}}\\
&\qquad\qquad\qquad +
\int_{\G_{\infty}^*}d\bk
 \left(e(\bk)^2+\left(\frac{\theta(\beta)}{2}-\frac{\pi}{\beta}\right)^2
\right)^{-\frac{1}{2}}\Bigg)\\
&\le  \frac{c(D_d,b,\sc)}{\beta}(1+\log\beta).
\end{align*}
In the last inequality we used
 \eqref{eq_dispersion_measure_divided}. Then by using the claim
 \eqref{item_gap_bound} of this corollary we can deduce the first
 convergent property. The second convergent property can be derived
 from Lemma \ref{lem_free_partition_function} and the same calculation
 as above.

\eqref{item_SSB_zero}: Observe that the modulus of the right-hand side
 of \eqref{eq_SSB} is less than or equal to $\D/|U|$. Thus it is clear from the claim \eqref{item_gap_bound}
 of this corollary that the expectation value converges to zero if we
 take the limit $\beta\to \infty$ after sending $\g$ to 0. Let us prove
 the claims concerning the limit $\g\searrow 0$ after sending $\beta\to
 \infty$. Recall the equality \eqref{eq_perturbed_E_L_equation}.
To make clear the dependency on $\beta$, let us write $a(\beta,\g)$
 instead of $a(\g)$. Let us define the function
 $f:\R_{>0}\times[-1,1]\times\R_{>0}\to \R$ by
\begin{align*}
&f(x,y,z)\\
&:=z\left(-\frac{2}{|U|}+D_d\int_{\G_{\infty}^*}d\bk\Tr\left(\frac{\sinh(x\sqrt{E(\bk)^2+z^2})}{(y+\cosh(x\sqrt{E(\bk)^2+z^2}))\sqrt{E(\bk)^2+z^2}}\right)\right)+\frac{2\g}{|U|}.
\end{align*}
For any $(x,y)\in \R_{>0}\times[-1,1]$ there uniquely exists
 $z(x,y)\in\R_{>0}$ such that $f(x,y,z(x,y))=0$. Set 
\begin{align*}
S:=\Bigg\{&(x,y,z)\in\R_{>0}\times (-1,1)\times\R_{>0}\ \Big|\\
&  -\frac{2}{|U|}+D_d\int_{\G_{\infty}^*}d\bk\Tr \left(\frac{\sinh(x
 \sqrt{E(\bk)^2+z^2})}{(y+\cosh(x\sqrt{E(\bk)^2+z^2}))\sqrt{E(\bk)^2+z^2}}\right)<0\Bigg\}.
\end{align*}
The set $S$ is an open set of $\R^3$ and $f\in C^{\infty}(S)$. If
 $(x,y)\in\R_{>0}\times (-1,1)$ and $f(x,y,z(x,y))=0$, then
 $(x,y,z(x,y))\in S$. Observe that for any $(x,y,z)\in S$
 $\frac{\partial f}{\partial y}(x,y,z)<0$,  $\frac{\partial f}{\partial
 z}(x,y,z)<0$.
Thus by the implicit function theorem, $z(\cdot,\cdot)\in
 C^{\infty}(\R_{>0}\times (-1,1))$ and 
\begin{align*}
\frac{\partial z}{\partial y}(x,y)=-\frac{\frac{\partial f}{\partial
 y}(x,y,z(x,y))}{\frac{\partial f}{\partial z}(x,y,z(x,y))}<0,\quad
 (\forall (x,y)\in \R_{>0}\times (-1,1)).
\end{align*}
Fix $x\in \R_{>0}$. Since $y\mapsto z(x,y):(-1,1)\to\R_{>0}$ is monotone decreasing and
 bounded from below, $\lim_{y\nearrow 1}z(x,y)$ exists in $\R_{\ge
 0}$. We can take the limit $y\nearrow 1$ in the equality
 $f(x,y,z(x,y))=0$. Then by the uniqueness of the solution to the
 equation $f(x,1,z)=0$, $\lim_{y\nearrow 1}z(x,y)=z(x,1)$. Since 
$\lim_{z\to \infty}\sup_{y\in [-1,1]}f(x,y,z)=-\infty$, $y\mapsto
 z(x,y):(-1,1)\to\R_{>0}$ is bounded from above. Thus $\lim_{y\searrow
 -1}z(x,y)$ exists in $\R_{\ge 0}$. Since $\lim_{y\searrow -1}z(x,y)\ge
 z(x,1)>0$, we can take the limit $y\searrow -1$ in the equality
 $f(x,y,z(x,y))=0$ and by the uniqueness of the solution we conclude
 that $\lim_{y\searrow -1}z(x,y)=z(x,-1)$. Thus we have proved that 
\begin{align}
0<z(x,1)\le z(x,y)\le z(x,-1),\quad (\forall
 (x,y)\in\R_{>0}\times[-1,1]).\label{eq_gap_function_2nd_argument}
\end{align}
For $\delta \in \{1,-1\}$, set 
\begin{align*}
S_{\delta}:=\Bigg\{&(x,z)\in\R_{>0}\times \R_{>0}\ \Big|\\
&  -\frac{2}{|U|}+D_d\int_{\G_{\infty}^*}d\bk\Tr \left(\frac{\sinh(x
 \sqrt{E(\bk)^2+z^2})}{(\delta+\cosh(x\sqrt{E(\bk)^2+z^2}))\sqrt{E(\bk)^2+z^2}}\right)<0\Bigg\}.
\end{align*}
The set $S_{\delta}$ is an open set of $\R^2$,
 $f(\cdot,\delta,\cdot)\in C^{\infty}(S_{\delta})$ and $(x,z(x,\delta))\in
 S_{\delta}$ for any $x\in \R_{>0}$, $\delta\in \{1,-1\}$. Bearing in mind the fact that the
 functions 
\begin{align*}
x\mapsto \frac{\sinh x}{-1+\cosh x}:\R_{>0}\to \R,\quad
x\mapsto \frac{\sinh x}{1+\cosh x}:\R_{>0}\to \R
\end{align*}
are strictly monotone decreasing, strictly monotone increasing
 respectively, we see that $\frac{\partial f}{\partial
 x}(x,-1,z(x,-1))<0$, $\frac{\partial
 f}{\partial x}(x,1,z(x,1))>0$, $(\forall x\in\R_{>0})$.
As we considered in the proof of Lemma
 \ref{lem_gap_equation_solvability}, the functions 
\begin{align*}
x\mapsto \frac{\sinh x}{(\delta + \cosh x)x}:\R_{>0}\to\R,\quad
 (\delta\in \{1,-1\}) 
\end{align*}
are strictly monotone decreasing. Based on this fact, we can also verify
 that $\frac{\partial f}{\partial z}(x,\delta, z(x,\delta))<0$,
 $(\forall x\in\R_{>0},\ \delta \in \{1,-1\})$. Therefore by the
 implicit function theorem, $z(\cdot,\delta)\in C^{\infty}(\R_{>0})$
 $(\delta=1,-1)$ and 
\begin{align*}
&\frac{\partial z}{\partial x}(x,1)=-\frac{\frac{\partial f}{\partial
 x}(x,1,z(x,1))}{\frac{\partial f}{\partial
 z}(x,1,z(x,1))}>0,\quad
\frac{\partial z}{\partial x}(x,-1)=-\frac{\frac{\partial f}{\partial
 x}(x,-1,z(x,-1))}{\frac{\partial f}{\partial
 z}(x,-1,z(x,-1))}<0,\\
&(\forall x\in\R_{>0}),
\end{align*}
which implies that the functions $x\mapsto z(x,1):\R_{>0}\to \R_{>0}$, 
$x\mapsto z(x,-1):\R_{>0}\to \R_{>0}$ are strictly monotone increasing,
 strictly monotone decreasing respectively. Then by the boundedness
 \eqref{eq_gap_function_2nd_argument} we see that
$\lim_{x\to \infty}z(x,1)$, $\lim_{x\to
 \infty}z(x,-1)$ converge in $\R_{>0}$. Set
 $z_{\infty}(\delta):=\lim_{x\to
 \infty}z(x,\delta)$ for $\delta\in \{1,-1\}$. We
 can take the limit $x\to \infty$ in the equality
 $f(x,\delta,z(x,\delta))=0$ to derive that
\begin{align}
z_{\infty}(\delta)\left(-\frac{2}{|U|}+D_d\int_{\G_{\infty}^*}d\bk
 \Tr\left(\frac{1}{\sqrt{E(\bk)^2+z_{\infty}(\delta)^2}}\right)\right)=-\frac{2\g}{|U|}\label{eq_limit_equation_with_field}
\end{align}
for $\delta\in\{1,-1\}$. Since the solution to this equation is unique
 in $\R_{>0}$, we have that $z_{\infty}(1)=z_{\infty}(-1)$. We can read
 from \eqref{eq_gap_function_2nd_argument} that 
\begin{align*}
&z(\beta,1)\le
 z\left(\beta,\cos\left(\frac{\beta\theta(\beta)}{2}\right)\right)=a(\beta,\g)\le
 z(\beta,-1),\\
&(\forall \beta \in\R_{>0}\text{ with
 }\beta\theta/2\notin \pi (2\Z+1)).
\end{align*}
Thus it follows that $a(\beta,\g)$ converges to the unique positive
 solution of the equation \eqref{eq_limit_equation_with_field} as
 $\beta\to \infty$ with $\beta\in \R_{>0}$ satisfying
 $\beta\theta/2\notin \pi (2\Z+1)$.
Set $a_{\infty}(\g):=\lim_{\beta\to \infty,\beta\in \R_{>0}\text{ with
 }\frac{\beta\theta}{2}\notin \pi (2\Z+1)}a(\beta,\g)$. We can
 derive from \eqref{eq_limit_equation_with_field},
 \eqref{eq_dispersion_upper_lower_bound} and
 \eqref{eq_dispersion_measure_divided} that
$$
-\frac{2\g}{|U|}\le a_{\infty}(\g)\left(-\frac{2}{|U|}+D_db\sc\right),
$$ 
which combined with the inequality $|U|<2/(D_db\sc)$ implies that
 $$
\lim_{\g\searrow 0\atop\g\in (0,1]}a_{\infty}(\g)=0.
$$ 

It follows from \eqref{eq_2_point_correlation_with_field} that
\begin{align*}
&\lim_{\beta\to \infty,\beta\in \R_{>0}\atop \text{with
 }\frac{\beta\theta}{2}\notin \pi (2\Z+1)}\lim_{L\to \infty\atop L\in
 \N}\frac{\Tr (e^{-\beta (\sH+i\theta \sS_z+\sF)}\sA_1)}{\Tr e^{-\beta
 (\sH+i\theta \sS_z
 +\sF)}}\\
&=-a_{\infty}(\g)\frac{D_d}{2}\int_{\G_{\infty}^*}d\bk\frac{1}{\sqrt{E(\bk)^2+a_{\infty}(\g)^2}}(\hrho,\hrho).
\end{align*}
Then by recalling \eqref{eq_dispersion_measure_divided} and 
sending $\g$ to $0$ we reach the claimed equality.

\eqref{item_ODLRO_zero}: By Lemma \ref{lem_covariance_final_characterization},
 \eqref{eq_4_point_function_characterization},
 \eqref{eq_4_point_function_estimate_boundary} and
 \eqref{eq_gap_equation}
\begin{align}
&\limsup_{L\to \infty\atop L\in \N}\left|
\frac{\Tr (e^{-\beta(\sH+i\theta
 \sS_z)}\psi_{\hrho\hbx\ua}^*\psi_{\hrho\hbx\da}^*\psi_{\heta\hby\da}\psi_{\heta\hby\ua})}{\Tr e^{-\beta(\sH+i\theta \sS_z)}}\right|\label{eq_basic_inequality_for_ODLRO_zero}\\&\le 1_{(\hrho,\hbx)=(\heta,\hby)}\left|\lim_{L\to\infty\atop
 L\in\N}C(\D)(1\hrho\b0 0,1\hrho\b0 0)\right|+\frac{\D^2}{U^2}\notag\\
&\quad +\left|\lim_{L\to \infty\atop L\in\N}C(\D)(1\hrho\hbx 0,1\heta\hby 0)
C(\D)(2\heta\hby 0,2\hrho\hbx 0)\right|.\notag
\end{align}
Let us prove that 
\begin{align*}
\limsup_{\beta\to\infty,\beta\in \R_{>0}\atop\text{with
 }\frac{\beta\theta}{2}\notin \pi (2\Z+1)}\left\|
\lim_{L\to \infty\atop L\in \N}C(\D)(\cdot\hbx 0, \cdot\hby
 0)\right\|_{2b\times 2b}
\end{align*}
decays as $\|\hbx-\hby\|_{\R^d}\to\infty$. The $\beta$-dependent bound
 of the form \eqref{eq_full_covariance_total_decay} has no use
 here. Remind us the relation \eqref{eq_covariance_full_summation}. We
 have seen a $\beta$-independent decay property of
 $\sum_{l=N_{\beta}+1}^{\hat{N}_{\beta}}\cC_l$ in Lemma
 \ref{lem_covariance_construction} \eqref{item_covariance_construction_ODLRO}.
Let us establish a spatial decay property of 
$$
\limsup_{\beta\to\infty,\beta\in \R_{>0}\atop\text{with
 }\frac{\beta\theta}{2}\in \pi(2\Z+1)}\lim_{L\to \infty\atop L\in \N}
\lim_{h\to \infty\atop h\in
 \frac{2}{\beta}\N}\|\cC_{N_{\beta}}(\D)(\cdot\bx 0,\cdot \by
 0)\|_{2b\times 2b}.
$$
Take any $j\in \{1,2,\cdots,d\}$, $\bx,\by\in \G$. Assume that
 \eqref{eq_huge_h_condition} with $\phi=\D$ and
 \eqref{eq_huge_L_condition} hold so that we can use Lemma
 \ref{lem_properties_cutoff} and inequalities established in the proof
 of Lemma \ref{lem_real_covariances_properties}. By
 \eqref{eq_space_periodicity_application} for $l'=N_{\beta}$,
 \eqref{eq_integrand_bare_bound},
 \eqref{eq_integrand_momentum_derivative} and Lemma
 \ref{lem_properties_cutoff} \eqref{item_derivative_cutoff},\eqref{item_final_cutoff}
\begin{align*}
&\left\|\frac{L}{2\pi}(e^{-\frac{2\pi}{L}\<\bx-\by,\hbv_j\>}-1)\cC_{N_{\beta}}(\D)(\cdot\bx
 0, \cdot \by 0)\right\|_{2b\times 2b}\\
&\le \frac{c(d,M,\chi,\sc,\sa,(\hbv_j)_{j=1}^d)}{\beta L^d}\sup_{\bp\in
 \R^d}\sum_{\bk\in\G^*}1_{\chi_{N_{\beta}}(\frac{\pi}{\beta},\bk+\bp)\neq
 0}\\
&\quad\cdot \Bigg(M^{-\frac{N_{\beta}}{\sn_j}}\left\|h^{-1}(I_{2b}-e^{-\frac{i}{h}(\frac{\pi}{\beta}-\frac{\theta(\beta)}{2})I_{2b}+\frac{1}{h}E(\D)(\bk+\bp)})^{-1}\right\|_{2b\times
 2b}\\
&\qquad\quad +
\left\|\frac{\partial}{\partial \hat{k}_j}h^{-1}(I_{2b}-e^{-\frac{i}{h}(\frac{\pi}{\beta}-\frac{\theta(\beta)}{2})I_{2b}+\frac{1}{h}E(\D)(\bk+\bp)})^{-1}\right\|_{2b\times
 2b}\Bigg)\\
&\le c(d,M,\chi,\sc,\sa,(\hbv_j)_{j=1}^d)\Bigg(\frac{1}{L^d}
\sup_{\bp\in
 \R^d}\sum_{\bk\in\G^*}\left(\left(\frac{\pi}{\beta}-\frac{\theta(\beta)}{2}\right)^2+e(\bk+\bp)^2+\D^2\right)^{-\frac{1}{2}}\\
&\quad +\frac{1}{\beta L^d}\sup_{\bp\in
 \R^d}\sum_{\bk\in\G^*}1_{\chi_{N_{\beta}}(\frac{\pi}{\beta},\bk+\bp)\neq
 0}\left(\left(\frac{\pi}{\beta}-\frac{\theta(\beta)}{2}\right)^2+e(\bk+\bp)^2+\D^2\right)^{-\frac{1}{2}-\frac{1}{2\sn_j}}\Bigg).
\end{align*}
In the last inequality we also used \eqref{eq_beta_relation} and the
 assumption $\beta\ge 1$. By \eqref{eq_support_description} and the
 support property of $\chi(\cdot)$
\begin{align*}
1_{\chi_{N_{\beta}}(\frac{\pi}{\beta},\bk+\bp)\neq 0}\le
 \chi(2^{-1}M^{-N_{\beta}}A(\beta,M)^{-1}e(\bk+\bp)).
\end{align*}
By substituting this inequality and using periodicity and
 \eqref{eq_dispersion_measure_divided}, for any $\bx,\by\in\G_{\infty}$
\begin{align*}
&|\<\bx-\by,\hbv_j\>|\lim_{L\to \infty\atop L\in \N}
\lim_{h\to \infty\atop h\in
 \frac{2}{\beta}\N}\|\cC_{N_{\beta}}(\D)(\cdot\bx 0,\cdot \by
 0)\|_{2b\times 2b}\\
&\le c(d,M,\chi,\sc,\sa,(\hbv_j)_{j=1}^d)\Bigg(\int_{\G_{\infty}^*}d\bk
 \frac{1}{e(\bk)}\\
&\quad +\frac{1}{\beta}\int_{\G_{\infty}^*}d\bk \chi(2^{-1}M^{-N_{\beta}}A(\beta,M)^{-1}e(\bk))\left(\left(\frac{\pi}{\beta}-\frac{\theta(\beta)}{2}\right)^2+e(\bk)^2+\D^2\right)^{-\frac{1}{2}-\frac{1}{2\sn_j}}
\Bigg)\\
&\le c(d,M,\chi,\sc,\sa,(\hbv_j)_{j=1}^d)\\
&\quad\cdot \Bigg(1 + \frac{1}{\beta}\int_{\G_{\infty}^*}d\bk 1_{e(\bk)\le 4\beta^{-1}}
\left(\left(\frac{\pi}{\beta}-\frac{\theta(\beta)}{2}\right)^2+e(\bk)^2+\D^2\right)^{-\frac{1}{2}-\frac{1}{2\sn_j}}\Bigg).
\end{align*}
In the second inequality we also used the support property of
 $\chi(\cdot)$ and recalled the definition of $A(\beta,M)$. By the claim
 \eqref{item_gap_bound} of this corollary, if $e(\bk)\le 4\beta^{-1}$
 and $\beta$ is large, $(\pi/\beta-\theta(\beta)/2)^2+e(\bk)^2+\D^2\le
 (\pi^2+17)\beta^{-2}<1$. Thus by the condition $\sn_j\ge 1$ and
Lemma \ref{lem_gap_equation_solvability}, 
\begin{align*}
&|\<\bx-\by,\hbv_j\>|\limsup_{\beta\to\infty,\beta\in \R_{>0}\atop\text{with
 }\frac{\beta\theta}{2}\in \pi(2\Z+1)}\lim_{L\to \infty\atop L\in \N}
\lim_{h\to \infty\atop h\in
 \frac{2}{\beta}\N}\|\cC_{N_{\beta}}(\D)(\cdot\bx 0,\cdot \by
 0)\|_{2b\times 2b}\\
&\le c(d,M,\chi,\sc,\sa,(\hbv_j)_{j=1}^d)\\
&\quad\cdot \limsup_{\beta\to\infty,\beta\in \R_{>0}\atop\text{with
 }\frac{\beta\theta}{2}\in \pi(2\Z+1)}
\Bigg(1
+ \frac{1}{\beta}\int_{\G_{\infty}^*}d\bk 1_{e(\bk)\le 4\beta^{-1}}
\left(\left(\frac{\pi}{\beta}-\frac{\theta(\beta)}{2}\right)^2+e(\bk)^2+\D^2\right)^{-1}\Bigg)\\
&\le c(d,M,\chi,\sc,\sa,(\hbv_j)_{j=1}^d)\left(1+\limsup_{\beta\to\infty,\beta\in \R_{>0}\atop\text{with
 }\frac{\beta\theta}{2}\in \pi(2\Z+1)}\int_{\G_{\infty}^*}d\bk \Tr
 G_{\beta,\theta,\D}(\bk)\right)\\
&\le  c(d,M,\chi,\sc,\sa,(\hbv_j)_{j=1}^d)\left(1+\frac{1}{|U|}\right).
\end{align*}
To make clear, let us remark that the inequality
\begin{align*}
\int_{\G_{\infty}^*}d\bk \Tr G_{\beta,\theta,\D}(\bk)\le \frac{2}{D_d|U|}
\end{align*}
ensured by Lemma \ref{lem_gap_equation_solvability} 
and the definition of $\D$ was used. 
By combining the above inequality with Lemma
 \ref{lem_covariance_construction}
 \eqref{item_covariance_construction_ODLRO} and recalling
 \eqref{eq_covariance_full_summation} we have that for any
 $\bx,\by\in\G_{\infty}$
\begin{align*}
&\sum_{j=1}^d|\<\bx-\by,\hbv_j\>|\limsup_{\beta\to\infty,\beta\in \R_{>0}\atop\text{with }\frac{\beta\theta}{2}\in \pi(2\Z+1)}\left\|\lim_{L\to \infty\atop L\in \N}
C(\D)(\cdot\bx 0, \cdot \by 0)\right\|_{2b\times 2b}\\
&\le
 c(d,b,(\hbv_j)_{j=1}^d,\sa,(\sn_j)_{j=1}^{d},\sc,M,\chi)\left(1+\frac{1}{|U|}\right).
\end{align*}
Now coming back to \eqref{eq_basic_inequality_for_ODLRO_zero} and using
 the claim \eqref{item_gap_bound} of this corollary again, we conclude
 that
\begin{align*}
&\lim_{\|\hbx-\hby\|_{\R^d}\to\infty}\limsup_{\beta\to \infty,\beta\in
 \R_{>0}\atop \text{with }\frac{\beta\theta}{2}\notin \pi
 (2\Z+1)}\limsup_{L\to\infty\atop L\in \N}
\left|
\frac{\Tr (e^{-\beta(\sH+i\theta
 \sS_z)}\psi_{\hrho\hbx\ua}^*\psi_{\hrho\hbx\da}^*\psi_{\heta\hby\da}\psi_{\heta\hby\ua})}{\Tr
 e^{-\beta(\sH+i\theta \sS_z)}}\right|\\
&\le 
 c(d,b,(\hbv_j)_{j=1}^d,\sa,(\sn_j)_{j=1}^{d},\sc,M,\chi)\left(1+\frac{1}{|U|}\right)^2
\lim_{\|\hbx-\hby\|_{\R^d}\to\infty}\left(\sum_{j=1}^d|\<\hbx-\hby,\hbv_j\>|\right)^{-2}\\
&=0.
\end{align*}
\end{proof}

\appendix
\section{General lemmas for the infinite-volume limit}\label{app_infinite_volume_limit}

Here we state general lemmas which we use to take the infinite-volume
limit of the thermal expectations and the free energy density of our
many-electron systems. We use these lemmas in the proof of Theorem
\ref{thm_main_theorem} in Subsection \ref{subsec_proof_theorem}. The
first lemma enables us to take the infinite-volume limit of the thermal
expectation of the Cooper pair operator. 

\begin{lemma}\label{lem_infinite_volume_expectation}
Let $f_L$, $f\in C^{2}(\R^2,\R)$, $g_L$, $g$, $u_L$, $u\in
 C(\R^2,\C)$ $(L\in\N)$. Assume that the following conditions hold.
\begin{enumerate}[(i)]
\item For any non-empty compact set $Q$ of $\R^2$
\begin{align}
&\lim_{L\to\infty\atop L\in\N}\sup_{\bx\in Q}\left|
\frac{\partial^{i+j}}{\partial x_1^i\partial x_2^j}f_L(\bx)-
\frac{\partial^{i+j}}{\partial x_1^i\partial x_2^j}f(\bx)
\right|=0,\label{eq_uniform_convergence_up_to_2nd}\\
&\quad(\forall i,j\in \N\cup\{0\}\text{ satisfying }i+j\le 2),\notag\\
&\lim_{L\to \infty\atop L\in\N}\sup_{\bx\in Q}|g_L(\bx)-g(\bx)|=0,\quad
 \lim_{L\to \infty\atop L\in\N}\sup_{\bx\in Q}|u_L(\bx)-u(\bx)|=0.\notag 
\end{align}
\item 
\begin{align*}
\sup_{L\in \N}\sup_{\bx\in\R^2}|u_L(\bx)|<\infty,\quad
\sup_{L\in \N}\sup_{\bx\in\R^2}|g_L(\bx)|<\infty.
\end{align*}
\item There exist $R,\ c\in\R_{>0}$ such that 
\begin{align}
f_L(\bx)\le -c\|\bx\|_{\R^2},\quad (\forall \bx\in\R^2\text{ with
 }\|\bx\|_{\R^2}\ge R,\ L\in \N).\label{eq_L_independent_external_decay}
\end{align}
\item There exist $\ba_L,\ba\in\R^2$ $(L\in\N)$ such that 
\begin{align}
&f_L(\ba_L)>f_L(\bx),\quad (\forall \bx\in\R^{2}\backslash \{\ba_L\},\
 L\in\N),\label{eq_L_maximum}\\
&f(\ba)>f(\bx),\quad (\forall \bx\in\R^{2}\backslash
 \{\ba\}),\label{eq_limit_maximum}\\
&H(f)(\ba)<0,\label{eq_limit_hessian_negative}\\
&g(\ba)\neq 0.\notag
\end{align}
Here $H(f)(\bx)$ denotes the Hessian of $f$.
\end{enumerate}
Then
\begin{align*}
\lim_{L\to\infty\atop L\in\N}\frac{\int_{\R^2}d\bx
 e^{L^df_L(\bx)}u_L(\bx)}{\int_{\R^2}d\bx
 e^{L^df_L(\bx)}g_L(\bx)}=\frac{u(\ba)}{g(\ba)}.
\end{align*}
\end{lemma}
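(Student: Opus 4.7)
The plan is to apply the Laplace method. The factor $L^d$ in the exponent concentrates the integrals $\int e^{L^d f_L} u_L$ and $\int e^{L^d f_L} g_L$ near the unique maximum $\ba_L$ of $f_L$, and both integrals will factor as a common dominant term $L^{-d} e^{L^d f_L(\ba_L)}$ times a Gaussian fluctuation integral built from $H(f_L)(\ba_L)$. Since the Gaussian factor and the exponential prefactor are identical in the numerator and denominator, only the ratio of the ``values at the peak'' survives, and by continuity this ratio is $u(\ba)/g(\ba)$.

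First I would show $\ba_L \to \ba$. The uniform external decay \eqref{eq_L_independent_external_decay} forces $\|\ba_L\|_{\R^2}$ to be bounded in $L$: indeed $f_L(\ba_L) \ge f_L(\b0)$, which converges to the finite number $f(\b0)$ by \eqref{eq_uniform_convergence_up_to_2nd}, so any subsequence with $\|\ba_L\|_{\R^2} \ge R$ would violate $f_L(\ba_L) \le -c\|\ba_L\|_{\R^2}$. On a subsequential limit $\ba_L \to \ba^*$, local uniform convergence \eqref{eq_uniform_convergence_up_to_2nd} and the maximality \eqref{eq_L_maximum} give $f(\ba^*) \ge f(\bx)$ for every $\bx$, so \eqref{eq_limit_maximum} forces $\ba^* = \ba$. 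Combining $\ba_L \to \ba$ with the $C^2$ convergence \eqref{eq_uniform_convergence_up_to_2nd} and \eqref{eq_limit_hessian_negative}, I would extract $L$-independent constants $\eta, \lambda \in \R_{>0}$ such that, for all sufficiently large $L$,
$$
f_L(\bx) \le f_L(\ba_L) - \lambda \|\bx - \ba_L\|_{\R^2}^2 \qquad \text{whenever } \|\bx - \ba_L\|_{\R^2} \le \eta.
$$

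Next I would split $\R^2 = B_\eta(\ba_L) \sqcup (\R^2 \setminus B_\eta(\ba_L))$. On the compact annulus $\eta \le \|\bx - \ba\|_{\R^2} \le 2R$ the continuous function $f - f(\ba)$ is bounded above by some $-2\delta < 0$ by \eqref{eq_limit_maximum}, and uniform convergence plus $\ba_L \to \ba$ upgrades this to $f_L(\bx) \le f_L(\ba_L) - \delta$ for large $L$ on the corresponding annulus around $\ba_L$; past $\|\bx\|_{\R^2} \ge 2R$ the bound \eqref{eq_L_independent_external_decay} takes over. Together with the uniform boundedness of $|u_L|, |g_L|$, this makes the outer contribution at most $O(e^{L^d f_L(\ba_L)} e^{-\delta' L^d})$. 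On the inner ball I would change variables $\bx = \ba_L + L^{-d/2}\by$, giving $d\bx = L^{-d} d\by$ and
$$
L^d(f_L(\bx) - f_L(\ba_L)) = \tfrac{1}{2}\, \by^{\mathsf{T}} H(f_L)(\ba_L)\, \by + L^{-d/2} R_L(\by),
$$
with $|R_L(\by)| \le C\|\by\|_{\R^2}^3$ uniformly on $\|\by\|_{\R^2} \le \eta L^{d/2}$ by \eqref{eq_uniform_convergence_up_to_2nd}. Since $H(f_L)(\ba_L) \to H(f)(\ba) < 0$, the integrand is dominated by a fixed Gaussian on $\|\by\|_{\R^2} \le \eta L^{d/2}/2$, and dominated convergence together with continuity of $u_L \to u$ at $\ba$ yields
$$
\int_{\R^2} e^{L^d f_L(\bx)} u_L(\bx)\, d\bx = L^{-d} e^{L^d f_L(\ba_L)} \left( u(\ba) \int_{\R^2} e^{\tfrac{1}{2} \by^{\mathsf{T}} H(f)(\ba) \by}\, d\by + o(1) \right),
$$
with the same identity for $g$ in place of $u$. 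The prefactors $L^{-d} e^{L^d f_L(\ba_L)}$ and the nonzero Gaussian integral cancel in the ratio, leaving $u(\ba)/g(\ba)$.

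The main obstacle will be extracting a single neighborhood radius $\eta > 0$ and quadratic constant $\lambda > 0$ that work uniformly for all large $L$, together with the cubic Taylor remainder bound after the rescaling $\bx = \ba_L + L^{-d/2}\by$; this depends crucially on the $C^2$-local uniform convergence \eqref{eq_uniform_convergence_up_to_2nd}, the non-degeneracy \eqref{eq_limit_hessian_negative}, and on $\ba_L \to \ba$, and it is the step that must be handled with care before the dominated convergence argument can close.
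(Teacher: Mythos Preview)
Your overall plan is the same Laplace-method argument the paper uses, and the steps $\ba_L\to\ba$, the outer decay estimate, the rescaling $\bx=\ba_L+L^{-d/2}\by$, and the final cancellation are all exactly what is needed. However, there is a genuine gap in how you handle the inner expansion: you write
\[
L^d\bigl(f_L(\bx)-f_L(\ba_L)\bigr)=\tfrac12\,\by^{\mathsf T}H(f_L)(\ba_L)\,\by+L^{-d/2}R_L(\by),\qquad |R_L(\by)|\le C\|\by\|_{\R^2}^3,
\]
but a cubic remainder bound of this kind requires the Hessian $H(f_L)$ to be at least Lipschitz near $\ba_L$ with a constant uniform in $L$. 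The hypotheses only give $f_L\in C^2$ with locally uniform $C^2$-convergence; nothing controls a modulus of continuity for $H(f_L)$. With only $C^2$ one cannot extract an $O(\|\bx-\ba_L\|_{\R^2}^3)$ remainder, so your dominated-convergence step does not close as written.

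The fix, and what the paper does, is to avoid separating the Hessian at the single point $\ba_L$. Use the exact integral remainder
\[
f_L(\bx)-f_L(\ba_L)=\int_0^1\!dt\,(1-t)\,\bigl\langle\bx-\ba_L,\,H(f_L)\bigl(t(\bx-\ba_L)+\ba_L\bigr)(\bx-\ba_L)\bigr\rangle,
\]
so that after the rescaling the exponent becomes $\int_0^1 dt\,(1-t)\,\by^{\mathsf T}H(f_L)(tL^{-d/2}\by+\ba_L)\by$. From the $C^2$-convergence and $\ba_L\to\ba$ you get a uniform matrix bound $H(f_L)(tL^{-d/2}\by+\ba_L)\le\tfrac12 H(f)(\ba)$ on $\overline{B_\delta(\ba_L)}\times[0,1]$ for large $L$; this supplies the Gaussian dominating function $e^{\frac14\by^{\mathsf T}H(f)(\ba)\by}$, and pointwise convergence of the integrand to $e^{\frac12\by^{\mathsf T}H(f)(\ba)\by}u(\ba)$ follows from continuity of $H(f)$ and $u$. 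Dominated convergence then finishes the argument with no third-derivative information needed.
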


\begin{proof} The proof below is essentially a digest of the part
 concerning SSB of the proof of \cite[\mbox{Theorem 1.3}]{K_BCS}. By
 basic arguments based on the assumptions one can prove the following
 properties. 
\begin{itemize}
\item 
\begin{align}
\lim_{L\to\infty\atop L\in
 \N}\ba_L=\ba.\label{eq_argument_max_convergence}
\end{align}
\item There exist $\delta\in \R_{>0}$, $L_0\in \N$ such that for any
      $L\in \N$ with $L\ge L_0$,
\begin{align}
&f_L(\bx)=f_L(\ba_L)+\int_0^1dt(1-t)\<\bx-\ba_L,H(f_L)(t(\bx-\ba_L)+\ba_L)(\bx-\ba_L)\>,\label{eq_taylor_expansion_2nd}\\
&\quad
 (\forall \bx\in
 \overline{B_{\delta}(\ba_L)}),\notag\\
&H(f_L)(t(\bx-\ba_L)+\ba_L)\le \frac{1}{2}H(f)(\ba),\quad (\forall
 \bx\in \overline{B_{\delta}(\ba_L)},\ t\in
 [0,1]),\label{eq_hessian_bound}\\
&f_L(\bx)-f_L(\ba_L)\le \frac{1}{2}\sup_{\by\in \R^2\backslash
 \overline{B_{\delta/2}(\ba)}}(f(\by)-f(\ba))<0,\quad (\forall \bx\in \R^2\backslash\overline{B_{\delta}(\ba_L)}).\label{eq_maximum_application}
\end{align}
\end{itemize}
Here $B_r(\bb)$ denotes $\{\bx\in\R^2\ |\ \|\bx-\bb\|_{\R^2}<r \}$ for
      $\bb\in\R^2$, $r\in \R_{>0}$. 
In fact by using \eqref{eq_uniform_convergence_up_to_2nd} with $i=j=0$,
 \eqref{eq_L_independent_external_decay}, \eqref{eq_L_maximum},
 \eqref{eq_limit_maximum} we can prove
 \eqref{eq_argument_max_convergence}. Taylor's theorem gives
 \eqref{eq_taylor_expansion_2nd} for any $\delta \in\R_{>0}$ and
 $L\in\N$. Then by
 \eqref{eq_uniform_convergence_up_to_2nd} with $(i,j)$ satisfying
 $i+j=2$, \eqref{eq_limit_hessian_negative},
 \eqref{eq_argument_max_convergence} and the continuity of the 2nd order
 derivatives of $f$ we can prove \eqref{eq_hessian_bound} with some
 $\delta$ and any $L\in\N$ satisfying $L\ge L_0$ for some $L_0$. For the
 fixed $\delta$ the property \eqref{eq_argument_max_convergence} ensures that 
$\|\ba-\ba_L\|_{\R^2}\le \delta/2$ for any $L\in\N$ satisfying $L\ge
 L_0$, if we take $L_0$ larger if necessary. 
This implies that $\R^2\backslash \overline{B_{\delta}(\ba_L)}\subset
 \R^2\backslash \overline{B_{\delta/2}(\ba)}$. 
Then for the fixed $\delta$, by taking $L_0$ larger if necessary
we can apply \eqref{eq_uniform_convergence_up_to_2nd} with
 $i+j=0$, \eqref{eq_L_independent_external_decay},
 \eqref{eq_limit_maximum}, \eqref{eq_argument_max_convergence} and the
 continuity of $f$ to prove \eqref{eq_maximum_application}. 

For any $L\in\N$ with $L\ge L_0$,
\begin{align*}
&\int_{\R^2}d\bx e^{L^df_L(\bx)}g_L(\bx)\\
&= e^{L^df_L(\ba_L)}L^{-d}\Bigg(
\int_{\overline{B_{L^{\frac{d}{2}}\delta}(\b0)}}d\bx e^{\int_0^1dt
 (1-t)\<\bx,H(f_L)(t
 L^{-\frac{d}{2}}\bx+\ba_L)\bx\>}g_L(L^{-\frac{d}{2}}\bx+\ba_L)\\
&\qquad\qquad\qquad\qquad  +
 L^d\int_{\R^2\backslash\overline{B_{\delta}(\ba_L)}}d\bx
 e^{L^d(f_L(\bx)-f_L(\ba_L))}g_L(\bx)\Bigg).
\end{align*}
It follows from the assumptions, the properties listed above and the
 dominated convergence theorem that
\begin{align*}
&\lim_{L\to \infty\atop L\in
 \N}\Bigg(
\int_{\overline{B_{L^{\frac{d}{2}}\delta}(\b0)}}d\bx e^{\int_0^1dt
 (1-t)\<\bx,H(f_L)(t
 L^{-\frac{d}{2}}\bx+\ba_L)\bx\>}g_L(L^{-\frac{d}{2}}\bx+\ba_L)\\
&\qquad\qquad +
 L^d\int_{\R^2\backslash\overline{B_{\delta}(\ba_L)}}d\bx
 e^{L^d(f_L(\bx)-f_L(\ba_L))}g_L(\bx)\Bigg)\\
&=g(\ba)\int_{\R^2}d\bx e^{\frac{1}{2}\<\bx,H(f)(\ba)\bx\>}\neq 0.
\end{align*}
The numerator can be dealt in the same way. As the result,
\begin{align*}
\lim_{L\to \infty\atop L\in \N}\frac{\int_{\R^2}d\bx e^{L^2f_L(\bx)}u_L(\bx)}{\int_{\R^2}d\bx e^{L^2f_L(\bx)}g_L(\bx)}=\frac{u(\ba)\int_{\R^2}d\bx
 e^{\frac{1}{2}\<\bx,H(f)(\ba)\bx\>}}{g(\ba)\int_{\R^2}d\bx
 e^{\frac{1}{2}\<\bx,H(f)(\ba)\bx\>}}=\frac{u(\ba)}{g(\ba)}.
\end{align*}
\end{proof}

Next let us prove a lemma which is used to prove the existence of the
infinite-volume limit of the correlation function in the case that the
physical parameters are not on the phase boundary. 

\begin{lemma}\label{lem_infinite_volume_correlation}
Let $f_L$, $f\in C^{2}(\R,\R)$, $g_L$, $g$, $u_L$, $u\in C(\R,\C)$
 $(L\in\N)$. Assume that the following conditions hold. 
\begin{enumerate}[(i)]
\item For any $r\in \R_{>0}$
\begin{align*}
&\lim_{L\to\infty\atop L\in\N}\sup_{x\in [-r,r]}\left|
\frac{d^{i}}{dx^i}f_L(x)- \frac{d^{i}}{dx^i}f(x)\right|=0,\quad (\forall
 i\in \{0,1,2\}),\\
&\lim_{L\to \infty\atop L\in\N}\sup_{x\in [-r,r]}|u_L(x)-u(x)|=0,\quad
\lim_{L\to \infty\atop L\in\N}\sup_{x\in [-r,r]}|g_L(x)-g(x)|=0.
 \end{align*}
\item 
\begin{align*}
\sup_{L\in \N}\sup_{x\in \R}|u_L(x)|<\infty,\quad 
\sup_{L\in \N}\sup_{x\in \R}|g_L(x)|<\infty.
\end{align*}
\item There exist $R,\ c\in\R_{>0}$ such that 
\begin{align*}
f_L(x)\le -c |x|,\quad (\forall x\in\R\text{ with
 }|x|\ge R,\ L\in \N).
\end{align*}
\item There exist $a_L,a\in\R_{\ge 0}$ $(L\in\N)$ such that 
\begin{align*}
&f_L(a_L)>f_L(x),\quad (\forall x\in\R_{\ge 0}\backslash \{a_L\},\
 L\in\N),\\
&f(a)>f(x),\quad (\forall x\in\R_{\ge 0}\backslash
 \{a\}),\\
&\frac{d}{dx}f_L(a_L)=0,\quad (\forall L\in \N),\\
&\frac{d^2}{dx^2}f(a)<0,\quad g(a)\neq 0.
\end{align*}
Moreover if $a=0$, there exists $L_0\in \N$ such that $a_L=0$ ($\forall
      L\in \N$ with $L\ge L_0$).
\end{enumerate}
Then
\begin{align}
\lim_{L\to\infty\atop L\in\N}\frac{\int_0^{\infty}dx x
 e^{L^df_L(x)}u_L(x)}{\int_{0}^{\infty}dx x 
 e^{L^df_L(x)}g_L(x)}=\frac{u(a)}{g(a)}.\label{eq_1_dimensional_expectation}
\end{align}
\end{lemma}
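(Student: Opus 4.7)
The plan is to adapt the Laplace-type asymptotic analysis of Lemma \ref{lem_infinite_volume_expectation} to this one-dimensional setting with weight $x$ on the half-line $[0,\infty)$, splitting into two cases according to whether the limiting maximum point $a$ lies in the interior $(0,\infty)$ or on the boundary $\{0\}$. First, parallel to the argument in Lemma \ref{lem_infinite_volume_expectation}, I would establish from the uniform convergence of $f_L$ together with the strict maximum properties of $f$, $f_L$ and the linear decay bound at infinity that $\lim_{L\to\infty,L\in\N}a_L=a$, that there exist $\delta>0$ and $L_0\in\N$ such that $\frac{d^2}{dx^2}f_L(x)\le\frac{1}{2}\frac{d^2}{dx^2}f(a)<0$ on a $\delta$-neighborhood of $a_L$ for $L\ge L_0$, and that $f_L(x)-f_L(a_L)\le\frac{1}{2}\sup_{y\in\R_{\ge 0}\backslash \overline{B_{\delta/2}(a)}}(f(y)-f(a))<0$ outside a $\delta$-neighborhood of $a_L$; this, combined with the linear decay bound, controls the tail contributions away from $a_L$.

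In the interior case $a>0$, for $L\ge L_0$ the point $a_L$ lies in $(0,\infty)$ and is a critical point of $f_L$ so $\frac{d}{dx}f_L(a_L)=0$ holds automatically. I would perform the change of variables $x=a_L+L^{-d/2}y$ and Taylor expand $f_L$ to second order around $a_L$. Then
\begin{align*}
\int_0^\infty x\,e^{L^d f_L(x)}g_L(x)\,dx
=e^{L^df_L(a_L)}L^{-d/2}\!\!\int_{-L^{d/2}a_L}^{\infty}\!\!(a_L+L^{-d/2}y)\,e^{\int_0^1(1-t)f_L''(a_L+tL^{-d/2}y)y^2\,dt}g_L(a_L+L^{-d/2}y)\,dy
\end{align*}
up to an exponentially small tail contribution from $\R\setminus\overline{B_{\delta}(a_L)}$, which is controlled by the uniform sup-bound on $g_L$ and the strict gap at the maximum. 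The dominated convergence theorem with the $L$-independent Gaussian majorant coming from the Hessian bound then gives the limit $a\,g(a)\int_\R e^{\frac{1}{2}f''(a)y^2}dy$. The same computation applies to the numerator with $g_L,g$ replaced by $u_L,u$, and the common factors cancel to yield $u(a)/g(a)$.

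In the boundary case $a=0$, by hypothesis $a_L=0$ for $L\ge L_0$, so $f_L'(0)=0$. The appropriate scaling here is $x=L^{-d/2}y$ on $[0,\infty)$, giving
\begin{align*}
\int_0^\infty x\,e^{L^df_L(x)}g_L(x)\,dx=e^{L^df_L(0)}L^{-d}\!\!\int_0^{\infty}\!\!y\,e^{\int_0^1(1-t)f_L''(tL^{-d/2}y)y^2\,dt}g_L(L^{-d/2}y)\,dy
\end{align*}
plus a tail that is again exponentially small relative to the main term. Dominated convergence now yields $g(0)\int_0^\infty y\,e^{\frac{1}{2}f''(0)y^2}dy$, and the analogous identity for the numerator produces the cancellation giving $u(0)/g(0)=u(a)/g(a)$.

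The main obstacle I anticipate is the boundary case: the scaling exponent jumps from $L^{-d/2}$ (interior) to $L^{-d}$ because the weight $x$ supplies an extra factor of the scaling parameter, and the domain of integration shrinks from a two-sided to a one-sided Gaussian. One must be careful that the tail estimate outside $[0,\delta]$ remains exponentially small compared with the polynomially suppressed main term $e^{L^df_L(0)}L^{-d}$, and that the Gaussian majorant $y\,e^{\frac{1}{4}f''(0)y^2}$ is integrable on $[0,\infty)$ so that dominated convergence applies. Both ingredients follow from the uniform convergence of $f_L''$ up to second order and the strict negativity $f''(0)<0$, so no new input is needed beyond the hypotheses.
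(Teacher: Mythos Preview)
Your proposal is correct and follows essentially the same route as the paper: the same preliminary facts ($a_L\to a$, uniform negativity of $f_L''$ near $a_L$, strict gap away from $a_L$), the same case distinction $a>0$ versus $a=0$, the same rescaling $x=a_L+L^{-d/2}y$, and the same observation that the weight $x$ forces the prefactor to jump from $L^{-d/2}$ to $L^{-d}$ at the boundary, yielding the one-sided Gaussian $\int_0^\infty y\,e^{\frac12 f''(0)y^2}\,dy$ there. The paper packages the two cases into a single formula with indicator functions rather than treating them separately, but the content is identical.
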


\begin{proof}
The following argument is a generalization of the part concerning ODLRO
 of the proof of \cite[\mbox{Theorem 1.3}]{K_BCS}. The assumptions
 imply the following statements.
\begin{itemize}
\item 
$$
\lim_{L\to \infty\atop L\in\N}a_L=a.
$$
\item There exist $\delta\in\R_{>0}$, $L_1\in\N$ such that for any $L\in
      \N$ with $L\ge L_1$,
\begin{align*}
&f_L(x)=f_L(a_L)+\int_0^1dt (1-t)\frac{d^2
 f_L}{dx^2}(t(x-a_L)+a_L)(x-a_L)^2,\\
&\quad (\forall x\in [a_L-\delta,a_L+\delta]),\\
&\frac{d^2 f_L}{dx^2}(t(x-a_L)+a_L)\le
 \frac{1}{2}\frac{d^2f}{dx^2}(a)<0,\quad (\forall x\in
 [a_L-\delta,a_L+\delta],\ t\in [0,1]),\\
&f_L(x)-f_L(a_L)\le \frac{1}{2}\sup_{y\in\R_{\ge
 0}\backslash[a-\frac{\delta}{2},a+\frac{\delta}{2}]}(f(y)-f(a))<0,\\
&\quad
 (\forall x\in\R_{\ge 0}\backslash[a_L-\delta,a_L+\delta]).
\end{align*}
\end{itemize}
Then let us observe that for $L\in \N$ satisfying $L\ge
 1_{a>0}L_1+1_{a=0}\max\{L_0,L_1\}$,
\begin{align*}
&\int_0^{\infty}dx x e^{L^df_L(x)}g_L(x)\\
&=e^{L^df_L(a_L)}\\
&\quad\cdot \Bigg(L^{-\frac{d}{2}}\int_{-L^{\frac{d}{2}}\min\{a_L,\delta\}}^{L^{\frac{d}{2}}\delta}dx
 (L^{-\frac{d}{2}}x+a_L)e^{\int_0^1dt
 (1-t)f_L^{(2)}(tL^{-\frac{d}{2}}x+a_L)x^2}g_L(L^{-\frac{d}{2}}x+a_L)\\
&\qquad\quad + \int_{\R_{\ge 0}\backslash[a_L-\delta,a_L+\delta]}dx x
 e^{L^d(f_L(x)-f_L(a_L))}g_L(x)\Bigg)\\
&=e^{L^df_L(a_L)}(1_{a>0}L^{-\frac{d}{2}}+1_{a=0}L^{-d})\\
&\quad\cdot \Bigg(\int_{-L^{\frac{d}{2}}\min\{a_L,\delta\}}^{L^{\frac{d}{2}}\delta}dx(1_{a>0}(L^{-\frac{d}{2}}x+a_L)+1_{a=0}x)
e^{\int_0^1dt
 (1-t)f_L^{(2)}(tL^{-\frac{d}{2}}x+a_L)x^2}\\
&\qquad\qquad\qquad\qquad\cdot g_L(L^{-\frac{d}{2}}x+a_L)\\
&\qquad\quad +(1_{a>0}L^{\frac{d}{2}}+1_{a=0}L^{d})\int_{\R_{\ge
 0}\backslash[a_L-\delta,a_L+\delta]}dx x
 e^{L^d(f_L(x)-f_L(a_L))}g_L(x)\Bigg).
\end{align*}
By using the above properties we can take the limit $L\to\infty$ as
 follows.
\begin{align*}
&\lim_{L\to \infty\atop L\in\N}\Bigg(
\int_{-L^{\frac{d}{2}}\min\{a_L,\delta\}}^{L^{\frac{d}{2}}\delta}dx(1_{a>0}(L^{-\frac{d}{2}}x+a_L)+1_{a=0}x)
e^{\int_0^1dt
 (1-t)f_L^{(2)}(tL^{-\frac{d}{2}}x+a_L)x^2}\\
&\qquad\qquad\qquad\qquad\cdot g_L(L^{-\frac{d}{2}}x+a_L)\\
&\qquad\quad +(1_{a>0}L^{\frac{d}{2}}+1_{a=0}L^{d})\int_{\R_{\ge
 0}\backslash[a_L-\delta,a_L+\delta]}dx x
 e^{L^d(f_L(x)-f_L(a_L))}g_L(x)\Bigg)\\
&=1_{a>0}a g(a) \int_{-\infty}^{\infty}dx
 e^{\frac{1}{2}f^{(2)}(a)x^2}
+1_{a=0} g(a) \int_{0}^{\infty}dx
 e^{\frac{1}{2}f^{(2)}(a)x^2}\neq 0.
\end{align*}
The numerator can be decomposed and sent to the limit in
 the same way. Consequently,
\begin{align*}
&\lim_{L\to\infty\atop L\in\N}\frac{\int_0^{\infty}dx x
 e^{L^df_L(x)}u_L(x)}{\int_{0}^{\infty}dx x 
 e^{L^df_L(x)}g_L(x)}\\
&=\Bigg(1_{a>0}au(a)\int_{-\infty}^{\infty}dx
 e^{\frac{1}{2}f^{(2)}(a)x^2}+1_{a=0}u(a)\int_{0}^{\infty}dx
x e^{\frac{1}{2}f^{(2)}(a)x^2}\Bigg)\\
&\quad\cdot \Bigg(1_{a>0}ag(a)\int_{-\infty}^{\infty}dx
 e^{\frac{1}{2}f^{(2)}(a)x^2}+1_{a=0}g(a)\int_{0}^{\infty}dx
x e^{\frac{1}{2}f^{(2)}(a)x^2}\Bigg)^{-1}\\
&=\frac{u(a)}{g(a)}.
\end{align*}
\end{proof}

We need to estimate the correlation function in the case that the
physical parameters are on the phase boundary. We need the next lemma
for the purpose.

\begin{lemma}\label{lem_infinite_volume_correlation_estimate}
Let $n_0\in 2\N$, $f_L$, $f\in C^{n_0}(\R,\R)$, $g_L\in C(\R,\R)$, $u_L$, $u\in
 C(\R,\C)$ $(L\in\N)$. Assume that the following
 conditions hold. 
\begin{enumerate}[(i)]
\item For any $r\in \R_{>0}$
\begin{align*}
&\lim_{L\to\infty\atop L\in\N}\sup_{x\in [-r,r]}\left|
\frac{d^{n}}{dx^n}f_L(x)- \frac{d^{n}}{dx^n}f(x)\right|=0,\quad (\forall
 n\in \{0,1,\cdots,n_0\}),\\
&\lim_{L\to \infty\atop L\in\N}\sup_{x\in [-r,r]}|u_L(x)-u(x)|=0.
 \end{align*}
\item 
\begin{align*}
\sup_{L\in \N}\sup_{x\in \R}|u_L(x)|<\infty,\quad 
\sup_{L\in \N}\sup_{x\in \R}|g_L(x)|<\infty.
\end{align*}
\item There exist $R,\ c\in\R_{>0}$ such that 
\begin{align*}
f_L(x)\le -c |x|,\quad (\forall x\in\R\text{ with
 }|x|\ge R,\ L\in \N).
\end{align*}
\item There exist $a_L,a\in\R_{\ge 0}$ $(L\in\N)$ such that 
\begin{align*}
&f_L(a_L)>f_L(x),\quad (\forall x\in\R_{\ge 0}\backslash \{a_L\},\
 L\in\N),\\
&f(a)>f(x),\quad (\forall x\in\R_{\ge 0}\backslash
 \{a\}),\\
&\frac{d^n}{dx^n}f(a)=0,\quad (\forall n\in \{1,2,\cdots,n_0-1\}),\\
&\frac{d^{n_0}}{dx^{n_0}}f(a)<0.
\end{align*}
\item 
$$
\inf_{L\in\N}\inf_{x\in\R}g_L(x)>0.
$$
\end{enumerate}
Then
\begin{align*}
\limsup_{L\to\infty\atop L\in\N}\left|\frac{\int_0^{\infty}dx x
 e^{L^df_L(x)}g_L(x)u_L(x)}{\int_{0}^{\infty}dx x 
 e^{L^df_L(x)}g_L(x)}\right|\le \sup_{x\in[0,a]}|u(x)|.
\end{align*}
\end{lemma}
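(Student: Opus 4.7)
The plan is to show that the measure $x\,e^{L^d f_L(x)}\,g_L(x)\,dx$ on $\R_{\ge 0}$ concentrates arbitrarily close to $a$ as $L\to\infty$, and that $|u_L|$ restricted to any shrinking neighborhood of $a$ is ultimately bounded by $|u(a)|$, which by $a\in[0,a]$ is dominated by $\sup_{x\in[0,a]}|u(x)|$, giving the claimed inequality.

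First I would verify $\lim_{L\to\infty,L\in\N}a_L = a$. The external decay $f_L(x)\le -c|x|$ for $|x|\ge R$ together with $f_L(a_L)\ge f_L(a)\to f(a)$ keeps $\{a_L\}$ in a compact set; any accumulation point $a^*$ satisfies $f(a^*)\ge f(a)$ by locally uniform convergence, and strict maximality at $a$ forces $a^*=a$. Next, for any small $\epsilon>0$, I would produce constants $c=c(\epsilon)>0$ and $L_0=L_0(\epsilon)\in\N$ such that
\begin{equation*}
\sup_{x\in\R_{\ge 0}\setminus[a-\epsilon,a+\epsilon]}f_L(x) \le f_L(a_L)-c \qquad (\forall L\ge L_0),
\end{equation*}
by splitting into a bounded region (locally uniform convergence plus the strict maximum of $f$) and a tail (using $f_L(x)\le -c|x|$). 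Since $\sup_{L,x}|g_L(x)|<\infty$, this yields
\begin{equation*}
\int_{\R_{\ge 0}\setminus[a-\epsilon,a+\epsilon]} x\,e^{L^d f_L(x)}\,g_L(x)\,dx \le K\, e^{L^d(f_L(a_L)-c)}
\end{equation*}
for some $K>0$, the tail $\int_R^\infty x\,e^{-cL^d x}\,dx$ being crudely absorbed.

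For the lower bound on the denominator I would avoid Taylor expansion of $f_L$ at $a_L$, since the intermediate derivatives $f_L^{(n)}(a_L)$ for $2\le n\le n_0-1$ converge to $f^{(n)}(a)=0$ at an unspecified rate, possibly slower than the natural Laplace scale $L^{-d/n_0}$. Using only continuity, for any $\eta>0$, continuity of $f$ at $a$ together with the locally uniform convergence $f_L\to f$ and $a_L\to a$ gives some $r_0=r_0(\eta)>0$ such that, for $L$ large, $f_L(x)>f_L(a_L)-\eta$ whenever $x\in[a_L-r_0,a_L+r_0]$. Combined with $\inf_{L,x}g_L(x)>0$, and handling the factor $x$ by restricting to $[\max(a_L-r_0,0),a_L+r_0]$ (the factor $x$ contributes $r_0^2/2$ when $a=0$ and is bounded below by a positive constant when $a>0$, provided $r_0<a/2$), this yields
\begin{equation*}
\int_0^\infty x\,e^{L^d f_L(x)}\,g_L(x)\,dx \ge C'(\eta,r_0)\, e^{L^d(f_L(a_L)-\eta)}
\end{equation*}
for some positive $C'(\eta,r_0)$.

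Choosing $\eta<c(\epsilon)/2$, the ratio of the outside integral to the denominator is $O(e^{-L^d c(\epsilon)/2})\to 0$. The inside contribution to the main ratio is then bounded by $\sup_{y\in[a-\epsilon,a+\epsilon]\cap\R_{\ge 0}}|u_L(y)|$, which by uniform convergence $u_L\to u$ on compacts converges as $L\to\infty$ to $\sup_{y\in[a-\epsilon,a+\epsilon]\cap\R_{\ge 0}}|u(y)|$. Sending $\epsilon\searrow 0$ and invoking continuity of $u$ at $a$, this supremum shrinks to $|u(a)|\le\sup_{x\in[0,a]}|u(x)|$. The main technical obstacle is precisely this degeneracy: unlike in Lemma \ref{lem_infinite_volume_correlation}, a standard Laplace expansion at $a_L$ is unavailable because the intermediate Taylor coefficients are nonzero and their rate of decay in $L$ is not controlled by the hypotheses; the plan sidesteps this by trading a sharp limit for the crude upper bound $|u(a)|$, which still suffices since the outside integral decays as $e^{-cL^d}$ against a merely polynomial lower bound on the denominator.
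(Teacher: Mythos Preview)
Your argument is correct, and it takes a genuinely different route from the paper. The paper does use the order-$n_0$ Taylor expansion of $f_L$ at $a_L$ together with the rescaling $x\mapsto L^{-d/n_0}x+a_L$; the uncontrolled intermediate coefficients $f_L^{(n)}(a_L)$ ($1\le n\le n_0-1$) are handled not by pointwise convergence of the rescaled integrand but by the crude bound $|x|\le L^{d/n_0}\delta$, which extracts a factor $\exp\bigl(L^d\sum_{n=1}^{n_0-1}\tfrac{1}{n!}|f_L^{(n)}(a_L)|\delta^n\bigr)$. Since each $f_L^{(n)}(a_L)\to f^{(n)}(a)=0$, this exponent is $L^d\cdot o(1)$ and is eventually beaten by the fixed exponential gap $e^{-L^d c'}$ coming from the region outside $[a_L-\delta,a_L+\delta]$. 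So your worry that the Taylor route is ``unavailable'' is not quite right: it is available, just not in the form that gives a sharp limit as in Lemma~\ref{lem_infinite_volume_correlation}.

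That said, your continuity-only lower bound on the denominator is cleaner and uses strictly less: you never touch derivatives of $f_L$ beyond what is needed for $a_L\to a$, whereas the paper's argument genuinely invokes the $C^{n_0}$ hypothesis and the condition $f^{(n_0)}(a)<0$ to control the rescaled integral. What the paper's approach buys is that the same Taylor/rescaling machinery is reused verbatim in the proof of Lemma~\ref{lem_infinite_volume_logarithm} (the free energy density), so there is an economy-of-reuse argument for setting it up here; your approach would also prove that lemma, but via a separate sandwich argument.
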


\begin{proof}
Since $f^{(n_0)}(a)<0$ and
 $f^{(n_0)}(\cdot)$ is continuous, there exists
 $\delta_0\in\R_{>0}$ such that 
\begin{align}
f^{(n_0)}(x)\le
 \frac{2}{3}f^{(n_0)}(a),\quad (\forall x\in
 [a-\delta_0,a+\delta_0]).\label{eq_intermediate_hessian_bound}
\end{align}
Take any $\delta\in (0,\delta_0)$. By using the assumptions and
 \eqref{eq_intermediate_hessian_bound} we can prove the following
 statements.
\begin{itemize}
\item 
$$
\lim_{L\to \infty\atop L\in\N}a_L=a.
$$
\item There exists $L_0\in\N$ such that for any $L\in
      \N$ with $L\ge L_0$,
\begin{align*}
&f_L(x)=\sum_{n=0}^{n_0-1}\frac{1}{n!}f_L^{(n)}(a_L)(x-a_L)^n\\
&\qquad\qquad+\int_0^1dt \frac{(1-t)^{n_0-1}}{(n_0-1)!}
f_L^{(n_0)}(t(x-a_L)+a_L)(x-a_L)^{n_0},\\
&\quad (\forall x\in [a_L-\delta,a_L+\delta]),\\
&f_L^{(n_0)}(t(x-a_L)+a_L)\le
 \frac{1}{2}f^{(n_0)}(a)<0,\quad (\forall x\in
 [a_L-\delta,a_L+\delta],\ t\in [0,1]),\\
&f_L(x)-f_L(a_L)\le \frac{1}{2}\sup_{y\in\R_{\ge
 0}\backslash[a-\frac{\delta}{2},a+\frac{\delta}{2}]}(f(y)-f(a))<0,\\
&\quad
 (\forall x\in\R_{\ge 0}\backslash[a_L-\delta,a_L+\delta]).
\end{align*}
\item 
\begin{align*}
\lim_{L\to\infty\atop L\in \N}f_L^{(n)}(a_L)=
 f^{(n)}(a)=0,\quad (\forall n\in \{1,2,\cdots,n_0-1\}).
\end{align*}
\end{itemize}
Let us observe that 
\begin{align}
&\int_0^{\infty}dx x e^{L^df_L(x)}u_L(x)g_L(x)\label{eq_correlation_estimate_typical}\\
&=e^{L^df_L(a_L)} L^{-\frac{d}{n_0}}\notag\\
&\quad\cdot\Bigg(\int_{-L^{\frac{d}{n_0}}\min\{a_L,\delta\}}^{L^{\frac{d}{n_0}}\delta}dx
 (L^{-\frac{d}{n_0}}x+a_L)\notag\\
&\qquad\qquad\cdot e^{\sum_{n=1}^{n_0-1}\frac{1}{n!}L^{(1-\frac{n}{n_0})d}f_L^{(n)}(a_L)x^n+\int_0^1dt
 \frac{1}{(n_0-1)!}(1-t)^{n_0-1}f_L^{(n_0)}(tL^{-\frac{d}{n_0}}x+a_L)x^{n_0}}\notag\\
&\qquad\qquad\cdot u_L(L^{-\frac{d}{n_0}}x+a_L)
g_L(L^{-\frac{d}{n_0}}x+a_L)\notag\\
&\qquad\quad + L^{\frac{d}{n_0}}\int_{\R_{\ge 0}\backslash[a_L-\delta,a_L+\delta]}dx x
 e^{L^d(f_L(x)-f_L(a_L))}u_L(x)g_L(x)\Bigg).\notag
\end{align}
By decomposing the denominator in the same way as above we can derive that
\begin{align*}
&\left|\frac{\int_0^{\infty}dx x
 e^{L^df_L(x)}g_L(x)u_L(x)}{\int_{0}^{\infty}dx x 
 e^{L^df_L(x)}g_L(x)}\right|\\
&\le \Bigg(\int_{-L^{\frac{d}{n_0}}\min\{a_L,\delta\}}^{L^{\frac{d}{n_0}}\delta}dx
 (L^{-\frac{d}{n_0}}x+a_L)\notag\\
&\qquad\qquad\cdot e^{\sum_{n=1}^{n_0-1}\frac{1}{n!}L^{(1-\frac{n}{n_0})d}f_L^{(n)}(a_L)x^n+\int_0^1dt
 \frac{1}{(n_0-1)!}(1-t)^{n_0-1}f_L^{(n_0)}(tL^{-\frac{d}{n_0}}x+a_L)x^{n_0}}\notag\\
&\qquad\qquad\cdot |u_L(L^{-\frac{d}{n_0}}x+a_L)|
g_L(L^{-\frac{d}{n_0}}x+a_L)\notag\\
&\qquad\quad + L^{\frac{d}{n_0}}\int_{\R_{\ge 0}\backslash[a_L-\delta,a_L+\delta]}dx x
 e^{L^d(f_L(x)-f_L(a_L))}|u_L(x)|g_L(x)\Bigg)\\
&\quad\cdot \Bigg(
\int_{-L^{\frac{d}{n_0}}\min\{a_L,\delta\}}^{L^{\frac{d}{n_0}}\delta}dx
 (L^{-\frac{d}{n_0}}x+a_L)\notag\\
&\qquad\qquad\cdot e^{\sum_{n=1}^{n_0-1}\frac{1}{n!}L^{(1-\frac{n}{n_0})d}f_L^{(n)}(a_L)x^n+\int_0^1dt
 \frac{1}{(n_0-1)!}(1-t)^{n_0-1}f_L^{(n_0)}(tL^{-\frac{d}{n_0}}x+a_L)x^{n_0}}\notag\\
&\qquad\qquad\cdot g_L(L^{-\frac{d}{n_0}}x+a_L)\Bigg)^{-1}\\
&\le \sup_{x\in [0,\delta+a_L]}|u_L(x)|\\
&\quad + \Bigg(L^{2\frac{d}{n_0}}e^{\sum_{n=1}^{n_0-1}\frac{L^d}{n!}|f_L^{(n)}(a_L)|\delta^n}\int_{\R_{\ge 0}\backslash[a_L-\delta,a_L+\delta]}dx x
e^{L^d(f_L(x)-f_L(a_L))}|u_L(x)|g_L(x)\Bigg)\\
&\qquad\quad\cdot \Bigg(\int_{0}^{L^{\frac{d}{n_0}}\delta}dx x 
e^{\int_0^1dt
 \frac{1}{(n_0-1)!}(1-t)^{n_0-1}f_L^{(n_0)}(tL^{-\frac{d}{n_0}}x+a_L)x^{n_0}}
g_L(L^{-\frac{d}{n_0}}x+a_L)\Bigg)^{-1}\\
&\le \sup_{x\in [0,\delta+a_L]}|u_L(x)|\\
&\quad +\sup_{L\in
 \N}\sup_{x\in\R}|u_L(x)g_L(x)|\left(\inf_{L\in\N}\inf_{x\in\R}g_L(x)\right)^{-1}\\
&\qquad\cdot \Bigg(L^{2\frac{d}{n_0}}e^{\sum_{n=1}^{n_0-1}\frac{L^d}{n!}|f_L^{(n)}(a_L)|\delta^n}\int_{\R_{\ge
 0}\backslash[a_L-\delta,a_L+\delta]}dx x
 e^{L^d(f_L(x)-f_L(a_L))}\Bigg)\\
&\qquad\cdot 
\Bigg(\int_{0}^{L^{\frac{d}{n_0}}\delta}dx x 
e^{\int_0^1dt
 \frac{1}{(n_0-1)!}(1-t)^{n_0-1}f_L^{(n_0)}(tL^{-\frac{d}{n_0}}x+a_L)x^{n_0}}\Bigg)^{-1}.
\end{align*}
Then by using the properties listed in the beginning of the proof we can
 deduce that 
\begin{align*}
&\limsup_{L\to\infty\atop L\in\N}\left|\frac{\int_0^{\infty}dx x
 e^{L^df_L(x)}g_L(x)u_L(x)}{\int_{0}^{\infty}dx x 
 e^{L^df_L(x)}g_L(x)}\right|\le \lim_{L\to\infty\atop L\in\N}\sup_{x\in
 [0,2\delta+a]}|u_L(x)|=\sup_{x\in [0,2\delta+a]}|u(x)|.
\end{align*}
The arbitrariness of $\delta$ implies the result.
\end{proof}

Finally let us show a lemma which ensures the convergence of the free
energy density in the infinite-volume limit when it is applied in
practice.

\begin{lemma}\label{lem_infinite_volume_logarithm}
Let $n_0\in 2\N$, 
$f_L$, $f\in C^{n_0}(\R,\R)$, $g_L \in C(\R,\R)$, $(L\in
 \N)$. Assume that these functions satisfy the same conditions as in
 Lemma \ref{lem_infinite_volume_correlation_estimate}. Then
\begin{align*}
\lim_{L\to\infty\atop L\in \N}\frac{1}{L^d}\log\left(\int_0^{\infty}dx x
 e^{L^df_L(x)}g_L(x)\right)=f(a).
\end{align*}
\end{lemma}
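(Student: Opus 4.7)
My plan is to establish matching upper and lower bounds on the integral
\[
J_L := \int_0^{\infty}dx\, x\, e^{L^d f_L(x)} g_L(x)
\]
of the form $J_L = e^{L^d f_L(a_L)} \cdot R_L$, where $R_L$ grows at most polynomially in $L$ and decays at worst like a negative power of $L$. Once this is done, taking $\tfrac{1}{L^d}\log$ kills $R_L$ in the limit and leaves $f_L(a_L)$, which converges to $f(a)$ by the locally uniform convergence $f_L\to f$ together with $a_L\to a$ (both properties are available from the hypotheses inherited from Lemma \ref{lem_infinite_volume_correlation_estimate}, exactly as derived in the beginning of the proof of that lemma).

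For the upper bound, I would use the decomposition underlying \eqref{eq_correlation_estimate_typical}: split $J_L$ as the sum of an integral over $[a_L-\delta,a_L+\delta]$ and an integral over its complement in $\R_{\ge 0}$. On the complement, the bound $f_L(x)-f_L(a_L)\le \tfrac{1}{2}\sup_{y\notin[a-\delta/2,a+\delta/2]}(f(y)-f(a)) =: -\kappa < 0$ together with the linear decay $f_L(x)\le -c|x|$ at infinity gives an integral bounded by $e^{L^d f_L(a_L)}e^{-L^d\kappa}$ times an absolute constant, which is $o(e^{L^d f_L(a_L)})$. On $[a_L-\delta,a_L+\delta]$, the pointwise bound $f_L(x)\le f_L(a_L)$ and the uniform boundedness of $g_L$ give an upper bound of $e^{L^d f_L(a_L)}$ times a constant. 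Thus $R_L \le C$.

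For the lower bound, I would restrict to an interval $[a_L-\delta,a_L+\delta]$ (or to $[0,\delta]$ if $a=0$, which happens eventually) and use both the Taylor expansion
\[
f_L(x) = f_L(a_L) + \sum_{n=1}^{n_0-1}\tfrac{1}{n!}f_L^{(n)}(a_L)(x-a_L)^n + \int_0^1 dt\,\tfrac{(1-t)^{n_0-1}}{(n_0-1)!}f_L^{(n_0)}(t(x-a_L)+a_L)(x-a_L)^{n_0}
\]
with $f_L^{(n_0)}(\cdot)\le \tfrac12 f^{(n_0)}(a)<0$ on this neighborhood, and the assumption $g_L\ge c_0>0$. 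Changing variables $y = L^{d/n_0}(x-a_L)$ and using $f_L^{(n)}(a_L)\to 0$ for $1\le n\le n_0-1$ (a consequence of locally uniform convergence of derivatives and $a_L\to a$), the lower-order derivative contributions become negligible in the exponent for any fixed bounded range of $y$. Picking the range of $y$ to be $[-1,1]$ (say) yields $R_L \ge c\, L^{-\alpha}$ for some $\alpha>0$ coming only from the Jacobian and the factor $x$ in the measure (including the case $a=0$, which merely changes $\alpha$).

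Combining the two bounds gives $f_L(a_L) - \tfrac{\alpha \log L}{L^d} + o(1) \le \tfrac{1}{L^d}\log J_L \le f_L(a_L) + \tfrac{\log C}{L^d}$, and sending $L\to\infty$ yields the claim. The main obstacle I expect is a bookkeeping one: keeping track of the correct Jacobian factors in the two cases $a>0$ and $a=0$ and verifying that the lower-order Taylor coefficients $f_L^{(n)}(a_L)$ ($1\le n\le n_0-1$), which need not vanish for finite $L$, contribute only $o(L^d)$ inside the exponent on the relevant rescaled domain; this is where the locally uniform convergence of derivatives enters essentially, but the estimate is crude enough that no refined analysis of the rate is needed.
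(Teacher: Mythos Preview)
Your approach is essentially the same Laplace-method argument as the paper's, and your upper bound is correct (indeed slightly cleaner than the paper's, since you use only $f_L(x)\le f_L(a_L)$ on the near interval). There is, however, one imprecision in the lower bound that you should fix.

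You claim that after rescaling $y=L^{d/n_0}(x-a_L)$ the lower-order Taylor terms become ``negligible in the exponent'' on $|y|\le 1$, and conclude $R_L\ge cL^{-\alpha}$. That polynomial lower bound is not justified: the $n$-th order term contributes $\tfrac{1}{n!}L^{d(1-n/n_0)}f_L^{(n)}(a_L)y^n$ to the exponent, and since you have no rate on $f_L^{(n)}(a_L)\to 0$, this could tend to $-\infty$. What you \emph{can} say---and what your final paragraph in fact says---is that this contribution is $o(L^d)$, since $L^{d(1-n/n_0)}|f_L^{(n)}(a_L)|\le L^{d}\cdot L^{-d/n_0}\cdot\sup_L|f_L^{(n)}(a_L)|=o(L^d)$. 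Hence the correct lower bound is $R_L\ge L^{-\alpha}e^{-o(L^d)}$, which still gives $\tfrac{1}{L^d}\log R_L\ge -o(1)$ and the conclusion follows. The paper makes exactly this point transparent by bounding $|f_L^{(n)}(a_L)(x-a_L)^n|\le |f_L^{(n)}(a_L)|\delta^n$ \emph{before} rescaling, so that after dividing by $L^d$ one sees directly the vanishing correction $\sum_{n=1}^{n_0-1}\tfrac{1}{n!}|f_L^{(n)}(a_L)|\delta^n\to 0$; see \eqref{eq_1_dimensional_logarithm_left}. Once you replace ``$R_L\ge cL^{-\alpha}$'' by ``$\tfrac{1}{L^d}\log R_L\ge -o(1)$'', your argument is complete.
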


\begin{proof}
Since the assumptions are same, we can transform the integral inside the
 logarithm in the same way as in the proof of Lemma
 \ref{lem_infinite_volume_correlation_estimate}. We use the
 following equality close to
 \eqref{eq_correlation_estimate_typical}. For $\eps\in \{1,-1\}$
\begin{align*}
&\int_0^{\infty}dx x e^{L^df_L(x)}g_L(x)\\
&=L^{-\frac{d}{n_0}}e^{L^df_L(a_L)
+\eps\sum_{n=1}^{n_0-1}\frac{L^d}{n!}|f_L^{(n)}(a_L)|\delta^n}\\
&\quad\cdot \Bigg(\int_{-L^{\frac{d}{n_0}}\min\{a_L,\delta\}}^{L^{\frac{d}{n_0}}\delta}dx
 (L^{-\frac{d}{n_0}}x+a_L)
 e^{\sum_{n=1}^{n_0-1}\frac{L^d}{n!}(L^{-\frac{n}{n_0}d}f_L^{(n)}(a_L)x^n -\eps|f_L^{(n)}(a_L)|\delta^n)}\\
&\qquad\qquad\qquad\qquad\qquad\cdot
e^{\int_0^1dt
 \frac{1}{(n_0-1)!}(1-t)^{n_0-1}f_L^{(n_0)}(tL^{-\frac{d}{n_0}}x+a_L)x^{n_0}}g_L(L^{-\frac{d}{n_0}}x+a_L)\notag\\
&\qquad\quad + L^{\frac{d}{n_0}}\int_{\R_{\ge 0}\backslash[a_L-\delta,a_L+\delta]}dx x
 e^{-\eps \sum_{n=1}^{n_0-1} \frac{L^d}{n!}|f_L^{(n)}(a_L)|\delta^n+
L^d(f_L(x)-f_L(a_L))}g_L(x)\Bigg).
\end{align*}
By taking $\eps=1$ this implies that 
\begin{align}
&\frac{1}{L^d}\log\left(\int_0^{\infty}dx x e^{L^df_L(x)}g_L(x)\right)\label{eq_1_dimensional_logarithm_right}\\
&\le f_L(a_L)+\sum_{n=1}^{n_0-1}\frac{1}{n!}|f_L^{(n)}(a_L)|\delta^n\notag\\
&\quad +\frac{1}{L^d}\log\Bigg(
\int_{-L^{\frac{d}{n_0}}\min\{a_L,\delta\}}^{L^{\frac{d}{n_0}}\delta}dx
 (L^{-\frac{d}{n_0}}x+a_L)
e^{\int_0^1dt
 \frac{1}{(n_0-1)!}(1-t)^{n_0-1}f_L^{(n_0)}(tL^{-\frac{d}{n_0}}x+a_L)x^{n_0}}\notag\\
&\qquad\qquad\qquad\qquad\qquad\qquad\cdot g_L(L^{-\frac{d}{n_0}}x+a_L)\notag\\
&\qquad\qquad\qquad +\int_{\R_{\ge 0}\backslash[a_L-\delta,a_L+\delta]}dx x
 e^{L^d(f_L(x)-f_L(a_L))}g_L(x)\Bigg)\notag\\
&\le
 f_L(a_L)+\sum_{n=1}^{n_0-1}\frac{1}{n!}|f_L^{(n)}(a_L)|\delta^n\notag\\
&\quad +\frac{1}{L^d}\log\Bigg((\delta+a_L)\sup_{L\in\N}\sup_{y\in
 \R}g_L(y)\notag\\
&\qquad\qquad\qquad\cdot
\int_{-L^{\frac{d}{n_0}}\delta}^{L^{\frac{d}{n_0}}\delta}dx
 e^{\int_0^1dt
 \frac{1}{(n_0-1)!}(1-t)^{n_0-1}f_L^{(n_0)}(tL^{-\frac{d}{n_0}}x+a_L)x^{n_0}}\notag\\
&\qquad\qquad\qquad +\sup_{L\in\N}\sup_{y\in
 \R}g_L(y) \int_{\R_{\ge 0}\backslash[a_L-\delta,a_L+\delta]}dx x
 e^{L^d(f_L(x)-f_L(a_L))}\Bigg).\notag
\end{align}
On the other hand, by taking $\eps=-1$
\begin{align}
&\frac{1}{L^d}\log\left(\int_0^{\infty}dx x
 e^{L^df_L(x)}g_L(x)\right)\label{eq_1_dimensional_logarithm_left}\\
&\ge
 f_L(a_L)-\sum_{n=1}^{n_0-1}\frac{1}{n!}|f_L^{(n)}(a_L)|\delta^n+\frac{1}{L^d}\log(L^{-\frac{2d}{n_0}})\notag\\
&\quad +\frac{1}{L^d}\log\Bigg(\inf_{L\in\N}\inf_{y\in
 \R}g_L(y)
\int_{0}^{L^{\frac{d}{n_0}}\delta}dx x
 e^{\int_0^1dt \frac{1}{(n_0-1)!}(1-t)^{n_0-1}f_L^{(n_0)}(tL^{-\frac{d}{n_0}}x+a_L)x^{n_0}}\Bigg).\notag
\end{align}
By using the properties listed in the beginning of the proof of Lemma
 \ref{lem_infinite_volume_correlation_estimate} we can show that both
 the right-hand side of \eqref{eq_1_dimensional_logarithm_right} and
 that of \eqref{eq_1_dimensional_logarithm_left} converge to $f(a)$ as
 $L\to \infty$.
\end{proof}

\section*{Acknowledgments}
This work was supported by JSPS KAKENHI Grant Number 
26870110.

\section*{Supplementary List of Notations}
\subsection*{Parameters and constants}
\begin{center}
\begin{tabular}{lll}
Notation & Description & Reference \\
\hline
$b$ & number of sites in unit cell &  Subsection
 \ref{subsec_model_theorem}\\
$\sc$ & positive constant $(\ge 1)$ appearing in bounds &  Subsection
 \ref{subsec_model_theorem}\\
 & on $E(\cdot)$ and $e(\cdot)$  &  \\ 
$\sn_j$  & positive numbers $(\in\N)$ appearing in bounds 
  & \eqref{eq_dispersion_derivative},
 \eqref{eq_hopping_matrix_derivative}\\
$(j=1,\cdots,d)$ &  on derivatives of $e(\cdot)^2$ and $E(\cdot)$ & \\
$\sa$ & positive constant $(>1)$ appearing in bounds &
 \eqref{eq_dispersion_measure}, \eqref{eq_dispersion_measure_divided} \\
 & on integrals of $e(\cdot)$ & \\
$D_d$ & $|\det(\hbv_1,\hbv_2,\cdots,\hbv_d)|^{-1}(2\pi)^{-d}$ &
 Subsection \ref{subsec_model_theorem}\\
$\theta(\beta)$ & projection of $\theta$ to $[0,\frac{2\pi}{\beta})$ & beginning
 of \\
  & & Section
 \ref{sec_formulation}\\
$N$ & $4b\beta h L^d$, cardinality of $I$ & beginning of\\
 & &  Subsection \ref{subsec_general_estimation}\\
$\hat{N}_{\beta}$ & largest scale in IR integration & Subsection
 \ref{subsec_generalized_covariances} \\
  & &  and beginning of \\
 & & Subsection
 \ref{subsec_real_covariance}\\ 
 $N_{\beta}$ & smallest scale in IR integration & Subsection
 \ref{subsec_generalized_covariances} \\
 & &  and beginning of \\
 & & Subsection \ref{subsec_real_covariance}\\
 $M$ & parameter to control support size  of cut-off& Subsection
 \ref{subsec_generalized_covariances}\\
$c_0$ & positive constant $(\ge 1)$ appearing in bounds &  Subsection
 \ref{subsec_generalized_covariances}\\
 & on scale-dependent covariances & \\
$c_{end}$ & positive constant appearing in $\|\cdot\|_{1,\infty}$-norm
  & \eqref{eq_scale_covariance_decay_bound} \\
 & bound of covariance of scale $N_{\beta}$ &  
\end{tabular}
\end{center}

\newpage
\subsection*{Sets and spaces}
\begin{center}
\begin{tabular}{lll}
Notation & Description & Reference \\
\hline
$\G$ & $\left\{\sum_{j=1}^dm_j\bv_j\ \Big|\ \begin{array}{l} m_j\in \{0,1,\cdots,L-1\},\\(j=1,2,\cdots,d)\end{array}\right\}$ & Subsection \ref{subsec_model_theorem}\\
$\G^*$ & $\left\{ \sum_{j=1}^d\hat{m}_j\hbv_j\ \Big|\
 \begin{array}{l} \hat{m}_j\in \{0,\frac{2\pi}{L},\cdots,2\pi-\frac{2\pi}{L}\},\\
 (j=1,2,\cdots,d)\end{array}
\right\}$ & Subsection \ref{subsec_model_theorem}\\
$\G_{\infty}$ & $\left\{\sum_{j=1}^dm_j\bv_j\ \big|\ m_j\in \Z\
 (j=1,2,\cdots,d)\right\}$ & Subsection \ref{subsec_model_theorem}\\
$\G_{\infty}^*$ & $\left\{\sum_{j=1}^d\hat{k}_j\hbv_j\ \big|\ \hat{k}_j\in [0,2\pi]\
 (j=1,2,\cdots,d)\right\}$ & Subsection \ref{subsec_model_theorem}\\
$\cB$ & $\{1,2,\cdots,b\}$ & Subsection \ref{subsec_model_theorem}\\
$\Mat(n,\C)$ & set of $n\times n$ complex matrices & Subsection
 \ref{subsec_model_theorem}\\
$I_0$ & $\{1,2\}\times \cB\times \G\times [0,\beta)_h$ & Section
 \ref{sec_formulation}\\
$I$ & $I_0\times \{1,-1\}$ & Section \ref{sec_formulation}\\
$\cV$ & complex vector space spanned by $\{\psi_X\}_{X\in I}$ & Section
 \ref{sec_formulation}\\
$\bigwedge\cV$ & Grassmann algebra generated by $\{\psi_X\}_{X\in I}$ & Section \ref{sec_formulation}\\
$I^0$ & $\{1,2\}\times\cB\times\G\times\{0\}\times\{1,-1\}$ & Subsection
 \ref{subsec_necessary_notions}\\
$\bigwedge_{even}\cV$ & subspace of $\bigwedge \cV$ consisting of even
 polynomials & Subsection \ref{subsec_necessary_notions}\\
$C(\overline{D},\bigwedge_{even}\cV)$ & set of continuous maps from
 $\overline{D}$ to $\bigwedge_{even}\cV$ & Subsection
 \ref{subsec_integration_without}\\
$C^{\o}(D,\bigwedge_{even}\cV)$ & set of analytic maps from
 $D$ to $\bigwedge_{even}\cV$ & Subsection
 \ref{subsec_integration_without}
\end{tabular}
\end{center}

\subsection*{Functions and maps}
\begin{center}
\begin{tabular}{lll}
Notation & Description & Reference \\
\hline
$\sV$  & $\frac{U}{L^d}\sum_{(\rho,\bx),(\eta,\by)\in\cB\times\G}
\psi_{\rho\bx\ua}^*\psi_{\rho\bx\da}^*\psi_{\eta\by\da}\psi_{\eta\by\ua}$ & Subsection
 \ref{subsec_model_theorem}\\
$r_L$ & map from $\G_{\infty}$ to $\G$ &  Subsection
 \ref{subsec_model_theorem}\\
$E(\cdot)$ & map from $\R^d$ to $\Mat(b,\C)$, &  Subsection
 \ref{subsec_model_theorem}\\
 &  hopping matrix in momentum
 space & \\
$e(\cdot)$ & non-negative function on $\R^d$ & Subsection
 \ref{subsec_model_theorem}\\
$\sH_0$  & $\frac{1}{L^d}\sum_{(\rho,\bx),(\eta,\by)\in\cB\times\G}\sum_{\s\in
 \spin}\sum_{\bk\in \G^*}$ & Subsection
 \ref{subsec_model_theorem}\\
 & $\cdot e^{i\<\bx-\by,\bk\>}E(\bk)(\rho,\eta)
\psi_{\rho\bx\s}^*\psi_{\eta\by\s}$ & \\
$\sS_z$ &
 $\frac{1}{2}\sum_{(\rho,\bx)\in\cB\times\G}(\psi_{\rho\bx\ua}^*\psi_{\rho\bx\ua}-\psi_{\rho\bx\da}^*\psi_{\rho\bx\da})$ & Subsection \ref{subsec_model_theorem}\\
$\sF$ & $\g \sum_{(\rho,\bx)\in \cB\times\G}(\psi_{\rho\bx\ua}^*\psi_{\rho\bx\da}^*+\psi_{\rho\bx\da}\psi_{\rho\bx\ua})$ & Subsection \ref{subsec_model_theorem}\\
$G_{x,y,z}(\cdot)$ & map from $\R^d$ to $\Mat(b,\C)$ parameterized by $x,y,z$ &
 \eqref{eq_matrix_valued_notation}\\
$C(\phi)(\cdot)$ & function on $(\{1,2\}\times\cB\times
 \G_{\infty}\times [0,\beta))^2$ & Section \ref{sec_formulation}\\
  &  parameterized by $\phi$, 
 full covariance & \\
$E(\phi)(\cdot)$ & map from $\R^d$ to $\Mat(2b,\C)$ parameterized by
 $\phi$ & Section \ref{sec_formulation}\\
$\cR_{\beta}$  & map from $(\{1,2\}\times\cB\times \G\times
 \frac{1}{h}\Z)^n$ to $I_0^n$ & beginning of\\ 
 & or from $(\{1,2\}\times\cB\times \G\times
 \frac{1}{h}\Z\times \{1,-1\})^n$ to $I^n$ & Subsection
 \ref{subsec_general_estimation}
\end{tabular}
\end{center}

\subsection*{Other notations}
\begin{center}
\begin{tabular}{lll}
Notation & Description & Reference \\
\hline
$\bv_j$  & basis of $\R^d$ & Subsection
 \ref{subsec_model_theorem}\\
$(j=1,\cdots,d)$ & & \\ 
$\hbv_j$  & dual basis of $\{\bv_j\}_{j=1}^d$ & Subsection
 \ref{subsec_model_theorem}\\
$(j=1,\cdots,d)$ & & \\
$I_n$ & $n\times n$ unit matrix & beginning of \\
   & &  Subsection \ref{subsec_examples}\\
$V(\psi)$ & polynomial of $\bigwedge\cV$ consisting of quadratic part
& Section \ref{sec_formulation}\\
 &  and quartic part  & \\
$W(\psi)$ & quartic polynomial of $\bigwedge\cV$ & Section
 \ref{sec_formulation}\\
$A^1(\psi)$ & quadratic polynomial of $\bigwedge\cV$  & Section
 \ref{sec_formulation}\\
$A^2(\psi)$ & polynomial of $\bigwedge\cV$ consisting of quadratic part
& Section \ref{sec_formulation}\\
 & and quartic part & \\ 
$A(\psi)$ & $\la_1 A^1(\psi)+\la_2A^2(\psi)$ &
 \eqref{eq_Grassmann_artificial_term}\\
$V(u)(\psi)$ & same as $V(\psi)$, apart from having $u(\in\C)$ &
 beginning of \\
 &  in place
 of $U$ & Subsection \ref{subsec_integration_without}\\
$W(u)(\psi)$ & same as $W(\psi)$, apart from having $u(\in\C)$ &
 beginning of \\
 &  in place
 of $U$  & Subsection \ref{subsec_integration_without}\\
$[\cdot,\cdot]_{1,\infty,r}$ &
 $\sup_{u\in\overline{D(r)}}[f(u),g]_{1,\infty}$ & beginning of \\
 & & Subsection \ref{subsec_integration_without}\\
$[\cdot,\cdot]_{1,r,r'}$ &
 $\sup_{(u,\bla)\in\overline{D(r)}\times
 \overline{D(r')}^2}[f(u,\bla),g]_{1}$ & beginning of \\
 & & Subsection \ref{subsec_integration_with}
\end{tabular}
\end{center}

\end{document}